\newcommand{\bbC}{{\mathbb{C}}}
\newcommand{\bbD}{{\mathbb{D}}}
\newcommand{\bbR}{{\mathbb{R}}}
\newcommand{\bbZ}{{\mathbb{Z}}}
\newcommand{\frA}{{\frak{A}}}
\newcommand{\x}{{\mathbf{x}}}
\newcommand{\calB}{{\mathcal{B}}}
\newcommand{\calC}{{\mathcal{C}}}
\newcommand{\calD}{{\mathcal{D}}}
\newcommand{\calE}{{\mathcal{E}}}
\newcommand{\calH}{{\mathcal H}}
\newcommand{\calI}{{\mathcal I}}
\newcommand{\calK}{{\mathcal K}}
\newcommand{\calL}{{\mathcal L}}
\newcommand{\calM}{{\mathcal M}}
\newcommand{\calP}{{\mathcal P}}
\newcommand{\calR}{{\mathcal R}}
\newcommand{\calS}{{\mathcal S}}
\newcommand{\calT}{{\mathcal T}}
\newcommand{\calV}{{\mathcal V}}
\newcommand{\bdone}{{\boldsymbol{1}}}
\newcommand{\lb}{\label}
\newcommand{\wti}{\widetilde  }
\newcommand{\tr}{\text{\rm{Tr}}}
\newcommand{\dist}{\text{\rm{dist}}}
\newcommand{\loc}{\text{\rm{loc}}}
\newcommand{\ran}{\text{\rm{ran}}}
\newcommand{\supp}{\text{\rm{supp}}}
\newcommand{\bi}{\bibitem}
\newcommand{\hatt}{\widehat}
\newcommand{\beq}{\begin{equation}}
\newcommand{\eeq}{\end{equation}}
\newcommand{\ba}{\begin{align}}
\newcommand{\ea}{\end{align}}
\let\det=\undefined\DeclareMathOperator{\det}{det}
\newcounter{smalllist}
\newcommand{\comm}[1]{}
\newcommand*\circled[1]{\tikz[baseline=(char.base)]{
            \node[shape=circle,draw,inner sep=2pt] (char) {#1};}}
\DeclareMathOperator{\Real}{Re}
\numberwithin{equation}{section}
\newtheorem{theorem}{Theorem}[section]
\newtheorem*{p2.1}{Proposition 2.1}
\newtheorem{proposition}[theorem]{Proposition}
\newtheorem{lemma}[theorem]{Lemma}
\newtheorem{corollary}[theorem]{Corollary}
\theoremstyle{definition}
\newtheorem{example}[theorem]{Example}
\newtheorem{conjecture}[theorem]{Conjecture}
\newtheorem*{remark}{Remark}
\newtheorem*{remarks}{Remarks}
\newtheorem*{definition}{Definition}
\newcommand{\jap}[1]{\langle #1 \rangle}
\newcommand{\norm}[1]{\lVert#1\rVert}
  \LetLtxMacro{\TheRealLabel}{\label}%
  \LetLtxMacro{\TheRealRef}{\ref}%
  \LetLtxMacro{\TheRealPageRef}{\pageref}%
\begin{document}

\title[Kato's Work]{Tosio Kato's Work on Non--Relativistic Quantum Mechanics}
\author[B.~Simon]{Barry Simon$^{1,2}$}

\thanks{$^1$ Departments of Mathematics and Physics, Mathematics 253-37, California Institute of Technology, Pasadena, CA 91125.
E-mail: bsimon@caltech.edu}

\thanks{$^2$ Research supported in part by NSF grants DMS-1265592 and DMS-1665526 and in part by Israeli BSF Grant No. 2014337.}

\

\date{\today}
\keywords{Kato, Schr\"{o}dinger operators, quantum mechanics}
\subjclass[2010]{Primary: 81Q10, 81U05, 47A55; Secondary: 35Q40, 46N50, 81Q15}

\begin{abstract}  We review the work of Tosio Kato on the mathematics of non--relativistic quantum mechanics and some of the research that was motivated by this.  Topics include analytic and asymptotic eigenvalue perturbation theory, Temple--Kato inequality, self--adjointness results, quadratic forms including monotone convergence theorems, absence of embedded eigenvalues, trace class scattering, Kato smoothness, the quantum adiabatic theorem and Kato's ultimate Trotter Product Formula.
\end{abstract}

\maketitle

\tableofcontents

\section{Introduction} \lb{s1}

Note: There are pictures on pages \pageref{figure1}, \pageref{figure2}, \pageref{figure3}, \pageref{figure4} and \pageref{figure5}.

In 2017, we are celebrating the $100^{th}$ anniversary of the birth of Tosio Kato (August 25, 1917--October 2, 1999).  While there can be arguments as to which of his work is the deepest or most beautiful, there is no question that the most significant is his discovery, published in 1951, of the self--adjointness of the quantum mechanical Hamiltonian for atoms and molecules \cite{KatoHisThm}.  This is the founding document and Kato is the founding father of what has come to be called the theory of Schr\"{o}dinger operators.  So it seems appropriate to commemorate Kato with a comprehensive review of his work on non--relativistic quantum mechanics (NRQM) that includes the context and later impact of this work.

One might wonder why I date this field only from Kato's 1951 paper.  After all, quantum theory was invented in 1925-26 as matrix mechanics in G\"{o}ttingen (by Heisenberg, Born and Jordan) and as wave mechanics in Z\"{u}rich (by Schr\"{o}dinger) and within a few years, books appeared on the mathematical foundations of quantum mechanics by two of the greatest mathematicians of their generation: Hermann Weyl \cite{WeylQM} (not coincidentally, in Z\"{u}rich; indeed the connection between Weyl and Schr\"{o}dinger was more than professional -- Weyl had a passionate love affair with Schr\"{o}dinger's wife) and John von Neumann \cite{vNQM} (von Neumann, whose thesis had been in logic, went to G\"{o}ttingen to work with Hilbert on that subject, but was swept up in the local enthusiasm for quantum theory, in response to which, he developed the spectral theory of unbounded self--adjoint operators and his foundational work).  One should also mention the work of Bargmann and Wigner (prior to Kato, summarized in \cite{SimonBW} with references) on quantum dynamics.  I think of this earlier work as first level foundations and the theory of Schr\"{o}dinger operators as second level.  Another way of explaining the distinction is that the Weyl--von Neumann work is an analog of setting up a formalism for classical mechanics like the Hamiltonian or Lagrangian while the theory initiated by Kato is the analog of celestial mechanics -- the application of the general framework to concrete systems.

\begin{wrapfigure}{l}{0.5\textwidth}
   \centering
   \vspace{-0.3cm}
   \includegraphics[width=0.5\textwidth]{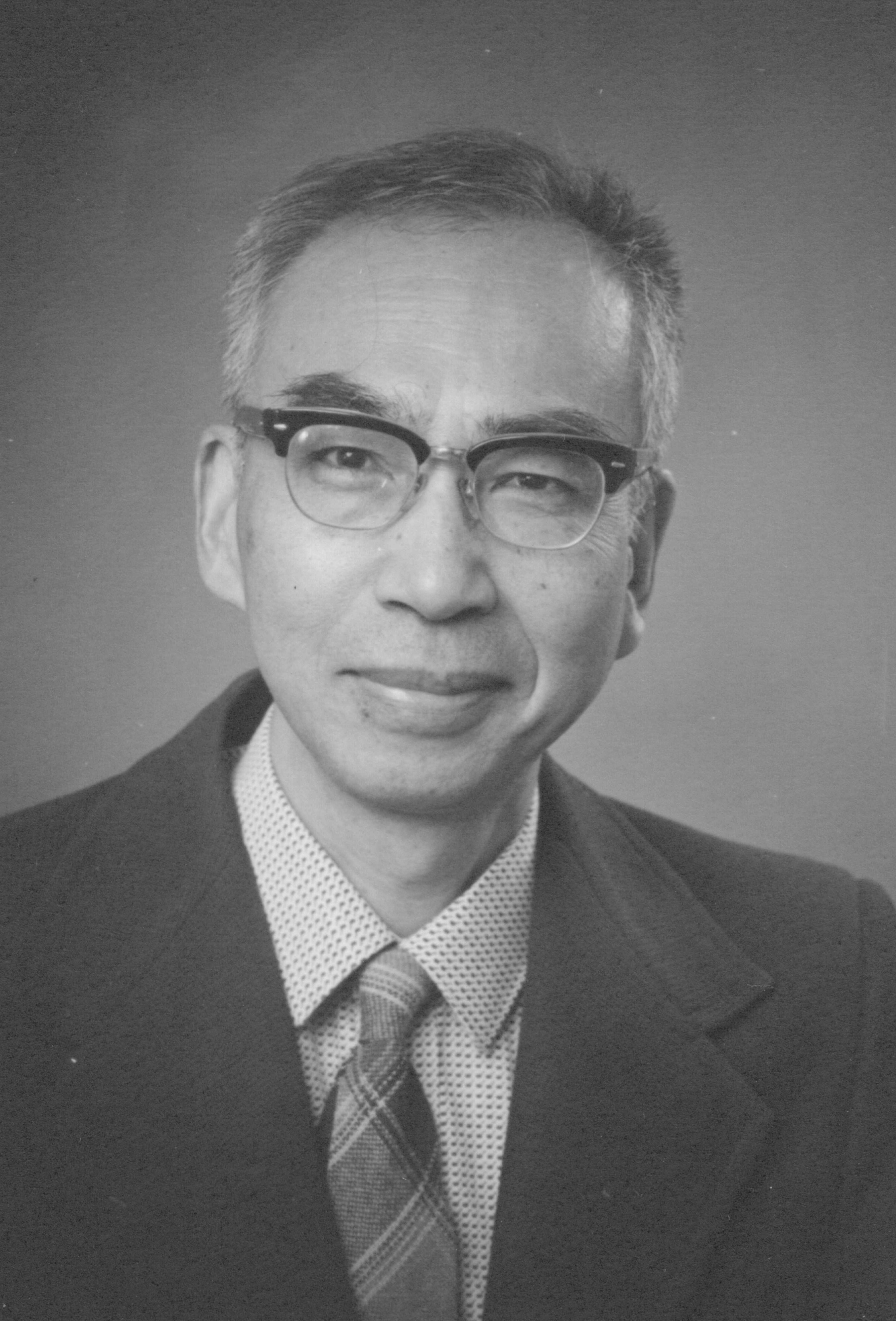}
Kato at Berkeley \lb{figure1}
   \vspace{-0.2cm}
\end{wrapfigure}

When I began this project I decided to write about all of Kato's major contributions to the field in a larger context and this turned into a much larger article than I originally planned.  As such, it is a review of a significant fraction of the work of the last 65 years on the mathematics of NRQM.  Two important areas only touched on or totally missing are N--body systems and the large N limit.  Of course, Kato's self--adjointness work includes N--body systems, and there are papers on bound states in Helium and on properties of many body eigenfunctions.  As we'll see, his theory of smooth perturbations applies to give a complete spectral analysis of certain N--body systems with only one scattering channel and is one tool in the study of general $N$--body systems.  But there is much more to the N--body theory -- for reviews, see \cite{CFKS, DerGer, GrafShen, GusSig, HunzSig}. Except for the 1972 work of Lieb--Simon on Thomas Fermi almost all the large N limit work is after 1980 when Kato mostly left the field; for recent reviews of different aspects of this subfield, see \cite{BPS16, Lewin, Lieb81, LiebBk10, LiebBk05, Roug15, Seir13}.

While this review will cover a huge array of work, it is important to realize it is only a fraction, albeit a substantial fraction, of Kato's opus.  I'd classify his work into four broad areas, NRQM, non--linear PDE's, linear semigroup theory and miscellaneous contributions to functional analysis.  We will not give references to all this work.  The reader can get an  (almost) complete bibliography from MathSciNet or, for papers up to 1987, the dedication of the special issue of JMAA on the occasion of Kato's $70^{th}$ birthday \cite{DHBio} has a bibliography.

Around 1980, one can detect a clear shift in Kato's interest.  Before 1980, the bulk of his papers are on NRQM with a sprinkling in the other three areas while after 1980, the bulk are on nonlinear equations with a sprinkling in the other areas including NRQM.  Kato's nonlinear work includes looking at the Euler, Navier--Stokes, KdV and nonlinear Schr\"{o}dinger equations.  He was a pioneer in existence results -- we note that his famous 1951 paper can be viewed as a result on existence of solutions for the time dependent linear Schr\"{o}dinger equation!  It is almost that when NRQM became too crowded with workers drawn by his work, he moved to a new area which took some time to become popular. Terry Tao said of this work: \emph{the Kato smoothing effect for Schr\"{o}dinger equations is fundamental to the modern theory of nonlinear Schr\"{o}dinger equations, perhaps second only to the Strichartz estimates in importance...Kato developed a beautiful abstract (functional analytic) theory for local well posedness for evolution equations; it is not used directly too much these days because it often requires quite a bit more regularity than we would like, but I think it was influential in inspiring more modern approaches to local existence based on more sophisticated function space estimates.}

And here is what Carlos Kenig told me: \emph{T.Kato played a pioneering role in the study of nonlinear evolution equations. He not only developed an abstract framework for their study, but also introduced the tools to study many fundamental nonlinear evolutions coming from mathematical physics. Some remarkable examples of this are: Kato's introduction of the ``local smoothing effect'' in his pioneering study of the Korteweg-de Vries equation, which has played a key role in the development of the theory of nonlinear dispersive equations.}

{\emph{Kato's unified proof of the global well-posedness of the Euler and Navier-Stokes equations in 2d, which led to the development of the Beale-Kato-Majda blow-up criterion for these equations. Kato's works with Ponce on strong solutions of the Euler and Navier-Stokes equations, which developed the tools for the systematic application of fractional derivatives in the study of evolutions, which now completely permeates the subject. These contributions and many others, have left an indelible and enduring impact for the work of Kato on nonlinear evolutions.}

The basic results on generators of semigroups on Banach spaces date back to the early 1950s  going under the name Feller-Miyadera-Phillips and Hille-Yosida theorems (with a later 1961 paper of Lumer--Phillips).  A basic book with references to this work is Pazy \cite{Pazy}.  This is a subject that Kato returned to often, especially in the 1960s.  Pazy \cite{Pazy} lists 19 papers by Kato on the subject.  There is overlap with the NRQM work and the semigroup work.  Perhaps the most important of these results are the Trotter--Kato theorems (discussed below briefly after Theorem \ref{T3.7}) and the definition of fractional powers for generators of (not necessarily self--adjoint) semigroups.  There are also connections between quantum statistical mechanics and contraction semigroup on operator algebras.  To keep this review within bounds, we will not discuss this work.

The fourth area is a catchall for a variety of results that don't fit into the other bins.  Among these results is an improvement of the celebrated Calder\'{o}n-Vaillancourt bounds on pseudo-differential operators \cite{KatoPsDO}.  In \cite{KatoAbsVal}, Kato proved the absolute value for operators is not Lipschitz continuous even restricted to the self--adjoint operators but for any pair of bounded, even non--self--adjoint, operators one has that
\begin{equation}\label{1.1}
  \norm{|S|-|T|} \le \frac{2}{\pi} \norm{S-T}\left(2+\log\frac{\norm{S}+\norm{T}}{\norm{S-T}}\right)
\end{equation}
(I don't think there is any significance to the fact that the constant is the same as in \eqref{10.22}).

The last of these miscellaneous things that we'll discuss (but far from the last of the miscellaneous results) involves what has come to be called the Heinz--Loewner inequality.  In 1951, Heinz \cite{Heinz} proved for positive operators, $A, B$ on a Hilbert space, one has that $A \le B \Rightarrow \sqrt{A} \le \sqrt{B}$.   Heinz was a student of Rellich and Kato was paying attention to the work of Rellich's group and a year later published a paper \cite{KatoHeinz} with an elegant, simple proof and extended the result to $A \mapsto A^s$ for $0 < s <1$ replacing the square root. Neither of them knew at the time that Loewner \cite{Loewner} had already proven a much more general result in 1934!  Despite the 17 year priority, the monotonicity of the square root is called variably, the Heinz inequality, the Heinz--Loewner inequality or even, sometimes, the Heinz--Kato inequality.  Heinz and Kato found equivalent results to the monotonicity of the square root (one paper with lots of additional equivalent forms is \cite{HeinzEquiv}).  In particular, the following equivalent form is almost universally known as the Heinz--Kato inequality.
\begin{equation}\label{1.2}
  \norm{T\varphi} \le \norm{A\varphi} \quad \norm{T^*\psi} \le \norm{B\psi} \Rightarrow |\jap{\psi,T\varphi}| \le \norm{A^s \varphi} \norm{B^{1-s}\psi}
\end{equation}
Kato returned several times to this subject, most notably \cite{KatoHeinz2} finding a version of the Heinz--Loewner inequality (with an extra constant depending on $s$) for maximal accretive operators on a Hilbert space.

Returning to the timing of Kato's fundamental 1951 paper \cite{KatoHisThm}, I note that he was 34 when it was published (it was submitted a few years earlier as we'll discuss in Section \ref{s7}).  Before it, his most important work was his thesis, awarded in 1951 and published in 1949-51.  One might be surprised at his age when this work was published but not if one understands the impact of the war.  Kato got his BS from the University of Tokyo in 1941, a year in which he published two (not mathematical) papers in theoretical physics.  But during the war, he was evacuated to the countryside.  We were at a conference together one evening and Kato described rather harrowing experiences in the camp he was assigned to, especially an evacuation of the camp down a steep wet hill.  He contracted TB in the camp.  In his acceptance for the Wiener Prize \cite{WP}, Kato says that his work on essential self--adjointness and on perturbation theory were essentially complete ``by the end of the war.'' Recently, several of Kato's notebook were discovered dated 1945 that contain most of results published in Kato \cite{KatoHisThm, KatoThesis} sometimes with different proofs from the later publications (these notes have recently been edited for publication in \cite{KatoNotebooks}).

\begin{wrapfigure}{r}{0.4\textwidth}
   \centering
   \vspace{-0.3cm}
   \includegraphics[width=0.4\textwidth]{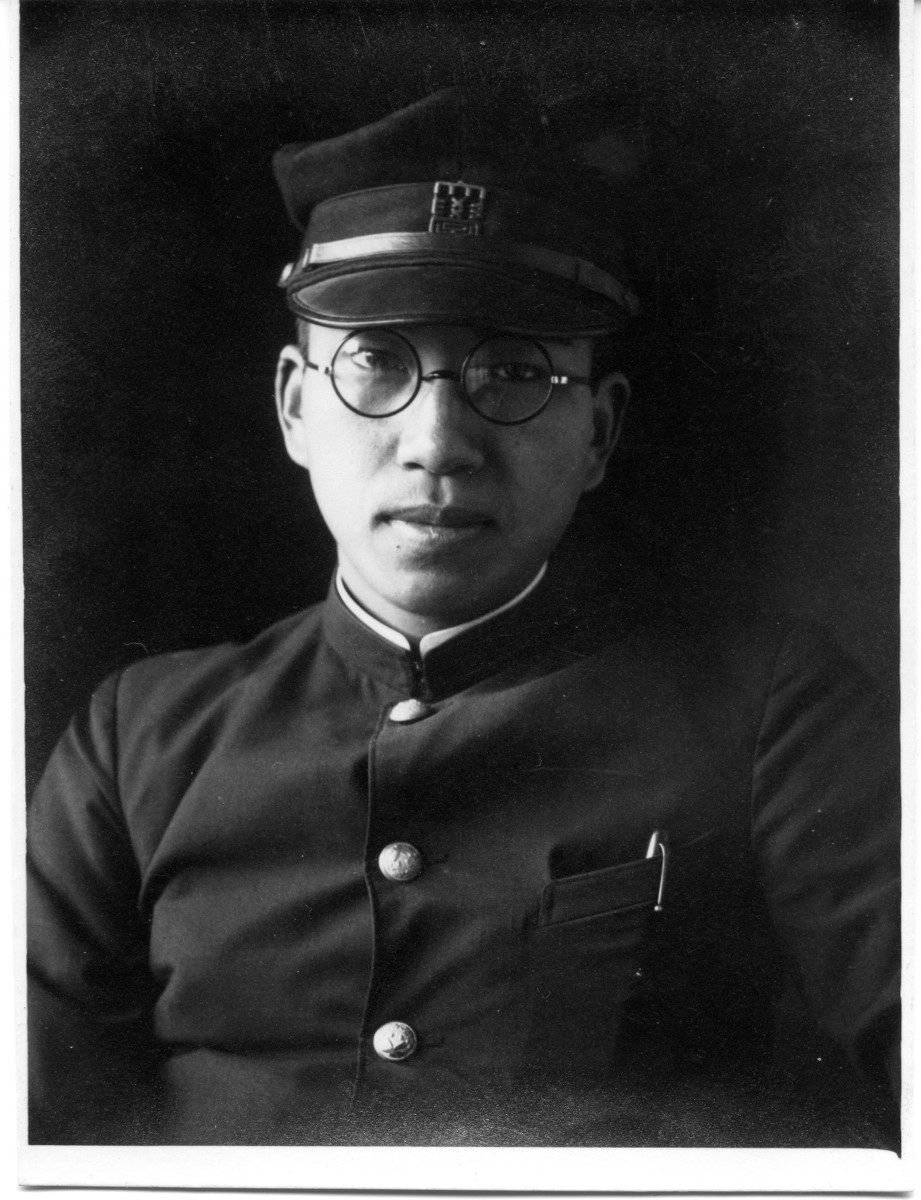}
Kato as a student \lb{figure2}
   \vspace{-0.2cm}
\end{wrapfigure} In 1946, Kato returned to the University of Tokyo as an Assistant (a position common for students progressing towards their degrees) in physics, was appointed Assistant Professor of Physics in 1951 and full professor in 1958. I've sometimes wondered what his colleagues in physics made of him. He was perhaps influenced by the distinguished Japanese algebraic geometer, Kunihiko Kodaira (1915-1997) two years his senior and a 1954 Fields medalist.  Kodaira got a BS in physics after his BA in mathematics and was given a joint appointment in 1944, so there was clearly some sympathy towards pure mathematics in the physics department.  In 1948, Kato and Kodaira wrote a 2 page note \cite{KK} to a physics journal whose point was that every $L^2$ wave function was acceptable for quantum mechanics, something about which there was confusion in the physics literature.

Beginning in 1954, Kato started visiting the United States.  This bland statement masks some drama.  In 1954, Kato was invited to visit Berkeley for a year, I presume arranged by F. Wolf.  Of course, Kato needed a visa and it is likely it would have been denied due to his history of TB.  Fortunately, just at the time (and only for a period of about a year), the scientific attach\'{e} at the US embassy in Tokyo was Otto Laporte (1902-1971) on leave from a professorship in Physics at the University of Michigan.  Charles Dolph (1919-1994), a mathematician at Michigan, learned of the problem and contacted Laporte who intervened to get Kato a visa.  Dolph once told me that he thought his most important contribution to American mathematics was his helping to allow Kato to come to the US.  In 1987, in honor of Kato's $70^{th}$ birthday, there was a special issue of the Journal of Mathematical Analysis and Applications and the issue was jointly dedicated \cite{DHBio} to Laporte (he passed away in 1971) and Kato and edited by Dolph and Kato's student Jim Howland.

During the mid 1950s, Kato spent close to three years visiting US institutions, mainly Berkeley, but also the Courant Institute, American University, National Bureau of Standards and Caltech.  In 1962, he accepted a professorship in Mathematics from Berkeley where he spent the rest of his career and remained after his retirement.  One should not underestimate the courage it takes for a 45 year old to move to a very different culture because of a scientific opportunity.  The reader can consult the Mathematics Genealogy Project (\url{http://www.genealogy.ams.org/id.php?id=32842}) for a list of Kato's students (24 listed there, 3 from Tokyo and 21 from Berkeley; the best known are Ikebe and Kuroda from Tokyo and Balslev and Howland from Berkeley) and \cite{AMSMem} for a memorial article with lots of reminisces of Kato.

One can get a feel for Kato's impact by considering the number of theorems, theories and inequalities with his name on them.  Here are some: Kato's theorem (which usually refers to his result on self--adjointness of atomic Hamiltonians), the Kato--Rellich theorem (which Rellich had first), the Kato-Rosenblum theorem and the Kato--Birman theory (where Kato had the most significant results although, as we'll see, Rosenblum should get more credit than he does), the Kato projection lemma and Kato dynamics (used in the adiabatic theorem), the Putnam--Kato theorem, the Trotter--Kato theorem (which is used for three results; see section \ref{s3}), the Kato cusp condition (see Section \ref{s19}), Kato smoothness theory, the Kato class of potentials and Kato--Kuroda eigenfunction expansions.  To me \emph{Kato's inequality} refers to the self--adjointness technique discussed in Section \ref{s9}, but the term has also been used for the Hardy like inequality with best constant for $r^{-1}$ in three dimensions (which we discuss in Section \ref{s10}), for a result on hyponormal operators that follows from Kato smoothness theory (the book \cite{hypo} has a section called ``Kato's inequality'' on it) and for the above mentioned variant of the Heinz--Loewner inequality for maximal accretive operators.  There are also Heinz--Kato, Ponce--Kato and Kato--Temple inequalities.  In \cite{SS}, Erhard Seiler and I proved that if $f,g \in L^p(\bbR^\nu),\, p\ge 2$, then $f(X)g(-i\nabla)$ is in the trace ideal $\calI_p$.  At the time, Kato and I had correspondence about the issue and about some results for $p<2$.  In \cite{RS3}, Reed and I mentioned that  Kato had this result independently.  Although Kato never published anything on the subject, in recent times, it has come to be called the Kato--Seiler--Simon inequality.

Of course, when discussing the impact of Kato's work, one must emphasize the importance of his book \emph{Perturbation Theory for Linear Operators} \cite{KatoBk} which has been a bible for several generations of mathematicians.  One of its virtues is its comprehensive nature.  Percy Deift told me that Peter Lax told him that Friedrichs remarked on the book: ``Oh, its easy to write a book when you put everything in it!''

We will not discuss every piece of work that Kato did in NRQM -- for example, he wrote several papers on variational bounds on scattering phase shifts whose lasting impact was limited.  And we will discuss Kato's work on the definition of a self--adjoint Dirac Hamiltonian which of course isn't non--relativistic.  It is closely related to the Schr\"{o}dinger work and so belongs here.  Perhaps I should have dropped ``non--relativistic'' from the title but since almost all of Kato's work on quantum theory is non--relativistic and even the Dirac stuff is not quantum field theory, I decided to leave it.

Roughly speaking, this article is in five parts.  Sections \ref{s2}-\ref{s6} discuss eigenvalue perturbation theory in both the analytic (where many of his results were rediscoveries of results of Rellich and Sz-Nagy) and asymptotic (where he was the pioneer).  There is a section on situations where either an eigenvalue is initially embedded in continuous spectrum or where as soon the perturbation is turned on the location of the spectrum is swamped by continuous spectrum (i.e. on the theory of QM resonances).  There are a pair of sections on two issues that Kato studied in connection with eigenvalue perturbation theory: pairs of projections and on the Temple--Kato inequalities.

Next come four sections on self--adjointness.  One focuses on the Kato--Rellich theorem and its applications to atomic physics, one on his work with Ikebe and one on what has come to be called Kato's inequality.  Finally his work on quadratic forms is discussed including his work on monotone  convergence for forms.

After that two pioneering works on aspects of bound states -- his result on non--existence of positive energy bound states in certain two body systems and his paper on the infinity of bound states for Helium, at least for infinite nuclear mass.

Next four sections on scattering and spectral theory which discuss the Kato--Birman theory (trace class scattering), Kato smoothness, Kato--Kuroda eigenfunction expansions and the Jensen--Kato paper on threshold behavior.

Last is a set of three miscellaneous gems: his work on the adiabatic theorem, on the Trotter product formula and his pioneering look at eigenfunction regularity.

I should warn the reader that I use two conventions that are universal among physicists but often the opposite of many mathematicians. First, my (complex) Hilbert space inner product $\jap{\varphi,\psi}$ is linear in $\psi$ and anti--linear in $\varphi$.  Secondly my wave operators are defined by (note $\pm$ vs. $\mp$)
\begin{equation*}
  \Omega^{\pm}(A,B) = \textrm{s}-\lim_{t \to \mp \infty} e^{itA}e^{-itB}P_{ac}(B)
\end{equation*}
In Section \ref{s15}, I'll explain the historical reason for this very strange convention. I should also warn the reader that I use two non--standard abbreviations ``esa'' and ``esa--$\nu$'' (where $\nu$ can be an explicit integer.  They are defined at the start of in Section \ref{s7}).

With apologies to those inadvertently left out, I'd like to thank a number of people for useful information Yosi Avron, Jan Derezi\'{n}ski, Pavel Exner, Rupert Frank, Fritz Gesztesy, Gian Michele Graf, Sandro Graffi, Vincenzo Grecchi, Evans Harrell, Ira Herbst, Bernard Helffer, Arne Jensen, Carlos Kenig, Toshi Kuroda, Peter Lax, Hiroshi Oguri, Sasha Pushnitski, Robert Seiringer, Heinz Siedentop, Israel Michael Sigal, Erik Skibsted, Terry Tao, Dimitri Yafaev and Kenji Yajima.  The pictures here are all from the estate of Mizue Kato, Tosio's wife who passed away in 2011.  Her will gave control of the pictures to H. Fujita, M. Ishiguro and S. T. Kuroda.  I thank them for permission to use the pictures and H. Okamoto for providing digital versions.

\section{Eigenvalue Perturbation Theory, I: Regular Perturbations} \lb{s2}

This is the first of five sections on eigenvalue perturbation theory; this section deals with the analytic case.  Section \ref{s3} begins with examples that delimit some of the possibilities when the analytic theory doesn't apply and that section and the next discuss two sets of those examples after which there are two sections on related mathematical issues which are connected to the subject and where Kato made important contributions.

Eigenvalue perturbation theory in the case where the eigenvalues are analytic (aka regular perturbation theory or analytic perturbation theory) is central to Kato's opus -- it is both a main topic of his famous book on Perturbation Theory and the main subject of his thesis.  We'll begin this section by sketching the modern theory as presented in Kato's book \cite{KatoBk} or as sketched in Simon \cite[Sections 1.4 and 2.3]{OT} (other book presentations include Baumg\"{a}rtel \cite{BaumBk}, Friedrichs \cite{FriedBk},Reed--Simon \cite{RS4} and Rellich \cite{RellichBk}). Then we'll give a Kato--centric discussion of the history.

As a preliminary, we want to recall the theory of spectral projections for general bounded operators, $A$, on a Banach space, $X$. If the spectrum of $A$, $\sigma(A)=\sigma_1\cup\sigma_2$ is a decomposition into disjoint closed sets, one can find a chain (finite sum and/or difference of contours), $\Gamma$, so that if $w(z,\Gamma)$ is the winding number about $z \notin \Gamma$, (i.e. $w(z,\Gamma) = (2\pi i)^{-1} \oint_{\zeta \in \Gamma} (\zeta-z)^{-1} d\zeta$), then $\Gamma \cap \sigma(A) = \emptyset$, $w(z,\Gamma)=0$ or $1$ for all $z \in \bbC\setminus\Gamma$, $w(z,\Gamma)=1$ for $z \in \sigma_1$, and $w(z,\Gamma)=0$ for $z \in \sigma_2$ (see \cite[Section 4.4]{CAA}).

One defines an operator
\begin{equation}\label{2.1}
  P_{\sigma_1} = \frac{1}{2\pi i}\oint_\Gamma \frac{dz}{z-A}
\end{equation}
Then one can prove \cite[Section 2.3]{OT} that $P_{\sigma_1}$ is a projection (i.e. $P_{\sigma_1}^2=P_{\sigma_1}$) commuting with $A$.  Thus $A$ maps each of $\ran\, P_{\sigma_1}$ and $\ran(\bdone-P_{\sigma_1})$ onto themselves and one can prove that
\begin{equation}\label{2.2}
  \sigma(A\restriction\ran\, P_{\sigma_1}) = \sigma_1, \qquad \sigma(A\restriction\ran (\bdone- P_{\sigma_1})) = \sigma_2
\end{equation}

Of particular interest are isolated points, $\lambda$, of $\sigma(A)$ in which case one can consider $\sigma_1 = \{\lambda\}, \, \sigma_2 = \sigma(A)\setminus\{\lambda\}$.  We write $P_{\sigma_1} = P_\lambda$ and $\calH_\lambda=\ran\, P_\lambda$.  If $\dim \calH_\lambda < \infty$, we call $\lambda$ a point of the \emph{discrete spectrum}.  In that case, it is known there is a nilpotent, $N_\lambda$, with $P_\lambda N_\lambda= N_\lambda P_\lambda = N_\lambda$ (and so $N _\lambda\restriction \ran(\bdone-P_\lambda)=0$) so that
\begin{equation}\label{2.3}
  AP_\lambda = \lambda P_\lambda + N_\lambda
\end{equation}

In particular, this implies that $\lambda$ is an eigenvalue.  The $P_\lambda$ are called \emph{eigenprojections} and the $N_\lambda$ are called \emph{eigennilpotents}.  Just as the $P_\lambda$ are first order residues of the poles of $(z-A)^{-1}$ at $z=\lambda$, the $N_\lambda$ are second order residues (and $N_\lambda^k$ is the $(z-\lambda)^{-k-1}$ residue) -- see \cite[Section 2.3]{OT} for more on the subject.

Kato's book \cite{KatoBk} is the standard reference for this beautiful complex analysis approach to Jordan normal forms whose roots go back further.  In 1913, Riesz \cite{Riesz1913}, in one of the first books on operator theory on infinite dimensional spaces, mentioned residues of poles of $(z-A)^{-1}$ could be studied and, in 1930, he noted \cite{Riesz1930} in the Hilbert space case that decompositions of the spectrum into disjoint closed sets induced a decomposition of the space.  Nagumo \cite{Nagumo} used \eqref{2.1} for Banach algebras in 1930.  Gel'fand's great 1941 paper \cite{Gelfand1941} discussed functions, $f$, analytic in a neighborhood of $\sigma(x)$ where $x \in \frA$, a commutative Banach algebra with unit and defined
\begin{equation}\label{2.4}
  f(x) = \frac{1}{2\pi i}\oint_\Gamma \frac{f(z)}{z-x} dz
\end{equation}
where $\Gamma$ surrounds the whole spectrum.

If $\sigma_1 \cup \sigma_2$ is a decomposition, $f$ can be taken to be $1$ in a neighborhood of $\sigma_1$ and $0$ in a neighborhood of $\sigma_2$.   $P_\lambda^2 = P_\lambda$ is then a special case of his functional calculus result $(fg)(x)=f(x)g(x)$.  In 1942-43, this functional calculus was further developed in the United States by Dunford \cite{Dunford1,Dunford2}, Lorch \cite{Lorch} and Taylor \cite{Taylor}.  In his book, Kato calls \eqref{2.4} a Dunford--Taylor integral.

With this formalism out of the way, we can turn to sketch the theory of regular perturbations.  For details see the book presentations of Kato \cite[Chaps. II and VII]{KatoBk}, Reed--Simon \cite[Chap XII]{RS4} and Simon \cite[Sections 1.4 and 2.3]{OT}.

\textbf{Step 1.} \emph{Finite Dimensional Theory.} Let $A(\beta)$ be an analytic family of $n \times n$ matrices for $\beta \in \Omega$, a domain in $\bbC$.  The eigenvalues are solutions of
\begin{equation}\label{2.3A}
  \det(A(\beta)-\lambda) = 0
\end{equation}
so algebroidal functions.  The theory of such functions (see Knopp \cite{Knopp2} or Simon \cite[Section 3.5]{CAA}) implies there is a discrete set of points $S \subset \Omega$ (i.e. with no limit points in $\Omega$) so that all solutions of \eqref{2.3A} are multivalued analytic functions on $\Omega\setminus S$ and so that the number of distinct solutions and their multiplicities are constant on $\Omega\setminus S$.  At points of $S$, the solutions have finite limits and are locally given by all the branches of one or more locally convergent Puiseux series (power series in $(\beta-\beta_0)^{1/p}$ for some $p \in \bbZ_+$).  From the integral formula \eqref{2.1} and its analog for $N_\lambda$, one sees that the eigenprojections and eigennilpotents
are also multivalued analytic functions on $\Omega\setminus S$.  They can have polar singularities at points in $S$, i.e. their Puiseux--Laurent series can have finitely many negative index terms. Indeed, in 1959, Butler \cite{Butler} proved that if some $\lambda(\beta)$ has a fractional power at a point $\beta_0 \in S$, then the Puiseux--Laurent series for $P(\beta)$ must have non--vanishing negative powers.

The set of early significant results include two theorems of Rellich \cite[Part I]{RellichPT}.  If $A(\beta)$ is self--adjoint (i.e. $\Omega$ is invariant under complex conjugations and $A(\bar{\beta})=A(\beta)^*)$, then $\lambda(\beta)$ and $P(\beta)$ are real analytic on $\Omega\cap\bbR$, i.e. no fractional powers in $\lambda(\beta)$ at points of $S \cap\bbR$ and no polar singularities of $P(\beta)$ there.  The first comes from the fact that if a Puiseux series based at $\beta_0 \in \bbR$ has a non--trivial fractional power term, then some branch must have non-real values for some real values of $\beta$ near $\beta_0$ (interestingly enough, in his book, Kato \cite{KatoBk} appeals to Butler's theorem instead of using this simple argument of Rellich).  The second relies on the fact that if $P(\beta)$ has polar terms at $\beta_0$, since there are only finitely many negative index terms, one has that $\lim_{|\beta-\beta_0| \downarrow 0} \norm{P(\beta)} = \infty$ which is inconsistent with the fact that spectral projections for self--adjoint matrices are self-adjoint, so with norm $1$.

For later purposes, we want to note the two leading terms in the perturbations series
\begin{equation}\label{2.4A}
  E(\beta) = E_0 + a_1\beta+a_2\beta^2+\textrm{O}(\beta^3)
\end{equation}
of a simple eigenvalue, $E_0$, of $A+\beta B$ with $A$ and $B$ Hermitian.  Suppose $\{\varphi_j\}_{j=0}^{n-1}$ are an orthonormal basis of eigenvectors of $A$ with $A\varphi_j = E_j\varphi_j$.  Then
\begin{equation}\label{2.4B}
  a_1 = \jap{\varphi_0,B\varphi_0}, \qquad a_2 = \sum_{j\ne 0}\frac{|\jap{\varphi_j,B\varphi_0}|^2}{E_0-E_j}
\end{equation}

One of Kato's contributions is to describe $a_2$ succinctly in the general infinite dimensional case where $E_0$ is discrete but $A$ may have continuous spectrum.  Let $P$ be the projection onto multiples of $\varphi_0$.  Define the reduced resolvent, $S$, of $A$ at $E_0$ by
\begin{equation}\label{2.4D}
  S = (A-E_0)^{-1}(1-P)
\end{equation}
i.e. $S\varphi_0=0$ and $S\psi=\lim_{\epsilon \to 0; \epsilon \ne 0}(A-E_0-\epsilon)^{-1}\psi$ if $\psi\perp\varphi_0$.  Thus for any $\eta$:
\begin{equation}\label{2.4E}
  (A-E_0)S\eta = (1-P)\eta
\end{equation}
In his thesis, Kato \cite{KatoThesis} realized that $a_2$ could be written
\begin{equation}\label{2.4C}
  a_2 = -\jap{\varphi_0,BSB\varphi_0}
\end{equation}

\textbf{Step 2.} \emph{Bounded Analytic Operator Valued Functions.} For $A(\beta)$, a function from a domain $\Omega \subset \bbC$ to the bounded operators on a Banach space, $X$, we say that $A$ is analytic at $\beta_0 \in \Omega$ if it is given by a convergent power series near $\beta_0$.  This is equivalent to $A$ having a complex Fr\'{e}chet derivative or to $A(\beta)x$ being a Banach space valued analytic function for all $x \in X$ or to $\ell(A(\beta)x)$ being a scalar analytic function for all $\ell \in X^*$ and $x \in X$ (see \cite[Theorem 3.1.12]{CAA}).

\textbf{Step 3}. \emph{Analytic Resolvents and Spectral Projections}.  Because the set of invertible maps in $\calL(X)$ is open and on that set, $A \mapsto A^{-1}$ is analytic (by using geometric series), if $A(\beta)$ is an analytic operator valued functions, then $\calR \equiv \{(\beta,z) \,|\, \beta \in \Omega, z \in \bbC, A(\beta)-z\bdone \textrm{ is invertible}\}$ is open in $\Omega \times \bbC$ and the resolvent $(A(\beta)-z)^{-1}$ is analytic there.  It follows that if $\lambda_0$ is an isolated point of the spectrum of $A(\beta_0)$, then there are $\epsilon, \delta$ so that for $|\beta-\beta_0| < \epsilon$ and $|z-\lambda_0| = \delta$, we have that $(\beta,z) \in \calR$ and moreover that $\sigma(A(\beta_0)) \cap \{z\,|\, |z-\lambda_0| \le \delta\} = \{\lambda_0\}$.  We can thus use \eqref{2.1} to define projections $P(\beta)$ for $|\beta-\beta_0| < \epsilon$ so that $A(\beta)P(\beta) = P(\beta)A(\beta)$ and $\sigma(A(\beta) \restriction \ran\, P(\beta)) = \sigma(A(\beta)) \cap \{z\,|\, |z-\lambda_0| \le \delta\}$.  $P(\beta)$ is analytic in $\beta$, so, by shrinking $\epsilon$ if need be, we can suppose that
\begin{equation}\label{2.5}
  |\beta-\beta_0| < \epsilon \Rightarrow \norm{P(\beta)-P(\beta_0)} < 1
\end{equation}

\textbf{Step 4}. \emph{Reduction to a finite dimensional problem}.  A basic fact that we'll prove in Section \ref{s5} (see Theorem \ref{T5.1}) is that when \eqref{2.5} holds, we can define an invertible map $U(\beta)$ for $|\beta-\beta_0| < \epsilon$ analytic in $\beta$ so that
\begin{equation}\label{2.6}
  U(\beta)P(\beta)U(\beta)^{-1} = P(\beta_0)
\end{equation}
Moreover, if $X$ is a Hilbert space and $P(\beta)$ is self--adjoint for $|\beta-\beta_0| < 1$ and $\textrm{Im}\,(\beta-\beta_0) = 0$, then $U(\beta)$ is unitary for such $\beta$.

Because of \eqref{2.6}, $\wti{A}(\beta) \equiv U(\beta)A(\beta)U(\beta)^{-1}$ leaves $\ran\, P(\beta_0)$ invariant and
\begin{equation}\label{2.7A}
  \sigma(\wti{A}(\beta) \restriction \ran\, P(\beta_0)) = \sigma(\wti{A}(\beta)) \cap \{z\,|\, |z-\lambda_0| \le \delta\}
\end{equation}

If now $\lambda_0$ is a point of the discrete spectrum of $A(\beta_0)$, then $P(\beta_0)$ is finite dimensional, so $\wti{A}(\beta) \restriction \ran\, P(\beta_0)$ is a finite dimensional problem and all the results of Step 1 apply.  Moreover, if $X$ is a Hilbert space and $A(\beta)$ is self--adjoint for $\beta$ real, then so is $\wti{A}(\beta)$ and Rellich's Theorems extend. Note that even if $A(\beta)$ is linear in $\beta$, $\wti{A}(\beta)$ will not even be polynomial in $\beta$ so it is important that step 1 be done for general analytic families.

\textbf{Step 5} \emph{Regular Families of Closed Operators}.  For $\beta \in \Omega$, a domain, we consider a family, $A(\beta)$ of closed, densely defined (but not necessarily bounded) operators on a Banach space, $X$.  We say that $A$ is a \emph{regular family} if, for every $\beta_0 \in \Omega$, there is a $z_0 \in \bbC$ and $\epsilon >0$ so that for $|\beta-\beta_0| < \epsilon$, we have that $z_0 \notin \sigma(A(\beta))$ and $\beta \mapsto (A(\beta)-z_0)^{-1}$ is a bounded analytic function near $\beta_0$.  Kato \cite[Section VII.1.2]{KatoBk} has a more general definition that applies even to closed operators between two Banach spaces $X$ and $Y$ but he proves that it is equivalent to the above definition so long as $X=Y$ and every $A(\beta)$ has a non--empty resolvent set (which is no restriction if you want to consider isolated eigenvalues).

With this definition, all the eigenvalue perturbation theory for the bounded case carries over since $\lambda_0$ is a discrete eigenvalue of $A(\beta_0)$ if and only if $(\lambda_0-z_0)^{-1}$ is a discrete eigenvalue of $(A(\beta_0)-z_0)^{-1}$.

\textbf{Step 6} \emph{Criteria for Regular Families}. A \emph{type (A) family} is a function, $A(\beta)$, for $\beta \in \Omega$, a region in $\bbC$, so that $A(\beta)$ is a closed, densely defined operator on a Banach space, $X$, with domain $D(A(\beta)) = \calD$ independent of $\beta$ and so that for all $\varphi \in \calD$ we have that $\beta \mapsto A(\beta)\varphi$ is an analytic vector valued function.  If $A(\beta_0)$ has non--empty resolvent set, it is easy to see that $A(\beta)$ is a regular family for $\beta$ near $\beta_0$.  In particular, if the resolvent set is non--empty for all $\beta \in \Omega$, then $A(\beta)$ is a regular family on $\Omega$.

Of particular interest is the case where $A(\beta) = A_0+\beta B$ where $\calD=D(A_0)$ and $B$ is an operator with $\calD \subset D(B)$.  Then $A(\beta)$ is closed for all $\beta$ small if only if there are $a, b > 0$ so that for all $\varphi \in \calD$, one has that
\begin{equation}\label{2.7}
  \norm{B\varphi} \le a\norm{A_0\varphi} + b\norm{\varphi}
\end{equation}
Thus \eqref{2.7} is a necessary and sufficient condition for a linear $A(\beta)$ to be an analytic family of type (A) near $\beta=0$.

If a bound like \eqref{2.7} holds, we say that $B$ is \emph{$A$-bounded}.  The \emph{relative bound} is the $\inf$ over all $a$ for which \eqref{2.7} holds (typically, if $a_0$ is this $\inf$, the bound only holds for $a > a_0$ and the corresponding $b$'s go to $\infty$ as $a \downarrow a_0$).  There exist unbounded $B$ for which the relative bound is $0$.  There are similar bounds for general analytic families of type (A): $A(\beta) = A +\sum_{n=1}^{\infty} \beta^n B_n$ and $B_n$ obeys $D(B_n) \supset D(A)$ and for some $a,b,c$ and all $\varphi \in D(A)$ one has that
\begin{equation}\label{2.8}
  \norm{B_n\varphi} \le c^{n-1} (a\norm{A\varphi}+b\norm{\varphi})
\end{equation}

There is also a notion of type(B) families on Hilbert space (due to Kato \cite{KatoBk}) where one demands that $A(\beta)$ be m--accretive with $\beta$ independent form domain.

\begin{example} [\emph{$1/Z$ expansion}] \lb{E2.1} A simple example of regular perturbation theory of physical interest concerns two electron ions which in the limit of infinite nuclear mass (ignoring relativistic and spin corrections) is described by
\begin{equation}\label{2.8A}
  H(Z) = -\Delta_1-\Delta_2 - \frac{Z}{r_1}-\frac{Z}{r_2} + \frac{1}{|\boldsymbol{r_1}-\boldsymbol{r_2}|}
\end{equation}
on $L^2(\bbR^6,d^3\boldsymbol{r_1} d^3\boldsymbol{r_2})$.  Under a scale transformation $Z^{-2}H(Z)$ is unitarily equivalent to
\begin{equation}\label{2.8B}
  A(1/Z) = -\Delta_1-\Delta_2 - \frac{1}{r_1}-\frac{1}{r_2} + \frac{1}{Z|\boldsymbol{r_1}-\boldsymbol{r_2}|}
\end{equation}

This is an entire family of type (A) in $1/Z$.  At $1/Z = 0$, the ground state energy is $E_0(0) = -\tfrac{1}{2}$.  For all $Z$, the HVZ theorem (\cite[Section XIII.5]{RS4}) implies that the continuous spectrum of $A(1/Z)$ is $[-\tfrac{1}{4},\infty)$.

Kato was concerned with rigorous estimates on the radius of convergence, $\rho$, of the power series for $E_0(1/Z)$.  He discussed this in his thesis and, in his book \cite[Section VII.4.9]{KatoBk}, was able to show that $\rho > 0.24$ and he noted that this didn't cover the physically important cases $1/Z = 1/2$, i.e, Helium ($Z=2$).  In fact the case $1/Z=1$ is also important because it describes the $H^-$ ion which is known to exist.

There has been considerable physical literature on this example.  Stillinger \cite{Still} found numerically that the perturbation coefficients (not found numerically using perturbation theory but by fitting variationally calculated eigenvalues) are eventually all positive, so there is a singularity on the positive real axis at $\rho$.  As $\beta = 1/Z$ increases, $E(\beta)$ is monotone increasing and known to be real analytic at least until $E$ reaches the bottom of the continuous spectrum, $-\tfrac{1}{4}$, at $\beta=\beta_c$.  Since $H^-$ exists, $\beta_c > 1$.  The best current numerical estimate \cite{Est} suggests that $\rho = \beta_c$ and
\begin{equation*}
  \beta_c = 1.09766083373855980(5)
\end{equation*}
It is known \cite{HOHOS} (see \cite{FBLSCoulomb, Gridnov, FLSCoulomb} for improved results) that at $\beta=\beta_c$, $A(\beta)$ has an eigenvalue at $E(\beta_c) = -\tfrac{1}{4}$.  It would be interesting to understand the nature of the singularity at $\beta=\beta_c$, e.g. is there a convergent Puiseux series?
\end{example}

This completes our discussion of the theory of eigenvalue perturbation theory so we turn to some remarks on its history.  Eigenvalue perturbation theory goes back to fundamental work of Lord Rayleigh on sound waves in 1897 \cite[pp. 115--118]{Ray} and by Schr\"{o}dinger at the dawn of (new) quantum mechanics \cite{Schr} and is often called Rayleigh--Schr\"{o}dinger perturbation theory.

The first substantial rigorous mathematical work on the subject is a five part series of papers by Rellich \cite{RellichPT} published from 1937 to 1942.  It included an exhaustive treatment of the finite dimensional case including what we called Rellich's Theorems on the lack of singularities in the self--adjoint case.  He also noted the simple example:
\begin{equation}\label{2.9}
  A(\beta,\gamma) = \left(
                      \begin{array}{cc}
                        \beta & \gamma \\
                        \gamma & -\beta\\
                      \end{array}
                    \right)
\end{equation}
with eigenvalues $\pm \sqrt{\beta^2+\gamma^2}$ which shows that his analyticity results for the self--adjoint case do not extend to more than one variable.  He also considered the infinite dimensional case where \eqref{2.7} holds ($A$ self--adjoint and $B$ symmetric) and \eqref{2.8} appeared in his papers.  His papers did not use spectral projections but rather some brute force calculations.

B. Sz.-Nagy followed up Rellich's work in two papers published in 1947 and 1951 \cite{Nagy1947, Nagy1951} in which he treated the self--adjoint Hilbert space case and general closed operators on Banach spaces respectively.  The first paper had a 1942 Hungarian language version \cite{Nagy1942}.  He defined type (A) perturbations via \eqref{2.8}.  His main advance is to exploit the definition of spectral projections via \eqref{2.1}.  As a student of F. Riesz, this is not surprising.  This was also the first place that it was proven (in the Hilbert space case) that two orthogonal projections, $P$ and $Q$ with $\norm{P-Q} < 1$ are related via $Q=UPU^{-1}$ for a unitary which is analytic function of $Q$, i.e. he implemented Step 4 above.

Wolf \cite{Wolf} also extended the Nagy approach to the Banach space case is 1952.  Perhaps the most significant aspect of this work is that it served eventually to introduce Kato to Wolf for Wolf was a Professor at Berkeley who was essential to recruiting Kato to come to Berkeley both in 1954 and 1962.

Franti\v{s}ek Wolf (1904--1989) was a Czech mathematician who had a junior position at Charles University in Prague. Wolf had spent time in Cambridge and did some significant work on trigonometric series under the influence of Littlewood.  When the Germans invaded Czechoslovakia in March 1938, he was able to get an invitation to Mittag--Leffler.  He got permission from the Germans for a three week visa but stayed in Sweden!  He was then able to get an instructorship at Macalester College in Minnesota.  He made what turned out to be a fateful decision in terms of later developments.  Because travel across the Atlantic was difficult, he took the trans--Siberian railroad across the Soviet Union and then through Japan and across the Pacific to the US.  This was mid--1941 before the US entered the war and made travel across the Pacific difficult.

Wolf stopped in Berkeley to talk with G. C. Evans (known for his work on potential theory) who was then department chair.  Evans knew of Wolf's work and offered him a position on the spot!!  After the year he promised to Macalester, Wolf returned to Berkeley and worked his way up the ranks.  In 1952, Wolf extended Sz--Nagy's work to the Banach space case.  At about the same time Nagy himself did similar work and in so did Kato. While Wolf and Kato didn't know of each other's work, Wolf learned of Kato's work and that led to his invitation for Kato to visit Berkeley.

Kato's thesis dealt with both analytic and asymptotic perturbation theory (we'll discuss the later in the next section).  It appears that Kato found much of this in about 1944 without knowing about the work of Rellich or Nagy although he did know about Rellich by the time his thesis was written and he learned about the work of Nagy before the publication of the last of his early papers on perturbation theory\cite{KatoPTClosed,KatoAsymPT}.

Interestingly enough, Kato's first published work on the perturbation theory of eigenvalues \cite{Kato2Examples} was a brief 1948 note with examples where the theory didn't apply - these will be discussed in the next section (Examples \ref{E3.5}, \ref{E3.6}).  His thesis was published in a university journal in full \cite{KatoThesis} in 1951 with parts published a year early in broader journals in both English \cite{KatoPT1,KatoPT2} and Japanese \cite{KatoPTJap}.  Two final early papers \cite{KatoPTClosed,KatoAsymPT} dealt with the Banach space case and with further results on asymptotic perturbation theory (discussed further in Section \ref{s6}).

Many of the most significant results in Kato's work on regular eigenvalue perturbation theory had been found (independently but) earlier by Rellich and Nagy.  Kato's work, especially if you include his book \cite{KatoBk}, was more systematic.  His main contribution beyond theirs concerns the use of reduced resolvents.  And, as we'll see, he was the pioneer in the theory of asymptotic perturbation theory.

\section{Eigenvalue Perturbation Theory, II: Asymptotic Perturbation Theory} \lb{s3}

In this section and the next, we discuss situations where the Kato--Nagy--Rellich theory of regular perturbations does not apply.  Lest the reader think this is a strange pathology, we begin with six (!) simple examples, four from the standard physics literature and then two that appeared in Kato's first paper -- a brief note -- on perturbation theory \cite{Kato2Examples}.

\begin{example} [\emph{Anharmonic oscillator and Zeeman effect}] \lb{E3.1} Let
\begin{equation}\label{3.1}
  A_0 = -\frac{d^2}{dx^2}+x^2, \qquad B=x^4, \qquad A(\beta) = A_0 + \beta B
\end{equation}
on $L^2(\bbR,dx)$.  This is an example much beloved by teachers of quantum mechanics since one can compute $a_2$ explicitly since the sum in \eqref{2.4B} is finite (indeed only two terms which can be computed in closed form).  It is also regarded as a paradigm of the simplest quantum field theory, i.e. $\varphi^4_1$ in one space--time dimension (see \cite{GJQFT, SimonQFT}).  A basic fact is that the perturbation series exists to all orders, in fact all the sums in the books \cite{KatoBk,RS4} for individual terms are finite or, alternatively, there exists a simple set of recursion relations \cite{BW1} for the $a_n$ so that formally, the ground state energy is given by
\begin{equation}\label{3.2}
  E_0(\beta) = E_0 + \sum_{n=1}^{\infty} a_n \beta^n
\end{equation}

However, the series in \eqref{3.2} has zero radius of convergence.  One intuition comes from Dyson \cite{Dys} who argued that the perturbation series in quantum electrodynamics shouldn't converge because the theory doesn't make sense if $e^2 < 0$ when electrons attract and there is collapse.  Similarly, $A_0-\beta x^4$ does not define a self--adjoint operator since it is limit circle at $\pm \infty$ (see \cite[Section 7.4]{OT}).  While this is not a proof, one can show (\cite{SimonAHO, LM, LMSW}) that $A(\beta)$ is a type (A) family for $\beta \in \bbC \setminus (-\infty,0]$ (but not at $\beta = 0$), that any eigenvalue, $E_n(\beta)$, of $A(\beta)$ for $\beta >0$ can be analytically continued to all of $\beta \in \bbC \setminus (-\infty,0]$ with limits on $(-\infty,0)$ from either side with $\textrm{Im} E_n(-\beta +i0) >0$ for any $\beta > 0$ (so the continuation is not analytic at $\beta =0$).  \cite{SimonAHO} has much about the analytic structure near $\beta = 0$.

This doesn't quite imply that the series is divergent, only that it can't converge to the right answer.  In fact, one knows that the $a_n$ grow so fast that the series diverges for all $\beta \ne 0$.  Indeed, it is known that
\begin{equation}\label{3.3}
  a_n = 4 \pi^{-3/2} (-1)^{n+1} \left(\tfrac{3}{2}\right)^{n+1/2} \Gamma(n+\tfrac{1}{2}) \left(1+\textrm{O}\left(\tfrac{1}{n}\right)\right)
\end{equation}
This formula with its $n!$ growth is called the \emph{Bender--Wu formula}.  They guessed it \cite{BW1} from a calculation of the first 75 $a_n$ in 1969 and found a non--rigorous argument for it in 1973 \cite{BW2}.  It was proven by Harrell--Simon \cite{HS} in 1980 -- we'll discuss it in  the next section.

There is also literature on the higher order anharmonic oscillator,
\begin{equation}\label{3.3a}
A(\beta) = -\tfrac{d^2}{dx^2} + x^2 + \beta x^{2m}; \quad m=2,3,\dots
\end{equation}
In this case the analogs of Bender--Wu asymptotics have $a_n \sim C (-1)^{n+1} A^n n^\gamma \Gamma((m-1)n)$ for suitable $m$--dependent $A, C, \gamma$.

There is a historically important model that has a similar divergence, namely the Zeeman effect for Hydrogen which describes Hydrogen in a constant magnetic field, $B$, which if $\boldsymbol{B}$ points in the $z$ direction in $\boldsymbol{r}=(x,y,z)$ coordinates is given by the Hamiltonian
\begin{equation}\label{3.4}
  A(B) = -\tfrac{1}{2}\Delta - \tfrac{1}{r} +\tfrac{B^2}{8}(x^2+y^2)+BL_z
\end{equation}
where $L_z$ is the $z$ component of the angular momentum.  For the ground state (where $L_z=0$), one has that
\begin{equation}\label{3.5}
  E_0(B) = \sum_{k=0}^{\infty} E_k B^{2k}
\end{equation}
Avron \cite{AvronZeeman} found a Bender--Wu type formula
\begin{equation}\label{3.6}
  E_k = \left(\frac{4}{\pi}\right)^{5/2} (-1)^{k+1} \pi^{-2k} \Gamma\left(2k+\frac{3}{2}\right) \left(1+\textrm{O}\left(\frac{1}{k}\right)\right)
\end{equation}
with a rigourous proof by Helffer--Sj\"{o}strand \cite{HelfS2}.  In natural units, the magnetic field in early $20^{th}$ century laboratories was very small so lowest order perturbation theory worked very well.
\end{example}

\begin{example} [\emph{Autoionizing States of Two Electron Atoms}] \lb{E3.2} We further consider the Hamiltonian $A(1/Z)$ of Example \ref{E2.1}; see \eqref{2.8B}.  For $1/Z = 0$, $A(0)$ is the Hamiltonian of two uncoupled Hydrogen atoms so its eigenvalues are $E_{n,m} = -\tfrac{1}{4n^2}-\tfrac{1}{4m^2}, \, m,n=1,2,\dots$.  The continuous spectrum starts at $-\tfrac{1}{4}$ (for $n=1, m\to \infty$), so, for example, $E_{2,2}$ at energy $-\tfrac{1}{8}$ is an eigenvalue but not isolated, rather it is embedded in the continuous spectrum on $[-\tfrac{1}{4},\infty)$.  According to the physicist's expectation, this eigenvalue becomes a decaying state,  where in a finite time, one electron drops to the ground state and the other gets kicked out of the atom with the left over energy (i.e. $-\tfrac{1}{8}-(-\tfrac{1}{4})=\tfrac{1}{8}$).  For obvious reasons, these are called autoionizing states.  These states are actually seen as electron scattering resonances (under $e + He^+ \to e+He^+$) or as photo ionization resonances ($\gamma+He \to He^+ + e$) called Auger resonances.

The situation has a complication we'll ignore.  The eigenvalue at energy $-\tfrac{1}{8}$ has multiplicity 16 which one can reduce by using exchange, rotation and parity symmetry.  For our purposes, it is useful to look at states with angular momentum 2 and azimuthal angular momentum 2 which are simple.  In fact, there are states of unnatural parity (with angular momentum 1 but parity +); the continuous spectrum below $-\tfrac{1}{16}$ is only of natural parity states so these unnatural parity eigenvalues are not embedded in continuous spectrum and so they don't disappear.  There are actually 15 subspaces with definite symmetry.  In one, there is a doubly degenerate embedded eigenvalue, in 3 an isolated eigenvalue and in 11 a simple embedded eigenvalue.

According to what is called the Wigner--Weisskopf theory \cite{WW}, these scattering resonances are complex poles of the S--matrix so the perturbed energy, $E(\beta)$ has a non--zero imaginary part
\begin{equation}\label{3.6}
  \textrm{Im}\, E(\beta) = \frac{\Gamma(\beta)}{2}
\end{equation}
where $\Gamma$ is the \emph{width} of the resonance, i.e. $|(E-E_0)+\tfrac{i}{2}\Gamma|^{-2}$ (the impact of a pure pole to a quantum probability) has a distance $\Gamma$ between the two points where it takes half its maximum value.

Physicists argue that $\Gamma = \hbar/\tau$, where $\tau$ is the lifetime of the excited state.  Sometimes Rayleigh--Schr\"{o}dinger perturbation theory is called time--independent perturbation theory because there is a formal textbook argument for computing lifetimes of embedded eigenvalues coupled to the continuum called time--dependent perturbation theory.  In particular, the second order term in this theory is called the Fermi golden rule, discussed, for example, in Landau-Lifshitz \cite[pp. 140-153]{LL}.  Simon \cite{SimonTDPT} has a compact way to write this second order term.  If $A(\beta) = A_0+\beta B$, $A_0\varphi_0 = E_0\varphi_0$ and $\wti{P}_0(\lambda)$ is the spectral projection for $A_0$ with $\{E_0\}$ removed, i.e. $\wti{P}_0(\lambda) = f_\lambda(A)$ where
\begin{equation*}
  f_\lambda(x) = \left\{
                   \begin{array}{ll}
                     1, & x<\lambda, x \ne E_0 \\
                     0, & x \ge \lambda, \textrm{ or } x=E_0
                   \end{array}
                 \right.
\end{equation*}
then
\begin{equation}\label{3.7A}
  \Gamma(\beta) = \Gamma_2 \beta^2 + \textrm{O}(\beta^3)
\end{equation}
\begin{equation}\label{3.7}
  \Gamma_2 = \left.\frac{d}{d\lambda}\jap{B\varphi_0,\wti{P}_0(\lambda)B\varphi_0}\right|_{\lambda=E_0}
\end{equation}
The physics literature arguments for time--dependent perturbation theory are mathematically questionable and there were arguments about what the higher order terms were.

So this example causes lots of problems we'll look at in Section 4: What is a resonance?  What does the perturbation series have to do with the resonance energy? Can one mathematically justify the Fermi golden rule?  What are the higher terms?  Is there a convergent series?

In 1948, Friedrichs \cite{FriedCont} considered a model (related to some earlier work of his \cite{FriedOld}) with operators acting on $L^2([a,b],dx)\oplus \bbC$ with $A_0(f(x),\zeta) = (xf(x),\zeta)$ where $a < 1 < b$ so that $A_0$ has an embedded eigenvalue at $E_0=1$.  $A(\beta)=A_0+\beta B$ where $B$ is the rank two operator $B(f(x),\zeta) = (\zeta h(x), \jap{h,f})$ for some $h \in L^2([a,b],dx)$.  For suitable $h$ and small $\beta > 0$, Friedrichs proved that $A(\beta)$ has no eigenvalues in spite of the fact of a first order perturbation term so the eigenvalue indeed dissolves.  He did not discuss resonances but this was an early attempt to study a model which in his words ``is clearly related to the Auger effect.''
\end{example}

\begin{example} [\emph{Stark Effect}] \lb{E3.3} The Stark Hamiltonian describes the Hydrogen atom in an electric field.  If $F$ is the strength of the field and $\boldsymbol{r}=(x,y,z)$, then the operator on $L^2(\bbR^3)$ has the form
\begin{equation}\label{3.8}
  A(F,Z)=-\Delta-\frac{Z}{r}+Fz
\end{equation}
We will primarily consider $Z=1$.  Schr\"{o}dinger developed eigenvalue perturbation theory \cite{Schr} to apply it to the Stark Hamiltonian.  As with the Zeeman effect, laboratory $F$'s are small so first or second order perturbation theory worked well when compared to experiment and this was regarded as a great success.

Early on, Oppenheimer \cite{Opp} pointed out that when $F \ne 0$, $A(F,Z)$ is not bounded below so that the $A(F=0,Z=1)$ ground state is, as soon as $F \ne 0$, swamped by continuous spectrum.  Put differently, it becomes a finite lifetime state that decays.  He claimed to compute the lifetime but his calculation was wrong.  There are arguments about whether his method was correct but eventually universal agreement that the correct asymptotics for the width, when $Z=1$ and $F$ is small, is that found by Lanczos \cite{Lanc}:
\begin{equation}\label{3.9}
  \Gamma(F) \sim \frac{1}{2F}\exp\left(-\frac{1}{6F}\right)
\end{equation}
which is usually called the \emph{Oppenheimer formula}.

In fact, one can prove that for any $F \ne 0$, and any $Z$ including $Z=0$, $A(F,Z)$ has spectrum $(-\infty,\infty)$ with infinite multiplicity, purely absolutely continuous spectrum.  Titchmarsh \cite{TitBk} proved there are no embedded eigenvalues using the separability in parabolic coordinates we'll use again below, Avron--Herbst \cite{AvronHerbstStark} proved the existence of wave operators from $A(F,Z=0)$ to $A(F,Z)$ (wave operators are discussed in Section \ref{s13}) and Herbst \cite{HerbstStark1} proved that those wave operators were unitaries, $U$, with $UA(F,Z=0)U^{-1} = A(F,Z)$.

In this regard, I should mention what I've called \cite{SimonRazor} \emph{Howland's Razor} after \cite{Howland1, Howland2} and Occam's Razor: ``Resonances cannot be intrinsic to an abstract operator on a Hilbert space but must involve additional structure.''  For $\{A(F,1)\}_{F \ne 0}$ are all unitarily equivalent but we believe they have $F$--dependent resonance energies.  We'll discuss the possible extra structures in the next section.

There is also a Bender--Wu type asymptotics
\begin{equation}\label{3.10}
  E(F) \sim \sum_{n=0}^{\infty} A_{2n} F^{2n}
\end{equation}
\begin{equation}\label{3.11}
  A_{2n} = -6^{2n+1} (2\pi)^{-1} (2n)! \left(1+\textrm{O}\left(\frac{1}{n}\right)\right)
\end{equation}
found formally by Herbst--Simon \cite{HerbS} and proven by Harrell--Simon \cite{HS}.   Interestingly enough, there is a close connection between \eqref{3.11} and the original Bender--Wu formula \eqref{3.3} or rather its analog for
\begin{equation}\label{3.12}
  -\frac{d^2}{dx^2} + x^2 + \beta x^4 -\frac{1}{4x^2}
\end{equation}
whose Bender--Wu formula was found by Banks, Bender and Wu \cite{BBW}.  Jacobi \cite{Jac} discovered that a Coulomb plus linear potential in classical mechanics separates in elliptic coordinates and then Schwarzschild \cite{SchwSch} and Epstein \cite{Epst1} extended this idea to old quantum theory.  In particular, Epstein used parabolic coordinates.  Schr\"{o}dinger \cite{Schr} and Epstein \cite{Epst2} extended this use of parabolic coordinates to the Hamiltonian \eqref{3.8}.  This separation was also used by Titchmarsh \cite{TitPT, TitBk}, Harrell--Simon \cite{HS} and by Graffi--Grecchi and collaborators \cite{GG1, GG2, GG3, GG4, GG5, GG6, GG7}.

Many of the same questions occur as for Example \ref{E3.2} which we'll study in Section 4: What is a resonance? What is the meaning of the divergent perturbation series?  What is the difference between \eqref{3.7A} where $\Gamma(\beta) = \textrm{O}(\beta^2)$ and \eqref{3.9} where $\Gamma(\beta) = \textrm{O}(\beta^k)$ for all $k$.
\end{example}

\begin{example} [\emph{Double Wells}] \lb{E3.4} The standard double well problem is
\begin{equation}\label{3.13}
  A(\beta) = -\frac{d^2}{dx^2}+x^2-2\beta x^3 + \beta^2 x^4
\end{equation}
Writing
\begin{align*}
  V(\beta,x) &\equiv x^2-2\beta x^3 + \beta^2 x^4  \\
             &= x^2(1-x\beta)^2 \\
             &= \beta^2 x^2(\beta^{-1}-x)^2
\end{align*}
we see that if $U_\beta f(x) = f(\beta^{-1}-x)$ which is unitary, then $U_\beta A(\beta) U_\beta^{-1} = A(\beta)$.  If we let $\varphi_0(x) = \pi^{-1/4}\exp(-\tfrac{1}{2} x^2)$, then $\jap{\varphi_0,A(\beta)\varphi_0} = 1 + \textrm{O}(\beta^2)$.  But by symmetry, $\jap{U_\beta\varphi_0,A(\beta)U_\beta\varphi_0} = 1 + \textrm{O}(\beta^2)$ while $\jap{\varphi_0,U_\beta\varphi_0}$ and $\jap{A(\beta)\varphi_0,U_\beta\varphi_0}$ are O$(\exp(-1/(4\beta^2)))$, so very small.  Thus, we see that while $A(\beta=0)$ has simple eigenvalues at $2n+1, \, n=0,1,2,\dots$, for $\beta \ne 0$, $A(\beta)$ has a least two eigenvalues near each $E_n(\beta=0)$.

So far as I know, Kato never discussed anything like double wells in print, but we'll see shortly that it illuminates the meaning of stability, a subject that Kato was the first to emphasize.

This model is closely related to the family on $L^2(\bbR^\nu)$:
\begin{equation}\label{3.14}
  H(\lambda) = -\Delta+\lambda^2 h(x) + \lambda g(x)
\end{equation}
where $h, g$ are $C^\infty$, $g$ is bounded from below, $h \ge \epsilon > 0$ near $\infty$, $h \ge 0$, $h(x)=0$ for only finitely many points and so that at those points the Hessian matrix $\frac{\partial^2 h}{\partial x_i \partial x_j}$ is strictly positive definite.  One is interested in eigenvalues of $H(\lambda)$ as $\lambda \to \infty$.  Notice that when $g=0$,  $\lambda^{-2} H(\lambda) = -\lambda^{-2}\Delta + h$, so this is a quasi--classical ($\hbar \to 0$) limit.  One can rephrase the double well as looking at $-\tfrac{d^2}{dx^2}+\lambda^2 x^2(1-x)^2$ by scaling of space and energy (see Simon \cite{SimonSemiC1}).  There is a considerable literature both on leading asymptotics and on the exponential splitting of the two lowest eigenvalues -- see, for example, Simon \cite{SimonSemiC1, SimonSemiC2} and Helffer--Sj\"{o}strand \cite{HelfS1, HelfS2}.  We note that Witten \cite{WittenMorse} has a proof of the Morse inequalities that relies on this leading quasi--classical limit (see also Cycon et al \cite{CFKS}).
\end{example}

\begin{example} \lb{E3.5} Our last two examples, unlike the first four are neither well-known nor heavily studied.  They are from Kato's first paper on perturbation of eigenvalues, a one page letter to the editor of Progress of Theoretical Physics in 1948.  Both examples, which also appear in his thesis \cite{KatoThesis}, have $A(\beta) = A_0+\beta B$ with
\begin{equation}\label{3.15}
  A_0 = -\jap{\psi,\cdot}\psi
\end{equation}
where $\psi \in L^2(\bbR,dx)$ has $\norm{\psi}_2 = 1$.  He focuses on what happens to the simple eigenvalue $A_0$ has at $E_0 = -1$.

In his first example, he takes $B$ to be multiplication by $x$.  This model is the poor man's Stark effect.  He doesn't mention this connection in the paper but does in the thesis.  He states without proof in the Note (but does have a proof in the thesis) that for $\beta \ne 0$, $A(\beta)$ has no eigenvalues but has a purely continuous spectrum.  He remarked that this example shows that the formal perturbation series may be quite meaningless even if no ``divergence'' occurs.  In his later work, as we'll see in Section \ref{s4}, he did discuss a possible significance of such series.
\end{example}

\begin{example} \lb{E3.6}  $A_0$ is given by \eqref{3.15} but now $B$ is multiplication by $x^2$.  Kato states and proves in his thesis that for $\beta$ small and positive, $A(\beta)$ has a simple eigenvalue near $E=-1$.  Kato proves this by direct calculation rather than the more general strong convergence method in his book which we discuss below.  He then discusses two explicit special $\psi$'s for which the first order term, $\int x^2|\psi(x)|^2 dx$, is infinite.  For $\psi = c(1+x^2)^{-1/2}$, he finds (in the thesis; the paper only has the O($\beta^{1/2}$) term):
\begin{equation}\label{3.16}
  E(\beta) = -1 + \beta^{1/2} - \tfrac{1}{2}\beta + \tfrac{1}{8}\beta^{3/2}+\textrm{O}(\beta^2).
\end{equation}
For $\psi = c|x|^{1/2}(1+x^2)^{-1}$ where the first order integral is only logarithmically divergent, he claims that
\begin{equation}\label{3.17}
  E(\beta) = -1 + \beta \log(\beta) + \textrm{O}(\beta)
\end{equation}
The thesis but not the paper also discusses $\psi = c(1+x^2)^{-1}$ where the first order integral is finite, he claims that
\begin{equation}\label{3.18}
  E(\beta) = -1 + \beta -2\beta^{3/2} + \textrm{O}(\beta^2)
\end{equation}
Kato is primarily a theorem prover and concept developer but occasionally he produces detailed calculational results, often without details; we'll discuss this further in Section \ref{s7}.

This example is quite artificial but in his book \cite{KatoBk}, Kato has an example going back to Rayleigh \cite{Ray}
\begin{equation}\label{3.19}
  A(\beta) = -\frac{d^2}{dx^2}+\beta\frac{d^4}{dx^4}, \quad \beta >0
\end{equation}
with
\begin{equation}\label{3.20}
  \varphi(0) = \varphi'(0) = \varphi(1) = \varphi'(1) = 0
\end{equation}
boundary conditions.  Clearly $A(0)$ should have $A(0) = -\tfrac{d^2}{dx^2}$ but the boundary conditions \eqref{3.20} are too strong to get a self--adjoint operator.  One can show that the right boundary conditions for a strong limit are $\varphi(0) = \varphi(1) = 0$ and that
\begin{equation}\label{3.21}
  E_n(\beta) = n^2 \pi^2 \left[1+4\beta^{1/2}+\textrm{O}(\beta)\right]
\end{equation}
\end{example}

With these examples in mind, we turn to the general theory of asymptotic series.  Recall \cite[Section 15.1]{CAB} that given a function $\beta \mapsto f(\beta)$ on $(0,B)$ and a sequence $\{a_n\}_{n=0}^\infty$, we say that $\sum_{n=0}^{\infty} a_n \beta^n$ is an \emph{asymptotic series to order $N$} if an only if
\begin{equation}\label{3.22}
  f(\beta) -\sum_{n=0}^{N} a_n \beta^n = \textrm{o}(\beta^N)
\end{equation}
Of course, if the series is asymptotic to order $(N+1)$, the right side of \eqref{3.22} can be replaced by O($\beta^{N+1}$).  We'll mainly discuss series asymptotic to infinite order (i.e. to order $N$ for all $N=1,2,\dots$).  It is easy to see that if $f$ has an asymptotic series to infinite order, then $f$ determines all the coefficients $a_n$ uniquely.

The function $g(\beta) = 10^6 \exp(-1/10^6 \beta)$ has a zero asymptotic series.  $f(\beta)$ and $f(\beta)+g(\beta)$ thus have the same asymptotic series so an asymptotic series tells us nothing about the value, $f(\beta_0)$, for a fixed $\beta_0$.   Typically however, for $\beta_0$ small, a few terms approximate $f(\beta_0)$ well but too many terms diverge.  A good example is given \cite[Table after (15.1.18)]{CAB} for the error function $\textrm{Erfc}(x) = \tfrac{2}{\sqrt{\pi}} \int_{x}^{\infty} \exp(-y^2) dy$ for which $h(x) \equiv \pi x \exp(x^2) \textrm{Erfc}(x)$ has an asymptotic series in $1/x$ about $x = \infty$.  At $x=10$, $h(x) = .99507\dots$.  The order $N=2$ asymptotic series is good to 5 decimal places and for $N=108$ to more than 22 decimal places.  But for $N=1000$, the series is about $10^{565}$.  So it is interesting and important to know that a series is asymptotic but if one knows the series and wants to know $f$, it is disappointing not to know more.

One often considers $A(\beta)$ defined in a truncated sector $\{\beta \in \bbC \,|\, 0<|\beta| < B, |\arg \beta| <A\}$ and demands \eqref{3.22} (with $\beta^N$ in the error replaced by $|\beta|^N$) in the whole sector.

In his thesis, Kato \cite{KatoThesis} only considered $A(\beta) = A_0+\beta B$ with $A \ge 0, \,B \ge 0$ where $A(\beta)$ is self--adjoint (with a suitable interpretation of the sum).  He used what are now called Temple--Kato inequalities to obtain asymptotic series to all orders in \cite{KatoThesis, KatoAsymPT}.  We discuss this approach in Section \ref{s6} below.

About the same time, Titchmarsh started a series of papers \cite{TitPT,TitBk} on eigenvalues of second order differential equations including asymptotic perturbation results for $A(\beta) = -\tfrac{d^2}{dx^2}+V(x)+\beta W(x)$ on $L^2(\bbR,dx)$ (or $L^2((0,\infty),dx)$ with a boundary condition at $x=0$).  Typically both $V(x)$ and $W(x)$ go to infinity as $|x| \to \infty$ (so the spectra are discrete) and $W$ goes to $\infty$ faster (so analytic perturbation theory fails; think $V(x) = x^2, W(x) = x^4$).  His work relied heavily on ODE techniques.  They have overlap of applicability with Kato's operator theoretic approach, but Kato's method is more broadly applicable.

In his book, Kato totally changed his approach to be able to say something about the Banach space (and also non--self--adjoint operators in Hilbert space) so he couldn't use the Temple--Kato inequality which relies on the spectral theorem.  There is some overlap of this work from his book and work of Huet \cite{Huet}, Kramer \cite{Kra1, Kra2}, Krieger \cite{Kri} and Simon \cite{SimonAHO}.

Central to Kato's approach is the notion of strong resolvent convergence and of stability.  Kato often discusses this for sequences $A_n$ converging to $A$ in some sense as $n \to \infty$; for our purposes here, it is more natural to consider $A(\beta)$ depending on a positive real parameter as $\beta \downarrow 0$.  To avoid various technicalities, we'll also focus initially on the self--adjoint case were there are a priori bounds on $(B-z)^{-1}$ for $z \in \bbC\setminus\bbR$, although we'll consider some non--self--adjoint operators later.

For (possibly unbounded) self--adjoint $\{A(\beta)\}_{0 < \beta < B}$ and self--adjoint $A_0$, we say that $A(\beta)$ converges in \emph{strong resolvent sense} (srs) if and only if for all $z \in \bbC\setminus\bbR$, we have that $(A(\beta)-z)^{-1} \to (A_0-z)^{-1}$ in the strong (bounded) operator topology.  Here is a theorem, going back to Rellich \cite[Part 2]{RellichPT} describing some results critical for asymptotic perturbation theory:

\begin{theorem} \lb{T3.7} Let $A_0$ be self--adjoint and $\{A(\beta)\}_{0 < \beta < B}$ a family of self--adjoint operators on a Hilbert space, $\calH$.

{\rm (a)} If $\calD \subset \calH$ is a dense subspace with $\calD \subset D(A_0)$ and for all $\beta \in (0,B), \, \calD \subset D(A(\beta))$, and if $\calD$ is a core for $A_0$ and for all $\varphi \in \calD,$ we have that $A(\beta)\varphi \to A_0\varphi$ as $\beta \downarrow 0$, then $A(\beta) \to A_0$ in srs.

{\rm (b)} If $a,b \in \bbR$ are not eigenvalues of $A_0$ and $A(\beta) \to A_0$ in srs, then
\begin{equation}\label{3.23}
  P_{(a,b)}(A(\beta)) \overset{s}{\rightarrow} P_{(a,b)}(A_0)
\end{equation}
where $P_\Omega(B)$ is the spectral projection for $B$ associated to the set $\Omega \subset \bbR$ \cite[Chapter 5 and Section 7.2]{OT}
\end{theorem}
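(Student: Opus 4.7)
The plan for (a) is standard: use the second resolvent identity together with a uniform bound and density. Fix $z \in \bbC\setminus\bbR$; since each $A(\beta)$ and $A_0$ is self-adjoint one has $\norm{(A(\beta)-z)^{-1}} \le |\Ima z|^{-1}$, uniformly in $\beta$, so a $3\varepsilon$-argument reduces matters to proving strong convergence of the resolvents on a dense subset of $\calH$. I would take that dense subset to be $(A_0-z)\calD$: since $\calD$ is a core for $A_0$, the operator $A_0-z$ restricted to $\calD$ has dense range. For $\psi = (A_0-z)\varphi$ with $\varphi \in \calD$, the second resolvent identity gives
\begin{equation*}
[(A(\beta)-z)^{-1} - (A_0-z)^{-1}]\psi = (A(\beta)-z)^{-1}(A_0-A(\beta))\varphi,
\end{equation*}
whose norm is at most $|\Ima z|^{-1} \norm{A_0\varphi - A(\beta)\varphi}$, which tends to $0$ by hypothesis. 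That proves (a).

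For (b), the plan is a two-step functional-calculus argument. First I would upgrade strong resolvent convergence to strong convergence $f(A(\beta)) \to f(A_0)$ for every $f$ in $C_\infty(\bbR)$. The set of such $f$ is a norm-closed $\ast$-subalgebra of $C_\infty(\bbR)$ that contains every resolvent $x \mapsto (x-z)^{-1}$, $z \in \bbC\setminus\bbR$; these separate points of $\bbR$ and vanish at infinity, so Stone--Weierstrass makes the subalgebra they generate uniformly dense in $C_\infty(\bbR)$, and the uniform bound $\norm{f(A)} \le \norm{f}_\infty$ lets the conclusion extend to the uniform closure.

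Second, I would squeeze $\chi_{(a,b)}$ between continuous functions. Choose $f_n^\pm \in C_c(\bbR)$ with $0 \le f_n^- \le \chi_{(a,b)} \le f_n^+ \le 1$ and $f_n^\pm(x) \to \chi_{(a,b)}(x)$ for every $x \ne a,b$. Writing $P_\beta = P_{(a,b)}(A(\beta))$ and $P = P_{(a,b)}(A_0)$, the triangle inequality gives
\begin{equation*}
\norm{(P_\beta - P)\psi} \le \norm{(P_\beta - f_n^-(A(\beta)))\psi} + \norm{(f_n^-(A(\beta)) - f_n^-(A_0))\psi} + \norm{(f_n^-(A_0) - P)\psi}.
\end{equation*}
The third term vanishes as $n \to \infty$ by dominated convergence against the spectral measure of $\psi$, using crucially that $a,b$ are not eigenvalues of $A_0$ so that this measure has no atoms there; the middle term vanishes as $\beta \downarrow 0$ for each fixed $n$ by the first step.

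The delicate piece, and the one I expect to be the main obstacle, is controlling the first term uniformly in $\beta$ before we know that $\chi_{(a,b)}(A(\beta))$ itself converges. My key observation would be that $0 \le P_\beta - f_n^-(A(\beta)) \le f_n^+(A(\beta)) - f_n^-(A(\beta)) \le 1$, and that any self-adjoint $R$ with $0 \le R \le 1$ satisfies $R^2 \le R$, hence
\begin{equation*}
\norm{(P_\beta - f_n^-(A(\beta)))\psi}^2 \le \jap{\psi, (f_n^+(A(\beta)) - f_n^-(A(\beta)))\psi}.
\end{equation*}
By the first step applied to $f_n^+ - f_n^- \in C_\infty(\bbR)$, this converges as $\beta \downarrow 0$ to $\jap{\psi,(f_n^+(A_0) - f_n^-(A_0))\psi}$, which in turn tends to $0$ as $n \to \infty$ by the same dominated-convergence argument as for the third term. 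Choosing $n$ large first, then $\beta$ small, closes the $\varepsilon/3$ argument and yields $P_\beta \psi \to P\psi$.
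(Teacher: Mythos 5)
Your proof is correct and follows the same two-step strategy the paper indicates for (b) (strong resolvent convergence gives strong convergence of $f(A(\beta))$ for $f \in C_\infty(\bbR)$, and then $\chi_{(a,b)}$ is approximated by continuous functions), together with the standard second-resolvent-identity argument for (a). The quadratic-form estimate using $0 \le R \le 1 \Rightarrow R^2 \le R$ is exactly the right way to handle the uniformity in $\beta$ that the paper delegates to the cited reference, so there is nothing to add.
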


\begin{proof} (a) follows from a simple use of the second resolvent formula; see \cite[Theorem 7.2.11]{OT}.  For (b), one first proves \eqref{3.23} when $P_{(a,b)}$ is replaced by a continuous function \cite[Theorem 7.2.10]{OT} and then approximates $P_{(a,b)}$ with continuous functions \cite[Problem 7.2.5]{OT}.
\end{proof}

\begin{remark} Before leaving the subject of abstract srs results, we should mention two results known as the Trotter--Kato theorem (Kato's ultimate Trotter product formula, the subject of Section \ref{s18}, is also sometimes called the Trotter--Kato theorem).  One version says that if $A_n$ and $A$ are generators of contraction semigroups on a Banach space, $X$, then $e^{-tA_n} \overset{s}{\rightarrow} e^{-tA}$ for all $t>0$ if and only if for one (or for all) $\lambda$ with $\textrm{Re}\,(\lambda) > 0$, one has $(A_n+\lambda)^{-1} \overset{s}{\rightarrow} (A+\lambda)^{-1}$. Related, sometimes part of the statement of the theorem, is that one doesn't require $A$ to exist a priori but only that for some $\lambda$ in the open half plane that $(A_n+\lambda)^{-1}$ have a strong limit whose range is dense.  The basic theorem is then due to Trotter \cite{TrotterSmGp} in his thesis (written under the direction of Feller, whose interest in semigroups was motivated by Markov processes).  Kato's name is often on the theorem because he clarified an obscure point in this second version \cite{KatoPseudo}.  This theorem has also been called the Trotter--Kato--Neveu or Trotter--Kato--Neveu--Kurtz--Sova theorem after related contributions by these authors \cite{Kurtz1, Kurtz2, Neveu, Sova}.  There is another related result of this genre sometimes called the Trotter--Kato theorem.  It says that if $A_n$ is a family of self--adjoint operators, they have a srs limit for some $A$ if and only if $(A_n-z)^{-1}$ has a strong limit with dense range for one $z$ in $\bbC_+$ and one $z$ in $\bbC_-$.
\end{remark}

Returning to perturbation theory, Kato introduced and developed the key notion of stability.  Let $\{A(\beta)\}_{0 < \beta < B}$ (or $\beta$ in a sector) be a family of closed operators in a Banach space, $X$.  Let $A_0$ be a closed operator so that as $\beta \downarrow 0$, $A(\beta)$ converges to $A_0$ in some sense.  Let $E_0$ be an isolated, discrete, eigenvalue of $A_0$.  We say that $E_0$ is \emph{stable} if there exists $\epsilon > 0$ so that $\sigma(A_0) \cap \{z \,|\, |z-E_0| \le \epsilon\} = \{E_0\}$ and so that

(a) $|\beta| < B$ and $|z-E_0| = \epsilon \Rightarrow z \notin \sigma(A(\beta))$ and for each $\varphi \in X$
\begin{equation}\label{3.24}
  \lim_{\beta\downarrow 0} (A(\beta) - z)^{-1}\varphi = (A_0-z)^{-1}\varphi
\end{equation}
uniformly in $\{z \,|\, |z-E_0| = \epsilon\}$

(b) If $P(\beta)$ is given by \eqref{2.1} with $A=A(\beta)$ and with $\Gamma$ the counterclockwise circle indicated at the end of (a), then, for all $\beta$ small, we have that
\begin{equation}\label{3.25}
  \dim \ran\, P(\beta) = \dim \ran\, P(0)
\end{equation}

The uniform strong convergence in (a) implies that
\begin{equation}\label{3.26}
  P(\beta) \overset{s}{\rightarrow} P(0)
\end{equation}
In the self--adjoint case, even without (a), if $A(\beta) \to A_0$ in srs, then
\begin{equation}\label{3.27}
  P_{(E_0-\epsilon,E_0+\epsilon)}(A(\beta)) \overset{s}{\rightarrow} P_{\{E_0\}}(A_0)
\end{equation}
for $\epsilon$ small if $E_0$ is in the discrete spectrum of $A_0$.  $P \mapsto \dim \ran\, P$ is continuous in the topology of norm convergence but it is only lower semicontinuous in the topology of strong operator convergence.  For example, if $P_n$ is the rank one projection onto multiples of the $n$th element of an orthonormal basis, then $P_n \overset{s}{\rightarrow} 0$.  The lower semicontinuity says that
\begin{equation}\label{3.28}
  P_n \overset{s}{\rightarrow} P_\infty \Rightarrow \dim \ran\, P_\infty \le \liminf \dim \ran\, P_n
\end{equation}

Kato was well aware that equality might not hold on the right side of \eqref{3.28} for examples of relevance to physics -- a main example that he mentions is the Stark effect where the right side is infinite.  Double wells show that even if (a) above holds, (b) may fail.  Simon \cite{SimonSemiC1}  describes an extension of stability for multiple well problems.

There are two main ways that one can prove stability in cases where it is true.  One is to note that if $A(\beta) \ge A_0$ as happens if
\begin{equation}\label{3.28a}
  A(\beta) = A_0 + \beta B
\end{equation}
and $B \ge 0$, then $\dim \ran\, P_{(-\infty,a)} (A(\beta)) \le \dim \ran\, P_{(-\infty,a)} (A_0)$.  This and \eqref{3.28} implies stability for $E_0$ below the bottom of the essential spectrum for $A_0$.  This is the typical approach that Kato uses in several places.

The second way one can have stability is illustrated by

\begin{example} [Example \ref{E3.1} (revisited)] \lb{E3.8}  One might have the impression that regular perturbation theory is associated with norm continuity of resolvents and spectral projections and asymptotic perturbation theory always only strong convergence.  While there is some truth to this, Simon \cite{SimonAHO} found the surprising fact that even in situations where perturbation theory diverges, one can have norm convergence of resolvents in a sector. One starts by noting that with $p = \tfrac{1}{i}\tfrac{d}{dx}$, one has that
\begin{align*}
  (p^2+W)^2 &= p^4+W^2+p^2W+Wp^2 \\
            &= p^4+W^2 + 2pWp + [p,[p,W]] \\
            &= p^4 + W^2 + 2pWp - W'' \\
            &\ge \tfrac{1}{2}W^2 - c
\end{align*}
if $W'' \le \tfrac{1}{2}W^2+c$ and $W \ge 0$.  In this way, one sees that for positive constants $c$ and $d$
\begin{equation}\label{3.28Q}
\norm{(p^2+x^2+\beta x^4)\varphi}^2 + c\norm{\varphi}^2 \ge d \left[\norm{x^2\varphi}^2 + \beta^2 \norm{x^4\varphi}^2\right]
\end{equation}
which is called a quadratic estimate.  This, in turn, implies that $\norm{x^2(p^2+x^2+1)^{-1}}$ and $\norm{(p^2+x^2+\beta x^4+1)^{-1}x^2}$ are bounded so that
\begin{align*}
  \norm{(p^2+x^2+\beta x^4&+1)^{-1}-(p^2+x^2+1)^{-1}} \\
                          &= \beta\norm{(p^2+x^2+\beta x^4+1)^{-1}x^4(p^2+x^2+1)^{-1}} \\
                          &\le \beta \norm{(p^2+x^2+\beta x^4+1)^{-1}x^2}\norm{x^2(p^2+x^2+1)^{-1}}
\end{align*}
is O$(\beta) \rightarrow 0$ in norm.  This implies stability by a simple argument.

A similar argument works for $p^2+\gamma x^2+\beta x^4$ for any $\gamma \in \partial\bbD \setminus \{-1\}$ so using scaling and the ideas below, one proves that for each $n$, the $n$th eigenvalue, $E_n(\beta)$,  of $p^2+x^2+\beta x^4$ has an asymptotic series in each sector $\{\beta \,|\, 0<|\beta|<B_A; |\arg \beta| < A\}$ so long as $A \in (0,\tfrac{3\pi}{2})$ \cite{SimonAHO}.

The above argument doesn't work for $\beta x^{2m}; \, m > 2$ but by using that $\norm{\beta x^{2m}(p^2+x^2+\beta x^{2m}+1)^{-1}}$ is bounded, one sees that the norm of the difference of the resolvents is O$(\beta^{1/m})$ which also goes to zero.

\end{example}

To state results on asymptotic series, we focus on getting series for all orders.  Kato \cite{KatoBk} is interested mainly in first and second order, so he needs much weaker hypotheses.  Let $C \ge 1$ be a self--adjoint operator on a Hilbert space, $\calH$.  Then $D^\infty(C) \equiv \cap_{n \ge 0}D(C^n)$ is a countably normed Fr\'{e}chet space with the norms $\norm{\varphi}_n \equiv \norm{C^n\varphi}_\calH$ (see \cite[Section 6.1]{RA}).  A densely defined operator, $X$, on $D^\infty(C)$ is continuous in the Fr\'{e}chet topology if and only if for all $m$, there is $k(m)$ and $c_m$ so that $D^{k(m)}(C) \subset D(X), \, X\left[D^{k(m)}(C)\right] \subset D^m(C)$ and $\norm{X\varphi}_m \le c_m \norm{\varphi}_{k(m)}$.  Typically, for some $\ell$, $k(m)$ can be chosen to be $m+\ell$.

\begin{theorem} \lb{T3.8} Let $C \ge 1$ be a self--adjoint operator on a Hilbert space, $\calH$.  Let $\{A(\beta)\}_{0 \le \beta < B}$ be a family of closed operators with $E_0$ a simple isolated eigenvalue of $A_0 \equiv A(0)$.  Suppose that $D^\infty(C) \cap D(A_0) \subset D(A(\beta))$ for all $\beta$.  Let $V$ be an operator with $D^\infty(C) \cap D(A_0) \subset D(V)$ so that for $\varphi \in D^\infty(C) \cap D(A_0)$, we have that
\begin{equation}\label{3.29}
  A(\beta)\varphi = (A_0+\beta V)\varphi
\end{equation}
Suppose that $E_0$ is stable (in the sense that the spectrum of $A(\beta)$ for $\beta$ small is discrete near $E_0$ and that \eqref{3.25} holds) and that $V$ is a continuous map on $D^\infty(C)$ and that for some $\delta$ with $\sigma(A_0) \cap {\{z \,|\, |z-E_0| = \delta\}} = \{E_0\}$, we have that if $|z-E_0| = \delta$, then $(A_0-z)^{-1}$ is a continuous map of $D^\infty(C)$ and continuous in $z$. Suppose also that if $\varphi_0 \ne 0$ with $A_0\varphi_0 = E_0\varphi_0$, then $\varphi_0 \in D^\infty(C)$. Then, there is a sequence of complex numbers, $\{a_n\}_{n=0}^\infty$, so that the unique eigenvalue, $E(\beta)$, of $A(\beta)$ near $E_0$ is asymptotic to $E_0 + \sum_{n=1}^{\infty} a_n \beta^n$.
\end{theorem}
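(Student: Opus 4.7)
The plan is to use the Riesz projection, iterate the second resolvent identity, and integrate around the contour. Normalize $\|\varphi_0\|=1$ and let $\Gamma=\{z:|z-E_0|=\delta\}$; by hypothesis $\Gamma$ avoids $\sigma(A_0)$ and, by stability, $\sigma(A(\beta))$ for $\beta$ small. Set $P(\beta):=\frac{1}{2\pi i}\oint_\Gamma (z-A(\beta))^{-1}\,dz$; the dimension condition in the stability hypothesis forces $\dim\ran P(\beta)=1$, so the spectrum of $A(\beta)$ inside $\Gamma$ is the simple eigenvalue $E(\beta)$. Since $P(\beta)\varphi_0\to\varphi_0\ne 0$, it is a (nonzero) eigenvector for $\beta$ small, and pairing with $\varphi_0$ gives the representation
\begin{equation*}
E(\beta)-E_0 \;=\; \frac{\frac{1}{2\pi i}\oint_\Gamma (z-E_0)\langle\varphi_0,(z-A(\beta))^{-1}\varphi_0\rangle\,dz}{\frac{1}{2\pi i}\oint_\Gamma \langle\varphi_0,(z-A(\beta))^{-1}\varphi_0\rangle\,dz}.
\end{equation*}

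Next I would expand the inner resolvent. Put $K(z):=V(z-A_0)^{-1}$; using $A(\beta)-A_0=\beta V$ on $D^\infty(C)\cap D(A_0)$ and iterating the second resolvent identity $N+1$ times on $\varphi_0$ yields
\begin{equation*}
(z-A(\beta))^{-1}\varphi_0 \;=\; \sum_{n=0}^{N}\beta^{n}(z-A_0)^{-1}K(z)^{n}\varphi_0 \;+\; \beta^{N+1}(z-A(\beta))^{-1}K(z)^{N+1}\varphi_0.
\end{equation*}
The hypotheses make $V$ and $(z-A_0)^{-1}$ continuous endomorphisms of the Fr\'echet space $D^\infty(C)$, with $(z-A_0)^{-1}$ depending continuously on $z\in\Gamma$; hence $K(z)$ is a continuous endomorphism of $D^\infty(C)$ depending continuously on $z$. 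Since $\varphi_0\in D^\infty(C)$, every iterate $K(z)^n\varphi_0$ is a well-defined element of $D^\infty(C)\subset\calH$, $\calH$-norm continuous in $z\in\Gamma$, hence uniformly bounded in $\calH$-norm on the compact set $\Gamma$.

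The key analytic step is uniform control of the remainder. Stability supplies the pointwise uniform convergence $(z-A(\beta))^{-1}\psi\to(z-A_0)^{-1}\psi$ on $\Gamma$ for each $\psi\in\calH$; the uniform boundedness principle applied to the family indexed by $(\beta,z)\in(0,B')\times\Gamma$ then yields $M:=\sup\|(z-A(\beta))^{-1}\|_{\calH\to\calH}<\infty$. Combining $M$ with the uniform bound on $\|K(z)^{N+1}\varphi_0\|$ shows the remainder has $\calH$-norm at most $C_N|\beta|^{N+1}$ uniformly on $\Gamma$. Termwise contour integration then produces
\begin{equation*}
\langle\varphi_0,P(\beta)\varphi_0\rangle = 1 + \sum_{n=1}^{N}b_n\beta^n + O(\beta^{N+1}),
\end{equation*}
\begin{equation*}
\langle\varphi_0,(A(\beta)-E_0)P(\beta)\varphi_0\rangle = \sum_{n=1}^{N}c_n\beta^n + O(\beta^{N+1}),
\end{equation*}
the $n=0$ numerator coefficient vanishing because $(A_0-E_0)\varphi_0=0$. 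Dividing and inverting the denominator as a formal geometric series gives, for every $N$, an expansion $E(\beta)-E_0=\sum_{n=1}^{N}a_n\beta^n+O(\beta^{N+1})$ in which each $a_n$ is a fixed polynomial in $b_1,\dots,b_n,c_1,\dots,c_n$ independent of $N$, which is the asymptotic series claimed.

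The main obstacle is the passage from the formal iterated resolvent identity to a genuine $\calH$-norm remainder estimate. No hypothesis provides norm convergence of resolvents, so the two kinds of input must be combined: the Fr\'echet continuity of $V$ and $(A_0-z)^{-1}$ on $D^\infty(C)$ makes the ``inner'' vector $K(z)^{N+1}\varphi_0$ exist and stay bounded in $\calH$ as $z$ varies over $\Gamma$, while the stability hypothesis (upgraded via uniform boundedness) keeps the ``outer'' resolvent $(z-A(\beta))^{-1}$ uniformly bounded so that it can act on that vector without spoiling the $\beta^{N+1}$ gain. After this estimate is in place, the contour integration and the final division are routine formal manipulations.
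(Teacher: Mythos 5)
Your proposal is correct and follows essentially the same route as the paper's sketch (Remark 1 after the theorem): write $E(\beta)$ as a ratio of two rank-one projection matrix elements, expand $(z-A(\beta))^{-1}\varphi_0$ by the iterated second resolvent identity so that the $D^\infty(C)$-continuity of $V$ and $(A_0-z)^{-1}$ keeps every iterate (and the inner factor of the remainder) in $\calH$, and use uniform boundedness of $\|(A(\beta)-z)^{-1}\|$ on $\Gamma$ to get the $\mathrm{O}(\beta^{N+1})$ error. The only point you spell out that the paper merely asserts is deducing that uniform bound from the stability hypothesis via the uniform boundedness principle, which is a correct and welcome addition.
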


\begin{remarks} 1.  The proof is easy.  If $P(\beta)$ is the spectral projection for $E(\beta)$, then $P(\beta)\varphi_0 \to \varphi_0$ so for $\beta$ small
\begin{equation}\label{3.30}
  E(\beta) = \frac{\jap{\varphi_0,A(\beta)P(\beta)\varphi_0}}{\jap{\varphi_0,P(\beta)\varphi_0}}
\end{equation}
Thus, it is enough to get asymptotic series for the numerator and denominator. Write $P(\beta)$ as a contour integral and expand $(A(\beta)-z)^{-1}\varphi_0$  in a geometric series with remainder.  Since $\varphi_0 \in D^\infty(C)$, all terms including the remainder are in $\calH$.  The last factor $\norm{(A(\beta)-z)^{-1}}$ is uniformly bounded in $z$ and small $\beta$, so we get an O$(\beta^{N+1})$ error.

2.  The set of algebraic terms obtained by the above proof are the same for asymptotic and analytic perturbation theory so the $a_n$ are given by Rayleigh--Schr\"{o}dinger perturbation theory.

3. Two useful choices for $C$ are $C=A_0+1$ and $C=x^2+1$.  For $A_0=-\tfrac{d^2}{dx^2}+x^2$, there are very good estimates on $\norm{(A_0+1)^m\varphi_0}_2$ (see \cite[Section 6.4]{RA}).  If $A_0 = -\Delta+W+1$, for extremely general $W$'s, it is known that for $z \notin \sigma(A_0)$, $(A_0-z)^{-1}$ has an integral kernel with exponential decay \cite[Theorem B.7.1]{SimonSmgp}, which implies that ${\norm{(1+x^2)^m(A_0-z)^{-1}(1+x^2)^{-m}}}$ is bounded on $L^2(\bbR)$, so $(A_0-z)^{-1}$ is bounded on $D^\infty(1+x^2)$.
\end{remarks}

Asymptotic series have the virtue of uniquely determining the perturbation coefficients from the eigenvalues as functions and they often give good numeric results if $\beta$ is small and one takes only a few terms.  But mathematically, the situation is unsatisfactory -- one would like the coefficients to uniquely determine $E(\beta)$ (as they do in the regular case) or even better, one would like to have an algorithm to compute $E(\beta)$ from $\{a_n\}_{n=0}^\infty$.  This is not an issue that Kato seems to have written about but it is an important part of the picture, so we will say a little about it.

It is a theorem of Carleman \cite{CarlQA} that if $\epsilon > 0$ and $g$ is analytic in $R_{\epsilon,B} = \{z \,|\, |\arg z| < \tfrac{\pi}{2}+ \epsilon,\, 0 < |z| < B\}$, if $|g(z)| \le b_n|z|^n$ there and $\sum_{n=1}^{\infty} b_n^{-1/n} = \infty$ (e.g. $b_n = n!$), then $g \equiv 0$ on $R_{\epsilon,B}$.  This leads to a notion of strong asymptotic condition and an associated result of there being at most one function obeying that condition (and so a strong asymptotic series determines $E$) -- see Simon \cite{SimonSAC1, SimonSAC2} or Reed--Simon \cite[Section XII.4]{RS4}.

Algorithms for recovering a function from a possibly divergent series are called summability methods.  Hardy \cite{HardyDivSer} has a famous book on the subject.  Many methods, such as Abel summability (i.e. $\lim_{t \uparrow 1} \sum_{n=0}^{\infty} a_n t^n$) work only for barely divergent series like $a_n = (-1)^n$.  The series that arise in eigenvalue perturbation theory are usually badly divergent but, fortunately, there are some methods that work even in that case.  Two that have been shown to work for suitable eigenvalue problems are Pad\'{e} and Borel summability.

The ordinary approximates for a power series are by the polynomials obtained by truncating the power series.  If instead, one uses rational functions, one gets \emph{Pad\'{e}}, aka Hermite--Pad\'{e}, \emph{approximates} (they were formally introduced by Pad\'{e} \cite{Pade} in his thesis -- Hermite, who was Pad\'{e}'s advisor, introduced them earlier in the special case of the exponential function \cite{Hermite}).  Specifically, given a formal power series, $\sum_{n=0}^{\infty} a_n z^n$, the Pad\'{e} approximates, $f^{[N,M]}$, are given by
\begin{equation}\label{3.31}
  f^{[N,M]}(z)= \frac{P^{[N,M]}(z)}{Q^{[N,M]}(z)}; \quad \deg P^{[N,M]} = M, \quad \deg Q^{[N,M]} = N
\end{equation}
\begin{equation}\label{3.32}
  f^{[N,M]}(z) - \sum_{n=0}^{N+M} a_n z^n = \textrm{O}\left(z^{N+M+1}\right)
\end{equation}
In \eqref{3.32}, $f^{[N,M]}$ has $(N+1+M+1)-1$ parameters as does the sum.  Thus \eqref{3.31}/\eqref{3.32} is $(N+M+1)$ equations in the coefficients of $P$ and $Q$.  So long as certain determinants formed from $\{a_n\}_{n=0}^{N+M}$ are non--zero, there is a unique solution, $f^{[N,M]}(z)$.  For more on Pad\'{e} approximates, see Baker \cite{Baker, BakerBk, BakerGamel}.

The other method is called \emph{Borel summability}, introduced by Borel \cite{Borel}.  The method requires that
\begin{equation}\label{3.33}
  |a_n| \le AB^n n!
\end{equation}
for some $A, B$ and all $n$.  If that is so, one forms the Borel transform
\begin{equation}\label{3.34}
  g(w) = \sum_{n=0}^{\infty} \frac{a_n}{n!} w^n
\end{equation}
which defines an analytic function in $\{w \,|\, |w| < B^{-1}\}$.  One supposes that $g$ has an analytic continuation to a neighborhood of $[0,\infty)$ and defines for $z$ real and positive
\begin{equation}\label{3.35}
  f(z) = \int_{0}^{\infty} e^{-a} g(az) da
\end{equation}
Since $\int_{0}^{\infty} e^{-a} a^n da = n!$, formally $f(z)$ is $\sum_{n=0}^{\infty} a_n z^n$.  For this method to work, $g$ has to have an analytic continuation so that the integral in \eqref{3.35} converges.

As far as Pad\'{e} is concerned, a major result involves sequences, $\{a_n\}_{n=0}^\infty$, called \emph{series of Stieltjes} which have the form
\begin{equation}\label{3.35A}
  a_n = (-1)^n \int_{0}^{\infty} x^n d\mu (x)
\end{equation}
for some positive measure $d\mu$ on $[0,\infty)$ with all moments finite.  The associated Stieltjes transform of $\mu$
\begin{equation}\label{3.36}
  f(z) = \int_{0}^{\infty} \frac{d\mu (x)}{1+xz}
\end{equation}
is defined and analytic in $z \in \bbC\setminus (-\infty,0]$.  Expanding $(1+xz)^{-1}$ in a geometric series with remainder, one sees that in every sector $\{z \,|\, |\arg z| < \pi - \epsilon\}$ with $\epsilon > 0$, $\sum_{0}^{\infty} a_n z^n$ is an asymptotic series for $f$.  Here is the big theorem for such series:

\begin{theorem} \lb{T3.9}  If $\{a_n\}_{n=0}^\infty$ is a series of Stieltjes, then for each $j \in \bbZ$, the diagonal Pad\'{e} approximates, $f^{[N,N+j]}(z)$, converge as $N \to \infty$ for all $z \in \bbC\setminus [0,\infty)$ to a function $f_j(z)$ given by \eqref{3.36} with $\mu$ replaced by $\mu_j$ which obeys \eqref{3.35A} (with $\mu=\mu_j$).  The $f_j$ are either all equal or all different depending on whether \eqref{3.35A} has a unique solution, $\mu$, or not.
\end{theorem}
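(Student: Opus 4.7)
The plan is to exploit the classical identification of the sub-diagonal Padé approximate $f^{[N,N-1]}$ to a Stieltjes series with the $N$-point Gauss quadrature for $d\mu$. I would first establish the partial-fraction representation
\[
f^{[N,N-1]}(z) = \sum_{k=1}^{N} \frac{\lambda_k^{(N)}}{1+x_k^{(N)}\,z},
\]
where $\{x_k^{(N)}\} \subset [0,\infty)$ are the zeros of the $N$-th polynomial orthogonal with respect to $d\mu$ and the $\lambda_k^{(N)}$ are the positive Christoffel weights. This follows from uniqueness of the $[N,N-1]$ Padé approximate together with exactness of Gauss quadrature on polynomials of degree $\le 2N-1$: expanding $(1+xz)^{-1}$ in a geometric series with remainder and integrating term-by-term against both $d\mu$ and $\sum_k \lambda_k^{(N)} \delta_{x_k^{(N)}}$ shows the two sides agree to order $2N-1$ in $z$, hence coincide by uniqueness of Padé.

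Next, I would treat $d\mu_N := \sum_k \lambda_k^{(N)} \delta_{x_k^{(N)}}$ as finite positive measures on $[0,\infty)$ of uniform total mass $a_0$. Helly's selection theorem provides a subsequence converging vaguely to some positive measure $d\mu_\infty$. Exactness of Gauss quadrature promotes to the limit the identities $\int_0^\infty x^n d\mu_\infty(x) = (-1)^n a_n$ for every $n$, so $d\mu_\infty$ solves the Stieltjes moment problem for $\{a_n\}$. For any $z \in \bbC \setminus [0,\infty)$, the integrand $(1+xz)^{-1}$ is bounded and continuous on $[0,\infty)$ with vanishing limit at infinity, so vague convergence yields
\[
f^{[N_\ell,N_\ell-1]}(z) \to \int_0^\infty \frac{d\mu_\infty(x)}{1+xz}
\]
along the extracted subsequence. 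When the moment problem is determinate this uniquely identifies $d\mu_\infty = d\mu$, every subsequence has the same limit, and one concludes $f^{[N,N-1]}(z) \to f(z)$ as in \eqref{3.36}.

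For the remaining off-diagonal indices $j \in \bbZ$, I would reduce to the $j=-1$ argument by absorbing polynomial corrections into the Stieltjes function: for $j \ge 0$ one subtracts from $f$ a polynomial of degree $j$ matching its first $j+1$ Taylor coefficients and then divides by a suitable monomial to obtain a new Stieltjes function to which the diagonal argument applies; a symmetric device handles $j \le -2$. In every case the resulting Gauss-Christoffel representation produces a uniformly bounded sequence of positive measures, and the Helly-plus-vague-convergence argument above yields a limit measure $\mu_j$ and a limit function $f_j$ of the form in \eqref{3.36}.

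The hardest step is full-sequence convergence and the sharp dichotomy in the indeterminate case, where Helly only supplies subsequential limits. Here one must invoke the finer theory of the moment problem: its solution set is a Nevanlinna-parametrized convex family, and each fixed $j$-diagonal $\{f^{[N,N+j]}\}_{N \ge 1}$ is known to select, via its associated continued fraction and Jacobi matrix truncation, a specific $N$-extremal boundary solution $\mu_j$. Uniqueness of that selection gives convergence of the full sequence, and distinctness of the boundary solutions for different $j$ forces the $f_j$ to be mutually distinct off $[0,\infty)$ in the indeterminate case, while determinacy collapses them all to the unique $\mu$.
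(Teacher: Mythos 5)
Your Gauss-quadrature/Helly argument is a legitimate route and differs from the one the paper points to: the paper does not prove the theorem itself but refers to Stieltjes' continued-fraction treatment (and the textbook expositions in Baker and Simon), where one shows directly that the even and odd continued-fraction convergents to \eqref{3.36} are each \emph{monotone} in $N$ on $(-\infty,0)$, forcing convergence of the full sequence without any compactness extraction. Your approach trades that monotonicity for Helly's selection theorem, which is where the trouble appears.

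Two gaps, one minor, one serious. The minor one: vague convergence of $d\mu_{N_\ell}$ does not by itself carry moments to the limit, since $x^n$ is unbounded; you need uniform integrability, which can be supplied by noting that for $n \le 2N-3$ one has $\int_{x>R} x^n\, d\mu_N \le R^{-2}\int x^{n+2}\,d\mu_N = R^{-2}|a_{n+2}|$, uniformly in $N$. Please spell this out, because without it the claim that $\mu_\infty$ solves the moment problem is unjustified (you only get $\int x^n d\mu_\infty \le |a_n|$ from Fatou).

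The serious gap is the indeterminate case, which is precisely where the content of the theorem lives. Helly's theorem yields only subsequential limits, and when the moment problem has more than one solution nothing in your argument rules out different subsequences landing on different solutions, so you have not established convergence of the full sequence $f^{[N,N+j]}(z)$. Your final paragraph appeals to the fact that the $j$-diagonal ``is known to select'' a specific $N$-extremal solution and that ``uniqueness of that selection gives convergence'' --- but that selection statement \emph{is} the hard part of the theorem; invoking it as known is circular. You also assert without argument that different values of $j$ select mutually distinct $N$-extremal solutions, which is what produces the ``all different'' half of the dichotomy; that too needs a proof. The cleanest repair is to abandon Helly for this case and follow Stieltjes: show (via the $J$-fraction and the sign structure of the truncation error, or equivalently via the nested Weyl-disk picture for the associated Jacobi matrix with its two extreme self-adjoint extensions) that for fixed $j$ the sequence $f^{[N,N+j]}(x)$ is monotone in $N$ for $x<0$, hence converges pointwise, and then use the Stieltjes--Vitali theorem to upgrade to locally uniform convergence on $\bbC\setminus[0,\infty)$; the resulting limit is identified as the Stieltjes transform of an $N$-extremal measure, and the strict interlacing of the two extreme extensions gives the strict separation of the $f_j$ for distinct $j$ in the indeterminate case. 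As written, your proposal proves the determinate half and reduces the general $j$ to $j=-1$ correctly, but does not prove the indeterminate half or the dichotomy.
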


The result is due to Stieltjes \cite{Stie} who discussed solutions of the moment problem \eqref{3.35} but not Pad\'{e} approximates.  Rather following ideas of Jacobi, Chebyshev and Markov, he discussed continued fractions expansions
\begin{equation*}
  \cfrac{\alpha_1}{z+\beta_1+\cfrac{\alpha_2}{z+\beta_3+ \cfrac{\alpha_3}{\ddots}}}
\end{equation*}
for the Stieltjes transform.  These are the $f^{[N+1,N]}(z)$ and his convergence results imply the theorem.  For details, see Baker \cite{BakerBk} or Simon \cite[Section 7.7]{OT}.

It follows from results of Loeffel et al \cite{LM,LMSW} that if $E_m(\beta)$ is an eigenvalue of $p^2+x^2+\beta x^4$ for $\beta \in [0,\infty)$, then $E_m(\beta)$ has an analytic continuation to $\bbC\setminus [0,\infty)$ with a positive imaginary part in the upper half plane.  Results of Simon \cite{SimonAHO} imply that $|E_m(\beta)| \le C(1+|\beta|)^{1/3}$.  A Cauchy integral formula then implies that $(E_m(0)-E_m(\beta))/\beta$ has a representation of the form \eqref{3.36}.  Thus, by \cite{LMSW}, the diagonal Pad\'{e} approximates converge.  Moreover, it is a fact (related to the above mentioned theorem of Carleman) that if $\{a_n\}_{n=0}^\infty$ is the set of moments of a measure on $[0,\infty)$ with $|a_n| \le CD^n(kn)!$ with $k \le 2$, then the solution to the moment problem is unique \cite[Problem 5.6.2]{RA}.  This implies that for the $x^4$ anharmonic oscillator, the diagonal Pad\'{e} approximates converge to the eigenvalues.  The same is true for the $x^6$ oscillator but for the $x^8$ oscillator, it is known (Graffi-Grecchi \cite{GGx8}) that, while the diagonal Pad\'{e} approximates converge, they have different limits and none is the actual eigenvalue!

The key convergence result for Borel sums is a theorem of Watson \cite{Watson}; see Hardy \cite{HardyDivSer} for a proof:

\begin{theorem} \lb{T3.10} Let $\Theta \in \left(\tfrac{\pi}{2},\tfrac{3\pi}{2}\right)$ and $B > 0$.  Define
\begin{align}
  \Omega &= \{z\,|\,0 < |z| < B, |\arg z| < \Theta\} \lb{3.37} \\
  \wti{\Omega} &= \{z\,|\,0 < |z| < B, |\arg z| < \Theta - \tfrac{\pi}{2} \} \lb{3.38} \\
  \Lambda &= \{w\,|\,w \ne 0, |\arg w| < \Theta - \tfrac{\pi}{2} \} \lb{3.39}
\end{align}
Suppose that $\{a_n\}_{n=0}^\infty$ is given and that $f$ is analytic in $\Omega$ and obeys
\begin{equation}\label{3.40}
  \left|f(z) - \sum_{n=0}^{N} a_n z^n\right| \le A C^{N+1} (N+1)!
\end{equation}
on $\Omega$ for all $N$.  Define
\begin{equation}\label{3.41}
  g(w) = \sum_{n=0}^{\infty} \frac{a_n}{n!} w^n; \qquad |w| < C^{-1}
\end{equation}
Then $g(w)$ has an analytic continuation to $\Lambda$ and for all $z \in \wti{\Omega}$, we have that
\begin{equation}\label{3.42}
  f(z) =  \int_{0}^{\infty} e^{-a} g(az) da
\end{equation}
\end{theorem}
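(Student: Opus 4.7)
The plan is to construct $g$ from $f$ via a Hankel-type contour integral which manifestly extends analytically to $\Lambda$, verify agreement with the series \eqref{3.41} on its disk of convergence, and derive \eqref{3.42} from Fubini and Cauchy. The coefficient bound $|a_n| \le A'C'^{n}n!$ (obtained from \eqref{3.40} in its Gevrey-1 form by taking differences of successive bounds) first secures convergence of \eqref{3.41} on a disk $|w| < 1/C'$. Then, for $\theta \in (\pi/2, \Theta)$ and $\rho \in (0, B)$, let $\gamma_{\theta,\rho}$ be the Hankel-type contour in $\Omega$ consisting of two rays at angles $\pm\theta$ running from $0$ out to $\rho$, joined by the arc $|\zeta| = \rho$, traversed so as to lift to a loop about $0$ on the Riemann surface of $\log\zeta$, and define
\begin{equation*}
\wti{g}(w) := \frac{1}{2\pi i}\int_{\gamma_{\theta,\rho}} e^{w/\zeta}\,f(\zeta)\,\frac{d\zeta}{\zeta}.
\end{equation*}
For $w \in \Lambda$, choose $\theta$ so close to $\Theta$ that $|\arg w \mp \theta| > \pi/2$; then $\textrm{Re}(w/\zeta) < 0$ on the rays, with $|\textrm{Re}(w/\zeta)| = c|w|/|\zeta|$ for some $c > 0$, giving absolute convergence and analyticity in $w$. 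Cauchy's theorem in $\Omega$ makes the definition independent of $\theta, \rho$ in the overlap, so letting $\theta \uparrow \Theta$ and $\rho \uparrow B$ yields an analytic function on all of $\Lambda$.

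To identify $\wti{g}$ with \eqref{3.41} on the common domain, expand $e^{w/\zeta} = \sum (w/\zeta)^n/n!$ and substitute the partial asymptotic expansion $f(\zeta) = \sum_{k=0}^{n} a_k\zeta^k + R_n(\zeta)$. The polynomial contributions evaluate term by term to $a_n$ using the standard residue calculus on the Hankel loop (which, on the Riemann cover, is a bona fide loop around $0$), while $(2\pi i)^{-1}\int_{\gamma_{\theta,\rho}} R_n(\zeta)\zeta^{-n-1}\,d\zeta$ vanishes in the limit as the inner portion of the contour shrinks, because the Gevrey bound makes $R_n(\zeta)\zeta^{-n-1}$ bounded near $\zeta = 0$. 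For the inversion formula, pick $z \in \wti\Omega$ and choose $\theta, \rho$ so that $z$ lies inside $\gamma_{\theta,\rho}$ and $\textrm{Re}(z/\zeta) < 1$ everywhere on the contour (achievable since $\textrm{Re}(z/\zeta) < 0$ on the rays and $\le |z|/\rho < 1$ on the outer arc). Fubini together with $\int_0^\infty e^{-a(1-z/\zeta)}\,da = \zeta/(\zeta - z)$ reduces $\int_0^\infty e^{-a}\wti{g}(az)\,da$ to
\begin{equation*}
\frac{1}{2\pi i}\int_{\gamma_{\theta,\rho}} \frac{f(\zeta)}{\zeta - z}\,d\zeta,
\end{equation*}
and Cauchy's integral formula on the Riemann cover (the integrand has a simple pole at $\zeta = z$ with residue $f(z)$ and extends continuously across $\zeta = 0$ by the $N = 0$ case of \eqref{3.40}) identifies this with $f(z)$.

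The main obstacle is the handling of the boundary point $\zeta = 0$ at every step. Because $f$ is not analytic there, each residue or Cauchy-formula argument must be recast as a limiting statement — shrinking an inner arc of $\gamma_{\theta,\rho}$ toward $0$ and using the Gevrey remainder bound (for the coefficient identification) or mere continuity (for the inversion) to kill the tail. A second subtlety is the interpretation of the Hankel loop when the sector opening $2\Theta$ is less than $2\pi$: the contour is not literally closed in $\Omega \subset \bbC \setminus \{0\}$, but on the universal cover of $\bbC\setminus\{0\}$ it becomes a genuine loop around the origin, and it is on that cover that residue arithmetic applies. Modulo these technicalities, which are routine but laborious, the argument is the classical one going back to Watson and reproduced in Hardy.
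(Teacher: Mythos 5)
The paper does not prove Theorem \ref{T3.10} itself --- it cites Hardy \cite{HardyDivSer} --- so I assess your proposal on its own. The architecture (define $\wti{g}$ by a Hankel-type contour integral, check independence of $\theta,\rho$ by Cauchy, derive \eqref{3.42} by Fubini and Cauchy) is the right one, and your inversion step is essentially sound: with $\textrm{Re}(z/\zeta)<0$ on the rays and $\le|z|/\rho<1$ on the outer arc, Fubini applies and $\frac{1}{2\pi i}\int_\gamma \frac{f(\zeta)}{\zeta-z}\,d\zeta = f(z)$ follows since the inner arc at $|\zeta|=\epsilon$ contributes $O(\epsilon)$ by mere boundedness of $f$. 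But the identification of $\wti{g}$ with the series $g$ near $w=0$ --- the step on which everything else rests --- has a genuine gap.

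You propose to expand $e^{w/\zeta}$ and integrate term by term, reducing to $\frac{1}{2\pi i}\int_\gamma f(\zeta)\zeta^{-n-1}\,d\zeta$, which you evaluate to $a_n$ by ``residue calculus'' plus a vanishing remainder. But that integral \emph{diverges}: on the rays $|f(\zeta)\zeta^{-n-1}|\sim|a_0|\,t^{-n-1}$ is not integrable at $t=0$; the factor $e^{w/\zeta}$ (with $\textrm{Re}(w/\zeta)<0$ there) is exactly what makes $\wti{g}$ converge and cannot be peeled off. Splitting $f=p_n+R_n$ does not rescue it: $\int_\gamma\zeta^{k-n-1}\,d\zeta$ diverges for $k<n$, and for $k=n$ one computes $\int_\gamma\zeta^{-1}\,d\zeta=2i\theta$, not $2\pi i$, so the residue calculus gives the wrong answer even where it gives anything. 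When $\theta<\pi$ the closed contour $\gamma_{\theta,\rho,\epsilon}$ bounds a sector that excludes $0$ and encloses no residue, so the Hankel loop is not a ``bona fide loop around $0$'' on any cover. The genuine Hankel evaluation $\frac{1}{2\pi i}\int_{\gamma_{\theta,\rho}}e^{w/\zeta}\zeta^{k-1}\,d\zeta=w^k/k!$ comes from the residue at $u=0$ after closing at $\infty$ in $u=1/\zeta$, which needs the exponential present, and keeping it yields $\wti{g}(w)-\sum_{k=0}^N\frac{a_k}{k!}w^k=\frac{1}{2\pi i}\int_{\gamma_{\theta,\rho}}e^{w/\zeta}R_N(\zeta)\,\frac{d\zeta}{\zeta}$. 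But with $\rho$ fixed the natural bound on this still carries $(N+1)!$ and diverges as $N\to\infty$; the classical cure is to shrink the \emph{outer} radius with $N$ --- roughly $\rho_N\sim 1/(CN)$ --- so that $C^{N+1}(N+1)!\,\rho_N^{N+1}\to 0$ by Stirling while $e^{|w|/\rho_N}$ stays controlled for small $|w|$. Your proposal shrinks only an \emph{inner} cutoff at $|\zeta|=\epsilon$, which does nothing against the factorial and is the wrong parameter to vary. (Your intermediate claim that $R_n(\zeta)\zeta^{-n-1}$ is bounded near $0$ also tacitly reinstates the factor $|z|^{N+1}$ on the right of \eqref{3.40} --- the usual Gevrey-$1$ form, without which the theorem as stated is in fact false --- and that correction should be flagged.)
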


Graffi--Grecchi--Simon \cite{GGS} proved that this theorem is applicable to the $x^4$ anharmonic oscillator.  They did numeric calculations making an unjustified use of Pad\'{e} approximation to analytically continue $g$ to all of $[0,\infty)$ and found more rapid convergence than Pad\'{e} on the original series. By conformally mapping a subset of the union of $\bbD$ and $\Lambda$ containing $[0,\infty)$ onto the disk, one can do the analytic continuation by summing a mapped power series and so do numerics without an unjustified Pad\'{e}; see Hirsbrunner and Loeffel \cite{HL}

There is a higher order Borel summation where one picks $m=2,3,\dots$, $\Theta \in \left(\tfrac{m\pi}{2},\tfrac{3m\pi}{2}\right)$ and replaces $\Theta-\tfrac{\pi}{2}$ in \eqref{3.38} by $\Theta - \tfrac{m\pi}{2}$, $(N+1)!$ in \eqref{3.40} is replaced by $[m(N+1)]!$,  $n!$ in \eqref{3.41} by $(mn)!$ and \eqref{3.42} by
\begin{equation}\label{3.43}
  f(z) = \int_{0}^{\infty} e^{-a^{1/m}} g(za) a^{\left(\tfrac{1}{m} - 1\right)} da
\end{equation}
They showed \cite{GGS} that the $x^{2(m+1)}$ oscillator is modified $m$--Borel summable.

Avron--Herbst--Simon \cite[Part III]{AvronHerbstSimon} proved that for the Zeeman effect in arbitrary atoms, the perturbation series of the discrete eigenvalues is Borel summable.  The Schwinger functions of various quantum field theories have been proven to have Borel summable Feynman perturbation series: $P(\phi)_2$ \cite{EMS}, $\phi^4_3$ \cite{MSPhi43}, $Y_2$ \cite{Renou}, $Y_3$ \cite{MSY3}.

In general, Pad\'{e} summability is hard to prove because it requires global information, so it has been proven to work only in very limited situations (for example a higher dimensional quartic anharmonic oscillator is known to be Borel summable but nothing about Pad\'{e} is known).  Clearly, when it can be proven, Borel summability is an important improvement over the mere asymptotic series that concerned Kato.

Before leaving asymptotic perturbation theory, we mention a striking example of Herbst--Simon \cite{HerbS2}
\begin{equation*}
  A(\beta) = -\frac{d^2}{dx^2}+x^2-1+\beta^2 x^4 + 2\beta x^3 -2 \beta x
\end{equation*}

If $E_0(\beta)$ is the lowest eigenvalue, they prove that for all small, non--zero positive $\beta$
\begin{equation*}
  0 < E_0(\beta) < C \exp(-D\beta^{-2})
\end{equation*}
Thus $E_0(\beta)$ has $\sum_{n=0}^{\infty} a_n \beta^n$ as asymptotic series where $a_n \equiv 0$.  The asymptotic series converges but, since $E_0$ is strictly positive, it converges to the wrong answer!

\section{Eigenvalue Perturbation Theory, III: Spectral Concentration} \lb{s4}

Starting around 1950, Kato \cite{KatoThesis} and Titchmarsh \cite{TitPT, TitBk} considered what the perturbation series might mean for a problem like the Stark problem where a discrete eigenvalue is swallowed by continuous spectrum as soon as the perturbation is turned on.  Titchmarsh looked mainly at ODEs; in particular, he looked at what has come to be called the Titchmarsh problem, ($g \ge -\tfrac{1}{4}, z > 0$)
\begin{equation}\label{4.1}
  h(g,z,f) = -\frac{d^2}{dx^2}+\frac{g}{x^2}-\frac{z}{x}-fx
\end{equation}
(for some values of $g$, one needs a boundary condition at $x=0$).  Kato used operator theory techniques and studied Examples \ref{E3.3} and \ref{E3.5}.

Titchmarsh proved that the Green's kernel for $h$, originally defined for energies in $\bbC_+$, had a continuation onto the lower half plane with a pole near the discrete eigenvalues of $h(g,z,f)$ and he identified the real part of the pole with perturbation theory up to second order. He conjectured that the imaginary part of the pole was exponentially small in $1/f$.  He then showed in a certain sense that the spectrum of $h(g,z,f \ne 0)$ as $f \downarrow 0$ concentrated near the real parts of his poles \cite[Part V]{TitPT}.

Kato discussed things in terms of what he called pseudo--eigenvalues and pseudo--eigenvectors.  He later realized that these notions imply a concentration of spectrum like that used by Titchmarsh.  In his book \cite{KatoBk}, he emphasized what he formally defined as spectral concentration and linked the two approaches.  In this section, I'll begin by defining spectral concentration and then prove, following Kato, that it is implied by the existence of pseudo--eigenvectors.  Finally, I'll discuss the complex scaling theory of resonances and how it extends and illuminates the theory of spectral concentration.

Consider first the case where $A(\beta)$ converges to $A_0$ as $\beta \downarrow 0$ in srs and $E_0$ is a discrete simple eigenvalue of $A_0$.  Let $T$ be a closed interval with $\sigma(A_0) \cap T = \{E_0\}$.  By Theorem \ref{T3.7}, for any $\epsilon > 0$, we have that $P_{T\setminus (E_0-\epsilon,E_0+\epsilon)} (A(\beta)) \overset{s}{\rightarrow} 0$.  Thus, in a sense, the spectrum of $A(\beta)$ in $T$ is concentrated near $E_0$.  In the above, if we could replace $(E_0-\epsilon,E_0+\epsilon)$ by $(E_0+a_1\beta-\beta^{3/2},E_0+a_1\beta + \beta^{3/2})$, we'd be able to claim that the spectrum was concentrated near $E_0+a_1\beta$ in a way that would determine $a_1$.

Taking into account that we may want to also have $T$ shrink in cases like Example \ref{E3.2}, we make the following definition.  Let $T(\beta), S(\beta)$ be Borel sets in $\bbR$ given for $0 < \beta < B$ so that if $0 < \beta' < \beta$, then $T(\beta') \subset T(\beta), S(\beta') \subset S(\beta)$ and so that for all $\beta$, $S(\beta) \subset T(\beta)$.  We say that the spectrum of $A(\beta)$ in $T(\beta)$ is \emph{asymptotically concentrated} in $S(\beta)$ if and only if $P_{T(\beta)\setminus S(\beta)}  \overset{s}{\rightarrow} 0$.

If $E_0$ is a simple eigenvalue of $A_0$ and $\{a_j\}_{j=1}^N$ are real numbers, we say the spectrum near $E_0$ is asymptotically concentrated near $E_0+\sum_{j=0}^{N} a_j \beta^j$ if there exist positive functions $f$ and $g$ obeying $f(\beta) \to 0,\, f(\beta)/\beta \to \infty,\, g(\beta)/\beta^N \to 0$ as $\beta \downarrow 0$ so that the spectrum of $A(\beta)$ in $(E_0-f(\beta),E_0+f(\beta))$ is asymptotically concentrated in $(E_0+\sum_{j=0}^{N} a_j \beta^j -g(\beta), E_0+\sum_{j=0}^{N} a_j \beta^j + g(\beta))$.  It is easy to see if that happens, it determines the $a_j,\, j=1,\dots,n$.

Kato's thesis \cite{KatoThesis} introduced the notion of $N$th order \emph{pseudo--eigenvectors} and \emph{pseudo--eigenvalues}.  In later usage, this is a pair of functions, $\varphi(\beta)$ and $\lambda(\beta)$, on $(0,B)$ with values in $\calH$ and $\bbR$ so that
\begin{equation}\label{4.2}
  \varphi(\beta) \in D(A(\beta)), \qquad \norm{\varphi(\beta)}=1, \qquad \lambda(\beta) \to E_0
\end{equation}
\begin{equation}\label{4.3}
  \norm{(A(\beta)-\lambda(\beta))\varphi(\beta)} = \textrm{o}(\beta^N)
\end{equation}

Conley--Rejto \cite{CRejto} and Riddell \cite{Ridd} (Riddell was a student of Kato and this paper was based on his PhD. thesis) proved the following
\begin{theorem} \lb{T4.1} If $E_0$ is a simple isolated eigenvalue of $A_0$ and $(\varphi(\beta),\lambda(\beta))$ are an $N$th order pseudo--eigenvector and pseudo--eigenvalue so that as $\beta \downarrow 0$, we have that
\begin{equation}\label{4.3A}
  (1-P_{E_0}(A_0)) \varphi(\beta) \to 0
\end{equation}
Then there exists $g(\beta) = \textrm{o}(\beta^N)$ and $d>0$ so that the spectrum of $A(\beta)$ in $(E_0-d,E_0+d)$ is concentrated in $(\lambda(\beta)-g(\beta),\lambda(\beta)+g(\beta))$.
\end{theorem}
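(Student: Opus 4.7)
My plan is to exploit Kato's spectral--theorem reading of the pseudo--eigenvalue estimate to push the mass of $\varphi(\beta)$ into a tiny window around $\lambda(\beta)$, then transfer this conclusion from $\varphi(\beta)$ to the $A_0$--eigenvector $\varphi_0$, and finally handle the orthogonal complement via Theorem \ref{T3.7}. Fix a unit $\varphi_0$ with $A_0\varphi_0=E_0\varphi_0$ and choose $d>0$ small enough that $\sigma(A_0)\cap[E_0-2d,E_0+2d]=\{E_0\}$ and that $E_0\pm d$ are not eigenvalues of $A_0$. Let $E^{(\beta)}$ denote the spectral measure of $A(\beta)$ and set
\[
T(\beta):=(E_0-d,E_0+d)\setminus\bigl(\lambda(\beta)-g(\beta),\,\lambda(\beta)+g(\beta)\bigr),\qquad Q(\beta):=E^{(\beta)}(T(\beta)),
\]
where $g(\beta)=\oh(\beta^N)$ remains to be chosen. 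The asymptotic concentration claim is exactly $Q(\beta)\overset{s}{\to}0$.

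By the spectral theorem applied to the self--adjoint $A(\beta)$,
\[
\norm{(A(\beta)-\lambda(\beta))\varphi(\beta)}^2 \ge \int_{T(\beta)}(\mu-\lambda(\beta))^2\,d\norm{E^{(\beta)}_\mu\varphi(\beta)}^2 \ge g(\beta)^2\,\norm{Q(\beta)\varphi(\beta)}^2.
\]
Writing \eqref{4.3} as $\norm{(A(\beta)-\lambda(\beta))\varphi(\beta)}=\veps(\beta)\beta^N$ with $\veps(\beta)\downarrow 0$, I would take $g(\beta):=\sqrt{\veps(\beta)}\,\beta^N$; this is $\oh(\beta^N)$ and yields $\norm{Q(\beta)\varphi(\beta)}^2\le\veps(\beta)\to 0$. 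Since $P_{E_0}(A_0)$ is rank one and $\norm{\varphi(\beta)}=1$, hypothesis \eqref{4.3A} forces a decomposition $\varphi(\beta)=c(\beta)\varphi_0+r(\beta)$ with $|c(\beta)|\to 1$ and $\norm{r(\beta)}\to 0$. Applying $Q(\beta)$ (which has norm $\le 1$) gives $c(\beta)Q(\beta)\varphi_0\to 0$, hence $Q(\beta)\varphi_0\to 0$. For a general $\psi\in\calH$ I would then decompose $\psi=\jap{\varphi_0,\psi}\varphi_0+\psi^\perp$ with $\psi^\perp\perp\varphi_0$; the first piece is immediately handled by what I just established.

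The remaining piece $Q(\beta)\psi^\perp$ is the main obstacle and the reason why a strong--resolvent convergence hypothesis is needed in the background of this section. The pseudo--eigenvector data is purely \emph{local} around $\varphi_0$ and a priori says nothing about the spectrum of $A(\beta)$ in $(E_0-d,E_0+d)$ on vectors orthogonal to $\varphi_0$ --- this spectrum can be enormous, as the Stark example \ref{E3.3} with $\sigma(A(\beta))=\bbR$ already shows. The natural, and I believe minimal, way to close the argument is to assume (as Kato does) that $A(\beta)\to A_0$ in strong resolvent sense. Then Theorem \ref{T3.7}(b), applied at the endpoints $E_0\pm d$ which by construction avoid the point spectrum of $A_0$, gives $P_{(E_0-d,E_0+d)}(A(\beta))\overset{s}{\to}P_{\{E_0\}}(A_0)$, so $P_{(E_0-d,E_0+d)}(A(\beta))\psi^\perp\to 0$. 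Since $Q(\beta)\le P_{(E_0-d,E_0+d)}(A(\beta))$ in the operator order, this dominates $Q(\beta)\psi^\perp$ and completes the proof.
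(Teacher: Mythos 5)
Your proof is correct and follows essentially the same route as the paper's: the spectral--theorem estimate to push the mass of $\varphi(\beta)$ into a window of size $g(\beta)=\oh(\beta^N)$, the transfer to $\varphi_0$ via \eqref{4.3A}, and Theorem \ref{T3.7} (hence the implicit srs hypothesis) to handle the orthogonal complement. If anything yours is slightly cleaner: the paper's Remark 3 asserts the norm estimate $\norm{Q(\beta)-P_{E_0}(A_0)}\to 0$ in \eqref{4.5}, which does not actually follow from the stated hypotheses (a sequence of projections converging strongly to a rank--one projection need not converge in norm), whereas your decomposition $\psi=\jap{\varphi_0,\psi}\varphi_0+\psi^\perp$ delivers the needed strong convergence directly and sidesteps that overstatement.
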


\begin{remarks} 1.  Riddell also has a converse.

2. Both papers consider the situation where $E_0$ has multiplicity $k < \infty$ and there are $k$ orthonormal pairs obeying \eqref{4.3} and they prove spectral concentration on a union of $k$ intervals of size $\textrm{o}(\beta^N)$ about the $\lambda_j$.

3.  The proof isn't hard.  One picks $g(\beta) = \textrm{o}(\beta^N)$ so that $\norm{(A(\beta)-\lambda(\beta))\varphi(\beta)}/g(\beta) \to 0$.  This implies that if $Q(\beta) = P_{(\lambda(\beta)-g(\beta),\lambda(\beta)+g(\beta))}(A(\beta))$, then $\norm{(1-Q(\beta))\varphi(\beta)} \to 0$.  By \eqref{4.3A}, this implies that
\begin{equation}\label{4.5}
  \norm{Q(\beta)-P_{E_0}(A_0)} \to 0
\end{equation}
If $d < \dist(E_0,\sigma(A)\setminus\{E_0\})$, Theorem \ref{T3.7} implies that $P_{(E_0-d,E_0+d)}(A(\beta))\psi \to P_{E_0}(A_0)\psi$ for any $\psi$.  Thus by \eqref{4.5}, $\left[P_{(E_0-d,E_0+d)}(A(\beta))-Q(\beta)\right]\psi \to 0$ which is the required spectral concentration
\end{remarks}

These ideas were used by Friedrichs and Rejto \cite{FRejto} to prove spectral concentration in Example \ref{E3.5}  (i.e. $A_0$ of rank 1 and $B$ multiplication by $x$).  They assumed the function $\psi(x)$ of \eqref{3.15} is strictly positive on $\bbR$ and H\"{o}lder continuous and prove that $A(\beta)$ has no point eigenvalues and has a weak spectral concentration (of order $\beta^p$ for some $0<p<1$).  Riddell \cite{Ridd} proved spectral concentration to all orders for the Stark effect for Hydrogen using pseudo--eigenvectors and Rejto \cite{Rejto1, Rejto2} proved the analog for Helium (see below for more on spectral concentration for the Stark effect).

Veseli\'{c} \cite{Ves} systematized and simplified the results in Theorem \ref{T4.1} and applied it to certain models (not linear in $\beta$) where $A_0$ has a discrete eigenvalue while $A(\beta)$ has no eigenvalue due to tunnelling through a barrier.  An example is $A(\beta) = -\tfrac{d^2}{dx^2}+V(x,\beta)$ where
\begin{equation}\label{4.6}
  V(x,\beta) = V_0(x)-(1-e^{-\beta x})
\end{equation}
$V_0$ goes to zero at infinity and is such that $A_0$ has a single negative eigenvalue at $-\tfrac{1}{2}$.  Thus $A(\beta)$ has essential spectrum $[-1,\infty)$ and instantaneously the discrete eigenvalue is swamped in continuous spectrum.  There is a barrier of size $\beta^{-1}$ trapping the initial bound state.  Veseli\'{c} proved spectral concentration.

As noted Titchmarsh related spectral concentration to second sheet poles of Green's functions for certain differential operators.  This theme was developed by James Howland, a student of Kato, in 5 papers \cite{Howland3, Howland5, Howland4, Howland1, Howland2}. Howland discussed two situations.  One was where $A_0$ was finite rank and whose non--zero eigenvalues are washed away much like Example \ref{E3.5}.  The other was where $A_0$ has eigenvalues embedded in continuous spectrum and $B$ is finite rank, so related to the Friedrichs model mentioned at the end of Example \ref{E3.2}.

In both cases, there is a finite dimensional space, $\calV$, where the finite rank operator lives and Howland considered ${\{\jap{\varphi,(A(\beta)-z)^{-1}\psi},|\, \varphi,\psi \in \calV\}}$ and proved (under suitable conditions) that these functions initially defined on $\bbC_+$ have meromorphic continuations through $\bbR$ into a neighborhood of $E_0$, a finite multiplicity eigenvalue of $A_0$.  These continuations had second sheet poles at $E_j(\beta)$ converging as $\beta\downarrow 0$ to $E_0$.  The number of poles is typically the multiplicity of $E_0$ as an eigenvalue of $A_0$.

In the case where $A_0$ has a discrete eigenvalue, Howland showed that $\textrm{Im}\,E(\beta) = \textrm{O}(\beta^\ell)$ for all $\ell$ and was able to use this to prove spectral concentration to all orders.  But in cases where $A_0$ had an embedded eigenvalue, it was typically true that $\textrm{Im}\, E(\beta) = a_k \beta^k + \textrm{o}(\beta^k)$ for some $k$ and some $a_k < 0$; indeed Howland often proved a Fermi golden rule with $a_2 \ne 0$.  In that case, he showed there was spectral concentration of order $k-1$ but not $k$ so spectral concentration couldn't specify a perturbation series to all orders.

Howland also discovered that even when $A_0$ and $B$ were self--adjoint, an eigenvalue could turn into a second order pole whose perturbation series could have non-trivial fractional power series in the asymptotic expression, i.e. Rellich's theorem fails for resonance energies.

Howland also introduced what I've called Howland's razor (see the discussion of Example \ref{E3.3}) and he gave one possible answer: it often happened that the embedded eigenvalue turned into a resonance, i.e. second sheet pole, for real values of $\beta$ but for suitable complex $\beta$, it was a pole in $\bbC_+$ and so a normal discrete eigenvalue of $A(\beta)$.  Thus the resonance energy could be interpreted as the analytic continuation of a perturbed eigenvalue.

Perhaps the most successful approach to the study of resonances, one that handles problems in atomic physics like Examples \ref{E3.2}} and \ref{E3.3}, is the method of complex scaling, initially called dilation or dilatation analyticity (the name change to complex scaling was by quantum chemists when they took up the method for numerical calculation of molecular resonances).  The idea appeared initially in a technical appendix of a never published note by  J. M. Combes who realized the potential of this idea and then published papers with coauthors: Aguilar--Combes \cite{AC} on the two body problem and Balslev--Combes \cite{BalCom} on $N$--body problems (Eric Balslev was Kato's first Berkeley student); see Simon \cite{SimonBC} for extensions and simplifications and \cite[Sections XIII.10 and XII.6]{RS4} for a textbook presentation.  Combes and collaborators knew that the formalism, which they used to prove the absence of singular continuous spectrum, provided a possible definition of a resonance.  It was Simon \cite{SimonTDPT} who realized that the formalism was ideal for studying eigenvalues embedded in the continuous spectrum like autoionizing states.  We will not discuss an extension needed for molecules in the limit of infinite nuclear masses where one uses exterior complex scaling or a close variant, see Simon \cite{SimonExtCS}, Hunziker \cite{HunCS} and G\'{e}rard \cite{GerCS}.

We begin with the two body case.  On $L^2(\bbR^\nu,d^\nu x)$, let $U(\theta),\,\theta\in\bbR$ be the set of real scalings:
\begin{equation}\label{4.7}
  (U(\theta)f)(\boldsymbol{r}) = e^{\nu\theta/2}f(e^\theta \boldsymbol{r})
\end{equation}
which defines a unitary group.  If $H = -\Delta+V(\boldsymbol{r})$, then
\begin{equation}\label{4.8}
   H(\theta) \equiv U(\theta)HU(\theta)^{-1} = -e^{-2\theta}\Delta+V(e^\theta \boldsymbol{r})
\end{equation}
The first term, $H_0(\theta)$, can be analytically continued and
\begin{equation}\label{4.9A}
  \sigma(H_0(\theta)) = \{z \in \bbC\setminus\{0\}\,|\, \arg z = -2\textrm{Im}\,\theta\} \cup \{0\} \equiv S_\theta
\end{equation}

Suppose that $\theta \mapsto V(e^\theta\boldsymbol{r})$ has an analytic continuation as a compact operator from $D(-\Delta)$ to $L^2(\bbR^\nu)$ for $|\textrm{Im}\,\theta| < \Theta_0$ as happens for $V(\boldsymbol{r})=r^{-\alpha}\,(0<\alpha<2$; including $\alpha = 1$, i.e. Coulomb) for all $\Theta_0$ or for $V(\boldsymbol{r})=e^{-\gamma r}$ for $\Theta_0=\tfrac{\pi}{2}$.  Such $V$'s are called dilation analytic.  Then $H(\theta)$ is a type (A) analytic family on the strip of width $2\Theta_0$ about $\bbR$.  For any $\theta$, the essential spectrum of $H(\theta)$ is $S_\theta$.

Discrete eigenvalues are given by analytic functions, $E_j(\theta)$.  Since changing $\textrm{Re}\,\theta$ provides unitarily  equivalent $H$'s, $E_j(\theta)$ is constant under changes of $\textrm{Re}\,\theta$, so constant by analyticity.  We conclude that so long as discrete eigenvalues avoid  $S_\theta$, they remain discrete eigenvalues of $H(\theta)$.  In particular, negative eigenvalues of $H$ are eigenvalues of $H(\theta)$ if $|\textrm{Im}\,\theta| < \tfrac{\pi}{2}$.  An additional argument shows that embedded positive eigenvalues become discrete eigenvalues of $H(\theta)$ for $\textrm{Im}\,\theta \in (0,\tfrac{\pi}{2})$.

By this persistence, $H(\theta)$ for $\theta$ with $\textrm{Im}\,\theta \in (0,\tfrac{\pi}{2})$, there can't be any eigenvalues in $\{z\,|\, \arg z \in (0,2\pi-2\textrm{Im}\,\theta)\setminus\{-\pi\}\}$ (for taking $\theta$ back to zero would result in  non--real eigenvalues of $H$) but there isn't any reason there can't be for $z$ with $\arg z \in (-2\textrm{Im}\,z,0)$.  That is, moving $\textrm{Im}\,\theta$ can uncover eigenvalues in $\bbC_-$ which we interpret as resonances (but see the discussion below).

Using techniques from $N$--body quantum theory (essentially the HVZ theorem to be discussed in Section \ref{s11}; we'll use notation from that section below), one can similarly analyze $N$--body Hamiltonians with center of mass removed when all the $V_{ij}$ are dilation analytic.  The spectrum of $H(\theta)$ with $\theta$ not real looks like that in Figure 1.

\begin{figure}[h]
\includegraphics[scale=0.4,clip=true]{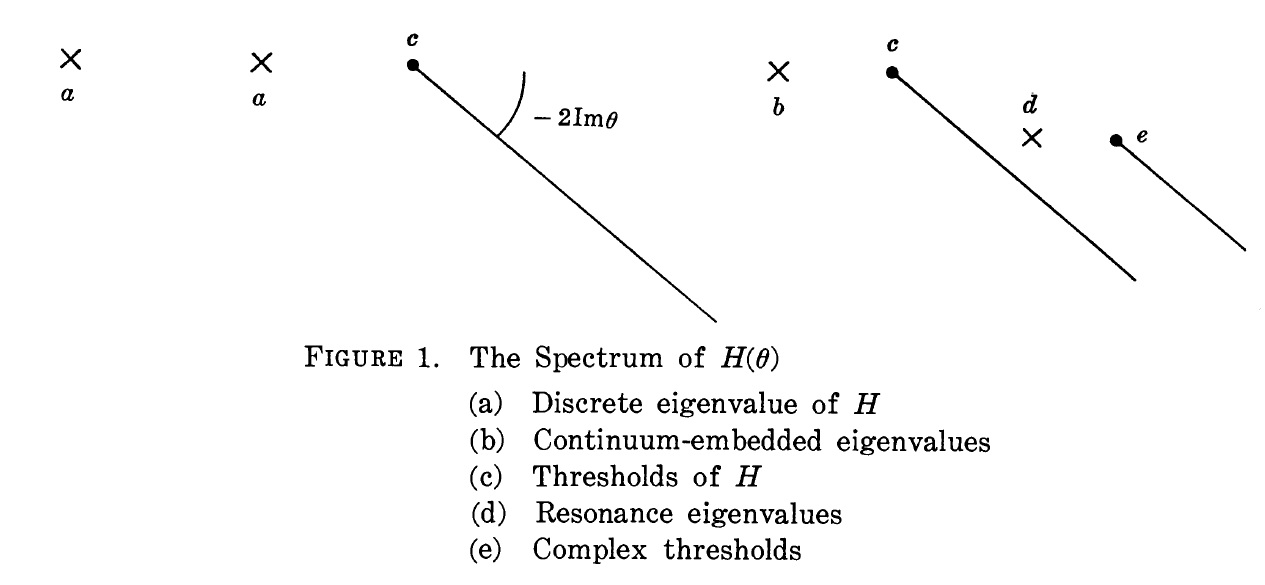}
\end{figure}

If $\calC$ is a non-trivial cluster decomposition of $\{1,\dots,N\}$, $\calC=\{C_1,\dots,C_k\}$ and $h(C_j)$ is the internal Hamiltonian of $C_j$, the set of $E_1+\dots+E_k$ where $E_j$ is an eigenvalue of $h(C_j)$ is called the set of thresholds (if some $C_\ell$ has one particle, then $h(C_\ell)$ is the zero operator on $\bbC$ and $E_\ell=0$).  It can be shown \cite{BalCom, SimonBC} that the set, $\Sigma$, of all thresholds (running over all non--trivial cluster decompositions) is a closed countable set and that for $0 < \textrm{Im}\,\theta<\Theta_0<\tfrac{\pi}{2}$, one has that
\begin{equation}\label{4.9}
  \sigma_{\rm{ess}}(H(\theta)) = \bigcup_{\lambda \in \Sigma(\theta)} \lambda+S_\theta
\end{equation}
Here $\Sigma(\theta)$ includes some complex $\lambda$ where the $E_j$ are resonance eigenvalues of $h(C_j,\theta)$.

\noindent \textbf{Example \ref{E3.2} revisited}. (following \cite{SimonTDPT}) The thresholds are $\left\{ -\tfrac{1}{4n^2}\right\}_{n=1}^\infty$ so the eigenvalue at $E_{2,2} = -\tfrac{1}{8}$ is not a threshold.  Thus it is an isolated eigenvalue of $A(1/Z,0,\theta)$ if $-i\theta \in (0,\tfrac{\pi}{2})$.  It follows that the Kato--Rellich theory applies so, for $1/Z$ small, there is an eigenvalue, $E_{2,2}(1/Z,\theta)$ independent of $\theta$ (although it is only an eigenvalue if $-\arg(E_{2,2}(1/Z)+\tfrac{1}{4}) < \textrm{Im}\,\theta$.  This first implies there is a convergent perturbation series (i.e. time--dependent perturbation theory, suitably defined, converges).  One can compute the perturbation coefficients which are $\theta$ independent for $-i\theta \in (0,\tfrac{\pi}{2})$ and then take $-i\theta$ to $0$.  One gets a suitable limit of $-(V\varphi,SV\varphi)$ where $S$ is a reduced resolvent.  Using the fact that the distribution limit of $1/(x+i\epsilon)$ is $\calP\left(\tfrac{1}{x}\right)-i\pi \delta(x)$, Simon \cite{SimonTDPT} computed $\textrm{Im}\, a_2$ as given by the Fermi golden rule.

For Stark Hamiltonians, the initial belief among mathematical physicists was that complex scaling couldn't work.  For let
\begin{equation}\label{4.10}
  H_0(\theta,F)=-e^{-2\theta}\Delta+Fe^\theta z
\end{equation}
on $L^2(\bbR^3)$. Since $H_0(\theta=0,F \ne 0)$ has no threshold (translating $z$ by a constant, adds a constant to the energy), there is no place for the spectrum $(-\infty,\infty)$ to go when $\theta$ is made imaginary.  So it was assumed the theory could not make sense.

In spite of this accepted wisdom, a quantum chemist, Bill Reinhardt, did calculations for the Stark problem using complex scaling \cite{Rein} and got sensible results.  Motivated by this, Herbst \cite{HerbstStark2} was able to define complex scaling for a class of two body Hamiltonians including the Hydrogen Stark problem.  He discovered that for $F \ne 0$, and $0 < \arg \theta < \pi/3$, $H_0(\theta,F)$ has empty spectrum (!), i.e. $(H_0(\theta,F) - z)$ is invertible for all $z$.  It is a theorem that elements in Banach algebras and, in particular, bounded operators on any Banach space, have non--empty spectrum but that is only for bounded operators.  In some sense, $H_0(\theta,F)$ has only $\infty$ in its spectrum -- specifically $\sigma[(H_0(\theta,F)-z)^{-1}] = \{0\}$ for all $z$.

\noindent \textbf{Example \ref{E3.3} revisited} With this in hand, Herbst \cite{HerbstStark2} considered \eqref{3.8} and defined $A(F,Z,\theta)$ by
\begin{equation}\label{4.11}
  A(F,Z,\theta) = -e^{-2\theta}\Delta - e^{-\theta}\frac{Z}{r} + e^\theta Fz
\end{equation}
and proved that for $0 < -i\theta<\tfrac{\pi}{3}$, and $F \ne 0$, $A(F,Z,\theta)$ has purely discrete spectrum and if $E_0 \in (-\infty,0)$ is an eigenvalue of $A(F=0,Z,\theta=0)$ of multiplicity $k$, then for $F$ small and $-i\theta \in (0.\pi/3)$, $A(F,Z,\theta)$ has at most $k$ eigenvalues near $E_0$ and their combined multiplicities is $k$.  The Rayleigh--Schr\"{o}dinger series can be proven to be asymptotic by the method of Theorem \ref{T3.8}.  Since its coefficients are real, Herbst showed that the width, $\Gamma(F)$, is $\textrm{o}(F^\ell)$ for all $\ell$ and, by Howland's method, this provided another proof of spectral concentration for all orders for the Stark problem.

Herbst--Simon \cite{HerbS} studied the analytic properties of $E(F,Z,\theta)$ and proved analyticity for $-F^2 \in \{z \,|\, |z| < R\} \cap (\bbC\setminus (-\infty,0])$ and used this to prove Borel summability that recovers $E(F,Z,\theta)$ directly for $\textrm{Re}\,(-F^2)>0$ (which doesn't include any real $F$).  The physical value is then determined by analytic continuation.  Graffi--Grecchi \cite{GG1} had proven Borel summability slightly earlier using very different methods.  Graffi--Grecchi \cite{GG8} and Herbst--Simon \cite{HerbS} also proved Borel summability for discrete eigenvalues of general atoms.

For Hydrogen, Herbst--Simon conjectured \eqref{3.11} noting that it was implied by their analyticity results and the then unproven Oppenheimer formula.  Shortly thereafter, Harrell--Simon \cite{HS} proved the Oppenheimer formula for the complex scaled defined Stark resonance and so also \eqref{3.11}.  They used similar arguments to prove the Bender--Wu formula for the anharmonic oscillator.  Later Helffer-Sj\"{o}strand \cite{HelfS2} proved Bender--Wu formulae for higher dimensional oscillators.

We have not discussed in detail various subtleties that are dealt with in the quoted papers: among them, Herbst \cite{HerbstStark2} showed that $A(F,Z,\theta)$ is of type(A) with domain $D(-\Delta)\cap D(z)$ on $\{(F,Z,\theta) \,|\, F>0, \textrm{Im}\,\theta \in (0,\pi/3)\}$ by proving a quadratic estimate.  The proof of stability of the eigenvalues of $A(F=0,Z,\theta)$ for $\textrm{Im}\,\theta \in (0,\pi/3)$ uses ideas from \cite[Part I]{AvronHerbstSimon}.  While the free Stark problem has scaled Hamiltonians with empty spectrum when there is one positive charge and $N$ particles of equal mass and equal negative charge, there are charges and masses, where the spectrum is not empty.

Sigal \cite{SigStark1, SigStark2, SigStark3, SigStark4} and Herbst--M{\o}ller--Skibsted \cite{HMSkib} have further studied Stark resonances in multi--electron atoms proving that the widths are strictly positive and exponentially small in $1/F$.

We end this discussion by noting that I have reason to believe that, at least at one time, Kato had severe doubts about the physical relevance of the complex scaling approach to resonances.  \cite{HS} was rejected by the first journal it was submitted to.  The editor told me that the world's recognized greatest expert on perturbation theory had recommended rejection so he had no choice.  I had some of the report quoted to me.  The referee said that the complex scaling definition of resonance was arbitrary and physically unmotivated with limited significance.

There is at least one missing point in a reply to this criticism: however it is defined, a resonance must correspond to a pole of the scattering amplitude.  While this is surely true for resonances defined via complex scaling, as of this day, it has not been proven for the models of greatest interest.  So far, resonance poles of scattering amplitudes in quantum systems have only been proven for two and three cluster scattering with potentials decaying faster (often much faster) than Coulomb and not for Stark scattering; see Babbitt--Balslev \cite{BB}, Balslev \cite{Bal1, Bal2, Bal3}, Hagedorn \cite{Hag}, Jensen \cite{JenScatt} and Sigal \cite{SigRes1, SigRes2}.  This is a technically difficult problem which hasn't drawn much attention.  That said, following \cite{HS} and others, we note the following in support of the notion that eigenvalues of $H(\theta)$ that lie in $\bbC_-$ are resonances:

(1) Going back to Titchmarsh \cite{TitPT, TitBk}, poles of the diagonal (i.e. $x=y$) Green's function (integral kernel, $G(x,y;z)$ of $(H-z)^{-1})$ are viewed as resonances for one dimensional problems.  In dimension $\nu \ge 2$, $G(x,y;z)$ diverges as $x \to y$ so it is natural to consider poles of $\jap{\varphi,(H-z)^{-1}\varphi}$.  Howland's razor implies that you can't look at all $\varphi \in L^2(\bbR^\nu, d^\nu x)$ but a special class of functions which are smooth in $x$ and $p$ space would be a reasonable replacement for $x=y$.  One can show (see \cite[Section XIII.10]{RS4}) that if $\varphi$ is a polynomial times a Gaussian, then $\jap{\varphi,(H-z)^{-1}\varphi}$ has a meromorphic continuation across $\bbR$ between thresholds with poles exactly at the eigenvalues of $H(\theta)$.

(2) In the autoionizing case, $E$ is an analytic function of $1/Z$ and in the Stark case, analytic for $-F^2$ in a cut disk about $0$.  For the physically relevant values, $1/Z$ real or $F$ real, $E$ has $\textrm{Im}\,E < 0$ and these resonances are on the second sheet and disappear at $\theta=0$. But for $1/Z$ or $F$ pure imaginary, the corresponding $E$ is in $\bbC_+$ and so persists when $\textrm{Im}\,\theta \downarrow 0$, i.e. $E$ for these unphysical values of the parameters is an eigenvalue of these corresponding $H$.  Thus resonances can be viewed as analytic continuations of actual eigenvalues from unphysical to physical values of the parameters.

(3)  It is connected to the sum or Borel sum of a suitable perturbation series, see \cite{GG6A, GG7}.

(4) It yields information on asymptotic series and spectral concentration in a particularly clean way and, in particular, a proof of a Bender--Wu type formula for the asymptotics of the perturbation coefficients in the Stark problem.

While we've focused on the complex scaling approach to resonances, there are other methods.  One called distortion analyticity works sometimes for potentials which are the sum of a dilation analytic potential and a potential with exponential decay (but not necessarily any $x$--space analyticity).  The basic papers include Jensen \cite{JenScatt}, Sigal \cite{SigalDistort}, Cycon \cite{CyconDistort}, and Nakamura \cite{NakaDistort1, NakaDistort2}.  Some approaches for non--analytic potentials include Cattaneo--Graf--Hunziker \cite{CGHRes}, Cancelier--Martinez--Ramond \cite{CMRRes} and Martinez--Ramond--Sj\"{o}strand \cite{MRSRes}.  There is an enormous literature on the theory of resonances from many points of view.  It would be difficult to attempt a comprehensive discussion of this literature and given that the subject is not central to Kato's work, I won't even try.  But I should mention a beautiful set of ideas about counting asymptotics of resonances starting with Zworski \cite{ZworskiOrig}; see Sj\"{o}strand \cite{SjRes} for unpublished lectures that include lots of references, a recent review of Zworski \cite{ZworskiReview} and forthcoming book of Dyatlov--Zworski \cite{DyaZwor}. The form of the Fermi Golden Rule at Thresholds is discussed in Jensen--Nenciu \cite{JNFermi} (see Section \ref{s16}).  A review of the occurrence of resonances in NR Quantum Electrodynamics and of the smooth Feshbach--Schur map is Sigal \cite{SigalNRQED} and a book on techniques relevant to some approaches to resonances is Martinez \cite{MarBk}.

\section{Eigenvalue Perturbation Theory, IV: Pairs of Projections} \lb{s5}

Recall \cite[Section 2.1]{OT} that a (bounded) projection on a Banach space, $X$, is a bounded operator with $P^2=P$.  If $Y = \ran(P)=\ker(1-P)$ and $Z = \ran(1-P)=\ker(P)$, then $Y$ and $Z$ are disjoint closed subspaces and $Y+Z=X$ and that $(y,z) \mapsto y+z$ is a Banach space linear homeomorphism of $Y \oplus Z$ and $X$.  There is a one-one correspondence between such direct sum decompositions and bounded projections.  We saw in Section \ref{s2} that the following is important in eigenvalue perturbation theory:

\begin{theorem} \lb{T5.1}  Fix a Banach space, $X$.  For any pair of bounded projections, $P, Q$ on $X$ with $\norm{P-Q} < 1$, there exists an invertible map, $U$ so that
\begin{equation}\label{5.1}
  UPU^{-1} = Q
\end{equation}
Moreover, $U$ can be chosen so that

{\textrm(a)} For $P$ fixed, $U(P,Q)$ is analytic in Q in that it is a norm limit, uniformly in each ball $\{Q\,|\,\norm{P-Q} < 1-\epsilon\}$, of polynomials in $Q$.

{\textrm (b)} If $X$ is a Hilbert space and $P,Q$ are self-adjoint projections, then $U$ is unitary.
\end{theorem}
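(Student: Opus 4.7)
The plan is to follow the classical Sz.-Nagy/Kato construction, in which $U$ is built explicitly out of $P$ and $Q$ by a simple algebraic formula combined with a square root obtained via binomial series.

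The main algebraic step is to introduce the two auxiliary operators
\begin{equation*}
  R = QP + (1-Q)(1-P), \qquad R' = PQ + (1-P)(1-Q).
\end{equation*}
A direct computation shows $RP = QP$ and $QR = QP$, so that $QR = RP$; symmetrically $PR' = R'Q$. Thus if $R$ is invertible, $R$ already conjugates $P$ to $Q$. Expanding the products, I would next verify by straightforward algebra the key identity
\begin{equation*}
  RR' \;=\; R'R \;=\; 1-(P-Q)^2 \;\equiv\; A,
\end{equation*}
and check that $A$ commutes with $P$, with $Q$, with $R$ and with $R'$ (the commutation with $P$ and $Q$ follows from the short computation $P(P-Q)^2 = P - PQP = (P-Q)^2 P$).

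With $\norm{P-Q}<1$ one has $\norm{1-A}<1$, so $A^{-1/2}$ can be defined by the binomial series
\begin{equation*}
  A^{-1/2} = \sum_{n=0}^{\infty} \binom{-1/2}{n}(-1)^n (P-Q)^{2n},
\end{equation*}
which converges in norm and commutes with $P$, $Q$, $R$ and $R'$. I would then set
\begin{equation*}
  U = R\,A^{-1/2},
\end{equation*}
and take as candidate inverse $V = A^{-1/2}R'$. The commutation properties and the identity $RR'=R'R=A$ give $UV = RA^{-1}R' = RR'A^{-1} = 1$ and similarly $VU = 1$, so $U$ is invertible. Since $A^{-1/2}$ commutes with $P$, one computes $UP = RPA^{-1/2} = QPA^{-1/2} = QRA^{-1/2} = QU$, yielding \eqref{5.1}.

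For (a), I would simply observe that with $P$ fixed, $R$ is affine in $Q$ and $A=1-(P-Q)^2$ is a quadratic polynomial in $Q$; the binomial series then exhibits $A^{-1/2}$ as the norm limit, uniformly on each ball $\{Q : \norm{P-Q}\le 1-\epsilon\}$, of polynomials in $Q$, and the same follows for $U=RA^{-1/2}$. For (b), when $X$ is a Hilbert space and $P^*=P$, $Q^*=Q$, direct inspection gives $R^* = R'$, so $A=RR^*=R^*R$ is self-adjoint and positive, and the functional-calculus $A^{-1/2}$ is self-adjoint. Then
\begin{equation*}
  U^*U = A^{-1/2}R^*R\,A^{-1/2} = A^{-1/2}A A^{-1/2} = 1,
\end{equation*}
and similarly $UU^* = R A^{-1} R^* = RR^* A^{-1} = 1$, so $U$ is unitary. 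The only step that requires any care — but is not really an obstacle — is the verification of $RR'=1-(P-Q)^2$; everything else is mechanical once the right $R$ has been written down, and the ``main obstacle'' is simply guessing this intertwiner, which is motivated by the fact that $QP+(1-Q)(1-P)$ is the natural candidate sending $\ran P\to \ran Q$ and $\ker P\to \ker Q$.
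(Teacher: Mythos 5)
Your proof is correct and follows essentially the same route as the paper's first proof (the Sz.-Nagy/Kato construction via $W = QP+(1-Q)(1-P)$, the identity $W\wti{W}=\wti{W}W=1-(P-Q)^2$, and the binomial series for the inverse square root); the only difference is notational — you call $1-(P-Q)^2$ by the name $A$, while the paper sets $A=P-Q$ and works with $1-A^2$.
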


\begin{remarks} 1. We don't require $U(P,P) = \bdone$ which might seem natural because, below, when $P$ and $Q$ are self--adjoint, we'll find a $U$ for which \eqref{5.15} holds and it can be shown that is inconsistent with $U(P,P) = \bdone$.  Of course, given any $U_0(P,Q)$ obeying \eqref{5.1}, $U(P,Q)=U_0(P,Q)U_0(P,P)^{-1}$ also obeys \eqref{5.1} and has $U(P,P)=\bdone$ so it is no great loss.  Both the $U$'s we construct below also obey $U(Q,P) = U(P,Q)^{-1}$.

2. $U$ is actually jointly analytic in $P,Q$ and the proof easily implies if $P$ is fixed and $\beta \mapsto Q(\beta)$ is analytic (resp. continuous, $C^k$, $C^\infty$) in $\beta$, then so is $U$.
\end{remarks}

A first guess for $U$ might be
\begin{equation}\label{5.2}
  W = QP+(1-Q)(1-P)
\end{equation}
which obeys
\begin{equation}\label{5.3}
  WP=QP=QW
\end{equation}
so if $W$ is invertible, we get \eqref{5.1}.  Of course \eqref{5.3} is also true of $W=QP$ but it is easy to see if $\ran\, P \ne X$, then $QP$ can't be invertible.  \eqref{5.2} isn't invertible for an arbitrary pair of projections, for if $\varphi \in (\ran\, P\cap\ker Q)\cup(\ker P\cap\ran Q)$, then $W\varphi=0$.  But when $\norm{P-Q}<1$, this space is trivial, so under the norm condition, $W$ might be (and as we'll see is) invertible.

Define
\begin{equation}\label{5.4}
  \wti{W} = PQ+(1-P)(1-Q)
\end{equation}
\begin{equation}\label{5.5}
  A = P-Q; \qquad B=1-P-Q
\end{equation}
The following easy algebraic calculations are basic to the rich structure of pairs of projections
\begin{equation}\label{5.7}
  A^2 + B^2 = \bdone; \qquad AB+BA=0
\end{equation}
(which Avron \cite{AvronNC} calls the anticommutative Pythagorean Theorem).  Moreover
\begin{equation}\label{5.6}
  PA^2 = P-PQP = A^2P
\end{equation}
so
\begin{equation}\label{5.8}
  [P,A^2]=[Q,A^2]=[P,B^2]=[Q,B^2]=0
\end{equation}
In addition
\begin{equation}\label{5.8}
  (PQ-QP) = BA; \qquad (PQ-QP)^2 = A^4-A^2
\end{equation}

Finally,
\begin{equation}\label{5.9}
  W\wti{W} = \wti{W}W = 1-A^2
\end{equation}
This means that $W$ is invertible if $\norm{A} < 1$, so for \eqref{5.1}, we could take $U=W$ but that won't be unitary when $X$ is a Hilbert space and the two projections are self--adjoint, so, following Kato, we make a slightly different choice

\begin{proof} [First Proof of Theorem \ref{T5.1}]  If $\norm{A} < 1$, we can define
\begin{equation}\label{5.10}
  (1-A^2)^{-1/2} = \sum_{n=0}^{\infty} (-1)^n \binom{-\tfrac{1}{2}}{n} A^{2n}
\end{equation}
where as usual
\begin{equation}\label{5.11}
  \binom{-\tfrac{1}{2}}{n} = \frac{(-\tfrac{1}{2})(-\tfrac{3}{2})\dots(\tfrac{1}{2}-n)}{n!}
\end{equation}
Since $j^{-1}|\tfrac{1}{2} -j| < 1$ for $j=1,2,\dots$, we have that $\sup_n |\binom{-1/2}{n}| < 1$, so if $\norm{A} < 1$, the series in \eqref{5.10} converges and series manipulation proves that
\begin{equation}\label{5.12}
  \left[(1-A^2)^{-1/2}\right]^2 = (1-A^2)^{-1}
\end{equation}
which in turn implies that if we define
\begin{equation}\label{5.13}
  U = W(1-A^2)^{-1/2} = (1-A^2)^{-1/2}W, \qquad \wti{U} = (1-A^2)^{-1/2}\wti{W}
\end{equation}
then, by \eqref{5.8}
\begin{equation}\label{5.13B}
  U\wti{U} = \wti{U}U = \bdone, \qquad UP=QU
\end{equation}
so $U$ is invertible and \eqref{5.1} holds.

Since $(1-A^2)^{-1/2}$ is a norm limit of polynomials in $P$ and $Q$, so is $U$ proving (a).  If $X$ is a Hilbert space and $P^*=P, Q^*=Q$, then $\wti{U} = U^*$, so by \eqref{5.13B} $U$ is unitary, proving (b).    \end{proof}

Theorem \ref{T5.1} for the self--adjoint Hilbert space case goes back to Sz--Nagy \cite{Nagy1947} who was interested in the result because of its application to the convergent perturbation theory of eigenvalues.  His formula for $U$ looks more involved than \eqref{5.2}/\eqref{5.13}.  Wolf \cite{Wolf} then extended the result to general Banach spaces but needed $\norm{P}^2\norm{P-Q} < 1$ and $\norm{1-P}^2\norm{P-Q} < 1$ which is a strictly stronger hypothesis.

In \cite{KatoPTClosed}, Kato proved that if $\beta \mapsto P(\beta)$ is a real analytic family of projections on a Banach space for $\beta \in [0,B]$, then there exists a real analytic family of invertible maps, $U(\beta)$ so that $U(\beta)P(\beta)U(\beta)^{-1}=P(0)$.  He did this using the same formalism he had developed for his treatment of the adiabatic theorem (Kato \cite{KatoAdi} and Section \ref{s17} below).  In 1955, in an unpublished report \cite{KatoPairs}, Kato presented all of the algebra above (except for $AB+BA=0$) and used it to prove Theorem \ref{T5.1} exactly as we do above.

After Avron et al \cite{ASS} found and exploited $AB+BA=0$ (see below), Kato told me that he had found this relation about 1972 but didn't have an application.  Because \cite{KatoPairs} isn't widely available, the standard reference for his approach to pairs of projections is his book \cite{KatoBk}.  In \cite{KatoPairs}, Kato noted that his expression was equal to the object found by Sz--Nagy \cite{Nagy1947} but in the Banach space case, one could get better estimates from his formula for the object.  In that note, he also remarked that when $\norm{P-Q} < 1$, one can find a smooth, one parameter family of projections, $P(t),\, 0 \le t \le 1$ with $P(0) = P$ and $P(1) = Q$ so that the $U$ obtained via his earlier method of solving a differential equation was identical to the $U$ of \eqref{5.2}/\eqref{5.13}.

While this concludes Kato's contribution to the subject of pairs of projections, I would be remiss if I didn't say more about the rich structure of this simple setting, especially when $\norm{P-Q} \ge 1$ (in the self--adjoint Hilbert space setting one has that $\norm{P-Q} \le 1$ but for non-self-adjoint projections and the general case of Banach spaces, one often has $\norm{P-Q} > 1$).  There are two approaches.  The one we'll discuss first is due to Avron--Seiler--Simon \cite{ASS} and uses algebraic relations, especially \eqref{5.7}.  Since $AB+BA=0$ is the signature of supersymmetry, we'll call this the supersymmetric approach.  Here is a typical use of this method:

\begin{theorem} [Avron et. al. \cite{ASS}] \lb{T5.2} Let $P$ and $Q$ be self--adjoint projections so that $P-Q$ is compact.  For $\lambda \in [-1,1]\setminus\{0\}$, let $P_\lambda$ be the projection onto the eigenspace $\calH_\lambda \equiv \{\varphi\,|\,A\varphi = \lambda\varphi\}$

(a) If $\lambda \ne \pm 1$, then
\begin{equation}\label{5.13A}
  V = (1-\lambda^2)^{-1/2}B \restriction \calH_\lambda
\end{equation}
is a unitary map of $\calH_\lambda$ onto $\calH_{-\lambda}$.

(b) For such $\lambda$, we have that
\begin{equation}\label{5.13C}
  \dim \calH_{-\lambda} = \dim \calH_{\lambda}
\end{equation}

(c) If $P-Q$ is trace class, then
\begin{equation}\label{5.13D}
  \tr(P-Q) \in \bbZ
\end{equation}

(d) If $\norm{P-Q} < 1$, then $U \equiv \mathrm{sgn}(B)$ is a unitary operator obeying \eqref{5.1}.  Indeed,
\begin{equation}\label{5.15}
  UPU^{-1} = Q, \qquad UQU^{-1} = P
\end{equation}
\end{theorem}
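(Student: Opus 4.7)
The approach is entirely algebraic, riding on the two identities $A^2+B^2=\bdone$ and $AB+BA=0$ already established in the excerpt. The strategy is: exploit the anticommutation to see $B$ as an intertwiner between $\calH_\lambda$ and $\calH_{-\lambda}$, use the Pythagorean identity to normalize, then assemble (b)--(d) from this plus standard compact/spectral theory.

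For (a), suppose $\varphi\in\calH_\lambda$ with $|\lambda|<1$. Since $AB=-BA$, we compute $A(B\varphi)=-B(A\varphi)=-\lambda\,B\varphi$, so $B\varphi\in\calH_{-\lambda}$. Moreover on $\calH_\lambda$ we have $A^2=\lambda^2\bdone$, hence $B^2=(1-\lambda^2)\bdone$ there, and similarly $B^2=(1-\lambda^2)\bdone$ on $\calH_{-\lambda}$. Consequently $V=(1-\lambda^2)^{-1/2}B\restriction\calH_\lambda$ is a well-defined map into $\calH_{-\lambda}$ with $V^*V=(1-\lambda^2)^{-1}B^2\restriction\calH_\lambda=\bdone_{\calH_\lambda}$, and the analogous map in the other direction furnishes a two-sided inverse, so $V$ is unitary. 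Part (b) is an immediate consequence.

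For (c), assume $A=P-Q$ is trace class and in particular compact and self--adjoint; the nonzero spectrum is then a discrete set of eigenvalues $\{\lambda_k\}\subset[-1,1]\setminus\{0\}$ with finite multiplicities $m_k=\dim\calH_{\lambda_k}$, and $\tr(A)=\sum_k\lambda_k m_k$ converges absolutely. Grouping the eigenvalues with $|\lambda|<1$ into pairs $\{\lambda,-\lambda\}$, part (b) gives $m_\lambda=m_{-\lambda}$ so each pair contributes $\lambda m_\lambda+(-\lambda)m_{-\lambda}=0$. Only $\lambda=\pm1$ remain, yielding $\tr(P-Q)=\dim\calH_1-\dim\calH_{-1}\in\bbZ$.

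For (d), the hypothesis $\norm{P-Q}=\norm{A}<1$ gives $\sigma(B^2)\subset[1-\norm{A}^2,1]$, so $B$ is self--adjoint and invertible; define $U=\operatorname{sgn}(B)=B|B|^{-1}$ by the Borel functional calculus. Since $\operatorname{sgn}^2\equiv1$ on $\sigma(B)$ and $\operatorname{sgn}$ is real, $U$ is self--adjoint with $U^2=\bdone$, hence unitary. The key intertwining is a one-line calculation: using $P^2=P$ and $Q^2=Q$,
\begin{equation*}
BP=(\bdone-P-Q)P=-QP=Q(\bdone-P-Q)=QB,
\end{equation*}
and symmetrically $BQ=PB$. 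Because $P$ and $Q$ each commute with $B^2$ (recorded in \eqref{5.8}), they commute with $|B|^{-1}$ via functional calculus, so
\begin{equation*}
UP=BP|B|^{-1}=QB|B|^{-1}=QU,\qquad UQ=BQ|B|^{-1}=PB|B|^{-1}=PU,
\end{equation*}
which is \eqref{5.15}. The main conceptual point — and the only place something could go wrong — is the pairing argument in (c): one must know that the spectral decomposition of $A$ actually exhausts its trace, but this is classical for self--adjoint trace class operators, so there is no real obstacle.
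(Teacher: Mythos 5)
Your proof is correct and tracks the paper's argument in (a)--(c) essentially verbatim, relying on the same supersymmetric structure $A^2+B^2=\bdone$, $AB+BA=0$; in (d) you derive the intertwining relations $BP=QB$ and $BQ=PB$ directly from $B=\bdone-P-Q$ rather than, as the paper does, first proving $UAU^{-1}=-A$ and $UBU^{-1}=B$ and then recovering $P=\tfrac12(A-B+\bdone)$, $Q=\tfrac12(-A-B+\bdone)$, but this is only a cosmetic rearrangement of the same computation (and is in fact the route the paper itself takes in the subsequent Theorem \ref{T5.3}).
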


\begin{remarks} 1.  By $\mathrm{sgn}(B)$, we mean $f(B)$ defined by the functional calculus \cite[Section 5.1]{OT} where
\begin{equation*}
  f(x) = \left\{
           \begin{array}{ll}
             \null \,\,\,\, 1, & x > 0\\
              -1, & x<0 \\
             \null\,\,\,\, 0, & x=0
           \end{array}
         \right.
\end{equation*}
This is unitary because $\norm{A} < 1$ and $B^2 = 1-A^2$ implies that $\ker B = \{0\}$.  One can also write
\begin{equation}\label{5.14}
  U = B(1-A^2)^{-1/2}
\end{equation}

2. If we use \eqref{5.14} to define $U$ in the general Banach space case when $\norm{P-Q} < 1$, the same proof shows that we have \eqref{5.15}.  Indeed, since $[A^2,B]=0$, we have that $U^2=\bdone$ so \eqref{5.1} implies $UQU^{-1}=P$.  So we get another proof of Theorem \ref{T5.1} in the general Banach space case.  However if $P=Q$, then $B=1-2P$ and $A=0$ so by \eqref{5.14}
\begin{equation}\label{5.16}
  U = \bdone - 2P
\end{equation}
Thus, $U(P,P) \ne \bdone$ but see the remarks after Theorem \ref{T5.1}.

3.  That $\tr(P-Q) \in \bbZ$ was first proven by Effros \cite{Effros} and can also be proven using the Krein spectral shift \cite[Problem 5.9.1]{OT}.  It is also true if $P,Q$ are not necessarily self--adjoint projections in a Hilbert space and for suitable Banach space cases; see below.
\end{remarks}

\begin{proof} (a) If $A\varphi = \lambda\varphi$, then
\begin{equation}\label{5.17}
  AB\varphi = -BA\varphi = -\lambda B\varphi
\end{equation}
so $B$ maps $\calH_\lambda$ to $\calH_{-\lambda}$.  Since
\begin{equation*}
  \norm{B\varphi}^2=\jap{\varphi,B^2\varphi}=\jap{\varphi,(1-A^2)\varphi}=(1-\lambda^2)\norm{\varphi}^2
\end{equation*}
we see that $V$ is norm preserving.

If $\psi \in \calH_{-\lambda}$, then, by the above, $\varphi \equiv (1-\lambda^2)^{-1}B\psi \in \calH_\lambda$ and $B\varphi = \psi$ so $\ran B \restriction \calH_\lambda$ is all of $\calH_{-\lambda}$ and thus $V$ is unitary.

(b) is immediate from (a)

(c) Lidskii's Theorem for self--adjoint operators says that if $C$ is a self--adjoint trace class operator, and for any $\lambda \ne 0$, we define $\calH_\lambda = \{\varphi\,|\,C\varphi = \lambda\varphi\}$, then
\begin{equation}\label{5.18}
  \tr(C) = \sum_{\lambda \ne 0} \lambda \dim(\calH_\lambda)
\end{equation}
For the self--adjoint case this is easy since $\tr (C) = \sum_{n=0}^{\infty} \jap{\psi_n,C\psi_n}$ for any trace class operator and any orthonormal basis (see \cite[Theorem 3.6.7]{OT}) and any self--adjoint compact operator has an orthonormal basis of eigenvectors (see \cite[Theorem 3.2.1]{OT}).  By (b), the terms of \eqref{5.18} for $\lambda$ and $-\lambda$ when $C=A$ cancel so long as $\lambda \ne \pm 1$, so
\begin{equation}\label{5.19}
  \tr(A) = \dim \calH_1 - \dim \calH_{-1} \in \bbZ
\end{equation}

(d) Since $\norm{A} < 1$, $B^2 = \bdone - A^2 \ge \epsilon >0$ for $\epsilon = 1-\norm{A}^2$.  Thus, $|B|$ is invertible and
\begin{equation}\label{5.20}
  U=B|B|^{-1}
\end{equation}
is unitary since $U=U^*$ and $U^2=B^2|B|^{-2} = \bdone$.

Moreover, since $|B|$ commutes with $A$ and $B$ (since $[B^2,P]=[B^2,Q]=0$) and $B$ anticommutes with $A$, we see that
\begin{equation}\label{5.21}
  UBU^{-1} = B, \qquad UAU^{-1} = -A
\end{equation}

Since
\begin{equation}\label{5.22}
  P = \tfrac{1}{2}(A-B+\bdone), \qquad Q = \tfrac{1}{2}(-A-B+\bdone)
\end{equation}
\eqref{5.21} implies \eqref{5.15}.  \end{proof}

We can also say something about non--self--adjoint projections on Hilbert spaces and also about the general Banach space case.  The spectral theory of general compact operators, $A$, is more subtle than the self--adjoint case (\cite[Section 3.3]{OT}).  One has that $\sigma(A)\setminus\{0\}$ is discrete, a notion explained in Section \ref{s2}.  Thus, if we define for $\lambda \in \sigma(A)\setminus\{0\}$
\begin{equation}\label{5.23}
  P_\lambda = \frac{1}{2\pi i}\oint_{|z-\lambda|=\delta} \frac{dz}{z-A}
\end{equation}
for $\delta < \dist(\lambda,\sigma(A)\setminus\{\lambda\})$ and $\calH_\lambda = \ran\, P_\lambda$, then $\dim(\calH_\lambda) < \infty$ and is called the algebraic multiplicity of $\lambda$.  Also, as explained in Section \ref{s2},
\begin{equation}\label{5.24}
  AP_\lambda = \lambda P_\lambda + N
\end{equation}
where $N$ is nilpotent, indeed $N^{\dim(\calH_\lambda)} = 0$ so
\begin{equation}\label{5.25}
  \varphi \in \calH_\lambda \Rightarrow (A-\lambda)^{\dim(\calH_\lambda)}\varphi = 0
\end{equation}

Lidskii's Theorem says that for trace class Hilbert space operators, \eqref{5.18} still holds.  Its proof \cite[Section 3.12]{OT} is more subtle.  Lidskii's Theorem doesn't hold on all Banach spaces (where there is an analog of the trace on a class known as nuclear operators).  We say that an operator, $C$ on a Banach space, $X$, obeys Lidskii's Theorem if $C$ is nuclear and obeys \eqref{5.18} -- see \cite{Piet, John} for discussions on when this holds.

\begin{theorem} \lb{T5.3} Let $P,Q$ be two projections on a Banach space, $X$, so that $A=P-Q$ is compact.  Then

(a) $\lambda \in \sigma(A)\setminus\{1,-1\} \Rightarrow -\lambda \in \sigma(A)$

(b) For such $\lambda$, we have that
\begin{equation}\label{5.26A}
  \dim \calH_{\lambda} = \dim \calH_{-\lambda}
\end{equation}

(c) If $\pm 1 \notin \sigma(A)$, then there exists an invertible map $U$ so that \eqref{5.15} holds.

(d) If $A$ obeys Lidskii's theorem, then $\tr(P-Q) \in \bbZ$.
\end{theorem}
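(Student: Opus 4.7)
The plan is to run the proof of Theorem \ref{T5.2} while replacing the spectral theorem by the Riesz--Dunford functional calculus, letting the algebraic identities of Section \ref{s5}---in particular $AB=-BA$ and $B^2=\bdone-A^2$ from \eqref{5.7}---do the heavy lifting. The key preliminary step, which substitutes for the eigenvector argument in Theorem \ref{T5.2}(a), is the intertwining $BP_\lambda=P_{-\lambda}B$ for every $\lambda\ne 0$ isolated in $\sigma(A)$, where $P_\lambda=\tfrac{1}{2\pi i}\oint_{|z-\lambda|=\delta}(z-A)^{-1}\,dz$ (and $P_\mu\equiv 0$ if $\mu\notin\sigma(A)$). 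This follows by rewriting $AB=-BA$ as $B(z-A)^{-1}=(z+A)^{-1}B$ and then substituting $w=-z$ in the contour integral: this sends $|z-\lambda|=\delta$ to $|w+\lambda|=\delta$ traversed counterclockwise (as one checks by tracking the argument of $w+\lambda$), and the two sign flips from $dz=-dw$ and $(z+A)^{-1}=-(w-A)^{-1}$ cancel. In particular $B$ carries $\calH_\lambda\equiv\ran P_\lambda$ into $\calH_{-\lambda}$.

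For parts (a) and (b), I use that on the finite--dimensional space $\calH_\lambda$ (Riesz theory for compact operators on a Banach space), $A-\lambda\bdone$ is nilpotent, hence so is $A^2-\lambda^2\bdone=(A-\lambda)(A+\lambda)$, and therefore $B^2\restriction\calH_\lambda=(1-\lambda^2)\bdone+N$ with $N$ nilpotent. Whenever $\lambda\ne\pm 1$ this is invertible, so $B\restriction\calH_\lambda$ is injective and lands in $\calH_{-\lambda}$, which forces $P_{-\lambda}\ne 0$: this gives (a). The same argument with $\lambda$ and $-\lambda$ swapped shows $B\restriction\calH_{-\lambda}$ is also injective, and since both spaces are finite dimensional, $B$ restricts to bijections between them, proving (b).

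For (c), I look for $U=BC$ with $C$ commuting with $A$ and $C^2=(\bdone-A^2)^{-1}$. The hypothesis $\pm 1\notin\sigma(A)$ is exactly $0\notin\sigma(\bdone-A^2)$, and since $\sigma(\bdone-A^2)$ is a countable compact subset of $\bbC\setminus\{0\}$ accumulating only at $1$, I can pick a ray from $0$ disjoint from it and define $C=(\bdone-A^2)^{-1/2}$ by the Riesz--Dunford calculus using the branch of $z^{-1/2}$ analytic off that ray. Then $C$ commutes with everything commuting with $A^2$, hence with $A$, $B$, $P$, $Q$. The one--line identity $QB=-QP=BP$ gives $BPB^{-1}=Q$, so $UPU^{-1}=BPB^{-1}=Q$; combined with $U^2=B^2C^2=\bdone$, this automatically forces $UQU^{-1}=P$, establishing \eqref{5.15}. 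Finally, (d) is immediate: Lidskii reads $\tr(A)=\sum_{\lambda\in\sigma(A)\setminus\{0\}}\lambda\dim\calH_\lambda$, and by (b) the terms at $\lambda$ and $-\lambda$ cancel whenever $\lambda\ne\pm 1$, leaving $\tr(P-Q)=\dim\calH_1-\dim\calH_{-1}\in\bbZ$.

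The only genuine subtlety is the bookkeeping in the first step: one must track the orientation of the substitution $w=-z$ carefully so that $BP_\lambda=+P_{-\lambda}B$ (rather than its negative), since a sign slip would kill the parity cancellation used in part (d). The Riesz--calculus construction of $(\bdone-A^2)^{-1/2}$ is a secondary technical point that is routine once $\sigma(\bdone-A^2)$ is known to avoid $0$; no self--adjointness is used anywhere past that.
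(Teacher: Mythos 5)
Your proof is correct and follows essentially the same line as the paper's for parts (a), (b), and (d): the intertwining $BP_\lambda=P_{-\lambda}B$ derived from $AB=-BA$ and the Riesz contour integral, together with the observation that $B^2\restriction\calH_\lambda=(1-A^2)\restriction\calH_\lambda$ is invertible when $\lambda\ne\pm1$ (your route via nilpotency of $A^2-\lambda^2$ on $\calH_\lambda$ is an equivalent packaging of the paper's remark that $\sigma(A\restriction\calH_\lambda)=\{\lambda\}$). Your careful bookkeeping of the orientation under the substitution $w=-z$ is correct and worth making explicit: the map is a rotation, so orientation is preserved, and the two sign flips from $dz=-dw$ and $(z+A)^{-1}=-(w-A)^{-1}$ cancel, yielding $BP_\lambda=+P_{-\lambda}B$.

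For (c) you do a bit more work than is needed. The hypothesis $\pm1\notin\sigma(A)$ already makes $B^2=\bdone-A^2=(1-A)(1+A)$ invertible, so $B$ itself is invertible with $B^{-1}=(1-A)^{-1}(1+A)^{-1}B$; combined with $BP=QB$ and $BQ=PB$ (which follow from \eqref{5.22} and $AB=-BA$), this gives \eqref{5.15} directly with $U=B$, which is the paper's choice. Your construction of $C=(\bdone-A^2)^{-1/2}$ via the Riesz--Dunford calculus on the complement of a ray avoiding the countable compact set $\sigma(\bdone-A^2)$ is legitimate and yields the bonus property $U^2=\bdone$ (mirroring $U=\mathrm{sgn}(B)$ in the self--adjoint case of Theorem \ref{T5.2}), but the statement of (c) only asks for an invertible $U$, so this refinement is optional; the paper's $U=B$ is shorter.
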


\begin{remark} (d) was proven by Kalton \cite{Kalt} using different methods.  The results (a)-(c) and the proof we give of (d) is new in the present paper.
\end{remark}

\begin{proof} (a),(b) For any $z \in \bbC$, we have that $B(A-z) = -(A+z)B$ so, if $z,-z \notin \sigma(A)$, we see that
\begin{equation}\label{5.26}
  B(A-z)^{-1} = -(A+z)^{-1}B
\end{equation}

Since $\sigma(A)\setminus\{0\}$ is a set of isolated points, for any $\lambda \ne 0$, we can find $\epsilon_\lambda > 0$ so that $\sigma(A) \cap \{z\,|\,0<|z-\lambda| \le \epsilon_\lambda\} = \emptyset$.  Taking into account that $z \mapsto -z$ reverses the direction of a contour, by picking $0<\delta<\min(\epsilon_\lambda,\epsilon_{-\lambda})$ in \eqref{5.23} and using \eqref{5.26}, we see that
\begin{equation}\label{5.27}
   BP_\lambda = P_{-\lambda}B
\end{equation}
where $P_\lambda$ is defined by \eqref{5.23} with $\delta$ small even if $\lambda \notin \sigma(A)$ (in which case $P_\lambda =0$).

Suppose $\lambda \ne \pm 1$.  Since $A$ leaves $\calH_\lambda$ invariant and $\sigma(A\restriction\calH_\lambda)=\{\lambda\}$, we have that $(1-A^2)=(1-A)(1+A)$ restricted to $\calH_\lambda$ has an inverse $R$.  Thus $RB$ is a left inverse to $B$ as a map of $\calH_\lambda \to \calH_{-\lambda}$ so $B$ as a map between those spaces is $1$--$1$.  This implies that $\dim \calH_\lambda \le \dim \calH_{-\lambda}$.  By interchanging $\lambda$ and $-\lambda$, we see that \eqref{5.26A} holds which implies (a) and (b).

(c) Since
\begin{equation*}
  BB=BB, \qquad BA=-AB
\end{equation*}
\eqref{5.22} implies that
\begin{equation}\label{5.28}
  BP=QB,\qquad BQ=PB
\end{equation}
We can take $U=B$ if we show that $B$ is invertible.  Since $\pm 1 \notin \sigma(A)$, we see that $(1-A)^{-1}(1+A)^{-1}B$ is a two sided inverse for B.

(d) From Lidskii's theorem and \eqref{5.26A}, we see that \eqref{5.19} holds.
\end{proof}

Our final result from the supersymmetric approach returns to the self--adjoint case.  We define for projections $P, Q$:
\begin{equation}\label{5.29}
  \calK_{P,Q} = \ran\, P \cap \ker Q = \{\varphi \,|\, P\varphi = \varphi, \, Q\varphi = 0\}
\end{equation}

\begin{theorem} \lb{T5.4} Let $P,Q$ be two self--adjoint projections on a Hilbert space, $\calH$.  Then there exists a unitary map, $U$, obeying \eqref{5.15} if and only if
\begin{equation}\label{5.30}
  \dim \calK_{P,Q} = \dim \calK_{1-P,1-Q}
\end{equation}
Moreover, if such a $U$ exists, one can choose it so that
\begin{equation}\label{5.31}
  U=U^*, \qquad U^2 = \bdone
\end{equation}
\end{theorem}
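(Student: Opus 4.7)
The plan is to identify $\calK_{P,Q}$ and $\calK_{1-P,1-Q}$ as the $\pm 1$ eigenspaces of $A=P-Q$ and then exploit the algebraic identities \eqref{5.7} to build $U$ piece by piece. Direct calculation via \eqref{5.22} shows $\calK_{P,Q}\subset\ker(A-\bdone)$; conversely, if $A\varphi=\varphi$ then $\|\varphi\|^2=\jap{\varphi,P\varphi}-\jap{\varphi,Q\varphi}=\|P\varphi\|^2-\|Q\varphi\|^2\le\|P\varphi\|^2\le\|\varphi\|^2$ forces $P\varphi=\varphi$ and $Q\varphi=0$, so $\calK_{P,Q}=\ker(A-\bdone)$ and symmetrically $\calK_{1-P,1-Q}=\ker(A+\bdone)$. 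Necessity is now immediate: any unitary $U$ satisfying \eqref{5.15} obeys $UAU^{-1}=-A$, hence restricts to a unitary isomorphism between the $\pm 1$ eigenspaces of $A$, forcing \eqref{5.30}.

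For sufficiency (together with the additional form \eqref{5.31}), I would orthogonally decompose $\calH=\calK_{P,Q}\oplus\calK_{1-P,1-Q}\oplus\calH'$. Since $B\varphi=\varphi-P\varphi-Q\varphi$ vanishes on $\calK_{P,Q}\oplus\calK_{1-P,1-Q}$ and $A$ preserves this sum, self-adjointness makes $\calH'$ invariant under both $A$ and $B$. On $\calH'$ the spectral projection of $A$ at $\{\pm 1\}$ is zero by construction, so by $B^2=\bdone-A^2$ the spectral projection of $B$ at $\{0\}$ is zero as well. Define $U_1:=\mathrm{sgn}(B)$ on $\calH'$ via the bounded Borel functional calculus; then $U_1$ is bounded self-adjoint with $U_1^2=\bdone$, i.e., a self-adjoint unitary involution. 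The relation $AB+BA=0$ gives $AB^n=(-B)^nA$ by induction, and hence $Af(B)=f(-B)A$ for any bounded Borel $f$; applied to $f=\mathrm{sgn}$ this yields $U_1A=-AU_1$, while $U_1$ trivially commutes with $B$. Combined with $P=\tfrac{1}{2}(\bdone+A-B)$ and $Q=\tfrac{1}{2}(\bdone-A-B)$ from \eqref{5.22}, these give $U_1PU_1=Q$ and $U_1QU_1=P$ on $\calH'$.

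On $\calK_{P,Q}\oplus\calK_{1-P,1-Q}$ the hypothesis \eqref{5.30} lets me choose a unitary $V:\calK_{P,Q}\to\calK_{1-P,1-Q}$; I then set $U_2(\varphi\oplus\psi):=V^*\psi\oplus V\varphi$, which is manifestly a self-adjoint unitary involution swapping the two summands. Since $(P,Q)$ acts as $(\bdone,0)$ on $\calK_{P,Q}$ and as $(0,\bdone)$ on $\calK_{1-P,1-Q}$, a direct check gives $U_2PU_2=Q$ and $U_2QU_2=P$. The desired operator is $U:=U_1\oplus U_2$, and by construction $U=U^*$ and $U^2=\bdone$.

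The main obstacle is precisely the $\pm 1$ part of $A$: the ``natural'' formula $U=\mathrm{sgn}(B)$ from \eqref{5.14} degenerates on $\calK_{P,Q}\oplus\calK_{1-P,1-Q}$ because $B\equiv 0$ there, so one cannot avoid matching these eigenspaces by hand, which is exactly where \eqref{5.30} must enter as both necessary and sufficient. A smaller technical point is verifying that $\mathrm{sgn}(B)$ is genuinely unitary, not merely a partial isometry, on $\calH'$; this rests on the clean observation that for self-adjoint $A$ the eigenspaces $\ker(A\mp\bdone)$ already capture the full spectral subspaces at $\{\pm 1\}$, so excising the two $\calK$'s also excises any possible continuous spectrum of $A$ at $\pm 1$, and hence of $B$ at $0$.
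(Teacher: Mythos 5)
Your proof is correct and follows essentially the same strategy as the paper's (which it attributes to Simon \cite{SiPairs}): split $\calH$ into $\calK_{P,Q}\oplus\calK_{1-P,1-Q}$ and its orthogonal complement, use an off-diagonal block built from a unitary $\calK_{P,Q}\to\calK_{1-P,1-Q}$ on the first summand, and $\mathrm{sgn}(B)$ on the second, exploiting $AB+BA=0$ and \eqref{5.22}. One small clarification on your final paragraph: what makes $\mathrm{sgn}(B)$ unitary on $\calH'$ is simply that $\ker B = \ker(\bdone-A^2) = \ker(A-\bdone)\oplus\ker(A+\bdone)$ has been excised, so $\ker B\restriction\calH'=\{0\}$; the point spectral projection $\chi_{\{0\}}(B)$ is exactly the projection onto $\ker B$ regardless of whether $0$ remains in the continuous spectrum of $B$ on $\calH'$, so no claim about continuous spectrum near $0$ is needed (and indeed none would be true in general).
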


\begin{remarks} 1.  In \eqref{5.30}, both sides may be infinite.

2. If $\pm 1$ are isolated points of the spectrum of $A$ and are discrete eigenvalues, then $K:\ran\, P \to \ran Q$ by $K=QP\restriction\ran\, P$ is a Fredholm operator \cite[Section 3.15]{OT}, both sides of \eqref{5.30} are finite and their difference is the index of $K$.  So, in this case, the theorem says that $U$ obeying \eqref{5.15} exists if and only if index$(K)=0$.  This special case is in \cite{ASS}.

3.  The general case of this theorem is due to Wang, Du and Dou \cite{WDD} whose proof used the Halmos representation discussed below.  Our proof here is from Simon \cite{SiPairs}. Two recent papers \cite{classify1, classify2} classify all solutions of \eqref{5.15}

4.  Operators obeying \eqref{5.31} are called symmetries by Halmos--Kakutani \cite{HalKak}
\end{remarks}

\begin{proof} If $U$ exists, it is easy to see that $U$ must be a unitary map of $\calK_{P,Q}$ to $\calK_{1-P,1-Q}$, so \eqref{5.30} must hold.

For the converse, suppose that \eqref{5.30} holds.  Clearly, $P, Q$ leave both $\calK_{P,Q}$ and $\calK_{1-P,1-Q}$ invariant and so $\calH_1 = \calK_{P,Q} \oplus \calK_{1-P,1-Q}$.  Let $\calH_2 = \calH_1^\perp$ so $\calH=\calH_1\oplus\calH_2$.  Since \eqref{5.30} is assumed, there exists $W:\calK_{P,Q} \to \calK_{1-P,1-Q} $ unitary and onto.  Define on $\calH_1$ as a direct sum
\begin{equation*}
  U_1 = \left(
          \begin{array}{cc}
            0 & W \\
            W^* & 0 \\
          \end{array}
        \right)
\end{equation*}
Then $U_1^2=\bdone$ and $U_1^* = U_1$ and for the restrictions of $P,Q$ to $\calH_1$, we have that $U_1P_1U_1^{-1} = Q_1, \, U_1Q_1U_1^{-1} = P_1$.

So it suffices to prove the result for $\calH_2$, i.e. in the special case that $\calK_{P,Q} = \calK_{1-P,1-Q} = \{0\}$.  If that holds, we have that $\ker(1-A^2) = \{0\}$, so $\ker(B) = \{0\}$ and $U_2 \equiv \mathrm{sgn}(B)$ is unitary.  Since $U_2A_2=-A_2U_2,\, U_2B_2=B_2U_2$, we get that $U_2P_2U_2^{-1} = Q_2,\, U_2Q_2U_2^{-1} = P_2$ by \eqref{5.22}.  Clearly, also $U_2^2=\bdone,\, U_2^*=U_2$. \end{proof}

Our final big topic in this section concerns the Halmos representation.  As a first step, we note that

\begin{proposition} \lb{P5.5}  Let $P,Q$ be two orthogonal projections on a Hilbert space, $\calH$ and let $A, B$ be given by \eqref{5.5}.  Then:

(a) $\calK_{P,Q} = \{\varphi\,|\,A\varphi=\varphi\}, \qquad  \calK_{1-P,1-Q} = \{\varphi\,|\,A\varphi=-\varphi\}$

(b)  $\calK_{P,1-Q} = \{\varphi\,|\,B\varphi=-\varphi\},\qquad \calK_{1-Q,P} = \{\varphi\,|\,B\varphi=\varphi\}$

(c) $\calK_{P,1-Q}\oplus\calK_{1-Q,P}= \{\varphi\,|\,A\varphi=0\}$ \\
\null\qquad\quad $\calK_{P,Q}\oplus\calK_{1-P,1-Q}= \{\varphi\,|\,B\varphi=0\}$

(d) These four spaces are mutually orthogonal.

(e) All four spaces are $\{0\}$ if and only if $\ker A=\ker B = \{0\}$.
\end{proposition}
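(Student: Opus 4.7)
The unifying idea is that the four subspaces in question are precisely the joint $(\pm 1,0)$ and $(0,\pm 1)$ eigenspaces of the self-adjoint pair $(A,B)$, and that knowing the pair $(A\varphi, B\varphi)$ is equivalent to knowing the pair $(P\varphi, Q\varphi)$. The plan is to exploit the identity $A^2+B^2=\bdone$ together with the self-adjointness of $A$ and $B$ to convert the one-sided defining conditions (e.g.\ $A\varphi=\varphi$) into two-sided ones (e.g.\ $A\varphi=\varphi$ and $B\varphi=0$), and then read off the $P,Q$ values from the linear relations $2P=\bdone+A-B$, $2Q=\bdone-A-B$.

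For (a), I will first observe that $\varphi\in\calK_{P,Q}$ (i.e.\ $P\varphi=\varphi$, $Q\varphi=0$) gives $A\varphi=\varphi$ directly. The converse is the key step: if $A\varphi=\varphi$, then $A^2\varphi=\varphi$, so $B^2\varphi=(\bdone-A^2)\varphi=0$, and self-adjointness of $B$ forces $B\varphi=0$ via $\|B\varphi\|^2=\jap{\varphi,B^2\varphi}=0$. From $A\varphi=\varphi$ and $B\varphi=0$, the relations $2P=\bdone+A-B$ and $2Q=\bdone-A-B$ give $P\varphi=\varphi$, $Q\varphi=0$, so $\varphi\in\calK_{P,Q}$. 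Replacing $(P,Q)$ by $(\bdone-P,\bdone-Q)$ (which sends $A\mapsto -A$ and $B\mapsto B$) gives the statement for $\calK_{1-P,1-Q}$. Part (b) is identical in structure, with $B$ playing the role of $A$: $\calK_{P,1-Q}=\ran P\cap\ran Q$ means $P\varphi=Q\varphi=\varphi$, which is equivalent to $B\varphi=-\varphi$ by the same self-adjointness argument applied to $B^2\varphi=\varphi\Rightarrow A^2\varphi=0\Rightarrow A\varphi=0$.

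For (c), once (a) and (b) are in hand, the two claimed decompositions follow from the spectral theorem for the self-adjoint operators $A$ and $B$. Explicitly, $A\varphi=0$ implies $B^2\varphi=\varphi$, so $\varphi$ lies in the $\pm 1$ eigenspaces of $B$; combined with (b), this yields $\ker A=\calK_{P,1-Q}\oplus\calK_{1-Q,P}$. The analogous computation gives $\ker B=\calK_{P,Q}\oplus\calK_{1-P,1-Q}$. For (d), each of the four spaces is labeled by a distinct pair of eigenvalues of $(A,B)$, namely $(1,0),(-1,0),(0,-1),(0,1)$; any two of these pairs differ in either the $A$ coordinate or the $B$ coordinate, and since $A$ and $B$ are self-adjoint, eigenspaces for distinct eigenvalues of either are orthogonal. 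Finally (e) is immediate from (c): the four subspaces all vanish iff $\ker A=\ker B=\{0\}$.

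There is no real obstacle here — the whole proposition is a bookkeeping exercise built around the single nontrivial observation that $A^2+B^2=\bdone$ with $A,B$ self-adjoint promotes a one-eigenvalue condition on $A$ (or $B$) into a joint condition on both. The only thing to be careful about is using self-adjointness of $B$ (not just $B^2\varphi=0$ in operator form) to conclude $B\varphi=0$; after that the algebra $2P=\bdone+A-B$, $2Q=\bdone-A-B$ does all the remaining work.
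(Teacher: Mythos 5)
Your proof is correct and has the same overall architecture as the paper's (read off the four spaces as joint eigenspaces, use orthogonality of eigenspaces for self-adjoint operators for (d), and deduce (c), (e) from (a), (b)). The one place where you take a genuinely different route is the converse step in (a) and (b): you derive $B\varphi=0$ from $A\varphi=\varphi$ purely algebraically, via $A^2+B^2=\bdone$ and $\norm{B\varphi}^2=\jap{\varphi,B^2\varphi}$, then read off $P\varphi,Q\varphi$ from $2P=\bdone+A-B$, $2Q=\bdone-A-B$. The paper instead uses the order structure directly: from $A\varphi=\varphi$ it writes $\jap{\varphi,P\varphi}=\norm{\varphi}^2+\jap{\varphi,Q\varphi}$ and then invokes $P\le\bdone$ together with $Q\ge 0$ to force $\jap{\varphi,Q\varphi}=0$, hence $Q\varphi=0$. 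Both arguments are short; yours unifies (a) and (b) more visibly under the single ``supersymmetric Pythagoras'' identity $A^2+B^2=\bdone$, at the cost of needing that identity (which the paper's (a) does not), while the paper's argument for (a) is self-contained from positivity of projections. For (c), where you appeal to the spectral theorem to decompose $\varphi$ with $B^2\varphi=\varphi$, the paper instead exhibits the explicit idempotents $\varphi_\pm=\tfrac{1}{2}(\bdone\mp B)\varphi$; your appeal is valid, but the explicit decomposition is more elementary and avoids any implicit invariance check. All the steps you gave go through.
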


\begin{proof} (a) $P \le \bdone,\,Q \ge 0$ so $A\varphi=\varphi \Rightarrow  \norm{\varphi}^2 \ge \jap{\varphi,P\varphi} = \norm{\varphi}^2+\jap{\varphi,Q\varphi} \Rightarrow \jap{\varphi,Q\varphi} = 0 \Rightarrow \jap{Q\varphi,Q\varphi} = 0 \Rightarrow Q\varphi = 0 \Rightarrow$ (since $(P-Q)\varphi=\varphi$) $P\varphi=\varphi \Rightarrow \varphi \in \calK_{P,Q}$. Conversely, $\varphi \in \calK_{P,Q} \Rightarrow P\varphi=\varphi \textrm{ \& } Q\varphi=0 \Rightarrow A\varphi=\varphi$.  The proof of the second statement is similar.

(b) Similar to (a) using $B=(1-P)-Q$.

(c) The two spaces in the first statement are orthonormal by (b) and the mutual orthogonality of eigenspaces.  Since $A^2\varphi=(1-B^2)\varphi$, that direct sum is $\ker A^2=\ker A$.  Conversely, if $A\varphi = 0$, then $(1-B^2)\varphi=A^2\varphi=0$.  If $\varphi_\pm = \tfrac{1}{2}(1 \mp B)\varphi$, then $\varphi_\pm \in \ker(1\pm B)$ and $\varphi=\varphi_+ + \varphi_-$, so by (b), $\varphi \in \calK_{P,1-Q}\oplus\calK_{1-Q,P}$.  The second relation has a similar proof.

(d) Immediate from the orthogonality of different eigenspaces of a self--adjoint operator.

(e) Immediate from (c).
\end{proof}

We say that two orthogonal projections are in \emph{generic position} if $\ker A=\ker B = \{0\}$, equivalently if $\calK_{P,Q}, \calK_{1-P,1-Q}, \calK_{P,1-Q}, \calK_{1-Q,P}$  are all $\{0\}$.  The Halmos two projection theorem says

\begin{theorem} [Halmos Two Projection Theorem] \lb{T5.6}  Let $P,Q$ be self--adjoint projections on a Hilbert space, $\calH$ which are in generic position.  Let $\calB_1=\ran\, P,\,\calB_2 = \ran(1-P)$.  Then there exists a unitary map $W$ from $\calB_1$ onto $\calB_2$ and self--adjoint operators $C>0, \, S>0$ on $\calB_1$ with
\begin{equation}\label{5.32}
  C^2+S^2 = \bdone, \qquad [C,S] = 0
\end{equation}
so that under $\calH = \calB_1\oplus\calB_2$,
\begin{equation}\label{5.34}
  P = \left(
        \begin{array}{cc}
          \bdone & 0 \\
          0 & 0 \\
        \end{array}
      \right)
\end{equation}
\begin{equation}\label{5.33}
  Q = \left(
        \begin{array}{cc}
          C^2 & CSW^{-1} \\
          WCS & WS^2W^{-1} \\
        \end{array}
      \right)
\end{equation}
\end{theorem}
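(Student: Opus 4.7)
The form of $P$ in \eqref{5.34} is immediate from the decomposition $\calH = \calB_1 \oplus \calB_2$. My strategy is to read off the entries of $Q$ in this block decomposition using $Q = Q^*$, $Q^2 = Q$, and the generic position hypothesis. Writing
\begin{equation*}
Q = \begin{pmatrix} Q_{11} & Q_{12} \\ Q_{21} & Q_{22} \end{pmatrix},
\end{equation*}
self-adjointness yields $Q_{ii}^* = Q_{ii}$ and $Q_{21} = Q_{12}^*$, while $Q^2 = Q$ expanded block-by-block gives
\begin{equation*}
Q_{11}^2 + Q_{12}Q_{21} = Q_{11}, \quad Q_{22}^2 + Q_{21}Q_{12} = Q_{22}, \quad Q_{21}Q_{11} + Q_{22}Q_{21} = Q_{21},
\end{equation*}
together with the adjoint of the last identity.

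I next extract the consequences of generic position. A short computation of $A^2 = P + Q - PQ - QP$ in block form, simplifying via the projection relations above, gives
\begin{equation*}
A^2 = \begin{pmatrix} \bdone - Q_{11} & 0 \\ 0 & Q_{22} \end{pmatrix}, \qquad B^2 = \bdone - A^2 = \begin{pmatrix} Q_{11} & 0 \\ 0 & \bdone - Q_{22} \end{pmatrix},
\end{equation*}
in agreement with $[P, A^2] = [P, B^2] = 0$ noted in the previous section. Since $A$ and $B$ are self-adjoint, $\ker A = \ker A^2$ and $\ker B = \ker B^2$, so the generic position assumption $\ker A = \ker B = \{0\}$ forces $Q_{11}$, $\bdone - Q_{11}$, $Q_{22}$, and $\bdone - Q_{22}$ all to have trivial kernel. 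Define $C = Q_{11}^{1/2}$ and $S = (\bdone - Q_{11})^{1/2}$ via the functional calculus on $\calB_1$: both are bounded, self-adjoint, and injective (hence $> 0$ in the required sense), they commute as functions of the same operator, and $C^2 + S^2 = \bdone$ by construction.

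It remains to produce the unitary $W \colon \calB_1 \to \calB_2$. From the first projection relation, $Q_{12}Q_{21} = Q_{11} - Q_{11}^2 = C^2 S^2 = (CS)^2$, and $CS \geq 0$, so $|Q_{21}| = (Q_{21}^* Q_{21})^{1/2} = CS$; polar decomposition writes $Q_{21} = W \cdot CS$ with $W$ a partial isometry whose initial space is $\overline{\ran\, CS}$ and whose final space is $\overline{\ran\, Q_{21}}$. The crux of the argument is verifying that $W$ is actually unitary from $\calB_1$ \emph{onto} $\calB_2$: the initial space is all of $\calB_1$ because $C$ and $S$ are both injective, hence $\ker CS = \{0\}$; the final space is all of $\calB_2$ because the second projection relation gives $Q_{21}Q_{12} = Q_{22}(\bdone - Q_{22})$, which has trivial kernel on $\calB_2$ by the generic position analysis, so $\ker Q_{12} = \{0\}$ and $\overline{\ran\, Q_{21}} = (\ker Q_{12})^\perp = \calB_2$. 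Taking adjoints of $Q_{21} = W CS$ yields $Q_{12} = CS W^{-1}$, supplying the off-diagonal block of \eqref{5.33}. Finally, the relation $Q_{22}Q_{21} = Q_{21}(\bdone - Q_{11}) = Q_{21} S^2$ substituted with $Q_{21} = W CS$ and $[C,S] = 0$ gives $Q_{22} W \cdot CS = W S^2 \cdot CS$; density of $\ran\, CS$ in $\calB_1$ then upgrades this to $Q_{22} W = W S^2$, i.e., $Q_{22} = W S^2 W^{-1}$, completing the identification of all four blocks.
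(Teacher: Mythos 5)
Your proof is correct, but it follows a genuinely different route from the one in the paper (attributed there to Amrein--Sinha). The paper's proof works with the full operators $A = P-Q$ and $B = 1-P-Q$: it takes polar decompositions $A = U_A|A|$, $B = U_B|B|$, uses the anticommutation $AB+BA=0$ to deduce $U_AU_B = -U_BU_A$, and then shows that the unitary $U = U_AU_B$ simultaneously interchanges $P \leftrightarrow 1-P$ and $Q \leftrightarrow 1-Q$ on all of $\calH$; the desired $W$ is $U\restriction\calB_1$, and one sets $C = |B|\restriction\calB_1$, $S = |A|\restriction\calB_1$. You instead block-decompose $Q$ directly, extract the diagonal entries as $C^2$ and $S^2$ from the identities $Q^2=Q$ and $A^2 = 1 - B^2$, and obtain $W$ purely locally as the polar isometry of the single off-diagonal block $Q_{21}$, proving it is unitary by the injectivity of $CS$ on the left and of $Q_{12}$ on the right. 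Your route is closer to Halmos's original argument and is more self-contained -- it never invokes the supersymmetric identity $AB+BA=0$ -- at the cost of not exhibiting the full ambient symmetry $U$ on $\calH$ that the paper's approach delivers for free. Both approaches produce the same $C$, $S$, $W$ in the end, since the paper's computation shows $P|B|^2P = PQP$, i.e.\ $|B|^2\restriction\calB_1 = Q_{11}$, matching your $C^2 = Q_{11}$.
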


\begin{remarks} 1.  There are alternate ways that this theorem is often expressed.  Rather than state it for pairs with generic position, the theorem says that the space is a direct sum of six spaces, two of the form just given and the other four simultaneous eigenspaces with $A\varphi = \lambda\varphi,\, B\varphi=\kappa\varphi$  with $\lambda,\kappa \in \{0,1\}$.  Sometimes, \eqref{5.33} is written:
\begin{equation*}
  Q = \left(
        \begin{array}{cc}
          \bdone & 0 \\
          0 & W \\
        \end{array}
      \right)                   \left(
                                  \begin{array}{cc}
                                    C^2 & CS \\
                                    CS & S^2 \\
                                  \end{array}
                                \right)                  \left(
                                                           \begin{array}{cc}
                                                             \bdone & 0 \\
                                                             0 & W \\
                                                           \end{array}
                                                         \right)^{-1}
\end{equation*}
where the first factor maps $\calB_1\oplus\calB_1$ to $\calB_1\oplus\calB_2$ and the middle factor is an operator on $\calB_1\oplus\calB_1$.  Some authors even implicitly use the first matrix above to identify $\calH$ with $\calB_1\oplus\calB_1$ and only write the middle factor above.

2.  $C$ and $S$ stand, of course, for $\textrm{cosine}$ and $\textrm{sine}$.  One often defines an operator, $\Theta$ with spectrum in $[0,\pi/2]$ so that $C=\cos(\Theta),\,S=\sin(\Theta)$.  While $0$ and/or $1$ may lie in the spectrum of $\Theta$, they cannot be eigenvalues.

3.  This result is due to Halmos \cite{Hal}.  There were earlier related results by Krein et. al. \cite{Krein}, Dixmier \cite{Dix} and Davis \cite{Davis1958}.  The proof we give here is due to Amrein--Sinha \cite{AS}.
\end{remarks}

\begin{proof} By the above
\begin{equation}\label{5.35}
  \ker A = \ker B = \{0\}
\end{equation}
Write the polar decompositions \cite[Section 2.4]{OT}
\begin{equation}\label{5.36}
  A = U_A|A|,\qquad B=U_B|B|
\end{equation}
By \eqref{5.35}, $U_A$ and $U_B$ are unitary and as functions of $A$ and $B$ respectively, they commute with $A$ and $B$ respectively.  It also holds that they each commute with both $|A|$ and $|B|$ (since, for example, $|B|$ commutes with $A$ and so $|A|$ and so $U_A = s-\lim A(|A|+\epsilon)^{-1}$).  Multiplying $AB+BA$ by $(|A|+\epsilon)^{-1}$ and $(|B|+\epsilon)^{-1}$ and taking $\epsilon$ to zero, we see that
\begin{equation}\label{5.37}
  U_AU_B=-U_BU_A \Rightarrow (U_AU_B)^2 = -\bdone
\end{equation}

We've already seen that $\cdot \mapsto U_A \cdot U_A^{-1}$ interchanges $P$ and $Q$. Since $B$ is the $A$ when $P$ is replaced by $1-P$, we see that $\cdot \mapsto U_B \cdot U_B^{-1}$ interchanges $Q$ and $1-P$ and similarly, it interchanges $1-Q$ and $P$.

Let $U=U_AU_B$.  Then we have that
\begin{align}
  UPU^{-1}=(1-P), \qquad & U(1-P)U^{-1}=P \nonumber \\
  UQU^{-1}=(1-Q), \qquad & U(1-Q)U^{-1}=Q \lb{5.39}
\end{align}
which, in particular, implies that $U[\calB_1]$ is all of $\calB_2$ (so they have the same dimension).

Define $W=U\restriction \calB_1$ which we've just seen is a unitary map from $\calB_1$ onto $\calB_2$.  In the $\calB_1\oplus\calB_2$ decomposition, \eqref{5.34} is obvious.  Moreover the decomposition of $Q$ is
\begin{equation}\label{5.40}
  Q=\left(
      \begin{array}{cc}
        PQP\restriction\calB_1 & PQ(1-P)\restriction\calB_2 \\
        (1-P)QP\restriction\calB_1 & (1-P)Q(1-P)\restriction\calB_2 \\
      \end{array}
    \right)
\end{equation}

By the formula for $B$, $BP=-QP$, so $P|B|^2P=PB^2P=PQP$.  Similarly $(1-P)|A|^2(1-P) = (1-P)Q(1-P)$, $PBA(1-P)=PQ(1-P)$ and $(1-P)ABP=(1-P)QP$.

$P|B|^2P$ is already an operator on $\calB_1$.  Using $[U,|A|^2] = 0$, we can write
\begin{equation*}
  (1-P)|A|^2(1-P)=UPU^{-1}|A|^2UPU^{-1}=UP|A|^2PU^{-1}
\end{equation*}
Next note that $UP|A|^2U^{-1}\restriction\calB_2 = W(|A|^2\restriction\calB_1)W^{-1}$.  If we define
\begin{equation}\label{5.41}
  C=|B|\restriction\calB_1,\qquad S=|A|\restriction\calB_1
\end{equation}
then the above calculation and similar calculations on the off--diagonal piece implies \eqref{5.33}.  \end{proof}

B\"{o}ttcher--Spitkovsky \cite{Bott2010} is a review article on lots of applications of the Halmos representation.  We mention also Lenard \cite{Len} who computes the joint numerical range (i.e. $\{(\jap{\varphi,P\varphi},\jap{\varphi,Q\varphi})\,|\,\norm{\varphi}=1\}$) for pairs of projections in terms of the operator $\Theta$ of remark 2 to Theorem \ref{T5.6}.  This range is a union of certain ellipses.

Finally, we mention one result that Kato proved in 1960 \cite{KatoSpectralRadius} that turns out to be connected to pairs of self--adjoint projections, although Kato didn't himself mention or exploit this connection.

\begin{theorem} \lb{T5.7} Let $\Pi$ be a general (i.e. not necessarily self--adjoint) projection in a Hilbert space, $\calH$.  Suppose that $\Pi \ne 0,\bdone$.  Then
\begin{equation}\label{5.43}
  \norm{\Pi}=\norm{\bdone-\Pi}
\end{equation}
\end{theorem}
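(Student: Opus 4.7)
The plan is to prove the theorem by identifying $\norm{\Pi}$ with a quantity that is manifestly symmetric under the swap $\Pi \leftrightarrow \bdone - \Pi$, namely a ``minimum angle'' between $M \equiv \ran \Pi$ and $N \equiv \ker \Pi$. First I would set up the framework: since $\Pi$ is a bounded projection with $\Pi \ne 0, \bdone$, both $M$ and $N$ are nontrivial closed subspaces with $M \cap N = \{0\}$ and $M + N = \calH$, and every $v$ decomposes uniquely as $v = \Pi v + (\bdone - \Pi) v$ with $\Pi v \in M$ and $(\bdone - \Pi)v \in N$. Crucially, $\bdone - \Pi$ is the projection onto $N$ along $M$, so it plays the role of $\Pi$ with $(M, N)$ interchanged.

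Next I would introduce the symmetric invariant
\[
  c \equiv \sup\bigl\{ |\jap{m,n}| : m \in M,\, n \in N,\, \norm{m}=\norm{n}=1 \bigr\},
\]
which is visibly symmetric in $M$ and $N$. Boundedness of $\Pi$ forces $c < 1$: otherwise one could find unit vectors $m_k \in M$ and $n_k \in N$ with $\jap{m_k,n_k}$ approaching a unimodular scalar, and after a phase adjustment in $n_k$ we would have $\norm{m_k - n_k}^2 = 2 - 2\Real \jap{m_k, n_k} \to 0$, yet $\Pi(m_k - n_k) = m_k$ has norm $1$, contradicting $\norm{\Pi} < \infty$.

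The heart of the proof is the identity $\norm{\Pi}^2 = (1 - c^2)^{-1}$. Using the decomposition above,
\[
  \norm{\Pi}^2 = \sup_{m \in M,\, n \in N,\, m \ne 0} \frac{\norm{m}^2}{\norm{m+n}^2} = \Bigl( \inf_{\norm{m}=1,\, n \in N} \norm{m+n}^2 \Bigr)^{-1}.
\]
For fixed unit $m \in M$ and fixed unit $n_0 \in N$ I would optimize over $\lambda \in \bbC$: expanding $\norm{m+\lambda n_0}^2 = |\lambda|^2 + 2\Real(\bar\lambda \jap{n_0,m}) + 1$ and choosing $\lambda = -\jap{n_0, m}$ yields minimum $1 - |\jap{n_0,m}|^2$. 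Then infimizing over unit $n_0 \in N$ and unit $m \in M$ replaces $|\jap{n_0, m}|^2$ by its supremum $c^2$, giving $\inf \norm{m+n}^2 = 1 - c^2$ and hence $\norm{\Pi}^2 = (1-c^2)^{-1}$.

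Finally, applying exactly the same formula to $\bdone - \Pi$, the projection onto $N$ along $M$, yields $\norm{\bdone - \Pi}^2 = (1 - c(N,M)^2)^{-1}$. Since the defining supremum is symmetric in the two subspaces, $c(M,N) = c(N,M)$, so the two operator norms coincide, proving \eqref{5.43}. The main obstacle is executing the optimization cleanly to obtain $(1-c^2)^{-1}$; once that identity is in hand, the desired equality is simply the symmetry of the inner product.
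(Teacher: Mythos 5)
Your proof is correct, and it follows the Ljance route that the paper sketches immediately after stating the theorem. Your constant $c$ is precisely $\norm{PQ}$ for $P,Q$ the orthogonal projections onto $\ran\Pi$ and $\ker\Pi$, since $\norm{PQ}=\sup_{n\in N,\,\norm{n}=1}\norm{Pn}=\sup\{|\jap{m,n}|:m\in M,\,n\in N,\,\norm{m}=\norm{n}=1\}$; so your identity $\norm{\Pi}^2=(1-c^2)^{-1}$ is exactly Ljance's formula \eqref{5.44}. The one real difference is that the paper only cites that formula (pointing to B\"ottcher--Spitkovsky), whereas you derive it from scratch by completing the square in $\inf_{\lambda\in\bbC}\norm{m+\lambda n_0}^2 = 1-|\jap{n_0,m}|^2$ and then optimizing over unit vectors; that makes your argument self-contained, which is a genuine gain over what is printed.
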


Kato has this as a Lemma in a technical appendix to \cite{KatoSpectralRadius}, but it is now regarded as a significant enough result that Szyld \cite{Szy} wrote an article to advertise it and explain myriad proofs (\cite{Bott2010} also discusses proofs).  Del Pasqua \cite{DP} and Ljance \cite{Lj} found proofs slightly before Kato but the methods are different and independent; indeed, for many years, no user of the result seemed to know of more than one of these three papers.

Ljance's proof \cite{Lj} shows a close connection to pairs of projections.  Let $P$ be the orthogonal projection on $\ran(\Pi)$ and $Q$ the orthogonal projection onto $\ran(\bdone-\Pi)$  ($P$ and $Q$ must obey $\ker(P)\cap\ker(Q) = \ker(1-P)\cap\ker(1-Q)=\{0\}$ and every such pair of orthogonal projections corresponds to an oblique projection $\Pi$).  Then one can show Ljance's formula (see \cite{Bott2010})
\begin{equation}\label{5.44}
  \norm{\Pi} = \frac{1}{(1-\norm{PQ}^2)^{1/2}}
\end{equation}
so that \eqref{5.43} follows from $\norm{QP}=\norm{(QP)^*}=\norm{PQ}$.

Del Pasqua \cite{DP} noted that \eqref{5.43} might fail in general Banach spaces -- indeed, it is now known \cite{Gur} that if \eqref{5.43} holds for all projections in a Banach space, X, then its norm comes from an inner product.

\section{Eigenvalue Perturbation Theory, V: Temple--Kato Inequalities} \lb{s6}

While strictly speaking the central material in this section is not so much about perturbation theory as variational methods, the subjects are related as Kato mentioned in several places, so we put it here.  In fact, following Kato, we'll see the inequalities proven here can be used to prove certain irregular perturbations yield asymptotic perturbation series. Kato also had several other papers about variational methods for scattering phase shifts \cite{KatoVar1,KatoVar2,KatoVar3} and for an aspect of Thomas--Fermi theory \cite{KatoTF} (not the energy variational principle central to TF theory but one concerning a technical issue connected to the density at the nucleus).  But none of these other papers had the impact of the work we discuss in this review, so we will not discuss them further.

Let $A$ be a self--adjoint operator bounded from below and $\norm{\varphi}=1$ with $\varphi \in D(A)$.  Then Rayleigh's principle says that
\begin{equation}\label{6.1}
  \lambda \equiv \inf \sigma(A) \le \jap{\varphi,A\varphi}\equiv \eta_\varphi
\end{equation}

In 1928, Temple \cite{Temple1928a,Temple1928b} proved a complementary lower bound in case
\begin{equation}\label{6.2}
  \sigma(A) \subset \{\lambda\} \cup [\mu,\infty)
\end{equation}
with $\mu > \lambda$ and $\lambda$ a simple eigenvalue.  So long as
\begin{equation}\label{6.3}
  \eta_\varphi < \mu
\end{equation}
we have \emph{Temple's inequality}
\begin{equation}\label{6.4}
  \lambda \ge \eta_\varphi - \frac{\epsilon_\varphi^2}{\mu-\eta_\varphi}
\end{equation}
where $\epsilon_\varphi \ge 0$ and
\begin{equation}\label{6.5}
   \epsilon_\varphi^2 \equiv \norm{(A-\eta_\varphi)\varphi}^2 = \jap{\varphi,A^2\varphi} -\jap{\varphi,A\varphi}^2
\end{equation}

Temple's inequality had historical importance.  Before the advent of modern computers, variational calculations were difficult and estimating their accuracy was important.  If $\mu^* \le \mu$ (i.e. if one had a possibly crude lower bound on the second eigenvalue), then \eqref{6.2}/\eqref{6.4} $\Rightarrow |\lambda-\eta_\varphi| \le \epsilon_\varphi^2(\mu^*-\eta_\varphi)^{-1}$ so long as $\eta_\varphi < \mu^*$.  One of the early success of perturbation theoretic quantum electrodynamics was the calculation of the Lamb shift in Hydrogen.  That was possible because the unshifted Hydrogen ground state was known precisely.  To check the Lamb shift in Helium, one needed to know its ground state to very high order (the Lamb shift is about one hundred thousandth of that binding energy).  The necessary calculations were done by Kinoshita \cite{Kino} and Pekeris \cite{Pek1,Pek2,Pek3} using variational calculations which in Pekeris' case involved 1078 parameter trial functions.  They used Temple's inequality to estimate how accurately they had computed this ground state energy.  In fact, Kinoshita sketched a proof of Temple's inequality in his paper using Kato's method (he quoted Kato's paper).   The result was the verification of the Lamb shift in Helium to within experimental error.

In 1949, Kato \cite{KatoTemple1} (with an announcement in Physical Review \cite{KatoTempleAnoun}) in one of his little gems found a simple proof of Temple's inequality and also extended the result to any eigenvalue.  Here is his theorem:

\begin{theorem} [Temple--Kato inequality] \lb{T6.1}  Let $A$ be any self--adjoint operator and let $\varphi \in D(A)$.  Let $(\alpha,\zeta) \subset \bbR$ so that
\begin{equation}\label{6.5}
  \alpha < \eta_\varphi < \zeta
\end{equation}
and so that
\begin{equation}\label{6.6}
  \epsilon_\varphi^2 < (\eta_\varphi - \alpha)(\zeta-\eta_\varphi)
\end{equation}
Then:
\begin{equation*}
  \null\hspace{-3.5cm} \textrm{(a)} \qquad\qquad   \sigma(A) \cap (\alpha,\zeta) \ne \emptyset
\end{equation*}
If $\sigma(A) \cap (\alpha,\zeta)$ contains only a single point, $\lambda$, then
\begin{equation}\label{6.7}
  \null\hspace{-1cm}\textrm{(b)} \qquad\qquad\eta_\varphi-\frac{\epsilon_\varphi^2}{\zeta-\eta_\varphi} \le \lambda \le \eta_\varphi + \frac{\epsilon_\varphi^2}{\eta_\varphi-\alpha}
\end{equation}
If, in addition, $\lambda$ is a simple eigenvalue with associated eigenvector, $\psi$, with $\norm{\psi}  = 1$ and $\jap{\psi,\varphi} \ge 0$ and if $\epsilon_\varphi < \delta \equiv \min(\eta_\varphi-\alpha,\zeta-\eta_\varphi)$, then
\begin{equation}\label{6.8}
  \textrm{(c)} \qquad\qquad \norm{\varphi-\psi} \le \left[2-2\left(1-\frac{\epsilon_\varphi^2}{\delta^2}\right)^{1/2}\right]^{1/2}
\end{equation}
\end{theorem}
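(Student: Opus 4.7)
The plan is to derive all three parts from the same elementary device: apply the spectral theorem to a carefully chosen quadratic polynomial in $A$, exploiting that the polynomial can be arranged to have a definite sign on $\sigma(A)$ while its expectation against $\varphi$ contracts to a clean function of $\eta_\varphi$ and $\epsilon_\varphi^2$. Concretely, from $\langle\varphi,(A-\eta_\varphi)\varphi\rangle = 0$ and the definition of $\epsilon_\varphi^2$, one has $\langle\varphi,A^2\varphi\rangle = \eta_\varphi^2 + \epsilon_\varphi^2$, so for any real constants $a,b$,
\[
\langle\varphi,(A-a)(A-b)\varphi\rangle = \epsilon_\varphi^2 + (\eta_\varphi - a)(\eta_\varphi - b).
\]
This one identity, together with appropriate sign arguments for $(x-a)(x-b)$ on $\sigma(A)$, drives everything.

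For (a), I would take $a=\alpha,\ b=\zeta$. The right side becomes $\epsilon_\varphi^2 - (\eta_\varphi - \alpha)(\zeta - \eta_\varphi)$, which is strictly negative by hypothesis \eqref{6.6}. If $\sigma(A) \cap (\alpha,\zeta)$ were empty, then $(x-\alpha)(x-\zeta)\ge 0$ for every $x\in\sigma(A)$, and the spectral theorem would force the left side to be $\ge 0$, contradicting the strict inequality. Hence $\sigma(A) \cap (\alpha,\zeta)\ne\emptyset$.

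For (b), assuming $\sigma(A)\cap(\alpha,\zeta)=\{\lambda\}$, I would apply the identity twice, with $(a,b)=(\alpha,\lambda)$ and $(a,b)=(\zeta,\lambda)$, giving
\[
\langle\varphi,(A-\alpha)(A-\lambda)\varphi\rangle = \epsilon_\varphi^2 - (\lambda-\eta_\varphi)(\eta_\varphi-\alpha),
\]
\[
\langle\varphi,(A-\zeta)(A-\lambda)\varphi\rangle = \epsilon_\varphi^2 + (\zeta-\eta_\varphi)(\lambda-\eta_\varphi).
\]
The point is that each integrand $(x-\alpha)(x-\lambda)$ and $(x-\zeta)(x-\lambda)$ is \emph{nonnegative on all of $\sigma(A)$}: it vanishes at $x=\lambda$, both factors are $\le 0$ on $(-\infty,\alpha]$, and both are $\ge 0$ on $[\zeta,\infty)$. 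The spectral theorem then forces both left sides to be $\ge 0$, and rearranging yields exactly the upper and lower bounds in \eqref{6.7}.

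For (c), I would decompose $\varphi = c\psi + \psi_\perp$ with $c = \langle\psi,\varphi\rangle \ge 0$ and $\psi_\perp \perp \psi$. Simplicity of $\lambda$ together with $\sigma(A)\cap(\alpha,\zeta)=\{\lambda\}$ guarantees that the spectral measure of $\psi_\perp$ is supported on $(-\infty,\alpha]\cup[\zeta,\infty)$, where $(x-\eta_\varphi)^2 \ge \delta^2$. Because the $\psi$- and $\psi_\perp$-components of $(A-\eta_\varphi)\varphi$ are orthogonal, we get
\[
\epsilon_\varphi^2 = c^2(\lambda-\eta_\varphi)^2 + \|(A-\eta_\varphi)\psi_\perp\|^2 \ge \delta^2 \|\psi_\perp\|^2,
\]
so $\|\psi_\perp\|^2 \le \epsilon_\varphi^2/\delta^2 < 1$, which is where the hypothesis $\epsilon_\varphi < \delta$ enters. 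Hence $c^2 = 1 - \|\psi_\perp\|^2 \ge 1 - \epsilon_\varphi^2/\delta^2$, and $c\ge (1-\epsilon_\varphi^2/\delta^2)^{1/2}$ since $c\ge 0$. The identity $\|\varphi-\psi\|^2 = 2 - 2c$ (from $\|\varphi\|=\|\psi\|=1$) then gives \eqref{6.8}. The main conceptual obstacle is purely the initial discovery of the right polynomials — the sign analysis on $\sigma(A)$ and the algebraic collapse of the expectation are both transparent once one writes them down; the rest is bookkeeping.
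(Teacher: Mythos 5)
Your proposal is correct, and for parts (a) and (c) it is essentially the paper's own argument: (a) is precisely Kato's identity $\langle\varphi,(A-\alpha)(A-\zeta)\varphi\rangle = \epsilon_\varphi^2 - (\eta_\varphi-\alpha)(\zeta-\eta_\varphi) < 0$ plus the contrapositive of the spectral-positivity lemma, and (c) is the paper's Lemma 6.4 written out explicitly with the $\varphi = c\psi + \psi_\perp$ decomposition in place of the projection $P$.

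Part (b) is a genuinely different route, and a cleaner one for the theorem as stated. The paper keeps the two roots of the test quadratic at the interval endpoints, letting one endpoint ($\gamma$ or $\kappa$) vary, and extracts $[\gamma_0,\zeta)\cap\sigma(A)\neq\emptyset$ and $(\alpha,\kappa_0]\cap\sigma(A)\neq\emptyset$ by a limiting argument using closedness of $\sigma(A)$. You instead place $\lambda$ itself at one root of each quadratic, so that $(x-\alpha)(x-\lambda)$ and $(x-\zeta)(x-\lambda)$ are both nonnegative on all of $\sigma(A)\subset(-\infty,\alpha]\cup\{\lambda\}\cup[\zeta,\infty)$, and a single positivity assertion per polynomial immediately yields each side of \eqref{6.7} without any contrapositive or limit. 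This is tidier, but it buys less: the paper's version proves the sharper statement in Remark~1 (that $\sigma(A)\cap[\gamma_0,\zeta)$ and $\sigma(A)\cap(\alpha,\kappa_0]$ are each nonempty) \emph{without} assuming $\sigma(A)\cap(\alpha,\zeta)$ is a single point, which your argument cannot do since it uses $\lambda$ as a root and uses the single-point hypothesis to control the sign of $(x-\alpha)(x-\lambda)$ on $\sigma(A)$. For the theorem as literally stated the two are equivalent in strength, so your proof stands.
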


\begin{remarks} 1. As we'll see, a version of \eqref{6.7} holds even if we don't suppose there is only one point in $\sigma(A) \cap (\alpha,\zeta)$, namely if
\begin{equation}\label{6.8a}
  \gamma_0 = \eta_\varphi - \frac{\epsilon_\varphi^2}{\zeta-\eta_\varphi}; \qquad \kappa_0 = \eta_\varphi + \frac{\epsilon_\varphi^2}{\eta_\varphi-\alpha}
\end{equation}
then $\sigma(A) \cap (\alpha,\kappa_0] \ne \emptyset$ and $\sigma(A) \cap [\gamma_0,\zeta) \ne \emptyset$

2. If we take $\alpha \to -\infty$ and $\zeta = \eta_\varphi+1$, the upper bound in \eqref{6.7} is just the Rayleigh bound \eqref{6.1} and if we take $\zeta=\mu$, then the lower bound in \eqref{6.7} is just Temple's inequality \eqref{6.4}.

3. If $0<\alpha < 1$, then
\begin{align*}
  2-2(1-\alpha^2)^{1/2} &= \left[\frac{4-4(1-\alpha^2)}{2+2(1-\alpha^2)^{1/2}}\right] \\
                        &\le \frac{4\alpha^2}{4(1-\alpha^2)^{1/2}} = \left[\frac{\alpha}{(1-\alpha^2)^{1/4}}\right]^2
\end{align*}
so \eqref{6.8} implies that
\begin{equation}\label{6.9}
  \norm{\varphi-\psi} \le \frac{\epsilon}{\delta}\left(1-\frac{\epsilon^2}{\delta^2}\right)^{-1/4}
\end{equation}
which is how Kato writes it in Kato \cite{KatoTemple2} (see Knyazev \cite{Kny} for refined versions of these types of estimates).
\end{remarks}

The proof we'll give follows Kato's approach (see also Harrell \cite{Harrell1978}).  The key to this proof is what Temple \cite{Temple1955} calls Kato's Lemma:

\begin{lemma} \lb{L6.2} Let $A$ be a self--adjoint operator and $\varphi \in D(A)$ with $\norm{\varphi} = 1$.  Then
\begin{equation}\label{6.10}
  \sigma(A) \cap (\alpha,\zeta) = \emptyset \Rightarrow \jap{\varphi,(A-\alpha)(A-\zeta)\varphi} \ge 0
\end{equation}
\end{lemma}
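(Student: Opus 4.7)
The plan is a one-line application of the spectral theorem, once the domain question is sorted out. The key observation is that the polynomial $p(\lambda) = (\lambda-\alpha)(\lambda-\zeta)$ is negative exactly on the open interval $(\alpha,\zeta)$ and non-negative elsewhere, so the spectral hypothesis $\sigma(A)\cap(\alpha,\zeta)=\emptyset$ is tailored to make $p(A)\geq 0$ in the sense of quadratic forms.

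First, since we are given only $\varphi\in D(A)$ and not $\varphi\in D(A^2)$, I would interpret the expression as a quadratic form: using self-adjointness of $A-\alpha$ and $A-\zeta$ on $D(A)$,
\begin{equation*}
  \langle\varphi,(A-\alpha)(A-\zeta)\varphi\rangle := \langle(A-\alpha)\varphi,(A-\zeta)\varphi\rangle = \|A\varphi\|^2 - (\alpha+\zeta)\langle\varphi,A\varphi\rangle + \alpha\zeta,
\end{equation*}
which is well-defined.

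Second, I would apply the functional calculus. Let $d\mu_\varphi$ be the spectral measure of $A$ at the unit vector $\varphi$. Then $\|\varphi\|^2, \langle\varphi,A\varphi\rangle, \|A\varphi\|^2$ are the integrals of $1,\lambda,\lambda^2$ against $d\mu_\varphi$, so combining the three terms,
\begin{equation*}
  \langle\varphi,(A-\alpha)(A-\zeta)\varphi\rangle = \int (\lambda-\alpha)(\lambda-\zeta)\,d\mu_\varphi(\lambda).
\end{equation*}

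Third, since $\mathrm{supp}(d\mu_\varphi)\subseteq \sigma(A)\subseteq(-\infty,\alpha]\cup[\zeta,\infty)$, the integrand on the support is a product of two factors of the same sign ($\leq 0$ on $(-\infty,\alpha]$ and $\geq 0$ on $[\zeta,\infty)$), hence non-negative. The integral is therefore $\geq 0$, which is the claim. There is no real obstacle; the only point that deserves care is the domain issue in the first step, which is what motivates rewriting the expression as a symmetric quadratic form before invoking the spectral calculus.
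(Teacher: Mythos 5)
Your proof is correct and follows essentially the same route as the paper: both are direct applications of the spectral theorem, with the paper phrasing it in the multiplication-operator picture and you in the spectral-measure picture. The one refinement you add — rewriting $\jap{\varphi,(A-\alpha)(A-\zeta)\varphi}$ as the symmetric form $\jap{(A-\alpha)\varphi,(A-\zeta)\varphi}$ so that $\varphi\in D(A)$ suffices — is a sensible piece of care that the paper leaves implicit, but it does not change the argument.
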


\begin{proof}  The spectral theorem (see \cite[Chapter V and Section 7.2]{OT}) says that $A$ is a direct sum of multiplications by $x$ on $L^2(\bbR\setminus (\alpha,\zeta), d\mu(x))$.  Since $(x-\alpha)(x-\zeta) \ge 0$ for $x \in \bbR\setminus (\alpha,\zeta)$, we see that $(A-\alpha)(A-\zeta) \ge 0$.
\end{proof}

\begin{remark} While we use the Spectral Theorem (as Kato did), all we need is a spectral mapping theorem, i.e. if $f(x) = (x-\alpha)(x-\zeta)$, then $\sigma(f(A))=f[\sigma(A)]$ and the fact that an operator with spectrum in $[0,\infty)$ is positive.  The spectral mapping theorem for polynomials holds for elements of any Banach algebra and the proof in \cite[Theorem 2.2.6]{OT} extends to unbounded operators.  That this lemma follows from considerations of resolvents only was noted by Temple \cite{Temple1955}.
\end{remark}

Taking contrapositives in \eqref{6.10}, we get the following Corollary (if Lemmas are allowed to have Corollaries):

\begin{corollary}\lb{C6.3} Let $A$ be a self--adjoint operator and $\varphi \in D(A)$ with $\norm{\varphi} = 1$.  Then
\begin{equation}\label{6.11}
  \jap{\varphi,(A-\alpha)(A-\zeta)\varphi} < 0 \Rightarrow \sigma(A) \cap (\alpha,\zeta) \ne \emptyset
\end{equation}
\end{corollary}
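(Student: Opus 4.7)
The plan is to observe that this corollary is the logical contrapositive of Lemma 6.2 and requires essentially no new work. The statement we wish to prove has the form $P \Rightarrow Q$ where $P$ is the assertion $\jap{\varphi,(A-\alpha)(A-\zeta)\varphi} < 0$ and $Q$ is $\sigma(A) \cap (\alpha,\zeta) \ne \emptyset$. Meanwhile Lemma 6.2 reads $\neg Q \Rightarrow \neg P$, since the negation of $\jap{\varphi,(A-\alpha)(A-\zeta)\varphi} \ge 0$ is precisely $\jap{\varphi,(A-\alpha)(A-\zeta)\varphi} < 0$, and the negation of $\sigma(A) \cap (\alpha,\zeta) \ne \emptyset$ is $\sigma(A) \cap (\alpha,\zeta) = \emptyset$.

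Concretely, I would argue as follows. Suppose for contradiction that $\jap{\varphi,(A-\alpha)(A-\zeta)\varphi} < 0$ but $\sigma(A) \cap (\alpha,\zeta) = \emptyset$. Then Lemma 6.2 applies and yields $\jap{\varphi,(A-\alpha)(A-\zeta)\varphi} \ge 0$, directly contradicting the assumed strict inequality. Hence $\sigma(A) \cap (\alpha,\zeta)$ must be nonempty, as claimed.

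There is no real obstacle here; the only minor thing to check is that $\varphi \in D(A)$ is sufficient for the quadratic form $\jap{\varphi,(A-\alpha)(A-\zeta)\varphi}$ to be well defined, but this is already implicit in the statement of Lemma 6.2 and follows from the spectral theorem representation of $A$ combined with the fact that $\varphi \in D(A)$ implies $\int x^2\, d\mu_\varphi(x) < \infty$, which in turn ensures finiteness of $\int (x-\alpha)(x-\zeta)\, d\mu_\varphi(x) = \jap{\varphi, A^2\varphi} - (\alpha+\zeta)\jap{\varphi,A\varphi} + \alpha\zeta$. Thus no additional hypothesis or computation is needed, and the corollary is established purely by contraposition.
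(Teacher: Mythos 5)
Your proof is correct and follows exactly the paper's route: the paper introduces the corollary with the single remark ``Taking contrapositives in \eqref{6.10}, we get the following Corollary,'' which is precisely your contraposition argument spelled out in a bit more detail. The extra note on well-definedness of the quadratic form is sound but inessential, as it is already implicit in Lemma \ref{L6.2}.
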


The final preliminary of the proof is

\begin{lemma} \lb{L6.4} Suppose that $A$ is self--adjoint and that $\lambda \in \bbR$  is an isolated simple eigenvalue with $A\psi = \lambda\psi$ and $\norm{\psi}=1$.  If $\varphi \in D(A)$ with $\norm{\varphi} = 1$ and
\begin{equation}\label{6.12}
  \epsilon_\varphi < \delta \equiv \textrm{\dist}(\eta_\varphi,\sigma(A)\setminus \{\lambda\})
\end{equation}
and if the phase of $\psi$ is changed so that $\jap{\varphi,\psi} \ge 0$, then
\begin{equation}\label{6.13}
  \norm{\varphi-\psi}^2 \le 2 - 2\left(1-\frac{\epsilon_\varphi^2}{\delta^2}\right)^{1/2}
\end{equation}
\end{lemma}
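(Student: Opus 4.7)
The plan is to exploit the orthogonal decomposition of $\varphi$ induced by the spectral projection $P = \jap{\psi,\cdot}\psi$ onto the one--dimensional eigenspace at $\lambda$. Observe first that $\norm{\varphi-\psi}^2 = 2 - 2\,\Real\jap{\varphi,\psi}$, so under the phase normalization $\jap{\varphi,\psi} \geq 0$, establishing \eqref{6.13} is equivalent to showing
\begin{equation*}
|\jap{\varphi,\psi}|^2 \geq 1 - \frac{\epsilon_\varphi^2}{\delta^2}.
\end{equation*}

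First I would write $\varphi = c\psi + \varphi'$ with $c = \jap{\psi,\varphi}$ and $\varphi' = (\bdone - P)\varphi \perp \psi$. Since $\norm{\varphi}=1$, this gives $|c|^2 + \norm{\varphi'}^2 = 1$, so it suffices to prove $\norm{\varphi'}^2 \leq \epsilon_\varphi^2/\delta^2$. Because $P$ is a spectral projection for $A$ at the isolated eigenvalue $\lambda$, it commutes with $A$, so $A$ leaves $\ran(\bdone-P)$ invariant, and moreover $(A - \eta_\varphi)\psi = (\lambda - \eta_\varphi)\psi$. Hence
\begin{equation*}
(A - \eta_\varphi)\varphi = c(\lambda - \eta_\varphi)\psi + (A - \eta_\varphi)\varphi',
\end{equation*}
where the two summands lie in the orthogonal subspaces $\ran P$ and $\ran(\bdone - P)$ respectively. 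Pythagoras then gives
\begin{equation*}
\epsilon_\varphi^2 = |c|^2(\lambda - \eta_\varphi)^2 + \norm{(A - \eta_\varphi)\varphi'}^2 \geq \norm{(A - \eta_\varphi)\varphi'}^2.
\end{equation*}

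The final step invokes the spectral theorem for $A$ restricted to $\ran(\bdone - P)$: since $\lambda$ is a simple isolated eigenvalue, the spectrum of this restriction is precisely $\sigma(A)\setminus\{\lambda\}$, every point of which lies at distance at least $\delta$ from $\eta_\varphi$. By the functional calculus, $(A - \eta_\varphi)^2 \geq \delta^2$ on $\ran(\bdone - P)$, and applying this to $\varphi'$ yields $\norm{(A - \eta_\varphi)\varphi'}^2 \geq \delta^2\norm{\varphi'}^2$. Combined with the previous inequality, $\norm{\varphi'}^2 \leq \epsilon_\varphi^2/\delta^2$, and the hypothesis $\epsilon_\varphi < \delta$ guarantees that $|c|^2 = 1 - \norm{\varphi'}^2 \geq 1 - \epsilon_\varphi^2/\delta^2 > 0$. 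The phase convention makes $c$ real and nonnegative, so $\Real\jap{\varphi,\psi} = c = |c| \geq \sqrt{1 - \epsilon_\varphi^2/\delta^2}$, which is \eqref{6.13}.

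There is no real obstacle: every step is essentially algebraic once the decomposition along $P$ is in place. The only point that requires a moment's thought is the identification $\sigma\bigl(A\restriction \ran(\bdone-P)\bigr) = \sigma(A)\setminus\{\lambda\}$, which is standard for the spectral projection at a simple isolated eigenvalue (cf.\ \eqref{2.2}). Everything else reduces to Pythagoras and the spectral calculus, making this lemma a clean quantitative companion to the contrapositive form of Kato's Lemma used in the rest of Theorem \ref{T6.1}.
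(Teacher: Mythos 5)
Your proof is correct and follows essentially the same route as the paper's: decompose along the spectral projection $P$ onto the eigenspace at $\lambda$, use the spectral theorem on $\ran(\bdone-P)$ to bound $\norm{(\bdone-P)\varphi}^2 \le \epsilon_\varphi^2/\delta^2$, and then use $\norm{P\varphi}^2 + \norm{(\bdone-P)\varphi}^2 = 1$ together with the phase normalization. The only difference is that you spell out the Pythagorean decomposition of $(A-\eta_\varphi)\varphi$ explicitly, whereas the paper folds it into the single inequality $\norm{(A-\eta_\varphi)\varphi}^2 \ge \delta^2\norm{(\bdone-P)\varphi}^2$; the content is identical.
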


\begin{proof}  Let $P$ be the projection onto multiples of $\psi$.  Since $(A-\eta_\varphi)^2 \ge \delta^2$ on the $A$-invariant subspace $\ran(1-P)$ (by the spectral theorem as in the proof of Lemma \ref{L6.2}), we have that
\begin{equation}\label{6.14}
   \epsilon_\varphi^2 = \norm{(A-\eta_\varphi)\varphi}^2 \ge \delta^2 \norm{(1-P)\varphi}^2
\end{equation}
so
\begin{equation}\label{6.15}
  \norm{(1-P)\varphi}^2 \le \epsilon_\varphi^2/\delta^2 < 1
\end{equation}
by \eqref{6.12}.  Since $\norm{(1-P)\varphi}^2+\norm{P\varphi}^2=1$, we see that (if $\jap{\psi,\varphi} \ge 0$)
\begin{equation}\label{6.16}
  \jap{\psi,\varphi} = \norm{P\varphi} \ge \left(1-\frac{\epsilon_\varphi^2}{\delta^2}\right)^{1/2}
\end{equation}
Since $\norm{\psi-\varphi}^2=2-2\jap{\psi,\varphi}$, \eqref{6.13} is immediate.
\end{proof}

\begin{proof} [Proof of Theorem \ref{T6.1}] (a) We have that
\begin{align}
  \jap{\varphi,(A-\alpha)(A-\zeta)\varphi} &= \jap{\varphi,(A-\eta_\varphi)^2\varphi}+\jap{\varphi,\left[\eta_\varphi^2+\alpha\zeta-(\alpha+\zeta)A\right]\varphi} \nonumber\\
                                           &= \epsilon_\varphi^2-(\eta_\varphi-\alpha)(\zeta-\eta_\varphi) < 0 \lb{6.17}
\end{align}
by \eqref{6.6}.  By Corollary \ref{C6.3}, we see that $\sigma(A) \cap (\alpha,\zeta) \ne \emptyset$.

(b) As in the proof of \eqref{6.17}, for any $\gamma, \kappa$, we have that
\begin{equation}\label{6.18}
  \jap{\varphi,(A-\gamma)(A-\kappa)\varphi} = \epsilon_\varphi^2-(\eta_\varphi-\gamma)(\kappa-\eta_\varphi)
\end{equation}
Fix $\kappa=\zeta$.  Then, using $\zeta > \eta_\varphi$:
\begin{equation}\label{6.19}
  \textrm{RHS of \eqref{6.18}} < 0 \iff \gamma < \gamma_0
\end{equation}
(with $\gamma_0$ given by \eqref{6.8a}) so by Corollary \ref{C6.3},
\begin{equation}\label{6.20}
  \gamma < \gamma_0 \Rightarrow \sigma(A) \cap (\gamma,\zeta) \ne \emptyset
\end{equation}
Since $\sigma(A)$ is closed, this implies that
\begin{equation}\label{6.21}
  \sigma(A) \cap [\gamma_0,\zeta) \ne \emptyset
\end{equation}

Similarly,
\begin{equation}\label{6.23}
  \sigma(A) \cap (\alpha,\kappa_0] \ne \emptyset
\end{equation}

In particular, if there is a single point, $\lambda$, in $(\alpha,\zeta)$, we must have that $\lambda \in (\alpha,\kappa_0] \cap [\gamma_0,\zeta) = [\gamma_0,\kappa_0]$ which is \eqref{6.7}.

(c) This is Lemma \ref{L6.4}.
\end{proof}

Kato exploited what are now called the Temple--Kato inequalities in his thesis to prove results on asymptotic perturbation theory.  Below are two typical results whose proofs are very much in the spirit of this work of Kato -- see Section \ref{s3} for what it means for an eigenvalue to be stable.

\begin{theorem} \lb{T6.5} Let $A_0$ be a self--adjoint operator on a Hilbert space, $\calH$.  Let $B$ be a symmetric operator with $D(A_0) \cap D(B) \equiv \calD$ dense in $\calH$ and a core for $A_0$.  For each $\beta >0$ (perhaps only for sufficiently small such $\beta$), let $A(\beta)$ be a self--adjoint extension of $A_0+\beta B \restriction \calD$.  Let $E_0$ be a simple, discrete eigenvalue for $A_0$ which is stable for $A(\beta)$.  Let $\varphi \in D(A_0), \, \norm{\varphi} = 1$ and $A_0\varphi=E_0\varphi$.  Suppose that $\varphi \in D(B)$.  Then the eigenvalue, $E(\beta)$ of $A(\beta)$ near $E_0$ obeys
\begin{equation}\label{6.24}
  E(\beta) = E_0 + \beta \jap{\varphi,B\varphi}+\textrm{O}(\beta^2)
\end{equation}
\end{theorem}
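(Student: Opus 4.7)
The plan is to apply the Temple--Kato inequality (Theorem \ref{T6.1}) directly, using the unperturbed eigenvector $\varphi$ as the trial vector. Since $\varphi \in D(A_0)\cap D(B) \subset \calD$ and $A(\beta)$ extends $(A_0+\beta B)\restriction\calD$, we have $\varphi \in D(A(\beta))$ with $A(\beta)\varphi = E_0\varphi + \beta B\varphi$. A direct computation using $A_0\varphi = E_0\varphi$ gives
\begin{equation*}
  \eta_\varphi(\beta) \equiv \jap{\varphi,A(\beta)\varphi} = E_0+\beta\jap{\varphi,B\varphi}
\end{equation*}
and
\begin{equation*}
  \epsilon_\varphi(\beta)^2 \equiv \norm{(A(\beta)-\eta_\varphi(\beta))\varphi}^2 = \beta^2 \norm{(B-\jap{\varphi,B\varphi})\varphi}^2 \equiv \beta^2 K^2,
\end{equation*}
where $K<\infty$ precisely because $\varphi\in D(B)$. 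This is the main place where the hypothesis $\varphi\in D(B)$ enters: it forces $\epsilon_\varphi(\beta)=\Oh(\beta)$.

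Next I would use stability to isolate $E(\beta)$ spectrally. Since $E_0$ is a simple, discrete eigenvalue of $A_0$, choose $\delta_0>0$ with $\sigma(A_0)\cap[E_0-2\delta_0,E_0+2\delta_0]=\{E_0\}$ and set $\alpha=E_0-\delta_0$, $\zeta=E_0+\delta_0$. By the stability hypothesis, for $\beta$ small the circle $\{z:|z-E_0|=\delta_0\}$ lies in the resolvent set of $A(\beta)$ and the associated Riesz projection \eqref{2.1} has rank one; since $A(\beta)$ is self--adjoint, this means $\sigma(A(\beta))\cap(\alpha,\zeta)$ consists of a single simple eigenvalue, which is precisely $E(\beta)$.

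Verifying the Temple--Kato hypothesis \eqref{6.6} is then routine. Because $\jap{\varphi,B\varphi}$ is a finite real number, $\eta_\varphi(\beta)\to E_0$ as $\beta\downarrow 0$, so for $\beta$ small we have $\alpha<\eta_\varphi(\beta)<\zeta$ and $(\eta_\varphi(\beta)-\alpha)(\zeta-\eta_\varphi(\beta))\to \delta_0^2$, while $\epsilon_\varphi(\beta)^2 = \beta^2 K^2 = \oh(1)$. Thus \eqref{6.6} holds for all sufficiently small $\beta>0$, and Theorem \ref{T6.1}(b) applies to the single spectral point $E(\beta)$, yielding
\begin{equation*}
  \abs{E(\beta)-\eta_\varphi(\beta)} \le \frac{\epsilon_\varphi(\beta)^2}{\min(\eta_\varphi(\beta)-\alpha,\zeta-\eta_\varphi(\beta))} = \frac{\beta^2 K^2}{\delta_0+\oh(1)} = \Oh(\beta^2).
\end{equation*}
Combining this with the explicit formula for $\eta_\varphi(\beta)$ gives \eqref{6.24}.

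I do not expect a serious obstacle: the Temple--Kato machinery was essentially built for exactly this kind of estimate. The only point demanding care is the interplay with stability --- one needs stability both to guarantee the existence of $E(\beta)$ in a \emph{fixed} window around $E_0$ (rather than a shrinking one) and to ensure that no other spectrum of $A(\beta)$ intrudes on that window, so that the ``single spectral point'' hypothesis of Theorem \ref{T6.1}(b) is genuinely satisfied. Once that is in place, the estimate is automatic, and the generalization to higher orders (asymptotic expansions to any fixed order) proceeds by the same route with a more carefully chosen trial vector built from a finite order of Rayleigh--Schr\"odinger corrections applied to $\varphi$.
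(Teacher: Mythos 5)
Your proof is correct and follows essentially the same route as the paper: take $\varphi$ as the trial vector, compute $\eta_\varphi(\beta)=E_0+\beta\jap{\varphi,B\varphi}$ and $\epsilon_\varphi(\beta)=\Oh(\beta)$ using $\varphi\in D(B)$, and invoke stability to isolate $E(\beta)$ in a fixed window before applying Temple--Kato. The only cosmetic difference is that the paper also pauses to verify strong resolvent convergence $A(\beta)\to A_0$ from the core hypothesis before invoking stability, whereas you treat that convergence as already packaged inside the stability assumption.
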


\begin{proof} Since $\calD$ is a core and for $\eta \in \calD, \, z \in \bbC\setminus\bbR$, $[(A(\beta)-z)^{-1}-(A_0-z)^{-1}](A_0-z)\eta=-\beta(A(\beta)-z)^{-1}B\eta$ we see that $A(\beta) \to A_0$ in strong resolvent sense as $\beta\downarrow 0$.  By the definition of stability, there is an interval $(\alpha,\zeta)$ containing $E_0$, so that for small $\beta$, $A(\beta)$ has a unique eigenvalue, $E(\beta)$, in $(\alpha,\zeta)$.  Showing the operator involved in a superscript, we see that
\begin{equation*}
  \eta_\varphi^{A(\beta)} = E_0+\beta \jap{\varphi,B\varphi} \to E_0
\end{equation*}

Since $(A(\beta)-\eta_\varphi^{A(\beta)})\varphi=\beta(B-\jap{\varphi,B\varphi})\varphi$, we see that
\begin{equation*}
  \left(\epsilon_\varphi^{A(\beta)}\right)^2 = \beta^2(\norm{B\varphi}-\jap{\varphi,B\varphi}^2)=\textrm{O}(\beta^2)
\end{equation*}
so, by the Temple--Kato inequalities, $E(\beta)-\eta_\varphi^{A(\beta)}  = \textrm{O}(\beta^2)$ which is \eqref{6.24}
\end{proof}

To go to the next order, we need the reduced resolvent, $S$, of $A_0$ at $E_0$, defined in Section \ref{s2} (see \eqref{2.4D}). In his thesis, Kato realized that contour integrals of $B(A_0-z)^{-1}\dots B(A_0-z)^{-1}\varphi$ could be expressed in terms of $S$.  In particular, the first order formal eigenvector for $A(\beta)$ is
\begin{equation}\label{6.26}
  \psi_1(\beta)=\varphi-\beta SB\varphi
\end{equation}
Since $\ran S \subset \ran(1-P)$ is orthogonal to $\varphi$, we see that
\begin{equation}\label{6.27}
  \norm{\psi_1(\beta)}^2=1+\beta^2\norm{SB\varphi}^2
\end{equation}

For $\psi_1(\beta)$ to be in $D(B)$, we will need to suppose that
\begin{equation}\label{6.28}
  \varphi\in D(B), \qquad SB\varphi \in D(B)
\end{equation}
We can also write down the first three perturbation coefficients for the energy  (see for example \cite[pg 7]{RS4}):
\begin{equation}\label{6.29}
  E_1=\jap{\varphi,B\varphi}, \qquad E_2=-\jap{B\varphi,SB\varphi}
\end{equation}
\begin{equation}\label{6.30}
  E_3=E_1 E_2 + \jap{B\varphi,SBSB\varphi}
\end{equation}

Straightforward calculations show that
\begin{align*}
  (A_0-E_0)\psi_1(\beta) &= -\beta(1-P)B\varphi \\
  (A(\beta)-E_0)\psi_1(\beta) &= \beta E_1\varphi-\beta^2BSB\varphi
\end{align*}%
since $\beta PB\varphi=\beta E_1\varphi$.  Thus:
\begin{equation*}
  (A(\beta)-E_0-\beta E_1)\psi_1(\beta) = \beta^2 E_1 SB\varphi - \beta^2BSB\varphi
\end{equation*}

From this, using \eqref{6.27}, one sees easily that
\begin{equation}\label{6.31}
  \jap{\psi_1(\beta),A(\beta)\psi_1(\beta)}=(E_0+\beta E_1+ \beta^2 E_2+ \beta^3 E_3)\norm{\psi_1(\beta)}^2 + \textrm{O}(\beta^4)
\end{equation}
\begin{equation}\label{6.32}
  \norm{\left[A(\beta)-(E_0+\beta E_1+ \beta^2 E_2+ \beta^3 E_3)\right]\psi_1(\beta)}^2 = \textrm{O}(\beta^4)
\end{equation}
Thus, we have, using $\psi_1(\beta)/\norm{\psi_1(\beta)}$ as a trial vector

\begin{theorem} \lb{T6.6} Under the hypotheses of Theorem \ref{T6.5} if also \eqref{6.28} holds, then
\begin{equation}\label{6.33}
  E(\beta) = E_0+\beta E_1+ \beta^2 E_2+ \beta^3 E_3 + \textrm{O}(\beta^4)
\end{equation}
\begin{equation}\label{6.34}
  \norm{\varphi(\beta) - \psi_1(\beta)} = \textrm{O}(\beta^2)
\end{equation}
where $\varphi(\beta)$ is the normalized eigenvector for $A(\beta)$ chosen so that for small $\beta$, $\jap{\varphi(\beta),\varphi} > 0$.
\end{theorem}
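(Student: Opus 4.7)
The plan is to apply the Temple--Kato inequality (Theorem \ref{T6.1}) with trial vector $\hat\psi_1(\beta) \equiv \psi_1(\beta)/\norm{\psi_1(\beta)}$, using the algebraic computations \eqref{6.31} and \eqref{6.32} which are already in hand. By stability of $E_0$, there is a fixed open interval $(\alpha,\zeta) \ni E_0$ whose closure meets $\sigma(A(\beta))$ in exactly the point $E(\beta)$ for all sufficiently small $\beta$; in particular the distance $\delta_0 \equiv \min(E_0-\alpha,\zeta-E_0)$ is a strictly positive constant independent of $\beta$, and the same is true (with a slightly smaller constant) once we shrink the interval a bit to keep $E(\beta)$ and $\eta^{A(\beta)}_{\hat\psi_1}$ away from the endpoints.

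First I would compute the two quantities feeding into Theorem \ref{T6.1}. From \eqref{6.27} we have $\norm{\psi_1(\beta)}^2 = 1 + \textrm{O}(\beta^2)$, so dividing \eqref{6.31} by $\norm{\psi_1(\beta)}^2$ gives
\begin{equation*}
  \eta^{A(\beta)}_{\hat\psi_1} = E_0 + \beta E_1 + \beta^2 E_2 + \beta^3 E_3 + \textrm{O}(\beta^4).
\end{equation*}
Similarly dividing \eqref{6.32} by $\norm{\psi_1(\beta)}^2$ and noting that $\norm{(A(\beta)-\eta^{A(\beta)}_{\hat\psi_1})\hat\psi_1(\beta)}^2 \le \norm{(A(\beta)-c)\hat\psi_1(\beta)}^2$ for any $c \in \bbR$ (since the LHS is the variance minimizer), one obtains
\begin{equation*}
  \bigl(\epsilon^{A(\beta)}_{\hat\psi_1}\bigr)^2 = \textrm{O}(\beta^4).
\end{equation*}

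Now apply Theorem \ref{T6.1}(b) with this fixed interval $(\alpha,\zeta)$: the hypothesis \eqref{6.6} holds for small $\beta$ because $(\epsilon^{A(\beta)}_{\hat\psi_1})^2 = \textrm{O}(\beta^4)$ while $(\eta^{A(\beta)}_{\hat\psi_1} - \alpha)(\zeta - \eta^{A(\beta)}_{\hat\psi_1}) \to (E_0-\alpha)(\zeta-E_0) > 0$. The two--sided bound \eqref{6.7} then gives
\begin{equation*}
  \bigl| E(\beta) - \eta^{A(\beta)}_{\hat\psi_1} \bigr| \le \frac{(\epsilon^{A(\beta)}_{\hat\psi_1})^2}{\delta_0/2} = \textrm{O}(\beta^4),
\end{equation*}
which combined with the expansion of $\eta^{A(\beta)}_{\hat\psi_1}$ yields \eqref{6.33}. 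For \eqref{6.34}, part (c) of Theorem \ref{T6.1} (or the refined form \eqref{6.9}) gives $\norm{\hat\psi_1(\beta) - \varphi(\beta)} = \textrm{O}(\epsilon^{A(\beta)}_{\hat\psi_1}/\delta_0) = \textrm{O}(\beta^2)$ once the phase of $\varphi(\beta)$ is chosen so that $\jap{\varphi(\beta),\hat\psi_1(\beta)} \ge 0$; since $\jap{\hat\psi_1(\beta),\varphi} = 1/\norm{\psi_1(\beta)} > 0$, this phase convention is consistent with $\jap{\varphi(\beta),\varphi} > 0$. Finally, $\norm{\psi_1(\beta) - \hat\psi_1(\beta)} = \bigl|1-\norm{\psi_1(\beta)}\bigr| = \textrm{O}(\beta^2)$ by \eqref{6.27}, so the triangle inequality finishes \eqref{6.34}.

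The only mildly tricky point is verifying that the denominator arising from Temple--Kato can be taken bounded away from zero uniformly in $\beta$; this is exactly what stability of $E_0$ (together with $A(\beta)\to A_0$ in strong resolvent sense, as established in the proof of Theorem \ref{T6.5}) supplies, so there is no real obstacle. Everything else is bookkeeping driven by \eqref{6.31}--\eqref{6.32}, which were already derived.
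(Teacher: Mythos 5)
Your proof is correct and is exactly the argument the paper intends: the text immediately preceding Theorem \ref{T6.6} says the result follows ``using $\psi_1(\beta)/\norm{\psi_1(\beta)}$ as a trial vector'' in the Temple--Kato inequality, and you fill in precisely those bookkeeping steps (the expansion of $\eta$ from \eqref{6.31}, the $\textrm{O}(\beta^4)$ bound on $\epsilon^2$ from \eqref{6.32} via the variance-minimizer observation, the uniform lower bound on the gap from stability, and the phase/normalization comparison). No discrepancy.
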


As Kato noted in his thesis, this idea shows if all the terms for the $n$th order formal series for the eigenvector lie in $\calH$, then one gets asymptotic series for the energy with errors of order O$(\beta^{2n})$, i.e. the $2n$ coefficients $E_0,\dots,E_{2n-1}$ but the method doesn't handle odd powers.  Indeed in \cite{KatoThesis}, he said: ``\emph{However, there has been a serious gap in the series of these conditions; for all of them had in common the property that they give the expansion of the eigenvalues up to even orders of approximation, and there was no corresponding theorem giving an expansion up to an odd order}.''  Personally, I think ``serious'' is a bit strong given that he handles the case of infinite order (for me the most important) and first order results but it shows he was frustrated by a problem he tried to solve without initial success.  But in \cite{KatoPT2}, he put in a \emph{Note Added in Proof} announcing he had solved the problem!  The solution appeared in \cite{KatoAsymPT}.  For example, if $A_0 \ge 0, B \ge 0$, he proved that if $\varphi \in Q(B)$, then $E(\beta) = E_0+E_1\beta+\textrm{o}(\beta)$ and if $B^{1/2}\varphi \in Q(B^{1/2}A_0^{-1}B^{1/2})$, he proved that $E(\beta) = E_0+E_1\beta+E_2\beta^2+\textrm{o}(\beta^2)$.  Not surprisingly, in addition to estimates of Temple--Kato type, the proofs use a variant of quadratic form methods.  I note that Kato did not put any of these results in his book where his discussion of asymptotic series applies to general Banach space settings and not just positive operators and the ideas are closer to what we put in Section \ref{s3}.

Besides the original short paper on Temple--Kato inequalities, Kato returned to the subject several times.  In two papers \cite{KatoTemple2,KatoTemple3}, he considered the fact that in some applications of interest, the natural trial vector has $\varphi \in Q(A)$, not $D(A)$.  Trial functions only in $Q(A)$ are fine for the Rayleigh upper bound but if $\varphi \notin D(A)$, then $\epsilon_\varphi^A = \infty$, so $\varphi$ cannot be used for Temple's inequality or the Temple--Kato inequality.  Of course, one could look at the Temple--Kato inequality for $\sqrt{A}$ if $A \ge 0$ but calculation of $\jap{\varphi,\sqrt{A}\varphi}$ may not be easy for, say, a second order differential operator where $\sqrt{A}$ is a pseudo-differential operator.  But such operators can often be written $A=T^*T$ where $T$ is a first order differential operator.  Variants of the Temple--Kato inequality for operators of this form are the subject of two papers of Kato \cite{KatoTemple2,KatoTemple3}.  Kato et al. \cite{KatoEtAlTemple} studies an application of these ideas.

Interesting enough, while Kato's work was 20 years after Temple, Temple was young when he did that work and was still active in 1949 and he reacted to Kato's paper with two of his own \cite{Temple1952,Temple1955}.  George Frederick James Temple (1901-1992) was a mathematician  with a keen interest in physics -- he wrote two early books on quantum mechanics in 1931 and 1934. He spent much of his career at King's College, London although for the last fifteen years of it, he held the prestigious Sedleian Chair of Natural Philosophy at Oxford, the chair going back to 1621.  He was best known in British circles for a way of discussing distributions as equivalence classes of approximating smooth functions, an idea that was popular because the old guard didn't want to think about the theory of topological vector spaces central to Schwartz' earlier approach.  His other honors include a knighthood (CBE, for War work), a fellowship in and the Sylvester Medal of the Royal Society.  At age 82, he became a benedictine monk and spent the last years of his life in a monastery on the Isle of Wright. The long biographical note of his life written for the Royal Society \cite{TempleBio} doesn't even mention Temple's inequality!

Davis \cite{Davis1952} extended what he calls ``the ingenious method of Kato'' by replacing the single interval $(\alpha,\zeta)$ by a finite union of intervals.  Thirring \cite{Thirr3} has discussed Temple's inequality as a consequence of the Feshbach \cite{Fesh} projection method (which mathematicians call the method of Schur \cite{Schur} complements).  Turner \cite{Turner1969} and Harrell \cite{Harrell1978} have extensions to the case where $A$ is normal rather than self--adjoint and Kuroda \cite{Kuroda2007} to $n$ commuting self-adjoint operators (and so including the normal case).  Cape et al. \cite{Cape2016} apply Temple--Kato inequalities to graph Laplacians.  Golub--van der Vost \cite{Golub2000} have a long review on eigenvalue values bounds mentioning that by the time of their review in 2000, Temple--Kato inequalities had become a standard part of linear algebra.

\section{Self--Adjointness, I: Kato's Theorem} \lb{s7}

This is the first of four sections on self-adjointness issues.  We assume the reader knows the basic notions, including what an operator closure and an operator core are and the meaning of essential self-adjointness.  A reference for these things is \cite[Section 7.1]{OT}.

This section concerns the Kato--Rellich theorem and its application to prove the essential self--adjointness of atomic and molecular Hamiltonians. The quantum mechanical Hamiltonians typically treated by this method are bounded from below.  Section \ref{s8} discusses cases where $V(x) \ge -cx^2-d$ like Stark Hamiltonians.  Section \ref{s9} discusses Kato's contribution to the realization that the positive part of $V$ can be more singular than the negative part without destroying essential self--adjointness and Section \ref{s10} turns to Kato's contribution to the theory of quadratic forms.  To save ink, in this article, I'll use ``esa'' as an abbreviation for ``essentially self-adjoint'' or ``essential self-adjointness'' and ``esa--$\nu$'' for ``essentially self--joint on $C_0^\infty(\bbR^\nu)$.''.

As we've mentioned, Kato's 1951 paper \cite{KatoHisThm} is a pathbreaking contribution of great significance.  He considered \emph{$N$--body Hamiltonians} on $L^2(\bbR^{\nu N})$ of the formal form
\begin{equation}\label{7.1}
  H = -\sum_{j=1}^{N}\frac{1}{2m_j}\Delta_j + \sum_{i<j} V_{ij}(x_i-x_j)
\end{equation}
where $x \in \bbR^{\nu N}$ is written $\boldsymbol{x} = (x_1,\dots,x_N)$ with $x_j \in \bbR^\nu$, $\Delta_j$ is the $\nu$--dimensional Laplacian in $x_j$ and each $V_{ij}$ is a real valued function on $\bbR^\nu$.  In 1951, Kato considered only the physically relevant case $\nu=3$.

If there are $N+k$ particles in the limit where the masses of particles $N+1,\dots,N+k$ are infinite, one considers an operator like $H$ but adds terms
\begin{equation}\label{7.2}
  \sum_{j=1}^{N} V_j(x_j), \qquad V_j(x) = \sum_{\ell=N+1}^{N+k} V_{j\ell}(x-x_\ell)
\end{equation}
where $x_{N+1},\dots,x_{N+k}$ are fixed points in $\bbR^\nu$.

More generally, one wants to consider (as Kato did) Hamiltonians with the center of mass removed.  We discuss the kinematics of such removal in Section \ref{s11}.  We note that the self--adjointness results on the Hamiltonians of the form \eqref{7.1} easily imply results on Hamiltonians (on $L^2(\bbR^{(N-1)\nu})$) with the center of mass motion removed.  Of especial interest is the Hamiltonian of the form \eqref{7.2} with $N=1$, i.e.
\begin{equation}\label{7.3}
  H=-\Delta+W(x)
\end{equation}
on $L^2(\bbR^\nu)$ which we'll call \emph{reduced two body Hamiltonians} (since, except for a factor of $(2\mu)^{-1}$ in front of $-\Delta$, it is the two body Hamiltonian with the center of mass removed).

Kato's big 1951 result was

\begin{theorem} [Kato's Theorem \cite{KatoHisThm}, First Form] \lb{T7.1} Let $\nu=3$.  Let each $V_{ij}$ in \eqref{7.1} lie in $L^2(\bbR^3)+L^\infty(\bbR^3)$.  Then the Hamiltonian of \eqref{7.1} is self--adjoint on $D(H) = D(-\Delta)$ and esa--$(3N)$.
\end{theorem}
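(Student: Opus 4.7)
The plan is to deduce the theorem from the Kato--Rellich theorem (which the paper places as earlier/classical material) by verifying that the potential $V = \sum_{i<j} V_{ij}(x_i - x_j)$ is $H_0$--bounded with relative bound zero, where $H_0 = -\sum_j (2m_j)^{-1} \Delta_j$ is the free kinetic operator on $L^2(\bbR^{3N})$. Since $H_0$ is obviously self--adjoint on $D(-\Delta) = H^2(\bbR^{3N})$ (Plancherel) with $C_0^\infty(\bbR^{3N})$ as a core, Kato--Rellich will then immediately yield both conclusions: self--adjointness of $H = H_0 + V$ on $D(-\Delta)$ and esa--$(3N)$.

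The heart of the argument is a one--body lemma specific to $\nu = 3$: \emph{if $v \in L^2(\bbR^3) + L^\infty(\bbR^3)$, then $v$ is $(-\Delta)$--bounded on $L^2(\bbR^3)$ with arbitrarily small relative bound}. The estimate goes through Fourier transform. For $\psi \in C_0^\infty(\bbR^3)$ and any $E > 0$,
\begin{equation*}
  \|\psi\|_\infty \le (2\pi)^{-3/2} \|\hat\psi\|_1 \le (2\pi)^{-3/2}\, \|(p^2+E)^{-1}\|_2\, \|(p^2+E)\hat\psi\|_2,
\end{equation*}
and the $L^2(\bbR^3)$ norm of $(p^2+E)^{-1}$ is a constant times $E^{-1/4}$, hence tends to $0$ as $E \to \infty$. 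Splitting $v = v_2 + v_\infty$ with $v_2 \in L^2$, $v_\infty \in L^\infty$, one gets
\begin{equation*}
  \|v\psi\|_2 \le \|v_2\|_2 \|\psi\|_\infty + \|v_\infty\|_\infty \|\psi\|_2 \le a(E) \|\Delta \psi\|_2 + b(E) \|\psi\|_2,
\end{equation*}
with $a(E) \to 0$ as $E \to \infty$, and the bound extends to all of $D(-\Delta)$ by core approximation.

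The second step is to lift this one--body bound to $L^2(\bbR^{3N})$ for each pair potential $V_{ij}(x_i - x_j)$. Introduce relative/center--of--mass coordinates $y = x_i - x_j$, $z = x_i + x_j$ for the pair $(i,j)$, keep the remaining $x_k$ ($k \ne i,j$) as parameters, and write the $L^2$ norm of the wave function as an iterated integral. Applied pointwise in $(z, \{x_k\}_{k \ne i,j})$, the one--body lemma yields
\begin{equation*}
  \|V_{ij}(x_i-x_j)\psi\|_2 \le a \| \Delta_y \psi\|_2 + b \|\psi\|_2,
\end{equation*}
where $\Delta_y$ is the Laplacian in the relative coordinate. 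Since $-\Delta_y$ is controlled by a fixed multiple of $-\Delta_i - \Delta_j$, and the latter is in turn bounded by a constant times $H_0$ (as $H_0$ is a positive sum with all the individual Laplacians), one gets $\|V_{ij}(x_i-x_j)\psi\|_2 \le a' \|H_0\psi\| + b'\|\psi\|$ with $a'$ as small as desired.

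Summing over the finitely many pairs $(i,j)$, the potential $V$ is $H_0$--bounded with relative bound $0$, so Kato--Rellich applies and gives the theorem. The main technical obstacle I expect is Step 2, the many--body lift: one must check that the change of variables for a single pair plays well against the full $H_0$ uniformly in the spectator variables. The clean way around this is the partial Fourier transform argument just sketched, which reduces the many--body estimate to the one--body estimate fiberwise and requires no cleverness beyond a uniform constant from comparing $-\Delta_y$ with $-\Delta_i - \Delta_j$.
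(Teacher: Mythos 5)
Your proof follows the same three-step strategy as the paper: reduce to the Kato--Rellich theorem, prove the one-body Sobolev estimate $\norm{\psi}_\infty \le C(\norm{\Delta\psi}_2 + \norm{\psi}_2)$ via Fourier transform and Cauchy--Schwarz to get $-\Delta$-boundedness of any $v \in L^2(\bbR^3)+L^\infty(\bbR^3)$, and lift to $L^2(\bbR^{3N})$ by a coordinate change isolating the relative variable of each pair (tensor-product/iterated Fourier argument in the paper). The only small deviation is in how relative bound zero is obtained: you send the free spectral parameter $E \to \infty$ using $\norm{(p^2+E)^{-1}}_{L^2(\bbR^3)} \sim E^{-1/4}$, whereas the paper fixes $E=1$ and instead truncates the $L^2$ piece of $v$ where it is large so that its $L^2$ norm becomes arbitrarily small; both devices are standard and equivalent.
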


\begin{remarks} 1. The same results holds with the terms in \eqref{7.2} added so long as each $V_j$ lies in $L^2(\bbR^3)+L^\infty(\bbR^3)$.

2. Kato also notes the exact description of $D(-\Delta)$ on $L^2(\bbR^\nu)$ in terms of the Fourier transform (see \cite[Chapter 6]{RA}) $\hat{\varphi}(k) = (2\pi)^{-\nu/2} \int e^{-ik\cdot x} \varphi(x) d^\nu x$:
\begin{equation}\label{7.3a}
  D(-\Delta) = \{ \varphi \in L^2(\bbR^\nu) \,|\, \int (1+k^2)^2 |\hat{\varphi}(k)|^2 d^\nu k < \infty\}
\end{equation}

3. The proof shows that the graph norms of $H$ and $-\Delta$ on $D(-\Delta)$ are equivalent, so any operator core for $-\Delta$ is a core for $H$. Since it is easy to see that $C_0^\infty(\bbR^{3N})$ is a core for $-\Delta$, the esa result follows from the self-adjointness claim, so we concentrate on the latter.

4.  Kato didn't assume that $V \in L^2(\bbR^3)+L^\infty(\bbR^3)$ but rather the stronger hypothesis that for some $R < \infty$, one has that $\int_{|x| < R} |V(x)|^2 d^3x < \infty$ and $\sup_{|x| \ge R} |V(x)| < \infty$, but his proof extends to $L^2(\bbR^3)+L^\infty(\bbR^3)$.

5.  Kato didn't state that $C_0^\infty(\bbR^{3N})$ is a core but rather that $\psi$'s of the form $P(x) e^{-\tfrac{1}{2}x^2}$ with $P$ a polynomial in the coordinates of $x$ is a core (He included the $\tfrac{1}{2}$ so the set was invariant under Fourier transform.) His result is now usually stated in terms of $C_0^\infty$.
\end{remarks}

If $v(x) = 1/|x|$ on $\bbR^3$, then $v \in L^2(\bbR^3)+L^\infty(\bbR^3)$, so Theorem \ref{T7.1} has the important Corollary, which includes the Hamiltonians of atoms and molecules:

\begin{theorem} [Kato's Theorem \cite{KatoHisThm}, Second Form] \lb{T7.2} The Hamiltonian, $H$, of \eqref{7.1} with $\nu = 3$ and each
\begin{equation}\label{7.4}
  V_{ij}(x) = \frac{z_{ij}}{|x|}
\end{equation}
and this Hamiltonian with terms of the form \eqref{7.2} where
\begin{equation}\label{7.5}
  V_j(x) = \sum_{\ell = N+1}^{N+k} \frac{z_{j\ell}}{|x-x_\ell|}
\end{equation}
are self--adjoint on $D(-\Delta)$ and esa--$3N$
\end{theorem}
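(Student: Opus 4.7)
The strategy is simply to verify that each Coulomb pair potential $V_{ij}(x) = z_{ij}/|x|$ and each external potential $V_j(x) = \sum_{\ell=N+1}^{N+k} z_{j\ell}/|x-x_\ell|$ satisfies the hypothesis of Theorem \ref{T7.1} that it lie in $L^2(\bbR^3) + L^\infty(\bbR^3)$, and then invoke Theorem \ref{T7.1} (together with the first remark following it, which permits the addition of terms of the form \eqref{7.2}) verbatim.

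First I would handle the single Coulomb factor $v(x) = 1/|x|$ on $\bbR^3$. Fix any $R > 0$ and split $v = v_1 + v_2$, where $v_1(x) = v(x) \chi_{\{|x| \le R\}}(x)$ and $v_2(x) = v(x) \chi_{\{|x| > R\}}(x)$. Passing to spherical coordinates,
\begin{equation*}
\int_{\bbR^3} |v_1(x)|^2 \, d^3 x = 4\pi \int_0^R \frac{1}{r^2} \, r^2 \, dr = 4\pi R < \infty,
\end{equation*}
so $v_1 \in L^2(\bbR^3)$, while $v_2$ is bounded by $1/R$ pointwise and hence lies in $L^\infty(\bbR^3)$. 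Thus $v \in L^2(\bbR^3) + L^\infty(\bbR^3)$.

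Next, since $L^2(\bbR^3) + L^\infty(\bbR^3)$ is invariant under translations and under multiplication by constants, each $V_{ij}(x) = z_{ij}/|x|$ lies in this space. Moreover, $L^2(\bbR^3) + L^\infty(\bbR^3)$ is closed under finite sums, so each external potential $V_j(x) = \sum_{\ell = N+1}^{N+k} z_{j\ell}/|x - x_\ell|$, being a finite sum of translates of scalar multiples of $v$, also lies in $L^2(\bbR^3) + L^\infty(\bbR^3)$.

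With these two verifications in place, Theorem \ref{T7.1} applied to the pair potentials $V_{ij}$ and the external potentials $V_j$ yields directly that $H$ is self-adjoint on $D(-\Delta)$ and esa--$3N$. There is no genuine obstacle here: the only substantive content beyond Theorem \ref{T7.1} is the elementary integrability computation above showing that Coulomb singularities are $L^2$-local in three dimensions, which is exactly the dimensional miracle that makes the Kato--Rellich framework applicable to atomic and molecular Hamiltonians.
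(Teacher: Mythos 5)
Your proof is correct and takes exactly the same approach as the paper: verify that $1/|x|\in L^2(\bbR^3)+L^\infty(\bbR^3)$ (the paper states this without writing out the integral, which you supply) and then apply Theorem \ref{T7.1} together with the remark allowing the $V_j$ terms. Nothing to add.
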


\begin{remark} This result assures that the time dependent Schr\"{o}dinger equation $\dot \psi_t = -iH\psi_t$ has solutions (since self--adjointness means that $e^{-itH}$ exists as a unitary operator).  The analogous problem for Coulomb Newton's equation (i.e. solvability for a.e. initial condition) is open for $N \ge 5$!
\end{remark}

As Kato remarks in \cite{WP}, ``the proof turned out to be rather easy''.  It has three steps:

(1)  The Kato--Rellich theorem which reduces the proof to showing that each $V_{ij}$ is relatively bounded for Laplacian on $\bbR^3$ with relative bound $0$.

(2)  A proof that any function in $L^2(\bbR^3)+L^\infty(\bbR^3)$ is $-\Delta$--bounded with relative bound $0$.  This relies on a simple Sobolev estimate.

(3) A piece of simple kinematics that says that the two body estimate in step 2 extends to one for $v_{ij}(x_i-x_j)$ as an operator on $L^2(\bbR^{3N})$.

\textbf{Step 1.} The needed result (recall that $A$--bounded is defined in \eqref{2.7}):

\begin{theorem} [Kato--Rellich Theorem] \lb{T7.3} Let $A$ be self--adjoint, $B$ symmetric and let $B$ be $A$--bounded with relative bound $a<1$, i.e. $D(A) \subset D(B)$ and for some fixed $b$ and all $\varphi \in D(A)$
\begin{equation}\label{7.6}
  \norm{B\varphi} \le a\norm{A\varphi}+b\norm{\varphi}
\end{equation}
Then $A+B$ is self--adjoint on $D(A)$ and any operator core for $A$ is one for $A+B$.
\end{theorem}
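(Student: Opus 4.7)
The plan is to reduce self-adjointness to the basic range criterion: a closed symmetric operator $T$ is self-adjoint if and only if $\ran(T\pm i\mu)=\calH$ for some (equivalently, all sufficiently large) $\mu>0$. We apply this to $T=A+B$ on $D(A)$, exploiting that $A$ itself satisfies the criterion together with the quantitative resolvent bounds $\norm{(A\pm i\mu)^{-1}}\le \mu^{-1}$ and $\norm{A(A\pm i\mu)^{-1}}\le 1$, both immediate consequences of the spectral theorem for the self-adjoint $A$.

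First I would verify that $A+B$ is symmetric on $D(A)$ (immediate from $D(A)\subset D(B)$ and symmetry of both summands) and that $D(A)$ is dense. Next I would factor, for $\mu>0$,
\[
A+B\pm i\mu \;=\; \bigl(\bdone+B(A\pm i\mu)^{-1}\bigr)(A\pm i\mu).
\]
The heart of the proof is estimating $\norm{B(A\pm i\mu)^{-1}}$. For $\psi\in\calH$, set $\varphi=(A\pm i\mu)^{-1}\psi\in D(A)$; since $A$ is self-adjoint, $\norm{A\varphi}^2+\mu^2\norm{\varphi}^2=\norm{\psi}^2$, so $\norm{A\varphi}\le\norm{\psi}$ and $\norm{\varphi}\le\norm{\psi}/\mu$. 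Feeding these into the hypothesis \eqref{7.6} gives
\[
\norm{B\varphi}\le a\norm{A\varphi}+b\norm{\varphi}\le \bigl(a+b/\mu\bigr)\norm{\psi},
\]
hence $\norm{B(A\pm i\mu)^{-1}}\le a+b/\mu$. Because $a<1$, for $\mu>b/(1-a)$ this norm is strictly less than $1$, so $\bdone+B(A\pm i\mu)^{-1}$ is invertible by a Neumann series, and consequently $\ran(A+B\pm i\mu)=\ran(A\pm i\mu)=\calH$. The range criterion then yields that $A+B$ is self-adjoint on $D(A)$; in particular it is automatically closed there.

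For the core assertion, let $\calD_0\subset D(A)$ be a core for $A$ and pick $\varphi\in D(A)=D(A+B)$. By definition of core, choose $\varphi_n\in\calD_0$ with $\varphi_n\to\varphi$ and $A\varphi_n\to A\varphi$. Applying \eqref{7.6} to $\varphi_n-\varphi$ gives $\norm{B(\varphi_n-\varphi)}\le a\norm{A(\varphi_n-\varphi)}+b\norm{\varphi_n-\varphi}\to 0$, so $(A+B)\varphi_n\to(A+B)\varphi$, proving $\calD_0$ is a core for the self-adjoint operator $A+B$. The same relative-bound estimate shows the graph norms of $A$ and $A+B$ on $D(A)$ are equivalent, which is the underlying reason cores are preserved.

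There is no serious obstacle here; the whole argument rests on the single quantitative point that $a<1$ makes the Neumann series converge once $\mu$ is taken large. The only subtlety worth flagging is the need to handle both signs $\pm i\mu$ (so that deficiency indices at $\pm i$ both vanish), which is automatic since the estimate on $\norm{B(A\pm i\mu)^{-1}}$ is symmetric in the sign.
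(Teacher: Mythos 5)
Your proof is correct and follows essentially the same route as the paper: the factorization $A+B\pm i\mu=(\bdone+B(A\pm i\mu)^{-1})(A\pm i\mu)$, the estimate $\norm{B(A\pm i\mu)^{-1}}\le a+b/\mu$ from the spectral-theorem resolvent bounds, the Neumann series for large $\mu$, von Neumann's range criterion, and graph-norm equivalence for the core claim. The only tiny cosmetic point is that you state the range criterion for a \emph{closed} symmetric operator but apply it before having verified closedness of $A+B$ on $D(A)$; this is harmless since the criterion holds without that hypothesis (self-adjointness forces closedness), and in any case the graph-norm equivalence you record at the end supplies closedness directly.
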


\begin{remarks}  1.  This result is due to Rellich \cite[Part III]{RellichPT}.  Kato found it in 1944, when he was unaware of Rellich's work, so it is independently his.

2. The proof uses von Neumann's criteria: a closed symmetric operator, $C$, on $D(C)$  is self--adjoint if and only if for some $\kappa \in (0,\infty)$, one has that $\ran(C\pm i\kappa)=\calH$.  For $C$ closed implies that $\ran(C\pm i\kappa)$ are closed subspaces with $\ran(C\pm i\kappa)^\perp = \ker(C^*\mp i\kappa)$.  Thus, if $C$ is self--adjoint, then $\ker(C^*\mp i\kappa) = \{0\}$ proving one direction.  For the other direction, suppose that $\ran(C\pm i\kappa) = \calH$.  Given $\psi \in D(C^*)$, find $\varphi \in D(C)$ with $(C+i\kappa)\varphi = (C^*+i\kappa)\psi$ (since $\ran(C+i\kappa)=\calH$).  Thus $(C^*+i\kappa)(\varphi-\psi) = 0$.  Since $\ran(C-i\kappa) = \calH = \ker(C^*+i\kappa)^\perp$, we have that $\varphi-\psi = 0$.  Thus $D(C^*) = D(C)$ and $C$ is self--adjoint.

3. For the rest of the proof, use $\norm{(C+i\kappa)\varphi}^2 = \norm{C\varphi}^2+|\kappa|^2\norm{\varphi}^2$ to see that
\begin{equation}\label{7.7}
  \norm{C(C\pm i\kappa)^{-1}} \le 1, \qquad \norm{(C\pm i\kappa)^{-1}} \le |\kappa|^{-1}
\end{equation}
It follows from this (with $C=A$) that when \eqref{7.6} holds, one has that
\begin{equation}\label{7.8}
  \norm{B(A\pm i\kappa)^{-1}} \le a + b|\kappa|^{-1}
\end{equation}
Since $a < 1$, we can be sure that if $|\kappa|$ is very large, then ${\norm{B(A\pm i\kappa)^{-1}} < 1}$ so using a geometric series, we have that ${1 + B(A\pm i\kappa)^{-1}}$ is invertible which implies that it maps $\calH$ onto $\calH$.  Since $(A\pm i\kappa)$ maps $D(A)$ onto $\calH$, we see that
\begin{equation}\label{7.9}
  (A+B\pm i\kappa) =(1+B(A\pm i\kappa)^{-1})(A\pm i\kappa)
\end{equation}
maps $D(A)$ onto $\calH$.  Thus by von Neumann's criterion, $A+B$ is self--adjoint on $D(A)$.  By a simple argument, $\norm{A\cdot}+\norm{\cdot}$ is an equivalent norm to $\norm{(A+B)\cdot}+\norm{\cdot}$ which proves the esa result.

4. The case $B=-A$ shows that one can't conclude self-adjointness of $A+B$ on $D(A)$ if \eqref{7.6} holds with $a=1$ but Kato \cite{KatoBk} proved that $A+B$ is esa on $D(A)$ in that case and W\"{u}st \cite{WustESA} proved the stronger result of esa on $D(A)$ if one has for all $\varphi \in D(A)$
\begin{equation}\label{7.9a}
  \norm{B\varphi}^2 \le \norm{A\varphi}^2 + b\norm{\varphi}^2
\end{equation}

5.  In some of my early papers, I called $B$ Kato small if $B$ was $A-bounded$ with relative bound less than $1$ and Kato tiny if the relative bound was $0$.  I am pleased to say that while many of my names (hypercontractive, almost Mathieu, Berry's phase, Kato class,...) have stuck, this one has not!
\end{remarks}

\textbf{Step 2.} Kato began by considering $\varphi \in L^2(\bbR^3)$ with $\varphi \in D(-\Delta)$, i.e. $\int (1+k^2)^2|\hat{\varphi}(k)|^2 d^3k < \infty$.  He noted that this implied that
\begin{align}
  \int |\hat{\varphi}(k)| d^3k &= \int (1+k^2)^{-1}(1+k^2)|\hat{\varphi}(k)| d^3k \nonumber \\
                               & \le \norm{(1+k^2)^{-1}}_2 \norm{(1-\Delta)\varphi}_2 \lb{7.10}
\end{align}
by the Schwarz inequality and Plancherel theorem.  Thus
\begin{align}
  \norm{\varphi}_\infty &\le (2\pi)^{-3/2} \int |\hat{\varphi}(k)| d^3k \lb{7.10a}  \\
                        &\le C\left( \norm{\Delta\varphi}_2 + \norm{\varphi}_2\right) \lb{7.11}
\end{align}

It follows that if $V=V_1+V_2$ with $V_1 \in L^2(\bbR^3), V_2 \in L^\infty(\bbR^3)$, then as operators on $L^2(\bbR^3)$
\begin{align}
  \norm{V\varphi}_2 &\le \norm{V_1\varphi}_2+ \norm{V_2\varphi}_2 \nonumber \\
                    &\le \norm{V_1}_2 \norm{\varphi}_\infty + \norm{V_2}_\infty \norm{\varphi}_2 \nonumber \\
                    &\le C\norm{V_1}_2 \norm{\Delta\varphi}_2 + \left(C\norm{V_1}_2+\norm{V_2}_\infty\right) \norm{\varphi}_2 \lb{7.12}
\end{align}

If $f \in L^2$ and
\begin{equation}\label{7.13}
  f^{(n)}(x) = \left\{
                 \begin{array}{ll}
                   f(x), & \hbox{ if } |f(x)| > n\\
                   0, & \hbox{ if } |f(x)| \le n
                 \end{array}
               \right.
\end{equation}
then $\norm{f^{(n)}}_2 \to 0$ as $n \to \infty$ by the dominated convergence theorem and for all $n$, $\norm{f-f^{(n)}}_\infty < \infty$.  It follows from \eqref{7.12} that any $V \in L^2(\bbR^3)+L^\infty(\bbR^3)$ is $-\Delta$--bounded with relative bound zero as operators on $L^2(\bbR^3)$.

\textbf{Step 3.}  In modern language, one shows that if $\calH = \calH_1\otimes\calH_2$ (tensor products are defined, for example, in \cite[Section 3.8]{RA}) and \eqref{7.6} holds, then
\begin{equation}\label{7.14}
  \norm{(B\otimes\bdone)\varphi} \le a\norm{(A\otimes\bdone)\varphi}+ b\norm{\varphi}
\end{equation}
Thus, if $V$ is a function of $x_1$ alone, $V(x_1,\dots,x_N) = v(x_1),\, v \in L^2(\bbR^3)+L^\infty(\bbR^3)$ so that \eqref{7.6} holds for $v$ on $L^2(\bbR^3)$, then it also holds for $B = V(x)$ and $A=-\Delta_1$ on $L^2(\bbR^{3N})$.  Since $|k_1|^2 \le |k|^2$, we conclude that $V$ is $-\Delta$--bounded with relative bound zero on $L^2(\bbR^{3N})$.  By a coordinate change, the same is true for $v(x_i-x_j)$.

Rather than talk about tensor products, Kato used iterated Fourier transforms and states inequalities like
\begin{equation}\label{7.15}
  \sup_{x_1} \left[\int |\varphi(x_1,\dots,x_N)|^2 d^3x_2\dots d^3x_N\right] \le C \int (1+k_1^2)^2 |\hat{\varphi}(k)|^2 d^{3N}k
\end{equation}
which is equivalent to the tensor product results.  This concludes our sketch of Kato's proof of his great theorem.

Kato states in the paper that he had found the results by 1944.  Kato originally submitted the paper to Physical Review.  Physical Review transferred the manuscript to the Transactions of the AMS where it eventually appeared.  They had trouble finding a referee and in the process the manuscript was lost (a serious problem in pre-Xerox days!).  Eventually, von Neumann got involved and helped get the paper accepted.  I've always thought that given how important he knew the paper was, von Neumann should have suggested Annals of Mathematics and used his influence to get it published there. The receipt date of October 15, 1948 on the version published in the Transactions shows a long lag compared to the other papers in the same issue of the Transactions which have receipt dates of Dec., 1949 through June, 1950. Recently after Kato's widow died and left his papers to some mathematicians (see the end of Section \ref{s1}) and some fascinating correspondence of Kato with Kemble and von~Neumann came to light.  There are plans to publish an edited version \cite{KatoLetters}.

It is a puzzle why it took so long for this theorem to be found.  One factor may have been von Neumann's attitude.  Bargmann told me of a conversation several young mathematicians had with von Neumann around 1948 in which von Neumann told them that self--adjointness for atomic Hamiltonians was an impossibly hard problem and that even for the Hydrogen atom, the problem was difficult and open.  This is a little strange since, using spherical symmetry, Hydrogen can be reduced to a direct sum of one dimensional problems.  For such ODEs, there is a powerful limit point--limit circle method named after Weyl and Titchmarsh (although it was Stone, in his 1932 book, who first made it explicit).  Using this, it is easy to see (there is one subtlety for $\ell=0$ since the operator is limit point at 0) that the Hydrogen Hamiltonian is self--adjoint and this appears at least as early as Rellich \cite{RellichLP}.  Of course, this method doesn't work for multielectron atoms.  In any event, it is possible that von Neumann's attitude may have discouraged some from working on the problem.

Still it is surprising that neither Friedrichs nor Rellich found this result.  In exploring this, it is worth noting that there is an alternate to step 2:

\textbf{Step 2$'$}.  On $\bbR^3$, there is the well known operator inequality (discussed further in Section \ref{s10} and in \cite[Section 6.2]{HA}) known as Hardy's inequality ($A \le B$ for positive operators is discussed in Section \ref{s10} and \cite[Section 7.5]{OT}; for this case, it means $\jap{\varphi,A\varphi} \le \jap{\varphi,B\varphi}$ for all $\varphi \in C_0^\infty(\bbR^3)$):
\begin{equation}\label{7.16}
  \frac{1}{4r^2} \le -\Delta
\end{equation}
Since $x \le \epsilon x^2 + \tfrac{1}{4} \epsilon^{-1}$ for $x \in (0,\infty)$, the spectral theorem implies that for any positive, self--adjoint operator, $C$, we have that
\begin{equation}\label{7.16a}
  C \le \epsilon C^2 + \tfrac{1}{4} \epsilon^{-1}
\end{equation}
so using this for $C=-\Delta$, \eqref{7.16} implies that
\begin{equation}\label{7.17}
  \frac{1}{4r^2} \le \epsilon (-\Delta)^2 + \frac{1}{4} \epsilon^{-1}
\end{equation}
equivalently, for $\varphi \in C_0^\infty(\bbR^3)$
\begin{equation}\label{7.18}
  \norm{r^{-1}\varphi}^2 \le 4\epsilon \norm{-\Delta\varphi}^2 + \epsilon^{-1} \norm{\varphi}^2
\end{equation}
which implies that $r^{-1}$ is $-\Delta$--bounded with relative bound zero.

Rellich used Hardy's inequality in his perturbation theory papers \cite{RellichPT} in a closely related context.  Namely he used \eqref{7.16} and \eqref{7.16a} for $C=r^{-1}$ to show that $r^{-1} \le 4\epsilon (-\Delta) + \tfrac{1}{4} \epsilon^{-1}$ to note the semiboundedness of the Hydrogen Hamiltonian.  Since Rellich certainly knew the Kato--Rellich theorem, it appears that he knew steps 1 and 2$'$.

In a sense, it is pointless to speculate why Rellich didn't find Theorem \ref{T7.2}, but it is difficult to resist.  It is possible that he never considered the problem of esa of atomic Hamiltonians, settling for a presumption that using the Friedrichs extension suffices (as Kato suggests in \cite{WP}) but I think that unlikely.  It is possible that he thought about the problem but dismissed it as too difficult and never thought hard about it.  Perhaps the most likely explanation involves Step 3: once you understand it, it is trivial, but until you conceive that it might be true, it might elude you.

Kato's original paper required that the $L^2$ piece have compact support (in the relevant variables).  While it is easy to accommodate global $L^2$, it is true that it is enough to be uniformly locally $L^2$, i.e.
\begin{equation}\label{7.18a}
  \sup_x \int_{|x-y| \le 1} |V(y)|^2 d^3y
\end{equation}
denoted $L^2_{unif}(\bbR^3)$.  It was Stummel \cite{Stummel} who first realized this.  There are general localization techniques, originally developed for form estimates by Ismagilov \cite{IMSI}, Morgan \cite{IMSM} and Sigal \cite{IMSS} (and discussed as the IMS localization formula in \cite[Section 3.1]{CFKS}) which have operator versions.  For a recent paper on these techniques, see Gesztesy et. al. \cite{GMNT}.  For example, \cite[Problem 7.1.9]{OT} proves:

\begin{theorem} \lb{T7.4} For each $\alpha \in \bbZ^\nu$, let $\Delta_\alpha$ be the cube of side $3$ centered at $\alpha$ and $\chi_\alpha$ its characteristic function.  Let $V$ be a measurable function on $\bbR^\nu$ so that for some positive $a,b$ and all $\alpha$ and all $\varphi \in C_0^\infty(\bbR^\nu)$
\begin{equation}\label{7.19}
  \norm{V\chi_\alpha\varphi}_2 \le a\norm{-\Delta\varphi}_2+b\norm{\varphi}_2
\end{equation}
Then for any $\epsilon > 0$, there is a $b_\epsilon$ so that for all $\varphi \in C_0^\infty(\bbR^\nu)$, we have that
\begin{equation}\label{7.20}
  \norm{V\varphi}_2 \le (a+\epsilon)\norm{-\Delta\varphi}+b_\epsilon\norm{\varphi}_2
\end{equation}
In particular, any $V \in L^2_{unif}(\bbR^3)$ is $-\Delta$--bounded on $L^2(\bbR^3)$ with relative bound $0$.
\end{theorem}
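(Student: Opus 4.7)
The statement is a classic ``stitching together'' result: local control uniform in the cube lets one conclude global control. The plan is to use a smooth quadratic partition of unity $\{j_\alpha\}_{\alpha\in\bbZ^\nu}$ with $j_\alpha(x)=j(x-\alpha)$ for a single $j\in C_0^\infty$ supported in $(-3/2,3/2)^\nu$ and $\sum_\alpha j_\alpha^2\equiv 1$. Since $\text{supp}(j_\alpha)\subset\Delta_\alpha$, we have $Vj_\alpha=V\chi_\alpha j_\alpha$, and because the $j_\alpha^2$ sum to one,
\begin{equation*}
  \|V\varphi\|_2^2 \;=\; \sum_\alpha\|Vj_\alpha\varphi\|_2^2 \;=\; \sum_\alpha\|V\chi_\alpha(j_\alpha\varphi)\|_2^2.
\end{equation*}
Applying the hypothesis \eqref{7.19} to each $j_\alpha\varphi\in C_0^\infty$ and using the elementary inequality $(p+q)^2\le(1+\tau)p^2+(1+\tau^{-1})q^2$ for any $\tau>0$ gives
\begin{equation*}
  \|V\varphi\|_2^2 \;\le\; (1+\tau)a^2\sum_\alpha\|\Delta(j_\alpha\varphi)\|_2^2 \,+\, (1+\tau^{-1})b^2\sum_\alpha\|j_\alpha\varphi\|_2^2,
\end{equation*}
and the second sum equals $\|\varphi\|_2^2$.

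The main point, and what I expect to be the only nontrivial step, is to bound $\sum_\alpha\|\Delta(j_\alpha\varphi)\|_2^2$ by something of the form $(1+o(1))\|\Delta\varphi\|_2^2 + C\|\varphi\|_2^2$. For this I would exploit the commutator $[-\Delta,j_\alpha]=-2\nabla j_\alpha\cdot\nabla-(\Delta j_\alpha)$, writing
\begin{equation*}
  -\Delta(j_\alpha\varphi) = j_\alpha(-\Delta\varphi) + [-\Delta,j_\alpha]\varphi,
\end{equation*}
and splitting by $(p+q)^2\le(1+\sigma)p^2+(1+\sigma^{-1})q^2$. The diagonal sum telescopes via $\sum j_\alpha^2=1$ to $\|\Delta\varphi\|_2^2$. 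For the commutator sum, since the translates $j_\alpha=j(\cdot-\alpha)$ overlap only with a uniformly bounded number of neighbors, the functions $\sum_\alpha|\nabla j_\alpha|^2$ and $\sum_\alpha|\Delta j_\alpha|^2$ are bounded on $\bbR^\nu$ by a constant $K$ depending only on $j$; thus $\sum_\alpha\|[-\Delta,j_\alpha]\varphi\|_2^2\le C_j(\|\nabla\varphi\|_2^2+\|\varphi\|_2^2)$. Finally, $\|\nabla\varphi\|_2^2=\jap{\varphi,-\Delta\varphi}$ is controlled via \eqref{7.16a} by $\eta\|\Delta\varphi\|_2^2+(4\eta)^{-1}\|\varphi\|_2^2$ for arbitrary $\eta>0$. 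Assembling all of this and then choosing $\tau,\sigma,\eta$ small in the right order makes the coefficient of $\|\Delta\varphi\|_2^2$ equal to $a^2(1+o(1))$, yielding the desired $(a+\epsilon)$ after taking square roots, with $b_\epsilon$ absorbing everything else.

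For the ``in particular'' clause, I would \emph{not} use a naive truncation of $V$, since for $V\in L^2_{\unif}(\bbR^3)$ the tails $V^{(n)}$ need not have uniformly small $L^2_{\unif}$ norm. Instead, the right tool is a scaling of the Sobolev bound \eqref{7.11}. Applied to $\psi(y)=\varphi(y/\lambda)$ in dimension $\nu=3$, a change of variables in \eqref{7.11} yields, for every $\lambda>0$,
\begin{equation*}
  \|\varphi\|_\infty \;\le\; C\bigl(\lambda^{-1/2}\|\Delta\varphi\|_2+\lambda^{3/2}\|\varphi\|_2\bigr),
\end{equation*}
so $\varphi\mapsto\|\varphi\|_\infty$ is $-\Delta$--bounded with relative bound $0$. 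Setting $M\equiv\sup_\alpha\|V\chi_\alpha\|_2<\infty$, the pointwise bound $\|V\chi_\alpha\varphi\|_2\le M\|\varphi\|_\infty$ combined with the scaled Sobolev estimate verifies \eqref{7.19} with $a$ replaced by $CM\lambda^{-1/2}$, which is arbitrarily small. Applying the first part of the theorem then delivers relative bound $0$, completing the proof.
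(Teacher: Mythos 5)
Your proposal is correct. The paper itself contains no proof of this theorem---it only points to Problem~7.1.9 of \cite{OT}---so there is no internal argument to compare against, but your route is the expected one and is exactly what the preceding text signals by invoking the IMS localization formula of Ismagilov, Morgan and Sigal. The architecture is sound: a quadratic partition of unity $\{j_\alpha\}$ supported in the cubes $\Delta_\alpha$, the identity $\norm{V\varphi}_2^2=\sum_\alpha\norm{V\chi_\alpha(j_\alpha\varphi)}_2^2$, the hypothesis applied to each localized piece, and then the commutator estimate $-\Delta(j_\alpha\varphi)=j_\alpha(-\Delta\varphi)+[-\Delta,j_\alpha]\varphi$, with $\sum_\alpha\norm{j_\alpha\Delta\varphi}_2^2$ telescoping to $\norm{\Delta\varphi}_2^2$ and the commutator sum controlled by $\norm{\nabla\varphi}_2^2+\norm{\varphi}_2^2$ because only finitely many translates overlap. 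Interpolating $\norm{\nabla\varphi}_2^2=\jap{\varphi,-\Delta\varphi}\le\eta\norm{\Delta\varphi}_2^2+\tfrac14\eta^{-1}\norm{\varphi}_2^2$ lets you absorb this into the error. The three parameters $\tau,\sigma,\eta$ are then tuned (in that order) to push the coefficient of $\norm{\Delta\varphi}_2^2$ down to $(a+\epsilon)^2$, and $\sqrt{A^2+B^2}\le A+B$ gives \eqref{7.20}.

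Your handling of the ``in particular'' clause deserves emphasis: you correctly point out that the truncation \eqref{7.13} that works for $V\in L^2(\bbR^3)$ is useless here, since for $V\in L^2_{\unif}$ the quantities $\sup_\alpha\norm{V^{(n)}\chi_\alpha}_2$ need not go to zero, so one cannot arrange a uniformly small ``$L^2$ part''. Your replacement---scaling the Sobolev bound \eqref{7.11} to get $\norm{\varphi}_\infty\le C(\lambda^{-1/2}\norm{\Delta\varphi}_2+\lambda^{3/2}\norm{\varphi}_2)$ and then feeding $\norm{V\chi_\alpha\varphi}_2\le\norm{V\chi_\alpha}_2\norm{\varphi}_\infty$ into \eqref{7.19} with $a=CM\lambda^{-1/2}$---is the clean fix, and since the theorem yields relative bound at most $a+\epsilon$ for arbitrary $\epsilon$, letting $\lambda\to\infty$ indeed gives relative bound zero. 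This is equivalent to, but perhaps more transparent than, replacing $(1+k^2)^{-1}$ by $(\lambda^2+k^2)^{-1}$ in \eqref{7.10}.
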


In exploring extensions of Theorem \ref{T7.1}, it is very useful to have simple self--adjointness criteria for $-\tfrac{d^2}{dx^2}+q(x)$ on $L^2(0,\infty)$ which then translate to criteria for $-\Delta+V(x)$ if $V(x) = q(|x|)$ is a spherically symmetric potential.  If $q \in L^2_{loc}(0,\infty)$, for each $z \in \bbC$, the set of solutions of $-u''+qu=zu$ (in the sense that $u$ is $C^1$, $u'$ is absolutely continuous and $u''$ is its $L^1_{loc}$ derivative) is two dimensional.  If all solutions are $L^2$ at $\infty$ (resp. $0$), we say that $-\tfrac{d^2}{dx^2}+q(x)$ is limit circle at $\infty$ (resp. $0$).  If it is not limit circle, we say it is limit point.  It is a theorem that whether one is limit point or limit circle is independent of $z$.  However, in the limit point case, whether the set of $L^2$ solutions near infinity is $0$ or $1$ dimensional can be $z$ dependent.  One has the basic

\begin{theorem} [Weyl limit point--limit circle theorem] \lb{T7.5} Let $q \in L^2_{loc}(0,\infty)$.  Then $-\tfrac{d^2}{dx^2}+q(x)$ is esa on $C_0^\infty(0,\infty)$ if and only if $-\tfrac{d^2}{dx^2}+q(x)$ is limit point at both $0$ and $\infty$.
\end{theorem}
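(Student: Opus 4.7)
The strategy is to apply von Neumann's deficiency index criterion. Since $q$ is real, complex conjugation intertwines $\ker(T_0^* - i)$ and $\ker(T_0^* + i)$, so the two deficiency indices $n_\pm$ coincide, and $T_0$ is esa iff $n_\pm = 0$. First I would identify $T_0^*$, via a standard integration-by-parts argument valid because $q \in L^2_{\mathrm{loc}}$, with the maximal operator $T_{\max}$ on the domain
\[
  D(T_{\max}) = \{u \in L^2(0,\infty) : u, u' \text{ locally absolutely continuous, } \tau u \in L^2(0,\infty)\},
\]
acting by $\tau = -d^2/dx^2 + q$. Hence $\ker(T_0^* \mp i)$ consists exactly of the $L^2(0,\infty)$-solutions of $\tau u = \pm i u$, i.e., those solutions of the ODE that are simultaneously $L^2$ near $0$ and near $\infty$.

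Next I would invoke Weyl's classical Weyl-disk construction — the one piece of classical ODE theory we need — which guarantees that for any $z$ with $\Ima z \ne 0$ there exists at least one solution of $\tau u = zu$ that is $L^2$ near $\infty$, and similarly near $0$. Combined with the definition of limit point/limit circle, the space of solutions of $\tau u = iu$ that are $L^2$ near a given endpoint has dimension $1$ if $\tau$ is limit point there and dimension $2$ if it is limit circle. This immediately yields the easy direction (esa $\Rightarrow$ limit point at both endpoints): if limit circle holds at, say, $\infty$, every solution of $\tau u = iu$ is $L^2$ near $\infty$, and picking any nonzero solution that is also $L^2$ near $0$ (available from the Weyl disk at $0$) produces a nonzero element of $\ker(T_0^* - i)$, forcing $n_+ \ge 1$.

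The harder direction is to show $n_+ = 0$ when $\tau$ is limit point at both endpoints. For this I would use the Lagrange/Green identity: if $\tau u = iu$ then $u'' = (q-i)u$, so a direct computation gives
\[
  \frac{d}{dx}\bigl(\bar u(x) u'(x) - \bar u'(x) u(x)\bigr) = -2i\,|u(x)|^2,
\]
and hence, writing $W(f,g) = fg' - f'g$,
\[
  W(\bar u, u)(R) - W(\bar u, u)(r) = -2i\int_r^R |u(x)|^2\,dx, \qquad 0 < r < R < \infty.
\]
The key lemma — the essential content of the limit point condition, and the main obstacle to the proof — is that $\tau$ is limit point at $\infty$ iff $\lim_{R\to\infty} W(\bar u, v)(R) = 0$ for every $u, v \in D(T_{\max})$, and similarly at $0$. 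Granting this lemma, taking $r \to 0^+$ and $R \to \infty$ above with $u \in \ker(T_0^* - i) \subset D(T_{\max})$ forces $\|u\|_2^2 = 0$, hence $u = 0$ and $n_+ = 0$. The lemma itself I would prove by contrapositive: a nonvanishing limiting Wronskian at $\infty$, after subtracting suitable $L^2$-near-$\infty$ pieces (using that the point values $(u(R), u'(R))$ span $\bbC^2$), produces two linearly independent solutions of $\tau w = 0$ both $L^2$ near $\infty$, contradicting limit point. This completes the outline.
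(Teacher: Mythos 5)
The paper does not actually prove Theorem 7.5; the remark after the statement just cites Stone \cite{StoneBk} and \cite[Theorem 7.4.12]{OT} for proofs, so there is no in-paper argument to compare against. That said, your overall strategy is exactly the standard textbook route: reduce esa to $n_\pm = 0$ (equal because $q$ is real), identify $T_0^* = T_{\max}$, use Weyl's existence of at least one $L^2$ solution at each endpoint for the easy direction (limit circle somewhere $\Rightarrow$ a nonzero element of $\ker(T_0^* - i)$), and for the hard direction combine the Lagrange identity with the vanishing of the boundary form at both ends. The choice of key lemma is also the right one.

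There is, however, a genuine gap in your sketch of the proof of that key lemma. You propose that a nonvanishing limiting Wronskian at $\infty$ for $u,v\in D(T_{\max})$ can, ``after subtracting suitable $L^2$-near-$\infty$ pieces,'' produce two linearly independent $L^2$-near-$\infty$ solutions of $\tau w = 0$. This does not work: $u$ and $v$ are merely in the maximal domain, not homogeneous solutions, and the variation-of-parameters term $\phi_1\int\phi_2(\tau u)-\phi_2\int\phi_1(\tau u)$ one would need to subtract to convert $u$ into a solution involves the non-$L^2$ fundamental solution and is in general not square-integrable, so the subtraction destroys the $L^2$ property rather than preserving it. The lemma is genuinely quantitative, and its standard proofs go through one of two substantive arguments. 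One is the half-line bootstrap: on $(1,\infty)$ with $1$ a regular endpoint, limit point at $\infty$ immediately gives deficiency indices $(1,1)$ for the minimal operator there (since all solutions are $L^2$ near the regular point $1$), hence imposing a single boundary condition at $1$ yields a self-adjoint operator $T_\theta$; self-adjointness forces $[u,v]_\infty = [u,v]_1 = 0$ for $u,v\in D(T_\theta)$, and one passes to all of $D(T_{\max})$ by perturbing by a compactly supported function that adjusts the Cauchy data at $1$. The other is the Weyl-circle route: limit point means the disks shrink to a point, which yields the equality $\int_1^\infty|\chi|^2 = \Ima m(z)/\Ima z$ for the Weyl solution $\chi = \theta + m\phi$, and this, fed back into the Lagrange identity, gives $\lim_R W(\bar\chi,\chi)(R)=0$, from which one again extends to general $u,v$. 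Either way you need one of these mechanisms; the algebraic subtraction you propose does not supply it, so as written the proof of the hard direction is incomplete.
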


\begin{remarks} 1.  This result holds for any interval $(a,b) \subset \bbR$ where $a$ can be $-\infty$ and/or $b$ can be $\infty$.

2.  If it is limit point at only one of $0$ and $\infty$ and limit circle at the other point, the deficiency indices (see \cite[Section 7.1]{OT}  for definitions) are $(1,1)$ and if it is limit circle at both $0$ and $\infty$, they are $(2,2)$.  In particular if it is limit point at $\infty$ and $\int_{0}^{1} |V(x)| dx < \infty$, then the deficiency indices are (1,1) and the extensions are described by boundary conditions $\cos\theta\, u'(0)+\sin\theta\, u(0) = 0$.

3. The ideas behind much of the theorem go back to Weyl \cite{WeylLP1, WeylLP2} in 1910 and predate the notion of self--adjointness.  It was Stone \cite{StoneBk} who first realized the implications for self--adjointness and proved Theorem \ref{T7.5}.  \cite[Thm 7.4.12]{OT} has a succinct proof.  Titchmarsh \cite{TitBk1} reworked the theory so much that it is sometimes called Weyl--Titchmarsh theory.  For additional literature, see \cite{CoddLev, EastKalf, LevitSars}.
\end{remarks}

\begin{example} \lb{E7.6} ($x^{-2}$ on $(0,\infty)$)  Let $q(x) = \beta x^{-2}$.  Trying $x^\alpha$ in $-u''+\beta x^{-2} u = 0$, one finds that $\alpha(\alpha-1)=\beta$ is solved by $\alpha_\pm = \tfrac{1}{2}(1 \pm \sqrt{1+4\beta})$.  For $\beta \ne -\tfrac{1}{4}$, this yields two linearly independent solutions, so a basis.  The larger solution (and sometimes both) is not $L^2$ at infinity, so it is always limit point there.

For $\alpha \ge -\tfrac{1}{4}$, there is a positive solution which implies that $H_\beta \equiv -\tfrac{d^2}{dx^2}+\beta x^{-2} \ge 0$.  If $\alpha < -\tfrac{1}{4}$, the solutions oscillate and the real solutions have infinitely many zeros which implies that the operator is not positive (see \cite[Section 7.4]{OT}).  Thus
\begin{equation}\label{7.20}
  -\tfrac{d^2}{dx^2}+\beta x^{-2} \ge 0 \textrm{ on } C_0^\infty(0,\infty)  \iff  \beta \ge -\tfrac{1}{4}
\end{equation}
This is Hardy's inequality on $L^2(0,\infty)$.

$x^\alpha \notin L^2(0,1) \iff \alpha \le -\tfrac{1}{2}$.  At $\beta = \tfrac{3}{4},\, \alpha_- = -\tfrac{1}{2}$.  Thus $H_\beta$ is always limit point at $\infty$ and is limit point at $0$ if and only if $\beta \ge \tfrac{3}{4}$, i.e.
\begin{equation}\label{7.21}
  -\tfrac{d^2}{dx^2}+\beta x^{-2} \textrm{ is esa on } C_0^\infty(0,\infty)  \iff  \beta \ge \tfrac{3}{4}
\end{equation}
A comparison theorem shows that if
\begin{equation}\label{7.22}
  q(x) \ge \tfrac{3}{4} x^{-2} - c
\end{equation}
for some real $c$, then $-\tfrac{d^2}{dx^2}+q(x)$ is esa on $C_0^\infty(0,\infty)$.
\end{example}

On $\bbR^\nu$, one defines spherical harmonics (see \cite[Section 3.5]{HA}), $\{Y_{\ell m}\}_{m=1;\ell=0,1,\dots}^{D(\nu,\ell)}$ on $S^{\nu - 1}$, the unit sphere in $\bbR^\nu$, to be the restriction to the unit sphere of harmonic polynomials of degree $\ell$.  These polynomials are a vector space of dimension $D(\nu,\ell) = \tfrac{\ell+\nu-2}{\nu-2}\binom{\nu-3+\ell}{\nu-3}$ and $Y_{\ell m}$ are a convenient orthonormal basis.  Any function $f \in \calS(\bbR^\nu)$ can be expanded in the form ($r \in (0,\infty),\, \omega \in S^{\nu - 1}$)
\begin{equation}\label{7.23}
  f(r\omega) = \sum_{\ell,m} r^{-(\nu - 1)/2} f_{\ell m}(r) Y_{\ell m}(\omega)
\end{equation}
(where for $\nu \ge 2$, $f_{\ell m}$ vanishes so rapidly at $r=0$ that $r^{-(\nu - 1)/2}f_{\ell m}(r)$ has a limit as $r \downarrow 0$ which must be zero unless $(\ell m) = (0 1)$).  Moreover, if $\sigma_\nu$ is the area of the unit sphere, then
\begin{equation}\label{7.24}
  \norm{f}^2_{L^2(\bbR^\nu,d^\nu x)} = \sigma_\nu \sum_{\ell, m} \norm{f_{\ell m}}^2_{L^2(\bbR,dr)}
\end{equation}
and
\begin{equation}\label{7.25}
  (\Delta f)_{\ell m} = \left[\frac{d^2}{dr^2}-\frac{(\nu-1)(\nu-3)}{4r^2}-\frac{\ell(\ell+\nu-2)}{r^2}\right]f_{\ell m}
\end{equation}

If $V(\boldsymbol{r}) = q(r)$, then $-\Delta+V$ is a direct sum of operators of the form
\begin{align}\label{7.26}
  H_{\ell m}(V) &= -\frac{d^2}{dr^2}+q_\ell(r) \\
  q_{\ell}(x) &= \frac{(\nu-1)(\nu-3)}{4x^2}+\frac{\ell(\ell+\nu-2)}{x^2}+q(x)
\end{align}
It is easy to see that such direct sums are bounded from below (resp. esa) on $C_{00}^\infty(\bbR^\nu) \equiv C_0^\infty(\bbR^\nu\setminus\{0\})$ if and only if each $H_{\ell m}$ is bounded from below (resp. esa) on $C_0^\infty(0,\infty)$  We conclude that

\begin{proposition} \lb{P7.7} On $\bbR^\nu$, $H_\beta^{(\nu)} \equiv -\Delta+\beta |x|^{-2}$ on $C_{00}^\infty(\bbR^\nu)$ is

(1) Bounded from below
\begin{equation}\label{7.27}
H_\beta^{(\nu)} \ge 0 \iff \beta \ge -\frac{(\nu-2)^2}{4}
\end{equation}

(2) $H_\beta^{(\nu)}$ is esa on $C_{00}^\infty(\bbR^\nu)$ if and only if
\begin{equation}\label{7.28}
  \beta \ge -\frac{\nu(\nu-4)}{4}
\end{equation}
\end{proposition}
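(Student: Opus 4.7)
The plan is to apply the spherical-harmonic decomposition \eqref{7.23}--\eqref{7.26} to reduce $H_\beta^{(\nu)}$ on $C_{00}^\infty(\bbR^\nu)$ to a direct sum of half-line operators of the form treated in Example \ref{E7.6}. Under the unitary identification $L^2(\bbR^\nu) \cong \bigoplus_{\ell, m} L^2((0,\infty), dr)$ afforded by $f \leftrightarrow (f_{\ell m})$, the formula \eqref{7.25} together with the radial potential $\beta/r^2$ shows that $H_\beta^{(\nu)}$ acts blockwise as $H_{\ell m} = -d^2/dr^2 + \gamma_\ell r^{-2}$ with
\[
\gamma_\ell = \tfrac{(\nu-1)(\nu-3)}{4} + \ell(\ell+\nu-2) + \beta.
\]
Since $\ell \mapsto \ell(\ell+\nu-2)$ is nondecreasing on $\ell \ge 0$, one has $\inf_\ell \gamma_\ell = \gamma_0$, and the elementary identities $\tfrac{1}{4} + \tfrac{(\nu-1)(\nu-3)}{4} = \tfrac{(\nu-2)^2}{4}$ and $\tfrac{3}{4} - \tfrac{(\nu-1)(\nu-3)}{4} = -\tfrac{\nu(\nu-4)}{4}$ translate the half-line thresholds $-1/4$ and $3/4$ from Example \ref{E7.6} into precisely the thresholds appearing in (1) and (2).

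For (1), I would use orthogonality of spherical harmonics to write, for $f \in C_{00}^\infty(\bbR^\nu)$ with radial components $f_{\ell m} \in C_0^\infty(0,\infty)$,
\[
\jap{f, H_\beta^{(\nu)} f} = \sigma_\nu \sum_{\ell, m} \jap{f_{\ell m}, H_{\ell m} f_{\ell m}}_{L^2(0,\infty)}.
\]
If $\beta \ge -(\nu-2)^2/4$, then $\gamma_\ell \ge -1/4$ for every $\ell$, and Example \ref{E7.6} makes each term on the right nonnegative. Conversely, if $\gamma_0 < -1/4$, Example \ref{E7.6} supplies $g \in C_0^\infty(0,\infty)$ with $\jap{g, H_{0,1} g} < 0$, and the spherically symmetric $f(x) = |x|^{-(\nu-1)/2} g(|x|) Y_{0,1}$ lies in $C_{00}^\infty(\bbR^\nu)$ and yields $\jap{f, H_\beta^{(\nu)} f} < 0$.

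For (2), the strategy is to show that $H_\beta^{(\nu)}|_{C_{00}^\infty}$ is esa if and only if every $H_{\ell m}$ is esa on $C_0^\infty(0,\infty)$; by Example \ref{E7.6} and monotonicity in $\ell$ this is equivalent to $\gamma_0 \ge 3/4$, i.e.\ to $\beta \ge -\nu(\nu-4)/4$. For the ``if'' direction, each $\overline{H_{\ell m}}$ is then self-adjoint, the orthogonal direct sum $\bigoplus \overline{H_{\ell m}}$ is self-adjoint on its natural domain, and a standard cores argument (truncate the spherical-harmonic expansion in $\ell$ and approximate each radial component in graph norm using that $C_0^\infty(0,\infty)$ is a core for $\overline{H_{\ell m}}$) shows that $C_{00}^\infty(\bbR^\nu)$ is a core for this direct sum. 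For the ``only if'' direction, if $\gamma_0 < 3/4$ then $H_{0,1}$ has deficiency indices $(1,1)$ by Theorem \ref{T7.5} and Example \ref{E7.6}; combining its distinct self-adjoint extensions with fixed (say Friedrichs) self-adjoint extensions of the remaining $H_{\ell m}$ produces distinct self-adjoint extensions of $H_\beta^{(\nu)}|_{C_{00}^\infty}$, defeating esa. The main obstacle is formalizing this transfer of extensions through the infinite direct sum, which requires verifying that the unitary equivalence truly intertwines the respective symmetric extensions; this is standard but not quite automatic, and is the only point where care is needed.
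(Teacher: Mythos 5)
Your proof follows the paper's approach exactly: reduce via the spherical-harmonic decomposition \eqref{7.23}--\eqref{7.26} to a direct sum of half-line operators $H_{\ell m}$, apply Example~\ref{E7.6} to the worst ($\ell=0$) channel, and invoke the algebraic identities recorded in the Remark after the proposition. The paper simply asserts the transfer of positivity and essential self-adjointness between $C_{00}^\infty(\bbR^\nu)$ and the direct sum over $C_0^\infty(0,\infty)$ as ``easy to see,'' and your core argument for the ``if'' direction and deficiency-index argument for the ``only if'' direction correctly supply the details that are elided there.
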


\begin{remarks}  1. This uses $-\tfrac{(\nu-1)(\nu-3)}{4}-\tfrac{1}{4} = -\tfrac{(\nu-2)^2}{4}$ and $-\tfrac{(\nu-1)(\nu-3)}{4}+\tfrac{3}{4} = -\tfrac{\nu(\nu-4)}{4}$.

2. \eqref{7.27} is the $\nu$--dimensional Hardy inequality with optimal constant (see Section \ref{s10} below).

3.  By \eqref{7.22}, if $\nu \ge 4$ and $V$ is spherically symmetric and obeys $V(x) \ge -\tfrac{\nu(\nu-4)}{4|x|^2}$, then $-\Delta+V$ is esa--$\nu$ (discussed further in Section 9).

4.  In particular, $C_{00}^\infty(\bbR^\nu)$ is an operator core for $-\Delta$ if and only if $\nu \ge 4$ and a form core for $-\Delta$ if and only if $\nu \ge 2$.

5. By \eqref{7.22}, if $\gamma > 2$, then $-\Delta+\lambda |x|^{-\gamma}$ ($\lambda>0$) is esa on $C_{00}^\infty(\bbR^\nu)$.  If $\nu \ge 5$ and $2 < \gamma < \nu/2$, we have that $|x|^{-\gamma} \in L^2(\bbR^\nu)+L^\infty(\bbR^\nu)$, so one can define $T \equiv -\Delta+\lambda |x|^{-\gamma}$ on $C_{0}^\infty(\bbR^\nu)$ and it is easy to see that $T$ is symmetric.  It follows by general principles \cite[Section 7.1]{OT} that $T$ is esa--$\nu$.

6.  There is an intuition to explain why one loses self--adjointness of $-\Delta-|x|^{-\gamma}$ when $\gamma > 2$.  If $\gamma <2$, in classical mechanics there is an $\tfrac{\ell^2}{|x|^2}$ barrier which dominates the $-|x|^{-\gamma}$, so for almost every initial condition, the classical particle avoids the singularity at the origin. But when $\gamma > 2$, every negative energy  initial condition will fall into the origin in finite time so in classical mechanics, one needs to supplement with a rule about what happens when the particle is captured by the singularity.  The quantum analog is the loss of esa.  There is of course a difference at $\gamma = 2$ where classically there is a problem no matter the coupling but not in quantum mechanics.  This is associated with the uncertainty principle.  In the next section, we'll see that this intuition is also useful to understand what happens with $V$'s going to $-\infty$ at spatial infinity.
\end{remarks}

We summarize in

\begin{example} \lb{E7.8} ($|x|^{-2}$ in $\bbR^\nu$; $\nu \ge 5$)  Rellich's Inequality \cite{RellichIneq} (see also \cite[Problem 7.4.10]{OT}, Section \ref{s10} below, Gesztesy--Littlejohn \cite{GLRellich} or Robinson \cite{Rob17} for a proof of Rellich's inequality via a double commutator estimate like the one before \eqref{3.28Q} and Hardy's inequality; this proof is a variant of one of Schmincke \cite{SchmSing}) says that on $\bbR^\nu, \, \nu \ge 5$, one has
\begin{equation}\label{7.29}
  \frac{\nu(\nu-4)}{4} \norm{|x|^{-2} \varphi} \le \norm{\Delta\varphi}
\end{equation}
for any $\varphi \in C_{0}^\infty(\bbR^\nu)$.  (Of course, this also hold if $\nu \le 4$ since the left side is negative or $0$ (maybe $-\infty$) in that case.)  This says that $B=-|x|^{-2}$ is $-\Delta$--bounded if $\nu \ge 5$.  When $B$ is $A$--bounded with $A$ positive, there are three natural values of $\lambda$, call them $\lambda_1, \lambda_2, \lambda_3$ with $0<\lambda_1\le\lambda_2\le\lambda_3$ so that

$\lambda B$ is A-bounded with relative bound $<1$ if and only if $0 \le \lambda < \lambda_1$.

$A+\lambda B$ is esa on $D(A)$ if $0 \le \lambda < \lambda_2$ and not if $\lambda > \lambda_2$.

$A+\lambda B$ is bounded from below if $0 \le \lambda < \lambda_3$ and not if $\lambda > \lambda_3$

\noindent By \eqref{7.27}, \eqref{7.28} and \eqref{7.29}, we see that
\begin{equation}\label{7.30}
  \lambda_1(\nu) = \lambda_2(\nu) = \frac{\nu(\nu-4)}{4}, \qquad \lambda_3(\nu)=\frac{(\nu-2)^2}{4}
\end{equation}

There is no reason that $\lambda_1$ has to equal $\lambda_2$, i.e. esa can persist past the point where the relative bound is $1$.  For example, if $A=-\Delta$ on $\bbR^\nu$ with $\nu \ge 5$ and
\begin{equation*}
  B = -|x|^{-2}+ 2|x-e|^{-2}
\end{equation*}
for $e$ some fixed, non--zero vector, then one can prove that $\lambda_1 = \tfrac{\nu(\nu-4)}{8},\,\lambda_2= \tfrac{\nu(\nu-4)}{2}$ and $\lambda_3=\tfrac{(\nu-2)^2}{4}$.
\end{example}

We turn now to the extensions of Theorem \ref{T7.1} to $\nu \ne 3$.  The first results are due to Stummel which we'll discuss later.  In 1959, Brownell \cite{Brownell} proved any $V \in L^p(\bbR^\nu)+L^\infty(\bbR^\nu)$ is $-\Delta$--bounded with relative bound zero (see also Nilsson \cite{Nilsson}) if
\begin{equation}\label{7.31}
  p=2\, (\nu \le 3), \qquad p > \nu/2 \, (\nu \ge 4)
\end{equation}
Since $|x|^{-2} \in L^p+L^\infty$ for any $p < \nu/2$, we see that \eqref{7.31} is optimal, except perhaps for the borderline case $p=\nu/2$ (see remark 2 below).   Brownell mimicked Kato's proof, except that \eqref{7.11} is replaced by
\begin{equation}\label{7.32}
  \norm{\varphi}_r \le C(\norm{\Delta\varphi}_2 + \norm{\varphi}_2)
\end{equation}
for any $r>r_0$ where $r_0^{-1} = \tfrac{1}{2} -\tfrac{2}{\nu}$ when $\nu \ge 4$.  In place of \eqref{7.10a}, Brownell used a Hausdorff--Young inequality (see \cite[Theorem 6.6.2]{RA}).

\eqref{7.32} is what is known as an inhomogeneous Sobolev inequality.  There are now (and even then, but not so widely known) sharper inequalities than \eqref{7.32}.  Recall that $L^p_w(\bbR^\nu)$, the weak $L^p$ space is defined as the measurable functions for which $\norm{f}_{p,w}^*$ is finite where
\begin{equation}\label{7.33}
  |\{x\,|\,|f(x)|>t\}| \le \frac{(\norm{f}_{p,w}^*)^p}{t^p}
\end{equation}
$\norm{f}_{p,w}^*$ is defined to be the minimal constant so that \eqref{7.33} holds.  It is not a norm but, for $p > 1$, it is equivalent to one -- see \cite[Section 2.2]{HA}.  One has that $L^p(\bbR^\nu) \subset L^p_w(\bbR^\nu)$ but for $f \in L^p_w$, one can have $\int |f(x)|^p d^\nu x$ logarithmically divergent, for example $f(x) = |x|^{-\nu/p}$ is in $L^p_w$ but not $L^p$.

We call $p$, \emph{$\nu$--canonical} if $p=2$ for $\nu \le 3$, $p > 2$ if $\nu=4$ and $p=\nu/2$ if $p \ge 5$. The optimal $L^p$ extension of Theorem \ref{T7.1} is

\begin{theorem} \lb{T7.9}  Let $p$ be $\nu$--canonical.  Then $V \in L^p(\bbR^\nu)+L^\infty(\bbR^\nu)$ is $-\Delta$--bounded with relative bound zero.  If $\nu \ge 5$, $V \in L^p_w(\bbR^\nu)+L^\infty(\bbR^\nu)$ is $-\Delta$--bounded on $L^2(\bbR^\nu)$.
\end{theorem}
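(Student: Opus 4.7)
\medskip

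\noindent\textbf{Proof plan.} The strategy is to follow the three--step architecture of Kato's original proof of Theorem \ref{T7.1}, but to replace the simple Schwarz--Plancherel bound \eqref{7.10}--\eqref{7.11} by a sharper Sobolev--type estimate adapted to the dimension. Once the right embedding is in place, the reduction to the single operator $-\Delta$ on $L^2(\bbR^\nu)$ via Step 3 of Kato's argument carries over verbatim, so I will only discuss the one--variable estimate.

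First I would establish the inhomogeneous Sobolev inequality
\begin{equation*}
  \norm{\varphi}_{p'} \le C\bigl(\norm{\Delta\varphi}_2+\norm{\varphi}_2\bigr),\qquad \varphi\in D(-\Delta),
\end{equation*}
where $p'$ is the H\"older conjugate of $p/2$ (so $1/p+1/p'=1/2$). For $\nu\le 3$ this is exactly the Kato bound \eqref{7.10a}--\eqref{7.11} with $p'=\infty$, obtained from $(1+k^2)^{-1}\in L^2(\bbR^\nu)$. For $\nu=4$, given any $p>2$, one has $1<r:=p'/(p'-1)<2$, and the same Schwarz+H\"older trick on $(1+k^2)\hat\varphi(k)$ shows $\hat\varphi\in L^r$, hence $\varphi\in L^{p'}$ by Hausdorff--Young. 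For $\nu\ge 5$ with $p=\nu/2$ (so $p'=2\nu/(\nu-4)$), this is the standard embedding $H^2(\bbR^\nu)\hookrightarrow L^{2\nu/(\nu-4)}(\bbR^\nu)$, a direct consequence of Sobolev's inequality applied twice (or of the $L^2\to L^{p'}$ boundedness of the Bessel potential $(1-\Delta)^{-1}$).

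Next, for $V\in L^p(\bbR^\nu)$, H\"older's inequality together with the Sobolev bound gives
\begin{equation*}
  \norm{V\varphi}_2 \le \norm{V}_p\,\norm{\varphi}_{p'}\le C\norm{V}_p\bigl(\norm{\Delta\varphi}_2+\norm{\varphi}_2\bigr),
\end{equation*}
so $V$ is $-\Delta$--bounded. To upgrade this to relative bound zero when $V=V_1+V_2$ with $V_1\in L^p$, $V_2\in L^\infty$, I would apply Kato's truncation trick \eqref{7.13}: split $V_1=(V_1-V_1^{(n)})+V_1^{(n)}$, use dominated convergence to make $\norm{V_1-V_1^{(n)}}_p$ as small as desired, and absorb $V_1^{(n)}\in L^\infty$ into the $L^\infty$ piece. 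This mirrors \eqref{7.12}--\eqref{7.13} verbatim and produces arbitrarily small constants in front of $\norm{\Delta\varphi}_2$.

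The main obstacle is the weak--$L^p$ assertion for $\nu\ge 5$, because the truncation step above genuinely fails: a function like $V(x)=|x|^{-2}$ lies in $L^{\nu/2}_w$ but cannot be written as a small $L^{\nu/2}$ piece plus a bounded piece, so the best one can hope for is boundedness, not relative bound zero (consistent with the statement). The clean route is to refine Step 1 to the Lorentz--Sobolev embedding $H^2(\bbR^\nu)\hookrightarrow L^{p',2}(\bbR^\nu)$ with $p'=2\nu/(\nu-4)$, then invoke H\"older's inequality in Lorentz spaces, $L^{p,\infty}\cdot L^{p',2}\hookrightarrow L^{2,2}=L^2$ (valid because $1/p+1/p'=1/2$ and $1/\infty+1/2=1/2$), to conclude
\begin{equation*}
  \norm{V\varphi}_2 \le C\norm{V}_{p,w}^{*}\,\norm{\varphi}_{p',2}\le C'\norm{V}_{p,w}^{*}\bigl(\norm{\Delta\varphi}_2+\norm{\varphi}_2\bigr).
\end{equation*}
A more elementary (but less sharp) alternative, sufficient here, is to reduce to the model case $|x|^{-2}$: any $V\in L^{\nu/2}_w+L^\infty$ can be dominated pointwise, up to a bounded term, by a constant multiple of $|x-x_0|^{-2}$ after localizing to its level sets, and Rellich's inequality \eqref{7.29} then supplies the required operator bound directly. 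Either route delivers the second half of the theorem, and the reduction of Step 3 to $L^2(\bbR^{\nu N})$ through tensor--product estimates of the form \eqref{7.15} is unchanged from Kato's original argument.
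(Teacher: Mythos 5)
Your proposal is essentially correct, and for $\nu\le 4$ it follows exactly the Brownell route the paper describes (inhomogeneous Sobolev/Hausdorff--Young, then H\"older, then the truncation \eqref{7.13} to drive the coefficient to zero). For $\nu\ge 5$ you use the Lorentz--Sobolev refinement $H^2(\bbR^\nu)\hookrightarrow L^{2\nu/(\nu-4),2}$ together with H\"older in Lorentz spaces; this is correct and is a slightly different packaging of what the paper suggests. The paper's Remark~2 instead cites the Stein--Weiss convolution theorem (if $f\in L^{\nu/2}_w$ and $g\in L^{\nu/(\nu-2)}_w$, then $h\mapsto g*(fh)$ is $L^2$--bounded) applied to $V(-\Delta+E)^{-1}$, whose kernel is dominated by an $L^{\nu/(\nu-2)}_w$ function; both arguments live in the same Lorentz/O'Neil circle of ideas, so the choice is largely one of taste. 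The paper's Remark~1 also records a cleaner organization that you might adopt: for $\nu\ge 5$ the weak--$L^p$ boundedness statement directly implies the $L^p$ relative--bound--zero statement, because any $V_1\in L^p$ can be written as an $L^\infty$ piece plus an $L^p_w$ piece of arbitrarily small weak norm, so the $\nu\ge 5$ case of your truncation step is subsumed.

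One part of your proposal should be dropped or fixed. The ``more elementary alternative'' for $\nu\ge 5$ --- dominating a general $V\in L^{\nu/2}_w$ pointwise, up to a bounded term, by $C|x-x_0|^{-2}$ ``after localizing to its level sets'' --- is not valid: the weak--$L^{\nu/2}$ condition controls the measure of the level sets but says nothing about where they sit, so a function whose level sets are scattered over many separated balls is in $L^{\nu/2}_w$ yet admits no such pointwise majorant centered at any single $x_0$. The correct way to get from Rellich's inequality to the general weak--$L^{\nu/2}$ result is via rearrangement inequalities (Brascamp--Lieb--Luttinger), as the paper's Remark~2 indicates, not pointwise domination. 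Since your Lorentz--space argument already does the job, the safest fix is simply to delete the alternative paragraph.
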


\begin{remarks}  1.  In the $L^p_w$ case, the relative bound may not be zero; for example $V(x) = |x|^{-2}$ as discussed above.  Since any $L^p$ function can be written as the sum of a bounded function and an $L^p_w$ function of arbitrarily small $\norm{\cdot}_{p, w}^*$, the second sentence implies the first.

2. One proof of the $\nu \ge 5$ result uses a theorem of Stein--Weiss \cite{StW} (see \cite[Section 6.2]{HA}) that if $f \in L^{\nu/2}_w(\bbR^\nu)$ and $g \in L^{\nu/(\nu-2)}_w(\bbR^\nu)$, then $h \mapsto g*(fh)$ maps $L^2$ to $L^2$.  Another proof uses Rellich's inequality and Brascamp--Lieb--Luttinger inequalities (see \cite{BLL, Rogers1, Rogers2, Rogers3} or \cite{SimonConv}).

3.  That one can use $p = \nu/2$ rather than $p > \nu/2$ when $\nu \ge 5$ was noted first by Faris \cite{FarisTP}.

I'm not sure to whom to attribute the use of sharp Sobolev and Stein--Weiss inequalities.  I learned it in about 1968 from a course of lectures of Ed Nelson and it was popularized by Reed--Simon \cite{RS2}.
\end{remarks}

When Brownell did his work, he was unaware that his results were a consequence of a different approach of Stummel \cite{Stummel} (Brownell thanks the referee for telling him about Stummel's work).  Stummel considered the class, $S_{\nu,\alpha}$, of functions, $V(x)$, on $\bbR^\nu$ obeying
\begin{equation}\label{7.34}
  \norm{f}_{\nu,\alpha} = \sup_x \int_{|x-y| \le 1} |V(y)|\, |x-y|^{-(\nu-4+\alpha)} d^\nu y
\end{equation}
is finite.  Here $\alpha > 0$ and $\alpha \ge (4-\nu)$, so if $\nu \le 3$, one has that $S_{\nu,4-\nu} = L^2_{unif}$.  Stummel \cite{Stummel} proves that if $V \in S_{\nu,\alpha}$ with $\alpha$ as above, then $V$ is $-\Delta$--bounded with relative bound $0$.  This has several advantages over the Kato--Brownell approach:

(a) Since $\int_{|w| \le 1} |w|^{-\beta+\nu} dw_{\kappa+1}\dots dw_\nu \sim |(w_1,\dots.w_\kappa)|^{-\beta+\kappa}$ where the tilde means comparable in terms of upper and lower bounds, extra variables go through directly and there is no need for step 3 in Kato's proof.

(b) As we've seen, it is uniformly local, i.e. to be in a Stummel class rather than $L^p$, one only needs $L^p_{unif}$.

(c) By Young's inequality \cite[Theorem 6.6.3]{RA}, the Brownell $L^p$ condition implies Stummel's condition, so Stummel's result is stronger.

Stummel's proof relies on the fact that $((-\Delta)^2+1)^{-1}$ has an integral kernel diverging as $|x-y|^{-(\nu-4)}$ for $|x-y|$ small and decaying exponentially for $|x-y|$ large.  As with Brownell's paper, Stummel's $\alpha > 0$ condition isn't needed if $\nu \ge 5$.  The issue is that instead of using Young's inequality, one needs to use the stronger Hardy--Littlewood--Sobolev inequalities \cite[Theorem 6.2.1]{HA} which were not well known in the 1950s.  Motivated by Kato's introduction of the class $K_\nu$ (see Section \ref{s9}), in \cite{CFKS}, I introduced the class $S_\nu$ which I defined as those measurable $V$ on $\bbR^\nu$ with
\begin{equation}\label{7.35}
  \left\{
    \begin{array}{ll}
      \sup_x \int_{|x-y| \le 1} |V(y)|^2 d^\nu y < \infty, & \hbox{ if }  \nu \le 3 \\
      \lim_{\alpha\downarrow 0} \sup_x \int_{|x-y| \le \alpha} \log(|x-y|^{-1}) |V(y)|^2 d^\nu y = 0 , & \hbox{ if } \nu=4\\
      \lim_{\alpha\downarrow 0} \sup_x \int_{|x-y| \le \alpha} |x-y|^{-(\nu-4)} |V(y)|^2 d^\nu y = 0, & \hbox{ if } \nu \ge 5
    \end{array}
  \right.
\end{equation}
Then one has

\begin{theorem} [\cite{CFKS}; Section 1.2] \lb{T7.10} A multiplication operator, $V \in S_\nu$ is $-\Delta$--bounded with relative bound zero.  Conversely, if $V$ is a multiplication operator so that for some $a,b>0$ and some $\delta \in (0,1)$ and for all $\epsilon \in (0,1)$ and $\varphi \in D(-\Delta)$, one has that
\begin{equation}\label{7.36}
  \norm{V\varphi}_2^2 \le \epsilon\norm{\Delta\varphi}_2^2 + a \exp(b\epsilon^{-\delta})\norm{\varphi}_2^2
\end{equation}
then $V \in S_\nu$.
\end{theorem}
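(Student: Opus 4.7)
The plan is to reduce both directions to quantitative decay of $\|V(-\Delta+E)^{-1}\|_{op}^2$ as $E\to\infty$. The link to the relative-bound inequality is
\[
\|V\varphi\|_2^2 \le \|V(-\Delta+E)^{-1}\|_{op}^2\,\|(-\Delta+E)\varphi\|_2^2 \le 2\|V(-\Delta+E)^{-1}\|_{op}^2(\|\Delta\varphi\|_2^2 + E^2\|\varphi\|_2^2),
\]
where the second step uses $\|(-\Delta+E)\varphi\|^2 = \|\Delta\varphi\|^2 + 2E\|\nabla\varphi\|^2 + E^2\|\varphi\|^2$ together with $2E\|\nabla\varphi\|^2 = 2E\langle-\Delta\varphi,\varphi\rangle\le\|\Delta\varphi\|^2+E^2\|\varphi\|^2$. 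Hence the relative bound $\|V\varphi\|^2\le\epsilon\|\Delta\varphi\|^2 + b_\epsilon\|\varphi\|^2$ corresponds to $\|V(-\Delta+E)^{-1}\|_{op}^2\lesssim\epsilon$ with $b_\epsilon\sim\epsilon E^2$, so the quantitative form \eqref{7.36} translates directly into a specified decay rate for the operator norm.

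For the forward direction, assume $V\in S_\nu$ and observe $\|V(-\Delta+E)^{-1}\|_{op}^2 = \|V(-\Delta+E)^{-2}V\|_{op}$, an operator whose kernel is dominated in absolute value by $|V(x)|H_E(x,y)|V(y)|$, where $H_E$ is the positive symmetric kernel of $(-\Delta+E)^{-2}$. Schur's test with positive weight $h(y)=|V(y)|$ gives
\[
\|V(-\Delta+E)^{-1}\|_{op}^2 \le \sup_x\int H_E(x,y)|V(y)|^2\,dy.
\]
Standard Bessel-kernel asymptotics yield $H_E(x,y)\le C\phi_\nu(|x-y|)\,\exp(-c\sqrt E|x-y|)$, where $\phi_\nu(r)$ is bounded for $\nu\le 3$, of order $\log(1/r)$ for $\nu=4$, and of order $r^{-(\nu-4)}$ for $\nu\ge 5$ as $r\downarrow 0$. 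Splitting the integral at $|x-y|=\alpha$, the near-diagonal piece is bounded above by the $S_\nu$-defining integral and can be made arbitrarily small by taking $\alpha$ small, while the tail, via dyadic decomposition of $\{|x-y|\ge\alpha\}$ and the $L^2_{\text{unif}}$ bound implicit in $S_\nu$, is bounded by $C_\alpha\cdot(\text{polynomial in }\sqrt E\alpha)\cdot e^{-c\sqrt E\alpha}$, which vanishes as $E\to\infty$ for fixed $\alpha$. Taking first $\alpha\to 0$ then $E\to\infty$ yields $\|V(-\Delta+E)^{-1}\|_{op}^2\to 0$, i.e., relative bound zero.

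For the converse, rewrite \eqref{7.36} as $V^*V\le\epsilon\Delta^2+c_\epsilon I$ and use $\epsilon\Delta^2+c_\epsilon\le\epsilon(-\Delta+\sqrt{c_\epsilon/\epsilon})^2$ (the cross term $2\sqrt{\epsilon c_\epsilon}\,(-\Delta)$ is nonnegative) to get $\|V(-\Delta+E)^{-1}\|_{op}^2\le\epsilon$ at $E=\sqrt{c_\epsilon/\epsilon}$; inverting $E^2\epsilon=a\exp(b\epsilon^{-\delta})$ yields $\epsilon(E)\lesssim(\log E)^{-1/\delta}$. I then apply the dual form $\|V(-\Delta+E)^{-1}\psi\|_2^2\le\epsilon\|\psi\|_2^2$ to trial functions $\psi_{y,\beta}(z)=\rho((z-y)/\beta)$ built from a fixed nonnegative $\rho\in C_0^\infty$. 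For $|x-y|>2\beta$ and $z\in B(y,\beta)$, $|x-z|/|x-y|\in[1/2,3/2]$, so $G_E(x,z)\ge cG_E(x,y)$ provided $\sqrt E\beta$ stays bounded; this yields
\[
\int_{|x-y|>2\beta} G_E(x,y)^2|V(x)|^2\,dx \le C\epsilon\beta^{-\nu}.
\]
For $\nu\ge 5$, $G_E(x,y)^2\ge c'|x-y|^{-2(\nu-2)} = c'|x-y|^{-(\nu-4)}|x-y|^{-\nu}$, and $|x-y|^{-\nu}\ge E^{\nu/2}$ on $|x-y|\le 1/\sqrt E$; choosing $\beta=K^{-1}E^{-1/2}$ extracts
\[
\int_{2\beta<|x-y|<1/\sqrt E}|x-y|^{-(\nu-4)}|V(x)|^2\,dx \le CK^\nu\epsilon(E),
\]
uniformly in $y$. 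Summing over dyadic scales $E_j=4^jE_0$ covers the ball $|x-y|<1/\sqrt{E_0}$; the resulting series $\sum_j(j\log 4+\log E_0)^{-1/\delta}$ converges (since $\delta<1$ so $1/\delta>1$) and tends to $0$ as $E_0\to\infty$, producing $V\in S_\nu$. The cases $\nu=4$ and $\nu\le 3$ are parallel but simpler (logarithmic weight, resp.\ no weight, in the definition of $S_\nu$).

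The main obstacle is the converse: the hypothesis only delivers operator-norm control, whereas $S_\nu$ is a Schur-type pointwise integral condition, and the natural kernel $G_E^2\sim|x-y|^{-2(\nu-2)}$ is strictly more singular than the target weight $|x-y|^{-(\nu-4)}$. One absorbs the excess singularity by restricting to the ball $|x-y|<1/\sqrt E$ (paying a factor $E^{\nu/2}$) and iterating over scales; the exponent condition $\delta<1$ in \eqref{7.36} is precisely what keeps the series $\sum_j(\log E_j)^{-1/\delta}$ both summable and tending to $0$, so the detailed rate in the hypothesis is essential rather than cosmetic.
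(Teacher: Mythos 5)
The forward direction is sound and is essentially the route the paper takes: Schur's test with weight $|V|$ is exactly the ``duality and interpolation'' the paper invokes around Theorem~\ref{T7.11}, and your near/far splitting of $\sup_x\int H_E(x,y)|V(y)|^2\,dy$ reproduces the content of that theorem. The $\nu\le 3$ converse (a single $E$, one bump, no summation) and the $\nu\ge 5$ converse are also correct; I checked that the exponent count for $\nu\ge5$ is tight, in that $\sum_j(\log E_j)^{-1/\delta}$ converges precisely when $1/\delta>1$, which is exactly the hypothesis $\delta<1$.

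However, the $\nu=4$ case is \emph{not} ``parallel but simpler'' --- it is where the argument actually breaks, and this is a genuine gap. Your trial-function estimate naturally controls $\int_{\text{annulus}}G_E(x,y)^2|V(x)|^2\,dx\le C\varepsilon(E)\beta^{-\nu}$, and the crucial point is that $G_E^2$ (the square of the kernel of $(-\Delta+E)^{-1}$) is not the same as $H_E$ (the kernel of $(-\Delta+E)^{-2}$). For $\nu\ge5$ both behave like powers of $|x-y|$ near the diagonal and you absorb the discrepancy with the factor $|x-y|^{-\nu}\ge E^{\nu/2}$. But for $\nu=4$, $G_E(r)^2\sim r^{-4}$ while the $S_4$ weight is $H_E(r)\sim\log(1/r)$. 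On the annulus $\{r\sim E^{-1/2}\}$, converting from $r^{-4}$ to $\log(1/r)$ costs the ratio $\log(1/r)/r^{-4}\sim\frac{1}{2}(\log E)/E^2$; combined with $\beta^{-4}\sim K^4E^2$, each annulus contributes not $C\varepsilon(E_j)$ but
\[
\int_{\text{annulus}_j}\log(1/|x-y|)\,|V(x)|^2\,dx \;\le\; C\,\varepsilon(E_j)\,\log E_j \;\sim\; (\log E_j)^{1-1/\delta}.
\]
The dyadic series $\sum_j (j\log 4+\log E_0)^{1-1/\delta}$ then converges (and tends to $0$ as $E_0\to\infty$) only when $1-1/\delta<-1$, i.e.\ $\delta<1/2$, which is strictly stronger than the hypothesis $\delta\in(0,1)$. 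The same loss appears if you try the layer-cake/integration-by-parts reformulation: from the annuli one only gets $m(t):=\int_{|x-y|<t}|V|^2\lesssim(\log(1/t))^{1-1/\delta}$, and both $\log(1/\alpha)\,m(\alpha)$ and $\int_0^\alpha m(t)\,dt/t$ need $\delta<1/2$ to vanish. To close the $\nu=4$ case you need a device that produces the $\log(1/r)$ weight directly from $H_E$ (i.e.\ from $(-\Delta+E)^{-2}$) rather than via $G_E^2$; the present trial-function setup, which squares the first-order resolvent kernel, structurally cannot recover the logarithm without the extra $\log E$ loss.
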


One key to the proof is a simple necessary and sufficient condition

\begin{theorem} [\cite{CFKS}; Section 1.2] \lb{T7.11} A multiplication operator, $V$, is in $S_\nu$ if and only if $\lim_{E \to \infty} \norm{(-\Delta+E)^{-2}|V|^2}_{\infty,\infty} = 0$ where $\norm{\cdot}_{p,p}$ is the operator norm from $L^p(\bbR^\nu)$ to itself.
\end{theorem}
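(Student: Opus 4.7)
The plan is to identify both sides as integrals against the kernel $G_E(z)$ of $(-\Delta+E)^{-2}$ (the Bessel potential kernel of order $4$ with mass $E$) and to pass between them using the fundamental scaling identity
\begin{equation*}
G_E(z) = \frac{1}{(2\pi)^\nu}\int e^{ik\cdot z}(k^2+E)^{-2}\,d^\nu k = E^{(\nu-4)/2}\,G_1(\sqrt{E}\,z),
\end{equation*}
obtained by the substitution $k=\sqrt{E}\,\ell$. Because $(-\Delta+E)^{-2}$ is convolution with the positive function $G_E$, the operator $(-\Delta+E)^{-2}|V|^2$ has integral kernel $G_E(x-y)|V(y)|^2$ and hence
\begin{equation*}
\|(-\Delta+E)^{-2}|V|^2\|_{\infty,\infty} = \sup_{x}\int G_E(x-y)|V(y)|^2\,d^\nu y.
\end{equation*}
The needed information about $G_1$ (derivable from $G_1=K_1*K_1$, where $K_1$ is the Yukawa kernel of $(-\Delta+1)^{-1}$) is standard: $G_1$ is positive, radially decreasing, satisfies $G_1(w)\le Ce^{-|w|/2}$ for $|w|\ge 1$, and near $w=0$ behaves like $|w|^{-(\nu-4)}$ when $\nu\ge 5$, like $\log|w|^{-1}$ when $\nu=4$, and is bounded when $\nu\le 3$. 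Pointwise monotonicity of $(k^2+E)^{-2}$ in $E$ also yields $G_E\le G_1$ for $E\ge 1$.

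For the forward direction, suppose $V\in S_\nu$. Split the integral at some radius $\alpha\in(0,1)$. Uniformly in $E\ge 1$, the near part $\sup_x\int_{|x-y|\le\alpha}G_E|V|^2$ is bounded by a constant times the $S_\nu$ quantity $\sup_x\int_{|x-y|\le\alpha}W_\nu(x-y)|V(y)|^2\,d^\nu y$, where $W_\nu$ denotes the weight appearing in \eqref{7.35}; it therefore tends to $0$ as $\alpha\to 0$. For the far part, the scaling identity together with $G_1(w)\le Ce^{-|w|/2}$ at infinity gives $G_E(z)\le CE^{(\nu-4)/2}e^{-\sqrt{E}|z|/2}$ on $|z|\ge 1/\sqrt{E}$, while $S_\nu$ at $\alpha=1$ forces $V\in L^2_{\mathrm{unif}}$, so decomposing $\{|x-y|>\alpha\}$ into unit annuli yields
\begin{equation*}
\sup_{x}\int_{|x-y|>\alpha}G_E(x-y)|V(y)|^2\,d^\nu y \le C\|V\|_{L^2_{\mathrm{unif}}}^2\,E^{(\nu-4)/2}\,e^{-c\sqrt{E}\alpha},
\end{equation*}
which tends to $0$ as $E\to\infty$ for each fixed $\alpha$. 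Taking $\alpha$ small first and then $E$ large proves the forward implication.

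For the reverse direction, assume the limit condition. The scaling combined with the short-distance asymptotics of $G_1$ produces a matching lower bound $G_E(z)\ge c\,W_\nu(z)$ on a shrinking ball $|z|\le\alpha(E)$ (explicitly $\alpha(E)=1/\sqrt{E}$ for $\nu\ge 5$ and $\alpha(E)=1/E$ for $\nu=4$, the logarithmic shift accounting for the case $\nu=4$). Restricting the hypothesis to that ball gives
\begin{equation*}
\sup_x\int_{|x-y|\le\alpha(E)}W_\nu(x-y)|V(y)|^2\,d^\nu y \le c^{-1}\sup_x\int G_E(x-y)|V(y)|^2\,d^\nu y \to 0,
\end{equation*}
which is precisely the $S_\nu$ condition when $\nu\ge 4$. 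When $\nu\le 3$ the reverse implication is simpler: failure of $V\in L^2_{\mathrm{unif}}$ already makes the hypothesis norm infinite for every $E$, contradicting the assumption. The main technical obstacle is nailing down the kernel estimates (both the uniform upper bound $G_E\le C\,W_\nu$ and the scale-adapted lower bound $G_E\ge c\,W_\nu$ on $|z|\le\alpha(E)$), which must be carried out dimension by dimension from the explicit short-distance asymptotics of the Bessel kernel $G_1$.
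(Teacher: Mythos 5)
Your overall strategy — writing $(-\Delta+E)^{-2}$ as convolution with the explicit Bessel kernel $G_E$, using the scaling $G_E(z)=E^{(\nu-4)/2}G_1(\sqrt{E}z)$, and splitting the supremum into a near part controlled by the $S_\nu$ weight and a far part controlled by exponential decay — is the right one and is the route taken in CFKS Section 1.2 (the paper itself only cites CFKS and does not prove this characterization inline). The reverse direction (lower bound $G_E\ge cW_\nu$ on a shrinking ball) is correct, as is the monotonicity-in-$\alpha$ observation that lets you deduce the $S_\nu$ limit from a sequence $\alpha(E)\downarrow 0$.

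There is, however, a genuine gap in the forward direction when $\nu\le 3$. You assert that the near part tends to $0$ as $\alpha\to 0$ ``because $V\in S_\nu$,'' but for $\nu\le 3$ the $S_\nu$ condition in \eqref{7.35} is merely $\sup_x\int_{|x-y|\le 1}|V|^2<\infty$, with no smallness requirement for small balls. It is false in general that $V\in L^2_{\rm unif}$ forces $\sup_x\int_{|x-y|\le\alpha}|V|^2\to 0$ as $\alpha\downarrow 0$: for example, take $|V(y)|^2=\sum_{n\ge 1}n^\nu\,\mathbf{1}_{|y-n^2 e_1|<1/n}$. This is uniformly locally $L^2$ (each bump contributes a fixed $L^2$ mass), yet for every $\alpha>0$ one has $\sup_x\int_{|x-y|\le\alpha}|V|^2\ge c>0$ by taking $x=n^2e_1$ with $1/n<\alpha$. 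So your argument ``choose $\alpha$ small, then $E$ large'' does not close for $\nu\le 3$. The fix is to use a different mechanism in low dimensions: for $\nu\le 3$, $G_1$ is bounded and hence $\|G_E\|_\infty=E^{(\nu-4)/2}\|G_1\|_\infty\to 0$, and combining this with $G_E(z)\le CE^{(\nu-4)/2}e^{-\sqrt{E}|z|/2}$ and the usual unit-ball tiling gives $\sup_x\int G_E(x-y)|V(y)|^2\,d^\nu y\le C\|V\|_{L^2_{\rm unif}}^2E^{(\nu-4)/2}\to 0$ directly, with no near/far split and no smallness of the near part.

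A smaller point: the inequality $G_E\le G_1$ for $E\ge 1$ is true and you do need it (particularly for $\nu=4$, where the scaling alone does not give $G_E\le CW_4$ uniformly in $E$ on a fixed small ball). But ``pointwise monotonicity of $(k^2+E)^{-2}$ in $E$'' is not by itself a valid justification, since Fourier inversion does not preserve pointwise inequalities. The correct derivation is via the heat-kernel subordination $G_E(z)=\int_0^\infty t\,e^{-tE}(4\pi t)^{-\nu/2}e^{-|z|^2/4t}\,dt$, where the integrand is nonnegative and pointwise decreasing in $E$. You should make that representation explicit, since it is doing real work.
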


For example, to get the boundedness, one uses duality and interpolation to see that

\begin{align*}
  \lim_{E \to \infty} \norm{(-\Delta+E)^{-2}|V|^2}_{\infty,\infty} = 0 &\Rightarrow \lim_{E \to \infty} \norm{|V|(-\Delta+E)^{-2}|V|}_{2,2}=0 \\
                                                                       &\iff \lim_{E \to \infty} \norm{|V|(-\Delta+E)^{-1}}_{2,2} = 0
\end{align*}

This concludes what I want to say about uses of the Kato--Rellich theorem to study esa of Schr\"{o}dinger operators.  In \cite{KatoHisThm}, Kato also remarks on self--adjointness of Dirac Coulomb Hamiltonians, an issue he returned to several times as we'll see in Section \ref{s10}.

Let $\alpha_1, \alpha_2, \alpha_3, \alpha_4=\beta$ be four $4\times 4$ matrices obeying
\begin{equation}\label{7.37}
  \alpha_i\alpha_j+\alpha_j\alpha_i=2\delta_{ij} \bdone; \qquad i,j=1,\dots,4
\end{equation}
Our Hilbert space is $\calH=L^2(\bbR^3;\bbC^4,d^3x)$ of $\bbC^4$ valued $L^2$ functions.  The free Dirac operator is
\begin{equation}\label{7.38}
  T_0 = \sum_{j=1}^{3} \alpha_j p_j + m \beta; \qquad p_j=\frac{1}{i}\frac{\partial}{\partial x}
\end{equation}

One has, using \eqref{7.37}, that formally
\begin{equation}\label{7.39}
  T_0^2=-\Delta+m^2
\end{equation}
Using Fourier transform, one can prove that $T_0$ is esa on $C^\infty_0(\bbR^3;\bbC^4)$ with the domain of the closure being $\{\varphi\,|\, \int (1+p^2)|\hat{\varphi}(p)|^2\,d^3p< \infty\}$.  The Dirac Coulomb operator is
\begin{equation}\label{7.40}
  T=T_0+\frac{\mu}{|x|}
\end{equation}
In terms of the nuclear charge, $Z$, one has that $\mu=Z\alpha$ where $\alpha$ is the fine structure constant, $\alpha^{-1}=137.035999139\dots$, so a given $\mu$ corresponds to $Z\sim 137\mu$.  In \cite{KatoHisThm}, Kato notes without proof that his method proves esa of Dirac Coulomb operators on $C^\infty_0(\bbR^3;\bbC^4)$ for $Z \le 68$.  Clearly he had the result for $\mu < \tfrac{1}{2}$ and $68$ is the integral part of $\tfrac{1}{2}\alpha^{-1}$.  Raised as a physicist, Kato thought of integral $Z$.

In fact
\begin{equation*}
  \left\lVert\frac{\mu}{r}\varphi\right\rVert^2 \le \norm{T_0\varphi}^2+c\norm{\varphi}^2 \iff \frac{\mu^2}{r^2}\le T_0^2+c
\end{equation*}
Hardy's inequality says that on $\bbR^3$, $(4r^2)^{-1} \le p^2$ (with no larger constant).  This and \eqref{7.39} shows that $r^{-1}$ is $T_0$--bounded with precise relative bound $2$, so the Kato--Rellich Theorem implies self--adjointness if and only if $\mu < \tfrac{1}{2}$.  But \eqref{7.40} can be essentially self--adjoint on $C^\infty_0(\bbR^3;\bbC^4)$ even though the Kato--Rellich theorem doesn't work -- in the language of Example \ref{E7.8} it can happen that $\lambda_1$ is strictly smaller than $\lambda_2$.  Indeed, it is known that

\begin{theorem} \lb{T7.12} \eqref{7.40} is esa on $C^\infty_0(\bbR^3;\bbC^4)$ if and only if
\begin{equation}\label{7.41}
  |\mu| \le \tfrac{1}{2}\sqrt{3}
\end{equation}
\end{theorem}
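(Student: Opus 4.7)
The plan is to exploit the rotational symmetry of $T$ to reduce to a family of one-dimensional first-order systems, and then apply a Weyl-type limit point/limit circle analysis to each such system. The Dirac angular operator $K=\beta(2\vec S\cdot\vec L+\bdone)$ commutes with both $T_0$ and $|x|^{-1}$, hence with $T$; its spectrum is $\bbZ\setminus\{0\}$, and its joint eigenspaces with $J_3$ give an orthogonal decomposition of $L^2(\bbR^3;\bbC^4)$ labeled by the Dirac quantum number $\kappa\in\bbZ\setminus\{0\}$ and $m_j$. Under it $T$ becomes a direct sum (with multiplicity $2|\kappa|$ in the $m_j$ index) of partial-wave radial operators
\begin{equation*}
T_\kappa \;=\; \begin{pmatrix} m + \mu/r & -d/dr + \kappa/r \\ d/dr + \kappa/r & -m + \mu/r \end{pmatrix}
\end{equation*}
on $L^2((0,\infty);dr)^2$. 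Expanding an element of $C_0^\infty(\bbR^3;\bbC^4)$ in spinor spherical harmonics and cutting off smoothly near $r=0$ shows that the radial projections provide a core for each $T_\kappa\restriction C_0^\infty((0,\infty);\bbC^2)$, so $T$ is esa on $C_0^\infty(\bbR^3;\bbC^4)$ iff every $T_\kappa$ is esa on $C_0^\infty((0,\infty);\bbC^2)$.

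Next I invoke the first-order-system analog of Weyl's limit point/limit circle theorem (the vector version of Theorem \ref{T7.5}): $T_\kappa$ is esa iff it is limit point at both $r=0$ and $r=\infty$. At $\infty$ the Coulomb tail is integrable and solutions of $(T_\kappa-E)u=0$ either decay/grow exponentially (when $|E|<m$) or oscillate with finite WKB amplitude, so not all solutions are square integrable near $\infty$ and the limit point case prevails there for every $\mu$ and $\kappa$. At $r=0$ I use the Frobenius method: the ansatz $u=r^s\binom{a}{b}$ inserted into $T_\kappa u=0$, keeping only the leading $1/r$ terms, gives the indicial system $\mu a+(\kappa-s)b=0$, $(\kappa+s)a+\mu b=0$, so
\begin{equation*}
s^2 \;=\; \kappa^2-\mu^2,\qquad s_\pm \;=\; \pm\sqrt{\kappa^2-\mu^2},
\end{equation*}
with two linearly independent solutions behaving like $r^{s_\pm}$ near $0$ (with an additional logarithmic factor only in the coincident case $\mu^2=\kappa^2$). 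Membership of $r^s$ in $L^2((0,1),dr)$ requires $\Real s>-\tfrac12$.

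Combining these, if $\mu^2\le\kappa^2-\tfrac14$ then $s_-\le-\tfrac12$ and $r^{s_-}\notin L^2((0,1);dr)$, so only one $L^2$ solution survives and $T_\kappa$ is limit point; if $\kappa^2-\tfrac14<\mu^2\le\kappa^2$ then both $s_\pm\in(-\tfrac12,\tfrac12)$ and both solutions are $L^2$, so limit circle; if $\mu^2>\kappa^2$ then $s_\pm$ are purely imaginary, $|r^{s_\pm}|\equiv1\in L^2((0,1);dr)$, again limit circle. Hence $T_\kappa$ is esa iff $\mu^2\le\kappa^2-\tfrac14$, and $T$ is esa iff this holds for every $\kappa\in\bbZ\setminus\{0\}$; since $|\kappa|\ge1$ the binding constraint is $\kappa=\pm1$, giving $\mu^2\le\tfrac34$, i.e.\ $|\mu|\le\tfrac12\sqrt 3$. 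The principal technical obstacle I expect is Step 1, namely verifying that the concrete core $C_0^\infty(\bbR^3;\bbC^4)$---whose elements need not vanish at the origin---maps onto an operator core for each $T_\kappa$; this requires an approximation argument using smooth radial cutoffs with careful commutator estimates near $r=0$, and it is most delicate at the critical value $|\mu|=\tfrac12\sqrt 3$, where the slow-decaying Frobenius solution lies exactly on the $L^2$ threshold.
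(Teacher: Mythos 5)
Your argument is correct and is the route the paper credits to Rellich (1943) and Kodaira (1951): spherical reduction to first-order radial systems followed by the limit point/limit circle theorem for systems. The one place worth sharpening is Step 1. The equivalence ``$T$ esa on $C_0^\infty(\bbR^3;\bbC^4)$ iff each $T_\kappa$ esa on $C_0^\infty((0,\infty);\bbC^2)$'' is \emph{not} a formal consequence of the direct-sum decomposition: for the second-order Laplacian on $\bbR^3$ the analogous statement is false, because $C_0^\infty(\bbR^3\setminus\{0\})$ is not an operator core for $-\Delta$ (the $\ell=0$ radial piece is limit-circle at $0$, yet $-\Delta$ is esa on $C_0^\infty(\bbR^3)$, which silently selects the Friedrichs boundary condition). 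What rescues the Dirac case is precisely that $T$ is first order: the logarithmic cutoff $\chi_n$, supported in $\{|x|>1/n^2\}$ and equal to $1$ on $\{|x|>1/n\}$, has $\norm{\nabla\chi_n}_{L^2}^2 = O(1/\log n)\to 0$ (vanishing of the $H^1$-capacity of a point in $\bbR^3$), so $T(\chi_n\psi)-\chi_n T\psi = -i\,\alpha\cdot(\nabla\chi_n)\psi \to 0$ in $L^2$ for bounded $\psi$ and the two cores give the same closure, uniformly in $\mu$. Consequently the delicacy you place at $|\mu|=\tfrac12\sqrt3$ is not actually there: the cutoff argument does not feel $\mu$, and at that critical value the slow Frobenius solution goes like $r^{-1/2}$, possibly times $\log r$ (a $\log$ can appear whenever $s_+-s_-$ is a nonnegative integer, not only in the coincident case $\mu^2=\kappa^2$), which still fails to lie in $L^2((0,1);dr)$, so the limit-point conclusion at the borderline is clean.
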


This result is essentially due to Rellich \cite{RellichLP} in 1943.  He proved it using spherical symmetry and applying the Weyl limit--limit circle theory (Theorem \ref{T7.5}).  We say ``essentially'' because at the time he did this, the Weyl theory had not been proven for systems and \eqref{7.40} is a system.  This theory for systems was established by Kodaira \cite{KodairaLP} in 1951 (see also Weidmann \cite{WeidLP}) so Theorem \ref{T7.12} should be regarded as due to Rellich--Kodaira.  Interestingly enough, Kato seems to have been unaware of this result when he wrote his book (second edition was 1976).

One can also consider $T_0+V$ where $V$ is not necessarily spherically symmetric and $V$ obeys
\begin{equation}\label{7.42}
  |V(x)| \le \frac{\mu}{|x|}
\end{equation}
By Kato's argument, one can use the Kato--Rellich theorem to get esa when $|\mu| < \tfrac{1}{2}$.  Schmincke \cite{SchmDirac1} proved

\begin{theorem} \lb{T7.13} Let $V$ obey \eqref{7.42} where $\mu <\tfrac{1}{2}\sqrt{3}$.  Then $T_0+V$ is esa on $C^\infty_0(\bbR^3;\bbC^4)$.
\end{theorem}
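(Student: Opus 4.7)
\medskip

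The plan is to combine the spherically symmetric analysis behind Theorem \ref{T7.12} with an a priori regularity estimate at the origin that is robust under the non-spherically-symmetric perturbations allowed by $|V(x)| \le \mu/|x|$. A direct Kato--Rellich relative-bound strategy cannot suffice: using only the scalar Hardy inequality $r^{-2} \le 4|p|^2 \le 4T_0^2$, one obtains $\|V\varphi\| \le 2\mu\,\|T_0\varphi\|$, whose relative bound is $<1$ only when $\mu<1/2$. The whole point of the theorem is to exploit the $4$-spinor structure of $T_0$ to go from the threshold $1/2$ up to the threshold $\sqrt{3}/2$.

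Step 1 (Localization). Split $V = \chi V + (1-\chi)V$ with $\chi \in C_0^\infty(\bbR^3)$ equal to $1$ near the origin. The tail $(1-\chi)V$ lies in $L^\infty$ and so is $T_0$-bounded with relative bound $0$; by the Kato--Rellich theorem \ref{T7.3} it does not affect essential self-adjointness. We may therefore assume $V$ has compact support with $|V(x)| \le \mu/|x|$, concentrating all analysis near $r=0$.

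Step 2 (Sharp Dirac--Hardy estimate). The key is to prove that for every $\mu < \sqrt{3}/2$ there is a finite constant $C_\mu$ such that, for every $\psi \in L^2(\bbR^3;\bbC^4)$ with $(T_0+V)\psi \in L^2$ (in the distributional sense),
\begin{equation*}
  \left(\tfrac{3}{4}-\mu^2\right)\left\lVert\frac{\psi}{r}\right\rVert^2 \le \|(T_0+V)\psi\|^2 + C_\mu\|\psi\|^2.
\end{equation*}
The constant $3/4$ is precisely the squared distance from the limit-point/limit-circle transition in the worst angular-momentum channel for Dirac, and reproduces the threshold found in Theorem \ref{T7.12}. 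I would prove this by decomposing $\psi$ in the eigenbasis of the spin-orbit operator $K = \beta(2\,S\cdot L + 1)$, whose spectrum is $\bbZ\setminus\{0\}$ with $|K|\ge 1$, and running a weighted integration-by-parts in each $\kappa$-channel: test $T_0\psi = (T_0+V)\psi - V\psi$ against $|x|^{-2\alpha}\psi$ for small $\alpha>0$, use the identity $[T_0,|x|^{-2\alpha}] = 2i\alpha|x|^{-2\alpha}\,(\alpha\!\cdot\! x)/|x|^2$, take real parts, and pass to $\alpha\downarrow 0$; the positive coefficient $(\kappa^2 - 1/4 - \mu^2) \ge 3/4-\mu^2$ on $\|\psi/r\|^2$ comes out of the radial calculation, while $V$-cross terms are absorbed by the Schwarz inequality.

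Step 3 (Density of $C_0^\infty$ and conclusion). Given Step 2, every $\psi$ in the domain of $(T_0+V)^*$ satisfies $\psi/r \in L^2$. I would then approximate such a $\psi$ in the graph norm of $(T_0+V)^*$ by functions in $C_0^\infty(\bbR^3\setminus\{0\};\bbC^4)$, and finally mollify to land in $C_0^\infty(\bbR^3;\bbC^4)$. Concretely, take $\eta_\epsilon(x)=\eta(|x|/\epsilon)$ with $\eta\in C^\infty$, $\eta\equiv 0$ near $0$, $\eta\equiv 1$ outside a neighborhood of $0$, and form $\psi_\epsilon = \eta_\epsilon \psi$. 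Then
\begin{equation*}
  \|V(1-\eta_\epsilon)\psi\| \le \mu\,\left\lVert\frac{(1-\eta_\epsilon)\psi}{r}\right\rVert \xrightarrow[\epsilon\downarrow 0]{} 0
\end{equation*}
by dominated convergence using $\psi/r\in L^2$, and the commutator $[T_0,\eta_\epsilon] = -i\alpha\cdot \nabla\eta_\epsilon$ satisfies $\|[T_0,\eta_\epsilon]\psi\| \le (C/\epsilon)\|\psi\|_{\{|x|\le 2\epsilon\}}$, which tends to $0$ since $\|\psi\|_{\{|x|\le 2\epsilon\}} \le 2\epsilon\,\|\psi/r\|_{\{|x|\le 2\epsilon\}} \to 0$. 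Mollifying $\psi_\epsilon$ away from the origin then produces the desired $C_0^\infty$ approximation.

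Main obstacle. The whole argument stands or falls with Step 2, the Dirac--Hardy estimate with the factor $3/4$ (rather than the $1/4$ supplied by scalar Hardy). The cross-term involving $V$ is not sign-definite and is only bounded by $(\mu/r)|\psi|$, so one must track constants with care to avoid squandering the gap $3/4-\mu^2$. The cleanest route is to do the channel-by-channel computation in the spin-orbit basis, but this requires a genuine check that the off-diagonal coupling in the radial $2\times 2$ Dirac block does not spoil the constant in the critical channel $|\kappa|=1$. Once Step 2 is secured with the strict inequality $3/4-\mu^2>0$, Step 3 becomes a routine cutoff-and-mollify density argument.
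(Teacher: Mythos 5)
The paper states Theorem~\ref{T7.13} without proof, attributing it to Schmincke \cite{SchmDirac1}; there is no argument in the text to compare against. Judged on its own merits, your outline has a genuine gap precisely at the step you flag as the "main obstacle," and I do not think your proposed route through Step~2 can be repaired as written.

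\textbf{Where Step~2 fails.} The claimed estimate rests on the assertion that $\|T_0\psi\|^2$ controls $\|\psi/r\|^2$ with constant $\kappa^2-\tfrac14 \ge \tfrac34$. This is false. Squaring the free Dirac operator gives $T_0^2=-\Delta+m^2$, and in the spin--orbit channel $K=\kappa$ the two radial components see centrifugal barriers $\kappa(\kappa+1)/r^2$ and $\kappa(\kappa-1)/r^2$. For $|\kappa|=1$ one of these is \emph{zero}, so the scalar one--dimensional Hardy inequality gives only $\tfrac14\|\psi/r\|^2$ for that component; concretely, writing $T_1u=\bigl((-D+\tfrac1r)u_2,\,(D+\tfrac1r)u_1\bigr)$ one computes $\|(-D+\tfrac1r)u_2\|^2=\|u_2'\|^2\ge\tfrac14\|u_2/r\|^2$, with no improvement. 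Thus $\|T_0\psi\|^2\ge\tfrac14\|\psi/r\|^2$ is sharp, and no amount of taking real parts against $|x|^{-2\alpha}\psi$ and sending $\alpha\downarrow0$ (which in the limit simply reproduces $\langle\psi,T_0\psi\rangle=\langle\psi,T_0\psi\rangle$, a tautology) will manufacture a larger constant. You have conflated the indicial exponents $\pm\sqrt{\kappa^2-\mu^2}$ from the limit-point analysis of Theorem~\ref{T7.12} with a Hardy constant for $T_0$; the former refers to solutions of the \emph{perturbed} radial system, and does not translate into a Hardy inequality for $T_0$ alone.

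\textbf{What actually produces $\sqrt3/2$.} The Schmincke-type argument does not improve the Hardy constant of $T_0$; it \emph{shifts} the operator. Set $S:=\tfrac{i\,\alpha\cdot\hat x}{2r}$, which is anti-self-adjoint with $S^*S=\tfrac1{4r^2}$. A direct computation using $[\alpha\cdot p,\tfrac{\alpha\cdot\hat x}{r}]=-\tfrac{i(1+2\Sigma\cdot L)}{r^2}$, $L^2=K^2-\beta K$ and $\beta K=\Sigma\cdot L+1$ shows that the cross term cancels the spin--orbit contribution exactly at shift coefficient $\tfrac12$, yielding
\begin{equation*}
  (T_0+S)^*(T_0+S)=-\frac{1}{r^2}\partial_r r^2\partial_r+\frac{K^2-\tfrac14}{r^2}+m^2+\frac{i m\,\beta\,\alpha\cdot\hat x}{r}.
\end{equation*}
Radial Hardy gives $-\tfrac1{r^2}\partial_r r^2\partial_r\ge\tfrac1{4r^2}$ and $K^2\ge1$, so the shifted operator obeys a Hardy inequality with constant $1$ (up to lower order from the $m$ terms):
\begin{equation*}
  \|(T_0+S)\psi\|^2\ge(1-\epsilon)\bigl\|\psi/r\bigr\|^2-C_\epsilon\|\psi\|^2.
\end{equation*}
Now write $T_0+V=(T_0+S)+(V-S)$. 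Because $V$ is a real scalar (so commutes with $S$ and is self-adjoint) while $S$ is anti-self-adjoint, the cross term in $\|(V-S)\psi\|^2$ vanishes identically, and
\begin{equation*}
  \|(V-S)\psi\|^2=\|V\psi\|^2+\|S\psi\|^2\le\Bigl(\mu^2+\tfrac14\Bigr)\bigl\|\psi/r\bigr\|^2.
\end{equation*}
Hence $\|(T_0+V)\psi\|\ge\bigl(\sqrt{1-\epsilon}-\sqrt{\mu^2+\tfrac14}\bigr)\|\psi/r\|-C'_\epsilon\|\psi\|$, with positive coefficient precisely when $\mu^2+\tfrac14<1$, i.e.\ $\mu<\sqrt3/2$. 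So the threshold is $\sqrt{1-\tfrac14}$, with $1$ coming from $\min K^2$ and $\tfrac14$ the price of the shift; it does not come from a $\tfrac34$--Hardy inequality for $T_0$. With this corrected a priori estimate in hand, your Steps~1 and~3 (localization of the tail and the cutoff-and-mollify density argument from $C_0^\infty(\bbR^3\setminus\{0\})$ to $C_0^\infty(\bbR^3)$) are sound and complete the proof of essential self-adjointness.
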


We'll return to Dirac operators in Section 8 and at the end of section 10.  Having mentioned a result of Schmincke, I should mention that in the 1970s and early 1980s there was a lively school founded by G\"{u}nter Hellwig that produced a cornucopia of results on esa questions for Schr\"{o}dinger and Dirac operators.  Among the group were H. Cycon, H. Kalf, U.-W. Schmincke, R. W\"{u}st and J. Walter.

This said, there is a sense in which Kato's critical value $\mu=\tfrac{1}{2}$ is connected to loss of esa.  Arai \cite{AraiDirac1, AraiDirac2} has shown that for any $\mu > \tfrac{1}{2}$ there is a symmetric matrix valued potential $Q(x)$ with $\norm{Q(x)}=\mu |x|^{-1}$ for all $x$ so that $T_0+Q$ is not esa on $C^\infty_0(\bbR^3;\bbC^4)$, so Theorems \ref{T7.12} and \ref{T7.13} depend on scalar potentials.

\section{Self--Adjointness, II: The Kato--Ikebe Paper} \lb{s8}

Kato was clearly aware that his great 1951 paper didn't include the Stark Hamiltonian where $H$ isn't bounded from below, and in fact $\eta(x) \equiv \int_{|x-y| \le 1} |\min(V(y),0)|^2 dy \to \infty$ if one takes $x \to \infty$ in a suitable direction.  For esa, one needs restrictions on the growth of $\eta$ at infinity (whereas, we'll see in Section \ref{s9}, if $|\min(V(y),0)|$ is replaced by $\max(V(y),0)$, no restriction is needed).  To understand this, it is useful to first consider one dimension.  Suppose that $V(x) \to -\infty$ as $x \to \infty$.  In classical mechanics, if a particle of mass $m$ starts at $x=c$ with zero speed, $V(c) = 0$ and $V'(x) < 0$ on $(c,\infty)$, the particle will move to the right.  By conservation of energy, the speed when the particle is at point $x > c$ will be $v(x) = \sqrt{-V(x)}$ if $\tfrac{1}{2}m = 1$.  The time to get from $c$ to $x_0 > c$ is thus $\int_{c}^{x_0} \tfrac{dx}{\sqrt{-V(x)}}$.  Thus the key issue is whether $\int_{c}^{\infty} \tfrac{dx}{\sqrt{-V(x)}}$ is finite or not.  If it is finite, the particle gets to infinity in finite time and the motion is incomplete.  One expects that the quantum mechanical equivalent is that $-\tfrac{d^2}{dx^2}+V(x)$ is esa if and only if $\int_{c}^{\infty} \tfrac{dx}{\sqrt{-V(x)}} =\infty$.  In particular, if $V(x) = -\lambda|x|^\alpha$, this suggests esa if and only if $\alpha \le 2$.

The classical/quantum intuition can fail if $V(x)$ has severe oscillations or interspersed high bumps (see Rauch--Reed \cite{RR} or Sears \cite{Sears} for examples).  These esa results for ODEs were studied in the late 1940s using limit point--limit circle methods.  Under the non--oscillation assumption (and $V(x) < 0$)
\begin{equation*}
  \int_{c}^{\infty} \left(\frac{[\sqrt{-V}]'}{(-V)^{3/2}}\right)' (-V)^{-1/4} \, dx < \infty
\end{equation*}
(if $V(x) = -x^\alpha$, the integrand is $x^{-(5\alpha+8)/4}$, so there have to be severe oscillations for this to fail),  Wintner \cite{Wintner1947} proved in 1947 that $-\tfrac{d^2}{dx^2}+V(x)$  is limit point at $\infty$ if and only if $\int_{c}^{\infty} \tfrac{dx}{\sqrt{-V(x)}} = \infty$.  Slightly later, in 1949, Levinson \cite{Lev} proved that it is limit point at infinity if there is a positive comparison function, $M(x)$, so that $V(x) > -M(x)$ near infinity, $M'(x)/M(x)^{3/2}$ bounded and $\int_{c}^{\infty} \tfrac{dx}{\sqrt{M(x)}} = \infty$.  For proofs, see \cite{RS2}.

This suggests that a good condition for esa--$\nu$ of $-\Delta+V(x)$ should be
\begin{equation}\label{8.1}
  V(x) \ge -c|x|^2-d
\end{equation}
Indeed, in 1959, Nilsson \cite{Nilsson} and Wienholtz \cite{Wienholtz2} independently proved that

\begin{theorem} [Nilsson--Wienholtz] \lb{T8.1}  If $V(x)$ is a continuous function of $\bbR^\nu$ obeying \eqref{8.1}, then $-\Delta+V$ is esa--$\nu$.
\end{theorem}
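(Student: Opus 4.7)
The plan is to establish essential self-adjointness of $T = -\Delta + V$ on $C_0^\infty(\bbR^\nu)$ via von Neumann's criterion: show $\ker(T^* \mp i) = \{0\}$. Suppose $u \in L^2(\bbR^\nu)$ satisfies $T^* u = iu$, meaning $\int u\,\overline{(-\Delta + V + i)\phi}\,d^\nu x = 0$ for all $\phi \in C_0^\infty(\bbR^\nu)$. Because $V$ is continuous, interior elliptic regularity for $-\Delta$ (Weyl's lemma-type argument, \cite[Section 7.2]{OT}) promotes $u$ to $H^2_{\text{loc}}(\bbR^\nu)$, so that $(-\Delta + V - i)u = 0$ holds pointwise a.e.\ and integration by parts on any compactly supported test function is legitimate.

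The strategy is a comparison with the harmonic oscillator. Pick $M$ so large that $W(x) \equiv c|x|^2 + d + M \ge 1$ and $V + W \ge 0$ everywhere, and set $N = -\Delta + W$ on $C_0^\infty(\bbR^\nu)$. It is classical (Hermite functions, or the general fact that polynomial positive potentials give esa operators) that $N$ is esa with closure $\bar N \ge 1$, and its form domain $Q(\bar N)$ contains $C_0^\infty$ as a form core. As quadratic forms on $C_0^\infty$ we have $T + N = -2\Delta + (V+W) \ge 0$, so $T$ is form-bounded below by $-N$. The plan is then to invoke a commutator theorem of Nelson--Faris--Lavine type: if $T+N \ge 0$ and $|\langle T\phi, N\phi\rangle - \langle N\phi, T\phi\rangle| \le c_1 \langle \phi, N\phi\rangle$ for $\phi \in C_0^\infty$, then $T$ is esa on any core for $\bar N$, hence on $C_0^\infty$.

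The formal commutator is $[T,N] = [-\Delta, W] + [V, -\Delta] = -4c\,x\cdot\nabla - 2c\nu + [V, -\Delta]$. The first piece is controlled by $N$: for $\phi \in C_0^\infty$,
\begin{equation*}
 \bigl|\langle \phi, (x\cdot\nabla)\phi\rangle\bigr| \le \|\,|x|\phi\,\|\,\|\nabla\phi\| \le \tfrac{1}{2}\bigl(\||x|\phi\|^2 + \|\nabla\phi\|^2\bigr) \le \tfrac{1}{2c}\langle \phi, N\phi\rangle,
\end{equation*}
which gives the required commutator bound for the smooth part. The genuinely delicate term is $[V,-\Delta]$, which formally contains $\nabla V$ and is ill-defined when $V$ is only continuous. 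I would bypass this by first mollifying: set $V_\epsilon = V * \rho_\epsilon \in C^\infty$, noting that one may arrange the same quadratic lower bound $V_\epsilon(x) \ge -c|x|^2 - d'$ uniformly in $\epsilon$. For smooth $V_\epsilon$ the commutator theorem applies rigorously and gives esa of $T_\epsilon = -\Delta + V_\epsilon$ on $C_0^\infty$.

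To transfer esa from $T_\epsilon$ to $T$, I would use the cutoff identity applied to the deficiency equation. Testing $(-\Delta + V - i)u = 0$ against $\chi_R^2 \bar u$ (where $\chi_R(x) = \chi(x/R)$ is a smooth cutoff with $\chi \equiv 1$ near $0$) and integrating by parts, the real and imaginary parts together with $V \ge -W$ yield uniform bounds
\begin{equation*}
 \int \chi_R^2 |\nabla u|^2 \,d^\nu x + \int \chi_R^2 W |u|^2\,d^\nu x \le C \|u\|^2,
\end{equation*}
showing $u \in Q(\bar N)$. Then the $L^2_{\text{loc}}$ convergence $V_\epsilon u \to V u$ together with the uniform esa of $T_\epsilon$ (applied to the shifted equation $(T_\epsilon - i)u = (V_\epsilon - V)u$) forces $u = 0$ in the limit. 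The main obstacle is the passage to $\epsilon \to 0$: one must ensure that the commutator bound for $T_\epsilon$ is uniform in $\epsilon$, which follows because mollification preserves the quadratic growth envelope, and that the resolvents $(T_\epsilon - i)^{-1}$ converge strongly — this is where having $u$ trapped in the harmonic-oscillator form domain becomes essential, since it provides enough decay to dominate the convergence $V_\epsilon \to V$.
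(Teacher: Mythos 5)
You have identified the right family of tools (a Nelson-type commutator theorem combined with a harmonic-oscillator comparison operator), but the specific choice of comparison operator misses the key trick, and as a result the proof breaks at three connected points.

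The crucial move in the Faris--Lavine argument (the paper's Theorem~\ref{T8.8}, which covers Theorem~\ref{T8.1} as a special case) is to take $N=-\Delta+V+2x^2$, \emph{with $V$ included}, rather than $N=-\Delta+W$ with $W$ a pure polynomial. Because $A=-\Delta+V=N-2x^2$, one then computes $i[N,A]=i[N-A,A]=i[2x^2,-\Delta+V]=2i[x^2,p^2]$, and the problematic term $[V,-\Delta]$ simply never appears: $x^2$ and $V$ are both multiplication operators and commute. By choosing $N$ without $V$, you are forced to confront $[V,-\Delta]$, which contains $\nabla V$ and is genuinely undefined for $V$ merely continuous. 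Your mollification patch does not repair this cleanly: since $V$ satisfies only a one-sided bound, $V_\epsilon-V$ need not be uniformly small (or even bounded) over $\bbR^\nu$, so $u\in D(T^*)$ does not put $u$ in $D(T_\epsilon^*)$, and the equation $(T_\epsilon-i)u=(V_\epsilon-V)u$ is not available in $L^2$. Moreover, even the uniform commutator bound for $T_\epsilon$ requires controlling $\nabla V_\epsilon$, which blows up as $\epsilon\downarrow 0$.

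A second gap is the domain hypothesis $D(N)\subset D(A)$ in Nelson's commutator theorem. With your $N=-\Delta+c|x|^2+d+M$, a vector $\varphi\in D(\bar N)$ satisfies $x^2\varphi\in L^2$ by the quadratic estimate, but $V\varphi$ need not lie in $L^2$ since $V$ can grow faster than quadratically (only $V_-$ is controlled). With the paper's $N$ this is automatic: $N\varphi\in L^2$ and the quadratic estimate give $x^2\varphi\in L^2$, hence $A\varphi=N\varphi-2x^2\varphi\in L^2$. Finally, your argument also needs $N$ itself to be essentially self-adjoint; with $V$ inside $N$, the paper invokes Theorem~\ref{T9.1} (Kato's inequality for positive potentials), which is precisely what makes the whole scheme non-circular. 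Including $V$ in $N$ is the single change that dissolves all three obstructions simultaneously; the remainder of your outline (quadratic estimate, comparison, commutator bound) then goes through as in the paper.
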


Further developments (all later than the Ikebe--Kato paper discussed below) are due to Hellwig \cite{H1, H2, H3}, Rohde \cite{Ro1, Ro2} and Walter \cite{Walt79}.  In 1962, Kato and his former student Ikebe \cite{IK} studied operators of the form
\begin{equation}\label{8.2}
  -\sum_{j,k=1}^{\nu} c_{jk}(x)\left(\frac{\partial}{\partial x_j}-ia_j\right) \left(\frac{\partial}{\partial x_k}-ia_k\right) + V(x)
\end{equation}
where $c_{jk}(x)$ and $a_j(x)$ are $C^2$ functions and for each $x$, $c_{jk}(x)$ is a strictly positive matrix.  For quantum mechanics, one only considers $c_{ij}(x) = \delta_{ij}$ (at least if one ignores quantum mechanics on curved manifolds) and our discussion will be limited to that case.

Wienholtz had also considered first order terms but didn't write it in the form \eqref{8.2} which is the right form for quantum physics; $a_j(x)$ is the vector potential, i.e. B=da is the magnetic field.  Ikebe--Kato had the important realization that one needs no global hypothesis on $a$, i.e. any growth at $\infty$ of $a$ is allowed.  While they had too strong a local hypothesis on local behavior of $a$ (see Section \ref{s9}), their discovery on behavior at $\infty$ was important.

For $V$, they supposed that $V=V_1+V_2$ where $V_2$ is in a Stummel space, $S_{\nu,\alpha},\, \alpha >0$ and $V_1(x) \ge -q(|x|)$ where $q$ is increasing and obeys $\int^{\infty} q(r)^{-1/2} dr = \infty$.  Unlike Wienholtz, they could allow local singularities such as atoms in Stark fields.

\begin{wrapfigure}{l}{0.7\textwidth}
   \centering
   \vspace{-0.3cm}
   \includegraphics[width=0.7\textwidth]{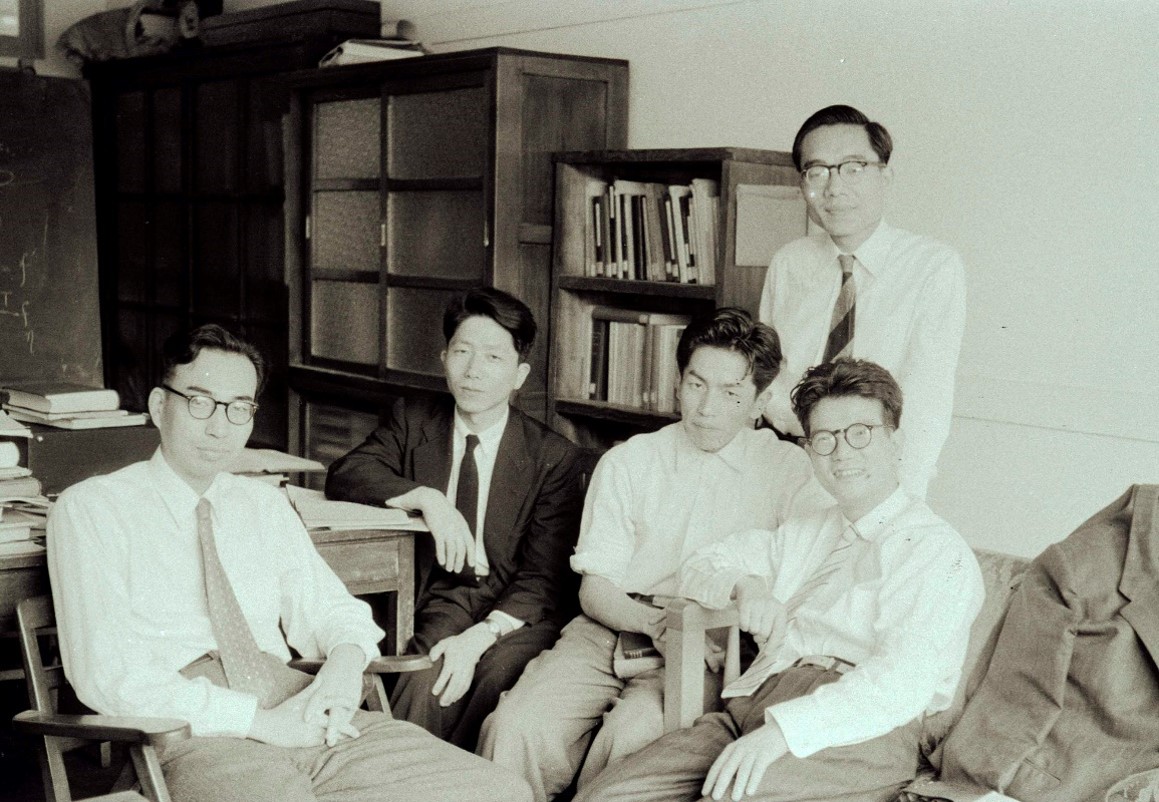}
The Kato group, late 1950s. \\
S.T. Kuroda (standing), T. Kato, T. Ikebe, H.~Fujita, Y. Nakata \lb{figure3}
\vspace{-0.2cm}
\end{wrapfigure}
Rather than discuss their techniques, I want to sketch two approaches to Wienholtz's result which allow local singularities and are of especial elegance.  For one of them, Kato made an important contribution. The first approach is due to Chernoff \cite{Chernoff1, Chernoff2} as modified by Kato \cite{KatoChernoff} and the second approach is due to Faris--Lavine \cite{FL}.  Interesting enough, each utilizes a self--adjointness criterion of Ed Nelson but two different criteria that he developed in different contexts.  Here is the criteria for Chernoff's method (which Nelson developed in his study of the relation between unitary group representations and their infinitesimal generators).

\begin{theorem} [Chernoff--Nelson Theorem] \lb{T8.2} Let $A$ be a self-adjoint operator and $U_t=e^{itA},\,t \in \bbR$, the induced unitary group. Suppose that $\calD$ is a dense subspace of $\calH$ with $\calD \subset D(A^\ell)$ for some $\ell = 1,2,\dots$ and suppose that for all $t$, we have that $U_t[\calD] \subset \calD$.  Then $\calD$ is a core for $A,A^2,\dots,A^\ell$.
\end{theorem}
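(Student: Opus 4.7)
The plan is to show that, for each $k \in \{1,\dots,\ell\}$, every $\varphi \in D(A^k)$ can be approximated by elements of $\calD$ in the graph norm of $A^k$. The strategy is a two-stage approximation: first regularize by convolving with a mollifier along the one-parameter group $U_t$, then approximate the resulting integral by Riemann sums that, crucially, lie inside $\calD$.

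For $f \in C_0^\infty(\bbR)$ and any $\psi \in \calH$, define the vector-valued Bochner integral $\psi_f := \int f(t) U_t \psi \, dt$. Using unitarity of $U_t$, the Stone relation $\frac{d}{dt} U_t \eta = iA U_t \eta$ on $D(A)$, and integration by parts in $t$, one checks that $\psi_f \in D(A^j)$ for every $j \ge 0$ and every $\psi \in \calH$, with
\begin{equation*}
  A^j \psi_f = i^j \psi_{f^{(j)}}.
\end{equation*}
Moreover, when $\psi \in D(A^j)$, one also has the alternative expression $A^j \psi_f = (A^j\psi)_f$, obtained by moving $A^j$ inside the integral using $A U_t = U_t A$ on $D(A^j)$. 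Choosing a standard mollifier $f_\epsilon(t) = \epsilon^{-1} f(t/\epsilon)$ with $\int f = 1$ and $f \in C_0^\infty(\bbR)$, this second form gives $\varphi_{f_\epsilon} \to \varphi$ and $A^k \varphi_{f_\epsilon} = (A^k \varphi)_{f_\epsilon} \to A^k \varphi$ in $\calH$ as $\epsilon \downarrow 0$, for any $\varphi \in D(A^k)$.

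The mollified vector $\psi_{f_\epsilon}$ need not lie in $\calD$, but it is the $\calH$-limit of Riemann sums
\begin{equation*}
  R_{n,\epsilon,m} := \sum_j f_\epsilon(t_j^{(m)}) \, U_{t_j^{(m)}} \psi_n \, \Delta t_j^{(m)}
\end{equation*}
with mesh going to $0$ as $m \to \infty$, where $\psi_n \in \calD$ is chosen with $\psi_n \to \varphi$ in $\calH$. Each $R_{n,\epsilon,m}$ lies in $\calD$ because $\calD$ is a linear subspace stable under each $U_t$. Since $A^j$ commutes with $U_{t_j^{(m)}}$ on $D(A^j) \supset \calD$, we have
\begin{equation*}
  A^j R_{n,\epsilon,m} = \sum_j f_\epsilon(t_j^{(m)}) \, U_{t_j^{(m)}} A^j \psi_n \, \Delta t_j^{(m)},
\end{equation*}
which is a Riemann sum for $(A^j \psi_n)_{f_\epsilon} = A^j \psi_{n,f_\epsilon}$, and hence converges in $\calH$ as $m \to \infty$. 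Also, for fixed $\epsilon$, the linear map $\eta \mapsto A^j \eta_{f_\epsilon} = i^j \eta_{f_\epsilon^{(j)}}$ is $\calH$-bounded, so $A^j \psi_{n,f_\epsilon} \to A^j \varphi_{f_\epsilon}$ in $\calH$ as $n \to \infty$.

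Combining the three limits ($m \to \infty$, $n \to \infty$, $\epsilon \downarrow 0$) by a standard diagonal extraction produces a single sequence $\chi_p \in \calD$ with $\chi_p \to \varphi$ and $A^k \chi_p \to A^k \varphi$ in $\calH$, which shows $\calD$ is a core for $A^k$; since $k \in \{1,\dots,\ell\}$ is arbitrary, this proves the theorem. The only nontrivial obstacle is the verification of the formula $A^j \psi_f = i^j \psi_{f^{(j)}}$ (valid without any domain assumption on $\psi$) together with its domain-restricted alternative $A^j \psi_f = (A^j \psi)_f$ on $D(A^j)$; both follow from self-adjointness of $A$ combined with the commutativity $A^j U_t = U_t A^j$ on $D(A^j)$, but must be set up carefully because the integration-by-parts step uses the Stone derivative on $D(A)$, not pointwise differentiability in $\calH$.
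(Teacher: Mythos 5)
Your proof is correct, but it takes a genuinely different route from the one Kato's paper (following Nelson and Chernoff) uses, and it is worth contrasting them. The paper's argument is the deficiency-index route: set $B=A^k\restriction\calD$, suppose $B^*\psi=i\psi$, and form the scalar function $f(t)=\jap{\psi,U_t\varphi}$ for $\varphi\in\calD$. Because $U_t\varphi$ stays in $\calD\subset D(A^k)$, $f$ is $C^k$ and satisfies the constant-coefficient ODE $f^{(k)}=-i^{k+1}f$; since no characteristic root of $\alpha^k=-i$ is real, the only bounded solution is $f\equiv 0$, whence $\psi\perp\calD$ and, by density, $\psi=0$. Repeating with $B^*\psi=-i\psi$ and invoking von Neumann's criterion gives essential self-adjointness with no explicit approximating sequence ever constructed. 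Your argument instead builds the approximants: you mollify along the group, $\varphi_{f_\epsilon}=\int f_\epsilon(t)U_t\varphi\,dt$, using the identities $A^j\varphi_{f}=i^j\varphi_{f^{(j)}}$ (valid for all $\varphi\in\calH$ since $f\in C_0^\infty$, via the difference-quotient/Stone-generator argument) and $A^j\varphi_f=(A^j\varphi)_f$ on $D(A^j)$ (via closedness of $A^j$), and then you recover membership in $\calD$ by replacing the Bochner integral with Riemann sums of $U_{t_j}\psi_n$ with $\psi_n\in\calD$, which lie in $\calD$ because $\calD$ is linear and $U_t$-invariant. The three-parameter limit ($m\to\infty$, $n\to\infty$, $\epsilon\downarrow 0$, in that order, with the $n$-limit controlled by the $L^1$-norm bound $\norm{A^j\eta_{f_\epsilon}}\le\norm{f_\epsilon^{(j)}}_{L^1}\norm{\eta}$) is handled by a diagonal extraction. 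This is the standard G\aa rding-space/mollification technique for cores of group generators, extended to powers of the generator; it is longer than the paper's ODE argument but is constructive and treats all powers $k\le\ell$ by the same machinery, whereas the paper's proof earns its brevity from the observation that the order-$k$ ODE $f^{(k)}=-i^{k+1}f$ has no bounded nonzero solution. One small caution worth making explicit in your writeup: the identity $A^j\psi_f=i^j\psi_{f^{(j)}}$ for arbitrary $\psi\in\calH$ is established from the difference quotient $\tfrac{1}{s}(U_s\psi_f-\psi_f)\to -\psi_{f'}$ and Stone's theorem (so $\psi_f\in D(A)$ automatically), not by integrating $\tfrac{d}{dt}U_t\psi=iAU_t\psi$ by parts, since the latter requires $\psi\in D(A)$; you flag this distinction at the end and it should be kept.
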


\begin{remarks}  1.  Recall that Stone's theorem \cite[Theorem 7.3.1]{OT} says there is a one--one correspondence between one--parameter unitary groups and self--adjoint operators, via $U_t=e^{itA},\,t \in \bbR$.

2. Chernoff considers the case $\calD \subset D^\infty(A) \equiv \cap_\ell D^\ell(A)$ in which case $\calD$ is a core for $A^\ell$ for all $\ell$.

3. Nelson \cite{NelsonAnal} did the case $\ell=1$ and Chernoff \cite{Chernoff1} noted his argument can be used for general $\ell$.

4. The argument is simple.  Let $B=A^k\restriction \calD$ for some $k=1,\dots,\ell$.  Suppose that $B^*\psi = i\psi$.  Let $\varphi \in \calD$ and let $f(t) = \jap{\psi,U_t\varphi}$.  Then since $U_t\varphi \in D(A^k)$, we have that $f$ is a $C^k$ function and
\begin{align}
  f^{(k)}(t) &= \jap{\psi,(iA)^kU_t\varphi} = i^k \jap{\psi,BU_t\varphi}\nonumber \\
             &= i^k \jap{B^*\psi,U_t\varphi} = -i^{k+1} f(t) \lb{8.3}
\end{align}
If $g(t) = e^{i\alpha t}$, then $g$ solves \eqref{8.3} if and only if $(i\alpha)^k = -i^{k+1}$, i.e. $\alpha^k = -i$.  No solution of this is real, so $g$ is a linear combination of exponentials which grow at different rates at either $+\infty$ or $-\infty$, so the only bounded solution is $0$. Since $|f(t)| \le \norm{\psi}\norm{\varphi}$, we conclude that $f(0)=0$ so $\psi \perp \calD$.  Since $\calD$ is dense, $\psi = 0$, i.e. $\ker(B^*-i)=\{0\}$. Similarly, $\ker(B^*+i)=\{0\}$, so B is esa.
\end{remarks}

Kato proved his famous self--adjointness result to be able to solve the time dependent Schr\"{o}dinger equation, $\dot{\psi}_t = -iH\psi_t$.  Chernoff turned this argument around!  If one can solve the equation $\dot{\psi}_t = -iA\psi_t$ for a dense set $\calD$ in $D^\infty(A)$ and prove that $\psi_{t=0} \in \calD \Rightarrow \psi_t \in \calD$, then by Theorem \ref{T8.2}, all powers of A are esa on $\calD$.  He combined this with existence and smoothness results of Friedrichs \cite{FriedHyper} and Lax \cite{LaxHyper} for hyperbolic equations plus finite propagation speed to show that if $A$ is a hyperbolic equation, then the solution map takes $C_0^\infty$ to itself.

In particular, since the Dirac equation is hyperbolic, Chernoff proved

\begin{theorem} [Chernoff \cite{Chernoff1}] \lb{T8.3} If $T_0$ is the free Dirac operator, \eqref{7.38}, and $V$ is a $C^\infty(\bbR^3)$ function, then $T = T_0+V$ and all its powers are esa on $C_0^\infty(\bbR^3;\bbC^4)$.
\end{theorem}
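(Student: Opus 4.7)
The plan is to follow Chernoff's strategy of turning Kato's original motivation on its head: rather than using self-adjointness to solve the Cauchy problem, I will solve the Cauchy problem by hyperbolic PDE methods and then use the resulting unitary group to deduce essential self-adjointness via Theorem \ref{T8.2}. Concretely, I will construct, for any $\psi_0 \in C_0^\infty(\bbR^3;\bbC^4)$, a solution $\psi_t$ to the Dirac equation $i\partial_t \psi_t = (T_0+V)\psi_t$ with initial value $\psi_0$, show that the map $U_t \colon \psi_0 \mapsto \psi_t$ extends to a one-parameter unitary group on $L^2(\bbR^3;\bbC^4)$ whose generator $A$ extends $T\restriction C_0^\infty$, and then verify that $U_t[C_0^\infty(\bbR^3;\bbC^4)] \subset C_0^\infty(\bbR^3;\bbC^4)$.

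First I would appeal to the Friedrichs--Lax theory of symmetric hyperbolic systems. The operator $\partial_t + i(T_0+V) = \partial_t + \sum_j \alpha_j \partial_j + i(m\beta + V)$ is a first-order symmetric hyperbolic system with $C^\infty$ coefficients (the $\alpha_j$ being constant Hermitian matrices, $V$ a smooth real-valued function and $m\beta$ bounded Hermitian). The classical existence and regularity results then produce, for each initial datum $\psi_0 \in C_0^\infty$, a global $C^\infty$ solution $\psi_t(x)$ jointly smooth in $(t,x)$. Next, the symbol of $\sum_j\alpha_j\partial_j$ has characteristic cone equal to the light cone, so standard energy estimates on backward cones give finite propagation speed $1$: if $\supp \psi_0 \subset \{|x| \le R\}$, then $\supp \psi_t \subset \{|x| \le R+|t|\}$. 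Combined with the smoothness, this yields $\psi_t \in C_0^\infty(\bbR^3;\bbC^4)$ for every $t$.

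Then I would verify that $U_t$ preserves the $L^2$ norm. For $\psi_0 \in C_0^\infty$, both $\psi_t$ and $T\psi_t$ are in $C_0^\infty$, so integration by parts is legitimate and
\begin{equation*}
\tfrac{d}{dt}\norm{\psi_t}^2 = \jap{-iT\psi_t,\psi_t} + \jap{\psi_t,-iT\psi_t} = 0,
\end{equation*}
since $T$ is symmetric on $C_0^\infty$ (here I use $V$ real-valued and the self-adjointness of the $\alpha_j$). Because $C_0^\infty$ is dense, $U_t$ extends uniquely to an isometry on $L^2$, and the group law $U_{t+s}=U_tU_s$ on $C_0^\infty$ extends by density, making $\{U_t\}_{t\in\bbR}$ a strongly continuous one-parameter unitary group. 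By Stone's theorem it has a self-adjoint generator $A$ with $U_t = e^{-itA}$. For $\varphi \in C_0^\infty$, $t \mapsto U_t\varphi$ is norm-differentiable at $t=0$ with derivative $-iT\varphi$ (by the smoothness of $\psi_t$ and dominated convergence, using the compact support), so $C_0^\infty \subset D(A)$ and $A\varphi = T\varphi$ there. Since $T$ maps $C_0^\infty$ into itself, an easy induction gives $C_0^\infty \subset D(A^k)$ and $A^k\varphi = T^k\varphi$ for all $k \ge 1$ and $\varphi \in C_0^\infty$.

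The finishing move is now Theorem \ref{T8.2}: for any fixed $\ell$, the hypotheses are met ($C_0^\infty$ is dense, contained in $D(A^\ell)$, and invariant under $U_t$), so $C_0^\infty$ is a core for $A^k$ for every $k \le \ell$. Taking $\ell$ arbitrary, $T^k\restriction C_0^\infty$ has closure $A^k$, which is self-adjoint; thus $T$ and all its powers are esa on $C_0^\infty(\bbR^3;\bbC^4)$. The main obstacle is the first step: rigorously invoking global existence, $C^\infty$-regularity, and finite propagation speed for the symmetric hyperbolic system when $V$ is unbounded at infinity. Existence and propagation speed follow from local-in-time energy estimates on truncated cones, which only see $V$ on compact sets and therefore do not feel its growth; smoothness then propagates by differentiating the equation and running the same estimates on derivatives. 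Once this piece from Friedrichs--Lax theory is in hand, the rest of the argument is essentially formal.
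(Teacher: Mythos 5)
Your proposal is correct and follows essentially the same route the paper attributes to Chernoff: solve the Cauchy problem by Friedrichs--Lax hyperbolic theory, use finite propagation speed to show the solution group preserves $C_0^\infty(\bbR^3;\bbC^4)$, and then invoke Theorem \ref{T8.2} to conclude that $T$ and all its powers are esa there. Your closing remark about why the unboundedness of $V$ at infinity is harmless (energy estimates on truncated cones see only compact sets) is exactly the point that makes the argument work.
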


\begin{wrapfigure}{r}{0.4\textwidth}
   \centering
   \vspace{-0.4cm}
   \includegraphics[width=0.4\textwidth]{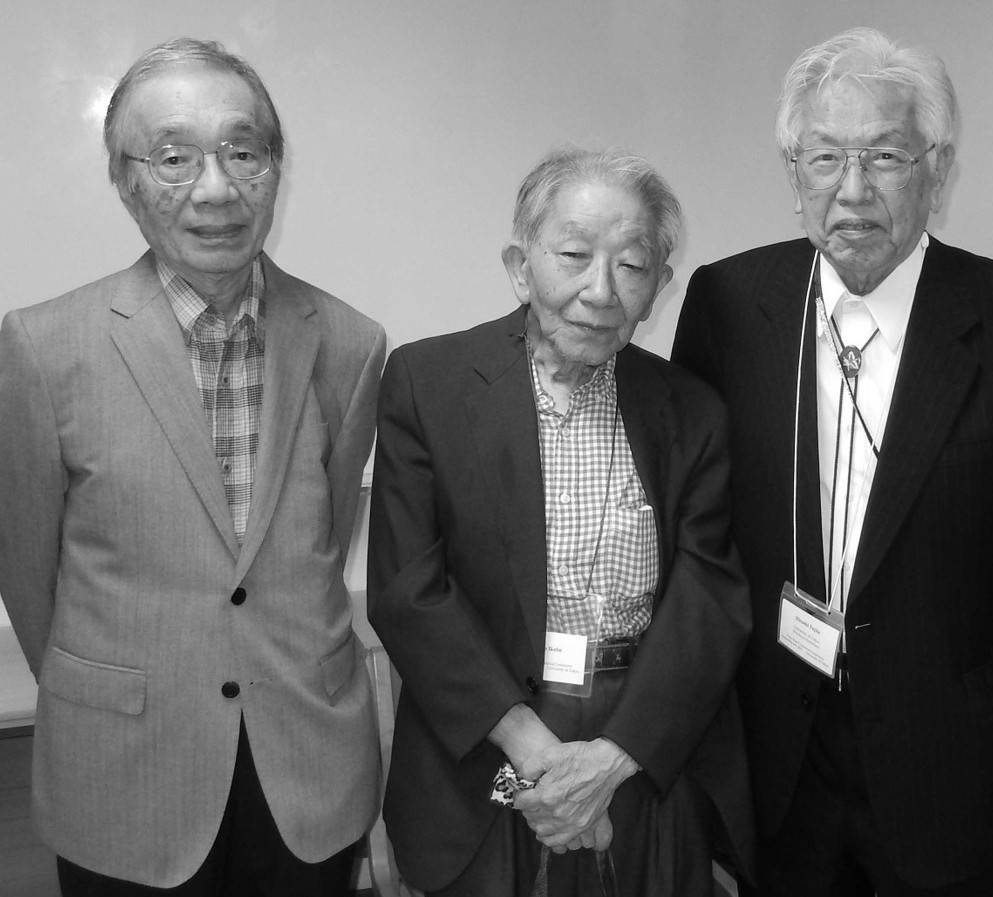}
S. Kuroda, T. Ikebe, H. Fujita recently \lb{figure4}
\vspace{-0.4cm}
\end{wrapfigure}
Notice that there are no restrictions on the growth of $V$ at $\infty$.  This is an expression of the fact that for the Dirac equation, no boundary condition is needed at infinity -- intuitively, this is because the particle cannot get to infinity in finite time because speeds are bounded by the speed of light!  Several years after his initial paper, Chernoff \cite{Chernoff2} used results on solutions of singular hyperbolic equations and proved the following version of the fact that Dirac equations have no boundary condition at infinity:

\begin{theorem} \lb{T8.4} Let $T_0$ be the free Dirac equation and $V \in L^2_{loc}(\bbR^3)$ (so $T_0+V$ is defined on $C_0^\infty(\bbR^3;\bbC^4)$).  Suppose for each $x_0 \in \bbR^3$, there is a $V^{(x_0)}$ equal to $V$ in a neighborhood of $x_0$ and so that $T_0+V^{(x_0)}$ is esa on $C_0^\infty(\bbR^3;\bbC^4)$.  Then $T_0+V$ is esa on $C_0^\infty(\bbR^3;\bbC^4)$.
\end{theorem}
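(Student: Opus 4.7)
The plan is to follow Chernoff's strategy: exhibit a unitary group $U_t$ on $L^2(\bbR^3;\bbC^4)$ that preserves $C_0^\infty(\bbR^3;\bbC^4)$ and satisfies $\tfrac{d}{dt}U_t\varphi|_{t=0} = -iT\varphi$ for $\varphi\in C_0^\infty$, where $T=T_0+V$. Once this is in hand, the generator $A$ of $U_t$ is a self-adjoint extension of $T\restriction C_0^\infty$; the Chernoff--Nelson Theorem \ref{T8.2} applied to the invariant subspace $\calD = C_0^\infty$ then forces $C_0^\infty$ to be a core for $A$, giving $\overline{T\restriction C_0^\infty}=A$, hence esa. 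The whole game is therefore to construct $U_t$ from the local self-adjoint extensions, and the tool that makes this possible is finite propagation speed for the Dirac equation.

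By hypothesis, for every $x_0\in\bbR^3$ the closure $H^{(x_0)}:=\overline{T_0+V^{(x_0)}}$ is self-adjoint, so it generates a unitary group $U_t^{(x_0)}=e^{-itH^{(x_0)}}$. The first main step is to establish \emph{finite propagation speed}: for any $\varphi\in L^2$ with $\mathrm{supp}(\varphi)\subset K$ compact, one has $\mathrm{supp}(U_t^{(x_0)}\varphi)\subset K_{|t|}\equiv\{x:\mathrm{dist}(x,K)\le|t|\}$. Formally this follows from the pointwise conservation law
\begin{equation*}
\partial_t|\psi|^2 + \sum_{j=1}^3 \partial_j\langle\psi,\alpha_j\psi\rangle = 0
\end{equation*}
for solutions of $\dot\psi=-iT_0\psi-iV^{(x_0)}\psi$, applied on the backward light cone; the rigorous derivation for potentials only in $L^2_{\mathrm{loc}}$ is precisely the ``singular hyperbolic'' input mentioned in the excerpt and is the main technical obstacle.

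The second main step is a \emph{locality} consequence of finite propagation: if $V^{(x_0)}$ and $V^{(x_0')}$ both agree with $V$ on an open set $\Omega$, and $\varphi\in C_0^\infty$ with $\mathrm{supp}(\varphi)\subset\Omega$ is such that $\mathrm{supp}(U_s^{(x_0)}\varphi)\subset\Omega$ for $0\le s\le t$, then $U_t^{(x_0)}\varphi=U_t^{(x_0')}\varphi$. Indeed, the difference $\eta_s := U_s^{(x_0)}\varphi-U_s^{(x_0')}\varphi$ is supported in $\Omega$ where both potentials equal $V$, and an energy estimate on the appropriate cone of influence forces $\eta_s\equiv 0$. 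This allows an unambiguous local definition: for $\varphi\in C_0^\infty$ with $\mathrm{supp}(\varphi)\subset B(x_0,r)$ and $|t|<\delta$ where $V^{(x_0)}=V$ on $B(x_0,r+\delta)$, set $U_t\varphi:=U_t^{(x_0)}\varphi$; this is independent of the choice of $x_0$ and isometric.

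To globalize, given arbitrary $\varphi\in C_0^\infty$ choose a smooth partition of unity $\{\chi_j\}$ subordinate to a finite cover of $\mathrm{supp}(\varphi)$ by balls small enough that each $\chi_j\varphi$ lies in the local domain described above. Define $U_t\varphi:=\sum_j U_t(\chi_j\varphi)$ for $|t|$ small; locality implies independence of the partition, isometry then extends $U_t$ to all of $L^2$ by density, and group/semigroup algebra together with the support-spreading bound lets one iterate in small time steps to define $U_t$ for all $t\in\bbR$. That $U_t$ preserves $C_0^\infty$ and that $\tfrac{d}{dt}U_t\varphi|_{t=0}=-iT\varphi$ for $\varphi\in C_0^\infty$ both follow from the corresponding facts for each $U_t^{(x_0)}$ acting on its local piece. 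The hard part throughout is Step 1: the energy identity and uniqueness for the Cauchy problem with an $L^2_{\mathrm{loc}}$ potential, which requires showing that elements of $D(H^{(x_0)})$ have enough local regularity to justify integration by parts on a spacelike cone — exactly the singular hyperbolic machinery Chernoff develops in \cite{Chernoff2}.
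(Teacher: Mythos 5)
Your strategy---patch the local groups $U_t^{(x_0)} = e^{-itH^{(x_0)}}$ into a global propagator $U_t$ via finite propagation speed and then invoke the Chernoff--Nelson Theorem~\ref{T8.2}---is the right philosophy, but there is a fatal gap in the assertion that $U_t$ preserves $C_0^\infty(\bbR^3;\bbC^4)$. You claim this ``follows from the corresponding facts for each $U_t^{(x_0)}$ acting on its local piece,'' but those facts do not hold: each $V^{(x_0)}$ agrees with $V$ near $x_0$ and is therefore only $L^2_{loc}$ there, so $U_t^{(x_0)}$ has no reason to preserve smoothness. (Contrast Theorem~\ref{T8.3}: there $V\in C^\infty$, and it is exactly the Friedrichs--Lax hyperbolic regularity theory for smooth coefficients that gives $U_t[C_0^\infty]\subset C_0^\infty$.) Concretely, take $V(x)=|x|^{-1/2}$ near the origin and $\varphi\in C_0^\infty$ supported there: $U_t^{(0)}\varphi$ lands in $D(H^{(0)})$, which is only an $H^1$-type space, and is generically not $C^\infty$ for $t\ne 0$. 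Since the entire point of Theorem~\ref{T8.4} is to localize the treatment of such singularities, this is the crux, not a removable technicality; without $U_t[C_0^\infty]\subset C_0^\infty$, Theorem~\ref{T8.2} with $\calD=C_0^\infty$ simply does not apply.

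The repair---and, I believe, the actual structure of Chernoff's argument in \cite{Chernoff2}; note the paper itself only cites his article and gives no proof---is to replace $C_0^\infty$ by the larger invariant domain
\begin{equation*}
  \calD = \{\eta\in L^2(\bbR^3;\bbC^4):\ \eta \text{ has compact support and } (T_0+V)\eta\in L^2 \text{ distributionally}\},
\end{equation*}
the compactly supported vectors in the maximal domain. If $\eta\in\calD$ is supported in a ball where $V=V^{(x_0)}$, then since $T_0+V^{(x_0)}$ is esa (so $H^{(x_0)}$ is the maximal operator) one has $\eta\in D(H^{(x_0)})$; finite propagation speed and the $U_t^{(x_0)}$-invariance of $D(H^{(x_0)})$ give $U_t^{(x_0)}\eta\in\calD$ for small $|t|$, so the patched group preserves $\calD$, and Chernoff--Nelson makes $\calD$ a core for the generator. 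There is then still a nontrivial final step: one must show $C_0^\infty$ is graph-dense in $\calD$, so that $C_0^\infty$ (and not merely $\calD$) is a core. This requires a \emph{localized} approximation within $D(H^{(x_0)})$---for instance a Paley--Wiener-type smoothing $h(\tau H^{(x_0)})\eta$ with $\hat h$ compactly supported, so the approximants stay supported near $\eta$ and hence inside the set where $V=V^{(x_0)}$---which is exactly the ``singular hyperbolic'' content you flag at the end of your proposal, but which is needed in the core argument rather than only in establishing finite propagation. As written, the proof does not close.
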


Combining this with Schmincke's result (Theorem \ref{T7.13}) one gets

\begin{corollary} \lb{C8.5} Let $T_0$ be a free Dirac operator.  Let $V$ be a measurable function so that for some sequence $\{x_j\}_{j=1}^N$ (with $N$ finite or infinite) with no finite limit point, we have that

(a) There are constants $\mu_j < \sqrt{3}/2$ and $C_j$ so that for $x$ near $x_j$, say $x$ obeys $|x-x_j| \le \tfrac{1}{2} \min_{k\ne j} |x_j-x_k|$, one has that
\begin{equation}\label{8.4}
  |V(x)| \le \mu_j |x-x_j|^{-1} + C_j
\end{equation}

(b) V is locally bounded near any $x \notin \{x_j\}_{j=1}^N$.

Then $T_0+V$ is esa on $C_0^\infty(\bbR^3;\bbC^4)$.
\end{corollary}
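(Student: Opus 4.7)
The plan is to apply the localization Theorem 8.4 directly, reducing the global problem to verifying, for every $x_0\in\bbR^3$, that there exists some $V^{(x_0)}$ agreeing with $V$ on a neighborhood of $x_0$ such that $T_0+V^{(x_0)}$ is esa on $C_0^\infty(\bbR^3;\bbC^4)$. Since the points $\{x_j\}$ have no finite limit point, any $x_0$ is in one of two (mutually exclusive) situations: either $x_0$ lies outside $\{x_j\}$ and we can find an open neighborhood of $x_0$ disjoint from $\{x_j\}$, or $x_0=x_j$ for some unique $j$ and we can choose a neighborhood $U_j\subset\{x\,|\,|x-x_j|<r_j\}$ with $r_j\equiv\tfrac{1}{2}\min_{k\ne j}|x_j-x_k|>0$ containing no other $x_k$. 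In each case I would build $V^{(x_0)}$ as a compactly supported cutoff of $V$ that still equals $V$ near $x_0$.

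For $x_0\notin\{x_j\}$, pick a smooth cutoff $\chi$ with $\chi\equiv 1$ near $x_0$ and $\supp\chi$ a compact set disjoint from $\{x_j\}$; by hypothesis (b), $V\chi$ is bounded. Then $V^{(x_0)}\equiv V\chi$ is a bounded multiplication operator, so it is $T_0$-bounded with relative bound $0$, and the Kato--Rellich theorem (Theorem \ref{T7.3}) together with the esa of $T_0$ on $C_0^\infty(\bbR^3;\bbC^4)$ gives esa of $T_0+V^{(x_0)}$ on $C_0^\infty(\bbR^3;\bbC^4)$.

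For $x_0=x_j$, pick a smooth radial cutoff $\chi_j$ with $\chi_j\equiv 1$ on $|x-x_j|\le r_j/4$ and $\chi_j\equiv 0$ on $|x-x_j|\ge r_j/2$, and set $V^{(x_j)}\equiv V\chi_j$. By hypothesis (a), $|V^{(x_j)}(x)|\le\mu_j|x-x_j|^{-1}+C_j$ on $\{|x-x_j|\le r_j/2\}$ and $V^{(x_j)}\equiv 0$ elsewhere. Choose $\rho_j\in(0,r_j/2)$ small enough that $\mu_j+C_j\rho_j<\sqrt{3}/2$ (possible because $\mu_j<\sqrt{3}/2$), and split
\begin{equation*}
V^{(x_j)}=W_j+B_j,\qquad W_j\equiv V^{(x_j)}\mathbf{1}_{|x-x_j|\le\rho_j},\quad B_j\equiv V^{(x_j)}\mathbf{1}_{|x-x_j|>\rho_j}.
\end{equation*}
Then $B_j$ is bounded (its support lies in the annulus $\rho_j\le|x-x_j|\le r_j/2$ where $|V|\le\mu_j/\rho_j+C_j$), while on the support of $W_j$ we have $C_j\le C_j\rho_j|x-x_j|^{-1}$, so
\begin{equation*}
|W_j(x)|\le(\mu_j+C_j\rho_j)|x-x_j|^{-1}\equiv\tilde\mu_j|x-x_j|^{-1},\qquad \tilde\mu_j<\sqrt{3}/2.
\end{equation*}
Translating $x_j$ to the origin (which commutes with $T_0$), Schmincke's Theorem \ref{T7.13} shows $T_0+W_j$ is esa on $C_0^\infty(\bbR^3;\bbC^4)$, and then the Kato--Rellich theorem applied with the bounded perturbation $B_j$ yields esa of $T_0+V^{(x_j)}$ on the same domain.

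Having produced a suitable $V^{(x_0)}$ for every $x_0$, Theorem \ref{T8.4} finishes the argument. The one real subtlety—the main obstacle—is the splitting in the singular case: we must absorb the additive constant $C_j$ into the Coulomb-type bound without pushing the coupling over the Schmincke threshold $\sqrt{3}/2$. This is why the strict inequality $\mu_j<\sqrt{3}/2$ in hypothesis (a) is essential: it leaves the room required to choose $\rho_j$ so that $\tilde\mu_j$ remains below $\sqrt{3}/2$. (If one only had $\mu_j\le\sqrt{3}/2$, this construction would fail at the boundary case, and a finer argument invoking Theorem \ref{T7.12}/\ref{T7.13} directly, without the bounded perturbation step, would be needed.)
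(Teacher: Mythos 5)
Your proof is correct and follows exactly the route the paper indicates—the paper simply states that Corollary \ref{C8.5} follows by ``combining [Theorem \ref{T8.4}] with Schmincke's result (Theorem \ref{T7.13})'' without spelling out the details, and your construction supplies those details in the natural way: cut off $V$ near each $x_0$, absorb the constant $C_j$ into the Coulomb bound on a sufficiently small ball so that the effective coupling $\tilde\mu_j=\mu_j+C_j\rho_j$ stays below $\sqrt{3}/2$, invoke Schmincke, and then treat the bounded remainder by Kato--Rellich. The one wording nit is that translation does not ``commute with $T_0$'' literally; rather $T_0$ is translation-invariant, so conjugating by the unitary translation operator takes $T_0+W_j$ to a unitarily equivalent operator (and preserves $C_0^\infty$), which is what you use—but this does not affect the argument.
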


Other results on esa for Dirac operators which are finite sums of Coulomb potentials include \cite{JorgDirac, Karn, NenciuDirac, LRejto1, LRejto2, KlausDirac}.

At first sight, this lovely idea seems to have nothing to do with Schr\"{o}dinger operators since that equation isn't hyperbolic; after all it has infinite propagation speed and even for the free case, the dynamical unitary group doesn't leave the $C_0^\infty$ functions fixed.  But the wave equation
\begin{equation}\label{8.5}
  \frac{\partial^2 u}{\partial t^2}=(\Delta - V) u
\end{equation}
is hyperbolic (and has finite propagation speed, namely 1).  It is second order in $t$ but can be written as a first order equation:
\begin{equation}\label{8.5}
  v=\frac{\partial u}{\partial t}, \qquad \frac{\partial v}{\partial t} = -Bu, \qquad B=-\Delta+V
\end{equation}
or equivalently
\begin{equation}\label{8.6}
  \frac{\partial}{\partial t}\left(
                               \begin{array}{c}
                                 u \\
                                 v \\
                               \end{array}
                             \right) =-iA \left(
                                            \begin{array}{c}
                                              u \\
                                              u \\
                                            \end{array}
                                          \right); \qquad -iA = \left(
                                                                  \begin{array}{cc}
                                                                    0 & \bdone \\
                                                                    -B & 0 \\
                                                                  \end{array}
                                                                \right)
\end{equation}
If $V$ is in $C^\infty(\bbR^\nu)$, one can use hyperbolic theory to prove solutions exist for $(u(0),v(0)) \in C_0^\infty(\bbR^\nu) \times C_0^\infty(\bbR^\nu)$ and the solution remains in this space.  To apply Theorem \ref{T8.2}, we need this dynamics to be unitary.  The energy
\begin{equation}\label{8.7}
  E(u,v) = \jap{v,v}+\jap{u,Bu}
\end{equation}
is formally conserved, so it is natural to use $E$ as the square of a Hilbert space norm.  For this to work, one needs that $B \ge c\bdone$ with $c > 0$.  Actually, so long as $B$ is bounded from below we can add a constant to $B$ so that $B \ge \bdone$ which we'll assume.  When this is so, one can prove that on the Hilbert space $L^2(\bbR^\nu)\oplus Q(-\Delta+V)$ (where $Q$ is the quadratic form of the Friedrichs extension as discussed in Section \ref{s10}), $e^{-itA}$ with $A$ given by \eqref{8.6} is a unitary group which leaves $\calD = C_0^\infty(\bbR^\nu)\oplus C_0^\infty(\bbR^\nu)$ invariant and with $\calD \subset D^\infty(A)$.  We note that
\begin{equation}\label{8.8}
  A^2 = -(iA)^2 = \left(
                    \begin{array}{cc}
                      B & 0 \\
                      0 & B \\
                    \end{array}
                  \right)
\end{equation}
on $\calD$.  We have thus related the Schr\"{o}dinger equation to the square of a hyperbolic equation so we can use Chernoff's idea to conclude that

\begin{theorem} \lb{T8.6} If $V$ is $C^\infty(\bbR^\nu)$, so that $-\Delta+V$ is bounded from below on $C_0^\infty(\bbR^\nu)$, i.e. for some $c$ and all $u \in C_0^\infty(\bbR^\nu)$
\begin{equation}\label{8.9}
  \jap{u,(-\Delta+V)u} \ge c\jap{u,u}
\end{equation}
then $-\Delta+V$ is esa--$\nu$.
\end{theorem}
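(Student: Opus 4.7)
\medskip

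\noindent\textbf{Proof Plan.} The strategy is exactly the one outlined just before the theorem statement: convert the Schr\"odinger question into one about a second-order hyperbolic equation and then invoke the Chernoff--Nelson theorem (Theorem \ref{T8.2}). First, by adding a constant to $V$, I may assume the stronger lower bound $B \equiv -\Delta + V \ge \bdone$ on $C_0^\infty(\bbR^\nu)$; replacing $V$ by $V+c'$ changes $-\Delta+V$ by a bounded operator, hence preserves essential self-adjointness. Since $B$ is symmetric and positive on $C_0^\infty$, the Friedrichs extension is defined and its form domain $Q(B)$ contains $C_0^\infty(\bbR^\nu)$; equip $\calH_E = L^2(\bbR^\nu) \oplus Q(B)$ with the energy inner product suggested by \eqref{8.7}, so that the skew-symmetric matrix operator $-iA$ of \eqref{8.6} is formally a generator of a unitary group on $\calH_E$.

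The core analytic input, which I would establish next, is that on $\calH_E$ the operator $-iA$ actually exponentiates to a strongly continuous unitary group $\{e^{-itA}\}_{t\in\bbR}$ that leaves $\calD \equiv C_0^\infty(\bbR^\nu)\oplus C_0^\infty(\bbR^\nu)$ invariant, with $\calD \subset D^\infty(A)$. The invariance is where finite propagation speed is used: the wave equation \eqref{8.5} with smooth coefficient $V$ (no growth restriction) has unique classical solutions starting from smooth compactly supported Cauchy data, and because disturbances travel with speed at most $1$, the support of $(u(t),v(t))$ at time $t$ stays inside the $|t|$-neighborhood of the initial support, hence remains compact. Standard local hyperbolic regularity theory (Friedrichs \cite{FriedHyper}, Lax \cite{LaxHyper}), applied inside a large ball containing this forward support, then gives $u(t),v(t)\in C_0^\infty(\bbR^\nu)$ for every $t$, and that the $\calH_E$-norm is conserved (integration by parts against $\partial_t u$ justified on compact supports). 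By density and the conservation law, the flow extends to a unitary group on $\calH_E$. Smoothness of $V$ then ensures $\calD\subset D^\infty(A)$: every application of $A$ to an element of $\calD$ again lies in $\calD$, since $A$ just shuffles the two components and applies $-\Delta+V$.

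With these properties in hand, I would apply the Chernoff--Nelson theorem (Theorem \ref{T8.2}) with $\ell=2$: the dense subspace $\calD\subset D^\infty(A)$ is invariant under $e^{-itA}$, so $\calD$ is a core for $A^2$. By the computation \eqref{8.8}, on $\calD$ one has $A^2 = B \oplus B$ acting componentwise. Hence $B\oplus B$ is essentially self-adjoint on $C_0^\infty\oplus C_0^\infty$, which is equivalent to $B = -\Delta + V$ being essentially self-adjoint on $C_0^\infty(\bbR^\nu)$, i.e.\ esa--$\nu$.

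The main obstacle is the middle step: justifying the existence of the unitary propagator on $\calH_E$ preserving $\calD$ when $V$ is merely smooth, with no growth control at infinity. Classical hyperbolic existence theorems are local, and the usual global arguments need bounded coefficients or semigroup estimates; the right tool here is finite propagation speed for \eqref{8.5}, which replaces any global assumption. Once one knows that $C_0^\infty$ data stay compactly supported, one can truncate $V$ outside a large ball (making $B$ a bounded-below operator with bounded $V$, for which classical self-adjointness and unitary dynamics are standard) and appeal to uniqueness of solutions with prescribed compact support to identify the truncated dynamics with the true one on the relevant time interval. This device, together with energy conservation, produces the global unitary group and completes the argument.
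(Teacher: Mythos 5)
Your proposal is correct and follows essentially the same route the paper takes: the discussion immediately preceding Theorem \ref{T8.6} lays out precisely this Chernoff-style argument (shift $V$ so $B \ge \bdone$, pass to the energy Hilbert space $L^2 \oplus Q(B)$, use finite propagation speed of the hyperbolic system \eqref{8.6} to show $\calD = C_0^\infty \oplus C_0^\infty$ is invariant under the unitary group, and apply Theorem \ref{T8.2} with $A^2 = B \oplus B$ on $\calD$), and your write-up fills in that same sketch, including the truncation device for reducing to bounded $V$ via compact supports.
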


\begin{remarks}  1.  This proof of the result appeared in Chernoff \cite{Chernoff1}, but the result itself appeared earlier in Povzner \cite{Povz1} and Wienholtz \cite{Wienholtz1}.

2.  In his second paper, Chernoff \cite{Chernoff2} handled singular $V$'s and also used the idea of Kato we'll describe shortly and also Kato's inequality ideas (see Section \ref{s9}).  He proved that $-\Delta+V$ is esa--$\nu$ if $V=U-W$ with $U, W \ge 0$, $U \in L^2_{loc}(\bbR^\nu), W \in L^p_{loc}(\bbR^\nu)$ (with  $p$ $\nu$--canonical) and $-\Delta+V+cx^2$ bounded from below for some $c>0$.
\end{remarks}

In \cite{KatoChernoff}, Kato showed how to modify Chernoff's argument to extend Theorem \ref{T8.6} to replace the condition that $-\Delta+V$ is bounded from below by the condition that for some $c>0$, one has that $-\Delta+V+cx^2$ is bounded from below (and thereby gets a Wienholtz--Ikebe--Kato type of result).  Kato's idea (when $c=1$) was to solve $\tfrac{\partial^2 u}{\partial t^2} = (\Delta-V)u-4t^2u$.  He was able to prove that $\norm{u(t)}_2$ (which is bounded in the case $-\Delta+V$  is bounded below) doesn't grow worse than $|t|^3$ and then push through a variant of the Chernoff--Nelson argument (since a $|t|^3$ bound can eliminate exponential growth).

This completes our discussion of the Chernoff approach.  The underlying self--adjointness criterion of Nelson needed for the Faris--Lavine approach is

\begin{theorem} [Nelson's Commutator Theorem \cite{NelsonCommutator}] \lb{T8.7} Let $A, N$ be two symmetric operators so that $N$ is self--adjoint with $N \ge 1$.  Suppose that $D(N) \subset D(A)$ and there are constants $c_1$ and $c_2$ so that for all $\varphi,\psi \in D(N)$ we have that
\begin{equation}\label{8.9a}
  |\jap{\varphi,A\varphi}| \le c_1 \jap{\varphi,N\varphi}
\end{equation}
\begin{equation}\label{8.10}
  |\jap{A\varphi,N\varphi} - \jap{N\varphi,A\varphi}| \le c_2\jap{\varphi,N\varphi}
\end{equation}
Then $A$ is esa on any core for $N$.
\end{theorem}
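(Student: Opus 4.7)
The plan is to apply von Neumann's criterion to $B \equiv A\restriction\mathcal{D}$: it suffices to show that $\ker(B^*-i\mu) = \ker(B^*+i\mu)=\{0\}$ for some (sufficiently large) $\mu>0$. So suppose $\psi\in D(B^*)$ with $B^*\psi = i\mu\psi$, i.e.\ $\langle A\varphi,\psi\rangle = -i\mu\langle\varphi,\psi\rangle$ for all $\varphi\in\mathcal{D}$; the goal is to deduce $\psi=0$ provided $\mu$ is large. First I would extend this relation from $\mathcal{D}$ to all of $D(N)$: polarizing \eqref{8.9a} shows that the sesquilinear form $(\varphi,\eta)\mapsto \langle\varphi,A\eta\rangle$ is bounded by $c_1'\|N^{1/2}\varphi\|\,\|N^{1/2}\eta\|$, and since $\mathcal{D}$ is a core for $N$, hence dense in $D(N)$ in the $N$-graph norm (and so in the $N^{1/2}$-graph norm), an approximation argument yields $\langle A\eta,\psi\rangle = -i\mu\langle\eta,\psi\rangle$ for every $\eta\in D(N)$ (possibly after verifying that $\psi$ lies in $D(N^{1/2})$ through a bootstrap that uses the commutator bound applied to the regularizations below).

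Next I would regularize $\psi$ by the contractive semigroup $e^{-tN}$, which is legitimate because $N\geq 1$ is self-adjoint. Set $\psi_t = e^{-tN}\psi$; then $\psi_t\in D(N^k)$ for every $k$, $\psi_t\to\psi$ in $\mathcal{H}$ as $t\downarrow 0$, and $\psi_t\in D(N)\subset D(A)$, so $A\psi_t$ is well defined. The key quantity to examine is $\langle\psi_t,A\psi_t\rangle$. Symmetry of $A$ on $D(N)$ forces this to be real, while self-adjointness of $e^{-tN}$ allows one to move it to the other side:
\begin{equation*}
\langle\psi_t,A\psi_t\rangle = \langle\psi, e^{-tN}Ae^{-tN}\psi\rangle.
\end{equation*}
Duhamel's formula on $D(N)$ gives $e^{-tN}Ae^{-tN} = Ae^{-2tN} - \int_0^t e^{-sN}[A,N]e^{-(2t-s)N}\,ds$, and the extended eigenvalue relation applied to $\eta=\psi_{2t}\in D(N)$ yields $\langle\psi,Ae^{-2tN}\psi\rangle = \overline{\langle A\psi_{2t},\psi\rangle} = -i\mu\langle\psi,\psi_{2t}\rangle$. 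Taking imaginary parts of both sides therefore gives
\begin{equation*}
\mu\,\langle\psi,\psi_{2t}\rangle \;=\; \operatorname{Im}\int_0^t \langle e^{-sN}\psi,[A,N]e^{-(2t-s)N}\psi\rangle\,ds.
\end{equation*}

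Now I would bound the right-hand side using the polarized form of \eqref{8.10}, $|\langle\varphi,[A,N]\eta\rangle|\le c_2'\|N^{1/2}\varphi\|\,\|N^{1/2}\eta\|$, together with the elementary spectral estimate $\|N^{1/2}e^{-uN}\|_{\mathrm{op}} \le C u^{-1/2}$ valid for $0<u\le 1$. This yields
\begin{equation*}
\left|\int_0^t\langle e^{-sN}\psi,[A,N]e^{-(2t-s)N}\psi\rangle\,ds\right| \le c_2' C^2\|\psi\|^2 \int_0^t \frac{ds}{\sqrt{s(2t-s)}} \le C'\|\psi\|^2,
\end{equation*}
with $C'$ independent of $t$, because the change of variables $s=2t\sigma$ reduces the integral to the convergent $\int_0^{1/2}(\sigma(1-\sigma))^{-1/2}d\sigma$. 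Sending $t\downarrow 0$ and using $\langle\psi,\psi_{2t}\rangle\to\|\psi\|^2$ gives $\mu\|\psi\|^2 \le C'\|\psi\|^2$; for any $\mu>C'$, this forces $\psi=0$. The argument for $\ker(B^*+i\mu)=\{0\}$ is identical, so von Neumann's criterion delivers essential self-adjointness.

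The hardest part will be the rigorous justification of Step~1 (the extension of the eigenvalue relation from $\mathcal{D}$ to $D(N)$, which really uses that $\psi$ inherits some $N^{1/2}$-regularity from the commutator hypothesis) and of the Duhamel identity as an operator equality on the right vectors—both require careful movement within the rigged-space scale $D(N)\subset D(N^{1/2})\subset\mathcal{H}\subset D(N^{1/2})^*$ determined by $N$. The form bound \eqref{8.9a} ensures $A$ is bounded $D(N^{1/2})\to D(N^{1/2})^*$, and the commutator bound \eqref{8.10} ensures $[A,N]$ also enjoys such a bound, which is exactly what the $u^{-1/2}$ integrability in Step~5 exploits.
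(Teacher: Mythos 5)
The paper does not prove this theorem --- Remark~3 following it simply cites Nelson's original paper and Reed--Simon, Theorem~X.36 --- so your proof cannot be compared against an in-text argument. That said, your overall strategy (regularize by $e^{-tN}$, write a Duhamel identity comparing $e^{-tN}Ae^{-tN}$ with $Ae^{-2tN}$, bound the commutator term by the polarized form of \eqref{8.10} together with $\|N^{1/2}e^{-uN}\|\le Cu^{-1/2}$, and exploit the integrability of $\int_0^t ds/\sqrt{s(2t-s)}$) is precisely the standard route, and Steps~2--6 are essentially sound modulo two harmless sign slips: under the paper's physicist convention, $B^*\psi=i\mu\psi$ gives $\langle A\varphi,\psi\rangle=+i\mu\langle\varphi,\psi\rangle$, and the Duhamel identity carries a $+$ sign, since $\frac{d}{ds}e^{-sN}Ae^{-(2t-s)N}=e^{-sN}[A,N]e^{-(2t-s)N}$ with $[A,N]=AN-NA$.

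Step~1, however, contains a genuine gap, and the repair you suggest is the wrong one. The polarized form of \eqref{8.9a} gives $|\langle\varphi,A\eta\rangle|\le c_1'\|N^{1/2}\varphi\|\,\|N^{1/2}\eta\|$ only for $\varphi,\eta\in D(N)$; to pass from $\langle A\varphi_n,\psi\rangle$ to $\langle A\eta,\psi\rangle$ along a sequence $\varphi_n\to\eta$ in $N$-graph norm, you must pair against $\psi$, and a priori $\psi$ is only in $D(B^*)$ --- it need not lie in $D(N^{1/2})$. The ``bootstrap'' you gesture at does not close: the identity $\mu\|\psi_t\|^2=\operatorname{Im}\int_0^t\cdots$ controls $\|\psi_t\|$, not $\|N^{1/2}\psi_t\|$, and nothing in the hypotheses forces an eigenvector of $B^*$ to have $N^{1/2}$-regularity. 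The correct (and much simpler) repair bypasses \eqref{8.9a} entirely. Since $N$ is self-adjoint, $D(N)$ with the graph norm is a Banach space; since $A$ is symmetric (hence closable) and $D(N)\subset D(A)$, the closed graph theorem yields an \emph{operator} bound $\|A\varphi\|\le C\|N\varphi\|$ on $D(N)$. This makes the extension trivial --- $\varphi_n\to\eta$ in $N$-graph norm forces $A\varphi_n\to A\eta$ in $\calH$ --- and, more to the point, shows that $A\restriction\calD$ and $A\restriction D(N)$ have the same closure, hence the same adjoint, so the eigenvalue relation automatically holds against all of $D(N)$. You should also note one technical refinement in Step~4: $[A,N]e^{-(2t-s)N}\psi$ need not exist as a vector (it requires $Ae^{-(2t-s)N}\psi\in D(N)$, which is not given), so the Duhamel identity must be read in the weak sense, differentiating the scalar function $s\mapsto\langle e^{-sN}\chi, Ae^{-(2t-s)N}\psi\rangle$ whose derivative is exactly the sesquilinear form $C(\varphi,\eta)=\langle A\varphi,N\eta\rangle-\langle N\varphi,A\eta\rangle$ evaluated at $(e^{-sN}\chi,e^{-(2t-s)N}\psi)$; that anti-Hermitian form is precisely what \eqref{8.10} bounds after polarization, so the estimate you write goes through once the identity is interpreted this way.
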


\begin{remarks} 1.  The name comes from the fact that $\jap{A\psi,N\varphi} - \jap{N\psi,A\varphi} = \jap{\psi, [N,A] \varphi}$ if $N\varphi \in D(A)$ and $A\varphi \in D(N)$.

2.  Nelson \cite{NelsonCommutator} was motivated by Glimm--Jaffe \cite{GJComm} which also required bounds on $[N,[N,A]]$ which would not apply to the Faris--Lavine choices without extra conditions on $V$.

3.  For a proof, see Nelson \cite{NelsonCommutator} or Reed--Simon \cite[Theorem X.36]{RS2}.
\end{remarks}

To illustrate the use of this theorem, here is a special case of the Faris--Lavine theorem (see Faris--Lavine \cite{FL} or Reed--Simon \cite[Theorem X.38]{RS2} for the full theorem) that gives a $V(x) \ge -x^2$ type of result:

\begin{theorem} [Faris--Lavine \cite{FL}] \lb{T8.8}  Let $V(x)\in L^2_{loc}(\bbR^\nu)$ and obey:
\begin{equation}\label{8.11}
  V(x) \ge -cx^2-d
\end{equation}
Then $-\Delta+V$ is esa--$\nu$.
\end{theorem}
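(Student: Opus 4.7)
The plan is to apply Nelson's commutator theorem (Theorem \ref{T8.7}) with $A = -\Delta + V$ acting on $C_0^\infty(\bbR^\nu)$ and a comparison operator $N$ chosen to dominate the negative part $-cx^2 - d$ that appears in \eqref{8.11}. After shifting constants we may assume $d \ge 1$. Define on $C_0^\infty(\bbR^\nu)$
\[ N_0 = -\Delta + V + 2cx^2 + 2d. \]
By \eqref{8.11}, $V + cx^2 + d \ge 0$, so the potential of $N_0$ satisfies $V + 2cx^2 + 2d \ge cx^2 + d \ge 1$, and $N_0 \ge \bdone$ as a quadratic form. The potential of $N_0$ is nonnegative and lies in $L^2_{\loc}$, and for such operators $-\Delta + W$ is esa--$\nu$ (this is where Kato's inequality from Section \ref{s9} enters as a black box). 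Let $N$ denote the closure of $N_0$; then $N$ is self--adjoint, $N \ge \bdone$, and $C_0^\infty(\bbR^\nu)$ is an operator core for $N$.

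Next I would verify the two hypotheses of Theorem \ref{T8.7}. For the relative form bound, the identity $A = N - (2cx^2 + 2d)$ on $C_0^\infty$ gives
\[ \jap{\varphi,A\varphi} = \jap{\varphi,N\varphi} - 2c\norm{x\varphi}^2 - 2d\norm{\varphi}^2, \]
and substituting $V \ge -cx^2 - d$ directly yields the lower bound $\jap{\varphi,N\varphi} \ge \norm{\nabla\varphi}^2 + c\norm{x\varphi}^2 + d\norm{\varphi}^2$, so in particular $\norm{\varphi}^2$, $\norm{\nabla\varphi}^2$, and $c\norm{x\varphi}^2$ are each bounded by $\jap{\varphi,N\varphi}$. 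Combining these shows $|\jap{\varphi,A\varphi}| \le c_1\jap{\varphi,N\varphi}$. For the commutator bound, the $V$--dependent parts cancel and a direct computation gives
\[ [N,A] = [2cx^2,-\Delta] = 2c\bigl(2\nu\bdone + 4\,x\cdot\nabla\bigr), \]
which is formally anti--self--adjoint of first order. Hence, for $\varphi \in C_0^\infty(\bbR^\nu)$,
\[ |\jap{A\varphi,N\varphi} - \jap{N\varphi,A\varphi}| = |\jap{\varphi,[N,A]\varphi}| \le 4c\nu\norm{\varphi}^2 + 8c\norm{x\varphi}\norm{\nabla\varphi}, \]
and the arithmetic--geometric mean inequality applied to the product $\norm{x\varphi}\norm{\nabla\varphi}$, combined with the three basic bounds above, delivers \eqref{8.10} for some constant $c_2$. (For general $\varphi \in D(N)$ one extends by density using the fact that $C_0^\infty(\bbR^\nu)$ is a core for $N$.) Nelson's theorem then concludes that $A$ is esa on the core $C_0^\infty(\bbR^\nu)$ of $N$, which is exactly the statement of Theorem \ref{T8.8}.

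The main obstacle is the very first step, namely esa--$\nu$ of $N_0 = -\Delta + W$ for the nonnegative $L^2_{\loc}$ potential $W = V + 2cx^2 + 2d$. The algebraic verification of Nelson's hypotheses is entirely mechanical; the substance of the argument is packaged into this esa result for nonnegative potentials, which is strictly easier than Theorem \ref{T8.8} (since the sign causing all the difficulty in \eqref{8.11} has been absorbed into the quadratic comparison) but is itself non--trivial and rests on Kato's inequality, to be developed in Section \ref{s9}. The strategy, then, is to convert the allowed quadratic decrease of $V$ into a positive quadratic shift that yields a well--behaved self--adjoint $N$, and to use the Kato--inequality esa theorem for nonnegative potentials together with Nelson's commutator theorem as the glue.
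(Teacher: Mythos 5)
Your proof reproduces the paper's strategy almost exactly: take $N = -\Delta + V + (\text{positive quadratic})$, invoke Theorem~\ref{T9.1} to get $N$ self--adjoint with $C_0^\infty$ as core, and then feed $A$ and $N$ into Nelson's commutator theorem (Theorem~\ref{T8.7}). Your verifications of the form bound and the commutator bound are carried out with direct Cauchy--Schwarz/AM--GM estimates rather than the paper's cleaner operator inequalities $\pm A \le N$ and $N \ge \pm(x\cdot p + p\cdot x)$, but that is a cosmetic difference, not a different route.

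There is, however, one genuine gap. Theorem~\ref{T8.7} has the hypothesis $D(N) \subset D(A)$, where $A$ must be the closure of $-\Delta+V$ from $C_0^\infty$, and you never verify it. It is not automatic: knowing $\varphi \in D(N)$ only tells you $N\varphi \in L^2$, and without knowing $x^2\varphi \in L^2$ separately you cannot make sense of $A\varphi = N\varphi - (2cx^2+2d)\varphi$ or even write down \eqref{8.10} for $\varphi \in D(N)$. Your remark ``for general $\varphi \in D(N)$ one extends by density'' does not close this: to carry a bound from $C_0^\infty$ up to $D(N)$ along the $N$--graph norm you would need a \emph{norm} estimate of the type $\norm{A\varphi} \le a\norm{N\varphi} + b\norm{\varphi}$ on $C_0^\infty$, whereas you only establish a \emph{form} bound. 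This is exactly the point the paper settles via a quadratic estimate $\norm{x^2\varphi} \le a\norm{N\varphi} + b\norm{\varphi}$ on $C_0^\infty$, obtained by the double--commutator method of \eqref{3.28Q}: since $[x_j,[x_j,N]] = -2$, one has
\begin{equation*}
x^2 N + N x^2 = 2\sum_j x_j N x_j - 2\nu \ge 2x^4 - 2\nu,
\end{equation*}
using $N \ge x^2$ (which your lower bound on the potential of $N$ gives); pairing against $\varphi$ yields $\norm{x^2\varphi}^2 \le \operatorname{Re}\jap{x^2\varphi,N\varphi} + \nu\norm{\varphi}^2 \le \norm{x^2\varphi}\norm{N\varphi} + \nu\norm{\varphi}^2$, hence $\norm{x^2\varphi} \le \norm{N\varphi} + \sqrt{\nu}\norm{\varphi}$. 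With this inserted before invoking Theorem~\ref{T8.7}, your argument is complete and coincides with the paper's proof.
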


\begin{proof} By a simple argument, we can assume $c=1, d=0$.  Let $N=-\Delta+V+2x^2$ by which we mean the closure of that sum on $C_0^\infty(\bbR^\nu)$.  Let $A$ be the operator closure of  $-\Delta+V\restriction C_0^\infty(\bbR^\nu)$.  By Theorem \ref{T9.1} below, $N$ is self--adjoint.  $N-A = 2x^2 \ge 0$ while $N+A = -\Delta + (2V(x)+2x^2) \ge 0$ so $\pm A \le N$ which is \eqref{8.9}.

The same method that proved \eqref{3.28Q} implies an estimate $\norm{x^2\varphi} \le a\norm{N\varphi}$ on $C_0^\infty(\bbR^\nu)$ so $\varphi \in D(N) \Rightarrow x^2\varphi \in L^2 \Rightarrow \varphi \in D(A)$.  Thus $D(N) \subset D(A)$.

By \eqref{8.11} $N \ge -\Delta + x^2 \ge \pm(x\cdot p+p\cdot x)$ (by completing the square).  Note that
\begin{align*}
  i[N,-\Delta+V] &= i[2x^2,-\Delta+V] \\
                 &= 2i[x^2,p^2] \\
                 &= -4(x\cdot p+p\cdot x)
\end{align*}
so $|\jap{N\varphi,A\varphi}-\jap{A\varphi,N\varphi}| \le c\jap{\varphi,N\varphi}$.  We can apply Theorem \ref{T8.7} to see that $-\Delta+V$ is esa--$\nu$.
\end{proof}

\section{Self--Adjointness, III: Kato's Inequality} \lb{s9}

This section will discuss a self--adjointness method that appeared in Kato \cite{KI1} based on a remarkable distributional inequality.  Its consequences is a subject to which Kato returned to often with at least seven additional papers \cite{KI2, KI3, KIComp1, KIComp2, KI4, KIComp3, KILp}.  It is also his work that most intersected my own -- I motivated his initial paper and it, in turn, motivated several of my later papers.  Throughout this section, we'll use quadratic form ideas that we'll only formally discuss in Section \ref{s10} (see \cite[Section 7.5]{OT}).

To explain the background, recall that in Section \ref{s7}, we defined $p$ to be $\nu$--canonical ($\nu$ is dimension) if $p=2$  for $\nu \le 3$, $p > 2$ for $\nu = 4$ and $p = \nu/2$ for $\nu \ge 5$.  For now, we focus on $\nu \ge 5$ so that $p=\nu/2$.  As we saw, if $V \in L^p(\bbR^\nu)+L^\infty(\bbR^\nu)$, then $-\Delta+V$ is esa--$\nu$.  The example $V(x) - \lambda |x|^{-2}$ for $\lambda$ sufficiently large shows that $p=\nu/2$ is sharp.  That is, for any $2 \le q \le \nu/2$, there is a $V \in L^q(\bbR^\nu)+L^\infty(\bbR^\nu)$, so that $-\Delta+V$ is defined on but not esa on $C_0^\infty(\bbR^\nu)$.

In these counterexamples, though, $V$ is negative.  It was known since the late 1950s (see Section \ref{s8}) that while the negative part of $V$ requires some global hypothesis for esa--$\nu$, the positive part does not (e.g. $-\Delta-x^4$ is not esa--$\nu$ while $-\Delta+x^4$ is esa--$\nu$).  But when I started looking at these issues around 1970, there was presumption that for local singularities, there was no difference between the positive and negative parts.  In retrospect, this shouldn't have been the belief!  After all, as we've seen (see the Remarks after Proposition \ref{P7.7}), limit point--limit circle methods show that if $V(x) = |x|^{-\alpha}$ with $\alpha < \nu/2$ (to make $V \in L^2_{loc}$ so that $-\Delta+V$ is defined on $C_0^\infty(\bbR^\nu$)) then $-\Delta+V$ is esa--$\nu$ although $-\Delta-V$ is not.  (Limit point--limit circle methods apply for $-\Delta+V$ for any $\alpha$ if we look at $C_{00}^\infty(\bbR^\nu)$ but then only when $\alpha < \nu/2$, we can extend the conclusion to $C_0^\infty(\bbR^\nu)$.)  This example shows that the conventional wisdom was faulty but people didn't think about separate local conditions on
\begin{equation}\label{9.1}
  V_+(x) \equiv \max(V(x),0); \qquad V_-(x) = \max(-V(x),0)
\end{equation}

Kato's result shattered the then conventional wisdom:

\begin{theorem} [Kato \cite{KI1}]  \lb{T9.1} If $V \ge 0$ and $V \in L^2_{loc}(\bbR^\nu)$, then $-\Delta+V$ is esa--$\nu$.
\end{theorem}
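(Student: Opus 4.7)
The plan is to reduce essential self-adjointness via von Neumann's criterion to showing that any $\varphi \in L^2(\bbR^\nu)$ satisfying $(-\Delta + V + 1)\varphi = 0$ in the distributional sense must vanish. Since $-\Delta+V$ is symmetric and nonnegative on $C_0^\infty(\bbR^\nu)$, this is equivalent to $\ker((-\Delta+V)^*+1) = \{0\}$. For such a $\varphi$, the hypotheses $V \in L^2_{loc}$ and $\varphi \in L^2_{loc}$ give $V\varphi \in L^1_{loc}$, so the equation upgrades to $\Delta\varphi = (V+1)\varphi \in L^1_{loc}(\bbR^\nu)$.

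The key tool I would establish first is \emph{Kato's distributional inequality}: for any complex-valued $u \in L^1_{loc}(\bbR^\nu)$ with $\Delta u \in L^1_{loc}$,
$$\Delta|u| \ge \Real\!\left(\overline{\mathrm{sgn}(u)} \, \Delta u\right)$$
as distributions, where $\mathrm{sgn}(u) = u/|u|$ where $u \ne 0$ and $0$ elsewhere. I would prove this in two regularization layers: first mollify $u$ to a smooth $u_n = j_n \ast u$, then smooth the modulus by $|u_n|_\epsilon := (|u_n|^2 + \epsilon^2)^{1/2}$. A direct computation gives
$$\Delta|u_n|_\epsilon = \frac{\Real(\bar u_n \Delta u_n)}{|u_n|_\epsilon} + \frac{|\nabla u_n|^2\,|u_n|_\epsilon^2 - [\Real(\bar u_n \nabla u_n)]^2}{|u_n|_\epsilon^{3}},$$
and the second term is nonnegative by Cauchy--Schwarz, hence can be dropped. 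One then passes to the limits $n \to \infty$ (using $L^1_{loc}$-convergence of $u_n$ and $\Delta u_n$ together with the uniform bound $|\bar u_n/|u_n|_\epsilon| \le 1$) and $\epsilon \downarrow 0$ to reach the claimed distributional inequality.

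Applied to $\varphi$, this yields $\Delta|\varphi| \ge (V+1)|\varphi| \ge |\varphi|$ distributionally, i.e.\ $(-\Delta+1)|\varphi| \le 0$, with $|\varphi|\ge 0$ and $|\varphi| \in L^2$. To conclude $\varphi=0$ I would exploit positivity preservation of the resolvent $(-\Delta+1)^{-1}$ on $L^2$, whose integral kernel (the Bessel potential) is strictly positive. Setting $\psi = (-\Delta+1)^{-1}|\varphi| \in H^2$, one has $\psi \ge 0$; approximating $\psi$ in $H^2$-norm by a sequence of nonnegative $C_0^\infty$ test functions (mollify, then multiply by an expanding nonnegative cutoff, both operations preserving positivity and $H^2$-convergence) and testing the distributional inequality against them gives $\langle |\varphi|, (-\Delta+1)\psi\rangle \le 0$, i.e.\ $\langle |\varphi|, |\varphi|\rangle \le 0$, whence $\varphi=0$.

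The main obstacle is Kato's inequality itself: the polar factor $\overline{\mathrm{sgn}(u)}$ has no a priori distributional meaning for merely $L^1_{loc}$ complex functions, and the double regularization is precisely what is needed to make sense of it while the Cauchy--Schwarz step is exactly what preserves the correct sign in the limit. A secondary technical point is producing test functions that are simultaneously nonnegative, compactly supported, and $H^2$-close to $\psi$ in the concluding step, which must be arranged so that neither positivity nor the $H^2$-topology is spoiled by the cutoff.
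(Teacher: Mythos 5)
Your proposal is correct and follows essentially the same route as the paper's (and Kato's original) proof: reduce to showing $\ker((-\Delta+V)^*+1)=\{0\}$, derive the distributional inequality $\Delta|u|\ge\Real(\overline{\mathrm{sgn}(u)}\,\Delta u)$ by the same double smoothing of $u$ and of the modulus, and exploit positivity of $(-\Delta+1)^{-1}$ to force $|\varphi|=0$. The only differences are organizational (you pass $n\to\infty$ before $\epsilon\downarrow 0$, which cleanly sidesteps the discontinuity of $\mathrm{sgn}$ on the zero set, and you test against $H^2$-approximants of $(-\Delta+1)^{-1}|\varphi|$ rather than mollifying $|\varphi|$ directly), neither of which changes the substance.
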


\begin{remark}  As we'll see later, this extends, for example, to $V_+\in L^2_{loc}, V_-\in L^p_{unif}$ with $p$ $\nu$--canonical
\end{remark}

Kato's result was actually a conjecture that I made on the basis of a slightly weaker result that I had proven:

\begin{theorem} [Simon \cite{SimonPosPot}] \lb{T9.2} If $V \ge 0$ and $V \in L^2(\bbR^\nu,e^{-cx^2}\,d^\nu x)$ for some $c > 0$, then $-\Delta+V$ is esa--$\nu$.
\end{theorem}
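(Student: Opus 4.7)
The strategy is to verify the standard deficiency-index criterion: since $A:=(-\Delta+V)\restriction C_0^\infty(\mathbb{R}^\nu)$ is symmetric and non-negative (because $V\ge 0$), essential self-adjointness on $C_0^\infty$ is equivalent to $\ker(A^*+1)=\{0\}$. So I would suppose $\psi\in L^2(\mathbb{R}^\nu)$ satisfies $(-\Delta+V+1)\psi=0$ in the distributional sense and aim to deduce $\psi\equiv 0$. Since the hypothesis $V\in L^2(e^{-cx^2}d^\nu x)$ forces $V\in L^2_{\mathrm{loc}}$, the product $V\psi$ lies in $L^1_{\mathrm{loc}}$, so $-\Delta\psi=-(V+1)\psi\in L^1_{\mathrm{loc}}$; standard elliptic regularity (iterated with Sobolev embedding) then gives $\psi\in H^2_{\mathrm{loc}}\cap L^\infty_{\mathrm{loc}}$.

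The core step is to test the distributional equation against $g^2\bar\psi$ with the Gaussian weight $g_\beta(x)=e^{-\beta|x|^2/2}$, choosing $\beta>c$ so that $\int V^2 g_\beta^2\,dx<\infty$; this is precisely what the Gaussian hypothesis was designed to guarantee. Using $V g_\beta\in L^2$, the local boundedness of $\psi$, and $\psi\in L^2$, one checks by Cauchy--Schwarz that every integral below converges. After integration by parts (justified by the fast decay of $g_\beta$ and the local regularity of $\psi$), the standard manipulation
\begin{equation*}
\operatorname{Re}\int g_\beta^2\bar\psi(-\Delta\psi)\,dx=\int\bigl[|\nabla(g_\beta\psi)|^2-|\nabla g_\beta|^2|\psi|^2\bigr]dx
\end{equation*}
yields the clean identity
\begin{equation*}
\int\bigl[|\nabla(g_\beta\psi)|^2+(V+1)g_\beta^2|\psi|^2\bigr]dx=\int|\nabla g_\beta|^2|\psi|^2\,dx=\beta^2\int|x|^2 g_\beta^2|\psi|^2\,dx.
\end{equation*}
Dropping the non-negative gradient and $V$ contributions gives the master inequality $\int(1-\beta^2|x|^2)g_\beta^2|\psi|^2\,dx\le 0$.

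To finish, I would exploit the freedom to replace $g_\beta$ by a translated family $g_\beta^{(x_0)}(x)=e^{-\beta|x-x_0|^2/2}$ centred at arbitrary $x_0\in\mathbb{R}^\nu$, noting that the hypothesis is translation-stable up to an $x_0$-dependent multiplicative constant (since $e^{-c|x-x_0|^2}\le C(x_0)e^{-c'|x|^2}$ for any $c'<c$), so that the analogous identity holds for every $x_0$ with the same $\beta$. The plan is then to combine the resulting family of integral inequalities, varying $x_0$ and (within the allowed range) $\beta$, to force $\psi\equiv 0$. The principal obstacle is converting an integrated weighted inequality, whose positive side contains the term $\beta^2|x-x_0|^2 g^{(x_0)}{}^2|\psi|^2$, into a pointwise vanishing statement: here the sharpness of the Gaussian decay hypothesis on $V$ is crucial, because it is exactly what permits $\beta$ to be taken close to $c$, making the factor $(1-\beta^2|x-x_0|^2)$ positive on a definite ball around each $x_0$; a Gronwall- or unique-continuation-style absorption across overlapping balls then closes the argument.
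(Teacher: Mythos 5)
Your integration-by-parts setup is clean and the identity you derive is correct, but the final step is not a proof, and I don't believe it can be completed in the form you sketch.  After dropping the non-negative terms you are left with, for every $x_0$ and every admissible $\beta$,
\begin{equation*}
\int\bigl(1-\beta^2|x-x_0|^2\bigr)e^{-\beta|x-x_0|^2}|\psi(x)|^2\,dx \le 0,
\end{equation*}
i.e.\ $h_\beta*|\psi|^2\le 0$ pointwise, where $h_\beta(y)=(1-\beta^2|y|^2)e^{-\beta|y|^2}$. Integrating this in $x_0$ gives $\bigl(\int h_\beta\bigr)\|\psi\|_2^2\le 0$, and one computes $\int h_\beta = (\pi/\beta)^{\nu/2}(1-\nu\beta/2)$.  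So the family of translated inequalities forces $\psi\equiv 0$ precisely when $\beta<2/\nu$.  But convergence of the test integrals forces $\beta>c$ (you need $e^{-\beta|x-x_0|^2}\lesssim_{x_0}e^{-c|x|^2}$, which requires $\beta>c$ strictly, not $\beta$ ``close to $c$'').  Since $c>0$ is arbitrary in the statement, for $c\ge 2/\nu$ there is no admissible $\beta$ for which your master inequality has any bite: it is then consistent with $\psi\ne 0$, and indeed for such $\beta$ the kernel $h_\beta$ is ``mostly negative'' and the inequality carries essentially no information.  The appeal to a ``Gronwall- or unique-continuation-style absorption across overlapping balls'' is a hope, not an argument; each ball inequality only bounds a near-region integral by a far-region one, both of which can be positive, and there is no mechanism in your calculation to iterate or absorb.  (A secondary, more technical issue: justifying that $\int Vg_\beta^2|\psi|^2\,dx$ converges requires control on $\int g_\beta^2|\psi|^4$, i.e.\ some weighted $L^4$ bound on $\psi$, which you have not established; local boundedness alone does not give a global weight.)

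For contrast, the paper's proof is operator-theoretic, not a weighted energy estimate.  It passes to the ground-state representation of the harmonic oscillator $-\Delta+c^2|x|^2$, so that the hypothesis $V\in L^2(e^{-c|x|^2}\,d^\nu x)$ becomes exactly $V\in L^2$ of a probability measure, and invokes the Segal/Simon--H{\o}egh~Krohn hypercontractive-semigroup criterion to get essential self-adjointness of $-\Delta+V+c^2|x|^2$.  It then removes the confining $c^2|x|^2$ term by a quadratic (double-commutator) estimate and W\"ust's relative-bound-one theorem -- Konrady's trick.  That route avoids entirely the obstacle your approach runs into, because the harmonic-oscillator term does the Gaussian localization for you at the operator level rather than through a weight in a test function.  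If you want to salvage a direct argument, the cleanest fix is simply to observe that $V\in L^2(e^{-cx^2}\,d^\nu x)$ already implies $V\in L^2_{\mathrm{loc}}$, so Theorem 9.1 (Kato's inequality argument) applies and gives the conclusion without any Gaussian weights at all -- but that is a different proof, not a completion of yours.
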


Of course this covers pretty wild growth at infinity but Theorem \ref{T9.1} is the definitive result since one needs that $V \in L^2_{loc}(\bbR^\nu)$ for $-\Delta+V$ to be defined on all functions in $C_0^\infty(\bbR^\nu)$.

I found Theorem \ref{T9.2} because I was also working at the time in constructive quantum field theory which was then studying the simplest interacting field models $\varphi^4_2$ and $P(\varphi)_2$ (the subscript $2$ means two space--time dimensions).  To start with, one wanted to define $H_0+V$ where $H_0$ was a positive mass free quantum field Hamiltonian and $V$ a spatially cutoff interaction.  Nelson \cite{NelsonHyperC1} realized that one could view $H_0$ as an infinite sum of independent harmonic oscillators (shifted to have ground state energy $0$) which he analyzed as follows:  For a single variable oscillator on $L^2(\bbR,dx)$, there is a unit vector $\Omega_0$ with $H_0\Omega_0=0$.  The map $Uf \mapsto f\Omega_0^{-1}$ maps $L^2(\bbR,dx)$ unitarily to $L^2(\bbR,\Omega_0^2 \, dx)$ and Nelson analyzed $A_0=UH_0U^{-1}$ on $L^2(\bbR,\Omega^2\,dx)$ and found (with  $d\mu=\Omega^2\,dx$ a probability measure on $X=\bbR$) that
\begin{align}
  \norm{e^{-tA_0}\varphi}_p &\le \norm{\varphi}_p \qquad \textrm{all } \varphi \in L^p(X,d\mu), \textrm{ all } t>0 \lb{9.2} \\
  \norm{e^{-TA_0}\varphi}_4 &\le B\norm{\varphi}_2 \qquad T \textrm{ large enough} \lb{9.3}
\end{align}
By taking products, he got similar bounds on the infinite dimensional spaces of the field theory (he was restricted to a field theory with a periodic boundary condition but Glimm \cite{GlimmHyperC} did the full theory).  Eventually, semigroups, $e^{-tA_0}$, obeying \eqref{9.2}/\eqref{9.3} were called hypercontractive semigroups.  \cite[Section 6.6]{HA} has a lot on the general theory and the history.

Nelson also proved that the $V$ of the cutoff field theory wasn't bounded below but it did obey
\begin{equation}\label{9.4}
     V \in L^p(X,d\mu),\, p<\infty \textrm{ and } e^{-sV} \in L^1(X,d\mu), \textrm{ all } s>0
\end{equation}
He also showed that \eqref{9.2}, \eqref{9.3}, \eqref{9.4} $\Rightarrow A_0+V$ is bounded from below on $D(A_0)\cap D(V)$.

Segal \cite{SegalHyperC} then proved that these same hypotheses imply that $A_0+V$ is esa on $D(A_0)\cap D(V)$ (for the field theory case Glimm--Jaffe \cite{GJHyperC} and Rosen \cite{RosenHyperC} using Nelson's estimates but additional properties had earlier proven esa for this specific situation).

Simon--{H{\o}egh} Krohn \cite{SHK} systematized these results and showed that if $V \ge 0$, one can replace $V \in L^p$ for some $p>2$ by $V \in L^2(X,d\mu)$.  The Simon--{H{\o}egh Krohn paper was written in 1970.  In 1972, I realized that by looking at $-\Delta+x^2$ on $L^2(\bbR^\nu)$, one could prove that if $V \ge 0$ and $V \in L^2(\bbR^\nu,e^{-x^2}\,dx)$, then $-\Delta+V+x^2$ is esa--$\nu$.  Arguments like those that proved \eqref{3.28Q}, using that $[x_i,[x_i,-\Delta+V+x^2]]$ is a constant, show that one has that
\begin{equation}\label{9.5}
  \norm{x^2\varphi}^2 \le \norm{(-\Delta+V+x^2)\varphi}^2 + b \norm{\varphi}^2
\end{equation}
so by W\"{u}st's theorem (see the discussion around \eqref{7.11}), one sees that $-\Delta+V=-\Delta+V+x^2 - x^2$ is esa--$\nu$.  This idea of adding an operator $C$ to $A+B$ so that $C$ is $A+C+B$ bounded with relative bound one so one can use W\"{u}st theorem is called Konrady's trick after Konrady \cite{Konrady}

Within a few weeks of my sending out a preprint with Theorem \ref{T9.2} and the conjecture of Theorem \ref{T9.1}, I received a letter from Kato proving the conjecture by what appeared to be a totally different method.  Over the next few years, I spent some effort understanding the connection between Kato's work and semigroups.  I will begin the discussion here by sketching a semigroup proof of Theorem \ref{T9.1}, then give Kato's proof of this theorem, then discuss semigroup aspects of Kato's inequality and finally discuss some other aspects of Kato's paper \cite{KI1}.

After the smoke cleared, it was apparent that my failure to get the full Theorem \ref{T9.1} in 1972 was due to my focusing on $L^p$ properties of semigroups on probability measure spaces rather than on $L^p(\bbR^\nu,d^\nu x)$.  As a warmup to the semigroup proof of Theorem \ref{T9.1}, we prove (we use quadratic form ideas only discussed in Section \ref{s10})

\begin{theorem} [Simon \cite{SimonMaxMin}] \lb{T9.3} Let $V \ge 0$ be in $L^1_{loc}(\bbR^\nu,d^\nu x)$ and let $a \in L^2_{loc}(\bbR^\nu,d^\nu x)$ be an $\bbR^\nu$ valued function.  Let $Q(D_j^2) = \{\varphi \in L^2(\bbR^\nu,d^\nu x)\,|\, (\nabla_j-ia_j)\varphi \in L^2(\bbR^\nu,d^\nu x)\}$ with quadratic form $\jap{\varphi,-D_j^2\varphi}= \norm{(\nabla_j-ia_j)\varphi}^2$.  Let $h$ be the closed form sum $\sum_{j=1}^{\nu} -D_j^2+V$.  Then $C_0^\infty(\bbR^\nu)$ is a form core for $h$.
\end{theorem}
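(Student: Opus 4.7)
The plan is to establish density of $C_0^\infty(\bbR^\nu)$ in $Q(h)$ with respect to the form norm $\norm{\cdot}_h^2 = h(\cdot,\cdot) + \norm{\cdot}_2^2$ via three successive approximations: spatial cutoff, value truncation, and mollification. Before starting, I would verify that $h$ is a closed nonnegative form on the natural domain $Q(h) = \{\varphi \in L^2 : D_j\varphi \in L^2 \text{ for all } j,\; \sqrt{V}\varphi \in L^2\}$, since $h$ is a sum of the closed form $\int V|\cdot|^2$ and the forms $\norm{D_j\cdot}^2$, each of which is closed by a standard argument using the distributional interpretation of $D_j$ (meaningful because $a_j \in L^2_{loc}$ and $\varphi \in L^2 \subset L^2_{loc}$ imply $a_j\varphi \in L^1_{loc}$).

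\emph{Spatial cutoff.} Fix $\chi \in C_0^\infty(\bbR^\nu)$ with $\chi \equiv 1$ on $\{|x| \le 1\}$, $0 \le \chi \le 1$, and set $\chi_R(x) = \chi(x/R)$. For $\varphi \in Q(h)$, the Leibniz identity $D_j(\chi_R\varphi) = \chi_R D_j\varphi + (\partial_j\chi_R)\varphi$ gives convergence of the first term to $D_j\varphi$ in $L^2$ by dominated convergence, while $\norm{(\partial_j\chi_R)\varphi}_2 \le R^{-1}\norm{\nabla\chi}_\infty \norm{\varphi}_2 \to 0$. The bound $V|\chi_R\varphi|^2 \le V|\varphi|^2 \in L^1$ and pointwise convergence yield convergence of the potential term. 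So I may assume $\varphi$ has compact support.

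\emph{Value truncation.} Choose smooth $g_n : \bbC \to \bbC$ with $g_n(z) = z$ for $|z| \le n$, $|g_n(z)| \le n+1$, uniformly $1$-Lipschitz, and $g_n \to \mathrm{id}$ pointwise. A chain rule for Lipschitz maps composed with magnetic-Sobolev functions would give $g_n \circ \varphi \in Q(h)$ with $|D_j(g_n \circ \varphi)| \le |D_j\varphi|$ pointwise a.e., hence $g_n \circ \varphi \to \varphi$ in $\norm{\cdot}_h$ by dominated convergence. Thus I may further assume $\varphi$ is bounded with compact support; in particular $a_j\varphi \in L^2$, and so $\nabla_j\varphi = D_j\varphi + ia_j\varphi \in L^2$, i.e.\ $\varphi \in H^1_c(\bbR^\nu)$. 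Finally, mollifying with $j_\epsilon$ produces $\varphi_\epsilon \in C_0^\infty(\bbR^\nu)$ for small $\epsilon$, with $\varphi_\epsilon \to \varphi$ and $\nabla\varphi_\epsilon \to \nabla\varphi$ in $L^2$, uniformly bounded by $\norm{\varphi}_\infty$, with common compact support $K$. Since $|a_j|^2$ and $V$ are integrable on $K$, dominated convergence (with dominants $4\norm{\varphi}_\infty^2|a_j|^2\bdone_K$ and $4\norm{\varphi}_\infty^2 V\bdone_K$) yields $a_j\varphi_\epsilon \to a_j\varphi$ and $\sqrt{V}\varphi_\epsilon \to \sqrt{V}\varphi$ in $L^2$. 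Combining these with $\nabla_j\varphi_\epsilon \to \nabla_j\varphi$ gives $D_j\varphi_\epsilon \to D_j\varphi$ in $L^2$, hence convergence in form norm.

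The main obstacle is the value-truncation step: the chain rule $D_j(g_n \circ \varphi) = g_n'(\varphi)D_j\varphi$ (interpreted with appropriate Wirtinger derivatives for $\bbC$-valued $g_n$) must be justified when only the magnetic gradient $D_j\varphi$, not $\nabla_j\varphi$, is known to lie in $L^2$. A clean route is to first mollify $\varphi$ in the smooth category (where the chain rule is classical), observe that for smooth $\varphi$ one has $D_j(g_n \circ \varphi) = g_n'(\varphi)\nabla_j\varphi - ia_j g_n(\varphi)$ and rearrange to express this purely in terms of $D_j\varphi$ plus a term $ia_j[g_n(\varphi) - g_n'(\varphi)\varphi]$ that vanishes when $|\varphi| \le n$, then pass to the limit using the commutator estimate between $D_j$ and mollification (whose $L^2$-norm on $\mathrm{supp}\,\varphi$ is controlled by $\norm{a_j}_{L^2(\mathrm{supp}\,\varphi)}$ and an $\epsilon$-dependent modulus). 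Once Lipschitz post-composition is shown to be continuous on $Q(h)$, the remainder of the argument is routine.
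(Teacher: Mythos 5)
Your route --- spatial cutoff, then Lipschitz value truncation, then mollification --- differs from the paper's at exactly the place where you anticipate trouble, and the gap there is real; the fix you sketch does not close it. To pass $\epsilon \to 0$ in the chain-rule identity for $g_n \circ \varphi_\epsilon$ you need $D_j\varphi_\epsilon \to D_j\varphi$ in $L^2$, which amounts to controlling the commutator
\begin{equation*}
\bigl([a_j, k_\epsilon *]\varphi\bigr)(x) = \int k_\epsilon(x-y)\bigl[a_j(x)-a_j(y)\bigr]\varphi(y)\,dy.
\end{equation*}
For $\varphi \in L^\infty$ with compact support this tends to zero in $L^2$ by Minkowski's integral inequality together with $L^2$-continuity of translations of $a_j$ on a neighborhood of the support; but at this stage of your argument you only have $\varphi \in L^2$ with compact support, so $a_j\varphi$ is merely $L^1$ and the commutator has no usable $L^2$ bound. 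In short, the mollification-plus-commutator route to the chain rule requires boundedness of $\varphi$ as an input, while boundedness is precisely the output that value truncation is supposed to produce. The circle does not close.

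The paper (following Simon \cite{SimonMaxMin}) avoids value truncation entirely. Its Steps 1--2 use Kato's ultimate Trotter product formula, together with gauging away one-dimensional magnetic fields ($D_j = e^{i\lambda_j}\partial_j e^{-i\lambda_j}$ with $\partial_j\lambda_j = a_j$), to prove the diamagnetic inequality $|e^{-tH}\varphi| \le e^{t\Delta}|\varphi|$. Since $e^{t\Delta}$ maps $L^2 \to L^\infty$, the set $e^{-H}[L^2]$ --- which is a form core for free, by the spectral theorem --- consists of bounded functions. The cutoff (your first step, the paper's Step 3) then yields a form core of bounded, compactly supported functions, and the final mollification works exactly as you describe, because at that point $a_j\varphi \in L^2$ near the support and the commutator estimate genuinely applies.

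If you want to salvage value truncation rather than appeal to the semigroup, the standard route is to first establish a form-level diamagnetic/Kato inequality, $\varphi \in Q(h) \Rightarrow |\varphi| \in Q(h)$ with $\bigl|\nabla|\varphi|\bigr| \le |D\varphi|$ a.e.; but by the Beurling--Deny criterion (Theorems \ref{T9.5} and \ref{T9.7} in this paper) this is equivalent to positivity preservation of $e^{-tH}$, i.e.\ to the diamagnetic inequality again. So the semigroup input is essentially unavoidable here; the paper simply uses it more directly, obtaining bounded approximants from the heat kernel rather than from post-composition with a Lipschitz truncation.
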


\begin{remarks}  1.  For $a=0$, this result was first proven by Kato \cite{KI2}, although \cite{OT} mistakenly attributes it to Simon.

2.  Kato \cite{KI3} proved this result if $a \in L^2_{loc}$ is replaced by $a \in L^\nu_{loc}$ and he conjectured this theorem.

3.  Since $a_j \in L^2_{loc}$, we have that $a_j\varphi \in L^1_{loc}$ so $(\nabla_j-ia_j)\varphi$ is a well defined distribution and it makes sense to say that it is in $L^2$.

4.  Just as $V \in L^2_\loc$ is necessary for $H\varphi$ to lie in $L^2$ for all $\varphi \in C_0^\infty(\bbR^\nu)$, $V \in L^1_{loc}$ and $a \in L^2_{loc}$ are necessary for $C_0^\infty \subset V_h$.

5.  There is an analog of Theorem \ref{T9.1} with magnetic field.  If $V \ge 0$, one needs to have $V \in L^2_{loc}, \, a \in L^4_{loc}$ and $\nabla\cdot\overrightarrow{a} \in L^2_{loc}$ for $H$ to be defined as an operator on $C_0^\infty$.  It is a theorem of Leinfelder--Simader \cite{LS} that this is also sufficient for esa--$\nu$ (see \cite[Section 1.4]{CFKS} for a proof along the lines discussed below for the current theorem).

6.  Kato \cite{KI2} has a lovely way of interpreting that $C_0^\infty$ is a form core.  A natural maximal operator domain for the operator associated with $h$ is $H_{max}$ defined on (here $V_h=Q(V)\cap\bigcap_{j=1}^\nu Q(D_j^\nu)$)
\begin{equation}\label{9.6}
  D(H_{max}) = V_h \cap \{\varphi \,|\, \sum_{j=1}^{\nu} -D_j^2\varphi+V\varphi \in L^2(\bbR^\nu)\}
\end{equation}
Since $\varphi \in V_h$, we have that $D_j\varphi \in L^2$ which implies that $a_jD_j\varphi \in L^1_{loc}$ and $\nabla_j D_j\varphi$ makes sense as a distribution.  Also $\varphi \in V_h \Rightarrow V^{1/2}\varphi \in L^2 \Rightarrow V\varphi = V^{1/2}(V^{1/2}\varphi) \in L^1_{loc}$ so $-D_j^2\varphi+V\varphi$ is a well defined distribution.  What Kato shows is that if $H$ is the operator associated to the closed form, $h$, then $H_{max}$ symmetric $\iff H_{max} = H \iff C_0^\infty $ is a form core for $h$.
\end{remarks}

Here is a sketch of a proof of Theorem \ref{T9.3} following \cite{SimonMaxMin}

\textbf{Step 1.} Use Kato's ultimate Trotter product formula of Section \ref{s18} (for $\nu + 1$ rather than $2$ operators, so one needs the result of Kato--Masuda \cite{KM}; we note these results weren't available in 1972 but they are only needed for the case $a \ne 0$) to see that
\begin{equation}\label{9.7}
  |(e^{-tH}\varphi)(x)| \le \left(|e^{t\Delta}|\varphi|\right)(x)
\end{equation}
which is implied by
\begin{align}
  |(e^{-tV}\varphi)(x)| &\le |\varphi|(x) \lb{9.8} \\
  |(e^{tD_j^2}\varphi)(x)| &\le \left(|e^{t\partial^2_j}|\varphi|\right)(x) \lb{9.9}
\end{align}
(We note that \eqref{9.7} is called a diamagnetic inequality; we'll say more about its history below.)

\textbf{Step 2.} This step proves \eqref{9.9}. Since $V \ge 0$, \eqref{9.8} is trivial.  Define
\begin{equation*}
  \lambda_j(x) = \int_{0}^{x_j} a_j(x_1,\dots,x_{j-1},s,x_{j+1},\dots,x_\nu) \, ds
\end{equation*}
so $\partial_j \lambda_j = a_j$ in distributional sense.  One proves that $D_j=e^{i\lambda_j}\partial_j e^{-i\lambda_j}$ in the sense that $\varphi \mapsto e^{-i\lambda_j}$ maps $D(D_j)$ to $D(\partial_j)$ and the unitary map $U:\varphi \mapsto e^{-i\lambda_j}\varphi$ obeys $e^{tD_j^2} = U e^{t\partial_j^2} U^{-1}$.  From this and the fact that $e^{t\partial_j^2}$ is positivity preserving, \eqref{9.9} follows.  From the point of view of physics, we exploit the fact that $1D$ magnetic fields can be ``gauged away''.

\textbf{Step 3.}  Let $g \in C_0^\infty(\bbR^\nu)$.  Then $\varphi \mapsto g\varphi$ maps $Q(H)$ to itself. Moreover, if $g(x) = 1$ for $|x| \le 1$ and $g_n(x)=g(x/n)$, then for any $\varphi \in Q(H)$ we have that $g_n\varphi \to \varphi$ in the form norm of $H$.  Since $V^{1/2}\varphi \in L^2 \Rightarrow gV^{1/2}\varphi \in L^2$ and $\norm{(g_n-1)V^{1/2}\varphi}_2 \to 0$, we see that the $V$ pieces behave as claimed.  Moreover, $D_j(g\varphi) = gD_j\varphi+(\partial_jg)\varphi$ as distributions, so $D_j\varphi,\varphi \in L^2 \Rightarrow D_j(g\varphi), g\varphi \in L^2$ and since $\norm{\partial_j g_n}_\infty \le Cn^{-1}$, we get the required convergence.

\textbf{Step 4.} Since $e^{t\Delta}$ maps $L^2$ to $L^\infty$, by \eqref{9.7}, we have that $e^{-H}[L^2]$, which is a form core for $H$, lies in $L^\infty$.  We conclude by step 3 that $\{\varphi \in Q(H)\,|\, \varphi \in L^\infty$ and $\varphi$ has compact support$\}$ is a core for $H$.

\textbf{Step 5.} We haven't yet used $V \in L^1_{loc}$ in that the above arguments work, for example, if $V(x) = |x|^{-\beta}$ for any $\beta > 0$.  We now want to look at $k*\varphi$ for $k \in C_0^\infty(\bbR^\nu)$ and for $\beta > \nu$ it is easy to see that $\varphi \mapsto k*\varphi$ does not leave $Q(|x|^{-\beta})$ invariant (since such functions must vanish at $x=0$).

If $\varphi$ is bounded with compact support and $V \in L^1_{loc}$ it is easy to see that for $k \in C_0^\infty(\bbR^\nu)$, we have that $V^{1/2}(k*\varphi) \in L^2$ and if $k_n$ is an approximate identity, that $\norm{V^{1/2}(k_n*\varphi)-V^{1/2}\varphi}\to 0$.  Similarly, if $(\partial_j-ia_j)\varphi \in L^2$ and $\varphi$ bounded with compact support, then $\partial_j\varphi \in L^2$ so $D_j(k*\varphi) \in L^2$ and if $k_n$ is an approximate identity, then ${\norm{D_j(k_n*\varphi)-D_j\varphi} \to 0}$.  It follows that $C_0^\infty(\bbR^\nu)$ is a form core concluding this sketch of the proof of Theorem \ref{T9.3}.

Next, we provide our first proof of Theorem \ref{T9.1} following \cite{SimonMaxMin}.  So we have, $V \ge 0$, $V \in L^2_{loc}$ and $a=0$.  By the just proven Theorem \ref{T9.3} and Remark 5 after the statement of the theorem:
\begin{equation}\label{9.10}
  D(H)=\{\varphi \in L^2\,|\, \nabla\varphi \in L^2, V^{1/2}\varphi \in L^2, -\Delta\varphi+V\varphi \in L^2\}
\end{equation}
where $-\Delta\varphi+V\varphi$ is viewed as a sum of distributions.  If $g \in C_0^\infty(\bbR^\nu)$ and $\varphi \in D(H)$, then
\begin{equation*}
  H(g\varphi) = g(H\varphi) - 2\nabla g \cdot \nabla \varphi - (\Delta g) \varphi
\end{equation*}
so $\varphi \mapsto g\varphi$ maps $D(H)$ to itself with $g_n\varphi \to \varphi$ ($g_n(x) = g(x/n); g(x) \equiv 1$ for $x$ near $0$) in graph norm for any $\varphi \in D(H)$.  Moreover, as above, $e^{-tH}[L^2] \subset L^\infty$ and is an operator core for $H$.  It follows that the set of bounded, compact support functions in $D(H)$ is a core.  For any such function, it is easy to see that if $h_n$ is an approximate identity, then $h_n*\varphi \to \varphi$ in graph norm so we conclude esa--$\nu$ completing the first proof of Theorem \ref{T9.1}.

We next turn to Kato's original approach to proving his theorem, Theorem \ref{T9.1}.  He proved

\begin{theorem} [Kato's inequality]  \lb{T9.4} Let $u \in L^1_{loc}(\bbR^\nu)$ be such that its distributional Laplacian, $\Delta u$ is also in $L^1_{loc}(\bbR^\nu)$.  Define
\begin{equation}\label{9.11}
  \mathrm{sgn}(u)(x) = \left\{
                         \begin{array}{ll}
                           \overline{u(x)}/|u(x|), & \hbox{ if } u(x) \ne 0 \\
                           0, & \hbox{ if } u(x) = 0
                         \end{array}
                       \right.
\end{equation}
(so $u\,\mathrm{sgn}(u) = |u|$). Then as distributions
\begin{equation}\label{9.12}
  \Delta |u| \ge \Real\left[\mathrm{sgn}(u) \Delta u\right]
\end{equation}
\end{theorem}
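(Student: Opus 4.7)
The plan is to establish the inequality first for smooth $u$ via a direct calculation on a regularized modulus, and then remove the smoothness hypothesis by a mollification argument while keeping the $\epsilon$-regularization as a buffer.

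Step one (smooth case). For $u \in C^\infty$ and $\epsilon > 0$, set $u_\epsilon = (|u|^2+\epsilon^2)^{1/2}$, which is smooth and bounded away from $0$. Computing directly, $u_\epsilon \nabla u_\epsilon = \Real(\bar u \nabla u)$, and differentiating once more,
\begin{equation*}
u_\epsilon \Delta u_\epsilon + |\nabla u_\epsilon|^2 = \Real(\bar u \Delta u) + |\nabla u|^2.
\end{equation*}
The Cauchy--Schwarz inequality applied to $u_\epsilon|\nabla u_\epsilon| = |\Real(\bar u \nabla u)| \le |u|\,|\nabla u| \le u_\epsilon|\nabla u|$ gives $|\nabla u_\epsilon| \le |\nabla u|$, so
\begin{equation*}
\Delta u_\epsilon \ge \Real\!\left(\frac{\bar u \,\Delta u}{u_\epsilon}\right).
\end{equation*}

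Step two (mollification). For general $u$ with $u,\Delta u \in L^1_{\loc}$, let $j_\delta$ be a standard nonnegative mollifier and set $v_\delta = j_\delta * u$. Then $v_\delta \in C^\infty$, $\Delta v_\delta = j_\delta * \Delta u$, and $v_\delta \to u$, $\Delta v_\delta \to \Delta u$ in $L^1_{\loc}$ as $\delta \to 0$. Applying step one to $v_\delta$ gives, for every test function $\phi \in C_0^\infty(\bbR^\nu)$ with $\phi \ge 0$,
\begin{equation*}
\int (v_\delta)_\epsilon \,\Delta\phi \;\ge\; \int \Real\!\left(\frac{\bar{v}_\delta\,\Delta v_\delta}{(v_\delta)_\epsilon}\right)\phi.
\end{equation*}

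Step three (pass $\delta \to 0$ with $\epsilon$ fixed). Pick a subsequence along which $v_\delta \to u$ a.e. The left side converges to $\int u_\epsilon \,\Delta\phi$ by dominated convergence, since $(v_\delta)_\epsilon \to u_\epsilon$ pointwise and $0 \le (v_\delta)_\epsilon \le |v_\delta| + \epsilon$ is locally $L^1$-dominated. For the right side, write
\begin{equation*}
\frac{\bar{v}_\delta\,\Delta v_\delta}{(v_\delta)_\epsilon} - \frac{\bar u\,\Delta u}{u_\epsilon} = \frac{\bar v_\delta}{(v_\delta)_\epsilon}(\Delta v_\delta - \Delta u) + \left(\frac{\bar v_\delta}{(v_\delta)_\epsilon} - \frac{\bar u}{u_\epsilon}\right)\Delta u.
\end{equation*}
The factors $|\bar v_\delta/(v_\delta)_\epsilon|$ are bounded by $1$, so the first summand tends to $0$ in $L^1_{\loc}$ by the convergence $\Delta v_\delta \to \Delta u$; the second summand tends to $0$ by dominated convergence ($2|\Delta u|$ is the integrable majorant, pointwise convergence holds a.e.). Hence
\begin{equation*}
\int u_\epsilon \,\Delta \phi \;\ge\; \int \Real\!\left(\frac{\bar u\,\Delta u}{u_\epsilon}\right)\phi.
\end{equation*}

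Step four (pass $\epsilon \downarrow 0$). As $\epsilon \downarrow 0$, $u_\epsilon \to |u|$ monotonically, so the left side tends to $\int |u|\,\Delta\phi$ by dominated convergence on the support of $\phi$. On the right, $\bar u/u_\epsilon \to \mathrm{sgn}(u)$ pointwise (the ratio is $0$ automatically where $u=0$, matching our convention), and $|\bar u/u_\epsilon| \le 1$, so another application of dominated convergence against $|\Delta u|\,|\phi|$ yields
\begin{equation*}
\int |u|\,\Delta\phi \;\ge\; \int \Real\!\bigl(\mathrm{sgn}(u)\,\Delta u\bigr)\,\phi
\end{equation*}
for every $\phi \in C_0^\infty(\bbR^\nu)$ with $\phi \ge 0$, which is \eqref{9.12} as distributions.

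The main obstacle is not the pointwise computation of step one, which is a short calculus exercise, but rather the careful handling of the zero set of $u$ in step four: one cannot hope for pointwise convergence of $\mathrm{sgn}(u_\delta)$ there. The trick of keeping $\epsilon$-regularization through the mollification limit finesses this, because $\bar u/u_\epsilon$ is a genuinely bounded, pointwise-convergent replacement for $\mathrm{sgn}(u)$ that vanishes naturally where $u=0$, so the delicate set $\{u=0\}$ contributes zero to the limit automatically and one never needs to know whether $\Delta u$ vanishes there.
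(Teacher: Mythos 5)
Your proof is correct and, while it relies on the same two ingredients as Kato's proof -- the regularized modulus $u_\epsilon=(|u|^2+\epsilon^2)^{1/2}$ together with the pointwise inequality for smooth functions, and mollification to remove smoothness -- it orders the two limiting procedures differently in a way that genuinely matters. Kato's argument (as sketched in the paper) first lets $\epsilon\downarrow 0$ at the level of a smooth $u$, obtaining $\Delta|u|\ge\Real[\mathrm{sgn}(u)\Delta u]$ for $C^\infty$ functions, and then appeals to mollification to handle the general case, asserting that $\mathrm{sgn}(u*h_n)\to\mathrm{sgn}(u)$ a.e.\ and invoking dominated convergence. That last assertion is actually delicate: on the set $\{u=0\}$ (which can have positive measure), $\mathrm{sgn}(u)=0$ by convention, but $\mathrm{sgn}(u*h_n)$ has modulus $1$ wherever $u*h_n\neq 0$, so there is no a.e.\ convergence on that set, and the argument as sketched implicitly needs some additional input (e.g.\ that $\Delta u$ vanishes a.e.\ on $\{u=0\}$, or a more careful weak-limit statement). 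You avoid this entirely by interchanging the limits: you mollify first with $\epsilon>0$ held fixed, so the passage $\delta\to 0$ only ever involves the truly continuous and uniformly bounded map $z\mapsto \bar z/(|z|^2+\epsilon^2)^{1/2}$, which converges pointwise a.e.\ without any issue; only afterwards do you send $\epsilon\downarrow 0$, and here $\bar u/u_\epsilon\to\mathrm{sgn}(u)$ pointwise \emph{everywhere}, including on $\{u=0\}$ where the ratio is identically $0$. So the $\epsilon$-buffer does real work for you that it does not do for Kato, who discards it too early. Your write-up is thus not only correct but patches a gap in the paper's sketch, and the telescoping decomposition in Step three together with the boundedness-by-one of the regularized sign is exactly the right way to make the dominated-convergence argument airtight.
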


\begin{remarks} 1. What we call $\mathrm{sgn}(u)$, Kato calls $\mathrm{sgn}(\bar{u})$.

2.  We should pause to emphasize what a surprise this was.  Kato was a long established master of operator theory.  He was 55 years old.  Seemingly from left field, he pulled a distributional inequality out of his hat.  It is true, like other analysts, that he'd been introduced to distributional ideas in the study of PDEs, but no one had ever used them in this way.  Truly a remarkable discovery.
\end{remarks}

The proof is not hard.  By replacing $u$ by $u*h_n$ with $h_n$ a smooth approximate identity and taking limits (using $\mathrm{sgn}(u*h_n)(x) \to \mathrm{sgn}(u)(x)$ for a.e. $x$ and using a suitable dominated convergence theorem), we can suppose that $u$ is a $C^\infty$ function.  In that case, for $\epsilon > 0$, let $u_\epsilon = (\bar{u}u + \epsilon^2)^{1/2}$.  From $u_\epsilon^2=\bar{u}u+\epsilon^2$, we get that
\begin{equation}\label{9.13}
  2u_\epsilon \overrightarrow{\nabla} u_\epsilon = 2\Real(\bar{u}\overrightarrow{\nabla} u)
\end{equation}
which implies (since $|\bar{u}| \le u_\epsilon$) that
\begin{equation}\label{9.14}
  |\overrightarrow{\nabla}u_\epsilon| \le |\overrightarrow{\nabla}u|
\end{equation}
Applying $\tfrac{1}{2}\overrightarrow{\nabla}\cdot$ to \eqref{9.13}, we get that
\begin{equation}\label{9.15}
  u_\epsilon \Delta u_\epsilon + |\overrightarrow{\nabla}u_\epsilon|^2 = \Real(\bar{u}\Delta(u)) + |\overrightarrow{\nabla}u|^2
\end{equation}
Using \eqref{9.14} and letting $\mathrm{sgn}_\epsilon(u) = \bar{u}/u_\epsilon$, we get that
\begin{equation}\label{9.16}
   \Delta u_\epsilon \ge \Real(\mathrm{sgn}_\epsilon(u)\Delta u)
\end{equation}
Taking $\epsilon \downarrow 0$ yields \eqref{9.12}.

Once we have \eqref{9.12}, here is Kato's proof of Theorem \ref{T9.1} (the second proof that we sketch).  Consider $T$, the operator closure of $-\Delta+V$ on $C_0^\infty(\bbR^\nu)$.  $T \ge 0$, so, by a simple argument (\cite[Corollary to Theorem X.1]{RS2}), it suffices to show that $\ran(T+\bdone)=\calH$ or equivalently, that $T^*u=-u \Rightarrow u=0$.  So suppose that $u \in L^2(\bbR^\nu)$ and that
\begin{equation}\label{9.17}
  T^*u = -u
\end{equation}
Since $T^*$ is defined via distributions, \eqref{9.17} implies that
\begin{equation}\label{9.18}
  \Delta u = (V+1)u
\end{equation}
Since $u$ and $V+1$ are both in $L^2_{loc}$, we conclude that $\Delta u \in L^1_{loc}$ so by Kato's inequality
\begin{equation}\label{9.19}
  \Delta |u| \ge (\mathrm{sgn}(u))(V+1)u = |u|(V+1) \ge |u|
\end{equation}
Convolution with non--negative functions preserves positivity of distributions, so for any non-negative $h \in C_0^\infty(\bbR^\nu)$, we have that
\begin{equation}\label{9.20}
  \Delta(h*u) = h*\Delta |u| \ge h*|u|
\end{equation}
Since $u \in L^2$, $h*u$ is a $C^\infty$ function with classical Laplacian in $L^2$, so $h*u \in D(-\Delta)$.  $(-\Delta+1)^{-1}$ has a positive integral kernel, so \eqref{9.20}$\Rightarrow (-\Delta+1)(h*|u|) \le 0 \Rightarrow h*|u| \le 0 \Rightarrow h*|u| = 0$.  Taking $h_n$ to be an approximate identity, we have that $h_n*u \to u$ in $L^2$, so $u=0$ completing the proof.

At first sight, Kato's proof seems to have nothing to do with the semigroup ideas used in the proof of Theorem \ref{T9.2} and our first proof of Theorem \ref{T9.1}.  But in trying to understand Kato's work, I found the following abstract result:

\begin{theorem}[Simon \cite{SimonKI1}]  \lb{T9.5} Let $A$ be a positive self--adjoint operator on $L^2(M,d\mu)$ for a $\sigma$--finite, separable measure space $(M,\Sigma,d\mu)$.  Then the following are equivalent:

(a) ($e^{-tA}$ is positivity preserving)
\begin{equation*}
         \forall u \in L^2,\, u\ge 0, t \ge 0 \Rightarrow e^{-tA}u \ge 0
\end{equation*}

(b) (Beurling--Deny criterion) $u \in Q(A) \Rightarrow |u| \in Q(A)$ and
\begin{equation}\label{9.22}
  q_A(|u|) \le q_A(u)
\end{equation}

(c) (Abstract Kato Inequality) $u \in D(A) \Rightarrow |u| \in Q(A)$ and for all $\varphi \in Q(A)$ with $\varphi \ge 0$, one has that
\begin{equation}\label{9.23}
  \jap{A^{1/2}\varphi,A^{1/2}|u|} \ge \Real\jap{\varphi,\mathrm{sgn}(u) Au}
\end{equation}
\end{theorem}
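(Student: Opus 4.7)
The plan is to close the cycle (a) $\Rightarrow$ (c) $\Rightarrow$ (b) $\Rightarrow$ (a). All three statements encode positivity in a different guise: (a) says the integral kernel $p_t(x,y)$ of $e^{-tA}$ is pointwise non-negative; (b) says the quadratic form cannot exploit cancellation between $u$ and $|u|$; (c) says $|u|$ formally satisfies a Kato-type differential inequality. The two bridges I will use throughout are Yosida's exponential formula $e^{-tA} = \lim_n (1+tA/n)^{-n}$ and the Laplace transform $(A+\lambda)^{-1} = \int_0^\infty e^{-\lambda t}\, e^{-tA}\, dt$, which transfer positivity back and forth between the semigroup and the resolvent.

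For (a) $\Rightarrow$ (c), positivity of the kernel immediately yields the fundamental pointwise inequality $|e^{-tA}u(x)| \le e^{-tA}|u|(x)$. Fix $u \in D(A)$ and $\varphi \ge 0$ in $Q(A)$, and examine the difference quotient $D_t = t^{-1}\jap{\varphi,\, |u| - |e^{-tA}u|}$. Using the expansion $e^{-tA}u = u - tAu + o_{L^2}(t)$ together with the pointwise chain rule $\partial_t|v_t| = \Real(\mathrm{sgn}(v_t)\,\dot v_t)$, one shows $D_t \to \Real \jap{\varphi,\, \mathrm{sgn}(u)\,Au}$ as $t \downarrow 0$. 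On the other hand, the pointwise inequality gives $D_t \ge t^{-1}\jap{\varphi,\, |u| - e^{-tA}|u|}$, whose limit is $\jap{A^{1/2}\varphi, A^{1/2}|u|}$ by the spectral theorem applied to $A$ on $|u|$; the specialization $\varphi = |u|$ bounds the resulting monotone-in-$t$ quantity by $q_A(u) < \infty$, and this simultaneously delivers $|u| \in Q(A)$ as a byproduct. For (c) $\Rightarrow$ (b), I take $\varphi = |u|$ in the abstract Kato inequality: the pointwise identity $|u|\,\mathrm{sgn}(u) = \bar u$ collapses $\Real\jap{|u|, \mathrm{sgn}(u)Au}$ to $\Real\jap{u, Au} = q_A(u)$, yielding the Beurling--Deny inequality on $D(A)$, and a standard density argument using lower semicontinuity of $q_A$ and $L^2$-continuity of $u \mapsto |u|$ extends it to all of $Q(A)$.

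The main obstacle is (b) $\Rightarrow$ (a), which is essentially the classical Beurling--Deny theorem. As a preliminary, I would show $A$ commutes with complex conjugation by applying (b) to both $u$ and $iu$, forcing $u \mapsto \bar u$ to preserve $Q(A)$ with equal form value. By Yosida's formula it then suffices to prove $(A+1)^{-1}$ is positivity preserving. Given $f \ge 0$ in $L^2$, set $u = (A+1)^{-1}f \in D(A)$, which is real; then $u, |u| \in Q(A)$ and by linearity of $Q(A)$ the parts $u_\pm = \tfrac{1}{2}(|u|\pm u)$ lie in $Q(A)$. A polarization computation in the sesquilinear form $B_A(v,w) = \jap{A^{1/2}v, A^{1/2}w}$ gives $q_A(|u|) - q_A(u) = 4\, B_A(u_+, u_-)$, so hypothesis (b) forces $B_A(u_+, u_-) \le 0$. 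Pairing $(A+1)u = f$ against $u_-$ in the form sense, and using the orthogonality $u_+u_- = 0$, produces
\begin{equation*}
  0 \le \jap{u_-,\,f} = B_A(u_-, u_+) - q_A(u_-) - \|u_-\|^2.
\end{equation*}
The symmetry $B_A(u_-, u_+) = B_A(u_+, u_-) \le 0$ then squeezes $q_A(u_-) + \|u_-\|^2 \le 0$, whence $u_- = 0$. The delicate point I expect to hit here is that $u_+$ generally does not lie in $D(A)$, so $\jap{u_-, A u_+}$ must be interpreted through the form $B_A$; the remedy is to rewrite $(A+1)u = f$ form-wise using that $u \in D(A) \subset Q(A)$, so that the pairing with $u_- \in Q(A)$ is automatically legitimate.
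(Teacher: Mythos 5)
Your cycle (a)\,$\Rightarrow$\,(c)\,$\Rightarrow$\,(b)\,$\Rightarrow$\,(a) is a sensible route and most of the technical ideas are sound, but the step (b)\,$\Rightarrow$\,(a) has a genuine hole at exactly the place you flag as a ``preliminary.''  Applying (b) to $iu$ gives nothing beyond applying it to $u$: one has $|iu|=|u|$ and $q_A(iu)=q_A(u)$, so the two instances of (b) are word-for-word the same statement, and no reality or commutation-with-conjugation information follows.  Without knowing that $u=(A+1)^{-1}f$ is real, the decomposition $u=u_+-u_-$ is simply unavailable; furthermore your polarization identity should read $q_A(|u|)-q_A(u)=4\,\Real B_A(u_+,u_-)$, and the ``symmetry'' $B_A(u_-,u_+)=B_A(u_+,u_-)$ is in general only $\overline{B_A(u_+,u_-)}$, so the whole chain needs reality to close.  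The standard way to avoid the issue --- the one behind the cited references --- is a Cauchy--Schwarz argument in the form $B=B_{A+1}$ that treats complex $u$ directly.  With $u=(A+1)^{-1}f$, $f\ge 0$, hypothesis (b) gives $|u|\in Q(A)$, and then $\jap{f,|u|}=B(u,|u|)$, so
\begin{equation*}
  \jap{f,|u|}=\bigl|B(u,|u|)\bigr|\le B(u,u)^{1/2}\,B(|u|,|u|)^{1/2}\le B(u,u)=\jap{f,u},
\end{equation*}
while $f\ge 0$ also forces $\jap{f,|u|}\ge|\jap{f,u}|=\jap{f,u}$; equality throughout gives equality in Cauchy--Schwarz, hence $|u|=cu$, and $\jap{f,u}\ge 0$ then pins $u=|u|\ge 0$.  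No reality lemma is needed.  (Your $u_\pm$ computation is a fine \emph{a posteriori} presentation once one knows $u$ is real --- which does follow from positivity preserving --- but it cannot be the engine that establishes it.)

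Two smaller points.  First, your (a)\,$\Rightarrow$\,(c) argument actually delivers $\jap{A^{1/2}\varphi,A^{1/2}|u|}\le\Real\jap{\varphi,\mathrm{sgn}(u)Au}$ (the limsup of $D_t$ is the right side, the liminf is bounded below by the left side), and that $\le$ is also the direction you feed into (c)\,$\Rightarrow$\,(b): with $\varphi=|u|$ it gives $q_A(|u|)\le q_A(u)$, not $\ge$.  This $\le$ is the correct direction --- it matches \eqref{9.32} in Theorem \ref{T9.7} with $B=A$, and it is what the pointwise Kato inequality $\Delta|u|\ge\Real(\mathrm{sgn}(u)\Delta u)$ becomes when written for $-\Delta$.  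The $\ge$ printed in \eqref{9.23} appears to be a sign slip in the statement, and you should say so explicitly rather than silently flipping it.  Second, the claim that $D_t\to\Real\jap{\varphi,\mathrm{sgn}(u)Au}$ is overstated: pointwise a.e.\ one finds $t^{-1}\bigl(|u|-|e^{-tA}u|\bigr)\to\Real(\mathrm{sgn}(u)Au)-|Au|\chi_{\{u=0\}}$, with an additional negative contribution on $\{u=0\}$ that a general $\varphi$ does \emph{not} kill.  What is true and all you need is $\limsup_{t\downarrow 0}D_t\le\Real\jap{\varphi,\mathrm{sgn}(u)Au}$, which follows immediately from the elementary pointwise bound $|u|-|v|\le\Real\bigl(\mathrm{sgn}(u)(u-v)\bigr)$ combined with $t^{-1}(u-e^{-tA}u)\to Au$ in $L^2$; phrase it that way.
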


The equivalence of (a) and (b) for $M$ a finite set (so $A$ is a matrix) is due to Beurling--Deny \cite{BD}.  For a proof of the full theorem (which is not hard), see Simon \cite{SimonKI1} or \cite[Theorem 7.6.4]{OT}.

In his original paper, Kato \cite{KI1} proved more than \eqref{9.12}.  He showed that
\begin{equation}\label{9.24}
  \Delta |u| \ge \Real\left[\mathrm{sgn}(u)(\overrightarrow{\nabla}-i\overrightarrow{a})^2u\right]
\end{equation}
In \cite{KI1}, he required that $\overrightarrow{a}$ to be $C^1(\bbR^\nu)$ but he implicitly considered less regular $\overrightarrow{a}$'s in \cite{KI3}.  For smooth $a$'s, one gets \eqref{9.24} as we got \eqref{9.12}.  Since $\Real(\bar{u}(-ia)u) = 0$, \eqref{9.13}, with $D=\nabla-ia$ implies that
\begin{equation}\label{9.25}
  u_\epsilon\nabla u_\epsilon = \Real(\bar{u}Du)
\end{equation}
which implies that
\begin{equation}\label{9.26}
  |\nabla u_\epsilon| \le |Du|
\end{equation}

Note next that
\begin{equation*}
  \nabla_j(\bar{u}D_j u) = \left[(\nabla_j+ia_j)\bar{u}\right]D_ju+\bar{u}D_j^2u
\end{equation*}
since $ia_j\bar{u}D_ju+\bar{u}(-ia_j)D_ju=0$.  Thus applying $\overrightarrow{\nabla}$ to \eqref{9.25} yields
\begin{equation}\label{9.27}
  u_\epsilon\Delta u_\epsilon + |\nabla u_\epsilon|^2 = |D u|^2+\Real(\bar{u}D^2u)
\end{equation}
By \eqref{9.26}, we get \eqref{9.24}.

In \cite{KI1}, Kato followed his arguments to get Theorem \ref{T9.1} with $-\Delta+V$ replaced by $-(\nabla-ia)^2+V$ when $a \in C^1(\bbR^\nu), V \in L^2_{loc}(\bbR^\nu), V \ge 0$.  But there was a more important consequence of \eqref{9.24} than a self--adjointness result.  In \cite{SimonDiamagnetic}, I noted that \eqref{9.24} implies by approximating $|u|$ by positive $\varphi \in C_0^\infty(\bbR^\nu)$, that
\begin{equation*}
  \jap{|u|,\Delta|u|} \ge \jap{u,D^2u}
\end{equation*}
which implies that
\begin{equation}\label{9.28}
  \jap{u,(-D^2+V)u} \ge \jap{|u|,(-\Delta+V)|u|}
\end{equation}
This in turn implies that turning on a magnetic field always increases the ground state energy (for spinless bosons), something I called \emph{universal diamagnetism}.

If one thinks of this as a zero temperature result, it is natural to expect a finite temperature result (that is, for, say, finite matrices, one has that $\lim_{\beta \to \infty} -\beta^{-1}\tr(e^{-\beta A}) = \inf \sigma(A)$ which in statistical mechanical terms is saying that as the temperature goes to zero, the free energy approaches a ground state energy).
\begin{equation}\label{9.29}
  \tr(e^{-tH(a,V)}) \le \tr(e^{-tH(a=0,V)})
\end{equation}
where
\begin{equation}\label{9.30}
  H(a,V) = -(\nabla-ia)^2+V
\end{equation}

This suggested to me the inequality
\begin{equation}\label{9.31}
  |e^{-tH(a,V)}\varphi| \le e^{-tH(a=0,V)}|\varphi|
\end{equation}
I mentioned this conjecture at a brown bag lunch seminar when I was in Princeton.  Ed Nelson remarked that formally, it followed from the Feynman--Kac--Ito formula for semigroups in magnetic fields which says that adding a magnetic field with gauge, $\overrightarrow{a}$, adds a factor $\exp(i\int \overrightarrow{a}(\omega(s))\cdot d\omega)$ to the Feynman--Kac formula (the integral is an Ito stochastic integral).  \eqref{9.31} is immediate from $|\exp(i\int \overrightarrow{a}(\omega(s))\cdot d\omega)| = 1$ and the positivity of the rest of the Feynman--Kac integrand.  Some have called \eqref{9.31} the Nelson--Simon inequality but the name I gave it, namely \emph{diamagnetic inequality}, has stuck.

The issue with Nelson's proof is that at the time, the Feynman--Kac--Ito was only known for smooth $a$'s.  One can obtain the Feynman--Kac--Ito for more general $a$'s by independently proving a suitable core result.  Simon \cite{SimonKI1} and then Kato \cite{KI3} obtained results for more and more singular $a$'s until Simon \cite{SimonMaxMin} proved

\begin{theorem}[Simon \cite{SimonMaxMin}] \lb{T9.6} \eqref{9.31} holds for $V \ge 0$, $V \in L^1_{loc}(\bbR^\nu)$ and $\overrightarrow{a} \in L^2_{loc}$.
\end{theorem}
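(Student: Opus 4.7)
The plan is to reduce the inequality to a pointwise comparison between individual factors in a Trotter product, exploiting that $V \ge 0$ is real and that one-dimensional magnetic fields can be gauged away (as was done in the proof sketch of Theorem \ref{T9.3}). First I would define $H(a,V)$ rigorously as the form sum of $-(\nabla - ia)^2$ and $V$ on the form domain $V_h = \bigcap_{j=1}^{\nu} Q(D_j^2) \cap Q(V)$; the hypotheses $a \in L^2_{loc}$ and $V \in L^1_{loc}$ are exactly what is needed to ensure $C_0^\infty(\bbR^\nu) \subset V_h$ so the form is densely defined, and $V \ge 0$ gives lower semiboundedness. Similarly $H(0,V)$ is the form sum of $-\Delta$ and $V$.

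Next I would invoke Kato's ultimate Trotter product formula (Section \ref{s18}), extended by Kato--Masuda \cite{KM} to $\nu + 1$ positive self-adjoint operators $-D_1^2, \ldots, -D_\nu^2, V$, to write
\begin{equation*}
e^{-tH(a,V)}\varphi = \slim_{n\to\infty} \left(e^{(t/n)D_1^2} \cdots e^{(t/n)D_\nu^2} e^{-(t/n)V}\right)^n \varphi,
\end{equation*}
and analogously for $H(0,V)$ with each $D_j$ replaced by $\partial_j$. The pointwise inputs are then: (i) $|e^{-sV}\psi|(x) = e^{-sV(x)}|\psi(x)|$ since $V$ is real-valued; (ii) the one-dimensional diamagnetic inequality $|(e^{sD_j^2}\psi)(x)| \le (e^{s\partial_j^2}|\psi|)(x)$, which follows from the gauge transformation $D_j = e^{i\lambda_j}\partial_j e^{-i\lambda_j}$ with $\lambda_j(x) = \int_0^{x_j} a_j(x_1, \ldots, s, \ldots, x_\nu)\,ds$ well-defined a.e.\ because $a_j \in L^2_{loc}$, combined with positivity of $e^{s\partial_j^2}$; and (iii) positivity preservation of $e^{s\partial_j^2}$ and $e^{-sV}$ on nonnegative functions.

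Iterating these three facts at finite $n$, a straightforward induction on the number of factors (at each stage moving the absolute value inward past a positivity-preserving operator via $|A\psi| \le A|\psi|$ when $A$ is positivity preserving, applied after one has used (i) or (ii) to bound the outermost factor) yields the pointwise inequality
\begin{equation*}
\left|\left(e^{(t/n)D_1^2} \cdots e^{(t/n)D_\nu^2} e^{-(t/n)V}\right)^n \varphi\right|(x) \le \left(e^{(t/n)\partial_1^2} \cdots e^{(t/n)\partial_\nu^2} e^{-(t/n)V}\right)^n |\varphi|(x)
\end{equation*}
for a.e.\ $x$. Passing to a subsequence so that the Trotter convergence is also pointwise a.e.\ (possible since $L^2$ convergence yields an a.e.\ convergent subsequence) gives \eqref{9.31}.

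The main obstacle is justifying the strong convergence of the $(\nu+1)$-factor Trotter product in this generality, since $V$ is only $L^1_{loc}$ and $a$ only $L^2_{loc}$, so $H(a,V)$ is defined only as a form sum and none of the summands is individually bounded. This is exactly the setting for which Kato built his ``ultimate'' Trotter product formula: the only hypothesis needed is that the $\nu + 1$ forms are closed, nonnegative, and have a common form core, which is furnished by $C_0^\infty(\bbR^\nu)$ as noted in the proof of Theorem \ref{T9.3}. A secondary nuisance is the passage from $L^2$ to pointwise convergence needed to take limits in the pointwise inequality, but this is handled by the standard diagonal subsequence argument.
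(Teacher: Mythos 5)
Your proposal is correct and follows essentially the same route as the paper: the paper proves Theorem \ref{T9.6} by running the Trotter--Kato--Masuda argument used for \eqref{9.7} but retaining the $e^{-(t/n)V}$ factor instead of bounding it by $1$, and the two pointwise inputs --- the contraction $|e^{-sV}\psi|\le|\psi|$ and the gauged one--dimensional bound \eqref{9.9} --- are exactly yours. (One minor overstatement: Kato's ultimate product formula does not actually require a common form core --- it applies to arbitrary closed nonnegative forms via the form--sum construction --- but since $C_0^\infty$ is a common core here this does not affect the argument.)
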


Indeed, our proof of \eqref{9.7} above implies this if we don't use \eqref{9.8} but keep $e^{-tV}$ (equivalently, if we just use \eqref{9.9}).

As with Theorem \ref{T9.5}, there is an abstract two operator Kato inequality result (originally conjectured in Simon \cite{SimonKI1}):

\begin{theorem}[Hess--Schrader--Uhlenbrock \cite{HessKI}, Simon \cite{SimonKI2}]  \lb{T9.7} Let $A$ and $B$ be two positive self--adjoint operators on $L^2(M,d\mu)$ where $(M,\Sigma,d\mu)$ is a $\sigma$--finite, separable measure space.  Suppose that $\varphi \ge 0 \Rightarrow e^{-tA}\varphi \ge 0$.  Then the following are equivalent:

(a) For all $\varphi \in L^2$ and all $t \ge 0$, we have that
\begin{equation*}
  |e^{-tB}\varphi| \le e^{-tA}|\varphi|
\end{equation*}

(b) $\psi \in D(B) \Rightarrow |\psi| \in Q(A)$ and for all $\varphi \in Q(A)$ with $\varphi \ge 0$ and all $\psi \in D(B)$ we have that
\begin{equation}\label{9.32}
  \jap{A^{1/2}\varphi,A^{1/2}|\psi|} \le \Real\jap{\varphi,\mathrm{sgn(\psi) B\psi}}
\end{equation}
\end{theorem}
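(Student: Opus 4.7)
The plan is to follow the template set by the proof of Theorem 9.5, extracting (b) from (a) by differentiating the semigroup inequality at $t = 0$, and deducing (a) from (b) by showing that $t \mapsto |e^{-tB}\psi|$ is a weak sub-solution of the evolution equation $\dot u = -Au$ and then invoking positivity preservation of $e^{-tA}$ to dominate it.

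For (a) $\Rightarrow$ (b), fix $\psi \in D(B)$ and $\varphi \in Q(A)$ with $\varphi \ge 0$. Since both sides of $|e^{-tB}\psi| \le e^{-tA}|\psi|$ agree at $t=0$, pairing against $\varphi$, subtracting, and dividing by $t$ gives
\begin{equation*}
  \left\langle \varphi, \frac{|\psi|-|e^{-tB}\psi|}{t}\right\rangle \ge \left\langle \varphi, \frac{|\psi|-e^{-tA}|\psi|}{t}\right\rangle.
\end{equation*}
On the left, $e^{-tB}\psi = \psi - tB\psi + o(t)$ in $L^2$; combined with the pointwise estimate $\bigl||a+tb+r_t|-|a|+t\Real(\mathrm{sgn}(a)b)\bigr| \le |r_t|$ (valid where $a\ne 0$, extended by $||a|-|b||\le|a-b|$ elsewhere), the left-hand side converges to $\Real\langle\varphi,\mathrm{sgn}(\psi)B\psi\rangle$. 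On the right, the spectral theorem gives $\langle \varphi, (1-e^{-tA})|\psi|\rangle/t \to \langle A^{1/2}\varphi, A^{1/2}|\psi|\rangle$ if $|\psi|\in Q(A)$ and $+\infty$ otherwise; the finiteness of the left forces $|\psi| \in Q(A)$ and yields \eqref{9.32}.

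For (b) $\Rightarrow$ (a), the core idea is a comparison argument. Fix $\psi \in D(B)$, $t>0$, and $\varphi \ge 0$ in $L^2$, and define
\begin{equation*}
  g(s) = \langle \varphi, e^{-(t-s)A}|e^{-sB}\psi|\rangle, \quad s\in[0,t].
\end{equation*}
We have $g(0) = \langle\varphi, e^{-tA}|\psi|\rangle$ and $g(t) = \langle\varphi, |e^{-tB}\psi|\rangle$, so (a) reduces to $g(t) \le g(0)$, which will follow from $g'(s) \le 0$. Writing $\psi_s = e^{-sB}\psi \in D(B)$ and $\tilde\varphi_s = e^{-(t-s)A}\varphi \ge 0$ (positivity-preservation of $e^{-tA}$), a formal computation yields
\begin{equation*}
  g'(s) = \langle A^{1/2}\tilde\varphi_s, A^{1/2}|\psi_s|\rangle - \Real\langle \tilde\varphi_s, \mathrm{sgn}(\psi_s) B\psi_s\rangle,
\end{equation*}
and hypothesis (b) applied with $\psi \leadsto \psi_s$ and $\varphi \leadsto \tilde\varphi_s$ says precisely that this is $\le 0$. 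Taking $\varphi$ to range over a dense positive cone in $L^2$ and then extending from $\psi \in D(B)$ to $\psi \in L^2$ by continuity (both $|e^{-tB}\psi|$ and $e^{-tA}|\psi|$ are $L^2$-continuous in $\psi$) gives (a).

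The main obstacle will be rigorously justifying the formula for $g'(s)$. The quantity $|\psi_s|$ is only Lipschitz in $s$ and, worse, differentiating $|\cdot|$ is singular on the zero set of $\psi_s$, so $\mathrm{sgn}(\psi_s)B\psi_s$ must be interpreted with care. The standard fix, mirroring the proof of Kato's inequality, is to replace $|\psi_s|$ by $u_{\epsilon,s} = (|\psi_s|^2 + \epsilon^2)^{1/2}$, for which $\partial_s u_{\epsilon,s} = \Real(\bar\psi_s(-B\psi_s))/u_{\epsilon,s}$ is an honest $L^2$-valued derivative; the identity \eqref{9.32} lifts to this regularized setting via $\mathrm{sgn}_\epsilon(\psi_s) = \bar\psi_s/u_{\epsilon,s}$ (using $|\mathrm{sgn}_\epsilon| \le 1$ and $u_{\epsilon,s} \in Q(A)$ whenever $|\psi_s|\in Q(A)$, together with a Lebesgue-type passage $u_{\epsilon,s}\to|\psi_s|$ in $Q(A)$-norm after checking the needed form-boundedness). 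Integrating $g'_\epsilon(s) \le 0$ on $[0,t]$ and letting $\epsilon\downarrow 0$ at the end delivers the inequality.
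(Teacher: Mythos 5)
The paper does not prove Theorem~\ref{T9.7} (it cites \cite{HessKI, SimonKI2} and \cite[Thm.~7.6.7]{OT}), so I'm evaluating your plan on its own terms. Your (a)$\Rightarrow$(b) step is essentially right, although the claimed pointwise estimate is false: $|a+\epsilon|-|a|-\Real(\mathrm{sgn}(a)\epsilon)$ is nonnegative but \emph{not} controlled by the size of the $o(t)$ remainder alone (take $a=1$, $\epsilon = it$). What saves the argument is instead the two-sided domination $\bigl||\psi|-|e^{-tB}\psi|\bigr| \le |\psi - e^{-tB}\psi|$, whose right side converges to $|B\psi|$ in $L^2$; generalized dominated convergence then identifies the $L^2$-limit of $(|\psi|-|e^{-tB}\psi|)/t$ with a function $\le \Real(\mathrm{sgn}(\psi)B\psi)$ pointwise (with slack on $\{\psi=0\}$, harmless since $\varphi\ge0$), and (9.32) follows.

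The real gap is in (b)$\Rightarrow$(a): the sign of Kato's pointwise mechanism goes the wrong way for your comparison function. From $|c|-|a|\ge\Real(\mathrm{sgn}(a)(c-a))$ one only gets $\partial_s^+|\psi_s|\ge -\Real(\mathrm{sgn}(\psi_s)B\psi_s)$, with strict inequality (value $|B\psi_s|$) on the zero set of $\psi_s$. Combined with (b) this gives a \emph{lower} bound for $g'(s)$ by a quantity that happens to be $\le 0$, which says nothing about $g$ being nonincreasing. Your proposed $u_\epsilon$ regularization does not obviously repair this: $u_\epsilon\notin L^2$ when $\mu(M)=\infty$ (patchable via $v_\epsilon=u_\epsilon-\epsilon$), but more seriously the ``lifted'' inequality you would actually need, $\jap{A^{1/2}\tilde\varphi_s,A^{1/2}v_\epsilon}\le\jap{\tilde\varphi_s,\Real(\bar\psi_s B\psi_s)/u_\epsilon}$, is not a consequence of (b): testing (b) with $\varphi'=\tilde\varphi_s|\psi_s|/u_\epsilon$ reproduces the right side but puts $\jap{A^{1/2}\varphi',A^{1/2}|\psi_s|}$ on the left, and the form $q_A$ is not monotone under the pointwise inequality $\tilde\varphi_s v_\epsilon\le\varphi'|\psi_s|$.

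The standard route sidesteps differentiation of $|\cdot|$ entirely by working at the resolvent level. Set $u=(B+\lambda)^{-1}\psi$ and test (b) with $\varphi=(A+\lambda)^{-1}\chi$ for $\chi\ge0$ (so $\varphi\ge0$ by positivity-preservation and $\varphi\in D(A)$); using $\mathrm{sgn}(u)u=|u|$ and $|\mathrm{sgn}(u)\psi|\le|\psi|$,
\begin{equation*}
\jap{\chi,|u|}=\jap{A^{1/2}\varphi,A^{1/2}|u|}+\lambda\jap{\varphi,|u|}\le\Real\jap{\varphi,\mathrm{sgn}(u)(B+\lambda)u}\le\jap{\varphi,|\psi|}=\jap{\chi,(A+\lambda)^{-1}|\psi|}.
\end{equation*}
Hence $|(B+\lambda)^{-1}\psi|\le(A+\lambda)^{-1}|\psi|$ for all $\lambda>0$, and the exponential formula $e^{-tA}\varphi=\lim_n(\bdone+tA/n)^{-n}\varphi$ (together with positivity-preservation of the iterated resolvent) transfers this to the semigroup domination (a).
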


For a proof, see the original papers or \cite[Theorem 7.6.7]{OT}.

As one might expect, the ideas in Kato \cite{KI1} have generated an enormous literature.  Going back to the original paper are two kinds of extensions: replace $\Delta$ by $\sum_{i,j=1}^{\nu} \partial_i a_{ij}(x) \partial_j$ and allowing $q(x) \to -\infty$ as $|x| \to \infty$ with lower bounds of the Wienholtz--Ikebe--Kato type as discussed in Section \ref{s8}.  Some papers on these ideas include Devinatz \cite{Devan}, Eastham et al \cite{EEM}, Evans \cite{Evans}, Frehse \cite{Fre}, G\"{u}neysu--Post \cite{GP}, Kalf \cite{KalfGauss}, Knowles \cite{Know1, Know2, Know3}, Milatovic \cite{Mila} and Shubin \cite{Shubin}. There is a review of Kato \cite{KI4}.  For applications to higher order elliptic operators, see Davies--Hinz \cite{DH}, Deng et al \cite{DDY} and Zheng--Yao \cite{ZY}. There are papers on $V$'s obeying $V(x) \ge -\nu(\nu-4)|x|^{-2}; \, \nu \ge 5$, some using Kato's inequality by Kalf--Walter \cite{KalfWalter1}, Schmincke \cite{SchmSing}, Kalf \cite{KalfSing}, Simon \cite{SimonSing}, Kalf--Walter \cite{KalfWalter2} and Kalf et. al. \cite{KSWW}.

Kato himself applied these ideas to complex valued potentials in three papers \cite{KIComp1, KIComp2, KIComp3}.  In particular, Br\'{e}zis--Kato \cite{KIComp3} has been used extensively in the nonlinear equation literature as part of a proof of $L^p$ regularity of eigenfunctions.

There is one final aspect of \cite{KI1} which should be mentioned.  In it, Kato introduced a condition on the negative part of the potential that I dubbed Kato's class and denoted $K_\nu$ and which has since been used extensively.  Earlier, Schechter \cite{Schechter} had introduced a family of spaces with several parameters which agrees with $K_\nu$ for one choice of parameters but he didn't single it out.  A function, $V$ on $\bbR^\nu$ is said to lie in $K_\nu$ if and only if
\begin{equation}\label{9.33}
 \left\{
               \begin{array}{ll}
                 \lim_{\alpha \downarrow 0} \left[\sup_x \int_{|x-y| \le \alpha} |x-y|^{2-\nu} |V(y)|\, d^\nu y\right] =0, & \hbox{ if } \nu >2\\
                 \lim_{\alpha \downarrow 0} \left[\sup_x \int_{|x-y| \le \alpha}  \log(|x-y|^{-1}) |V(y)|\, d^\nu y\right] =0, & \hbox{ if } \nu =2\\
                      \sup_x \int_{|x-y| \le 1} |V(y)| \, dy < \infty, & \hbox{ if } \nu=1
                \end{array}
                   \right.
\end{equation}
$K_\nu^{loc}$ is those where we demand \eqref{9.33} not for $\sup_x$ but rather, for each $x_0$ for $\sup_{|x-x_0|}
\le 1$.  Note that the class $S_\nu$ of Section 7 is an operator analog of this and was motivated by Kato's definition.  There are analogs of Theorem \ref{T7.10} and \ref{T7.11} for $K_\nu$, see \cite[Section 1.2]{CFKS}.

Kato used $K_\nu$ to discuss local (and global) singularities of the negative part of $V$.  Ironically, $K_\nu$ is not maximal for such considerations.  If $\nu \ge 3$ and $V(x) = |x|^{-2} \log(|x|^{-1})^{-\delta}$ (for $|x| < \tfrac{1}{2})$, then $V \in K_\nu \iff \delta > 1$ but $V$ is form bounded if and only if $\delta > 0$.  However, Aizenman--Simon \cite{AizSimon} have proven the following showing the naturalness of Kato's class for semigroup considerations:

\begin{theorem} [Aizenman--Simon \cite{AizSimon}] \lb{T9.8} Let $V \le 0$ have compact support.  Then $V \in K_\nu$ if and only if $e^{-tH},\, (H=-\Delta+V)$ maps $L^\infty(\bbR^\nu)$ to itself for all $t > 0$ and
\begin{equation}\label{9.34}
  \lim_{t \downarrow 0}\norm{e^{-tH}}_{\infty,\infty} = 1
\end{equation}
\end{theorem}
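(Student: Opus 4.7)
The plan is to exploit the Feynman--Kac formula
\[
(e^{-tH}f)(x) = \bbE_x\!\left[\exp\!\Big(-\!\int_0^t V(b_s)\,ds\Big)\,f(b_t)\right],
\]
where $b_s$ denotes Brownian motion starting at $x$. Since $V\le 0$ the weight is $\ge 1$, and testing against constant $f\equiv 1$ gives
\[
\norm{e^{-tH}}_{\infty,\infty} = \sup_{x}\bbE_x\!\left[\exp\!\Big(\int_0^t |V|(b_s)\,ds\Big)\right].
\]
Hence the boundedness and norm behavior of $e^{-tH}$ on $L^\infty$ is a statement about an exponential moment, and the strategy is to reduce everything to the first moment
\[
M_V(t) := \sup_{x}\bbE_x\!\left[\int_0^t |V|(b_s)\,ds\right] = \sup_{x}\int |V(y)|\,G_t(x,y)\,dy,
\]
with $G_t(x,y)=\int_0^t (4\pi s)^{-\nu/2}e^{-|x-y|^2/(4s)}\,ds$.

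The first step is a Khasminskii-type estimate: an iterated use of the Markov property gives $\bbE_x[(\int_0^t |V|\,ds)^n]\le n!\,M_V(t)^n$, so summing the exponential series yields, whenever $M_V(t)<1$,
\[
1 + M_V(t) \;\le\; \norm{e^{-tH}}_{\infty,\infty} \;\le\; \frac{1}{1-M_V(t)},
\]
the lower bound from $e^{x}\ge 1+x$. These sandwich bounds show that the $L^\infty\to L^\infty$ norm tends to $1$ as $t\downarrow 0$ if and only if $M_V(t)\to 0$ as $t\downarrow 0$, reducing the theorem to the kernel statement $V\in K_\nu \iff M_V(t)\to 0$.

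For the kernel statement, the main input is two-sided control on $G_t$. Substituting $u=|x-y|^2/(4s)$ gives, for $\nu\ge 3$, the upper bound $G_t(x,y)\le c_\nu|x-y|^{2-\nu}$ and the matching lower bound $G_t(x,y)\ge c'_\nu|x-y|^{2-\nu}$ whenever $|x-y|^2\le t/2$. For the forward direction I split the $y$-integral at $|x-y|=\alpha$: the inner piece is handled by the definition of $K_\nu$, while the outer piece is dominated by $\norm V_1 \,\sup_{r\ge\alpha}\int_0^t (4\pi s)^{-\nu/2}e^{-r^2/(4s)}\,ds$, which vanishes as $t\downarrow 0$. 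Here the compact support of $V$ enters crucially to ensure $\norm{V}_1<\infty$. For the converse, choosing $\alpha=\sqrt{t/2}$ in the lower bound on $G_t$ gives
\[
\sup_x\int_{|x-y|\le\alpha}|V(y)|\,|x-y|^{2-\nu}\,dy \;\le\; c_\nu^{-1} M_V(t),
\]
so $M_V(t)\to 0$ forces the Kato class condition.

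The hard part, I anticipate, will be the low-dimensional cases $\nu\le 2$. For $\nu=2$ one must replace $|x-y|^{2-\nu}$ by $\log(|x-y|^{-1})$ both in the Kato class definition and in the asymptotics of $G_t$, and a slightly more delicate splitting is needed to extract the logarithmic factor; for $\nu=1$ the kernel $G_t$ is bounded near the diagonal and one uses instead the local $L^1$ condition $\sup_x\int_{|x-y|\le 1}|V|\,dy<\infty$. A secondary technical point is justifying the Feynman--Kac representation and the identification of $e^{-tH}$ with the path-integral semigroup for $V\in K_\nu$; this can be done by approximating $V$ by bounded cutoffs $V_n=\max(V,-n)$, applying the bounded case and passing to the limit using monotone convergence on the path space. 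Finally, the Khasminskii bound only controls small $t$, but the $L^\infty$-boundedness extends to all $t>0$ via the semigroup property $e^{-tH}=(e^{-(t/n)H})^n$.
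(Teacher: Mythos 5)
Your proof is correct and follows essentially the same route as Aizenman--Simon \cite{AizSimon} (and Simon's Schr\"odinger semigroups review \cite{SimonSmgp}): the Feynman--Kac representation together with Khasminskii's lemma reduces the $L^\infty$ operator-norm condition to the first-moment condition $M_V(t)\to 0$, and two-sided pointwise bounds on the time-integrated heat kernel $G_t$ then convert that into the $K_\nu$ condition. The one ingredient you use implicitly and could make explicit is that $V\in K_\nu$ already implies $\sup_x\int_{|x-y|\le 1}|V(y)|\,dy<\infty$ (by a covering argument as in \cite[Section 1.2]{CFKS}), which combined with compact support gives $\norm{V}_1<\infty$ in the outer-piece estimate.
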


For more on this theme, see \cite{AizSimon, SimonSmgp}.

\section{Self--Adjointness, IV: Quadratic Forms} \lb{s10}

Hilbert, around 1905, originally discussed operators on inner product spaces in terms of (bounded) quadratic forms, not surprising given Hilbert's background in number theory.  F. Riesz emphasized the operator theory point of view starting in 1913 and von Neumann's approach to unbounded operators in 1929 also emphasized the operator point of view which has dominated most of the discussion since.  In the 1930s and 1940s, there was work in which the quadratic form point of view was implicit but it was only in the 1950s that forms became explicitly discussed objects and Kato was a major player in this development.   In this section, we'll first describe the basic theory and give a Kato--centric history and then discuss two special aspects in which Kato had seminal contributions: first, the theory of monotone convergence for forms and secondly, the theory of pseudo--Friedrichs extensions and its application to the Dirac Coulomb problem, as well as some other work of Kato on the Dirac Coulomb problem.

In his delightful reminisces of Kato, Cordes \cite{Cordes} quotes Kato as saying ``there is no decent Banach space, except Hilbert space.''  While this ironic given Kato's development of eigenvalue perturbation theory and semigroup theory in general Banach spaces, it is likely he had in mind the spectral theorem and the subject of this section.

Let $\calH$ be a (complex, separable) Hilbert space.  A \emph{quadratic form} is a map $q:\calH \to [0,\infty]$ with $\infty$ an allowed value that is quadratic and obeys the parallelogram law, i.e.
\begin{align}
  q(z\varphi) &= |z|^2 q(\varphi),\quad \textrm{ all } \varphi \in \calH, z \in \bbC \lb{10.1} \\
  q(\varphi+\psi)+q(\varphi-\psi) &= 2q(\varphi)+2q(\psi) \lb{10.2}
\end{align}
where $a\infty=\infty$ (for $a>0$), $=0$ for $a=0$ and $\infty+a=a+\infty=\infty$ for any $a \in [0,\infty]$.  The \emph{form domain} of q is
\begin{equation}\label{10.3}
  V_q = \{\varphi\,|\, q(\varphi) < \infty\}
\end{equation}

A \emph{sesquilinear form} is a pair $(V,Q)$ of a subspace $V \subset \calH$ ($V$ is not necessarily a closed and/or dense subspace.  Typically $V$ is dense in $\calH$, but as we'll see in Section \ref{s18}, there are very interesting cases where $V$ is not dense.) and a map $Q:V \times V \to \bbC$ obeying
\begin{align}
  \forall \psi\in V,\quad &\varphi \mapsto Q(\psi,\varphi) \textrm{ is linear} \lb{10.4} \\
  \forall \varphi,\psi \in V, \quad &Q(\psi,\varphi) = \overline{Q(\varphi,\psi)}
\end{align}
which imply that $\forall \psi \in V, \varphi \in V \mapsto Q(\varphi,\psi)$ is antilinear.  $Q$ is called \emph{positive} if and only if $\forall \varphi \in V$ one has that $Q(\varphi,\varphi) \ge 0$.

An elementary fact is:

\begin{theorem} \lb{T10.1} There is a one--one correspondence between quadratic forms and positive sesquilinear forms given by

(a) If $(V,Q)$ is a sesquilinear form, define a quadratic form, $q$, by
\begin{equation}\label{10.6}
  q(\varphi) = \left\{
                 \begin{array}{ll}
                   Q(\varphi,\varphi) & \hbox{ if } \varphi \in V \\
                   \infty, & \hbox{ if } \varphi \notin V
                 \end{array}
               \right.
\end{equation}
(so $V_q=V$).

(b)  If $q$ is a quadratic form, take $V = V_q$ and define a map, $Q$ on $V \times V$ by
\begin{equation}\label{10.7}
  Q(\varphi,\psi) = \tfrac{1}{4}[q(\varphi+\psi)-q(\varphi-\psi)+i q(\varphi-i\psi) -i q(\varphi+i\psi)]
\end{equation}
\end{theorem}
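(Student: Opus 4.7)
The plan is to verify both directions and then that they are mutual inverses; the substantive content is a Jordan--von Neumann style polarization argument.

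Direction (a) is routine. Given a positive sesquilinear form $(V,Q)$, define $q$ by \eqref{10.6}. When $\varphi\in V$, \eqref{10.1} follows from sesquilinearity; if $\varphi\notin V$ and $z\ne 0$ then $z\varphi\notin V$, so both sides of \eqref{10.1} are $\infty$ (and both are $0$ at $z=0$). For the parallelogram law, when $\varphi,\psi\in V$, expanding $Q(\varphi\pm\psi,\varphi\pm\psi)$ gives \eqref{10.2}; if at least one of $\varphi,\psi$ lies outside $V$, the right side of \eqref{10.2} is $\infty$, and the left side is also $\infty$ since not both of $\varphi\pm\psi$ can lie in $V$ (else $\varphi=\tfrac12[(\varphi+\psi)+(\varphi-\psi)]$ and $\psi=\tfrac12[(\varphi+\psi)-(\varphi-\psi)]$ would both be in $V$). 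By construction $V_q=V$.

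For direction (b), first I would show $V_q$ is a subspace. Closure under scalars is immediate from \eqref{10.1}, and closure under sums follows from the parallelogram law: if $q(\varphi),q(\psi)<\infty$, then $q(\varphi+\psi)+q(\varphi-\psi)=2q(\varphi)+2q(\psi)<\infty$ forces both nonnegative left-hand terms to be finite. Thus $Q$ from \eqref{10.7} is well-defined on $V_q\times V_q$. Hermitian symmetry $Q(\psi,\varphi)=\overline{Q(\varphi,\psi)}$ follows from the identities $q(\psi\mp i\varphi)=q(\varphi\pm i\psi)$, which are immediate from $\psi\mp i\varphi=\mp i(\varphi\pm i\psi)$ together with \eqref{10.1}. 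Evaluating \eqref{10.7} on the diagonal, using $q(2\varphi)=4q(\varphi)$ and $q((1\pm i)\varphi)=2q(\varphi)$, gives $Q(\varphi,\varphi)=q(\varphi)\ge 0$.

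The main technical step, and the place I expect to do real work, is $\bbC$-linearity of $Q(\varphi,\cdot)$. Write $Q=Q_r+iQ_i$ with $4Q_r(\varphi,\psi)=q(\varphi+\psi)-q(\varphi-\psi)$ and $4Q_i(\varphi,\psi)=q(\varphi-i\psi)-q(\varphi+i\psi)$. For additivity of $Q_r$ I would apply the parallelogram law with $(a,b)=(\varphi+\psi_1,\psi_2)$ and with $(a,b)=(\varphi+\psi_2,\psi_1)$, add the resulting identities, and use a third application with $(a,b)=(\varphi,\psi_1-\psi_2)$ to eliminate the common cross term, thereby producing a formula expressing $q(\varphi+\psi_1+\psi_2)$ in terms of $q(\varphi\pm\psi_j)$, $q(\psi_j)$, $q(\varphi)$ and $q(\psi_1-\psi_2)$. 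The sign-flipped analog gives the symmetric formula for $q(\varphi-\psi_1-\psi_2)$, and the difference telescopes to $4Q_r(\varphi,\psi_1+\psi_2)=4Q_r(\varphi,\psi_1)+4Q_r(\varphi,\psi_2)$. The same manipulation applied to $\psi\mapsto -i\psi$ yields additivity of $Q_i$. Additivity combined with $\bbR$-homogeneity from \eqref{10.1} gives $\bbR$-linearity of $Q(\varphi,\cdot)$ via the standard $\bbZ\to\bbQ\to\bbR$ passage plus continuous scaling, and \eqref{10.1} at $z=i$ together with the structure of \eqref{10.7} produces $Q(\varphi,i\psi)=iQ(\varphi,\psi)$, completing $\bbC$-linearity. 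This bilinearity step is the main obstacle; everything else is bookkeeping. To close, the two operations are mutually inverse: applying (a) then (b) returns the original $Q$ since \eqref{10.7} is an algebraic identity valid in every sesquilinear form, while (b) then (a) returns $q$ since by construction $Q(\varphi,\varphi)=q(\varphi)$ on $V_q$.
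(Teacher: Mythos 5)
The paper itself states this result as ``an elementary fact'' and gives no proof, so there is no proof to compare against; what you present is essentially the Jordan--von~Neumann polarization argument adapted to forms taking the value $\infty$, and it is substantively correct. The verification that $V_q$ is a subspace, the symmetric identity $q(\psi\mp i\varphi)=q(\varphi\pm i\psi)$ via $\psi\mp i\varphi=\mp i(\varphi\pm i\psi)$, the diagonal computation $Q(\varphi,\varphi)=q(\varphi)$, the two-step parallelogram manipulation establishing additivity of $Q_r$, and the identity $Q(\varphi,i\psi)=iQ(\varphi,\psi)$ all check out, as does the observation that \eqref{10.7} is an algebraic identity in any sesquilinear form (so the two constructions are mutual inverses).

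The one place you should tighten the argument is the passage from $\bbQ$-homogeneity to $\bbR$-homogeneity. You write ``the standard $\bbZ\to\bbQ\to\bbR$ passage plus continuous scaling,'' but \eqref{10.1} only gives continuity of $t\mapsto q(t\psi)$; it does not directly give continuity of $t\mapsto q(\varphi+t\psi)$, and a bare Cauchy functional equation can have non-linear solutions. The regularity you actually need comes from the parallelogram law itself: for $g(t)=Q_r(\varphi,t\psi)$, one has
\begin{equation*}
  |4g(t)|\le q(\varphi+t\psi)+q(\varphi-t\psi)=2q(\varphi)+2t^2 q(\psi),
\end{equation*}
so $g$ is additive and bounded on compact intervals, and any such function is automatically linear. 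With that one sentence inserted the argument is complete.
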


If $q:\calH \to (-\infty,\infty]$ so that there is an $\alpha$ so that $\wti{q}(\varphi)=q(\varphi)+\alpha \norm{\varphi}^2$ is a (positive) quadratic form, we say that $q$ is a \emph{semibounded quadratic form}.  Theorem \ref{T10.1} extends and we speak of semibounded sesquilinear forms (where $Q(\varphi,\varphi) \ge 0$ is replaced by $Q(\varphi,\varphi) \ge -\alpha\norm{\varphi}^2$).  For any semibounded sesquilinear form, we define $\beta=\inf_{\varphi \in V,\varphi \ne 0} Q(\varphi,\varphi)/\norm{\varphi}^2$ to be the lower bound of $Q$.

Given two quadratic forms, $q_1$ and $q_2$, we write
\begin{equation}\label{10.9}
  q_1 \le q_2 \iff \forall \varphi \in \calH, \quad q_1(\varphi) \le q_2(\varphi)
\end{equation}
If in addition
\begin{equation}\label{10.10}
  q_2(\varphi) < \infty \Rightarrow q_1(\varphi) = q_2(\varphi)
\end{equation}
we say that $q_1$ is an \emph{extension} of $q_2$.  The name comes from the fact that \eqref{10.9}/\eqref{10.10} is equivalent to $V_{q_2} \subset V_{q_1}$ and $Q_2 = Q_1 \restriction V_{q_2}\times V_{q_2}$.

Given a (positive) quadratic form, $q$, one defines a norm, $\norm{\cdot}_{+1}$ on $V_q$ by
\begin{equation}\label{10.8}
  \norm{\varphi}_{+1}^2 = q(\varphi)+\norm{\varphi}^2
\end{equation}
$\norm{\cdot}_{+1}$ is a norm (because of the $\norm{\varphi}^2$, we have that $\norm{\varphi}_{+1} \ne 0$ if $\varphi \ne 0$ even if $q(\varphi)=0$) which also obeys the parallelogram law so $\norm{\cdot} _{+1}$ comes from an inner product \cite[Theorem 3.1.6]{RA}.  We say that $q$ is a \emph{closed quadratic form} if and only if $V$ is complete in $\norm{\cdot}_{+1}$ (see Theorem \ref{T10.13} below for an important characterization of closed forms).  A subspace $W \subset V$ is called a \emph{form core} for $q$ if $W$ is dense in $V$ in $\norm{\cdot} _{+1}$.

We say that a quadratic form, $q$, is \emph{closable} if and only if $q$ has a closed extension. One can show that there is then a smallest closed extension, $\bar{q}$ (in that if $t$ is another closed extension of $q$, it is also an extension of $\bar{q}$).

\begin{example} \lb{E10.2}  Let $\calH=L^2(\bbR,dx)$.  Define $q$ with $V_q = C_0^\infty(\bbR)$ and for $\varphi \in V_q$
\begin{equation}\label{10.11}
  q(\varphi) = |\varphi(0)|^2
\end{equation}
For obvious reasons, we write $q=\delta(x)$, the Dirac delta function.  One can show \cite[Example 7.5.17]{OT} that this form is not closable  (see also the Remark after Theorem \ref{T10.13} below).
\end{example}

\begin{example} \lb{E10.3}  Let $\calK \subset \calH$ be a closed subspace, so $\calK$ is a Hilbert space.  Let $A$ be a self--adjoint operator on $\calK$.  We recall that the spectral theorem \cite[Chapters 5 and Section 7.2]{OT} lets one define $f(A)$ as an operator on $\calK$ for any real valued measurable function, $f$, from the spectrum of $A$ to $[0,\infty)$.  $f(A)$ is self--adjoint with domain $\{\varphi \,|\, \int |f(x)|^2 d\mu^A_\varphi(x) < \infty\}$ where $d\mu^A_\varphi$ is the spectral measure, defined, for example by $\jap{\varphi,(A-z)^{-1}\varphi} = \int (x-z)^{-1} d\mu^A_\varphi(x)$ for all $z \in \bbC\setminus\bbR$.  In particular, if $A$ is a positive self--adjoint operator on $\calK$, we can define a positive, self--adjoint operator, $A^{1/2}$ on $\calK$.  We  define the quadratic form $q_A$ on $\calH$ by
\begin{equation}\label{10.6A}
  q_A(\varphi) = \left\{
                   \begin{array}{ll}
                     \norm{A^{1/2}\varphi}^2, & \hbox{ if } \varphi\in\calK \textrm{ and } \varphi \in D(A^{1/2}) \\
                     \infty, & \hbox{ otherwise}
                   \end{array}
                 \right.
\end{equation}
This definition is basic even when $\calK=\calH$.  It is not hard to prove that this quadratic form is closed.  We call $V_q$ the form domain of $A$ and denote it by $Q(A)$.
\end{example}

\begin{example} \lb{E10.4} Given $A$ as in the last example and $g:\sigma(A) \to [0,\infty)$ which is continuous and bounded and obeys $\lim_{t \to \infty} g(t)=0$, we define $g(A)$ on $\calH$ by setting it to the spectral theorem $g(A)$ on $\calK$ and to $0$ on $\calK^\perp$.  If $A=0$ on $\calK$ (and in some sense $\infty$ on $\calK^\perp$), then for any $t>0$, we have that $e^{-tA}$ is the orthogonal projection onto $\calK$.
\end{example}

What makes quadratic forms so powerful is that, in a sense, Example \ref{E10.3} has a converse.  Here are two versions of this result:

\begin{theorem} \lb{T10.5} Let $q$ be a closed quadratic form.  Let $\calK=\overline{V_q}$.  Then there is a unique positive self--adjoint operator, $A$, on $\calK$ so that $q=q_A$.
\end{theorem}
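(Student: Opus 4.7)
The plan is to build $A$ by first constructing its resolvent $B = (A+\bdone)^{-1}$ via the Riesz representation theorem applied to the form inner product, and then inverting. Without loss of generality we may replace $\calH$ by $\calK$, so assume $V_q$ is dense in $\calH$. Consider the space $V_q$ equipped with the inner product $\langle \varphi,\psi\rangle_{+1} = Q(\varphi,\psi) + \langle \varphi,\psi\rangle$; by hypothesis $q$ is closed, so $(V_q, \langle \cdot,\cdot\rangle_{+1})$ is itself a (complex, separable) Hilbert space which I will call $\calH_{+1}$. The inclusion $\calH_{+1} \hookrightarrow \calH$ is a norm decreasing injection since $\norm{\varphi} \le \norm{\varphi}_{+1}$.

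For each $\psi \in \calH$, the map $\varphi \mapsto \langle \psi,\varphi\rangle$ is a conjugate linear functional on $\calH_{+1}$ bounded by $\norm{\psi}\,\norm{\varphi}_{+1}$, so by the Riesz representation theorem there exists a unique $B\psi \in \calH_{+1}$ with
\begin{equation}\label{10.rep}
\langle \psi,\varphi\rangle = \langle B\psi,\varphi\rangle_{+1} = Q(B\psi,\varphi) + \langle B\psi,\varphi\rangle, \qquad \forall \varphi \in V_q.
\end{equation}
The first step is to check that $B$, viewed as an operator on $\calH$, is bounded, self-adjoint, and satisfies $0 \le B \le \bdone$. Boundedness follows from $\norm{B\psi}^2 \le \norm{B\psi}_{+1}^2 = \langle \psi, B\psi\rangle \le \norm{\psi}\,\norm{B\psi}$, so $\norm{B} \le 1$; self-adjointness and positivity follow by taking $\varphi = B\eta$ in \eqref{10.rep} and using the symmetry of $Q$ and of the ordinary inner product.

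The next step, and the main technical point, is to show that $B$ has trivial kernel. If $B\psi = 0$, then \eqref{10.rep} gives $\langle \psi,\varphi\rangle = 0$ for all $\varphi \in V_q$; since $V_q$ is dense in $\calH$ this forces $\psi = 0$. Thus $B$ is injective with dense range $\ran(B) \subset V_q$, and I set
\begin{equation*}
D(A) = \ran(B), \qquad A = B^{-1} - \bdone.
\end{equation*}
Then $A$ is self-adjoint (the inverse of an injective bounded self-adjoint operator with dense range is self-adjoint), and $A \ge 0$ because $B \le \bdone$. Taking $\psi = (A+\bdone)\eta$ with $\eta \in D(A)$ in \eqref{10.rep} and choosing $\varphi = \eta$ yields $q(\eta) = Q(\eta,\eta) = \langle A\eta,\eta\rangle = \norm{A^{1/2}\eta}^2$, so $q$ and $q_A$ agree on $D(A)$.

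The remaining obstacle is to promote this equality from $D(A)$ to all of $V_q$, i.e.\ to show $D(A)$ is dense in $\calH_{+1}$ and that both $q$ and $q_A$ are closed extensions of their common restriction to $D(A)$. Density: if $\varphi \in V_q$ is $\langle \cdot,\cdot\rangle_{+1}$-orthogonal to $\ran(B)$, then \eqref{10.rep} gives $\langle \psi,\varphi\rangle = 0$ for every $\psi \in \calH$, so $\varphi = 0$. Since $q$ is closed by hypothesis, and $q_A$ is closed by the construction in Example \ref{E10.3}, and both are extensions of the same form on the $\norm{\cdot}_{+1}$-dense subspace $D(A)$, they coincide. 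Finally, uniqueness of $A$ is immediate: any self-adjoint $A'$ with $q_{A'} = q$ yields the same $B' = (A'+\bdone)^{-1}$ via \eqref{10.rep}, so $A' = A$. I expect the only delicate point to be verifying that $D(A)$ is a $\norm{\cdot}_{+1}$-core for $q$; this is where the closedness hypothesis on $q$ is essential, as Example \ref{E10.2} shows it cannot be dropped.
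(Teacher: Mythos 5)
The paper does not actually prove Theorem \ref{T10.5}; it points to Kato's book, Reed--Simon [Theorem VIII.15], and Simon [Theorem 7.5.5] for proofs, so there is no in-paper argument to compare against. Your construction is the standard one used in those references: build the resolvent $B = (A+\bdone)^{-1}$ by applying the Riesz representation theorem in the form Hilbert space $\calH_{+1} = (V_q, \langle \cdot,\cdot\rangle_{+1})$, check that $B$ is self-adjoint with $0 \le B \le \bdone$ and trivial kernel, set $A = B^{-1} - \bdone$, and verify $q = q_A$ on $D(A) = \ran(B)$. All of that is correct as you have written it.

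The one genuine gap is in the final step, where you argue that $q = q_A$ because both are closed and both extend the same form on the $\norm{\cdot}_{+1}$-dense subspace $D(A)$. As a general principle this is false: in the paper's point $\circled{3}$ after Theorem \ref{T10.9}, the Dirichlet form $q_1$ and the Neumann form $q_2$ are both closed, both extend the common restriction to $C_0^\infty(0,1)$, and $C_0^\infty(0,1)$ is $\norm{\cdot}_{+1}$-dense in $V_{q_1}$ — yet $q_1 \ne q_2$. The issue is that density in $V_q$ with the $q$-form norm (which you prove) gives only $V_q \subset V_{q_A}$ with $q_A\restriction V_q = q$; to get the reverse inclusion $V_{q_A} \subset V_q$ you also need $D(A)$ to be a form core for $A$ itself, i.e.\ $\norm{\cdot}_{+1,q_A}$-dense in $Q(A) = D(A^{1/2})$. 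That is a standard spectral-theorem fact (for $\eta \in D(A^{1/2})$ take $\eta_n = E_{[0,n]}(A)\eta \in D(A)$, so that $q_A(\eta_n - \eta) = \int_{(n,\infty)}\lambda\,d\mu_\eta \to 0$); it is asserted in the statement of the paper's Theorem \ref{T10.6}, but in a from-scratch proof it needs to be supplied or cited. Once you have it, a $\norm{\cdot}_{+1}$-Cauchy-sequence argument using the closedness of $q$ gives $V_{q_A} \subset V_q$, and the proof is complete.
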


\begin{remark}  The closure in $\overline{V_q}$ means closure in the Hilbert space topology (which in many cases is the entire Hilbert space).
\end{remark}

\begin{theorem} \lb{T10.6} Let $q$ be a closed quadratic form with $V_q$ dense in $\calH$.  Then, there is a unique self--adjoint operator, $A$, on $\calH$ so that:

(a) $D(A) \subset V_q$

(b) If $\varphi \in D(A), \psi \in V_q$, then
\begin{equation}\label{10.7A}
  Q_q(\psi,\varphi) = \jap{\psi,A\varphi}
\end{equation}
Moreover, $D(A)$ is a form core for $A$.
\end{theorem}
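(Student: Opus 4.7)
The plan is to invoke the Riesz representation theorem on $V_q$ equipped with the inner product $\langle \varphi, \psi \rangle_{+1} \equiv Q_q(\varphi, \psi) + \langle \varphi, \psi \rangle$; since $q$ is closed by hypothesis, $(V_q, \langle \cdot, \cdot \rangle_{+1})$ is a genuine Hilbert space. Because $\|\varphi\| \le \|\varphi\|_{+1}$, for each $\eta \in \calH$ the map $\varphi \mapsto \langle \eta, \varphi \rangle$ is a bounded antilinear functional on $V_q$, so Riesz supplies a unique $B\eta \in V_q$ with
\begin{equation}
Q_q(B\eta, \varphi) + \langle B\eta, \varphi \rangle = \langle \eta, \varphi \rangle \qquad \forall \varphi \in V_q.
\end{equation}
Substituting $\varphi = B\eta'$ shows that $(\eta, \eta') \mapsto \langle \eta, B\eta' \rangle$ is hermitian symmetric, so $B: \calH \to \calH$ is self--adjoint; the estimate $\|B\eta\|_{+1}^2 = \langle \eta, B\eta \rangle \le \|\eta\|\,\|B\eta\|_{+1}$ then yields $\|B\| \le 1$ and $B \ge 0$. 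Density of $V_q$ in $\calH$ forces $B$ to be injective (if $B\eta = 0$ then $\langle \eta, \varphi \rangle = 0$ for all $\varphi \in V_q$).

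Because $B$ is bounded, self--adjoint, and injective, its range is dense in $\calH$ and $B^{-1}$ is a positive self--adjoint operator with $D(B^{-1}) = \ran(B)$. Set $A = B^{-1} - I$, which is self--adjoint and positive with $D(A) = \ran(B) \subset V_q$, establishing (a). For (b), fix $\varphi = B\eta \in D(A)$ (so $\eta = (A+I)\varphi$) and $\psi \in V_q$, and take complex conjugates of the defining identity to get
\begin{equation}
Q_q(\psi, \varphi) + \langle \psi, \varphi \rangle = \langle \psi, \eta \rangle = \langle \psi, A\varphi \rangle + \langle \psi, \varphi \rangle,
\end{equation}
yielding $Q_q(\psi, \varphi) = \langle \psi, A\varphi \rangle$. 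For the form--core assertion, if $\chi \in V_q$ is $\|\cdot\|_{+1}$--orthogonal to $D(A) = \ran(B)$, the conjugated defining identity reads $\langle \chi, \eta \rangle = 0$ for every $\eta \in \calH$, forcing $\chi = 0$; hence $D(A)$ is $\|\cdot\|_{+1}$--dense in $V_q$, and since the $A$--form norm and $\|\cdot\|_{+1}$ coincide on $D(A)$ by (b), $D(A)$ is a form core for $A$.

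Uniqueness is obtained by a single adjointness computation: if $A_1, A_2$ are two self--adjoint operators satisfying (a), (b), then for $\varphi \in D(A_1)$ and $\psi \in D(A_2)$,
\begin{equation}
\langle \psi, A_1 \varphi \rangle = Q_q(\psi, \varphi) = \overline{Q_q(\varphi, \psi)} = \overline{\langle \varphi, A_2 \psi \rangle} = \langle A_2 \psi, \varphi \rangle,
\end{equation}
which shows $\psi \in D(A_1^*) = D(A_1)$ with $A_1 \psi = A_2 \psi$, so $A_2 \subset A_1$; by symmetry $A_1 = A_2$. The main conceptual hurdle to watch is verifying that $B^{-1} - I$ is genuinely self--adjoint rather than merely symmetric, which is the payoff of having realised $B$ as an honest bounded, injective, self--adjoint operator on all of $\calH$; this identification is precisely where the closedness hypothesis on $q$ carries the weight, for without it $\langle \cdot, \cdot \rangle_{+1}$ would fail to make $V_q$ complete and the entire Riesz step would collapse.
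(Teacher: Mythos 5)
Your proof is correct. The paper does not supply a proof of this theorem---it refers the reader to Kato's book, Reed--Simon, and Simon's \emph{Operator Theory}---and your construction of the resolvent $B=(A+\bdone)^{-1}$ by Riesz representation on the form Hilbert space $(V_q,\jap{\cdot,\cdot}_{+1})$, followed by inversion and the adjointness argument for uniqueness, is exactly the standard argument found in those references.
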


\begin{remarks}  1.  In his book \cite{KatoBk}, Kato calls Theorem \ref{T10.6} the first representation theorem and Theorem \ref{T10.5} the second representation theorem.  He puts Theorem \ref{T10.6} first because it is the version going back to the 1930s (see below).  I put Theorem \ref{T10.5} first because I think that it is the fundamental result -- indeed, it is the only variant in Reed--Simon \cite{RS1}  and Simon \cite{OT}.

2.  For proofs, see Kato \cite{KatoBk}, Reed--Simon \cite[Theorem VIII.15]{RS1} or Simon \cite[Theorem 7.5.5]{OT}.
\end{remarks}

\begin{example} \lb{E10.7} Let $B$ be a densely defined symmetric operator on $\calH$ with $\jap{\varphi,B\varphi} \ge 0$ for all $\varphi\in D(B)$.  $B$ might not be self--adjoint.  Define a quadratic form, $\wti{q_B}$, (which differs from $q_B$ if $B$ is self--adjoint!) by
\begin{equation}\label{10.7b}
  \wti{q_B}(\varphi) = \left\{
                         \begin{array}{ll}
                           \jap{\varphi,B\varphi}, & \hbox{ if } \varphi \in D(B) \\
                           \infty, & \hbox{ if } \varphi\notin D(B)
                         \end{array}
                       \right.
\end{equation}
\end{example}
If $B$ is not bounded, one can show that $\wti{q_B}$ is never closed but one can prove \cite[Theorem 7.5.19]{OT} that it is always closable.  If $q^\#$ is its closure, there is a self--adjoint $A$ with $q^\#=q_A$.  One can show (it is immediate from Theorem \ref{T10.6}) that $A$ is an operator extension of $B$ so $B$ has a natural self--adjoint extension.  It is called the \emph{Friedrichs extension}, $B_F$.  Unless $B$ is esa, there are lots of other self--adjoint extensions as we'll see.  It can happen (but usually doesn't) that $B$ is not esa but has a unique positive self--adjoint extension.

There is a form analog of the Kato--Rellich theorem:

\begin{theorem} [KLMN theorem] \lb{T10.8} Let $q$ be a closed quadratic form.  Let $(V_R,R)$ be a (not necessarily positive or even bounded from below) sesquilinear form with $V_q \subset V_R$ so that for some $a \in (0,1)$ and $b > 0$ and all $\varphi \in V_q$, we have that
\begin{equation}\label{10.8a}
  |R(\varphi,\varphi)| \le aq(\varphi)+b\norm{\varphi}^2
\end{equation}
Define a quadratic form, $s$, with $V_s=V_q$ so that for $\varphi \in V_q$, we have that
\begin{equation}\label{10.9a}
  s(\varphi) = q(\varphi) + R(\varphi,\varphi)+b\norm{\varphi}^2
\end{equation}
Then $s$ is a positive, closed quadratic form.
\end{theorem}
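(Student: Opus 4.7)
The plan is to verify positivity and the quadratic-form axioms by direct estimate, and then prove closedness by showing that the form norm $\|\varphi\|_{+1,s}^2 \equiv s(\varphi)+\|\varphi\|^2$ is equivalent to the $q$-form norm $\|\varphi\|_{+1,q}^2 \equiv q(\varphi)+\|\varphi\|^2$ on $V_q$, so that completeness of the latter (given, since $q$ is closed) transfers to the former.

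\textbf{Step 1 (positivity and form axioms).} For $\varphi\in V_q$, the hypothesis \eqref{10.8a} gives $R(\varphi,\varphi)\ge -aq(\varphi)-b\|\varphi\|^2$, hence
\begin{equation*}
s(\varphi)=q(\varphi)+R(\varphi,\varphi)+b\|\varphi\|^2\ge (1-a)q(\varphi)\ge 0,
\end{equation*}
using $a<1$. Extending $s$ by $+\infty$ off $V_q$, we get a map $\calH\to[0,\infty]$ with $V_s=V_q$. Each of the three ingredients $q$, $\varphi\mapsto R(\varphi,\varphi)$ (which is real, since $R$ is Hermitian sesquilinear), and $\|\cdot\|^2$ is homogeneous of degree $2$ and satisfies the parallelogram law on $V_q$, so $s$ inherits \eqref{10.1}--\eqref{10.2}. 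Thus $s$ is a (positive) quadratic form in the sense of the paper.

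\textbf{Step 2 (norm equivalence).} The lower estimate just derived reads $s(\varphi)\ge (1-a)q(\varphi)$, so
\begin{equation*}
\|\varphi\|_{+1,s}^2\ge (1-a)q(\varphi)+\|\varphi\|^2\ge (1-a)\,\|\varphi\|_{+1,q}^2.
\end{equation*}
For the reverse, the upper half of \eqref{10.8a} yields $R(\varphi,\varphi)\le aq(\varphi)+b\|\varphi\|^2$, whence
\begin{equation*}
\|\varphi\|_{+1,s}^2\le (1+a)q(\varphi)+(2b+1)\|\varphi\|^2\le C\,\|\varphi\|_{+1,q}^2,
\end{equation*}
with $C=\max(1+a,\,2b+1)$. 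Thus the two norms are equivalent on $V_q$.

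\textbf{Step 3 (closedness).} Let $\{\varphi_n\}\subset V_q$ be Cauchy in $\|\cdot\|_{+1,s}$. By the lower bound in Step 2, it is Cauchy in $\|\cdot\|_{+1,q}$; since $q$ is closed, there is $\varphi\in V_q$ with $\|\varphi_n-\varphi\|_{+1,q}\to 0$. By the upper bound, $\|\varphi_n-\varphi\|_{+1,s}\to 0$ as well, so $V_s=V_q$ is complete in $\|\cdot\|_{+1,s}$, i.e. $s$ is closed. There is no real obstacle here: the whole argument is a routine unpacking of the relative form bound, and the only point requiring mild care is that $R$ is only assumed to satisfy \eqref{10.8a} on $V_q$ (not on the larger $V_R$), which is exactly the domain on which it is used in the definition of $s$.
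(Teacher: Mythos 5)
Your proof is correct and follows essentially the same route the paper sketches in the remark after the theorem: use the relative bound \eqref{10.8a} to show $\norm{\cdot}_{+1,s}$ and $\norm{\cdot}_{+1,q}$ are equivalent norms on $V_q$, so closedness transfers from $q$ to $s$. Your Steps 1--3 are just a careful unpacking of the paper's one-line argument, adding the (straightforward but worth stating) verification of positivity and the quadratic-form axioms.
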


\begin{remarks}  1.  The name comes from Kato \cite{KatoQF}, Lax--Milgram \cite{LaxMilgramKLMN}, Lions \cite{LionsKLMN} and Nelson \cite{NelsonKLMN}.

2.  If formally $q(\varphi)=\jap{\varphi,A\varphi}, R(\psi,\varphi)=\jap{\psi,C\varphi}$, then since $s$ is closed, we have that $s=q_D$.  Then $D-b\bdone$ gives a self--adjoint meaning to the formal sum $A+C$.  It is called the form sum.

3.  The proof is really simple.  If $\norm{\cdot} _{+1,q}$ and $\norm{\cdot} _{+1,s}$ are the $\norm{\cdot} _{+1}$ for $q$ and $s$, then \eqref{10.8a} implies that the two norms are equivalent so one is complete if and only if the other one is.
\end{remarks}

\begin{example} \lb{E10.8a} Let $q$ be the quadratic form, $q_A$, for $A=-\tfrac{d^2}{dx^2}$ on $L^2(\bbR,dx)$.  The same argument that we used to prove \eqref{7.10} shows that any $\varphi \in V_q$ is a continuous function and for some $C$ and all $\epsilon > 0$ and all $\varphi \in V_q$:
\begin{equation}\label{10.9b}
  |\varphi(0)|^2 \le C\left[\epsilon q(\varphi)+\epsilon^{-1}\norm{\varphi}^2\right]
\end{equation}
Thus, by the KLMN theorem, we can define $A=-\tfrac{d^2}{dx^2}+\lambda \delta(x)$ for any $\lambda \in \bbR$ as the quadratic form $q_\lambda$ with $V_{q_\lambda}=V_q$ and, for all $\varphi \in V_q$:
\begin{equation}\label{10.9c}
  q_\lambda(\varphi) = q(\varphi)+\lambda |\varphi(0)|^2
\end{equation}
\end{example}

The following is elementary to prove but useful

\begin{theorem} \lb{T10.9} The sum of two closed quadratic forms is closed
\end{theorem}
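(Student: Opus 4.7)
The plan is to verify the definition of closedness directly. Write $q = q_1 + q_2$, understood via Theorem \ref{T10.1} as the quadratic form with domain $V_q = V_{q_1} \cap V_{q_2}$ and $q(\varphi) = q_1(\varphi) + q_2(\varphi)$ there (with $q(\varphi) = \infty$ otherwise). The associated norm is $\norm{\varphi}_{+1,q}^2 = q_1(\varphi) + q_2(\varphi) + \norm{\varphi}^2$, which dominates each of $\norm{\varphi}_{+1,q_i}^2 = q_i(\varphi) + \norm{\varphi}^2$ for $i=1,2$, since the $q_j$ are nonnegative.

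First, I would take an arbitrary $\norm{\cdot}_{+1,q}$-Cauchy sequence $\{\varphi_n\} \subset V_q$ and observe that the domination just noted immediately makes it $\norm{\cdot}_{+1,q_i}$-Cauchy in $V_{q_i}$ for each $i$. Because $q_1$ and $q_2$ are individually closed, there are elements $\varphi^{(i)} \in V_{q_i}$ with $\varphi_n \to \varphi^{(i)}$ in $\norm{\cdot}_{+1,q_i}$. Both convergences imply $\norm{\cdot}$-convergence in $\calH$, so $\varphi^{(1)} = \varphi^{(2)} =: \varphi$, and hence $\varphi \in V_{q_1} \cap V_{q_2} = V_q$.

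Next, I would show $\varphi_n \to \varphi$ in $\norm{\cdot}_{+1,q}$. From closedness of each $q_i$ we already have $q_i(\varphi_n - \varphi) \to 0$ and $\norm{\varphi_n - \varphi} \to 0$. Adding the two $q_i$ limits gives $q(\varphi_n - \varphi) = q_1(\varphi_n - \varphi) + q_2(\varphi_n - \varphi) \to 0$, so $\norm{\varphi_n - \varphi}_{+1,q}^2 \to 0$, completing the proof that $V_q$ is complete in $\norm{\cdot}_{+1,q}$.

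There is no real obstacle here; the one small point requiring care is that the sesquilinear form $Q_1 + Q_2$ associated by \eqref{10.7} to $q_1 + q_2$ is the sum of the individual sesquilinear forms on $V_q$, which is immediate from the polarization identity. The argument only uses positivity of the individual forms (to get the norm domination) and the separate closedness hypotheses, and trivially extends to any finite sum, and to a sum of a closed form with a form bounded by it with relative bound less than one (as in the KLMN theorem, Theorem \ref{T10.8}).
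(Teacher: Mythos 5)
Your proof is correct, but it takes a genuinely different route from the one the paper has in mind. In Remark~4 following the theorem, the paper says that the simplest proof is to invoke the Davies--Kato characterization (Theorem~\ref{T10.13}), namely that a quadratic form $q\colon \calH \to [0,\infty]$ is closed if and only if it is lower semicontinuous; granting that, the theorem is a one-liner, since the sum of two lower semicontinuous $[0,\infty]$-valued functions is lower semicontinuous. Your argument instead verifies completeness of $V_{q_1}\cap V_{q_2}$ in the $\norm{\cdot}_{+1,q_1+q_2}$ norm directly: the norm domination $\norm{\cdot}_{+1,q_1+q_2} \ge \norm{\cdot}_{+1,q_i}$ sends a $\norm{\cdot}_{+1,q}$-Cauchy sequence into a $\norm{\cdot}_{+1,q_i}$-Cauchy sequence for each $i$, the limits agree because each controls the Hilbert-space norm, and then $q_i(\varphi_n-\varphi)\to 0$ for both $i$ gives convergence in $\norm{\cdot}_{+1,q}$. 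The trade-off is clear: your argument is self-contained and uses only the definition of closedness, whereas the paper's approach is shorter but leans on the nontrivial equivalence of closedness and lower semicontinuity (which the paper doesn't actually establish until later in the section). One small remark: the aside about the sesquilinear form $Q_1+Q_2$ is fine but not really needed for your argument --- all you need is that $q_1+q_2$ is itself a quadratic form (satisfies \eqref{10.1}--\eqref{10.2}) so that $\norm{\cdot}_{+1,q}$ is a bona fide norm, and that is immediate from additivity of the parallelogram identity. Your closing observations about finite sums and the KLMN theorem are also right and, for the KLMN case, reproduce the reasoning of Remark~3 after Theorem~\ref{T10.8}.
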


\begin{remarks} 1.  This allows a definition of a self--adjoint sum of any two positive self--adjoint operators.

2.  It is obvious that $V_{q_1+q_2} = V_{q_1} \cap V_{q_2}$.

3.  There is a similar result for $n$ arbitrary closed forms.

4.  The simplest proof is to use the Davies--Kato characterization (below) that closedness is equivalent to lower semicontinuity.
\end{remarks}

We end our discussion of the general theory by noting some distinctions between forms and symmetric operators.

$\circled{1}$. There are closed symmetric operators which are not self--adjoint but every closed quadratic form is the form of a self--adjoint operator.

$\circled{2}$.  Every symmetric operator has a smallest closed extension but there exist quadratic forms with no closed extensions.

$\circled{3}$. If $A$ and $B$ are self--adjoint operators and $B$ is an extension of $A$ (i.e. $D(A) \subset D(B)$ and $B \restriction D(A) = A$), then $A=B$.  But there exist closed quadratic forms $q_1$ and $q_2$ where $q_2$ is an extension of $q_1$ but $q_1 \ne q_2$.  For example, let $\calH = L^2([0,1],dx)$ and $q_0$ given by
\begin{equation*}
  q_0(\varphi) = \left\{
                   \begin{array}{ll}
                     \int_{0}^{1} |\varphi'(x)|^2\, dx, & \hbox{ if } \varphi \in C^\infty([0,1]) \\
                     \infty, & \hbox{ otherwise}
                   \end{array}
                 \right.
\end{equation*}
Here $C^\infty([0,1])$ means the functions infinitely differentiable on $[0,1]$ with one sided derivatives at the end points. Let $q_1$ be the closure of the restriction of $q_0$ to $C_0^\infty(0,1)$ and $q_2$ the closure of $q_0$.  Then $q_1$ is the quadratic form of $-\tfrac{d^2}{dx^2}$ with Dirichlet boundary conditions and $q_2$ the quadratic form of $-\tfrac{d^2}{dx^2}$ with Neumann boundary conditions (see \cite[Examples 7.5.25 and 7.5.26]{OT}) and $q_2$ is an extension of $q_1$.

Having completed our discussion of the general theory, we turn to a brief indication of its history.  In his original paper on self--adjoint operators \cite{vNSA}, von Neumann noted that if $A$ was a closed symmetric operator with
\begin{equation}\label{10.10a}
  \jap{\varphi,A\varphi} \ge \epsilon\norm{\varphi}^2
\end{equation}
for some $\epsilon > 0$ and all $\varphi\in D(A)$, $A^*\restriction D(A)+\ker(A^*)$ is a self-adjoint extension $A_{KvN}$ of $A$.  By looking at $(A-\epsilon_1\bdone)_{KvN}+\epsilon_1\bdone$ for any $\epsilon_1 < \epsilon$, we get self--adjoint extensions, $B_{\epsilon_1} \ge \epsilon_1\bdone$.  von Neumann conjectured there were self--adjoint extensions with lower bound exactly $\epsilon$.  Many years later, Krein \cite{KreinHisExt} (see also Ando--Nishio \cite{AndoN}) proved that $\lim_{\epsilon_1\uparrow\epsilon}B_{\epsilon_1}$ exists (this follows from the monotone convergence theorem below).  Put differently, given $A \ge 0$ symmetric, there is the Krein--von Neumann extension $A_{KvN} \equiv \lim_{\epsilon_2\downarrow 0}\left[(A+\epsilon_2\bdone)_{KvN}-\epsilon_2\bdone\right]$ which is a positive self--adjoint extension.  (The full theory of positive self--adjoint extensions \cite[Theorem 7.5.20]{OT} shows the set of such extensions is all positive self--adjoint operators, $B$ with $A_{KvN} \le B \le A_F$.)

Friedrichs \cite{FriedHisExt} (long before Krein) provided the first proof of von Neumann's conjecture (Stone \cite{StoneBk} had a proof at about the same time) by a construction related to the method behind Theorem \ref{T10.6}.  A follow--up paper of Freudenthal \cite{Freud} did Friedrichs extension in something close to form language.  In the 1950s, work on parabolic PDEs and NRQM by Kato \cite{KatoQF}, Lax--Milgram \cite{LaxMilgramKLMN}, Lions \cite{LionsKLMN} and Nelson \cite{NelsonKLMN} led to a systematic general theory.  In particular, Kato's lecture notes \cite{KatoQF} had considerable impact.

Next, we turn to a discussion of monotone convergence of quadratic forms.  Given a closed form, $q$, with $\calK$ the closure of $V_q$, define for $z \in \bbC\setminus\bbR$
\begin{equation}\label{10.11b}
  (\tilde{A}-z)^{-1} \equiv (A-z)^{-1}P_\calK
\end{equation}
i.e. under $\calH=\calK\oplus\calK^\perp$, $(\tilde{A}-z)^{-1} = (A-z)^{-1}\oplus 0$, consistent with how we said to define $f(A)$.

We will need the following result of Simon \cite{SimonMonotone} (see also \cite[Theorem 7.5.15]{OT})

\begin{theorem} \lb{T10.10}  Any quadratic form $q$ has an associated closed quadratic form, $q_r$, which is the largest closed form less than $q$, i.e. $q_r \le q$ and if $t$ is closed with $t \le q$, then $t \le q_r$.
\end{theorem}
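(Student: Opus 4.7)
The plan is to construct $q_r$ explicitly as the ``regular part'' of $q$ by completing $V_q$ in the form norm $\|\cdot\|_{+1}$ and then quotienting out the kernel of the natural map back into $\calH$. Concretely, equip $V_q$ with the inner product associated to $\|\cdot\|_{+1}^2 = Q(\cdot,\cdot) + \jap{\cdot,\cdot}$, and let $\mathcal{W}$ be its abstract Hilbert space completion. Because the inclusion $V_q \hookrightarrow \calH$ is a contraction from $\|\cdot\|_{+1}$ to $\|\cdot\|$, it extends to a bounded linear map $i\colon \mathcal{W} \to \calH$. Let $N = \ker i$ and $\mathcal{W}_r = N^\perp \subset \mathcal{W}$, so that $i_r := i \restriction \mathcal{W}_r$ is injective. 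Then define
\begin{equation*}
q_r(\varphi) = \begin{cases} \|i_r^{-1}(\varphi)\|_\mathcal{W}^2 - \|\varphi\|^2, & \varphi \in i_r(\mathcal{W}_r), \\ \infty, & \text{otherwise.} \end{cases}
\end{equation*}

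First I would verify that $q_r$ is a closed quadratic form. Linearity of $i_r$, together with the parallelogram laws in $\mathcal{W}$ and in $\calH$, gives both $|z|^2$-homogeneity and the parallelogram law for $q_r$ on $V_{q_r} := i_r(\mathcal{W}_r)$ (the ambient $\calH$-parallelogram cancels the $-\|\varphi\|^2$ term). For closedness, observe that $\|\varphi\|_{+1,q_r}^2 = \|i_r^{-1}(\varphi)\|_\mathcal{W}^2$, so $i_r^{-1}\colon (V_{q_r},\|\cdot\|_{+1,q_r}) \to \mathcal{W}_r$ is an isometry; since $\mathcal{W}_r$ is a closed subspace of the complete space $\mathcal{W}$, the form domain $V_{q_r}$ is complete in its form norm.

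Next I would check $q_r \le q$ and the maximality. For $\varphi \in V_q$, regarded as a vector in $\mathcal{W}$, decompose $\varphi = \xi + \eta$ with $\xi \in \mathcal{W}_r$ and $\eta \in N$; then $i_r(\xi) = i(\varphi) = \varphi$ and by Pythagoras $\|\xi\|_\mathcal{W}^2 \le \|\varphi\|_\mathcal{W}^2 = q(\varphi)+\|\varphi\|^2$, so $q_r(\varphi) \le q(\varphi)$. For maximality, let $t$ be a closed quadratic form with $t \le q$ and fix $\xi \in \mathcal{W}_r$. Choose $\varphi_n \in V_q$ with $\varphi_n \to \xi$ in $\mathcal{W}$; then $\varphi_n \to i(\xi)$ in $\calH$, and from $t \le q$ on $V_q$ we get $\|\varphi_n - \varphi_m\|_{+1,t}^2 \le \|\varphi_n - \varphi_m\|_\mathcal{W}^2 \to 0$, so $\{\varphi_n\}$ is Cauchy in $\|\cdot\|_{+1,t}$. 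Closedness of $t$ furnishes a limit in $V_t$, which must equal $i(\xi)$ by uniqueness in $\calH$; passing to the limit in $\|\varphi_n\|_{+1,t}^2 \le \|\varphi_n\|_\mathcal{W}^2$ gives $t(i(\xi)) \le q_r(i(\xi))$. Off $V_{q_r}$ the inequality $t \le q_r = \infty$ is trivial.

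The main obstacle lies in finding the correct construction rather than in the verifications. A naive attempt, such as taking the pointwise supremum $\sup\{t : t \text{ closed}, t \le q\}$, fails because the pointwise supremum of quadratic forms need not satisfy the parallelogram law (e.g.\ $\max(|x|^2,|y|^2)$ on $\bbC^2$). The quotient construction above circumvents this by replacing ``largest form below $q$'' with a Hilbert space operation, namely taking the \emph{shortest} preimage $\xi \in \mathcal{W}$ of $\varphi$ under $i$; parallelogram is then inherited directly from the inner product structure on $\mathcal{W}_r$. Once one sees that $q$ is not closable precisely when $\ker i \ne \{0\}$, and that $\mathcal{W}_r$ naturally parametrizes the closable content of $q$, everything else reduces to standard Hilbert space bookkeeping.
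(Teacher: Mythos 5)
Your construction is exactly the canonical decomposition argument of Simon \cite{SimonMonotone}, which the paper cites as its source for Theorem \ref{T10.10}: complete $V_q$ in the form norm, extend the inclusion to a contraction $i\colon\mathcal{W}\to\calH$, and take $q_r$ to be the form induced on the range of $i$ restricted to $(\ker i)^\perp$. The verifications (parallelogram via linearity of $i_r^{-1}$, closedness via the isometry $V_{q_r}\cong\mathcal{W}_r$, $q_r\le q$ via Pythagoras, maximality via a Cauchy-sequence argument using $\|\cdot\|_{+1,t}\le\|\cdot\|_{\mathcal{W}}$) are all sound and mirror the paper's proof.
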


\begin{remarks}  1.  One defines $q_s=q-q_r$. More precisely, $V_{q_s} = V_q$ and for $\varphi\in V_q$ we have that $q_s(\varphi)=q(\varphi)-q_r(\varphi)$. ``r'' is for regular and ``s'' for singular.

2.  Let $\mu$ and $\nu$ be two probability measures on a compact space, X, and $d\nu = fd\mu+d\nu_s$ with $d\nu_s$ singular wrt $d\mu$ the Lebesgue decomposition (see \cite[Theorem 4.7.3]{RA}).  If $\calH=L^2(X,d\mu)$ and if $q_\nu$ is defined with $V_{q_\nu}=C(X)$ and for $\varphi \in C(X)$
\begin{equation}\label{10.12}
  q_\nu(\varphi) = \int |\varphi(x)|^2 d\nu(x)
\end{equation}
then \cite[Problem 7.5.7]{OT} $(q_\nu)_r$ is the closure of the form (on $C(X)$)
\begin{equation}\label{10.13}
  \varphi \mapsto \int f(x)|\varphi(x)|^2 d\mu
\end{equation}
whose associated operator is multiplication by $f(x)$ (on the operator domain of those $\varphi$ with $\int f(x)^2|\varphi(x)|^2 d\mu < \infty$). $V_{q_s} = C(X)$. For $\varphi \in C(X)$, $q_s$ is given by \eqref{10.12} with $d\nu$ replaced by $d\nu_s$. In particular, if $q$ is the form of \eqref{10.11}, then $q_r=0$.
\end{remarks}

The two monotone convergence theorems for (positive) quadratic forms are

\begin{theorem} \lb{T10.11} Let $\{q_n\}_{n=1}^\infty$ be an increasing family of positive closed quadratic forms.  Define
\begin{equation}\label{10.14}
  q_\infty(\varphi) = \lim_{n \to \infty} q_n(\varphi) = \sup_n q_n(\varphi)
\end{equation}
Then $q_\infty$ is a closed form.  If $\calK_n$ (resp. $\calK_\infty$) is the closure of $V_{q_n}$ (resp. $V_{q_n}$) and $A_n$ (resp. $A_\infty$) the associated self--adjoint operators on $\calK_n$ (resp. $\calK_\infty$), then for any $z \in \bbC\setminus\bbR$, we have that
\begin{equation}\label{10.15}
  (\wti{A_n}-z)^{-1} \overset{s}{\to}(\wti{A_\infty}-z)^{-1}
\end{equation}
where $\tilde{B}$ is given by \eqref{10.11b}.
\end{theorem}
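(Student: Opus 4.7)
The plan is to first establish closedness of $q_\infty$ so that it determines a self--adjoint operator $A_\infty$ on $\calK_\infty$, then prove strong convergence of the resolvents at the single value $z=-1$ via a monotone limit argument for bounded self--adjoint operators, and finally extend to all $z \in \bbC \setminus \bbR$ by continuous functional calculus. Closedness of $q_\infty$ I would obtain from the Davies--Kato characterization promised later in this section that closedness is equivalent to lower semicontinuity (lsc): each $q_n$ is closed, hence lsc, and the pointwise supremum of lsc $[0,\infty]$-valued functions is lsc, so $q_\infty$ is closed and the first representation theorem yields $A_\infty$ on $\calK_\infty$. Monotonicity reverses inclusions: $V_{q_n} \supset V_{q_{n+1}} \supset V_{q_\infty}$, and hence $\calK_n \supset \calK_\infty$ for every $n$.

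Setting $B_n = (\wti{A_n}+1)^{-1}$, the form inequality $q_n \le q_{n+1}$ translates into the operator inequality $0 \le B_{n+1} \le B_n \le 1$, so the standard monotone convergence theorem for bounded self--adjoint operators provides a strong limit $B$ with $0 \le B \le 1$. The main obstacle is identifying $B$ with $(\wti{A_\infty}+1)^{-1}$, and I would do this variationally. For fixed $\varphi \in \calH$, set $F_n(\eta) := q_n(\eta) + \norm{\eta - \varphi}^2$, adopting the convention $q_n = +\infty$ off $V_{q_n}$ (which forces any candidate $\eta$ into $\calK_n$); the standard quadratic minimization identifies $\psi_n := B_n\varphi$ as the unique minimizer of $F_n$, with minimum value $m_n = \norm{\varphi}^2 - \jap{\psi_n, \varphi}$ and the identity $q_n(\psi_n) + \norm{\psi_n}^2 = \jap{\psi_n, \varphi} \le \norm{\varphi}^2$. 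Write $\psi := \lim_n \psi_n = B\varphi$ and $\tilde\psi_\infty := (\wti{A_\infty}+1)^{-1}\varphi$. Testing $F_n$ against $\tilde\psi_\infty \in V_{q_\infty} \subset V_{q_n}$ and using $q_n(\tilde\psi_\infty) \uparrow q_\infty(\tilde\psi_\infty)$ gives $m_n \le F_n(\tilde\psi_\infty) \uparrow F_\infty(\tilde\psi_\infty) = m_\infty$, so $\lim m_n \le m_\infty$. Conversely, lsc of each $q_m$ along the strongly convergent sequence $\psi_n \to \psi$ yields, for $n \ge m$,
\[
q_m(\psi) \le \liminf_n q_m(\psi_n) \le \liminf_n q_n(\psi_n);
\]
letting $m \to \infty$ produces $q_\infty(\psi) \le \liminf q_n(\psi_n)$, and adding $\norm{\psi}^2 = \lim \norm{\psi_n}^2$ gives $F_\infty(\psi) \le \lim m_n \le m_\infty$. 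Since $F_\infty(\psi) \ge m_\infty$ by definition of $m_\infty$, the infimum is attained at $\psi$, so $\psi = \tilde\psi_\infty$ by strict convexity, i.e.\ $B\varphi = (\wti{A_\infty}+1)^{-1}\varphi$ for every $\varphi \in \calH$.

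To extend from $z=-1$ to general $z \in \bbC \setminus \bbR$, a direct computation from $(\wti{A_n}-z)^{-1} = (A_n-z)^{-1}P_{\calK_n}$ and $B_n = (A_n+1)^{-1}P_{\calK_n}$ gives the functional--calculus identity $(\wti{A_n}-z)^{-1} = h_z(B_n)$, where $h_z(y) = y/(1-(1+z)y)$; the analogous identity holds with $n$ replaced by $\infty$. For any $z \notin [0,\infty)$, and in particular any $z \in \bbC \setminus \bbR$, the function $h_z$ is continuous on the compact interval $[0,1] \supset \sigma(B_n) \cup \sigma(B)$, so uniform polynomial approximation combined with the uniform bound $\norm{B_n} \le 1$ promotes the strong convergence $B_n \to B$ to strong convergence $h_z(B_n) \to h_z(B)$, which is precisely \eqref{10.15}.
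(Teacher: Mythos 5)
Your proof is correct and follows the route the paper endorses: closedness of $q_\infty$ via the Davies--Kato lower-semicontinuity characterization (Theorem \ref{T10.13}, which is independent of the present theorem, so the forward reference is harmless), monotone operator convergence of $B_n = (\wti{A_n}+1)^{-1}$, a variational identification of the strong limit with $(\wti{A_\infty}+1)^{-1}$, and continuous functional calculus (the map $h_z$ on $[0,1]$) to pass to general $z \in \bbC \setminus \bbR$. The paper itself defers the details to \cite[Theorem 7.5.18]{OT} but explicitly credits the lsc strategy of Davies and Kato that you invoke, and the rest of your argument is the standard completion of that approach; the only blemish is a trivial typo where $\norm{\psi}^2 = \lim\norm{\psi_n}^2$ should read $\norm{\psi-\varphi}^2 = \lim\norm{\psi_n-\varphi}^2$.
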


\begin{theorem} \lb{T10.12}  Let $\{q_n\}_{n=1}^\infty$ be a decreasing family of positive closed quadratic forms.  Define
\begin{equation}\label{10.16}
  q_\infty(\varphi) = \lim_{n \to \infty} q_n(\varphi) = \inf_n q_n(\varphi)
\end{equation}
Let $A_\infty$ be the self--adjoint operator on $\calK_\infty$, the closure of $V_{(q_\infty)_r}$ associated to $(q_\infty)_r$.  Let $A_n$ be as in the last theorem.  Then \eqref{10.15} holds.
\end{theorem}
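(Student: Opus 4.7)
The plan is to reduce the problem to monotone convergence of bounded positive operators by working with resolvents at $z = -1$, dually to the strategy used for the increasing case (Theorem \ref{T10.11}). Extending each $q_n$ to all of $\calH$ by the value $+\infty$ off $V_{q_n}$, I would rely on the familiar variational identity
\begin{equation*}
\jap{\varphi, R_n \varphi} = \sup_{\psi \in \calH} \bigl\{ 2\Real \jap{\varphi, \psi} - q_n(\psi) - \norm{\psi}^2 \bigr\},
\end{equation*}
valid for any closed (equivalently, lower semicontinuous) form, where $R_n := (\wti{A_n} + 1)^{-1}$. Since $q_n$ is decreasing in $n$, the quantity inside the supremum is pointwise increasing in $n$, so $\{R_n\}$ is a monotone increasing sequence of bounded positive self-adjoint operators with $0 \le R_n \le 1$; it therefore converges strongly to some $R_\infty$ with $0 \le R_\infty \le 1$.

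Next, I would manufacture a self-adjoint operator out of $R_\infty$. Setting $\calM := (\ker R_\infty)^\perp$, the restriction of $R_\infty$ to $\calM$ is an injective bounded positive operator with dense range, and the functional calculus produces a positive self-adjoint operator $B$ on $\calM$ with $(B + 1)^{-1} = R_\infty \restriction \calM$, so that $R_\infty = (\wti{B} + 1)^{-1}$ in the sense of \eqref{10.11b}. Let $q_B$ denote the closed form associated to $B$, viewed as an $\infty$-valued form on $\calH$.

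The heart of the argument is to prove $q_B = (q_\infty)_r$. The key observation is an order-reversing dictionary implicit in the sup formula: if $q \le q'$ on $\calH$, then the corresponding resolvents satisfy $R_q \ge R_{q'}$, because shrinking the form can only enlarge the supremum. In one direction, $R_n \le R_\infty$ translates to $q_B \le q_n$ for every $n$, hence $q_B \le q_\infty = \inf_n q_n$; since $q_B$ is closed, Theorem \ref{T10.10} then forces $q_B \le (q_\infty)_r$. Conversely, $(q_\infty)_r \le q_\infty \le q_n$ yields $R_n \le (\wti{A_\infty} + 1)^{-1}$ for all $n$, and passing to the limit gives $R_\infty \le (\wti{A_\infty} + 1)^{-1}$, equivalently $q_B \ge (q_\infty)_r$. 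Combining these, $q_B = (q_\infty)_r$, so $\calM = \calK_\infty$, $B = A_\infty$, and $R_\infty = (\wti{A_\infty} + 1)^{-1}$.

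Finally, strong convergence of resolvents at $z = -1$ upgrades to strong resolvent convergence at every $z \in \bbC \setminus \bbR$ by the first resolvent identity together with the uniform bound $\norm{(\wti{A_n} - z)^{-1}} \le |\Ima z|^{-1}$. The main obstacle in executing the plan is the identification step: one must carefully handle the sup-formula characterization of resolvents for forms which take the value $+\infty$, and verify that the operation $q \mapsto q_r$ of passing to the closed regular part interacts correctly with the resolvent dictionary. The Davies--Kato characterization of closedness as lower semicontinuity on $\calH$ is precisely what makes this go through, since a pointwise supremum of affine functionals of $q$ automatically produces an lsc object on the resolvent side, so any form reconstructed from the strong resolvent limit is automatically closed and is thereby a legitimate candidate for $(q_\infty)_r$.
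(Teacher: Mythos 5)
Your proposal is correct, and since the paper gives no proof here but only the citation to \cite[Theorem 7.5.18]{OT}, the relevant comparison is with that reference, whose route your argument essentially reproduces via the form--resolvent anti-isomorphism. The one point to make explicit: you record only the easy direction of the dictionary ($q\le q'\Rightarrow R_q\ge R_{q'}$ from the sup formula), but both applications in your identification step --- passing from $R_n\le R_\infty$ to $q_B\le q_n$, and from $R_\infty\le(\wti{A_\infty}+1)^{-1}$ to $q_B\ge(q_\infty)_r$ --- actually invoke the converse, and that converse is precisely where the Davies--Kato characterization earns its keep: since $q$ closed $\iff$ $q+\norm{\cdot}^2$ lower semicontinuous, Fenchel--Moreau biconjugacy yields the inverted sup formula $q(\psi)+\norm{\psi}^2 = \sup_\varphi\bigl\{2\Real\jap{\varphi,\psi}-\jap{\varphi,R_q\varphi}\bigr\}$, which is again manifestly order-reversing and recovers $q$ from $R_q$, turning the correspondence into a genuine anti-isomorphism; with that supplied the rest of your argument (including the bootstrap from $z=-1$ to all of $\bbC\setminus\bbR$ via the resolvent identity and the uniform $|\Ima z|^{-1}$ bound) goes through.
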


\begin{remarks} 1. For proofs, see \cite[Theorem 7.5.18]{OT}.

2.  Let $q_n$ be the form of $-\tfrac{1}{n}\tfrac{d^2}{dx^2}+\delta(x)$ as defined in Example \ref{E10.8a}.  Then $q_n$ is decreasing and $q_\infty$ is the form $\delta(x)$ so that $(q_\infty)_r=0$.  This shows that in the decreasing case, the limit need not be closed or even closable.
\end{remarks}

Theorems of this genre appeared first in Kato's book \cite{KatoBk} (already in the first edition).  He only considered cases where all $V_{q_n}$ are dense.  In the increasing case, he assumed there was a $\tilde{q}$ with $V_{\tilde{q}}$ dense so that for all $n$, one has that $q_n \le \tilde{q}$.  In both cases, he proved there was a self--adjoint operator, $A_\infty$, with $A_n$ converging to $A_\infty$ in srs.  He considered the form $q_\infty(\varphi) = \lim_{n} q_n(\varphi)$.  In the decreasing case, he proved that if $q_\infty$ is closable, its closure is the form of $A_\infty$.  In the increasing case, he said it was an open question whether $q_\infty$ was the form of $A_\infty$.  This material from the 1966 first edition was unchanged from the 1976 second edition.

In 1971, Robinson \cite{RobMonotone} proved Theorem \ref{T10.11}.  He noted that $q_\infty$ was closed by writing $q_n = \sum_{j=1}^{n}s_j$ where $s_1=q_1, s_j=q_j-q_{j-1}$ if $j \ge 2$.  Then $q_\infty=\sum_{j=1}^{\infty} s_j$ and he says that the proof that $q_\infty$ is closed is the same as the proof that an infinite direct sum of Hilbert spaces is complete; see Bratteli--Robinson \cite[Lemma 5.2.13]{BR2} for a detailed exposition of the proof.  In 1975, Davies \cite{DaviesMonotone} also proved this theorem.  His proof relied on lower semicontinuity being equivalent to $q$ being closed (see below).  Robinson seems to have been aware of the results in Kato's book. While Davies quotes Kato's book for background on quadratic forms, he may have been unaware of the monotone convergence results which are in a later chapter (Chapter VIII) than the basic material on forms (Chapter VI).  When Kato published his second edition, he was clearly unaware of their work.

The lower semicontinuity fits in nicely with even then well known work on variational problems that used the weak lower semicontinuity of Banach space norms so it was not surprising.  Indeed Davies mentions it in passing in his paper without proof.  To add to the historical confusion, in his 1980 book \cite{DaviesBk}, when Davies quoted this result, he seems to have forgotten that it appeared first explicitly in his paper and attributes it to the 1966 first edition of Kato \cite{KatoBk} where it doesn't appear!

Shortly after this second edition, I wrote and published \cite{SimonMonotone} which had the notion of $(q)_r$ and the full versions of Theorems \ref{T10.11} and \ref{T10.12}.  I noted that these extended and complemented what was in Kato's book.  At the time I wrote the preprint, I was unaware of the relevant work of Davies and Robinson although I knew each of them personally.  In response to my preprint, Kato wrote to me that he had an alternate proof that in the increasing case, $q_\infty$ was always closed.  He stated a lovely result.

\begin{theorem} \lb{T10.13}  A quadratic form is closed if and only if it is lower semicontinuous as a function from $\calH$ to $[0,\infty]$.
\end{theorem}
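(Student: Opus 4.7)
The plan is to prove both implications directly from the defining characterization: $q$ is closed if and only if $V_q$ is complete in the norm $\|\cdot\|_{+1}$ given by \eqref{10.8}, together with the parallelogram law which makes $(V_q,\|\cdot\|_{+1})$ a Hilbert space when closed. The key ancillary fact is that the inclusion $(V_q,\|\cdot\|_{+1}) \hookrightarrow \calH$ is a bounded linear embedding of norm at most $1$, since $\|\varphi\| \le \|\varphi\|_{+1}$. I also use the standard equivalence: a $[0,\infty]$--valued function $q$ on a metric space is lower semicontinuous iff every sublevel set $\{q \le t\}$ is closed.

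For the direction closed $\Rightarrow$ lsc, I would fix $t \ge 0$ and show $\{q \le t\}$ is $\calH$--closed. Take $\varphi_n$ with $q(\varphi_n) \le t$ and $\varphi_n \to \varphi$ in $\calH$. Then $\{\varphi_n\}$ is $\|\cdot\|_{+1}$--bounded, so by the weak compactness of balls in the Hilbert space $(V_q,\|\cdot\|_{+1})$, a subsequence converges weakly there to some $\psi \in V_q$. Since the embedding into $\calH$ is bounded, the same subsequence converges weakly in $\calH$ to $\psi$; comparison with the strong $\calH$--limit forces $\psi = \varphi$, so $\varphi \in V_q$. Weak lower semicontinuity of the Hilbert norm $\|\cdot\|_{+1}$ then gives
\begin{equation*}
q(\varphi) + \|\varphi\|^2 = \|\varphi\|_{+1}^2 \le \liminf_k \|\varphi_{n_k}\|_{+1}^2 \le t + \|\varphi\|^2,
\end{equation*}
yielding $q(\varphi) \le t$.

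For the converse, assume $q$ is lsc, and let $\{\varphi_n\}$ be Cauchy in $(V_q,\|\cdot\|_{+1})$. It is then Cauchy in $\calH$, so $\varphi_n \to \varphi$ in $\calH$ for some $\varphi \in \calH$. Lower semicontinuity gives $q(\varphi) \le \liminf q(\varphi_n)$, and the right--hand side is finite because Cauchy sequences are bounded; hence $\varphi \in V_q$. To prove $\|\varphi_n - \varphi\|_{+1} \to 0$, fix $\epsilon > 0$, pick $N$ with $q(\varphi_n - \varphi_m) < \epsilon$ for $n,m \ge N$, fix $n \ge N$, and apply lsc to the sequence $\varphi_n - \varphi_m \to \varphi_n - \varphi$ (translation in $\calH$) to get $q(\varphi_n - \varphi) \le \liminf_m q(\varphi_n - \varphi_m) \le \epsilon$. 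Combined with $\|\varphi_n - \varphi\|^2 \to 0$, this proves completeness.

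The main technical point, really the conceptual pivot rather than a genuine obstacle, is the compatibility between the two topologies in the first direction: one must know that weak convergence in the intrinsic Hilbert topology of $V_q$ passes to weak convergence in $\calH$ so that uniqueness of weak/strong limits in $\calH$ identifies the limit. This is exactly where the continuity of the embedding $V_q \hookrightarrow \calH$ with norm at most $1$ enters, and it is what allows one to turn completeness of $V_q$ (an intrinsic statement) into closedness of sublevel sets in the ambient space $\calH$ (an extrinsic statement).
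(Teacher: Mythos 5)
Your proof is correct, and it takes what is essentially the standard route — the one the paper defers to in \cite[Theorem 7.5.2]{OT}. Both directions are sound: the forward direction uses weak sequential compactness of bounded sets in the Hilbert space $(V_q,\|\cdot\|_{+1})$ together with weak lower semicontinuity of its norm, pulled back to $\calH$ via the norm-$\le 1$ embedding so that uniqueness of weak limits in $\calH$ identifies the limit; the reverse direction is the direct Cauchy estimate, using lower semicontinuity of $q$ applied to $m \mapsto \varphi_n - \varphi_m$ at fixed $n$ to bound $q(\varphi_n-\varphi)$, which is exactly the right move. A minor alternative for the closed $\Rightarrow$ lsc direction, which some sources use, goes through the first representation theorem: write $q$ as an increasing limit of continuous functions $\varphi \mapsto \jap{\varphi,B_n\varphi}$ built from Yosida approximants of the associated operator (extended by $n\bdone$ on $\overline{V_q}^\perp$), giving lsc as a sup of continuous functions. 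Your argument has the advantage of needing only the completeness of $(V_q,\|\cdot\|_{+1})$ and no spectral theory, so it sits more cleanly inside the form framework; the operator-theoretic version trades weak compactness for the representation theorem. Either is acceptable, and the one you chose is the cleaner proof given the paper's form-first presentation.
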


\begin{remarks} 1.  For a proof, see \cite[Theorem 7.5.2]{OT}

2. This theorem provides a quick proof that $\delta(x)$ is not closable.  It is easy to find a $C_0^\infty(\bbR)$ function $\varphi$ with $\varphi(0)=1$ and a sequence $\varphi_n \in C_0^\infty$ with $\varphi_n(0)=0,\,\varphi_n \le \varphi$ and $\varphi_n \to \varphi$ in $L^2$.  Given this convergent sequence with $\lim \delta(\varphi_n) =0 < \delta(\varphi) = 1$, there cannot be a lower semicontinuous function that agrees with $\delta$ on $C_0^\infty$.
\end{remarks}

Given the theorem, it is immediate that $q_\infty$ is closed in the increasing case, since an increasing limit of lower semicontinuous functions is lower semicontinuous.  I note that in precisely this context, Theorem \ref{T10.13} was also found by Davies \cite{DaviesMonotone}.  Kato told me that he had no plans to publish his remark and approved my writing \cite{SimonLSC} that explores consequences of Theorem \ref{T10.13}.  However, in 1980, Springer published an ``enlarged and corrected'' printing of the second edition of Kato's book and one of the few changes was a completely reworked discussion of monotone convergence theorems!  In particular, he had the full Theorem \ref{T10.11} using Theorem \ref{T10.13}.  In the Supplemental Notes, he quotes \cite{SimonMonotone} and \cite{SimonLSC} but neither of the papers of Davies and Robinson, despite the fact that in response to their writing to me after the preprint, I added a Note Added in Proof to \cite{SimonMonotone} referencing their work.

The final topic of this section concerns pseudo--Friedrichs extensions and form definitions of the Dirac Coulomb operator. Recall that in Section \ref{s7} we discussed the free Dirac operator $T_0=\alpha\cdot(-i\nabla)+m\beta$ and the formal sum, \eqref{7.34}:
\begin{equation}\label{10.17}
  T=T_0+\frac{\mu}{|x|}
\end{equation}
As we saw in Section 7, Kato proved that \eqref{10.17} is esa--$3$ (where for the rest of the section, this means on $C_0^\infty(\bbR^3;\bbC^4)$) so long as $|\mu| < \tfrac{1}{2}$.  Moreover, one can prove esa--$3$ if and only if $|\mu| \le \tfrac{1}{2}\sqrt{3}$.  In his book, \cite[Sections V.5 and VII.3]{KatoBk}, Kato attempted to show that the $T$ of \eqref{10.17} had a natural self--adjoint extension for suitable $\mu \in (\tfrac{1}{2},1)$.  He found an extension of the KLMN theorem to cover cases where the unperturbed operator is not semibounded.  He proved the following result:

\begin{theorem} \lb{T10.14} Let $A$ be a self--adjoint operator and $B$ a symmetric operator with $D(B) \subset D(A)$ and so that $D(B)$ is a core for $|A|^{1/2}$.  Suppose that for some $a \in (0,1)$ and $b \ge 0$ and all $\varphi\in D(B)$ we have that
\begin{equation}\label{10.18}
  |\jap{\varphi,B\varphi}| \le a \jap{\varphi,|A|\varphi}+b\norm{\varphi}^2
\end{equation}
Then there is a unique self--adjoint operator, $C$, extending $A+B$ on $D(B)$ which also obeys
\begin{equation}\label{10.19}
  D(C) \subset D(|A|^{1/2})
\end{equation}
\end{theorem}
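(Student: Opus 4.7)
The plan is to bypass the KLMN theorem (which requires semiboundedness of $A$) and instead construct the resolvent $(C-z)^{-1}$ directly at a point $z=i\mu$ with $|\mu|$ large, using $T:=|A|+b/a+1$ as a positive reference operator. Since $T\ge\bdone$ and $T$ commutes with $A$ via the functional calculus of $A$, $T^{-1/2}$ and $(A-z)T^{-1}$ are bounded normal operators on $\calH$. On $Q(|A|)=D(T^{1/2})$ the hypothesis \eqref{10.18} reads $|\jap{\varphi,B\varphi}|\le a\jap{\varphi,T\varphi}$ for $\varphi\in D(B)$; since $D(B)$ is dense in $Q(|A|)$ in the form norm $\|T^{1/2}\cdot\|$ (the core hypothesis), the sesquilinear form of $B$ extends uniquely by continuity to a Hermitian form $\tilde B$ on $Q(|A|)\times Q(|A|)$ with the same bound. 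By Riesz representation this produces a bounded self-adjoint operator $H$ on $\calH$ with $\|H\|\le a<1$ and $\tilde B(\varphi,\psi)=\jap{T^{1/2}\varphi,\,H T^{1/2}\psi}$.

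Next I set $G(z):=(A-z)T^{-1}$. The functional calculus gives
\[
\|G(z)^{-1}\|=\|T(A-z)^{-1}\|=\sup_{\lambda\in\sigma(A)}\frac{|\lambda|+b/a+1}{|\lambda-z|},
\]
and for $z=i\mu$ an elementary one-variable optimization bounds this supremum above by $\sqrt{1+(b/a+1)^2/\mu^2}$, which tends to $1$ as $|\mu|\to\infty$. Fix such $\mu$ with $\|G(z)^{-1}\|\,\|H\|<1$. Then $\bdone+G(z)^{-1}H$ is invertible by Neumann series, hence so is $G(z)+H$, and I define
\[
R(z):=T^{-1/2}\bigl(G(z)+H\bigr)^{-1}T^{-1/2},
\]
a bounded injective operator on $\calH$ whose range lies in $\ran T^{-1/2}=D(T^{1/2})=Q(|A|)$. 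Set $D(C):=\ran R(z)$ and $C:=z+R(z)^{-1}$. Using $G(\bar z)^*=G(z)$ and $H^*=H$, one checks $R(\bar z)=R(z)^*$, so $C$ is symmetric with $\ran(C-z)=\ran(C-\bar z)=\calH$, hence self-adjoint; and $D(C)\subset Q(|A|)$ by construction.

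For the extension property, observe that for $\psi\in D(B)\subset D(A)$ one has $(A+B-z)\psi=T^{1/2}(G(z)+H)T^{1/2}\psi$ (well-defined because $\tilde B$ agrees with the form of $B$ on $D(B)$), whence $R(z)(A+B-z)\psi=\psi$ and $C\psi=(A+B)\psi$. For uniqueness, let $C'$ be another self-adjoint extension with $D(C')\subset Q(|A|)$. Given $\eta\in\calH$, write any solution of $(C'-z)\varphi=\eta$ as $\varphi=T^{-1/2}\chi$ with $\chi=T^{1/2}\varphi\in\calH$; pairing with $\psi\in D(B)$ and using that $C'$ extends $A+B$ on $D(B)$ yields $\jap{T^{1/2}\psi,(G(z)+H)\chi}=\jap{T^{1/2}\psi,T^{-1/2}\eta}$, and density of $T^{1/2}[D(B)]$ in $\calH$ forces $(G(z)+H)\chi=T^{-1/2}\eta$, i.e., $\varphi=R(z)\eta$. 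Hence $(C'-z)^{-1}=R(z)=(C-z)^{-1}$ and $C'=C$.

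The main obstacle is the very first step: transferring the form bound from $D(B)$ to all of $Q(|A|)$ while preserving the constant $a<1$, which is precisely what the hypothesis that $D(B)$ be a core for $|A|^{1/2}$ is designed to guarantee; without it, either the extension $\tilde B$ would fail to be unique or the sharp bound would be lost, and the Neumann series would break down. Once $H$ is in hand as a bounded self-adjoint operator with $\|H\|<1$ and the formal factorization $A+B-z=T^{1/2}(G(z)+H)T^{1/2}$ is rigorously available, the remainder is routine: the large imaginary part of $z$ serves only to bring $\|G(z)^{-1}\|$ close enough to $1$ that the perturbation $H$ can be absorbed.
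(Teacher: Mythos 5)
Your proof is correct, and it reproduces the pseudo-Friedrichs construction essentially as Kato carried it out in his book (Section V.5), which the paper cites without reproducing. The paper itself gives no proof of Theorem~\ref{T10.14}, so there is nothing in the text to compare against; but your argument matches the standard route: replace $|A|$ by $T=|A|+b/a+\bdone$ so that the form bound becomes $|\langle\varphi,B\varphi\rangle|\le a\langle\varphi,T\varphi\rangle$, use the core hypothesis to push the form to $Q(|A|)$ with the same constant, represent it as $T^{1/2}HT^{1/2}$ with $\|H\|\le a<1$, factor $A+B-z$ as $T^{1/2}(G(z)+H)T^{1/2}$, and absorb $H$ via a Neumann series once $|\mathrm{Im}\,z|$ is large enough that $\|G(z)^{-1}\|a<1$.

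Two details you elide but which do hold: (i) the resolvent identity $R(\bar z)-R(z)=(\bar z-z)R(\bar z)R(z)$ needed for symmetry of $C$ follows directly from $G(z)-G(\bar z)=(\bar z - z)T^{-1}$ together with the factorization, so $C$ is indeed self-adjoint with dense domain $\ran R(z)$ (density comes from $\ker R(z)^*=\ker R(\bar z)=\{0\}$); (ii) in the uniqueness argument, the density of $T^{1/2}[D(B)]$ follows from $D(B)$ being a core for $T^{1/2}$ (equivalently $|A|^{1/2}$, as these have the same domain and equivalent graph norms) combined with $\ran T^{1/2}=\calH$. Your elementary optimization yielding $\|T(A-i\mu)^{-1}\|\le\sqrt{1+(b/a+1)^2/\mu^2}$ is exactly right (equality in the AM–GM step occurs at $|\lambda|=\mu^2/(b/a+1)$), and it is the precise quantitative ingredient replacing the semiboundedness used in the ordinary KLMN theorem. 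In short: this is a correct, self-contained proof in the same spirit as the original.
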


Kato called $C$ the \emph{pseudo--Friedrichs extension}.  Kato remarked that this had little to do with quadratic forms (which for him were positive) but the constructions shared elements of Friedrichs' construction of his extension.  Faris \cite{FarisBk} has a presentation that uses sesquilinear forms and makes this closer to the KLMN theorem.

In applying this to Dirac operators, Kato \cite{KatoBk} states without proof, that for each $\varphi \in C_0^\infty(\bbR^3)$, one has:
\begin{equation}\label{10.20}
  \jap{\varphi,|x|^{-1}\varphi} \le \tfrac{\pi}{2} \jap{\varphi,|p|\varphi}
\end{equation}
in the sense that
\begin{equation}\label{10.21}
  \int \frac{|\varphi(x)|^2}{x} d^3x \le \frac{\pi}{2} \int |k| |\hat{\varphi}(k)|^2 d^3k
\end{equation}

Like Hardy's and Rellich's inequality, this is scale invariant.  And Kato implies (but doesn't explicitly state) that $\tfrac{\pi}{2}$ is the optimal constant.  This is often called Kato's inequality (of course, it has no connection to what we called Kato's inequality in Section \ref{s9}).  In his book, Kato states this inequality with its optimal constant and then says that it is equivalent to $|p|^{-1/2}|x|^{-1}|p|^{-1/2}$ as an operator on $L^2$ having norm $\tfrac{\pi}{2}$.  He then notes that since $|x|^{-1}$ has a Fourier space kernel $(2\pi^2)^{-1}|k-k'|^{-2}$, one has to compute the norm of the integral operator with kernel $(2\pi^2)^{-1}(|k|\,|k'|)^{-1/2}|k-k'|^{-2}$ but he doesn't tell the reader how to actually compute this norm.  However, Kato's proof can be found in the appendix at the end of this paper.

So while the book is given as the source for the inequality, the standard place given for the proof is a lovely paper of Herbst \cite{HerbstKatoIneq} who computes the norm of $|x|^{-\alpha}|p|^{-\alpha}$ as an operator on $L^p(\bbR^\nu)$ when $1 < p < \nu\alpha^{-1}$  (that the operator is bounded on $L^p$ is a theorem of Stein--Weiss \cite{StW}).  This has as special cases the optimal constants for Kato's, Hardy's and Rellich's inequalities.  Herbst notes that this operator commutes with scaling, so after applying the Mellin transform, it commutes with translations and so, it is a convolution operator in Mellin transform space.  The function it is convolution with is positive function so the norm is related to the computable integral of this explicit function.  Five later publications on the optimal constant are Beckner \cite{BeckHardy}, Yafaev \cite{YafaevHardy}, Frank--Lieb--Seiringer \cite{FLSHardy},Frank--Seiringer \cite{FSHardy} and Balinsky--Evans \cite[pgs 48-50]{BEvans}.

In his book, Kato \cite{KatoBk} noted that by combining his definition of the pseudo--Friedrichs extension and his inequality, one can define a natural self--adjoint extension of \eqref{10.17} for $\tfrac{1}{2} \le \mu < \tfrac{2}{\pi}$.  But note that $\tfrac{2}{\pi} = 0.6366\dots$ while $\tfrac{1}{2}\sqrt{3}=0.866\dots$ so
\begin{equation}\label{10.22}
  \frac{2}{\pi} < \frac{\sqrt{3}}{2}
\end{equation}
and the regime that Kato was able to treat in his book was a subset of the region where Kato--Rellich fails but one can still prove esa--$3$ by other means!

That said, Kato's ideas stimulated later work which picked out a natural extension for all $\mu$ with $|\mu| < 1$.  Among the papers on the subject are Schmincke \cite{SchmDirac2}, W\"{u}st \cite{Wust1, Wust2, Wust3}, Nenciu \cite{NenciuDirac}, Kalf et. al. \cite{KSWW}, Estaban--Loss \cite{EL} and Estaban--Lewin--S\'{e}r\'{e} \cite{ELS}.  Domain conditions motivated by Kato's pseudo--Friedrichs extension are common.  Typical is the following result of Nenciu \cite{NenciuDirac} (which is a variant of Schmincke \cite{SchmDirac2}):

\begin{theorem} \lb{T10.15} For any $\mu$ with $|\mu|<1$, there exists a unique self--adjoint operator, $T$, with $D(T) \subset D(|T_0|^{1/2})$ so that for all $\varphi \in D(T), \psi \in D(T_0^{1/2})$ we have that
\begin{equation}\label{10.23}
  \jap{\psi,T\varphi} = \jap{|T_0|^{1/2}\psi,(T_0|T_0|^{-1/2})\varphi} + \mu \jap{r^{-1/2}\psi,r^{-1/2}\varphi}
\end{equation}
\end{theorem}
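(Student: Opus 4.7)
The plan is to build $T$ by a representation-theorem argument applied to a sesquilinear form on $\mathcal{V}:=D(|T_0|^{1/2})$. Since $T_0^2=-\Delta+m^2\ge m^2$, the operator $T_0$ is invertible and its unitary sign $U:=T_0|T_0|^{-1}$ (with $U^2=\bdone$) is well-defined. Noting $T_0|T_0|^{-1/2}=U|T_0|^{1/2}$, the right side of \eqref{10.23} is the sesquilinear form
\begin{equation*}
Q(\psi,\varphi)=\jap{|T_0|^{1/2}\psi,\,U|T_0|^{1/2}\varphi}+\mu\jap{r^{-1/2}\psi,\,r^{-1/2}\varphi}
\end{equation*}
on $\mathcal{V}\times\mathcal{V}$. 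First I would verify that $Q$ is bounded on $\mathcal{V}$: the kinetic part is trivial since $U$ is unitary, while the Coulomb term is controlled by Kato's inequality \eqref{10.21} combined with $|p|\le|T_0|$ (a consequence of $|T_0|^2=p^2+m^2$ and operator monotonicity of the square root), which yields $\|r^{-1/2}\varphi\|^2\le\tfrac{\pi}{2}\||T_0|^{1/2}\varphi\|^2$.

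Next, since $|T_0|^{1/2}$ is a unitary identification of $\mathcal{V}$ (with its graph inner product) onto $\mathcal{H}$, I would rewrite $Q(\psi,\varphi)=\jap{|T_0|^{1/2}\psi,\,A|T_0|^{1/2}\varphi}$ where $A:=U+\mu C$ and $C:=|T_0|^{-1/2}r^{-1}|T_0|^{-1/2}$ is a bounded self-adjoint operator on $\mathcal{H}$. The natural candidate for $T$ is then
\begin{equation*}
D(T):=\{\varphi\in\mathcal{V}:A|T_0|^{1/2}\varphi\in D(|T_0|^{1/2})\},\quad T\varphi:=|T_0|^{1/2}A|T_0|^{1/2}\varphi,
\end{equation*}
which will be self-adjoint exactly when $A$ has a bounded inverse on $\mathcal{H}$. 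In this way the whole construction is reduced to a single quantitative question: invertibility of the bounded self-adjoint operator $A$ on $\mathcal{H}$.

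The main obstacle is to establish this invertibility for the full range $|\mu|<1$. The crude estimate $\|A\psi\|\ge(1-|\mu|\|C\|)\|\psi\|$ with the Kato bound $\|C\|\le\pi/2$ delivers invertibility only for $|\mu|<2/\pi$ — precisely the range Kato himself reaches via his pseudo--Friedrichs extension. To go from $2/\pi$ to $1$ one must exploit the Dirac structure beyond naive Kato. I would work in the $\pm$-energy decomposition $\mathcal{H}=\mathcal{H}_+\oplus\mathcal{H}_-$ determined by $P_\pm:=(\bdone\pm U)/2$, in which $A$ becomes the block operator with diagonal $\bdone+\mu P_+CP_+$ and $-\bdone+\mu P_-CP_-$ and off-diagonal $\mu P_\pm CP_\mp$; a Schur-complement argument reduces invertibility to an estimate on a reduced operator on $\mathcal{H}_+$. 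The sharp lower bound $\|A\psi\|\ge c(\mu)\|\psi\|$ with $c(\mu)>0$ iff $|\mu|<1$ comes from the partial-wave decomposition of $T_0$ into $2{\times}2$ radial blocks indexed by total angular momentum $j$, in each of which the Coulomb interaction reduces to an explicit ODE system whose threshold for a coercive estimate on the reduced form is the classical Darwin--Gordon bound $|\mu|<\sqrt{j^2-\mu^2}+\tfrac{1}{2}$, uniformly $\ge 1$ over all $j\ge\tfrac{1}{2}$.

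Once invertibility of $A$ is in hand, the remaining points are standard. Self-adjointness of $T$ follows from the representation theorem applied to the coercive sesquilinear form $Q+K\jap{\cdot,\cdot}_{\mathcal{H}}$ on $\mathcal{V}$, for $K$ large enough that the shifted operator $A+K|T_0|^{-1}$ is strictly positive. Uniqueness is then automatic: any other self-adjoint $T'$ with $D(T')\subset\mathcal{V}$ satisfying \eqref{10.23} represents the same form $Q$ on $D(T')\times\mathcal{V}$, so its graph is contained in the maximal graph inside $\mathcal{V}$ that represents $Q$; since two self-adjoint operators comparable by inclusion must coincide, $T'=T$.
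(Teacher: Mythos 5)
The paper does not prove Theorem~\ref{T10.15}; it is attributed to Nenciu \cite{NenciuDirac} (a variant of Schmincke \cite{SchmDirac2}) with no argument, so there is no proof in the text to compare against. Evaluating your attempt on its own terms: the reduction and the uniqueness step are correct, but the crux is missing and one subsidiary claim is wrong. What is right: with $S=|T_0|^{1/2}$ (invertible since $|T_0|\ge m$), $U=T_0|T_0|^{-1}$ a self-adjoint unitary involution, and $C=S^{-1}r^{-1}S^{-1}\ge 0$ bounded with $\norm{C}\le\pi/2$ via \eqref{10.21} and $|p|\le|T_0|$, the right side of \eqref{10.23} is $\jap{S\psi,(U+\mu C)S\varphi}$, the candidate is $T=S(U+\mu C)S$, and whenever $A:=U+\mu C$ is boundedly invertible, $S^{-1}A^{-1}S^{-1}$ is bounded, self-adjoint, injective, and its inverse is the required self-adjoint $T$; uniqueness follows by comparing \eqref{10.23} for $T$ and a competitor $T'$ on $D(T')\times D(T)$, which gives $T'\subset T^*=T$ and hence $T'=T$.

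What is missing is the invertibility of $A$ for the full range $|\mu|<1$, which is the whole point of the theorem over Kato's own $|\mu|<2/\pi$. Your Schur-complement outline does not close: complementing against the block $-\bdone+\mu P_-CP_-$ already demands $|\mu|\,\norm{P_-CP_-}<1$, and you offer no bound on $\norm{P_-CP_-}$ sharper than $\norm{C}\le\pi/2$; even the sharp Brown--Ravenhall (Tix) value of $\norm{P_\pm CP_\pm}$ still leaves this route short of $|\mu|=1$. The quoted ``Darwin--Gordon bound $|\mu|<\sqrt{j^2-\mu^2}+\tfrac12$'' is not a recognizable inequality (it has $\mu$ on both sides) and does not match the actual sectoral threshold $|\mu|<|\kappa|$ in the partial-wave block labeled by the Dirac quantum number $\kappa$. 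What is required --- and what Schmincke, W\"ust, and Nenciu each supply in quite different ways --- is a coercivity estimate on $A$ uniform over the partial-wave blocks; this is the nontrivial content of the theorem and cannot be cited away.

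What is wrong is the closing appeal to the representation theorem for the ``coercive'' form $Q+K\jap{\cdot,\cdot}$. Writing $\psi=S\varphi$, one has $Q(\varphi,\varphi)+K\norm{\varphi}^2=\jap{\psi,(A+K|T_0|^{-1})\psi}$. For high-momentum $\psi\in\mathcal{H}_-$, both $\mu C\psi\to 0$ and $K|T_0|^{-1}\psi\to 0$, so $\jap{\psi,(A+K|T_0|^{-1})\psi}\to -\norm{\psi}^2$ for every finite $K$: the form is not semibounded, and no constant shift renders it coercive. That non-semiboundedness is exactly why Kato introduced the pseudo-Friedrichs extension rather than invoking KLMN. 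The passage is redundant given your invertibility argument, but as stated it is false and would mislead.
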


\eqref{10.23} uses the fact that, by the above mentioned inequality of Kato, if $\psi \in D(|T_0|^{1/2})$, then $\psi \in D(r^{-1/2})$.

In 1983, Kato wrote a further paper on the Dirac Coulomb problem \cite{KatoDirac} (see also \cite{KatoHolo}) which seems to be little known (I only learned of it while preparing this article).  To understand Kato's idea, return to $-\Delta-\beta r^{-2}$ on $L^2(\bbR^\nu), \nu \ge 5$ as discussed in Proposition \ref{P7.7} above.  If $0 < \beta \le \tfrac{\nu(\nu-4)}{4}$, then $H(\beta)$ can be defined as the operator closure of the operator on $C_0^\infty(\bbR^\nu)$.  It is self--adjoint and except at the upper end, we know the domain is that of $-\Delta$.  For $\tfrac{\nu(\nu-4)}{4} < \beta \le \tfrac{(\nu-2)^2}{4}$, there is a Friedrichs extension since $-\Delta-\beta r^{-2} \ge 0$ on $C_0^\infty(\bbR^\nu)$.  Kato notes that the Friedrichs extension is natural from the following point of view: $H(\beta)$ is an analytic family of operators for $0 < \beta < \tfrac{(\nu-2)^2}{4}$ and is the unique analytic family from the esa region -- it is type (A) if $\beta \in (0,\tfrac{\nu(\nu-4)}{4})$ and type (B) if $\beta \in (0,\tfrac{(\nu-2)^2}{4})$.  (In fact, it can proven that as a holomorphic family, there is a square root singularity at $\beta = \tfrac{(\nu-2)^2}{4}$ and in the variable $m = \sqrt{\beta-\tfrac{(\nu-2)^2}{4}}$, one has a holomorphic family in $\mbox{Re}(m) > -1$; see Bruneau--Derezi\'{n}ski--Georgescu \cite{BDG}).

In the same way, Kato showed that the distinguished self--adjoint extension of the Dirac operator in \eqref{10.17} found by others for $|\mu| < 1$ is an analytic family for $\mu \in (-1,1)$ and is the unique analytic continuation from the Kato--Rellich region $\mu \in (-\tfrac{1}{2},\tfrac{1}{2})$.

\section{Eigenvalues, I: Bound State of Atoms} \lb{s11}

In a short companion paper \cite{KatoHe} to his famous 1951 paper \cite{KatoHisThm}, Kato proved that

\begin{theorem} [Kato \cite{KatoHe}] \lb{T11.1} The non--relativistic Helium atom with infinite nuclear mass has infinitely many bound states.  With the physical masses, it has at least 25,585 bound states.
\end{theorem}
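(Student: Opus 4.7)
The plan is to exploit the screening picture: one electron tightly bound to the nucleus effectively screens one unit of nuclear charge, so the second electron experiences an attractive $-1/r$ tail at infinity, and any attractive Coulomb tail supports infinitely many bound states. A min--max argument with product trial vectors then produces infinitely many eigenvalues of $H$ below the bottom of the continuous spectrum. First I would identify $\Sigma := \inf \sigma_{\textrm{ess}}(H)$. Since HVZ was not yet available in 1951, I would argue directly: Weyl sequences of the form $\psi_0(r_2)\chi_n(r_1)$ with $\chi_n$ an $L^2$ wave packet translated to infinity and $\psi_0$ the ground state of the one-electron ion show $\Sigma$ equals the ground state energy of $-\Delta - 2/r$, which is $-1$ in these units.

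Let $\psi_0 \in L^2(\bbR^3)$, $\|\psi_0\|=1$, be that ground state, and for $\phi \in C_0^\infty(\bbR^3)$ set $\Phi := \psi_0 \otimes \phi$. A direct computation gives
\[
\langle \Phi, H \Phi \rangle \;=\; \Sigma \,\|\phi\|^2 + \langle \phi, h_{\textrm{eff}} \phi\rangle, \qquad h_{\textrm{eff}} := -\Delta - \frac{2}{r} + V_{\textrm{eff}}(r),
\]
where $V_{\textrm{eff}}(r) := \int |\psi_0(r')|^2\, |r-r'|^{-1}\, d^3 r'$. Since $|\psi_0|^2$ is spherically symmetric with unit integral and exponential decay, Newton's theorem yields $V_{\textrm{eff}}(r) = 1/r + W(r)$ with $W$ exponentially small at infinity, so $h_{\textrm{eff}} = -\Delta - 1/r + W$ has an attractive Coulomb tail and hence infinitely many negative eigenvalues (obtained, for instance, by bracketing against hydrogenic Rydberg functions, whose support moves to infinity where $W$ vanishes). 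Choose an orthonormal family $\{\phi_n\}_{n\ge 1}$ for which the matrix $(\langle \phi_i, h_{\textrm{eff}} \phi_j\rangle)_{i,j \le N}$ is strictly negative definite for every $N$; then $\Phi_n := \psi_0 \otimes \phi_n$ are orthonormal in $L^2(\bbR^6)$ and the min--max principle yields at least $N$ eigenvalues of $H$ strictly below $\Sigma$ for every $N$, proving the first assertion. (If one insists on the fermionic singlet subspace, symmetrize $\Phi_n$; nothing qualitative changes.)

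For the physical-mass bound I would remove the center of mass to obtain $H_{\textrm{phys}}$, which differs from $H$ by a reduced-mass rescaling of the kinetic terms together with a small mass-polarization cross term proportional to $M^{-1}\nabla_1 \cdot \nabla_2$, where $M$ is the nuclear mass; the threshold $\Sigma_{\textrm{phys}}$ shifts by a correspondingly small amount of order $m_e/M \sim 10^{-4}$. The conceptual min--max argument is unchanged, but every estimate must now be made fully explicit. I would take $\phi_n = \phi_{n\ell m}$ to be hydrogenic Rydberg eigenfunctions of $-\Delta - 1/r$ at high angular momentum $\ell$: large $\ell$ pushes the centrifugal barrier outward and concentrates $|\phi_{n\ell m}|^2$ away from the nucleus, which simultaneously controls the screening error $\langle \phi_{n\ell m}, W \phi_{n\ell m}\rangle$, the mass-polarization correction, and the deviation of $\langle \phi_{n\ell m}, h_{\textrm{eff}} \phi_{n\ell m}\rangle$ from the bare hydrogen value $-1/(4n^2)$. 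The figure $25{,}585$ then arises as the count $\sum (2\ell+1)$ over those pairs $(n,\ell)$ for which the resulting total trial energy is rigorously below $\Sigma_{\textrm{phys}}$. The main obstacle, and the reason the quantitative statement is far more than a corollary of the infinite-mass result, is obtaining simultaneous, sharp enough control of the screening, mass-polarization, and off-diagonal matrix errors, uniformly in $(n,\ell)$, to justify the explicit integer rather than the soft ``some finite number'' that a purely qualitative min--max would give.
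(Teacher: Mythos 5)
There is a genuine gap in how you establish the threshold $\Sigma$. Weyl sequences $\psi_0(r_2)\chi_n(r_1)$ with $\chi_n$ translated to infinity show only that $-1 \in \sigma_{\mathrm{ess}}(H)$, i.e.\ $\Sigma \le -1$. But what the min--max argument actually needs is the reverse inequality $\Sigma \ge -1$: without it, producing trial states with energy just below $-1$ does not guarantee those states sit below the essential spectrum, so you cannot conclude they are discrete eigenvalues. Kato obtains the lower bound by a one-line monotonicity argument: dropping the (non-negative) electron repulsion gives $H \ge \wti{H} = h\otimes\bdone + \bdone\otimes h$ with $h = -\Delta - 2/r$, whose spectrum is computable and whose essential spectrum begins at $-1$, hence $\Sigma(H) \ge -1$. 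This is the load-bearing step that Weyl sequences cannot supply and that you would need to add.

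For the quantitative count you are also making the problem substantially harder than the paper does. Two observations let Kato avoid essentially all of the "uniform control of screening, mass-polarization, and off-diagonal errors" you describe as the main obstacle. First, Newton's theorem gives a \emph{one-sided} inequality $Q(x) = -2/|x| + \int |\varphi_0(y)|^2/\max(|x|,|y|)\, d^3y \le -1/|x|$ exactly; therefore the trial-energy quadratic form $-\Delta + Q$ is dominated by $-\Delta - 1/r$, and on the span of the first $k$ hydrogenic shells the sup is $\le -1 - 1/(4k^2)$ with no remainder term $W$ to estimate at all. Second, since $\langle \varphi_0, \nabla\varphi_0\rangle = 0$ (by reality or spherical symmetry of the ground state), the Hughes--Eckart cross term $2\alpha\,\boldsymbol{\nabla}_1\cdot\boldsymbol{\nabla}_2$ contributes \emph{exactly zero} to $\langle\Phi, H\Phi\rangle$ for product trial functions, so there is no mass-polarization correction to the trial energy. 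The only effect of the finite nuclear mass is on the threshold bound: writing $p_1^2 + p_2^2 + 2\alpha p_1\cdot p_2 = \alpha(p_1+p_2)^2 + (1-\alpha)(p_1^2+p_2^2)$ and discarding the non-negative pieces gives $H \ge (1-\alpha)(-\Delta_1-\Delta_2) - 2/r_1 - 2/r_2$, whence $\Sigma \ge -1/(1-\alpha)$. Then $25{,}585 = \sum_{j=1}^{42} j^2$ drops out from the elementary inequality $-1 - 1/(4k^2) < -1/(1-\alpha)$ for $k \le 42$; no large-$\ell$ asymptotics or uniform error control is needed.
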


The number 25,585 seems unusual but it is just $\sum_{j=1}^{42} j^2$ corresponding to the number of bound states in the first 42 complete shells of a Hydrogenic atom.

An operator like the Helium atom Hamiltonian typically has an essential spectrum, $[\Sigma,\infty)$ (for an arbitrary self--adjoint operator, $A$,  we define $\Sigma(A) = \inf \{\lambda\,|\, \lambda \in \sigma_{ess}(A)\}$ where, we recall, $\sigma_{ess}(A) =\sigma(A)\setminus\sigma_d(A)$ and $\sigma_d(A)$, the discrete spectrum, is the isolated points of $\sigma(A)$, the spectrum, for which the spectral projection is finite dimensional (see Section \ref{s2}).

There may be one or more eigenvalues of $A$ below $\Sigma$, i.e., counting multiplicity, $\{E_k\}_{k=1}^N,\, N \in \{0,1,2,\dots\}\cup\{\infty\}$ where $E_{j-1} \le E_j <\Sigma$.  If $N=\infty$, then $\lim_{k \to \infty} E_k = \Sigma$

Most modern approaches to results like Theorem \ref{T11.1} rely on the min--max principle \cite[Theorem 3.14.5]{OT} which says that if $A$ is self--adjoint and bounded from below, and if one defines
\begin{equation}\label{11.1}
  \mu_n(A) = \sup_{\psi_1,\dots,\psi_{n-1}} \left(\inf_{\substack{\varphi \in D(A),\, \norm{\varphi}=1 \\ \varphi\perp\psi_1,\dots,\psi_{n-1}}} \jap{\varphi,A\varphi}\right)
\end{equation}
then $\mu_j(A) = E_j(A)$ for $j \le N$ and if $N < \infty$, then for $j > N$, $\mu_j(A) = \Sigma(A)$.  Instead, Kato notes the following

\begin{lemma} \lb{L11.2} Let $A$ be a self--adjoint operator which is bounded from below and $W \subset D(A)$ a subspace of dimension $k$ so that
\begin{equation}\label{11.3}
  \sup_{\varphi \in W,\,\norm{\varphi}=1} \jap{\varphi,A\varphi} = J
\end{equation}
then
\begin{equation}\label{11.4}
  \dim \ran\, P_{(-\infty,J]}(A)  \ge k
\end{equation}
\end{lemma}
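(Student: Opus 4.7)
The plan is to argue by contradiction using a pigeonhole/dimension count and then extract a strict inequality from the spectral theorem. Suppose, contrary to \eqref{11.4}, that $\dim \ran P_{(-\infty,J]}(A) < k$. Set $P = P_{(-\infty,J]}(A)$ and consider the linear map $T\colon W \to \ran P$ defined by $T\varphi = P\varphi$. Since $\dim W = k > \dim \ran P$, the map $T$ has non-trivial kernel, so there exists $\varphi \in W$ with $\norm{\varphi} = 1$ and $P\varphi = 0$.

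Next I would extract a strict lower bound for $\jap{\varphi,A\varphi}$. Let $d\mu_\varphi$ denote the spectral measure of $A$ at $\varphi$, so that $\mu_\varphi$ is a probability measure on $\bbR$ and $\jap{\varphi,A\varphi} = \int \lambda\,d\mu_\varphi(\lambda)$ (the integral is well defined and finite because $\varphi \in D(A)$). The relation $P\varphi = 0$ says exactly that $\mu_\varphi$ is supported in $(J,\infty)$. By continuity of measure along the increasing union $(J,\infty) = \bigcup_{n\ge 1}(J + 1/n,\infty)$, we can find $n$ with $\mu_\varphi((J+1/n,\infty)) > 0$, and then
\begin{equation*}
\jap{\varphi,A\varphi} = \int_{(J,\infty)} \lambda\,d\mu_\varphi(\lambda) \ge J\,\mu_\varphi((J, J+1/n]) + (J+1/n)\,\mu_\varphi((J+1/n,\infty)) > J.
\end{equation*}

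Since $\varphi \in W$ with $\norm{\varphi}=1$, this contradicts the hypothesis that the supremum in \eqref{11.3} equals $J$. Hence the assumption $\dim \ran P_{(-\infty,J]}(A) < k$ is untenable and \eqref{11.4} holds. The only real subtlety in this plan, and the one place I would double-check carefully, is the passage from ``$\mu_\varphi$ is supported in $(J,\infty)$ and $\mu_\varphi \ne 0$'' to the strict inequality $\int \lambda\,d\mu_\varphi > J$; the continuity-of-measure step above is exactly the step that rules out the pathological case in which almost all the mass of $\mu_\varphi$ accumulates at $J$ from the right. Everything else is bookkeeping: the dimension count guaranteeing a non-zero $\varphi$ in $W \cap \ker P$, and the standard identification $\jap{\varphi,A\varphi} = \int \lambda\,d\mu_\varphi(\lambda)$ from the spectral theorem (see \cite[Chapter 5]{OT}).
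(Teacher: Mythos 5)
Your proof is correct and follows exactly the same strategy as the paper: argue by contradiction, use a dimension count (kernel of $\varphi \mapsto P_{(-\infty,J]}(A)\varphi$ restricted to $W$) to produce a unit vector $\varphi \in W$ orthogonal to $\ran P_{(-\infty,J]}(A)$, and then invoke the spectral theorem to conclude $\jap{\varphi,A\varphi}>J$, contradicting \eqref{11.3}. The only difference is that you spell out the last step --- the paper simply says ``by the spectral theorem $\jap{\varphi,A\varphi}>J$'' --- and your continuity-of-measure argument correctly rules out the worry about mass accumulating at $J^+$; this is a nice bit of care, but not a change of method.
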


\begin{remarks}  1.  $P_\Omega(A)$ are the spectral projections of $A$, see \cite[Section 5.1]{OT}.

2.  While Kato uses this lemma instead of the min-max principle, it should be emphasized that this lemma can be used to prove that principle!
\end{remarks}

\begin{proof}  Suppose that $\dim \ran\, P_{(-\infty,J]}(A) < k$.  Then we can find $\varphi \in W$ so $\varphi\perp\ran\, P_{(-\infty,J]}(A)$. Thus, by the spectral theorem $\jap{\varphi,A\varphi} > J$ contrary to \eqref{11.3}
\end{proof}

For Kato, $\Sigma$ is defined not in terms of essential spectrum but by
\begin{equation}\label{11.5}
  \Sigma = \inf \{\lambda\,|\,\dim \ran\, P_{(-\infty,\lambda)}(A) = \infty\}
\end{equation}
although it is the same.  His strategy is simple.

(1)  Get a lower bound, $\Sigma_0$, on $\Sigma$.

(2) Find a $k$--dimensional subspace, $W$, and a $J$ given by \eqref{11.3} which obeys $J < \Sigma_0$.  By \eqref{11.5}, $\dim \ran\, P_{(\infty,J]}(A) < \infty$ and by the lemma, it is at least $k$ so there must be at least $k$ discrete eigenvalues, counting multiplicity in $(-\infty,J]$.

Let's discuss first the case where the nuclear mass is infinite.  The Hamiltonian in suitable units is
\begin{equation}\label{11.6}
  H=-\Delta_1-\Delta_2 - \frac{2}{r_1}-\frac{2}{r_2}+\frac{1}{|\boldsymbol{r_1}-\boldsymbol{r_2}|}
\end{equation}
on $L^2(\bbR^6,d^6 x)$ where $x=(\boldsymbol{r_1},\boldsymbol{r_2}),\,\boldsymbol{r_j} \in \bbR^3$.  Kato then considers
\begin{equation}\label{11.7}
  \tilde{H} = H - \frac{1}{|\boldsymbol{r_1}-\boldsymbol{r_2}|} = h\otimes\bdone+\bdone\otimes h
\end{equation}
where
\begin{equation}\label{11.8}
  h=-\Delta-\frac{2}{r}
\end{equation}
He talks about ``two independent Hydrogen like atoms'' rather than tensor products, but it is the same thing.  Thus the spectrum of $\tilde{H} $ is $\{\lambda_1+\lambda_2\,|\, \lambda_1,\lambda_2 \in \sigma(h)\}$.  Since $\sigma(h) = \{-1/n^2\}_{n=1}^\infty \cup [0,\infty)$, we see that $\Sigma(\tilde{H}) = -1$.  Since $H \ge \tilde{H}$, we conclude that
\begin{equation}\label{11.9}
  \Sigma(H) \ge -1 \equiv \Sigma_0
\end{equation}
(we'll eventually see that this is actually equality).  This concludes step 1 in this infinite nuclear mass case.

Kato next picked the subspace, $W$, of trial functions.  Let $\varphi_0$ be the ground state of $h$, i.e.
\begin{equation}\label{11.10}
  h\varphi_0 = - \varphi_0
\end{equation}
Kato notes the explicit formula, $\varphi_0(\boldsymbol{x}) = \pi^{-1/2}e^{-|x|}$ but other than that it is spherically symmetric, the exact formula plays no role.  He picks $W=\{\varphi_0\otimes\eta \,|\, \eta \in W_1\}$ where $W_1$ will be a suitable subspace of $L^2(\bbR^3)$, i.e. $\varphi(\boldsymbol{x_1} ,\boldsymbol{x_2}) = \varphi_0(\boldsymbol{x_1})\eta(\boldsymbol{x_2})$

One easily computes that
\begin{equation}\label{11.11}
  \jap{\varphi,H\varphi} = -1+ \jap{\eta,(-\Delta+Q(x))\eta}
\end{equation}
where
\begin{equation}\label{11.12}
  Q(x) = -\frac{2}{|x|}+\int |\varphi_0(y)|^2 \frac{1}{|x-y|}\,d^3y
\end{equation}
The second term in \eqref{11.12} is the gravitational potential of a spherically symmetric ``mass distribution'' $|\varphi_0(y)|^2d^3y$ and this has been computed by Newton who showed that
\begin{equation}\label{11.13}
  \int_{S^2} \frac{d\omega}{|r\omega-\boldsymbol{x}|} = \frac{1}{\max(|x|,r)}
\end{equation}
(where $d\omega$ is normalized measure on the unit 2-sphere).  Thus
\begin{align}
  Q(x) &= -\frac{2}{|x|} + \int |\varphi_0(y)|^2 \frac{1}{\max(|x|,|y|)}\,d^3y \nonumber  \\
       & \le -\frac{1}{|x| \lb{11.14}}
\end{align}
since $\max(|x|,|y|) \ge |x|$.  Thus
\begin{equation}\label{11.15}
  \jap{\varphi,H\varphi} \le -1+\jap{\eta,(-\Delta-1/r)\eta}
\end{equation}
Picking $\eta$ in the space of dimension $\tfrac{1}{6}k(k+1)(2k+1)$ of linear combinations of eigenfunctions of $-\Delta-1/r$ of energies $\{-\tfrac{1}{4j^2}\}_{j=1}^k$, we see that the $J$ of \eqref{11.3} is $-1-(1/4k^2) < \Sigma_0$, so there are infinitely many eigenvalues below $\Sigma_0$ (which also shows that $\Sigma=\Sigma_0$).

If one now considers a nucleus of mass $M$ and electrons of mass $m$, the Hamiltonian with the center of mass motion removed becomes (instead of \eqref{11.6})
\begin{equation}\label{11.16}
  H=-\Delta_1-\Delta_2 - 2\alpha\boldsymbol{\nabla_1}\cdot\boldsymbol{\nabla_2}- \frac{2}{r_1}-\frac{2}{r_2}+\frac{1}{|\boldsymbol{r_1}-\boldsymbol{r_2}|}
\end{equation}
where
\begin{equation}\label{11.17}
  \alpha = \frac{m}{M+m}
\end{equation}
The extra $2\alpha\boldsymbol{\nabla_1}\cdot\boldsymbol{\nabla_2}$ term, called the Hughes--Eckart term (after \cite{HE}), is present if one uses atomic coordinates, $\mathbf{r}_j=\mathbf{x}_j-\mathbf{x}_3;\,j=1,2$, where $\mathbf{x}_j$ is the coordinate of electron $j$ and $\mathbf{r}_3$ is the nuclear position (we'll say a lot about such $N$--body kinematics below).

The second step in the proof is unchanged.  Since $\jap{\varphi_0,\boldsymbol{\nabla}\varphi_0}=0$ (by either the reality of $\varphi$ or its spherical symmetry), the Hughes--Eckart terms contribute nothing to the calculation of $\jap{\varphi,H\varphi}$ and we get a subspace of trial functions of dimension $\tfrac{1}{6}k(k+1)(2k+1)$ with $J_k=-1-1/4k^2$.

Here is how Kato estimated $\Sigma$ in this case.  With $p_j=-i\nabla_j$, one can write:
\begin{equation}\label{1.18}
  \boldsymbol{p}_1^2+\boldsymbol{p}_2^2+2\alpha\boldsymbol{p}_1\cdot\boldsymbol{p}_2=\alpha(\boldsymbol{p}_1+\boldsymbol{p}_2)^2+(1-\alpha)(\boldsymbol{p}_1^2+\boldsymbol{p}_2^2)
\end{equation}
Since $|\boldsymbol{r}_1-\boldsymbol{r}_2|^{-1} \ge 0$ and $\alpha(\boldsymbol{p}_1+\boldsymbol{p}_2)^2 \ge 0$, we see that
\begin{equation}\label{11.19}
  H \ge (1-\alpha)(-\Delta_1-\Delta_2)-\frac{2}{r_1}-\frac{2}{r_2} \equiv H_{Kato}
\end{equation}
As in the infinite mass case, $H_{Kato}$ is a sum of independent Hydrogen like atoms, so one finds that
\begin{equation}\label{11.20}
  \Sigma \ge \Sigma_0 = \Sigma(H_{Kato})=-\frac{1}{1-\alpha}
\end{equation}
Putting in the physical value of $\alpha$ (i.e. \eqref{11.17} with $M=$Helium nuclear mass and $m=$electron mass), one finds that
\begin{equation}\label{11.21}
  \Sigma_0 \ge -1-1/4k^2 \textrm{ if } k \le 42
\end{equation}
so Kato concluded there were at least 42 shells and got the number 25,585 of Theorem \ref{T11.1}.

\begin{remarks}  1.  As Kato emphasized, before his work, it wasn't proven that the Helium Hamiltonian had any bound states!

2.  Kato ignored both spin (the Hamiltonian is spin--independent but each electron has two spin states, so on $L^2(\bbR^{3N};\bbC^2\otimes\bbC^2,d^{3N}x)$ there are 4 times as many states) and statistics (the Pauli principle, which, as interpreted by Fermi and Dirac, says the total wave function is antisymmetric under interchange of a pair of particles in both spin and space).  $H$ is symmetric under interchange of the two electrons in space alone, so its eigenfunctions can be chosen to be either symmetric or antisymmetric under spatial interchange.  Kato's trial functions are neither but the lower bound, $N_{Kato}$ that he gets provides a lower bound on $N_S+N_A$, the sum of the spatially symmetric and spatially antisymmetric functions.  To get a state totally antisymmetric under interchange of space and spin, each spatially symmetric wave function is multiplied by a spin 0 state (multiplicity 1) and each spatially antisymmetric state is multiplied by a spin 1 state (multiplicity 3).  So taking into account both spin and statistics, the total number of states is $N_S+3N_A$ so
\begin{equation}\label{11.22}
  N_S+N_A \le N_S+3N_A \le 3(N_S+N_A)
\end{equation}
In particular, $N_{Kato}$ is a lower bound on $N_S+3N_A$, so Kato's estimates are lower bounds even if one properly takes into account spin and statistics.

3.  Even in the infinite mass case, Kato's method doesn't work for three electron atoms.  The problem is with his estimate of $\Sigma$.  If one drops the repulsion of electron 3 from both 1 and 2, one gets an independent sum of an ion and a charge 3 Hydrogen like atom.  The bottom of the essential spectrum of such a system is actually twice the ground state energy of two of the charge 3 Hydrogen like atoms which is below the energy of the ion where one expects (and we actually know) the bottom of the essential spectrum really is.
\end{remarks}

This completes our description of Kato's paper.  To go beyond it, one realizes the weak point of his analysis (as seen in Remark 3 above) is no efficient way of estimating the bottom of the continuous spectrum.  As a preliminary to discussing this bottom, we pause to present some $N$--body kinematics, an issue that already entered when we discussed the Hughes--Eckart term above.  We'll be more expansive than absolutely necessary, in part, because we'll need this when we briefly turn to $N$--body scattering in Sections \ref{s13}-\ref{s15} and, in part, because the elegant formalism, which I learned from Sigalov--Sigal \cite{SigSig} (see also Hunziker--Sigal \cite{HunzSig}), deserves to be better known.

Given $N$ particles $(\boldsymbol{r}_1,\dots,\boldsymbol{r}_N)$ with masses $m_1,\dots,m_N$, we consider the inner product
\begin{equation}\label{11.23}
  \jap{r^{(1)},r^{(2)}} = \sum_{j=1}^{N} m_j \boldsymbol{r^{(1)}}_j \cdot \boldsymbol{r^{(2)}}_j
\end{equation}
on $x$--space, $X = \bbR^{\nu N}$.  This is natural because the free Hamiltonian
\begin{equation}\label{11.24}
  H_0 =  -\sum_{j=1}^{N} (2m_j)^{-1}\Delta_{\boldsymbol{r}_j}
\end{equation}
is precisely one half the Laplace--Beltrami operator for the Riemann metric associated to \eqref{11.23}.

We let $X^*$ be the dual to $X$, which we think of as momentum space.  If $\boldsymbol{p} \in X^*$ and $\boldsymbol{x} \in X$, they are paired as
\begin{equation}\label{11.25}
  \jap{\boldsymbol{p},\boldsymbol{x}} = \sum_{j=1}^{N} \boldsymbol{p}_j \cdot \boldsymbol{x}_j
\end{equation}
as occurs in the Fourier transform.  This induces an inner product on $X^*$
\begin{equation}\label{11.26A}
  \jap{p^{(1)},p^{(2)}}_{X^*} = \sum_{j=1}^{N} (m_j)^{-1} \boldsymbol{p^{(1)}}_j \cdot \boldsymbol{p^{(2)}}_j
\end{equation}
consistent with \eqref{11.24}

A coordinate change is associated to a linear basis, $e_1,\dots,e_N$ of $\bbR^{N}$ via
\begin{equation}\label{11.26}
  \boldsymbol{\rho}_j(\boldsymbol{x}_1,\dots,\boldsymbol{x}_N)= \sum_{r=1}^{N} e_{jr}\boldsymbol{x}_r
\end{equation}
(the $e_{jr} \in \bbR$ and $\boldsymbol{x}_r \in \bbR^\nu$.)

To be a trifle pedantic, we note that $X$ and $X^*$ depend on $N$ and $\nu$.  We'll use $Y$ for the case $\nu=1$ so that $X = Y \otimes \bbR^\nu$ and the $X$ inner product is the tensor product of the $Y$ inner product and the Euclidean inner product on $\bbR^\nu$ which we denoted with $\cdot$ in \eqref{11.23} and \eqref{11.26}.  Since the $e$'s act on $Y$, we think of them as lying in $Y^*$ (acting isotropically on the $\bbR^\nu$ piece).  The dual basis $f_j$ is defined by
\begin{equation}\label{11.27}
  \jap{f_j,e_\ell} = \delta_{j\ell} \textrm{, i.e. } \sum_{r=1}^{N} f_{jr} e_{\ell r} = \delta_{j \ell}
\end{equation}
If we think of $E, F$ as the $N\times N$ matrices with $F_{jr}=(f_j)_r, \, E_{jr} = (e_j)_r$, then \eqref{11.27} says that $FE^T=\bdone$.  Since $\bdone^T=\bdone$ and for finite matrices $AB=\bdone \Rightarrow BA=\bdone$, we conclude that $EF^T=E^TF=F^TE=\bdone$, i.e.
\begin{equation}\label{11.28}
  \sum_j f_{rj}e_{sj} =  \sum_j f_{jr}e_{js} =  \sum_j f_{sj}e_{rj} =  \sum_j f_{js}e_{jr} = \delta_{rs}
\end{equation}
First this implies that if
\begin{equation}\label{11.29}
  \boldsymbol{k}_j(\boldsymbol{p}_1,\dots,\boldsymbol{p}_N) = \sum_{q=1}^{N} f_{jq}\boldsymbol{p}_q
\end{equation}
then by \eqref{11.28}
\begin{align}
  \sum_{j=1}^{N} \boldsymbol{k}_j\cdot\boldsymbol{\rho}_j &= \sum_{j=1}^{N}\sum_{q=1}^{N}\sum_{r=1}^{N} f_{jq}e_{jr}\boldsymbol{p}_q\cdot\boldsymbol{x}_r \nonumber \\
                                                  &= \sum_{q=1}^{N} \boldsymbol{p}_q\cdot\boldsymbol{r}_q \lb{11.29A}
\end{align}
so the $k$'s are the Fourier duals to the $\rho$'s and \eqref{11.28} describes the transformation of momenta.

Moreover, we claim that $\jap{e_j,e_k}_{Y^*}$ and $\jap{f_j,f_k}_Y$ are inverse matrices to each other, i.e.
\begin{equation}\label{11.30}
  \jap{e,e}_{Y^*}\jap{f,f}_Y = \bdone
\end{equation}
If $e_j^{(0)} = \delta_j$, then $f_j^{0} = \delta_j$ and $\jap{e_j^{(0)},e_k^{(0)}}_{Y^*} = m_j^{-1} \delta_{jk}$ is indeed the inverse to $\jap{f_j^{(0)},f_k^{(0)}}_Y = m_j\delta_{jk}$. Since $e_r = \sum_{q=1}^{N}E_{rq}e_q^{(0)}$ and $f_j = \sum_{k=1}^{N}F_{jk}f_K^{(0)}$, we see that $\jap{e,e}_{Y^*}\jap{f,f}_Y = E^T\jap{e^{(0)},e^{(0)}}_{Y^*}EF^T\jap{f^{(0)},f^{(0)}}_Y F = \bdone$ by \eqref{11.28} and \eqref{11.27} for the $e^{(0)},f^{(0)}$ special case just proven.

Finally by \eqref{11.26} and \eqref{11.27}, we see that
\begin{align}
  \sum_{j=1}^{N} m_j \boldsymbol{r}_j^2 &= \sum_{r,s=1}^{N} \jap{f_r,f_s}_Y \boldsymbol{\rho}_r \cdot \boldsymbol{\rho}_s \lb{11.31} \\
  \sum_{j=1}^{N} m_j^{-1} \boldsymbol{p}_j^2 &= \sum_{r,s=1}^{N} \jap{e_r,e_s}_{Y^*} \boldsymbol{k}_r \cdot \boldsymbol{k}_s \lb{11.32}
\end{align}
\begin{example} [Removing the center of mass] \lb{E11.3}  First consider $N=2$.  Since we have $V(\boldsymbol{r}_1-\boldsymbol{r}_2)$, we want $\boldsymbol{r}_1-\boldsymbol{r}_2$ to be one coordinate, i.e. $e_1=(1,-1)$.  The natural second coordinate should be orthogonal in $Y^*$, i.e. $\tfrac{1}{m_1}e_{21}-\tfrac{1}{m_2}e_{22} = 0$ so $(m_1,m_2)$ will work but it is more usual to take $e_2 = \tfrac{1}{M}(m_1,m_2),\, M=m_1+m_2$ the total mass.  That is, the second coordinate is $(m_1\boldsymbol{r}_1+m_2\boldsymbol{r}_2)/M$, the center of mass.  One computes
\begin{equation*}
  \jap{e_1,e_1}_{Y^*}=\frac{1}{m_1}+\frac{1}{m_2}\equiv \frac{1}{\mu} \qquad \jap{e_1,e_2}_{Y^*}= 0
\end{equation*}
\begin{equation}\label{11.33}
   \jap{e_2,e_2}_{Y^*} =\frac{1}{M^2}\left(\frac{m_1^2}{m_1}+\frac{m_2^2}{m_2}\right)=\frac{1}{M}
\end{equation}
We compute
\begin{equation}\label{11.34}
  f_1=\left(\frac{m_2}{M},-\frac{m_2}{M}\right), \qquad f_2=(1,1)
\end{equation}
By either direct calculation or \eqref{11.30}
\begin{equation*}
  \jap{f_1,f_1}_{Y}=\frac{m_1m_2^2+m_1^2m_2}{M^2} = \frac{m_1m_2}{M} = \mu \qquad \jap{f_1,f_2}_{Y}= 0
\end{equation*}
\begin{equation}\label{11.35}
  \jap{f_2,f_2}_Y = m_1+m_2=M
\end{equation}
Thus
\begin{align}
  \boldsymbol{r}_{12} = \boldsymbol{r}_1-\boldsymbol{r}_2 \qquad & \boldsymbol{R} = \frac{1}{M}(m_1\boldsymbol{r}_1+m_2\boldsymbol{r}_2) \\
  \boldsymbol{k}_{12} = \frac{m_2\boldsymbol{p}_1-m_1\boldsymbol{p}_2}{M} \qquad & \boldsymbol{K} = \boldsymbol{p}_1+\boldsymbol{p}_2
\end{align}
and we see that
\begin{equation}\label{11.37}
  m_1\boldsymbol{r}_1^2+m_2\boldsymbol{r}_2^2 = \mu\boldsymbol{r}_{12}^2+M\boldsymbol{R}^2;  \qquad H_0 = -\frac{1}{2M}\Delta_{\boldsymbol{R}}-\frac{1}{2\mu}\Delta_{\boldsymbol{r}_{12}}
\end{equation}

For $N$ bodies, motivated by the above, we want to take $f_N=(1,\dots,1)$ and $f_1,\dots,f_{N-1}$ all orthogonal to it.  Then $\jap{f,f}_Y$ will be the direct sum of an $(N-1)\times(N-1)$ matrix and $\jap{f_N,f_N}_Y=M$.  Thus $\jap{e,e}_{Y^*}$ with be the direct sum of an $(N-1)\times(N-1)$ matrix and $\jap{e_N,e_N}_{Y^*}=1/M$.  Moreover, we claim that
\begin{equation}\label{11.38}
  \jap{e_N,f} = \jap{f_N,f}/\jap{f_N,f_N}
\end{equation}
since this holds for each $f_j$.  Putting $f=\delta_j$ in, we conclude that $e_N=M^{-1}(m_1,\dots,m_N)$.  We summarize in this Proposition
\end{example}

\begin{proposition} \lb{P11.4} In any coordinate system, $\boldsymbol{\rho}_1,\dots,\boldsymbol{\rho}_N$ where $\boldsymbol{\rho}_j,\,j=1,\dots,N-1$ is a linear combination of $\boldsymbol{r}_k-\boldsymbol{r}_\ell$ and
\begin{equation}\label{11.39}
  \boldsymbol{\rho}_N=\frac{1}{M}\sum_{j=1}^{N} m_j\boldsymbol{r}_j
\end{equation}
we have that
\begin{equation}\label{11.40}
  H_0 = -\sum_{j=1}^{N} \frac{1}{2m_j}\Delta_{\boldsymbol{r}_j} = h_0\otimes\bdone + \bdone\otimes T_0
\end{equation}
where $h_0=-(2M)^{-1}\Delta_{\boldsymbol{\rho}_N}$ and $T_0$ is a quadratic form in $-i\boldsymbol{\nabla}_{\boldsymbol{\rho}_j}, \, j=1,\dots,N-1$.
\end{proposition}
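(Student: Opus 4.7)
My plan is to reduce the proposition to an elementary orthogonality calculation in the inner product $\jap{\cdot,\cdot}_{Y^*}$ on $Y^*$, and then read off the decomposition directly from formula \eqref{11.32}. The calculation uses only the two structural hypotheses: that $\boldsymbol{\rho}_N$ is the center of mass and that each $\boldsymbol{\rho}_j$, $j<N$, is a linear combination of interparticle differences.

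First I would observe that applying the chain rule to the change of variables \eqref{11.26} (or equivalently differentiating \eqref{11.29A}) yields the operator identity $-i\boldsymbol{\nabla}_{\boldsymbol{\rho}_j} = \boldsymbol{k}_j = \sum_q f_{jq}\boldsymbol{p}_q$. Thus \eqref{11.32}, read as an operator identity, gives
\[
H_0 = \tfrac{1}{2}\sum_{j=1}^N m_j^{-1}\boldsymbol{p}_j^2 = \tfrac{1}{2}\sum_{r,s=1}^N \jap{e_r,e_s}_{Y^*}\,\boldsymbol{k}_r\cdot\boldsymbol{k}_s,
\]
and the proposition reduces to showing that the Gram matrix $\jap{e_r,e_s}_{Y^*}$ is block-diagonal with respect to the splitting $\{1,\dots,N-1\}\sqcup\{N\}$, with $(N,N)$ entry equal to $1/M$.

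Next I would identify the coefficient vectors. By \eqref{11.39}, $e_{Nk}=m_k/M$. The hypothesis that each $\boldsymbol{\rho}_j$ with $j<N$ is a linear combination of differences $\boldsymbol{r}_k-\boldsymbol{r}_\ell$ translates exactly into the linear constraint $\sum_{k=1}^N e_{jk}=0$ for $j<N$. The required identities then follow by direct computation:
\[
\jap{e_N,e_j}_{Y^*} = \sum_{k=1}^N m_k^{-1}\,(m_k/M)\,e_{jk} = \frac{1}{M}\sum_{k=1}^N e_{jk} = 0, \qquad j<N,
\]
\[
\jap{e_N,e_N}_{Y^*} = \sum_{k=1}^N m_k^{-1}(m_k/M)^2 = \frac{1}{M^2}\sum_{k=1}^N m_k = \frac{1}{M}.
\]
Substituting these back gives
\[
H_0 = \frac{1}{2M}\boldsymbol{k}_N^2 + \tfrac{1}{2}\sum_{r,s=1}^{N-1}\jap{e_r,e_s}_{Y^*}\,\boldsymbol{k}_r\cdot\boldsymbol{k}_s.
\]

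Finally, since $e_N$ satisfies $\sum_k e_{Nk}=1\neq 0$ while $e_1,\dots,e_{N-1}$ all lie in the hyperplane $\{\sum_k v_k=0\}$, the family $\{e_j\}_{j=1}^N$ is linearly independent, so the map $(\boldsymbol{r}_1,\dots,\boldsymbol{r}_N)\mapsto(\boldsymbol{\rho}_1,\dots,\boldsymbol{\rho}_N)$ is a linear isomorphism of $X$. The induced (unitary, up to the constant Jacobian) change of variables on $L^2(X)$ factors the Hilbert space as $L^2_{\boldsymbol{\rho}_N}\otimes L^2_{\boldsymbol{\rho}_1,\dots,\boldsymbol{\rho}_{N-1}}$. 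Under this factorization the first term above acts only on the first factor and equals $h_0\otimes\bdone$ with $h_0=-\Delta_{\boldsymbol{\rho}_N}/(2M)$, while the second term acts only on the second factor as $\bdone\otimes T_0$, with $T_0$ a quadratic form in the $-i\boldsymbol{\nabla}_{\boldsymbol{\rho}_j}$, $j<N$. There is no real obstacle here: the only substantive step is the identity $\jap{e_N,e_j}_{Y^*}=0$ for $j<N$, which is essentially forced by how the inner product on $Y^*$ weights the coordinates by $m_k^{-1}$ against the center-of-mass coefficients $m_k/M$.
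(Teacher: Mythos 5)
Your proof is correct. The paper's own justification for Proposition~\ref{P11.4} lives in the discussion preceding it (the general $N$ case of Example~\ref{E11.3}) and runs through the dual coordinate vectors $f_j \in Y$: one takes $f_N = (1,\dots,1)$ and demands $\jap{f_j,f_N}_Y = 0$ for $j<N$, so that the Gram matrix $\jap{f,f}_Y$ is block-diagonal; then block-diagonality of $\jap{e,e}_{Y^*}$ follows from the inversion relation~\eqref{11.30}, and $e_N = M^{-1}(m_1,\dots,m_N)$ is recovered via~\eqref{11.38}. Your argument instead works directly on the side where the hypotheses are stated --- you read off $e_{Nk}=m_k/M$ from~\eqref{11.39}, translate the ``linear combination of differences'' condition into $\sum_k e_{jk}=0$, and compute $\jap{e_N,e_j}_{Y^*}=0$ and $\jap{e_N,e_N}_{Y^*}=1/M$ by hand, then substitute into~\eqref{11.32}. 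This is a genuinely more direct route: it avoids the detour through the $f$'s and the matrix-inversion lemma entirely, and it matches the proposition's hypotheses (which constrain the $e$'s, not the $f$'s) verbatim, so there is no hidden step of checking that the paper's chosen $f$-conditions are equivalent to the stated $e$-conditions. The paper's version, on the other hand, buys a tidy symmetry between the position-space metric~\eqref{11.31} and the momentum-space one~\eqref{11.26A}, which the author uses elsewhere in that section. Your closing observations --- linear independence of the $e_j$'s from $\sum_k e_{Nk}=1\neq 0$ versus $\sum_k e_{jk}=0$, and the resulting tensor factorization of $L^2(X)$ --- are also correct and appropriately fill in the step that the paper leaves tacit when writing the decomposition as a tensor product.
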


\begin{example} [Atomic Coordinates] \lb{E11.5} This is named for the natural coordinates when there is a heavy nucleus, $\boldsymbol{r}_N$ and $N-1$ electrons.  We take (with $m_j=m$ for $j=1,\dots,N-1$)
\begin{equation}\label{11.41}
     \boldsymbol{\rho}_j = \boldsymbol{r}_j-\boldsymbol{r}_N,\, j=1,\dots,N-1; \qquad   \boldsymbol{\rho}_N=\frac{1}{M}\sum_{j=1}^{N} m_j\boldsymbol{r}_j
\end{equation}
Thus, by \eqref{11.26}
\begin{equation}\label{11.42}
  e_j=\delta_j-\delta_N; \qquad e_N=\frac{1}{M}
\end{equation}
Since $\jap{a,a}_{Y^*}=\sum_{j=1}^{N} m_j^{-1}a_j^2$, we see that
\begin{equation}\label{11.43}
  \jap{e_N,e_j}_{Y^*} = M^{-1} \delta_{Nj}
\end{equation}
\begin{equation}\label{11.44}
  \jap{e_j,e_j}_{Y^*} = \frac{1}{m}+\frac{1}{m_N} \equiv \frac{1}{\mu} \qquad j=1,\dots,N-1
\end{equation}
\begin{equation}\label{11.44A}
  \jap{e_j,e_k} = \frac{1}{m_N} \qquad 1\le j,k\le N-1;\, j\ne k
\end{equation}
Thus, by \eqref{11.32}
\begin{align}
  T_0 &= -\sum_{j,k=1}^{N-1} \frac{1}{2} \jap{e_j,e_k}_{Y^*} \boldsymbol{\nabla}_j\cdot\boldsymbol{\nabla}_k \nonumber    \\
      &= -\sum_{j=1}^{N-1} \frac{1}{2\mu}\Delta_j - \frac{1}{m_N} \sum_{j<k} \boldsymbol{\nabla}_j\cdot\boldsymbol{\nabla}_k  \lb{11.44B}
\end{align}
(there is no 2 in front of $m_N$ because we have changed from a sum over $j \ne k$ to $j \le k$.)  Noting that
\begin{equation*}
  \frac{\mu}{m_N}=\frac{m\,m_n}{m+m_n}\frac{1}{m_n}=\frac{m}{m_n+m}
\end{equation*}
which is \eqref{11.16}/\eqref{11.17} (taking into account a changed meaning for the symbol $M$ there and here!).
\end{example}
\emph{}
\begin{example} [Jacobi Coordinates]  These coordinate changes go back to classical mechanics.  Jacobi noted one could avoid cross terms in the kinetic energy changing first from $r_1$ and $r_2$ to $r_{1,2}$ and the center of mass, $R_{12}$, of the first two particles.  Then one goes from $R_{12}$ and $r_3$ to $r_3-R_{12}$ and the center of mass of the first three particles.  After $N-1$ steps, one has $R$, the total center of mass as one of the coordinates, and $N-1$ ``internal'' coordinates.
\end{example}

\begin{example} [Clustered Jacobi Coordinates] \lb{E11.7} Given $\{1,\dots,N\}$, a \emph{cluster decomposition} or clustering, $\calC = \{C_\ell\}_{\ell=1}^k$, is a partition, i.e. a family of disjoint subsets whose union is $\{1,\dots,N\}$.  We set $\#(C_\ell)$ to be the number of particles in $C_\ell$.  A coordinate, $\boldsymbol{\rho}$, is said to be internal to $C_\ell$ if it is a function only of $\{\boldsymbol{r}_m\}_{m \in C_\ell}$ and is invariant under $\boldsymbol{r}_m \to \boldsymbol{r}_m+\boldsymbol{a}$, , equivalently, it is a linear combination of $\{\boldsymbol{r}_m-\boldsymbol{r}_q\}_{m,q \in C_\ell}$.  A \emph{clustered Jacobi coordinate system} is a set of $\#(C_\ell)-1$ independent internal coordinates for each cluster together with $\boldsymbol{R}_\ell = (\sum_{q \in C_\ell} m_q\boldsymbol{r}_q)/(\sum_{q \in C_\ell} m_q)$,  If we write $\calH(C_\ell)$ to be $L^2$ of the internal coordinates and $\calH^{(\calC)}$ to be $L^2$ of the internal coordinates then
\begin{equation}\label{11.46}
  \calH = \calH^{(\calC)} \otimes \bigotimes_{\ell=1}^k \calH(C_\ell)
\end{equation}
\begin{equation}\label{11.47}
  H_0 = \wti{T}^{(\calC)}\otimes\bdone\dots\otimes\bdone+ \sum_{\ell=1}^{k} \bdone\otimes\dots\otimes T(C_\ell) \otimes\dots\otimes\bdone
\end{equation}
where $\wti{T}^{(\calC)} = -\sum_{\ell=1}^{k} (2M(C_\ell))^{-1}\Delta_{\boldsymbol{R}_\ell}$ and $T(C_\ell)$ is a quadratic form in the derivatives of the internal coordinates.
\end{example}

As noted, the big limitation in Kato's work on Helium bound states concerns his estimate of $\Sigma$, the bottom of the essential spectrum of $H$. We turn to understanding that.  In the two body case, $H=-\Delta+V$, one expects that $\sigma_{ess}(H)=[0,\infty)$.  This requires that $V$ go to zero at spatial infinity in some sense.  If one is looking at $V$'s for which $D(H)=D(-\Delta)$, the natural condition is that $V(-\Delta+1)^{-1}$ is a compact operator (see \cite[Section 3.14]{OT}).  To be explicit, we introduce $L^p(\bbR^\nu)+L^\infty(\bbR^\nu)_\epsilon$ to be the set of $V$ so that for any $\epsilon>0$, one can decompose $V=V_{1,\epsilon}+V_{2,\epsilon}$ with $V_{1,\epsilon} \in L^p(\bbR^\nu)$ and $\norm{V_{2,\epsilon}}_\infty \le \epsilon$.  If $p$ is $\nu$--canonical, one can prove that if $V \in L^p(\bbR^\nu)+L^\infty(\bbR^\nu)_\epsilon$, then $V(-\Delta+1)^{-1}$ is compact and $\sigma_{ess}(H)=[0,\infty)$.  If one wishes, there are Stummel--type conditions to replace this but we'll make such $L^p$ assumptions below for simplicity of exposition.

We also want to remove the total center of mass motion if all masses are finite.  That is we let $\boldsymbol{R}=\left(\sum_{j=1}^{N} m_j\boldsymbol{r}_j\right)/\left(\sum_{j=1}^{N}m_j\right)$ and pick some set of internal coordinates so that $\calH^{full} = \calH_{CM}\otimes\calH,\, \calH^{full} = L^2(\bbR^{\nu N}), \calH_{CM}=$ functions of $\boldsymbol{R}$, $\calH=$ functions of the internal coordinates.  If $H^{full}=H_0+\sum_{j<k} V_{jk}$, then under this tensor product decomposition
\begin{equation}\label{11.48A}
  H^{full}=H_{0,CM}\otimes\bdone+\bdone\otimes H
\end{equation}
where $H_{0,CM}=-(2\sum_{j=1}^{N}m_j)^{-1}\Delta_{\boldsymbol{R}}$.  We'll consider $H$ below.

In \eqref{11.47}, the operator $\wti{T}^{(\calC)}$ has a decomposition like \eqref{11.48A} where $\calH$ is replaced by $\calH^{(\calC)}$, the functions of the differences of the centers of mass of the $C_j$.  We write
\begin{equation}\label{11.48B}
  \wti{T}^{(\calC)} = H_{0,CM}\otimes\bdone+\bdone\otimes T^{(\calC)}
\end{equation}

Given a cluster decomposition, $\calC=\{C_\ell\}_{\ell=1}^k$, we write $(jq) \subset \calC$ if $j$ and $q$ are in the same cluster of $\calC$ and $(jq) \not\subset \calC$ if they are in different clusters.  We define
\begin{align}
  V(C_\ell)  &= \sum_{\substack{j,q\in C_\ell \\ j<q }} V_{jq} \lb{11.49} \\
  V(\calC)   &= \sum_{\ell=1}^{k} V(C_\ell) = \sum_{\substack{ (jq) \subset \calC \\ j<q}} V_{jq} \lb{11.50} \\
  I(\calC)   &= \sum_{j<q} V_{jq} - V(\calC) = \sum_{\substack{ (jq) \not\subset \calC \\ j<q}} V_{jq} \lb{11.51}
\end{align}
$V(\calC)$ is the \emph{intracluster interaction} and $I(\calC)$ \emph{the intercluster interaction}.  We define on $\calH(C_\ell)$
\begin{equation}\label{11.52}
  h(C_\ell) = T(C_\ell) + V(C_\ell)
\end{equation}
\begin{align}
  H(\calC)  &= T^{(\calC)}\otimes\bdone\dots\otimes\bdone+\sum_{\ell=1}^{k} \bdone\otimes\dots\otimes h(C_\ell)\otimes\dots\otimes\bdone \lb{11.53} \\
            &= H-I(\calC) \nonumber \\
   \Sigma(\calC) &= \sum_{\ell=1}^{k} \inf\sigma(H(C_\ell)) \lb{11.54}
\end{align}
We let $\calC_{min}$ be the one cluster decomposition of $\{1,\dots,N\}$ so $H(\calC_{min}) = H$.  We note that
\begin{equation}\label{11.54A}
    \calC \ne \calC_{min} \Rightarrow \sigma(T^{(\calC)}) = [0,\infty)
\end{equation}

By \eqref{11.53}, we have that $\sigma(H(\calC))=\sigma(T^{(\calC)})+\sigma(H(C_1))+\dots+\sigma(H(C_k))$.  By \eqref{11.54A}
\begin{equation}\label{11.55}
  \calC \ne \calC_{min} \Rightarrow\sigma(H(\calC)) = [\Sigma(\calC),\infty)
\end{equation}
When we discuss $N$--body spectral and scattering theory briefly in Sections \ref{s12}--\ref{s14}, we'll be interested in thresholds.  A \emph{threshold}, $t$, is a decomposition $\calC=\{C_\ell\}_{\ell=1}^k \ne \calC_{min}$ and an eigenvalue, $E_\ell$ of $h(C_\ell)$ for each $\ell=1.\dots,k$.  The \emph{threshold energy} is $E(t)=\sum_{\ell=1}^{k} E_\ell$.  Of course, $E(t) \ge \Sigma(\calC)$.

Fix $\calC \ne \calC_{min}$.  Pick distinct vectors, $X_1,\dots,X_k \in \bbR^\nu$.  For $\lambda \in \bbR$, let $U(\lambda)$ be the unitary implementing $x_j \mapsto x_j+\lambda X_p$ if $j \in C_q$.  It is easy to see that $U(\lambda)H(\calC)U(\lambda)^{-1} = H(\calC)$ and if each $V_{jq} \in L^p(\bbR^\nu)+L^\infty(\bbR^\nu)_\epsilon$, then for all $\varphi \in D(-\Delta)$ one has that
\begin{equation}\label{11.56}
  \lim_{\lambda \to \infty}[U(\lambda)H U(\lambda)^{-1} - H(\calC)]\varphi =0
\end{equation}
which implies \cite[Problem 3.14.5]{OT} that $\sigma(H(\calC)) = [\Sigma(\calC),\infty) \subset \sigma(H)$.  In particular, if
\begin{equation}\label{11.57}
  \Sigma = \inf_{\calC \ne \calC_{min}} \Sigma(\calC)
\end{equation}
then
\begin{equation}\label{11.58}
  [\Sigma,\infty) \subset \sigma(H)
\end{equation}
The celebrated HVZ theorem says that

\begin{theorem} [HVZ Theorem] \lb{T11.8} For $N$--body Hamiltonians with $V_{jq} \in L^p(\bbR^\nu)+L^\infty(\bbR^\nu)_\epsilon$ (with $p$ $\nu$--canonical) one has that
\begin{equation}\label{11.59}
  \sigma_{ess}(H) = [\Sigma,\infty)
\end{equation}
\end{theorem}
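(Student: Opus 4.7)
The inclusion $[\Sigma,\infty)\subset\sigma(H)$ has already been established at \eqref{11.58}. I would first upgrade it to $[\Sigma,\infty)\subset\sigma_{ess}(H)$: given $E\ge\Sigma$, pick $\calC\ne\calC_{min}$ with $\Sigma(\calC)\le E$, so $E\in\sigma(H(\calC))=[\Sigma(\calC),\infty)$ by \eqref{11.55}. Choose a Weyl sequence $\{\psi_n\}$ for $H(\calC)$ at $E$ consisting of unit vectors compactly supported in the internal and intercluster coordinates. Applying the translations $U(\lambda_n)$ of \eqref{11.56} with $\lambda_n\to\infty$ fast enough that the supports of $U(\lambda_n)\psi_n$ become pairwise disjoint, the translated sequence is a Weyl sequence for $H$ at $E$ that converges weakly to zero, placing $E$ in $\sigma_{ess}(H)$.

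The harder direction $\sigma_{ess}(H)\subset[\Sigma,\infty)$ is the main obstacle; I would prove it via a geometric partition-of-unity argument of Ruelle--Simon--Sigal type (cf.\ \cite{CFKS, HunzSig}). The plan is to choose smooth functions $\{J_\calC\}_{\calC\ne\calC_{min}}$ on the internal configuration space $X$, homogeneous of degree zero for $|x|\ge 1$, with $\sum_\calC J_\calC^2\equiv 1$ and with $J_\calC$ supported where $|\boldsymbol{r}_j-\boldsymbol{r}_q|\ge c|x|$ for every pair $(jq)\not\subset\calC$. A direct computation using $\sum_\calC J_\calC \nabla J_\calC = \tfrac{1}{2}\nabla\sum_\calC J_\calC^2 = 0$ yields the IMS localization identity
\begin{equation*}
  H = \sum_\calC J_\calC H J_\calC - \sum_\calC |\nabla J_\calC|^2 = \sum_\calC J_\calC H(\calC) J_\calC + K,
\end{equation*}
with $K=\sum_\calC J_\calC I(\calC) J_\calC - \sum_\calC |\nabla J_\calC|^2$. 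Since $H(\calC)\ge\Sigma(\calC)\ge\Sigma$ on $\calH$ and $\sum_\calC J_\calC^2=1$, we obtain the form inequality $H\ge \Sigma\bdone + K$.

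It then suffices to show $K$ is $H_0$-compact in the sense that $K(H_0+1)^{-1}$ is compact on $\calH$. The factors $|\nabla J_\calC|^2$ decay like $|x|^{-2}$ on $\bbR^{\nu(N-1)}$ and hence lie in $L^p+L^\infty_\epsilon$ for $p$ canonical, so compactness follows from Theorem \ref{T7.9}. For each $J_\calC V_{jq} J_\calC$ with $(jq)\not\subset\calC$, the geometric support condition forces $V_{jq}$ to be evaluated at arguments of norm at least $c|x|$, so the corresponding function on $X$ decays to zero at infinity in the $L^p+L^\infty_\epsilon$ sense, and the same theorem applies. Given a putative Weyl sequence $\{\psi_n\}$ at some $E<\Sigma$ with $\psi_n\to 0$ weakly and $\norm{\psi_n}=1$, the $H_0$-boundedness of the total potential implies $\{\psi_n\}$ is bounded in the $H_0$-graph norm, so $K\psi_n\to 0$ in norm. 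Passing to the limit in $\jap{\psi_n,(H-\Sigma)\psi_n}\ge \jap{\psi_n,K\psi_n}$ yields $E-\Sigma\ge 0$, a contradiction, so $\sigma_{ess}(H)\cap(-\infty,\Sigma)=\emptyset$.

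The main technical hurdle is the construction of the partition of unity with the stated geometric separation property: one must cover $X$ at infinity by finitely many conic regions, each adapted to some cluster decomposition in the sense that no two particles from distinct clusters can be close relative to $|x|$. This is a piece of $N$-body geometry that is by now classical but requires genuine combinatorial care; once in hand, the remaining compactness estimates are routine consequences of the $L^p$ theory developed in Section \ref{s7}.
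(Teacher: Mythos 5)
Your proposal is correct and follows the IMS localization method of Ismagilov--Morgan--Sigal, which is exactly the approach the paper refers the reader to (\cite[Section 3.3]{CFKS}) for a detailed proof. The paper itself does not prove the HVZ theorem; its Remark 4 sketches a slightly different geometric variant using a partition $\sum_\calC J_\calC = \bdone$ (not $\sum_\calC J_\calC^2 = \bdone$) and compactness of $[f(H)-f(H(\calC))]J_\calC$ rather than the IMS identity. Both yield the same conclusion; your operator-inequality argument $H \ge \Sigma\bdone + K$ with $K$ relatively compact is the cleaner route if all you want is $\inf\sigma_{ess}(H)\ge\Sigma$, while the $f(H)$-comparison variant is better adapted to extracting finer spectral information. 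One bookkeeping slip: you cite Theorem \ref{T7.9} for compactness of the relevant multiplication operators composed with $(H_0+1)^{-1}$, but Theorem \ref{T7.9} only gives relative \emph{boundedness}; the compactness statement you actually need for $V\in L^p+L^\infty_\epsilon$ appears (without a theorem number) in the paragraph preceding Theorem \ref{T11.8}. A second trivial point: in the first paragraph, the Weyl sequence for $H$ should be $U(\lambda_n)^{-1}\psi_n$ rather than $U(\lambda_n)\psi_n$, since \eqref{11.56} gives $U(\lambda)HU(\lambda)^{-1}\to H(\calC)$ in strong resolvent sense. Neither of these affects the validity of the argument.
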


\begin{remarks}  1.  There is a variant where there are infinite mass particles, i.e. some $V_j$ terms, and the center of mass isn't removed.  Decompositions are now of $\{0,1,\dots,N\}$.  One says that $(j) \subset \calC$ if $0$ and $j$ are in the same cluster.

2.  The result is named after Hunziker \cite{HunzHVZ}, van Winter \cite{vWHVZ} and Zhislin \cite{ZhisInfinite}.

3.  There are essentially three generations of proofs of this theorem.  The initial proofs of Hunziker and van Winter relied on integral equations (what are now called the Weinberg--van Winter equations).  van Winter restricted her work to $L^2(\bbR^3)$ potentials since she only considered Hilbert--Schmidt operators while Hunziker's independent work handled the general case above.  This work was independent of the earlier work of Zhislin who only considered and proved results for atomic Hamiltonians.  His methods were geometric.

4.  The second wave concerns geometric proofs by Enss \cite{EnssHVZ}, Simon \cite{SimonHVZ}, Agmon \cite{AgmonBk}, G\r{a}rding \cite{GardingHVZ} and Sigal \cite{IMSS}.  In one variant, the key is a geometric fact that there exists a partition of unity $\{J_\calC\}_{\calC \ne \calC_{min}}$ indexed non--minimal partitions so that $\sum_{\calC} J_\calC = \bdone$ and so that on $\supp J_\calC \cap \{x\,|\, |x| > 1\}$, one has that, for some $Q>0$, $|x_j-x_k| \ge Q|x|$ if $(jk) \not\subset \calC$.  One proves that $[f(H)-f(H(\calC))]J_\calC$ is a compact operator for continuous functions, $f$ of compact support.  This, in turn, implies that when $\supp f \subset (-\infty,\Sigma)$, then $f(H)$ is compact.  For details, see \cite[Section 3.3]{CFKS}.  Agmon's version \cite{AgmonBk} looks at limits as one translates in an arbitrary direction and is especially intuitive.  In this regard, Agmon considered a class of potentials that generalize $N$--body systems. $\{\pi_j\}$ is a family of non-trivial projections in $\bbR^{\nu N}$ and $V=\sum V_j(\pi_j x)$ where $V_j$ is a functions on $\bbR^{\dim \ran \pi_j}$.  This setup has been used by many authors since.

5.  The third generation works in cases where $\sigma_{ess}(A)$ can have gaps.  This approach appeared (more or less independently) in Chandler--Wilde--Lindner \cite{CWL1, CWL2}, Georgescu--Iftimovici \cite{GI}, Last--Simon \cite{LS1, LS2}, M\v{a}ntoiu \cite{Mant} and Rabinovich \cite{Rabin}.  Perhaps the cleanest result from \cite{LS2} defines the notion of right limits and proves that $\sigma_{ess}(H)$ is the union over all right limits of $\sigma(H_r)$.  See also \cite[Section 7.2]{SimonSz}.
\end{remarks}

With the HVZ theorem in hand, one can easily carry Kato's argument to its logical conclusion

\begin{theorem} [Simon \cite{SimonInfinite}] \lb{T11.9} Let $H$ be an $N$--body Hamiltonian with center of mass removed.  Suppose that $\Sigma$ is a two--body threshold, i.e. there is a cluster decomposition, $\calC = \{C_1,C_2\}$ and vectors, $\varphi_j \in \calH(C_j),\, j=1,2$ so that $H(C_j)\varphi_j = E_j\varphi_j$, $\norm{\varphi_j}=1$ and $E_1+E_2 = \Sigma$.  Define $W$ on $\bbR^\nu$ as follows: $y \in \bbR^\nu$ is the difference of the centers of mass of $C_1$ and $C_2$ and let $x_k(y,\zeta_1,\zeta_2)$ be the position of particle $k$ in terms of $y$ and the internal coordinates $\zeta_j$ of $C_j$.  Then
\begin{equation}\label{11.60}
  W(y) = \sum_{\substack{q \in C_1 \\ k \in C_2}} \int V_{qk}(x_q(y,\zeta_j)-x_k(y,\zeta_j)) |\varphi_1(\zeta_1)|^2 |\varphi_2(\zeta_2)|^2 d\zeta_1 d\zeta_2
\end{equation}
Let $\mu$ be the reduced mass of the two clusters and suppose that
\begin{equation}\label{11.61}
  -(2\mu)^{-1}\Delta_y + W(y)
\end{equation}
has an infinite number of eigenvalues below $0$ as an operator on $L^2(\bbR^\nu)$.  Then $H$ has an infinite number of eigenvalues below $\Sigma$.
\end{theorem}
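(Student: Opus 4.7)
The plan is to mimic Kato's strategy for Helium (Theorem \ref{T11.1}): for every $k$, construct a $k$-dimensional trial subspace $W_k \subset Q(H)$ on which the quadratic form of $H$ is bounded strictly above by $\Sigma - \epsilon_k$ for some $\epsilon_k > 0$. Combined with the HVZ theorem (Theorem \ref{T11.8}), which gives $\inf \sigma_{ess}(H) = \Sigma$, Lemma \ref{L11.2} then produces at least $k$ eigenvalues of $H$ below $\Sigma$ for every $k$, hence infinitely many.

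To build $W_k$, I will use product trial functions $\Psi = \varphi_1(\zeta_1)\,\varphi_2(\zeta_2)\,\eta(y)$ in the clustered Jacobi decomposition of Example \ref{E11.7} associated with $\calC = \{C_1,C_2\}$, where $\zeta_j$ are internal coordinates of $C_j$, $y$ is the difference of the two cluster centers of mass (so $T^{(\calC)} = -(2\mu)^{-1}\Delta_y$ by the two-cluster case of Proposition \ref{P11.4}, with $\mu$ the reduced mass of the clusters), and $\eta$ ranges over a $k$-dimensional subspace $\Lambda_k \subset L^2(\bbR^\nu)$ to be chosen. Writing $H = H(\calC) + I(\calC)$ and using $h(C_j)\varphi_j = E_j\varphi_j$ together with the tensor structure \eqref{11.46}--\eqref{11.47}, a direct computation yields
\begin{equation}
\langle \Psi, H \Psi\rangle = \Sigma\,\|\eta\|^2 + \langle \eta, [-(2\mu)^{-1}\Delta_y + W(y)]\,\eta\rangle,
\end{equation}
since $h(C_j)$ acts only on $\zeta_j$, the operator $T^{(\calC)}$ acts only on $y$, and each intercluster term $V_{qk}$ with $q\in C_1,\, k\in C_2$ averages against $|\varphi_1(\zeta_1)|^2|\varphi_2(\zeta_2)|^2\,d\zeta_1\,d\zeta_2$ to give precisely the $W(y)$ of \eqref{11.60}.

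By hypothesis, $K \equiv -(2\mu)^{-1}\Delta_y + W(y)$ has infinitely many eigenvalues below $0$, so for every $k$ there exist $\epsilon_k > 0$ and a $k$-dimensional subspace $\Lambda_k \subset L^2(\bbR^\nu)$ on which $\langle \eta, K\eta\rangle \le -\epsilon_k\,\|\eta\|^2$. Setting $W_k = \{\varphi_1\varphi_2\eta : \eta \in \Lambda_k\}$, the map $\eta \mapsto \varphi_1\varphi_2\eta$ is an isometry (using $\|\varphi_j\|=1$), so $\dim W_k = k$ and the identity above gives $\langle \Psi, H\Psi\rangle \le (\Sigma - \epsilon_k)\|\Psi\|^2$ for every $\Psi \in W_k$. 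Lemma \ref{L11.2} applied with $J = \Sigma - \epsilon_k$ forces $\dim \ran\, P_{(-\infty, \Sigma - \epsilon_k]}(H) \ge k$, and by HVZ all of this spectral mass lies in $\sigma_d(H)$. Letting $k\to\infty$ completes the proof.

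The only genuinely delicate step is the quadratic form computation \eqref{11.60}: one must verify that no cross/gradient terms appear from the relative-motion piece, which is automatic since $\varphi_j$ depends only on internal coordinates of $C_j$ and $\eta$ only on $y$, and that $\Psi$ indeed lies in $Q(H)$. For the latter, $\varphi_j \in D(h(C_j)) \subset Q(h(C_j))$ and $\eta$ can be chosen in $D(K) \subset Q(-\Delta_y)$, so $\Psi \in Q(H_0)$; the standing assumption $V_{jk} \in L^p(\bbR^\nu) + L^\infty(\bbR^\nu)_\epsilon$ with $p$ being $\nu$-canonical then gives $\Psi \in Q(V_{jk})$ for every pair, hence $\Psi \in Q(H)$. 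The main obstacle is thus careful bookkeeping of the coordinate change rather than any hard analysis, and the argument reduces, as in the Helium case, to extracting a concrete $k$-dimensional negative subspace for an auxiliary one-body operator.
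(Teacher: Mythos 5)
Your proposal is correct and follows essentially the same route as the paper: the paper's Remark 3 after the theorem says the proof is Kato's Helium argument with trial functions $\psi = \varphi_1(\zeta_1)\varphi_2(\zeta_2)\eta(y)$, the identity $\jap{\psi,H\psi} = \Sigma + \jap{\eta,(-(2\mu)^{-1}\Delta+W)\eta}$, and the HVZ theorem to identify $\Sigma$ with $\inf\sigma_{ess}(H)$. Your spelling out of the form-domain membership of $\Psi$ and the clean isometry argument for the trial subspace dimension is a nice bit of bookkeeping the paper leaves implicit, but there is no substantive difference in approach.
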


\begin{remarks}  1. Thus, with $M(C_j) = \sum_{k \in C_j} m_k$, we have that $\mu^{-1}=M(C_1)^{-1}+M(C_2)^{-1}$

2.  One might think that if $j \in C_1$, then $x_j(y,\zeta_1,\zeta_2)$ is independent of $\zeta_2$ but that's wrong for the total center of mass, $\boldsymbol{R}$, enters in $x_j$ and that causes a $\zeta_2$ dependence.

3.  The proof is essentially unchanged from the ideas in Kato \cite{KatoHe}.  If $\psi(y,\zeta_1,\zeta_2) = \varphi_1(\zeta_1) \varphi_2(\zeta_2)\eta(y)$, then $\jap{\psi,H\psi} = \Sigma+{\jap{\eta,(-(2\mu)^{-1}\Delta+W)\eta}}$.

4.  This result is from Simon \cite{SimonInfinite} who revisited Kato's paper after the discovery of the HVZ theorem.
\end{remarks}

Now fix $Z,N > 0$.  $N$ is an integer but $Z$ need not be.  We define on $L^2(\bbR^{3N})$:
\begin{equation}\label{11.62}
  H(Z,N) = \sum_{j=1}^{N} \left(-\Delta_j-\frac{Z}{|x_j|}\right) + \sum_{1 \le j,k \le N} \frac{1}{|x_j-x_k|}
\end{equation}
\begin{equation}\label{11.63}
  E(Z,N) = \inf\sigma(H(Z,N))
\end{equation}
One can accommodate Hughes Eckart terms in much of the discussion but we won't include them.

By the arguments before \eqref{11.55}, $\sigma(H(Z,N-1)) \subset \sigma(H(Z,N))$ so the HVZ theorem implies that
\begin{equation}\label{11.64}
   \Sigma(H(Z,N))=E(Z,N-1)
\end{equation}
so we are interested in
\begin{equation}\label{11.65}
  \delta(Z,N) = -E(Z,N)+E(Z,N-1)
\end{equation}
the ionization energy to remove electron $N$ from a nucleus of charge $Z$.  Put differently, $\delta \ge 0$ and $\delta > 0$ if and only if $N$ electrons bind to a charge $Z$ nucleus.

\begin{corollary} [Zhislin \cite{ZhisInfinite}] \lb{C11.10}  If $Z > N-1$, then $H(Z,N)$ has infinitely many bound states below $\Sigma$.  In particular, $\delta(Z,N) > 0$.
\end{corollary}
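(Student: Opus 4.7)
The plan is to induct on $N$, using Theorem \ref{T11.9} (Simon's extension of Kato's method) as the engine. The base case $N=1$ is just the hydrogenic atom: $H(Z,1) = -\Delta - Z/|x|$ has the infinite Rydberg sequence of eigenvalues $-Z^2/(4n^2)$ below $\Sigma(H(Z,1)) = 0$ whenever $Z > 0$, which is exactly $Z > N-1$ at this level. For the inductive step, assume $H(Z,N-1)$ has infinitely many eigenvalues below $\Sigma(H(Z,N-1)) = E(Z,N-2)$. Then $E(Z,N-1) < \Sigma(H(Z,N-1))$, so it is attained as an eigenvalue with normalized ground state $\varphi_1 \in L^2(\bbR^{3(N-1)})$. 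By the HVZ theorem (Theorem \ref{T11.8}), $\Sigma(H(Z,N)) = E(Z,N-1)$, and this energy is realized by the two-cluster decomposition $\calC = \{C_1, C_2\}$ with $C_1 = \{\text{nucleus},1,\dots,N-1\}$ and $C_2 = \{N\}$: here $h(C_1)=H(Z,N-1)$ with eigenvector $\varphi_1$ at energy $E_1 = E(Z,N-1)$, while $h(C_2)$ is the zero operator on $\bbC$ with $E_2 = 0$. Thus $\Sigma(H(Z,N))$ is a two-body threshold and Theorem \ref{T11.9} applies.

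Specializing the formula for $W$ in Theorem \ref{T11.9} to this situation, the effective potential on the separated-electron coordinate $y\in\bbR^3$ is
\begin{equation*}
W(y) \;=\; -\frac{Z}{|y|} \;+\; \sum_{j=1}^{N-1}\int \frac{|\varphi_1(x_1,\dots,x_{N-1})|^2}{|y-x_j|}\,d^3x_1\cdots d^3x_{N-1}.
\end{equation*}
Let $\rho_j(x) \ge 0$ be the single-particle density of electron $j$ in the state $\varphi_1$, so $\int \rho_j = 1$. The $j$-th term above is the Coulomb potential of $\rho_j$, and by the multipole expansion (equivalently, Newton's theorem applied after spherical averaging of $\rho_j$) one has $\int \rho_j(x)/|y-x|\,d^3x = 1/|y| + O(|y|^{-2})$ as $|y|\to\infty$. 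Hence
\begin{equation*}
W(y) \;=\; -\frac{\alpha}{|y|} + O(|y|^{-2}),\qquad \alpha := Z-(N-1) > 0,
\end{equation*}
by hypothesis. It remains to show that $-\Delta_y + W(y)$ on $L^2(\bbR^3)$ has infinitely many eigenvalues below $0$.

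For this I would run a trial-function argument in the spirit of Kato's Lemma \ref{L11.2}. Take the hydrogenic eigenfunctions $\psi_n$ of $-\Delta - \alpha/|y|$ at principal quantum number $n$; they are supported (to exponentially good approximation) at distances of order $n^2/\alpha$, so the $O(|y|^{-2})$ remainder in $W + \alpha/|y|$ contributes at most $O(n^{-4})$ to $\langle\psi_n,(-\Delta+W)\psi_n\rangle$, while the leading piece gives $-\alpha^2/(4n^2)$. Choosing any finite collection of large $n$'s produces a subspace of dimension as large as we like on which $\langle\varphi,(-\Delta+W)\varphi\rangle < 0$, so by Lemma \ref{L11.2} (or, equivalently, min-max), $-\Delta+W$ has infinitely many eigenvalues below $0$. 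Applying Theorem \ref{T11.9} completes the induction and yields infinitely many eigenvalues of $H(Z,N)$ below $\Sigma = E(Z,N-1)$; in particular $E(Z,N) < E(Z,N-1)$, i.e.\ $\delta(Z,N) > 0$.

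The main obstacle is the asymptotic estimate on $W(y)$: one must show that the integrated electron--electron repulsion from the frozen $(N-1)$-electron core equals $(N-1)/|y|$ up to a remainder that decays fast enough for the trial-function expectations to stay negative. Since $\varphi_1$ is an $L^2$ eigenfunction strictly below the essential spectrum of $H(Z,N-1)$, standard dominated-convergence (or exponential-decay à la Agmon) arguments applied to each $\rho_j$ give the required $1/|y|+O(|y|^{-2})$ expansion, and the trial-function computation then closes the loop uniformly in $n$.
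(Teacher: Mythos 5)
Your proof is correct and takes essentially the same route as the paper: induction on $N$, use of HVZ (Theorem~\ref{T11.8}) to identify $\Sigma(H(Z,N)) = E(Z,N-1)$ as a two-cluster threshold, an application of Theorem~\ref{T11.9} to reduce to the effective one-body operator $-\Delta + W$ on $L^2(\bbR^3)$, and the observation that $W$ has an attractive Coulomb tail of strength $Z-(N-1)>0$. This is exactly the argument sketched in the paper's Remark~1 after the Corollary (and you have the sign of the effective charge right --- the remark's formula $W(y) = [Z-(N-1)]|y|^{-1}+o(1/|y|)$ omits the minus sign that makes the tail attractive).

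A few small points where you could tighten things. Your claim that $W(y)+\alpha/|y| = O(|y|^{-2})$ holds only for large $|y|$; near the origin, since $W(y) = -Z/|y| + \sum_j (\rho_j * |\cdot|^{-1})(y)$ with each $\rho_j * |\cdot|^{-1}$ bounded at $0$, the remainder behaves like $-(N-1)/|y|$. Fortunately this contribution is attractive, so your variational inequality is not endangered, but as written your $O(n^{-4})$ estimate does not follow; the correct order for the remainder's expectation is $O(n^{-3})$ (using $\langle n\ell |\,|y|^{-2}\,| n\ell\rangle \sim n^{-3}$), which is still $o(n^{-2})$ and hence suffices. A cleaner route --- and the one closest in spirit to Kato's original Helium argument (Section~\ref{s11}, around \eqref{11.13}--\eqref{11.15}) --- is to spherically average each $\rho_j$ and invoke Newton's theorem to get the pointwise bound $\tilde\rho_j * |\cdot|^{-1}(y) \le 1/|y|$, hence $\tilde W(y) \le -\alpha/|y|$, and then test against the s-wave hydrogenic eigenfunctions directly; this avoids the multipole expansion and the separate small-$|y|$/large-$|y|$ bookkeeping entirely.
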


\begin{remarks} 1.  This is because by induction, $\Sigma$ is determined by a two cluster breakup into $N-1$ particles (in the same cluster as $0$) and one particle and then that $W(y) = [Z-(N-1)]|y|^{-1} + \textrm{o}(1/|y|)$ and such a potential has infinitely many bound states.

2.  This result was first proven by Zhislin using arguments somewhat more involved than Kato's argument (and before Simon noted that Kato's arguments work).
\end{remarks}

This completes the summary of the direct extensions of Kato's work.  We will end this section with a brief discussion of results on bound states of $H(Z,N)$ which are a direct descendent of Kato's consideration.  There is an enormous literature not only on this subject but also on bounds on the number of bound states when finite and on moments of the eigenvalues.  We refer the reader to the forthcoming book of Frank, Laptev and Weidl \cite{FLW}.

The other side of Corollary \ref{C11.10} is

\begin{theorem} \lb{11.11} If $Z \le N-1$, then $H(Z,N)$ has only finitely many bound states.
\end{theorem}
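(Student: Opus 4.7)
The plan is to show that $H(Z,N)-\Sigma$, with $\Sigma = E(Z,N-1)$, is bounded below by a relatively compact perturbation of a nonnegative operator, which by min--max forces only finitely many eigenvalues below $\Sigma$. The physical picture driving the proof is that, by the HVZ theorem, the bottom of the essential spectrum corresponds to sending one electron to infinity while the remaining $N-1$ electrons stay bound to the nucleus. That stray electron sees an effective total charge $-Z+(N-1)$, which is nonpositive exactly when $Z \le N-1$. Hence the effective one-body problem at infinity has a repulsive (or at best vanishing) Coulomb tail, and such a one-body Schr\"odinger operator has only finitely many negative eigenvalues.

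To implement this, I would introduce an IMS partition of unity $\{J_\alpha\}_{\alpha=0}^{N}$ on $\bbR^{3N}$ depending on a length scale $R$: $J_0$ supported in $\{|\mathbf{X}| \le 2R\}$, and for $\alpha \ge 1$, $J_\alpha$ supported in the cone where particle $\alpha$ is asymptotically furthest from the origin (say $|x_\alpha| \ge (1/2N)|\mathbf{X}|$) with $|\mathbf{X}| \ge R$. The IMS localization formula (see \cite[Section 3.1]{CFKS}) gives
\begin{equation*}
H(Z,N) = \sum_{\alpha=0}^{N} J_\alpha H(Z,N) J_\alpha - \sum_{\alpha=0}^{N} |\nabla J_\alpha|^2,
\end{equation*}
with the last term $O(R^{-2})$. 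On $\supp J_\alpha$ for $\alpha \ge 1$, decompose $H(Z,N) = H^{(\alpha)} + (-\Delta_\alpha) + W_\alpha$, where $H^{(\alpha)}$ is the copy of $H(Z,N-1)$ acting on the remaining electrons and $W_\alpha = -Z/|x_\alpha| + \sum_{j \ne \alpha} 1/|x_j - x_\alpha|$. Since $H^{(\alpha)} \ge \Sigma$, it suffices to show $-\Delta_\alpha + W_\alpha \ge 0$ on $\supp J_\alpha$ up to a relatively compact error.

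The core geometric estimate is this: on $\supp J_\alpha$ one has $|x_j| \lesssim |x_\alpha|$ for $j \ne \alpha$, and a standard multipole-type lower bound yields $\sum_{j\ne\alpha} |x_j - x_\alpha|^{-1} \ge (N-1)/|x_\alpha| - C/|x_\alpha|^{2}$ for $|x_\alpha|$ large. Therefore
\begin{equation*}
W_\alpha \ge \frac{N-1-Z}{|x_\alpha|} - \frac{C}{|x_\alpha|^{2}},
\end{equation*}
which, when $Z \le N-1$, is bounded below by $-C/|x_\alpha|^{2}$. Absorb this residual inverse-square singularity into a small fraction of $-\Delta_\alpha$ via Hardy's inequality $-\Delta_\alpha \ge (1/4)|x_\alpha|^{-2}$ on $\bbR^3$. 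Assembling the IMS pieces and the bounded region controlled by $J_0$, one obtains $H(Z,N) \ge \Sigma - K_R$ where $K_R$ is a bounded multiplication operator supported in $\{|\mathbf{X}| \le CR\}$. Since $K_R(H(Z,N)+1)^{-1}$ is compact, standard min--max gives $\dim \ran P_{(-\infty,\Sigma)}(H(Z,N)) < \infty$.

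The main obstacle is the borderline case $Z = N-1$: there the leading $(N-1-Z)/|x_\alpha|$ term in $W_\alpha$ vanishes, and the entire lower bound must come from the subleading $|x_\alpha|^{-2}$ contribution. One must then verify that the true coefficient of this residual attractive tail is strictly below the Hardy threshold $1/4$ (uniformly in the angular variables and in the ground-state average over the bound $(N-1)$-electron subsystem), which requires a more delicate geometric/virial analysis; for strict inequality $Z < N-1$ the sketch above is essentially Zhislin's original argument, while the equality case was settled later by Sigal and by Vugalter--Zhislin.
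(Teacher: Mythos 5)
The paper does not actually prove Theorem 11.11; it states it, attributes the $Z < N-1$ case to Zhislin and the $Z = N-1$ case to Vugal'ter--Zhislin and Yafaev, and gives only the intuition that ``the left over Coulomb repulsion (if $Z < N-1$) or residual Coulomb attraction (if $Z = N-1$) is such that an effective $-\Delta + W$ has only finitely many states,'' pointing to Sigal's IMS localization method as the technique for passing from the effective two--body picture to the full $N$--body problem. So there is no ``paper's own proof'' to compare against; your sketch aims to implement exactly the strategy the paper gestures at, which is the right target.

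The central gap is the claimed pointwise multipole lower bound $\sum_{j\neq\alpha}|x_j-x_\alpha|^{-1} \ge (N-1)/|x_\alpha| - C/|x_\alpha|^2$ on $\supp J_\alpha$. Your support condition only forces $|x_\alpha| \gtrsim |\mathbf{X}|/N$, hence $|x_j| \lesssim |x_\alpha|$ for $j\neq\alpha$; it does \emph{not} confine the other electrons to a bounded neighborhood of the nucleus. If, say, $x_j = -x_\alpha/2$ (allowed in your cone), then $|x_j - x_\alpha|^{-1} = \tfrac{2}{3}|x_\alpha|^{-1}$, and the sum deviates from $(N-1)/|x_\alpha|$ by a term that is itself $\mathrm{O}(|x_\alpha|^{-1})$, not $\mathrm{O}(|x_\alpha|^{-2})$. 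The $|x_\alpha|^{-2}$ correction you want is true only after averaging against the exponentially localized bound states of $H^{(\alpha)} = H(Z,N-1)$; to access that, one must insert a projection onto the low-lying discrete spectrum of $H^{(\alpha)}$ and treat the orthogonal complement (where $H^{(\alpha)} - \Sigma$ has a gap) separately. This channel decomposition is precisely the non-trivial content of the Zhislin and Sigal arguments and is missing from your sketch.

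For $Z < N-1$ the sketch is salvageable without the multipole refinement: taking a steeper cone $|x_j| \le Q|x_\alpha|$ ($j\neq\alpha$, $Q$ small) forces $|x_j-x_\alpha| \le (1+Q)|x_\alpha|$, hence $W_\alpha \ge \bigl[(N-1)/(1+Q) - Z\bigr]/|x_\alpha| > 0$ once $Q$ is small enough, with no subleading term needed; this is close to Zhislin's original geometric proof. For $Z = N-1$ the leading term vanishes and the above defect is fatal --- the residual $|x_\alpha|^{-2}$ tail does not come from a pointwise bound --- so the borderline case genuinely needs the channel projection (Vugal'ter--Zhislin, Yafaev, or Sigal's machinery), which you correctly flag but do not supply. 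One further small gap: $H \ge \Sigma - K_R$ with $K_R$ a bounded compactly supported multiplication does not by itself force finitely many eigenvalues below $\Sigma$; you must retain a fraction of the kinetic energy in the IMS pieces so that the comparison operator is (a constant plus) $\epsilon(-\Delta) - K_R$, whose negative spectrum is finite.
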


\begin{remarks} 1.  This theorem is due to Zhislin \cite{ZhisFinite}.  There were earlier results of Uchiyama \cite{UchiFinite} (for $N=2, Z < 1$),  and by Vugal'ter--Zhislin \cite{VugZhis} and Yafaev \cite{Yafaev3Ann, Yafaev3Full} (for $Z=N-1$).

2.  The intuition is that the left over Coulomb repulsion (if $Z < N-1$) or residual Coulomb attraction (if $Z=N-1$) is such that an effective $-\Delta+W$ has only finitely many states.  Of course, one needs techniques to conclude that when an effective two body problem has that property, the full $N$--body does -- one of the most effective methods is due to Sigal \cite{IMSS}.  I note in passing that there are three particle systems with short range interactions that surprisingly have an infinite number of bound states, $\{E_j\}_{j=1}^\infty$ with asymptotic geometric sequence placement, i.e. $E_{j+1}/E_j \to \alpha < 1$.  At least two of the three two body clusters must have zero energy resonances (what this means is discussed in Section \ref{s16}) so the bottom of the essential spectrum is 0.  The discovery on a formal level is due to Efimov \cite{Efimov} after whom the effect is named.  For mathematical proofs see Yafaev \cite{YafaevEfimov}, Tamura \cite{Tam1, Tam2}, Sobolev \cite{SoboEff} and Ovchinnikov--Sigal \cite{OvSigal}.  Wang \cite{Wang1, Wang2} discussed this for $N$--body systems.  For popular science treatments of experimental verification of the geometric progression (even for small $j$!) see Ouellette \cite{Oue} and Wolchover \cite{Wolch}.

3. This theorem is stated for systems with no statistics.  For $Z < N-1$, the result extends without much trouble to Fermi statistics \cite{ZhisFinite}.  For $Z=N-1$, one needs to assume that there is not an atomic ground state with a dipole moment (for there to be such a state, there would need to be a degeneracy of states with different parity) because $-\Delta+\lambda \hat{e}\cdot\boldsymbol{r}/(1+r)^3$ has an infinity of bound states when $\lambda$ is large enough.  In fact, in \cite{SimonHVZ}, it is claimed (quoting Lieb) that a molecule with two centers, $Z_1=1/3, Z_2=2/3, N=2$ (so $Z=N-1$) and $|\boldsymbol{R_1} - \boldsymbol{R_2}|$ large will have an infinity of bound states (although a proof has never been published to my knowledge).  In any event, under an assumption about no atomic ground state with dipole moment, the theorem does extend to $N=Z+1$ \cite{VugZhis}.
\end{remarks}

For most of the discussion below, we look at $E(Z,N)$ with Fermi statistics.  One might expect that for $Z$ fixed, one has that $\delta(Z,N) = 0$ for all sufficiently large $N$, i.e. there is an $N_c(Z)$ so that $\delta(Z,N)=0$ if $N \ge N_c(Z)$ and so that $\delta(Z,N_c(Z)-1) > 0$.  Ruskai \cite{RuskaiRS}  and Sigal \cite{IMSS, SigalRS} proved that for every $Z$, there is a such an $N_c(Z)$ and Lieb \cite{Lieb2Zplus1} found a simple, elegant argument that $N_c(Z) \le 2Z+1$ which, in particular, implies that $H^{--}$ does not exist although $H^{-}$ does.

In nature, there is no known example for $\delta(Z,N) > 0$ if $N \ge Z+2$, that is, there are once negatively charged ions in nature, but no twice negatively charged ions.  So it might even be that $N_c(Z)$ is always bounded by $Z+1$.  In any event, there is a conjecture \cite{Si21Cent} that $N_c(Z) \le Z+k$ for some finite $k$.  It is known (Lieb et al \cite{LSST}}) that for fermion electrons one has that $\lim_{Z \to \infty} N_c(Z)/Z =1$ but Benguria-Lieb \cite{BengLieb} have proven that the $\liminf$ is strictly bigger than 1 for bosonic electrons.  There is considerable literature since these two basic papers, but since this is already removed from Kato's work, we won't try to summarize it.

\section{Eigenvalues, II: Lack of Embedded Eigenvalues} \lb{s12}

Consider on $\bbR^\nu$, the equation $(-\Delta+V)\varphi=\lambda\varphi$ with $V(x) \to 0$ as $|x| \to \infty$ and $\lambda > 0$.  Naively, one might expect that no solution, $\varphi$, can be in $L^2(\bbR^\nu, d^\nu x)$.  The intuition is clear: classically, if the particle is in the region $\{x\,|\,|x| > R\}$ where $R$ is picked so large that $|x| > R \Rightarrow V(x) < \lambda/2$ and if the velocity is pointing outwards, the particle is not captured and so not bound.  Due to tunnelling, in quantum theory, a particle will always reach this region so there shouldn't be positive energy bound states.  This intuition of no embedded eigenvalues is incomplete due to the fact that bumps can cause reflections even when the bumps are smaller than the energy, so an infinite number of small bumps which don't decay too rapidly might be able to trap a particle.  Indeed, in 1929, near the birth of modern quantum theory, von Neumann--Wigner \cite{vNW} presented an example with an embedded eigenvalue of energy 1 (in fact they picked $V(x) \to -1$ at infinity and $\lambda=0$; we'll shift energies by $1$ and also pick their arbitrary constant $A$ to be $1$).  They had the idea of guessing the wave function, $\psi$, and setting $V(x) = 1+\psi^{-1}\Delta\psi (x)$.  They picked $\psi$ so that it had oscillations that cancelled the $+1$ at infinity.  Their choice as a function of $r=|x|$ in three dimensions was
\begin{equation}\label{12.1}
  \psi(x) = \frac{\sin r}{r}[1+g(r)^2]^{-1}; \qquad g(r) = 2r-2\sin(2r)
\end{equation}
and they claimed that (where $\tilde{g}(r) = 2r + 2\sin(2r)$)
\begin{equation}\label{12.2}
  V(x) = -32 \cos^4r\frac{1-3\tilde{g}(r)^2}{[1+\tilde{g}(r)^2]^2}
\end{equation}

With slow enough decay, one can have much more than a single embedded eigenvalue.  It is known (see Simon \cite{SimonRandDecay} and Kotani-Ushiroya \cite{KURandDecay}) that if $0<\beta<1/2$ and $q_\omega(x)$ is a random potential in one dimension with uniformly spaced independent, identically distributed random bumps, then $-\tfrac{d^2}{dx^2}+(1+x^2)^{-\beta/2} q_\omega(x)$ has only dense pure point spectrum, i.e. the essential spectrum is $[0,\infty)$ and there is a complete orthonormal set of $L^2$ eigenvectors!

In 1959, Kato proved the first strong result on the non-existence of positive eigenvalues:

\begin{theorem} [Kato \cite{KPE}, announced in \cite{KPEAnon}] \lb{T12.1} Let $V(x)$ be continuous on $\bbR^\nu$ and obey
\begin{equation}\label{12.3}
  \lim_{r \to \infty} r \sup_{|y| > r} |V(y)| = 0
\end{equation}
Then $(-\Delta+V)\varphi=\lambda\varphi$ with $\lambda > 0$ has no (non--zero) $L^2$ solutions.
\end{theorem}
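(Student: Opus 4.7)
The plan is to combine a differential-inequality argument on spherical averages (which uses the decay hypothesis $r \sup_{|y|\ge r}|V(y)| \to 0$ in an essential way) with a classical unique continuation theorem. The first step forces a non-zero $L^2$ solution to vanish outside some large ball, and the second step propagates that vanishing to all of $\bbR^\nu$, contradicting $\varphi \ne 0$.

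First I would pass to polar coordinates. Writing $x = r\omega$ with $r=|x|>0$ and $\omega \in S^{\nu-1}$, and setting $\psi(r,\omega) := r^{(\nu-1)/2}\varphi(r\omega)$, the eigenvalue equation $(-\Delta+V)\varphi=\lambda\varphi$ becomes
\begin{equation*}
-\partial_r^2 \psi + r^{-2} A\psi + V\psi = \lambda \psi,
\end{equation*}
where $A = -\Delta_{S^{\nu-1}} + \tfrac{(\nu-1)(\nu-3)}{4}$ is a bounded-below self-adjoint operator on $L^2(S^{\nu-1})$, and the $L^2$ condition on $\varphi$ becomes $\int_0^\infty \|\psi(r,\cdot)\|^2_{L^2(S^{\nu-1})}\,dr < \infty$.

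Next I would introduce the spherical mean $h(r) := \|\psi(r,\cdot)\|^2$ and, differentiating twice and using the equation, obtain
\begin{equation*}
h''(r) = 2\|\partial_r\psi\|^2 - 2r^{-2}\langle\psi,A\psi\rangle - 2\lambda h(r) + 2\operatorname{Re}\langle V\psi,\psi\rangle.
\end{equation*}
Write $\epsilon(r) := r\sup_{|y|\ge r}|V(y)|$, so that $\epsilon(r)\to 0$ by hypothesis. I would then construct a Kato-type functional of the form
\begin{equation*}
F(r) := \|\partial_r\psi(r,\cdot)\|^2 + \lambda\, h(r) - r^{-2}\langle\psi,A\psi\rangle - \eta(r)\, h(r),
\end{equation*}
with $\eta(r)$ chosen to absorb the $V$ term and positive for $r$ large. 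Differentiating $F$, using the equation, and estimating the error by $r^{-1}\epsilon(r)F(r)$, I would derive a differential inequality of the type $F'(r) \ge -r^{-1}\epsilon(r)F(r) - \delta(r)h(r)$ with $\delta \in L^1$ at infinity, forcing the non-negative part of $F$ to be monotone up to an integrable error. Integrability of $h$ on $(0,\infty)$ then forces $h(r_n)\to 0$ along some sequence $r_n\to\infty$, and the differential inequality upgrades this to $F \equiv 0$ and $h\equiv 0$ on some interval $(R,\infty)$; equivalently, $\varphi$ vanishes on $\{|x|>R\}$.

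Finally I would invoke unique continuation for $-\Delta+V$ with continuous $V$ (Carleman's classical theorem): a solution of $(-\Delta+V-\lambda)\varphi=0$ that vanishes on a non-empty open set vanishes on the whole connected component of $\bbR^\nu$, hence $\varphi\equiv 0$, contradicting the assumption that $\varphi$ is a non-zero $L^2$ eigenfunction. The main obstacle is the middle step: the choice of $F$ and of the cut-off $\eta$ must be delicate enough that the decay hypothesis $r|V|\to 0$ is \emph{just} enough to dominate the $r^{-1}$ error terms from integration by parts on $S^{\nu-1}$ (which involve $A$, possibly unbounded below for small angular momenta in low dimensions). Sharpness is indicated by the von Neumann--Wigner example, whose potential decays like $1/r$ but fails $\epsilon(r)\to 0$, and whose construction shows that the monotonicity argument breaks down precisely when $r|V|$ is merely bounded.
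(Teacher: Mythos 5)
Your overall strategy matches the one the paper attributes to Kato: pass to the spherical average $M(r)=r^{\nu-1}\int_{S^{\nu-1}}|\varphi(r\omega)|^2\,d\omega$ (your $h(r)$ is exactly this), derive differential inequalities for a Kato functional built from $\|\partial_r\psi\|^2$, $\lambda h$ and the angular term, use $\int^\infty h\,dr<\infty$ to force $h\equiv 0$ past some $R_0$, and finish with unique continuation. So there is no disagreement about the route. However, the middle step as you sketch it has a genuine gap: the implication you assert --- from $h(r_n)\to 0$ along a sequence, together with $F'\ge -r^{-1}\epsilon(r)F-\delta(r)h$, to $F\equiv 0$ and $h\equiv 0$ on a tail $(R,\infty)$ --- does not follow. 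An almost-monotone-from-below $F$ combined with $h$ small along a sequence tells you nothing about $h$ in between the $r_n$, and nothing pins $F$ at $0$. The actual argument runs in the opposite direction: one shows (with a suitable weight, typically something like $(r^{2m}F)'\ge 0$ for $r$ large) that the functional $F$ is eventually monotone \emph{increasing}; then, if $\psi$ were nonzero somewhere far out, one would have $F(r_0)>0$ for some large $r_0$, whence $F$ stays bounded below, and this lower bound translates into a lower bound on $h(r)$ that is not integrable on $(r_0,\infty)$. The contradiction with $\varphi\in L^2$ forces $\psi$ (hence $\varphi$) to vanish for $r$ large; only then does unique continuation enter. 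So the logical structure is ``positivity of $F$ somewhere $\Rightarrow$ non-integrability of $h$ $\Rightarrow$ contradiction,'' not ``$h(r_n)\to 0$ $\Rightarrow$ $h\equiv 0$.'' You would need to supply the precise weight and the monotonicity computation, including handling the angular operator $A$ being unbounded below when $\nu\le 3$, which you correctly flag as delicate. A minor point: for $\nu\ge 3$ the unique continuation result you need for bounded $V$ is due to M\"uller and Aronszajn, not Carleman (whose theorem was for $\nu=2$), though since $V$ here is continuous and tends to $0$, it is bounded and those theorems apply.
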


\begin{remarks}  1.  ODE techniques easily prove in one dimension and in arbitrary dimension if $V$ is spherically symmetric, that there are no positive eigenvalues if $\int_{1}^{\infty} |V(r)| \, dr < \infty$.  This goes back at least to Weyl \cite{WeylLP2} who quotes results of Kneser \cite{Kne}.  In modern parlance, it follows from the existence of Jost solutions.

2.  Earlier, Brownell \cite[Theorem 6.7]{BrownPE} proved the absence of such eigenvalues under bounds of the form $|V(x)| \le C_1 \exp(-C_2 |x|)$.

3.  There is both earlier and illuminating later work in the one dimensional (equivalently spherical symmetric) case.  Let
\begin{equation}\label{12.4}
  K \equiv \limsup_{|x| \to \infty} [|x||V(x)|]
\end{equation}
Kato proved in general dimension that there are no eigenvalues, $E$, with $E \ge K^2$.  The (corrected) Wigner--von Neumann example has $K=8, E=1$ so one knows from that one can't do better than $K^2/64$ and it is easy to modify this example to show one can't do better than $K^2/4$.  In 1948, Wallach \cite{Wallach} proved the $E \le K^2$ in one dimension (extended by Borg \cite{Borg} and Eastham \cite{EasthamWvN}) and provided an example showing one couldn't do better than $K^2/4$. A breakthrough in this one dimensional case was made by Atkinson--Everitt \cite{AE} who proved there is no eigenvalue if $E \ge 4K^2/\pi^2$ and that there are examples with eigenvalues arbitrarily close to this bound.  Note that $4/\pi^2 = .405...$ lies in $(1/4,1)$.  Their example is a relative of the Wigner--von Neumann example but uses $\textrm{sgn}(\sin(r))$ in place of $\sin(r)$.  Their method using Pr\"{u}fer transforms is very one dimensional.  Eastham--Kalf \cite{EastKalf} give a textbook presentation of this work and mention that Halvorsen (unpublished) also found the optimal $4K^2/\pi^2$.  Remling \cite{Rem} extended the Atkinson--Everitt result to prove no singular continuous spectrum in $[4K^2/\pi^2,\infty)$.

4.  Kato proved results about more than $L^2$ solutions.  For example, he proved that if $|V(x)| \le (1+|x|)^{-\alpha}$ near infinity with $\alpha>1$, and if $(-\Delta+V)\varphi=\lambda\varphi$ with $\lambda > 0$ with $\varphi(x) \to 0$ as $x \to \infty$, then $\varphi$ vanishes near infinity (and depending on the structure of the singularities of $V$, one can often use unique continuation (see below) to conclude that $\varphi \equiv 0$).  This will be useful in Section \ref{s15}.
\end{remarks}

The observant reader may have noticed that since $g(r)/r \to 1$ as $r \to \infty$, the potential, $V(x)$, given by \eqref{12.2} is $\textrm{O}(r^{-2})$ so it seems to be a counterexample to Theorem \ref{T12.1}!  In fact, von Neumann--Wigner had a calculational error: in the middle they used $\cos r/\sin r = \tan r$ (!) and this error produces a remarkable cancellation.  Doing the calculation correctly yields
\begin{equation}\label{12.5}
  V(r) = -32 \sin r \frac{g(r)^3\cos r -3 g(r)^2 \sin^3r +g(r) \cos r+\sin ^3 r}{[1+g(r)^2]^2}
\end{equation}
so that $V(r) = -8 \sin(2r)/r + \textrm{O}(r^{-2})$ consistent with Kato's theorem.  I once pointed out this error to Wigner, who thought for a moment and then said to me: ``Oh, Johnny did that calculation.''

Kato proved some differential inequalities on $M(r) = r^{\nu-1}\int |\varphi(r\omega)|^2 d\omega$ (where $d\omega$ is surface measure on the unit sphere) and used them to prove that if $\int^{\infty}  M(r) dr < \infty$ (i.e. $\varphi \in L^2(\bbR^\nu))$, then $M(r) = 0$ for $r > R_0$ for some $R_0$. The final step in his proof needs a result that any solution of $(-\Delta+W)\varphi=0$ that vanishes on an open set is identically zero.  This is called a unique continuation theorem (we note the analog fails for hyperbolic equations).  Such theorems go back to Carleman \cite{CarlUC} in 1939.  He only treated $\nu=2$ and required that $V \in L^\infty$.  The kind of estimates he used, now called Carleman estimates, have been a staple, not only of later work on unique continuation, but for many other topics in the theory of elliptic PDEs.  Unique continuation when $V \in L^\infty$ and $\nu \ge 3$ was proven by M\"{u}ller \cite{Mull} in 1954 (see also Aronszajn \cite{AronUC}).  So when Kato did his work, there was only unique continuation for bounded $V's$.  Thus, in the final step, one needs to know there is a compact set, $S$, of measure zero so that $\bbR^\nu\setminus S$ is connected and so that $V$ is locally bounded on this connected set.

Starting in 1980, there were a number of unique continuation results with $L^p_{loc}$ conditions on $V$ culminating in the classic 1985 paper of Jerison--Kenig \cite{JKUC} who require (for $\nu \ge 3$; for $\nu=2$, the condition is more complicated) that $V \in L^{\nu/2}_{loc}$ which is known to be optimal.

In fact, one only needs something weaker than unique continuation, namely that there are no eigenfunctions of compact support.  We will discuss this shortly.

Ikebe--Uchiyama \cite{IU} extended Kato's result to allow magnetic fields which are $\textrm{o}(x^{-1})$ at infinity and Roze \cite{Roze} allowed suitable non--constant coefficient second order elliptic term.

In \cite{FHHOHO1}, Froese et al. proved a variant of Kato's result.  They first proved that if $V$ is $-\Delta$--bounded and $(-\Delta+1)^{-1/2}(|x|V)(-\Delta+1)^{-1}$ is a compact operator, and if $(-\Delta+V)\varphi=\lambda\varphi,\,\varphi \in D(H)$ and $\lambda>0$, then $e^{\alpha|x|} \varphi \in L^2$ for all $\alpha > 0$.  They then prove (and this also shows no compact support eigenfunctions) that if $V(-\Delta+1)^{-3/4}$ is bounded, $\lim_{\gamma \to \infty},\, \norm{V(-\Delta+\gamma)^{-3/4}} = 0$ and ${\lim_{R \to \infty} \norm{\chi_R(1+|x|)V(-\Delta+1)^{-3/4}} =0}$ (where $\chi_R$ is the characteristic function of $\{x\,|\, |x|>R\})$, then  $(-\Delta+V)\varphi=\lambda\varphi$ and $e^{\alpha|x|}\varphi \in L^2$ for all $\alpha >0 \Rightarrow \varphi =0$.  This provides a proof of a variant of Kato's theorem without a need for pointwise bounds on $V$.

A very interesting alternate proof to a theorem very close to Kato is due to Vakulenko \cite{Vaku}.  While Vakulenko and Yafaev \cite{Yafaev2} (who has a clear exposition of Vakulenko's work) say that he recovers Kato's result, instead he has a condition for a class of $V$'s with lots of overlap to, but distinct from, Kato's condition \eqref{12.3}.  A {\emph{Vakulenko bounding function}, $\eta(r)$, is a function on $(0,\infty)$ obeying:
\begin{equation}\label{12.6}
  \forall_{r \in (0,\infty)} \eta(r) >0; \qquad \lim_{r \downarrow 0} r\eta(r) = 0; \qquad \int_{0}^{\infty} \eta(r) dr < \infty
\end{equation}
A \emph{Vakulenko potential}, $V(x)$, on $\bbR^\nu$ is a measurable function for which there exists a Vakulenko bounding function, $\eta(r)$, with
\begin{equation}\label{12.7}
  |V(x)| \le \eta(|x|)
\end{equation}

If $\eta(x) = (1+|x|)^{-1-\epsilon}$ and $V$ obeys \eqref{12.7}, then $V$ obeys both Vakulenko's condition and Kato's \eqref{12.3}.  If we consider $V(x) = (1+|x|)^{-1} [\log(2+|x|)]^{-\alpha}$, then $V$ obeys \eqref{12.3} if $\alpha > 0$ but is only a Vakulenko potential if $\alpha >1$.  On the other hand, if
\begin{equation}\label{12.8}
  V(x) = \left\{
           \begin{array}{ll}
             |x|^{-\beta}, & \hbox{ if for some $n = 1,2,\dots$ } n^2<|x|<n^2+1\\
             0, & \hbox{ otherwise}
           \end{array}
         \right.
\end{equation}
then $V(x)$ obeys Kato's \eqref{12.3} only if $\beta >1$ but is a Vakulenko potential if $\beta > 1/2$.  So neither class is contained in the other, although they are very close.  There is, of course, a connection to his condition and the fact that in one dimension, it has been long known that if the potential is in $L^1$, then the positive spectrum is purely absolutely continuous (as mentioned in Remark 1 after Theorem \ref{T12.1}).

\begin{theorem} [Vakulenko \cite{Vaku}] \lb{T12.2} Let $V(x)$ be a Vakulenko potential with \eqref{12.7} for some $\eta$.  Let $H = -\Delta+V$ and let $B$ be multiplication by $\sqrt{\eta}$.  Then for any $0 < a <b < \infty$, there is a relatively $H$--bounded operator, $A$, so that for all $\lambda \in [a,b]$ and all $\varphi \in D(H)$, we have that
\begin{equation}\label{12.9}
  \textrm{Re} \jap{(H-\lambda)\varphi,A\varphi} \ge \norm{B\varphi}^2
\end{equation}
\end{theorem}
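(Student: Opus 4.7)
The inequality \eqref{12.9} is a positive-commutator (Mourre-type) estimate; it gives the absence of eigenvalues in $[a,b]$ for free, since if $H\varphi = \lambda\varphi$ then the left-hand side vanishes and the right-hand side forces $B\varphi = 0$, hence $\varphi = 0$ as $\eta > 0$. My plan is to build $A$ as a skew-adjoint first-order differential operator ($A^* = -A$), tuned to $(a,b)$ and to the profile of $\eta$, so that the identity
\[
2\,\textrm{Re}\jap{(H-\lambda)\varphi, A\varphi} \;=\; \jap{\varphi,[H,A]\varphi},
\]
which is $\lambda$-independent because $\textrm{Re}\jap{\varphi,A\varphi}=0$ for skew $A$, reduces \eqref{12.9} to the single bound $\jap{\varphi,[H,A]\varphi}\ge 2\|B\varphi\|^2$, to be verified by a commutator computation.

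Concretely, set $\theta(r)=\int_0^r\eta(s)\,ds$, which by the Vakulenko hypotheses is bounded by $\|\eta\|_1$, monotone, and satisfies $\theta(r)=\textrm{o}(r)$ as $r\downarrow 0$. Pick positive constants $\alpha,\kappa$ (depending on $a,b,\|\eta\|_1$), take $f(r)=\alpha(1-e^{-\theta(r)/\kappa})$, and put
\[
A \;=\; f(|x|)\,\partial_r \;+\; \tfrac{1}{2}\Bigl(f'(|x|) + \tfrac{\nu-1}{|x|}f(|x|)\Bigr), \qquad \partial_r = \hat{x}\cdot\nabla,
\]
which is skew-adjoint on $C_c^\infty(\bbR^\nu)$. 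The essential properties of $f$ are: (a) $0\le f\le\alpha$ and $f/r$ is bounded near $0$ by $\theta(r)=\textrm{o}(r)$; (b) $f'(r)=(\alpha/\kappa)e^{-\theta(r)/\kappa}\eta(r)\ge c_0\,\eta(r)$ with $c_0=(\alpha/\kappa)e^{-\|\eta\|_1/\kappa}>0$; (c) $f''$, $f'''$, and $f'/r$ are bounded and controlled pointwise by $\eta$ and $\theta$. Boundedness of $f$, $f'$, and $f/r$ makes $A$ relatively $(-\Delta)^{1/2}$-bounded, hence $H$-bounded under the implicit local hypotheses on $V$.

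The central computation is that of $\jap{\varphi,[H,A]\varphi}$ for $H=-\Delta+V$. Using the radial decomposition $-\Delta=-\partial_r^2-\tfrac{\nu-1}{r}\partial_r+|x|^{-2}L^2$, an integration-by-parts gives
\[
\jap{\varphi,[-\Delta,A]\varphi} \;\ge\; 2\int f'(r)\,|\partial_r\varphi|^2\,d^\nu x \;+\; 2\int\tfrac{f(r)}{r^3}|L\varphi|^2\,d^\nu x \;-\; C\int \eta(r)|\varphi|^2\,d^\nu x,
\]
where $C$ absorbs the subprincipal multiplication errors controlled by (c). For the potential piece, $2\,\textrm{Re}\jap{V\varphi,A\varphi}$ is estimated without differentiating $V$: bounding $|V|\le\eta$ and applying Cauchy--Schwarz with weight $f'$ yields $|2\,\textrm{Re}\jap{V\varphi,A\varphi}|\le \int f'|\partial_r\varphi|^2\,d^\nu x+(\alpha/c_0+C')\int\eta|\varphi|^2\,d^\nu x$. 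Choosing $\alpha$ small (to shrink the constants multiplying $\int\eta|\varphi|^2$) and then $\kappa$ accordingly yields the target $\jap{\varphi,[H,A]\varphi}\ge 2\|B\varphi\|^2$. The main obstacle is orchestrating these absorptions simultaneously; especially delicate is the inner region $r$ small, where $f/r^2$ and Hardy-type angular corrections must be tamed, and this is precisely where the Vakulenko condition $r\eta(r)\to 0$ at the origin is indispensable, since it forces $\theta(r)/r\to 0$ and hence makes $f(r)/r$ and $f(r)/r^2$ behave well. A density argument using that $C_c^\infty(\bbR^\nu)$ is an operator core for $H$ reduces the verification of \eqref{12.9} to smooth compactly supported $\varphi$, where every formal manipulation above is legitimate.
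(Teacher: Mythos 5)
Your choice to make $A$ skew-adjoint is the fatal flaw, and it is a conceptual one. As you yourself observe, for skew $A$ one has $\textrm{Re}\jap{\varphi,A\varphi}=0$, so $\textrm{Re}\jap{(H-\lambda)\varphi,A\varphi}=\tfrac12\jap{\varphi,[H,A]\varphi}$ is \emph{independent of} $\lambda$. The target \eqref{12.9} therefore collapses to an unprojected positive commutator estimate $\jap{\varphi,[H,A]\varphi}\ge 2\norm{B\varphi}^2$ valid for every $\varphi\in D(H)$. Such an estimate cannot hold for general Vakulenko potentials: if $\eta$ is any Vakulenko bounding function and $c>0$ is large enough, then $V=-c\eta$ is again a Vakulenko potential, and $-\Delta-c\eta$ has negative eigenvalues; evaluating your estimate on a bound-state eigenfunction $\psi$ (with $\lambda=a$, say) gives $0\ge 2\norm{B\psi}^2$ and hence $\psi=0$, a contradiction. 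Equivalently, nowhere in your commutator computation does a \emph{positive} $\int\eta|\varphi|^2$ term appear: your estimate for $\jap{\varphi,[-\Delta,A]\varphi}$ produces positive $|\partial_r\varphi|^2$ and angular pieces but a \emph{negative} $-C\int\eta|\varphi|^2$, and the potential piece only supplies further negative multiples of $\int\eta|\varphi|^2$, so no choice of $\alpha,\kappa$ can make the final bound come out $\ge 2\norm{B\varphi}^2$.

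The cure is to \emph{not} symmetrize $A$, which is exactly what Vakulenko and the paper do. In the $\nu=1$ case the paper takes $\omega(x)=\exp\bigl[\tfrac{2}{\sqrt{\lambda_0}}\int_{-\infty}^x\eta\bigr]$ and $A=2\omega\,\tfrac{d}{dx}$; this $A$ has a nonzero symmetric part $-\omega'$, and it is precisely through the term $-\lambda\jap{\varphi,(-\omega')\varphi}=+\lambda\int\omega'|\varphi|^2$ that a $\lambda$-proportional positive multiple of $\int\eta|\varphi|^2$ enters. The remaining cross term is absorbed by the arithmetic--geometric inequality $(|\varphi'|-\sqrt\lambda|\varphi|)^2\ge 0$, which again requires $\lambda\ge a>0$. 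This is why the theorem restricts to intervals $[a,b]\subset(0,\infty)$ and why the operator $A$ is allowed to depend on $a,b$: the needed positivity is borrowed from the spectral parameter, not from the kinetic commutator. Your exponentiated primitive $f(r)=\alpha(1-e^{-\theta(r)/\kappa})$ is otherwise in the right spirit (it serves the same role as the paper's $\omega$), and your radial decomposition is a reasonable template for the higher-dimensional case, but you must keep $A$ of the form (multiplier)$\cdot\partial_r$ without the symmetrizing correction, and then exploit the $\lambda$-dependent term your current choice annihilates.
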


In Section \ref{s15}, we'll see that \eqref{12.9} has implications for local smoothness of $B$ and implies strong spectral properties of $H$.  We'll also prove the theorem when $\nu=1$ and say something about the proof for general $\nu$.  For now, we note that

\begin{corollary} [Vakulenko \cite{Vaku}] \lb{C12.3} If $V$ is a Vakulenko potential and $H=-\Delta+V$, then $H$ has no positive eigenvalues.
\end{corollary}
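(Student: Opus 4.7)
The plan is to deduce the corollary directly from Theorem 12.2 by substituting an eigenfunction into the inequality \eqref{12.9}. Specifically, suppose for contradiction that $H\varphi = \lambda\varphi$ with $\lambda > 0$ and $\varphi \in D(H)$, $\varphi \ne 0$. Choose any $a, b$ with $0 < a \le \lambda \le b < \infty$, and let $A$ be the operator furnished by Theorem \ref{T12.2} for this interval. Since $A$ is $H$-bounded, we have $D(H) \subset D(A)$, so $A\varphi$ is a well-defined vector in $\calH$.

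Now $(H-\lambda)\varphi = 0$, so the left-hand side of \eqref{12.9} evaluates to $\Real\jap{0, A\varphi} = 0$. The inequality then forces
\begin{equation*}
\norm{B\varphi}^2 \le 0,
\end{equation*}
whence $B\varphi = 0$ almost everywhere. Since $B$ is multiplication by $\sqrt{\eta(|x|)}$ and the bounding function $\eta$ is strictly positive on $(0,\infty)$ by the definition in \eqref{12.6}, the multiplier $\sqrt{\eta(|x|)}$ is strictly positive for almost every $x \in \bbR^\nu$. Hence $B\varphi = 0$ a.e.\ implies $\varphi = 0$ a.e., contradicting the assumption $\varphi \ne 0$.

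There is essentially no obstacle here: all the work has been pushed into Theorem \ref{T12.2}, and the corollary reduces to the observation that the right-hand side of \eqref{12.9} controls $\varphi$ pointwise wherever $\eta$ is nonzero, while the left-hand side vanishes identically on eigenvectors of $H$ at energy $\lambda$. The only thing one needs to verify carefully is that $A$ is defined on $\varphi$, which follows from relative $H$-boundedness, and that $\sqrt{\eta}$ does not vanish on a set of positive measure, which is immediate from the pointwise positivity condition in \eqref{12.6}. (If one only had $\eta \ge 0$ rather than $\eta > 0$, one would additionally need a unique continuation argument as in Kato's original proof, but the Vakulenko hypothesis is arranged precisely to avoid this.)
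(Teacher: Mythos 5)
Your argument is correct and is essentially the same as the paper's proof, just spelled out in more detail: both substitute the eigenfunction into \eqref{12.9}, observe that the left side vanishes, conclude $\norm{B\varphi}=0$, and use the strict positivity of $\eta$ to force $\varphi=0$.
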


\begin{proof} Let $\lambda > 0$.  Pick $a,b$ with $0<a<\lambda<b<\infty$.  If $H\varphi=\lambda\varphi$ for $\varphi \in D(H)$, by \eqref{12.9}, we have that $\norm{B\varphi}=0$.  Since $\eta$ is everywhere non--vanishing, we conclude that $\varphi=0$.
\end{proof}

The Wigner--von Neumann example has oscillations and one expects that if such oscillations are absent, then there should also be no positive eigenvalues.  For example, if $V(x)$ looks like $r^{-\alpha}, \, 0<\alpha \le 1$, one expects that there should also be no positive eigenvalues.  Odeh \cite{OdehNPE} proved that if $\boldsymbol{x}\cdot\boldsymbol{\nabla} V \le 0$ for all large $x$, then Kato's method could be modified to show there are no positive eigenvalues.  Shortly thereafter, Agmon \cite{AgmonNPE} and Simon \cite{SimonNPE}, using Kato's methods, independently proved (with enough local regularity to apply a unique continuation theorem) that there are no positive eigenvalues if $V(x) = V_1(x)+V_2(x)$ so long as when $x \to \infty$, one has that $|x||V_1(x)| \to 0$, $V_2(x) \to 0$ and $\boldsymbol{x}\cdot\boldsymbol{\nabla}V_2(x) \to 0$.  Most later works and, in particular, both Froese et al \cite{FHHOHO1} and Vakulenko \cite{Vaku}, also considered such sums.  Khosrovshahi--Levine--Payne \cite{KLPNPE} and Kalf--Krishna Kumar \cite{KalfK} allow a third highly oscillatory piece and prove no positive eigenvalues (so for example, they allow $r^{-1} \sin(r^\beta)$ for $\beta > 1$ and Agmon--Simon allow $\beta < 1$).

Another way of extending Odeh's result proves absence of positive eigenvalues using the virial theorem as discussed below (see also the discussion of Lavine's work in Section \ref{s15}).

Before discussing more results on the absence of positive energy eigenvalues, we pause for some other examples, motivated by the Wigner--von Neumann example, where there are positive energy eigenvalues.  By taking suitable sums of $b_j \sin(\alpha_j r)/r$ (cutoff away from infinity), Naboko \cite{NabokoWvN} and Simon \cite{SimonWvN} constructed, for each $\delta>0$, $V(x)$, bounded by $r^{-1+\delta}$ near infinity with dense point spectrum.  Here is one such result (taken from \cite{SimonWvN}):

\begin{theorem} \lb{T12.4} For any countable subset $\{E_k\}_{k=1}^\infty$ of $(0,\infty)$ and any $\epsilon,\delta > 0$, there is $V(x)$ on $(0,\infty)$ so that $-\tfrac{d^2}{dx^2}+V(x)$ on $L^2(0,\infty;dx)$ with $\varphi(0)=0$ boundary conditions has $\varphi_k \in L^2\cap C^2(0,\infty)$, so $\varphi_k(0)=0$ and $-\varphi_k''+V\varphi_k = E_k\varphi_k$ and so that
\begin{equation}\label{12.10}
  |V(x)| \le \epsilon(1+|x|)^{-1+\delta}
\end{equation}
\end{theorem}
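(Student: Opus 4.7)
The plan is to use the Pr\"ufer transform and construct $V$ as a sum of sparsely supported, resonantly-phased Wigner--von Neumann blocks, one for each prescribed eigenvalue. Writing $\alpha_k = \sqrt{E_k}$, any solution of $-u''+Vu = E_k u$ with $u(0)=0$ admits the representation $u(x) = R_k(x)\sin\theta_k(x)$, $u'(x)=\alpha_k R_k(x)\cos\theta_k(x)$, $\theta_k(0)=0$, and a direct computation yields
\[
(\ln R_k)'(x) = \tfrac{V(x)}{2\alpha_k}\sin(2\theta_k(x)), \qquad \theta_k'(x) = \alpha_k - \tfrac{V(x)}{\alpha_k}\sin^2\theta_k(x).
\]
Thus, to realize $E_k$ as an eigenvalue of $-d^2/dx^2+V$ with $\varphi_k(0)=0$, it suffices to force $R_k(x)\to 0$ rapidly enough that $\varphi_k \equiv R_k\sin\theta_k \in L^2(0,\infty)$.

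I would build $V = \sum_k V_k$ inductively, with $V_k$ supported on a disjoint sequence of intervals $A_k = [r_k,s_k]$ satisfying $s_k \ll r_{k+1}$, and of the explicit form $V_k(x) = c_k(1+x)^{-1+\delta}\sin(2\alpha_k x + \phi_k)$. By disjointness of supports, the pointwise bound $|V(x)| \le \epsilon(1+|x|)^{-1+\delta}$ reduces to $|c_k|\le\epsilon$. At stage $k$, having fixed $V_1+\cdots+V_{k-1}$, the phase $\phi_k$ is determined by requiring $\phi_k \equiv 2\theta_k(r_k) - 2\alpha_k r_k + \pi \pmod{2\pi}$. The product-to-sum identities then show that the zero-frequency parts of $V_k(x)\sin(2\theta_k(x))$ and $V_k(x)\sin^2\theta_k(x)$ are $-\tfrac12 c_k(1+x)^{-1+\delta}$ and $0$, respectively; the vanishing of the latter keeps the phase drift $\theta_k(x)-\alpha_k x - \phi_k/2$ bounded on $A_k$, so the resonance condition persists throughout, while the former yields
\[
\ln R_k(s_k) - \ln R_k(r_k) \le -\tfrac{c_k}{4\alpha_k\delta}\bigl[(1+s_k)^\delta - (1+r_k)^\delta\bigr] + O(1).
\]
Choosing $s_k$ large, this drop is as negative as desired.

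Outside $\bigcup_j A_j$ the potential vanishes and $R_k$ is constant there. On $A_j$ with $j\ne k$, the integrands $V_j\sin(2\theta_k)$ and $V_j\sin^2\theta_k$ contain no zero-frequency component since $\alpha_j\ne\alpha_k$, and van der Corput / non-stationary phase estimates bound the total contributions to $\ln R_k$ and to $\theta_k$ coming from $A_j$ by $O\bigl((1+r_j)^{-1+\delta}|\alpha_j-\alpha_k|^{-1}\bigr)$. Arranging the induction so that, for each $k$, $\sum_{j\ne k}(1+r_j)^{-1+\delta}|\alpha_j-\alpha_k|^{-1}$ is finite and the $k$-th resonant drop dominates all previously incurred and subsequently anticipated errors produces $R_k(x)\to 0$ fast enough that $\varphi_k\in L^2$; the boundary condition $\varphi_k(0)=0$ is built in.

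The main obstacle is precisely this diagonal control in the presence of close or accumulated frequencies: since $\{E_k\}$ is an arbitrary countable set, the gaps $|\alpha_j-\alpha_k|$ may be arbitrarily small, and the non-resonant cross-term bound $|\alpha_j-\alpha_k|^{-1}$ blows up accordingly. The remedy is to choose $r_j$ (at stage $j$) very far to the right, with $r_j$ depending on $\min_{k<j}|\alpha_j-\alpha_k|$, and to stretch $s_j/r_j$ correspondingly so that the $j$-th resonant drop still overwhelms both the accumulated past errors and the summable tail of future cross-contributions to each $R_k$, $k<j$. Balancing the three constraints --- $|c_j|\le\epsilon$ (for the global $V$-bound), sufficiently large resonant drop (for $L^2$-ness of $\varphi_j$), and summable non-resonant perturbations of every $\varphi_k$, $k\ne j$ --- is the technical crux of the construction.
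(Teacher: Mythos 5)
Your high-level strategy---Pr\"ufer variables, a Wigner--von Neumann resonance for each $E_k$, phase-matching to lock the resonance, non-stationary phase to control cross-terms, and an induction that pushes later blocks far to the right to compensate for small frequency gaps $|\alpha_j-\alpha_k|$---is exactly the right toolkit, and it is essentially the machinery used in the cited constructions of Naboko and Simon. But there is a structural error that is fatal to the construction as you set it up: the pieces $V_k$ cannot have compact support.

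With $V_k$ supported on $A_k = [r_k,s_k]$ and the $A_j$ pairwise disjoint with $s_k \ll r_{k+1}$, the potential vanishes on every gap $(s_j, r_{j+1})$. There $(\ln R_k)' = 0$ and $\theta_k' = \alpha_k$, so $\varphi_k$ is a free wave of constant amplitude on each gap. Your resonant block lowers $\ln R_k$ by only a \emph{finite} amount, namely $\tfrac{c_k}{4\alpha_k\delta}\bigl[(1+s_k)^\delta-(1+r_k)^\delta\bigr]$, and the non-resonant cross-contributions from $A_j$, $j>k$, change $\ln R_k$ by a bounded, summable sequence of $O\bigl((1+r_j)^{-1+\delta}|\alpha_j-\alpha_k|^{-1}\bigr)$ terms. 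Hence $R_k(x)$ stays bounded below away from zero for all $x\ge s_k$, and
\[
\int_{s_k}^\infty R_k(x)^2\,dx \ \ge\ c\sum_{j\ge k}(r_{j+1}-s_j) \ =\ +\infty,
\]
so $\varphi_k\notin L^2$. The assertion ``$R_k(x)\to 0$'' that you invoke never occurs: a finite-length block produces only a finite drop, and once the amplitude exits the block it is frozen on the (infinitely long) gaps. No choice of widths, gap lengths, or amplitudes repairs this.

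The fix---and what is actually done in the cited constructions---is to support each $V_k$ on the half-line $[r_k,\infty)$, not on a compact interval; the phrase ``cutoff away from infinity'' in the surrounding text means truncated near the origin, not near infinity. The resonant tail then drives $\ln R_k(x)\sim -\tfrac{c_k}{4\alpha_k\delta}(1+x)^\delta$ as $x\to\infty$, giving super-polynomial decay and hence $\varphi_k\in L^2$; the non-resonant contributions from the other half-line tails are still controlled by exactly your non-stationary phase estimate. The $L^\infty$-bound $|V|\le\epsilon(1+x)^{-1+\delta}$ is now obtained not from disjoint supports but by taking the amplitudes summable, $\sum_j|c_j|\le\epsilon$. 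The role of pushing $r_j$ far to the right is then precisely the one you already identified: to make $\sum_{j\ne k}(1+r_j)^{-1+\delta}|\alpha_j-\alpha_k|^{-1}$ converge for every $k$ despite possibly tiny frequency gaps. With this one change, your phase-matching, zero-frequency, and phase-drift computations carry over and the argument closes.
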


\begin{remark} If $0 < \delta < 1/2$, it is known (\cite{CK, Rem, DeiftKillip, KS2}) that $-\tfrac{d^2}{dx^2}+V(x)$ has a.c. spectrum on all of $[0,\infty)$ so this is point spectrum embedded in continuous spectrum.  As noted already, if $\delta > 1/2$, one can find $V$'s with only point spectrum.
\end{remark}

The Wigner--von Neumann and Naboko--Simon examples are spherically symmetric.  Ionescu--Jerison \cite{IJ} found examples where the slow $\textrm{O}(r^{-1})$ decay is only in a parabolic tube about a single direction:

\begin{theorem} [Ionescu--Jerison \cite{IJ}] \lb{T12.5}   Fix $\nu \ge 2$.  There exists $C>0$ and for each $n=1,2,\dots$ a potential obeying
\begin{equation}\label{12.11}
  |V(x_1,\dots,x_\nu)| \le \frac{C}{n+|x_1|+|x_2|^2+\dots+|x_\nu|^2}
\end{equation}
and so that $(-\Delta+V)\varphi=\varphi$ has a non--zero $L^2$ solution.
\end{theorem}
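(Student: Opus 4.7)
The plan is to imitate the Wigner--von Neumann construction in a parabolic regime, using the ansatz $\varphi(x_1,y) = e^{ix_1} u(x_1,y)$ with $y=(x_2,\dots,x_\nu) \in \bbR^{\nu-1}$, and defining the potential by $V := (\Delta\varphi+\varphi)/\varphi$. Plugging the ansatz into $(-\Delta+V)\varphi=\varphi$ and using $|e^{ix_1}|=1$, the equation becomes
\begin{equation*}
  -2i\,\partial_{x_1} u + \partial_{x_1}^2 u - \Delta_y u + V u = 0.
\end{equation*}
In the parabolic region $|y|^2 \lesssim x_1$, the $\partial_{x_1}^2 u$ term is lower order (by the scaling $y \sim \sqrt{x_1}$, $\partial_{x_1}\sim 1/x_1$, $\partial_y^2 \sim 1/x_1$, while $\partial_{x_1}^2 \sim 1/x_1^2$), so $u$ should approximately satisfy a Schr\"{o}dinger--type equation with $x_1$ playing the role of time and $V$ playing the role of a time--dependent potential. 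The guiding principle is that Gaussian wavepackets in these ``parabolic'' coordinates are natural candidates for $u$, since they correspond to coherent states of the free Schr\"{o}dinger evolution.

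The steps I would follow are: \textbf{(i)} Fix a smooth cutoff $\chi(x_1)$ with $\chi \equiv 0$ for $x_1 \le n/2$ and $\chi \equiv 1$ for $x_1 \ge n$; then choose an ``initial data'' $u_0(y)$ at $x_1 = n$ that is a Gaussian (or a Hermite--weighted Gaussian with nonzero angular moment to obtain nonvanishing $u$) in $y$. \textbf{(ii)} Extend $u$ to $x_1 > n$ by prescribing it to be a carefully chosen modulated Gaussian $u(x_1,y)= A(x_1) \exp\bigl(-B(x_1)|y|^2 + i\theta(x_1,y)\bigr)$ whose amplitude $A$, inverse width $B$, and phase $\theta$ are chosen so that (a) $u$ is essentially a \emph{Gaussian beam} propagating along the $x_1$-axis with appropriate spreading, and (b) the residual $R := (-2i\partial_{x_1} + \partial_{x_1}^2 - \Delta_y) u$ satisfies $|R/u| \le C/(n+x_1+|y|^2)$ pointwise. \textbf{(iii)} Set $V := R/u$ on the support of $\varphi$, extended by $0$ elsewhere, and verify the bound \eqref{12.11}. \textbf{(iv)} Check that $\varphi \in L^2(\bbR^\nu)$, which requires $A(x_1)^2 B(x_1)^{-(\nu-1)/2} \in L^1(dx_1)$; since the natural parabolic spreading gives $B \sim 1/x_1$, the amplitude $A$ must be chosen to decay like $x_1^{-\alpha}$ with $2\alpha - (\nu-1)/2 > 1$, e.g.\ $\alpha = \nu/2 + 1$.

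The reason the denominator $n+x_1+|y|^2$ (rather than $n+|x|+|y|^2$ with $|x|^2$) shows up is precisely that this is the natural scale adapted to the Gaussian beam: in the beam, $|y|^2 \lesssim x_1$, so the two summands are comparable and $V \sim 1/x_1$ is the analogue of the Wigner--von Neumann $1/r$ decay; outside the beam, $|y|^2 \gg x_1$, and the Gaussian factor in $u$ gives $V$ a rapidly decaying correction that fits under $C/|y|^2$. The shift by $n$ arises because the construction is only valid for $x_1 \ge n$, i.e.\ once the Gaussian beam is wide enough that the paraxial approximation is sharp; the cutoff $\chi$ and the lower bound on $x_1$ guarantee that $\varphi$ is bounded below away from zero on its support (up to the Gaussian decay in $y$), so that $V$ is well defined.

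\textbf{The main obstacle} is step (ii): producing an explicit (or sufficiently controllable) envelope $u$ whose corresponding $V$ obeys the sharp bound \eqref{12.11} uniformly in both directions (inside and outside the parabolic tube) \emph{and} uniformly in $n$ with a single constant $C$. A naive Gaussian beam produces a residual $R/u$ of order $1/x_1$ inside the tube but also a transverse correction proportional to $|y|^2/x_1^2$ which, in the regime $|y|^2 \sim x_1$, is of the right size $1/x_1 \sim 1/(n+x_1+|y|^2)$, but this requires a careful choice of $A, B, \theta$ (related to solving the eikonal and transport equations of WKB theory, modified by the lower order correction $\partial_{x_1}^2 u$) to prevent a logarithmic loss. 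Handling the $n$-uniformity is achieved by the translation $x_1 \mapsto x_1 - n$ combined with choosing the initial width $B(n)$ to be $O(1)$, which effectively replaces $x_1$ by $x_1 + n$ throughout the estimates and yields the stated bound.
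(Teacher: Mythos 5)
The paper's proof of Theorem~\ref{T12.5} (via the Frank--Simon simplification stated in the surrounding text) uses the explicit algebraic ansatz
$\psi_n(x)=\sin x_1\,\bigl(n^2+g(x_1)^2+|y|^4\bigr)^{-\alpha}$
with $y=(x_2,\dots,x_\nu)$ and $g$ from \eqref{12.1}, and simply sets $V_n=(\Delta\psi_n+\psi_n)/\psi_n$.  Your plan is genuinely different: you propose $\varphi=e^{ix_1}u(x_1,y)$ with $u$ a Gaussian beam $u=A(x_1)\exp(-B(x_1)|y|^2+i\theta)$.

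\textbf{There is a gap, and it is structural rather than a matter of sharpening constants.}  For any Gaussian-in-$y$ envelope, the logarithmic transverse derivative $\nabla_y u/u=-2By+i\nabla_y\theta$ \emph{grows linearly} in $y$, so $\Delta_y u/u$ contains a $4B^2|y|^2$ term and $(\partial_{x_1}u/u)^2$ contains a $(B')^2|y|^4$ term.  Consequently $V=(\Delta_y u+2i\partial_{x_1}u+\partial_{x_1}^2 u)/u$ is a polynomial in $|y|^2$ whose modulus \emph{tends to infinity} as $|y|\to\infty$ for each fixed $x_1$.  This is incompatible with the required bound $|V|\le C/(n+x_1+|y|^2)$, which forces $V\to0$ as $|y|\to\infty$.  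Concretely, for the paraxial Gaussian $b(t)=1/(c_0+2it)$ one computes
\begin{equation*}
  V=\partial_{x_1}^2u/u=-(\nu^2-1)b^2+4(\nu+1)b^3|y|^2-4b^4|y|^4,
\end{equation*}
so $|V|\sim|y|^4/x_1^4$ for $|y|\gg x_1$; at $|y|\sim x_1$ this is $O(1)$, whereas \eqref{12.11} demands $O(1/x_1^2)$.  Your diagnosis of the obstacle as a ``logarithmic loss'' near $|y|^2\sim x_1$ misidentifies it: the failure occurs in the regime $x_1\lesssim|y|^2\lesssim x_1^2$ (and beyond), and it is polynomial, not logarithmic.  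No adjustment of $A$, $B$, $\theta$ within the Gaussian class can repair this, because the quadratic growth of $\nabla_y u/u$ is forced by the Gaussian form of $u$.

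The remedy used in the literature (and quoted in the paper) is precisely to replace the Gaussian transverse profile by an algebraic one, $D^{-\alpha}$ with $D=n^2+g(x_1)^2+|y|^4$.  Then $\nabla_yu/u=-4\alpha|y|^2y/D$ \emph{decays} like $1/|y|$ once $|y|^4\gtrsim D$, so all the resulting terms in $(\Delta\psi_n+\psi_n)/\psi_n$ automatically decay like the right-hand side of \eqref{12.11}.  The $L^2$ requirement becomes $\alpha>\nu/4$, and the Wigner--von Neumann cancellation via $g$ handles the longitudinal direction.  If you want to keep a WKB/Gaussian-beam point of view, you would at minimum need to splice the Gaussian core onto a polynomial tail at $|y|^2\sim x_1$, but at that point the explicit Frank--Simon ansatz is both simpler and cleaner.

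Two smaller issues: (a) you assert ``the natural parabolic spreading gives $B\sim1/x_1$,'' but the free paraxial Gaussian has $\operatorname{Re}B\sim c_0/x_1^2$ for large $x_1$, i.e. the beam spreads \emph{linearly}, not parabolically, so even the tube geometry in your heuristic is off; (b) the cutoff $\chi$ at $x_1\sim n$ is unnecessary and introduces new sources of error in $V$ near the cutoff region, which would themselves need to be checked against \eqref{12.11}; the Frank--Simon ansatz is globally defined and obtains the $n$-uniformity simply by putting $n^2$ into the denominator.
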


Frank--Simon \cite{FrSimon} have simplified the Ionescu--Jerison construction by hewing more closely to the Wigner--von Neumann method.  They use the wave function
\begin{equation}\label{12.12}
  \varphi_n(x) = \sin x_1 (n^2+g(x_1)^2+(x_2^2+\dots+x_\nu^2)^2)^{-\alpha}
\end{equation}
where $\alpha > \nu/4$ (which implies that $\psi_n \in L^2$) and $g$ is given by \eqref{12.1}.  $V_n$ is then defined by
\begin{equation}\label{12.13}
  V_n(x) = \frac{\Delta\psi_n+\psi_n}{\psi_n}
\end{equation}
which is seen to obey \eqref{12.11}.  \cite{FrSimon} also has versions of the central Wigner--von Neumann potentials for dimensions different from $1$ and $3$.

Notice that \eqref{12.11} implies that $V_n \in L^p(\bbR^\nu)$ for any $p > \tfrac{1}{2}(\nu+1)$.  That says that the value of $p$ in the following is optimal:

\begin{theorem} [Koch--Tataru \cite{KTCMP}] \lb{TT12.6} Let $\nu \ge 2$.  If $V \in L^{p_1}(\bbR^\nu)+L^{p_2}(\bbR^\nu)$ where $p_1=\tfrac{1}{2}\nu< p_2=\tfrac{1}{2}(\nu+1)$ (if $\nu =2$, one needs to take $p_1>1$), then $-\Delta+V$ has no eigenvalues in $(0,\infty)$.
\end{theorem}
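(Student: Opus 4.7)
The plan is to prove this by a combination of the uniform Sobolev inequality of Kenig--Ruiz--Sogge together with a family of weighted Carleman estimates. The critical nature of $p_2 = (\nu+1)/2$ matches the dual exponent for Stein--Tomas restriction, so this is the natural framework, and the examples of Ionescu--Jerison show that no soft argument can do better.

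First, I would invoke the Kenig--Ruiz--Sogge uniform Sobolev inequality: for $q = 2(\nu+1)/(\nu+3)$ and its dual $q' = 2(\nu+1)/(\nu-1)$ one has
\begin{equation*}
  \norm{u}_{L^{q'}(\bbR^\nu)} \le C \norm{(-\Delta - z)u}_{L^q(\bbR^\nu)}
\end{equation*}
uniformly for all $z \in \bbC$. The exponents are chosen precisely so that H\"older's inequality yields $\norm{V u}_{L^q} \le \norm{V}_{L^{p_2}} \norm{u}_{L^{q'}}$. Thus, if $(-\Delta + V)\varphi = \lambda \varphi$ with $\lambda > 0$ and one writes $V = V_1 + V_2$, split so that the $L^{p_1}$ part is handled as a relatively form-small perturbation and $V_2$ has small $L^{p_2}$ norm outside a large ball, the combination of these estimates gives an a priori bound forcing $\varphi$ to lie in $L^{q'}$ and to satisfy an absorbing inequality of the form $\norm{\varphi}_{L^{q'}(\{|x|>R\})} \le \tfrac{1}{2} \norm{\varphi}_{L^{q'}(\{|x|>R\})}$ on the complement of a compact set.

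Next comes the main technical step: weighted (Carleman) versions of the uniform Sobolev estimate. One seeks radial convex weights $\phi_\tau(|x|)$ (of the kind Koch--Tataru construct, essentially obtained by interpolating logarithmic and linear profiles across dyadic shells) for which
\begin{equation*}
  \norm{e^{\tau\phi_\tau}u}_{L^{q'}} \le C \norm{e^{\tau\phi_\tau}(-\Delta - \lambda) u}_{L^q}
\end{equation*}
holds with constant $C$ \emph{uniform} in $\tau$ and in $\lambda$ away from $0$. Conjugating $-\Delta$ by $e^{\tau\phi}$ produces lower-order pseudo-differential terms that must be controlled by a positive commutator structure in the symbol; the sharp exponent $p_2$ forces one to work at the Stein--Tomas endpoint, so one cannot lose any power. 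With this weighted inequality in hand, applying it to $u = \varphi$ and absorbing $V\varphi$ exactly as before shows $\norm{e^{\tau\phi_\tau}\varphi}_{L^{q'}(\{|x|>R\})} = 0$ for every $\tau$, which forces $\varphi$ to vanish outside a compact set. A standard strong unique continuation theorem at the $L^{\nu/2}_{\mathrm{loc}}$ level (Jerison--Kenig) then gives $\varphi \equiv 0$.

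The hard part will be producing the uniform weighted Carleman estimate at the critical Stein--Tomas exponent: one must choose the family of convex weights so that the conjugated operator $e^{\tau\phi}(-\Delta-\lambda)e^{-\tau\phi}$ admits a uniform $L^q \to L^{q'}$ inverse on the appropriate orthogonal complement, without losing any derivatives or any powers of $\tau$. This requires a careful scale-by-scale decomposition (Littlewood--Paley in the radial variable together with spherical harmonic decomposition) and dispersive/restriction estimates for each dyadic piece, glued back together by an almost-orthogonality argument. In contrast, the reductions and absorption arguments in the first and third paragraphs are largely soft once the Carleman estimate is available, and the use of $L^{p_1}$ with $p_1 = \nu/2$ only enters through the classical fact that such perturbations are form-bounded and yield the needed a priori regularity of $\varphi$.
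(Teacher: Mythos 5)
Your outline matches the overall architecture of Koch--Tataru's proof, which the paper itself only cites (noting in a remark that the argument "relies on $L^p$ Carleman estimates and the machinery of \cite{KTCPAM}"). The logical spine you give — Kenig--Ruiz--Sogge uniform Sobolev inequality, weighted Carleman estimate at the Stein--Tomas endpoint with carefully constructed convex radial weights, compact support of $\varphi$, and finally Jerison--Kenig unique continuation — is the correct one.

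There is, however, a genuine gap in your first paragraph. Applying the \emph{unweighted} Kenig--Ruiz--Sogge estimate to $\chi_R\varphi$ (a cutoff of $\varphi$ to $\{|x|>R\}$) does not produce the absorbing inequality you claim: $(-\Delta-\lambda)(\chi_R\varphi) = \chi_R V\varphi + [-\Delta,\chi_R]\varphi$ contains a commutator term supported near $|x|=R$ which cannot be absorbed by the smallness of $\norm{V}_{L^{p_2}(\{|x|>R\})}$, and which has no reason to be small. If that unweighted absorption actually closed, $\varphi$ would already be forced to vanish outside a compact set and the entire Carleman machinery of your second and third paragraphs would be superfluous. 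The commutator terms are precisely what the weight $e^{\tau\phi_\tau}$ is designed to defeat: the weight makes the exterior region dominate the transition region near $|x|=R$, and sending $\tau\to\infty$ kills everything outside a fixed ball. So the absorbing inequality should appear only in the weighted step, not before it. A second, less serious imprecision: the $L^{\nu/2}$ piece of $V$ does not enter only through soft form-boundedness. In Koch--Tataru both pieces of the potential are absorbed in the Carleman step, the $L^{(\nu+1)/2}$ part via the restriction pair $(q,q') = \bigl(2(\nu+1)/(\nu+3),\, 2(\nu+1)/(\nu-1)\bigr)$ and the $L^{\nu/2}$ part via the Sobolev pair $\bigl(2\nu/(\nu+2),\, 2\nu/(\nu-2)\bigr)$; the weighted estimate must therefore be established (or interpolated) across a range of exponents, not solely at the Stein--Tomas endpoint.
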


\begin{remarks} 1.  Earlier Ionescu--Jerison \cite{IJ} proved the weaker result where $p_2=\tfrac{1}{2}(\nu+1)$ is replaced by $p_2=\tfrac{1}{2}\nu$.

2.  As we noted above, by Theorem \ref{T12.5}, $p_2=\tfrac{1}{2}(\nu+1)$ is optimal.  The lower bound on $p$ is needed to assure esa--$\nu$.

3.  The proof relies on $L^p$ Carleman estimates and the machinery of \cite{KTCPAM}.
\end{remarks}

In many ways, the most subtle results on the absence of positive eigenvalues concern $N$--body systems.  After all, we saw in Sections \ref{s3} and \ref{s4} (Example 3.2 and Example 3.2 revisited) that $N$--body systems can have eigenvalues embedded in negative continua without carefully tuned potentials due to either non--interacting clusters or due to an eigenvalue of one symmetry embedded in a continuum of another symmetry.  The earliest $N$--body results involve the Virial Theorem and showed no positive eigenvalues under specialized circumstances, for example repulsive potentials and also $V$'s homogeneous of degree $\beta$ (i.e. $V(\lambda \overrightarrow{x}) = \lambda^\beta V(\overrightarrow{x}), \, 0>\beta > -2$) which includes the physically important Coulomb case.  This is discussed in Weidmann \cite{WeidVirial}, Albeverio \cite{AlbVirial} and Kalf \cite{KalfVirial} (or \cite[Theorems XIII.59 and XIII.60]{RS4}).

Undoubtedly, the deepest results on lack of positive eigenvalues for $N$--body systems are in Froese--Herbst \cite{FHPE}.  They assume that the $V_{ij}(r) = v_{ij}(r_i-r_j)$ where $v_{ij}$ as functions on $\bbR^\nu$ obey $v_{ij}(-\Delta+1)^{-1}$ and $(-\Delta+1)^{-1}(y\cdot\nabla_y v_{ij})(y) (-\Delta+1)^{-1}$ are compact (here $\Delta$ is the Laplacian and all operators act on $\bbR^\nu$). These hypotheses are made so that Mourre theory applies (see Mourre \cite{Mourre}, Perry--Sigal--Simon \cite{PSS}, Froese--Herbst \cite{FHM}, Amrein--Boutet de Monvel--Georgescu \cite{ABG} and Sahbani \cite{Sab}).

One takes $N$ particles $(x_1,\dots,x_N), x_j \in \bbR^\nu$ and defines
\begin{equation}\label{12.14}
  |x| = (2\sum_{j=1}^{N} m_j |x_j-R|^2)^{1/2}
\end{equation}
where $R = \left(\sum_{j=1}^{N} m_j\right)^{-1}\left(\sum_{j=1}^{N} m_j x_j\right)$. If we are looking at a Hamiltonian on $L^2(\bbR^{\nu(N-1)})$ with center of mass motion removed or if we have some $v_j$ representing interactions with infinite mass particles, then we act on $L^2(\bbR^{\nu N})$, and set $R=0$. What Froese---Herbst found is

\begin{theorem} [Froese--Herbst \cite{FHPE}] \lb{T12.7} Under the above hypotheses, if $H\psi = \lambda\psi,\,\psi \in L^2(\bbR^\kappa)$, then
\begin{equation}\label{12.15}
  \beta \equiv \sup_{\alpha \ge 0} \{\alpha^2+\lambda \,|\, e^{\alpha|x|}\psi \in L^2\} \in \calT \cup \{\infty\}
\end{equation}
where $\calT$ is the set of thresholds of the system (see Section \ref{s11} for a discussion of thresholds).
\end{theorem}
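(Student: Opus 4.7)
The plan is to study, for a fixed eigenfunction $H\psi = \lambda\psi$, the set $S = \{\alpha \ge 0 : e^{\alpha|x|}\psi \in L^2\}$ and to rule out, via Mourre theory, that its supremum $\alpha_0$ has $\alpha_0^2 + \lambda \notin \calT \cup \{\infty\}$. The starting observation is that $S$ is an interval (this follows by interpolation: if $e^{\alpha_1|x|}\psi$ and $e^{\alpha_2|x|}\psi$ are in $L^2$ with $\alpha_1 < \alpha_2$, so is $e^{\alpha|x|}\psi$ for $\alpha \in [\alpha_1,\alpha_2]$ by log-convexity of $\alpha \mapsto \|e^{\alpha|x|}\psi\|$). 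Set $\beta = \alpha_0^2 + \lambda$ when $\alpha_0 < \infty$; the goal is to show $\beta \in \calT$.

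For $\alpha \in [0,\alpha_0)$, normalize $\phi_\alpha = e^{\alpha|x|}\psi/\|e^{\alpha|x|}\psi\|$, which satisfies the conjugated equation $H(\alpha)\phi_\alpha = \lambda \phi_\alpha$, with $H(\alpha) = e^{\alpha|x|}He^{-\alpha|x|}$. Using $|\nabla|x||^2 = 1$ away from the origin (in the mass-weighted metric) and $\Delta|x| = O(|x|^{-1})$, an elementary computation gives
\begin{equation*}
H(\alpha) = H - \alpha^2 - 2\alpha \widetilde A + K(\alpha),
\end{equation*}
where $\widetilde A = \tfrac{1}{2}(\hat x \cdot p + p \cdot \hat x)$ is a radial momentum and $K(\alpha)$ is a relatively $H$-compact remainder (this is exactly where the Froese--Herbst hypotheses on $v_{ij}(-\Delta+1)^{-1}$ and $(-\Delta+1)^{-1}(y\cdot\nabla_y v_{ij})(-\Delta+1)^{-1}$ being compact enter). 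Taking the inner product against $\phi_\alpha$ and using reality of $\lambda$ yields $\langle\phi_\alpha,\widetilde A\phi_\alpha\rangle = \alpha + o(1)$ as $\alpha\uparrow\alpha_0$; taking expectation of $H$ yields $\langle\phi_\alpha, H\phi_\alpha\rangle \to \beta$, so $\phi_\alpha$ concentrates spectrally at energy $\beta$.

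Now invoke the Mourre estimate with the dilation generator $A = \tfrac{1}{2}(x\cdot p + p\cdot x)$: if $\beta \notin \calT$, then there exist an interval $\Delta\ni\beta$, a compact operator $K$, and $\theta>0$ with
\begin{equation*}
E_\Delta(H)\, i[H,A]\, E_\Delta(H) \ge \theta\, E_\Delta(H) + K.
\end{equation*}
Testing this against $\phi_\alpha$ and using spectral concentration at $\beta$ gives $\langle\phi_\alpha, i[H,A]\phi_\alpha\rangle \ge \theta/2$ for $\alpha$ close enough to $\alpha_0$. On the other hand, the virial-type identity obtained by pairing $(H-\lambda)\phi_\alpha = (2\alpha\widetilde A + \alpha^2 - K(\alpha))\phi_\alpha$ with $A\phi_\alpha$ gives an explicit formula for $\langle\phi_\alpha, i[H,A]\phi_\alpha\rangle$ in terms of quantities that remain bounded as $\alpha\uparrow\alpha_0$. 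Reinjecting the strict lower bound $\theta/2$ back through the differential inequality $\tfrac{d}{d\alpha}\log\|e^{\alpha|x|}\psi\|^2 = 2\langle \phi_\alpha, |x|\phi_\alpha\rangle$ (combined with an identity relating $\langle\phi_\alpha, i[H,A]\phi_\alpha\rangle$ to the rate of exponential decay) shows that $\|e^{\alpha|x|}\psi\|$ grows only subexponentially near $\alpha_0$, so in fact $e^{(\alpha_0+\epsilon)|x|}\psi \in L^2$ for some $\epsilon>0$, contradicting the definition of $\alpha_0$. Hence $\beta \in \calT$.

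The main obstacle will be the rigorous manipulation of the virial identity: the dilation generator $A$ is unbounded, and a priori $\phi_\alpha$ need not lie in $D(A)$ or even in $Q(A^2)$, so the cancellation $\langle\phi_\alpha, (H-\lambda)A\phi_\alpha\rangle = \langle (H-\lambda)\phi_\alpha, A\phi_\alpha\rangle$ must be justified by cutting off $A$ (e.g.\ replacing it by $A_R = \tfrac{1}{2}(x\cdot p + p\cdot x)\chi(|x|/R)$ or by $e^{-\epsilon A^2}Ae^{-\epsilon A^2}$) and passing to the limit, controlling the error via the compactness hypotheses. A secondary subtlety is the $N$-body structure: the Mourre estimate is not available a priori but must be proven inductively over the cluster lattice, with the induction hypothesis being precisely Theorem~\ref{T12.7} at lower thresholds — so the argument must be arranged as a simultaneous induction over subsystems, with the two-body base case handled directly.
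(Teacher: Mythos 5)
The paper does not actually prove Theorem~\ref{T12.7}; it quotes it from Froese--Herbst \cite{FHPE} and notes only that the hypotheses are exactly what Mourre theory needs, so your proposal has to be measured against the argument in \cite{FHPE}. The skeleton you describe is the right one (conjugation by $e^{\alpha f}$, normalization, spectral concentration at $\beta$, Mourre estimate when $\beta\notin\calT$, regularization of the dilation generator, and the $N$-body cluster structure). But there is a factor-of-$i$ error in your conjugation formula that then propagates into a wrong claim. The identity is $e^{\alpha f}(-\Delta)e^{-\alpha f} = -\Delta - \alpha^2|\nabla f|^2 - i\alpha\{p,\nabla f\}$ (the $\alpha\Delta f$ terms cancel when one symmetrizes), so $H(\alpha) = H - \alpha^2 - 2i\alpha\widetilde A + K(\alpha)$ with $\widetilde A = \tfrac{1}{2}\{p,\nabla f\}$ self-adjoint; the first-order piece of $H(\alpha)$ is anti-self-adjoint, not what you wrote. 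Consequently the \emph{imaginary} part of $\langle\phi_\alpha,(H(\alpha)-\lambda)\phi_\alpha\rangle=0$ yields $\langle\phi_\alpha,\widetilde A\phi_\alpha\rangle = 0$, not $\alpha+o(1)$; only the real part gives $\langle\phi_\alpha,H\phi_\alpha\rangle\to\beta$, which you had right.

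More seriously, your closing step does not close. The Froese--Herbst contradiction is not the exponential-growth bootstrap you sketch. Pairing the conjugated equation with a regularized $A$ gives the conjugated virial identity $\langle\phi_\alpha, i[H(\alpha),A]\phi_\alpha\rangle = 0$; expanding this, using $\langle\phi_\alpha,\widetilde A\phi_\alpha\rangle=0$, the compactness hypotheses on $v_{ij}$ and $y\cdot\nabla_y v_{ij}$, and the weak convergence $\phi_\alpha\rightharpoonup 0$, one finds that $\langle\phi_\alpha, i[H,A]\phi_\alpha\rangle \to 0$. That limit directly contradicts the Mourre lower bound $\ge \theta/2 > 0$ — no differential inequality is needed. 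Your version says the virial only gives a quantity that ``remains bounded,'' which is entirely compatible with $\ge\theta/2$ and hence is no contradiction at all, and then asserts ``$\|e^{\alpha|x|}\psi\|$ grows subexponentially near $\alpha_0 \Rightarrow e^{(\alpha_0+\epsilon)|x|}\psi\in L^2$,'' which is a non sequitur (control of $\|e^{\alpha|x|}\psi\|$ on $[0,\alpha_0)$ gives no information about $\alpha>\alpha_0$). You also never establish $\phi_\alpha\rightharpoonup 0$, which is what kills the compact error $K$ in the Mourre inequality and the relatively compact remainders; that is a genuine issue, since if $e^{\alpha_0|x|}\psi\in L^2$ the raw family $\phi_\alpha$ does not go weakly to zero, and \cite{FHPE} circumvents this by working with a smoothed two-parameter family $e^{F_R}$ and sending $R\to\infty$ rather than taking $\alpha\uparrow\alpha_0$ directly.
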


If there are no positive thresholds (which one can prove inductively if there is a way to prove no positive eigenvalues), then if $\lambda>0$, the $\beta$ in \eqref{12.15} must be $\infty$.  For suitable two body systems, we saw above that eigenfunctions can't obey $e^{\alpha|x|}\psi \in L^2$ for all $\alpha >0$.  Froese et al \cite{FHHOHO2} proved the same for suitable $N$--body systems (see the paper for precise conditions); see also \cite[Theorem C.3.8]{SimonSmgp}.  In this way, one proves certain $N$--body systems have no positive eigenvalues.

The above touched on $L^2$ isotropic exponential bounds (and as we'll see in Section \ref{s19}, that implies pointwise exponential bounds).  There is a huge and beautiful literature on this subject and on non--isotropic bounds.  We refer the reader to the book of Agmon \cite{AgmonBk} and the review article of Simon \cite{SimonSmgp} which contains many references.

\section{Scattering and Spectral Theory, I: Trace Class Perturbations} \lb{s13}

This is the first of four sections on spectral and scattering theory.  For the 15 years between 1957 and 1972, this area was a major focus of Kato. When Kato was invited to give a plenary lecture at the 1970 International Congress of Mathematicians, his talk \cite{KatoICM} was entitled ``Scattering Theory and Perturbation of Continuous Spectra'' (interestingly enough, Agmon and Kuroda gave invited talks at the same congress and spoke on closely related subjects).  This section and the next two have brief introductory remarks introducing this subject.  This section's introduction has much of the background we'll give on scattering theory, the next section discusses the basics of spectral theory and something about the connection between time--independent and time--dependent scattering theory and Section \ref{s15} will say more about the background behind the time--independent approach.

\begin{figure}[t]
\includegraphics[width=12cm]{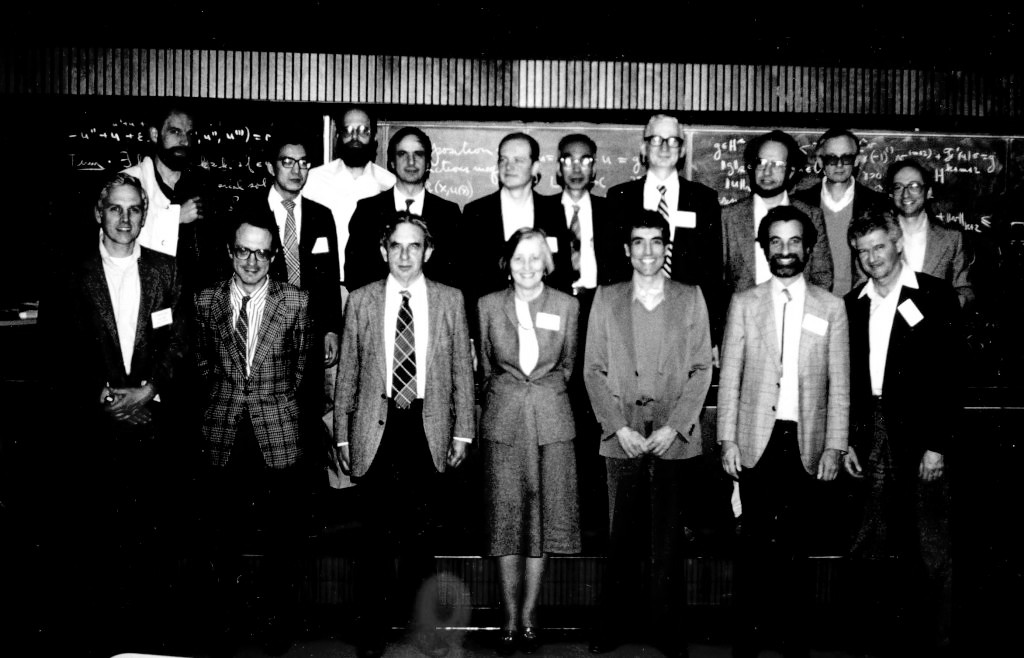}
\centering
\\Birmingham, AL, Meeting on Differential Equations,
1983. \\ Back row: Fr\"{o}hlich, Yajima, Simon, Temam, Enss,
Kato, Schechter, Brezis, Carroll, Rabinowitz. \\ Front row:
Crandall, Ekeland, Agmon, Morawetz, Smoller, Lieb,
Lax.  \lb{figure5}
\end{figure}

Starting with Rutherford's 1911 discovery of the atomic nucleus, scattering has been a central tool in fundamental physics, so it isn't surprising that one of the first papers in the new quantum theory was by Born \cite{Born} on scattering.  At its root, scattering is a time--dependent phenomenon: something comes in, interacts and moves off.  But since it relied on eigenfunctions, Born's work used time--independent objects.  He assumed that one could construct non--$L^2$ eigenfunctions, ${(-\Delta+V)\varphi=k^2\varphi,}\,(\overrightarrow{k} \in \bbR^3, k=|\overrightarrow{k}|)$ which as $r \to \infty$ looks like
\begin{equation}\label{13.1}
  \varphi(\overrightarrow{x}) \sim e^{i\overrightarrow{k}\cdot\overrightarrow{x}}+f(\theta)\frac{e^{ikr}}{r};\quad r=|\overrightarrow{x}|\quad \overrightarrow{k}\cdot\overrightarrow{x} = kr \cos(\theta)
\end{equation}

The time dependence gives $e^{-itH}\varphi(x)$ a $e^{i\overrightarrow{k}\cdot(\overrightarrow{x}-\overrightarrow{k}t)}$ term which is a usual plane wave with velocity $\overrightarrow{k}$ and a scattered wave $f(\theta)r^{-1}e^{ik(r-kt)}$.  One expects such a term to live near points where $r=kt$.  So if $t<0$ that term should not contribute (since $r>0$) while for $t$ positive and large we have an outgoing spherical wave representing the scattering.  We'll say a little more about making mathematical sense of this formal argument in Section \ref{s15}.  $|f(\theta)|^2$ was then interpreted as a scattering differential cross section.  Born also found a leading order perturbation formula for $f(\theta)$:
\begin{equation}\label{13.2}
  f(\theta) = -(2\pi) \int e^{i(\overrightarrow{k'}-\overrightarrow{k})\cdot\overrightarrow{x}}V(\overrightarrow{x})\,d\overrightarrow{x}
\end{equation}
where $k'=k$ and $\overrightarrow{k'}\cdot\overrightarrow{k}=k^2\cos\theta$.  This \emph{Born approximation} turns out to be leading order not only in $V$ but also, for $V$ fixed, as $k \to \infty$.

In the early 1940s, the theoretical physics community first considered time dependent approaches to scattering.  Wheeler \cite{Wheel} and Heisenberg \cite{HeisenS} defined the $S$--matrix and M\o ller \cite{Moller} introduced wave operators as limits (with no precision as to what kind of limit).

It was Friedrichs in a prescient 1948 paper \cite{FriedCont} who first considered the invariance of the absolutely continuous spectrum under sufficiently regular perturbations.  Friedrichs was Rellich's slightly older contemporary.  Both were students of Courant at G\"{o}ttingen in the late 1920s (in 1925 and 1929 respectively).  By 1948, Friedrichs was a professor at Courant's institute at NYU.  Friedrichs considered two classes of examples in this paper.  One was the model mentioned in Example \ref{E3.1} of a perturbation of an embedded point eigenvalue.  The other was $H=H_0+\lambda K$ where $H_0$ is multiplication by $x$ on $L^2([0,1],dx)$ and $K$ is a Hermitian integral operator with an integral kernel $K(x,y)$ assumed to vanish on the boundary (i.e. if $x$ or $y$ is $0$ or $1$) and to be H\"{o}lder continuous in $x$ and $y$.  Using what we'd call time--independent methods, Friedrichs constructed unitary operators, $U_\lambda$, for $\lambda$ sufficiently small, so that
\begin{equation}\label{13.3}
  H_0+\lambda K = U_\lambda H_0 U_\lambda^{-1}
\end{equation}

While Friedrichs neither quoted M\o ller nor ever wrote down the explicit formulae
\begin{equation}\label{13.4}
  \Omega^\pm(H,H_0) = {\textrm{s}-\lim}_{t \to \mp \infty} e^{itH}e^{-itH_0}
\end{equation}
(we remind the reader that the strange $\pm$ vs. $\mp$ convention that we use is universal in the theoretical physics community and uncommon among mathematicians and is not the convention that Kato used), he did prove something equivalent to showing that the limit $\Omega^+$ existed and was $U_\lambda$ and that the limit $\Omega^-$ existed and was equal to $S_\lambda\Omega^+$. Here $S_\lambda$ was an operator he constructed and identified with the $S$--matrix (although it differs slightly with what is currently called the $S$--matrix).

Motivated in part by Friedrichs, in 1957, Kato published two papers \cite{KatoFR, KatoTC} that set out the basics of the theory we will discuss in this section.  In the first, he had the important idea of defining
\begin{equation}\label{13.5}
  \Omega^\pm(A,B) = {\textrm{s}-\lim}_{t \to \mp \infty} e^{itA}e^{-itB} P_{ac}(B)
\end{equation}
where $P_{ac}(B)$ is the projection onto $\calH_{ac}(B)$, the set of all $\varphi \in \calH$ for which the spectral measure of $B$ and $\varphi$ is absolutely continuous with respect to Lebesgue measure (see \cite[Section 5.1]{OT} or the discussion at the start of Section \ref{s14}).  If these strong limits exist, we say that the wave operators $\Omega^\pm(A,B)$ exist.

By replacing $t$ by $t+s$, one sees that if $\Omega^\pm(A,B)$ exist then $e^{isA}\Omega^\pm=\Omega^\pm e^{isB}$.  Since $\Omega^\pm$ are unitary maps, $U^\pm$, of $\calH_{ac}(B)$ to their ranges, we see that $U^\pm B\restriction\calH_{ac}(B)(U^\pm)^{-1} = A \restriction \ran \Omega^\pm$.  In particular, $\ran \Omega^\pm$ are invariant subspaces for $A$ and lie in $\calH_{ac}(A)$.  It is thus natural to define: $\Omega^\pm(A,B)$ are said to be \emph{complete} if
\begin{equation}\label{13.6}
  \ran\, \Omega^+(A,B) = \ran\, \Omega^-(A,B) = \calH_{ac}(A)
\end{equation}
\begin{remarks} 1.  Kato also noted the relation
\begin{equation}\label{13.7}
  \Omega^\pm(A,B)\Omega^\pm(B,C)=\Omega^\pm(A,C)
\end{equation}
in that if both wave operators on the left exist, so does the one on the right and one has the equality.

2.  The wisdom of taking $P_{ac}(B)$ in the definition of wave operator is shown by the fact that it follows from results of Aronszajn \cite{AronAD} and Donoghue \cite{DonogAD} (see also Simon \cite{SimonTI}) that if $A-B=\jap{\varphi,\cdot}\varphi$ with $\varphi$ a cyclic vector for $B$ then $e^{itA}e^{-itB}\psi$ has a limit if and only if $\psi \in \calH_{ac}(B)$.
\end{remarks}

In \cite{KatoFR}, Kato proved the following

\begin{theorem} [Kato \cite{KatoFR}] \lb{T13.1} Let $\Omega^\pm(A,B)$ exist.  Then they are complete if and only if $\Omega^\pm(B,A)$ exist.
\end{theorem}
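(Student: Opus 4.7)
The plan is to exploit the two structural facts already noted in the excerpt: the chain rule \eqref{13.7} for wave operators, together with the trivial identity $\Omega^\pm(A,A)=P_{ac}(A)$ (which is immediate from the definition since $e^{itA}e^{-itA}=\bdone$). These two ingredients together with the isometric/intertwining properties of $\Omega^\pm(A,B)$ already recorded before the statement should allow the two implications to be read off by short direct arguments. Throughout I will use that $\Omega^\pm(A,B)$ is an isometry from $\calH_{ac}(B)$ onto $\ran\,\Omega^\pm(A,B)\subset\calH_{ac}(A)$.

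For the implication $\Omega^\pm(B,A)$ exist $\Rightarrow$ completeness of $\Omega^\pm(A,B)$: fix $\psi\in\calH_{ac}(A)$ and set $\varphi=\Omega^\pm(B,A)\psi$. Because $\ran\,\Omega^\pm(B,A)\subset\calH_{ac}(B)$ we have $\varphi\in\calH_{ac}(B)$, so $\Omega^\pm(A,B)\varphi$ is defined, and by \eqref{13.7}
\begin{equation*}
\Omega^\pm(A,B)\varphi=\Omega^\pm(A,B)\Omega^\pm(B,A)\psi=\Omega^\pm(A,A)\psi=P_{ac}(A)\psi=\psi.
\end{equation*}
Thus $\psi\in\ran\,\Omega^\pm(A,B)$, which is the completeness condition \eqref{13.6}.

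For the converse, assume completeness and let $\psi\in\calH$. Since $e^{itB}e^{-itA}P_{ac}(A)\psi=0$ whenever $P_{ac}(A)\psi=0$, it suffices to take $\psi\in\calH_{ac}(A)$. By completeness there is $\varphi\in\calH_{ac}(B)$ with $\Omega^\pm(A,B)\varphi=\psi$, i.e.\ $e^{itA}e^{-itB}\varphi\to\psi$ as $t\to\mp\infty$. Using that $e^{itB}e^{-itA}$ is unitary,
\begin{equation*}
\bigl\|e^{itB}e^{-itA}\psi-\varphi\bigr\|=\bigl\|\psi-e^{itA}e^{-itB}\varphi\bigr\|\xrightarrow[t\to\mp\infty]{}0,
\end{equation*}
so $\Omega^\pm(B,A)\psi$ exists and equals $\varphi$. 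This works for both signs, giving existence of both $\Omega^+(B,A)$ and $\Omega^-(B,A)$.

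There is really no hard step here; the only thing one has to be a bit careful about is the role of the $P_{ac}$ factors in the definition \eqref{13.5}, which is why I split the converse argument along $\calH_{ac}(A)\oplus\calH_{ac}(A)^\perp$. Verifying the chain rule \eqref{13.7} (if one had not already accepted it) is the one place where a small $\varepsilon/2$ argument is needed, using the fact that an isometry commutes with strong limits on its natural domain; once that is granted, the whole theorem reduces to the two lines displayed above.
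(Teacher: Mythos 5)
Your proof is correct. Your converse direction (completeness of $\Omega^\pm(A,B)$ implies existence of $\Omega^\pm(B,A)$) is exactly the paper's argument: both rest on the unitarity identity $\norm{e^{itB}e^{-itA}\psi-\varphi} = \norm{\psi-e^{itA}e^{-itB}\varphi}$, which is precisely \eqref{13.9}, and both reduce to $\psi\in\calH_{ac}(A)$ by the orthogonal decomposition $\calH=\calH_{ac}(A)\oplus\calH_{ac}(A)^\perp$. Where you diverge from the paper is in the forward direction. The paper uses the very same norm identity there too: given $\varphi=\Omega^\pm(B,A)\psi\in\calH_{ac}(B)$, the equivalence \eqref{13.8} immediately yields $\psi=\Omega^\pm(A,B)\varphi$, hence $\psi\in\ran\,\Omega^\pm(A,B)$. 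You instead invoke the chain rule \eqref{13.7} together with $\Omega^\pm(A,A)=P_{ac}(A)$ to compute $\Omega^\pm(A,B)\Omega^\pm(B,A)\psi=\psi$. That is correct, but it routes through a lemma which itself requires a short limiting argument (an isometry preserving strong limits on $\calH_{ac}(B)$), whereas Kato's single norm identity dispatches both implications at once and never needs the chain rule. What your route buys is that it displays explicitly the algebraic structure that completeness encodes: $\Omega^\pm(B,A)$ is a two-sided inverse of $\Omega^\pm(A,B)$ between the relevant a.c.\ subspaces. So the two proofs are logically equivalent; Kato's is the leaner one, yours makes the operator-algebraic picture more visible.
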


The proof is almost trivial.  It depends on noting that
\begin{equation}\label{13.8}
  \psi = \lim_{t \to \infty} e^{iAt}e^{-itB}\varphi \iff \varphi = \lim_{t \to \infty} e^{itB}e^{-itA}\psi
\end{equation}
since
\begin{equation}\label{13.9}
  \norm{\psi-e^{iAt}e^{-itB}\varphi} = \norm{e^{itB}e^{-itA}\psi-\varphi}
\end{equation}
That said, it is a critical realization because it reduces a completeness result to an existence theorem.  In particular, it implies that symmetric conditions which imply existence also imply completeness.  We'll say more about this below.

To show the importance of this idea, motivated by it in \cite{DeiftSimonWO}, Deift and Simon proved that completeness of multichannel scattering for $N$--body scattering was equivalent to the existence (using the $N$--body language of Section \ref{s11}) of $\textrm{s}-\lim_{t \to \pm\infty} e^{itH}(\calC)J_\calC e^{-itH}P_{ac}(H)$ for the partition of unity $\{J_\calC\}_{\calC \ne \calC_{min}}$ discussed in Remark 4 after Theorem \ref{T11.8}.  All proofs of asymptotic completeness for $N$--body systems prove it by showing the existence of these Deift--Simon wave operators in support of Kato's Theorem \ref{T13.1}.

In \cite{KatoFR}, Kato proved

\begin{theorem} [Kato \cite{KatoFR}] \lb{T13.2} Let $H_0$ be a self--adjoint operator and $V$ a (bounded) finite rank operator.  Then $H=H_0+V$ is a self--adjoint operator and the wave operators $\Omega^\pm(H,H_0)$ exist and are complete.
\end{theorem}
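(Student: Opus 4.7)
The argument has three ingredients. First, self-adjointness of $H = H_0 + V$ follows from Kato-Rellich (Theorem \ref{T7.3}): $V$ is bounded and symmetric (a finite-rank symmetric operator is automatically bounded), so it is $H_0$-bounded with relative bound $0$, giving self-adjointness on $D(H_0)$.

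For existence of $\Omega^\pm(H, H_0)$ I would use Cook's method. Diagonalize $V$ on its finite-dimensional range to write $V = \sum_{j=1}^N c_j \langle \psi_j, \cdot\rangle \psi_j$ with $c_j \in \bbR$. From $\tfrac{d}{dt}(e^{itH}e^{-itH_0}\varphi) = i e^{itH} V e^{-itH_0}\varphi$, a sufficient condition for $\Omega^+(H,H_0)\varphi$ to exist is
\begin{equation*}
\int_0^\infty \|V e^{-itH_0}\varphi\|\,dt < \infty,
\end{equation*}
and the rank-one decomposition reduces this (via $\|Ve^{-itH_0}\varphi\| \le \sum_j |c_j|\,\|\psi_j\|\,|\langle\psi_j,e^{-itH_0}\varphi\rangle|$) to showing $t \mapsto \langle \psi_j, e^{-itH_0}\varphi\rangle \in L^1(\bbR_+)$ for each $j$. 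The crucial observation is that for any $\varphi \in \calH_{ac}(H_0)$ the off-diagonal spectral measure $d\mu^{H_0}_{\psi_j,\varphi}$ is automatically absolutely continuous (its singular part would pair $\psi_j^{s}$ against $\varphi\in\calH_{ac}(H_0)$ and hence vanish), so the matrix element equals the Fourier transform of an $L^1(dx)$ density $h_j^{(\varphi)}$. Working in the spectral representation of $H_0$ on the cyclic subspace generated by the a.c.-component $\psi_j^{ac}$ and identifying it with $L^2(\bbR,d\mu_j^{ac})$, one has $h_j^{(\varphi)} = g\cdot w_j$ where $w_j = d\mu_j^{ac}/dx$ and $g \in L^2(w_j\,dx)$ parametrizes the projection of $\varphi$ onto this cyclic subspace; restricting $g$ so that $gw_j \in C_c^2(\{w_j > \epsilon\})$ for some $\epsilon>0$ forces $\widehat{gw_j}(t) = \textrm{O}(|t|^{-2})$, hence integrable. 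The integral from $-\infty$ to $0$ is handled symmetrically.

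Completeness is then almost automatic from Theorem \ref{T13.1}: $\Omega^\pm(H,H_0)$ are complete iff $\Omega^\pm(H_0,H)$ exist. Since $H_0 = H + (-V)$ with $-V$ again a bounded, symmetric, finite-rank perturbation of the self-adjoint operator $H$, applying the existence argument verbatim with the roles of $H_0$ and $H$ swapped (using the spectral theory of $H$ in place of $H_0$) produces $\Omega^\pm(H_0, H)$.

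The main obstacle is justifying the density assertion in the Cook step when $N > 1$: one must approximate an arbitrary $\varphi \in \calH_{ac}(H_0)$ by vectors for which \emph{all} $N$ densities $h_j^{(\varphi)}$ simultaneously have the regularity described above. For rank-one $V$ the construction goes through directly, since only a single density $gw_1$ needs to be arranged. For general $N$ the cleanest route is to write $V = V_1 + \cdots + V_N$ with $V_k = c_k\langle\psi_k,\cdot\rangle\psi_k$ each rank one, set $H_k = H_0 + V_1 + \cdots + V_k$, and compose using the chain rule \eqref{13.7}:
\begin{equation*}
\Omega^\pm(H_N, H_0) = \Omega^\pm(H_N, H_{N-1})\cdots\Omega^\pm(H_1, H_0).
\end{equation*}
Each factor is a rank-one problem relative to the (arbitrary) self-adjoint operator $H_{k-1}$, so existence and (by the symmetric argument above) completeness hold at each step, making the composition legitimate. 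If this inductive reduction proves awkward, an attractive alternative is Kato's original stationary approach from \cite{KatoFR}: derive an explicit formula for $(H-z)^{-1}$ by matrix inversion on $\ran V$, and extract the wave operators via Stone's formula and boundary values of resolvents on the real axis.
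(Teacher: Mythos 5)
Your self--adjointness step (Kato--Rellich) and your use of Theorem \ref{T13.1} to reduce completeness to existence of $\Omega^\pm(H_0,H)$ are both correct, and the latter is precisely the logic the paper intends. The problem is the Cook's-method core of the existence argument: it does not survive scrutiny, and this is not a technicality to be patched but the reason Kato and Rosenblum needed a different idea entirely.

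The gap is in the claim that vectors $\varphi$ for which the density of $d\mu^{H_0}_{\psi_j,\varphi}$ is $C_c^2$ (or even continuous) are dense in the relevant cyclic subspace. In the spectral representation this density is $gw_j$, where $w_j=d\mu_j^{ac}/dx$. If, say, $w_j$ is the indicator function of a fat Cantor set $C$, then $gw_j$ vanishes a.e.\ off $C$, and any continuous function that vanishes a.e.\ off a closed set with empty interior is identically zero. So the only $\varphi$ in that cyclic subspace whose off-diagonal density is continuous is $\varphi=0$. Since $L^1(dt)$ decay of $\langle\psi_j,e^{-itH_0}\varphi\rangle$ forces $gw_j$ to be (a.e.\ equal to) a bounded continuous function, Cook's criterion $\int\|Ve^{-itH_0}\varphi\|\,dt<\infty$ fails for every nonzero $\varphi$ in that cyclic subspace. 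Yet $\Omega^\pm(H,H_0)$ do exist on all of $\calH_{ac}(H_0)$. Cook's method is therefore intrinsically the wrong tool for abstract finite-rank perturbations; no induction on rank or clever choice of dense subset can save it, because already the rank-one building block breaks. (This is also why Cook's method in Section \ref{s13} is applied to $V\in L^2(\bbR^3)$, where spatial decay of $V$ supplies the missing dispersive estimate; for abstract $H_0$ with no structure on its a.c.\ spectral measure, there is nothing to replace it.)

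What the paper's reference actually uses is Kato's original stationary construction --- explicit boundary-value formulas for the perturbed resolvent on $\ran V$, with the wave operators extracted afterward --- which you correctly flag as an alternative at the end. The later time-dependent proof (Kato's Japanese paper, exposed in his book and in Reed--Simon) replaces Cook's $L^1(dt)$ criterion with an $L^2(dt)$ estimate of the form
\begin{equation*}
\int_{-\infty}^{\infty}\bigl|\langle\phi,e^{-itH_0}\psi\rangle\bigr|^2\,dt
\le 2\pi\,\Bigl\|\tfrac{d\mu_\psi}{d\lambda}\Bigr\|_\infty\,\|\phi\|^2 ,
\end{equation*}
valid for $\psi\in\calH_{ac}(H_0)$ with bounded spectral density (a genuinely dense set), combined with a Cauchy--Schwarz argument showing $e^{itH}e^{-itH_0}P_{ac}(H_0)\varphi$ is Cauchy --- a different mechanism from Cook's. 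Either of these would complete your proof; your present second step would not.
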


This implies the unitary equivalence of $H_0\restriction\calH_{ac}(H_0)$ and $H\restriction\calH_{ac}(H)$.  Remarkably, in the same year Aronszajn \cite{AronAD} proved that this invariance holds for finite rank perturbations of boundary conditions for Sturm--Liouville operators (extended later using similar ideas by Donoghue \cite{DonogAD} to general finite rank perturbations).  Their methods are totally different from Kato's and do not involve wave operators.

Later in 1957, Kato \cite{KatoTC} proved

\begin{theorem} [Kato--Rosenblum Theorem] \lb{T13.3} The conclusions of Theorem \ref{T13.2} remain true if $V$ is a (bounded) trace class operator.
\end{theorem}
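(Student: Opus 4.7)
The plan is to invoke Theorem \ref{T13.1} to reduce completeness to the mere existence of wave operators in both directions, and then to prove existence by factoring $V$ into two Hilbert--Schmidt operators and applying a Plancherel-based $L^2$ smoothness estimate due to Rosenblum. Since $-V$ is trace class whenever $V$ is, proving that ``trace class perturbations preserve existence'' establishes simultaneously both $\Omega^\pm(H,H_0)$ and $\Omega^\pm(H_0,H)$, at which point Theorem \ref{T13.1} supplies completeness.

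For existence, I would first factor $V = C^* D$ with $C,D$ Hilbert--Schmidt. From the polar decomposition $V = U|V|$, set $C = |V|^{1/2}$ and $D = U|V|^{1/2}$; each has Hilbert--Schmidt norm squared equal to $\tr|V| = \|V\|_1 < \infty$. The analytic core is Rosenblum's $L^2$ smoothness bound: for any self-adjoint $A$, any $\psi \in \calH$, and any $\varphi$ whose spectral measure is absolutely continuous with density bounded by $M$,
\begin{equation}
\int_{-\infty}^{\infty} |\langle \psi, e^{-itA}\varphi\rangle|^2\, dt \le 2\pi M \|\psi\|^2.
\end{equation}
This follows from Plancherel applied to $t \mapsto \int e^{-itx}(d\mu^A_{\psi,\varphi}/dx)\, dx$ together with the Schwarz bound $|d\mu^A_{\psi,\varphi}/dx|^2 \le (d\mu^A_\psi/dx)(d\mu^A_\varphi/dx)$. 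Summing over the singular values of any Hilbert--Schmidt operator $K$ then gives $\int_\bbR \|K e^{-itA}\varphi\|^2\, dt \le 2\pi M\|K\|_2^2$.

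Setting $W(t) = e^{itH}e^{-itH_0}$ and using the fundamental identity $W(t) - W(s) = i\int_s^t e^{i\tau H} C^* D e^{-i\tau H_0}\, d\tau$, I would pair with any $\eta \in \calH$ and apply Cauchy--Schwarz in $\tau$ to get
\begin{equation}
|\langle \eta, [W(t)-W(s)]\varphi\rangle| \le \left(\int_s^t \|C e^{-i\tau H}\eta\|^2\, d\tau\right)^{1/2}\left(\int_s^t \|D e^{-i\tau H_0}\varphi\|^2\, d\tau\right)^{1/2}.
\end{equation}
For $\varphi$ in the dense subset $\calD^{H_0}_M \subset \calH_{ac}(H_0)$ of vectors with spectral density $\le M$, the second factor is finite over all of $\bbR$ and its tails vanish as $s,t \to -\infty$. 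For $\eta$ drawn from the analogous dense class $\calD^H_N \subset \calH_{ac}(H)$, the first factor is uniformly bounded, producing weak Cauchy behavior of $W(t)\varphi$ tested against this dense subspace.

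\textbf{The main obstacle} will be promoting this weak-type Cauchy statement on a dense subspace to norm convergence of $W(t)\varphi$, since the first integral is not controlled uniformly over all $\eta$ with $\|\eta\|=1$. The isometry $\|W(t)\varphi\| = \|\varphi\|$ together with Banach--Steinhaus yields weak convergence to some $\Omega^+\varphi \in \calH$, but upgrading to strong convergence requires $\|\Omega^+\varphi\| = \|\varphi\|$. This is precisely where the symmetric side of the problem enters: running the same factorization estimate with the roles of $(H,H_0,\varphi)$ and $(H_0,H,\Omega^+\varphi)$ interchanged shows the weak limit has full norm. A final density argument, exploiting $\|W(t)\|_{op} = 1$, then extends strong convergence from $\calD^{H_0}_M$ to all of $\calH_{ac}(H_0)$, completing the proof.
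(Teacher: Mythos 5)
Your framework --- Theorem \ref{T13.1} to reduce completeness to two-sided existence, the Hilbert--Schmidt factorization of $V$, the Plancherel bound on $\int_{\bbR}\|Ke^{-itA}\varphi\|^2\,dt$ for $\varphi$ of bounded spectral density, and a Cook/Cauchy--Schwarz estimate --- is indeed the skeleton of the time-dependent proof that the paper references (Kato's slick version in \cite{KatoBk} and Pearson's refinement, both presented in Reed--Simon \cite{RS3}). You also correctly identify the crux: promoting the resulting weak-type Cauchy statement to norm convergence. But the ``symmetric side'' fix you sketch is circular and does not close the gap.

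The issue is that $C=|V|^{1/2}$ is Hilbert--Schmidt but not $H$-smooth: the bound $\int\|Ce^{-i\tau H}\eta\|^2\,d\tau\le 2\pi N\|C\|_2^2$ holds only for $\eta$ of bounded $H$-spectral density, not uniformly over $\|\eta\|=1$ --- were it uniform, the argument of Theorem \ref{T14.5} would immediately give $\|(W(t)-W(s))\varphi\|\to 0$. Without uniformity, you learn only that $\langle\eta,W(t)\varphi\rangle$ converges for $\eta$ in a dense subset of $\calH_{ac}(H)$, which determines $\Omega^+\varphi:=\wlim P_{ac}(H)W(t)\varphi$ with $\|\Omega^+\varphi\|\le\|\varphi\|$ and says nothing about $(\bdone-P_{ac}(H))W(t)\varphi$. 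Your remedy fails on two counts: nothing guarantees that $\Omega^+\varphi$ has bounded $H$-spectral density, so the roles-reversed estimate is not available for it; and even granting that, the reversed estimate would only give a weak limit $\Theta$ of $W(t)^*\Omega^+\varphi$ with $\|\Theta\|\le\|\Omega^+\varphi\|$, while deducing $\|\Omega^+\varphi\|=\|\varphi\|$ requires $\Theta=\varphi$ --- which is logically equivalent to the strong convergence of $W(t)\varphi$ you set out to prove. Kato's actual argument avoids the supremum over test vectors altogether. Using unitarity one writes
\begin{equation*}
  \|(W(t)-W(s))\varphi\|^2 = 2\Ima\int_s^t\langle Ve^{-i\rho H_0}\varphi,\, W(t-\rho)e^{-i\rho H_0}\varphi\rangle\,d\rho,
\end{equation*}
and Cauchy--Schwarz in $\rho$ now produces the vanishing tail $\int_s^t\|Ce^{-i\rho H_0}\varphi\|^2\,d\rho$ times $\int_s^t\|CW(t-\rho)e^{-i\rho H_0}\varphi\|^2\,d\rho$. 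It is this second, $\varphi$-dependent factor (not a $\sup$ over $\eta$) that must be bounded uniformly in $s,t$; one achieves this by iterating the Cook identity once more inside the integral and exploiting the full trace-class structure $V=CUC$ (so that $\|Ce^{iuH}CU\|\le\|C\|_2^2$) to close a self-referential inequality against the known, uniformly bounded $H_0$-integral. That bootstrap, not a symmetry argument, is the step your proposal is missing.
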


In a sense this theorem is optimal.  It is a result of Weyl--von Neumann \cite{WeylvN, vNWeylvN} (see \cite[Theorem 5.9.2]{OT}) that if $A$ is a self--adjoint operator, one can find a Hilbert--Schmidt operator, C, so that $B=A+C$ has only pure point spectrum.  Kato's student, S. T. Kuroda \cite{KurodaWvN}, shortly after Kato proved Theorem \ref{T13.3}, extended this result of Weyl--von Neumann to any trace ideal strictly bigger than trace class.  So within trace ideal perturbations, one cannot do better than Theorem \ref{T13.3}.

The name given to this theorem comes from the fact that before Kato proved Theorem \ref{T13.3}, Rosenblum \cite{RosenTrace} proved a special case that motivated Kato: namely, if $A$ and $B$ have purely a.c. spectrum and $A-B$ is trace class, then $\Omega^\pm(A,B)$ exist and are unitary (so complete).

I'd always assumed that Rosenblum's paper was a rapid reaction to Kato's finite rank paper which, in turn, motivated Kato's trace class paper.  But I recently learned that this assumption is not correct.   Rosenblum was a graduate student of Wolf at Berkeley who submitted his thesis in March 1955.  It contained his trace class result with some additional technical hypotheses; a Dec. 1955 Berkeley technical report had the result as eventually published without the extra technical assumption.  Rosenblum submitted a paper to the American Journal of Mathematics which took a long time refereeing it before rejecting it.   In April 1956, Rosenblum submitted a revised paper to the Pacific Journal in which it eventually appeared (this version dropped the technical condition; I've no idea what the original journal submission had).

Kato's finite rank paper was submitted to J. Math. Soc. Japan on March 15, 1957 and was published in the issue dated April, 1957(!).  The full trace class result was submitted to Proc. Japan Acad. on May 15, 1957.  Kato's first paper quotes an abstract of a talk Rosenblum gave to an A.M.S. meeting but I don't think that abstract contained many details.   This finite rank paper has a note added in proof thanking Rosenblum for sending the technical report to Kato, quoting its main result and saying that Kato had found the full trace class results (``Details will be published elsewhere.'').   That second paper used some technical ideas from Rosenblum's paper.

I've heard that Rosenblum always felt that he'd not received sufficient credit for his trace class paper.  There is some justice to this.    The realization that trace class is the natural class is important.    As I've discussed, trace class is maximal in a certain sense.  Kato was at Berkeley in 1954 when Rosenblum was a student (albeit some time before his thesis was completed)  and Kato was in contact with Wolf.  However, there is no indication that Kato knew anything about Rosenblum's work until shortly before he wrote up his finite rank paper when he became aware of Rosenblum's abstract.   My surmise is that both, motivated by Friedrichs, independently became interested in scattering.

It should be emphasized that 1956-1957 was a year that (time--dependent) scattering theory seemed to be in the air.  J. Cook \cite{Cook} found a simple, later often used, method for proving that $\Omega^\pm(A,B)$ exists: if $\int_{-\infty}^{\infty} \norm{(A-B)e^{-iuB}\varphi}\, du < \infty$, then by integrating a derivative
\begin{equation}\label{13.10}
  \limsup_{\substack{t,s \to \infty\\ \textrm{or }t,s \to -\infty}} \norm{e^{itA}e^{-itB}\varphi-{e^{isA}e^{-isB}\varphi}} \le \lim \int_{s}^{t} \norm{(A-B)e^{-iuB}\varphi}\, du = 0
\end{equation}
so it suffices that
\begin{equation}\label{13.11}
  \int_{-\infty}^{\infty} \norm{(A-B)e^{-iuB}\varphi}\, du < \infty
\end{equation}
for a dense set of $\varphi$ for $\Omega^\pm(A,B)$ to exist.  Cook applied this to $B=-\Delta;\,A=-\Delta+V;\, V \in L^2(\bbR^3)$ (which translates to $\textrm{O}(|x|^{-3/2-\epsilon})$ decay).  Hack \cite{HackNC1} and Kuroda \cite{KurodaNC} extended this to allow $\textrm{O}(|x|^{-1-\epsilon})$ decay.

Since, for the free dynamics, $x \sim ct$, one expects and can prove that if $\alpha \le 1$, then $\int_{-\infty}^{\infty} \norm{(1+|x|)^{-\alpha}
e^{iu\Delta}\varphi}\, du = \infty$ for all $\varphi$.  Indeed, Dollard \cite{Dollard} showed that one needs modified wave operators for Coulomb potentials (again, there is a large literature on the subject of Coulomb or slower decay of which we mention Christ--Kiselev \cite{CK} and J.~Derezi\'{n}ski and C.~G\'{e}rard \cite{DerGer}).

Extensions of Cook's ideas and other scattering theory notions to quadratic form perturbations can be found in Kuroda \cite{KurodaJA2}, Schechter \cite{SchechQFScatt}, Simon \cite{SimonQFScatt} and Kato \cite{KatoQFScatt}.  Kato states his results in a two Hilbert space setting (see below).  $J$ is a bounded linear operator from $\calH_1$ to $\calH_2$ and $H_j$ are self--adjoint operators on $\calH_j; \, j=1,2$.  For $z \in \bbC\setminus\bbR$, let $C(z) = (H_2-z)^{-1}J-J(H_1-z)^{-1}$.  Kato proves that if for some $z$ and $\varphi \in \calH_1$, one has that
\begin{equation}\label{13.12}
  \int_{0}^{\infty} \norm{C(z)e^{-itH_1}\varphi}_2\, dt < \infty
\end{equation}
then
\begin{equation}\label{13.12A}
  \lim_{t \to \infty} e^{itH_2}Je^{-itH_1}\varphi \textrm{ exists}
\end{equation}
He then shows that this allows some cases where $H_2$ is only defined as a quadratic form, e.g. $H_1=-\Delta, H_2=-\Delta+V$ with $V \ge 0,\,V \in L^1(\bbR^3,(1+|x|)^{1-\epsilon}\, dx)$.

For many years, it was thought that this simple idea of Cook was limited to existence but not useful for completeness or spectral theory.  This was overturned by a brilliant paper of Enss \cite{Enss} (see also Perry \cite{PerryEnss}, Reed--Simon \cite[Section XI.17]{RS3} or Simon \cite{SimonEnss}), a subject we will not pursue here.

In 1958-59, there were also several influential papers by Jauch \cite{Jauch, JauchZinnes} that discussed scattering in a general framework.

In considering extensions of the Kato--Rosenblum, I begin with four issues that involve work by Kato himself.  First, we discuss proofs.  Like Friedrichs, both Kato and Rosenblum proved a time-dependent limit exists by first constructing objects with time--independent methods which they prove is the required limit.  The first fully time--dependent proof of Theorem \ref{T13.3} is in a Japanese language paper by Kato \cite{KatoTraceJap}  also published in 1957.  His argument was repeated with permission in a paper by his student S. T. Kuroda \cite{KurodaJA1}.  The slickest version of this time--dependent proof is in Kato's 1966 book \cite{KatoBk}.  It is a variant of this argument that Pearson used in his proof of Theorem \ref{T13.4} below.

The second concerns Kato's paper \cite{KatoInv} on what is called the invariance principle: for suitable functions $\Phi$, one shows that $A-B$ trace class $\Rightarrow \Omega^\pm(\Phi(A),\Phi(B))$ exist and are complete.  In case that $\Phi$ is strictly monotone increasing (respectively decreasing), one has that $\Omega^\pm(\Phi(A),\Phi(B))=\Omega^\pm(A,B)$ (resp $\Omega^\mp(A,B)$).  The first examples of this phenomenon are due to Birman \cite{BirmanRes1, BirmanRes2}.  Kato focused on the general form of the principle.  There is a considerable literature on non--trace class versions of an invariance principle; see \cite[Notes to Section XI.3]{RS3} for references.

The third involves two Hilbert space scattering theory \cite{KatoTwoHS}.  This came out of a set of concrete problems.  In Section \ref{s8} (see the discussion beginning with \eqref{8.5}), we saw that the equation $\tfrac{\partial^2 u}{\partial t^2} = (\Delta-V)u$ had a unitary propagation in the norm $\left[\norm{\dot{u}}_2^2+\jap{u,(-\Delta+V)u}\right]^{1/2}$.  This means to compare solutions of this equation to, say, the one with $V=0$, one needs to consider two different Hilbert space norms.  If for some $0 < \alpha <\beta < \infty$ one has for all $x$ that $\alpha \le V(x) \le \beta$, then there is a natural map, $J$ between the two spaces so that $J$ is bounded with bounded inverse which takes $\varphi$ viewed as an element of one Hilbert space into itself but viewed in the other Hilbert space.  One is interested in the limit in \eqref{13.12A} (and also the limit as $t \to -\infty$).  A similar setup applies to other hyperbolic systems, especially to the physically significant Maxwell's equation. Long after Kato's work on the subject, Isozaki--Kitada \cite{IsoKit} discovered one could use a $J$ operator to discuss long range scattering where ordinary wave operators do not exist. Before \cite{KatoTwoHS}, several authors (Schmidt \cite{Schmidt}, Shenk \cite{Shenk}, Thoe \cite{Thoe}, Wilcox \cite{Wilcox}) discussed scattering theory for some concrete examples of such systems.  Kato \cite{KatoTwoHS} looked at the theory systematically, focusing, for example, on $J$'s with $\textrm{s--}\lim_{t \to \pm\infty} (J^*J-\bdone)e^{-itH_1}=0$ which implies that the wave operators are isometries if they exist.  Under certain invertibility hypotheses on $J$, Kato could carry over the usual trace class scattering theory to get some two Hilbert space results.  Stronger results were subsequently obtained by Belopol'skii--Birman \cite{BBelop}, Birman \cite{BirmanTwoHS} and then Pearson \cite{PearsonTC} who proved

\begin{theorem} [Pearson's Theorem \cite{PearsonTC}] \lb{T13.4} Let $A, B$ be self--adjoint operators on Hilbert spaces $\calH_1$ and $\calH_2$.  Let $J$ be a bounded operator from $\calH_1$ to $\calH_2$ so that $C=AJ-JB$ is trace class (in the sense that there is a bounded operator $C$ from $\calH_1$ to $\calH_2$ with $\sqrt{C^*C}$ trace class and for $\varphi \in D(B)$ and $\psi \in D(A)$ we have that $\jap{A\psi,J\varphi}-\jap{\psi,JB\varphi}=\jap{\psi,C\varphi}$).  Then
\begin{equation}\label{13.13}
   \Omega^\pm(A,B;J)={\textrm{s--}\lim}_{t \to \mp\infty} e^{itA}Je^{-itB}P_{ac}(B)
\end{equation}
exists
\end{theorem}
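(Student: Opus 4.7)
The plan is to adapt Kato's time-dependent proof of the Kato--Rosenblum theorem (Theorem~\ref{T13.3}) to the two-Hilbert-space setting. Set $w(t):= e^{itA}Je^{-itB}P_{ac}(B)\varphi$; on a suitable core one has $w'(t)= ie^{itA}Ce^{-itB}P_{ac}(B)\varphi$, and I will verify the Cauchy criterion for $w(t)$ as $t\to\pm\infty$. The new wrinkle relative to Kato--Rosenblum is that $J$ is not isometric, so $\|w(t)\|$ is not conserved and the weak-to-strong upgrade has to be done by hand rather than inherited from an isometry.

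Step 1 is the canonical Schmidt decomposition of the trace class operator: $C = \sum_n \lambda_n \langle \eta_n,\cdot\rangle_{\calH_1}\,\psi_n$ with $\{\eta_n\}\subset\calH_1$, $\{\psi_n\}\subset\calH_2$ orthonormal and $\sum_n\lambda_n = \|C\|_1 < \infty$. Step 2 is the Plancherel bound: for self-adjoint $T$ on $\calH$ and any $\varphi$ whose spectral measure has density $g\in L^\infty$, the identity $\langle \eta, e^{-itT}\varphi\rangle = \int e^{-it\lambda}\,d\mu^{T}_{\eta,\varphi}(\lambda)$ and the Radon--Nikodym estimate $|d\mu^{T}_{\eta,\varphi}/d\lambda|^{2} \le \|\eta\|^{2}\,g$ give
\begin{equation*}
\int_{\bbR}|\langle \eta, e^{-itT}\varphi\rangle|^{2}\, dt \;\le\; 2\pi\|g\|_\infty\|\eta\|^{2}\qquad\text{for every }\eta\in\calH.
\end{equation*}
The set $\calD_T$ of such $\varphi$ is dense in $\calH_{ac}(T)$, and for $\varphi\in\calD_B$ the functions $a_n(u):=\langle \eta_n, e^{-iuB}\varphi\rangle$ and $c_n(u):= \langle e^{-iuB}\varphi, J^{*}\psi_n\rangle$ both lie in $L^{2}(\bbR)$ with bound uniform in $n$; similarly $b_n(u):=\langle e^{-iuA}\xi,\psi_n\rangle \in L^{2}(\bbR)$ uniformly in $n$ for $\xi\in\calD_A$.

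Step 3 combines these. Inserting the decomposition of $C$ into the integrated derivative identity yields
\begin{equation*}
\langle \xi, w(t)-w(s)\rangle \;=\; i\sum_n\lambda_n\int_s^t b_n(u)\,a_n(u)\,du,
\end{equation*}
and Cauchy--Schwarz bounds each summand by $|\lambda_n|\,\|b_n\chi_{[s,t]}\|_2\,\|a_n\chi_{[s,t]}\|_2$, which tends to $0$ as $s,t\to\pm\infty$ and is dominated by the summable $|\lambda_n|\cdot 2\pi\sqrt{M_\xi M_\varphi}$. Dominated convergence then gives weak convergence of $w(t)$ on the dense set $\calD_B$. The identical machinery applied to
\begin{equation*}
\|w(t)\|^{2}-\|w(s)\|^{2} \;=\; -2\,\Ima\sum_n\lambda_n\int_s^t a_n(u)\,c_n(u)\,du
\end{equation*}
shows $\|w(t)\|$ is Cauchy as well. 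Weak convergence plus convergence of norms in a Hilbert space upgrades to strong convergence, and the uniform bound $\|w(t)\|\le \|J\|\|\varphi\|$ together with density extends the conclusion from $\calD_B$ to all of $\calH_{ac}(B)$.

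The main obstacle is precisely this weak-to-strong upgrade, a step that is free in Kato--Rosenblum because $e^{itA}e^{-itB}$ is unitary. The device that makes it work here is that the Plancherel bound applies for \emph{arbitrary} $\eta$, not just $\eta\in\calD_T$; this is why $c_n$ lies in $L^{2}(\bbR)$ uniformly in $n$ without requiring any spectral hypothesis on $J^{*}\psi_n$, which is exactly what lets dominated convergence close the loop on the norm derivative. All the remaining work is the routine use of $\sum_n\lambda_n<\infty$ to dominate the Hilbert-space sums.
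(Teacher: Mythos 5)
Your Steps 1--3 are correct: the Schmidt decomposition, the Plancherel bound for arbitrary $\eta$ against a vector with bounded spectral density, and the resulting weak Cauchy estimate (against the dense set $\calD_A$, extended by the uniform bound $\norm{w(t)}\le\norm{J}\norm{\varphi}$) together with the norm-square Cauchy estimate are all fine. But the closing deduction is a genuine gap, and it sits exactly at the point you flag as the crux. The principle ``weak convergence plus convergence of norms in a Hilbert space upgrades to strong convergence'' is false; the correct statement is that weak convergence to $w$ together with $\norm{w(t)}\to\norm{w}$ -- the norm of the \emph{weak limit} -- gives strong convergence. You have shown only that $\norm{w(t)}\to L$ for some $L$, and by weak lower semicontinuity $\norm{w}\le L$; nothing in the argument rules out strict inequality, and with strict inequality strong convergence fails (an orthonormal sequence in $\ell^2$ tends weakly to $0$ with constant norm $1$ yet is never strongly Cauchy).

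This gap is not closeable with the same device. To show $L=\norm{w}$ one must control $\jap{w(s), w(t)-w(s)}$ as $s,t\to\infty$, whose expansion
\begin{equation*}
\jap{w(s),w(t)-w(s)} \;=\; i\sum_n\lambda_n\int_s^t \overline{\jap{\psi_n, e^{-iuA}w(s)}}\, a_n(u)\,du
\end{equation*}
requires a bound on $\int_{\bbR}|\jap{\psi_n, e^{-iuA}w(s)}|^2\,du$ that is uniform in $n$ and in $s$. The Plancherel lemma always needs one of its two slots to carry the bounded-density hypothesis, and here neither $\psi_n$ (a bare Schmidt vector of $C$) nor $w(s)=e^{isA}Je^{-isB}\varphi$ has any controlled $A$-spectral density, even though $\varphi$ has one with respect to $B$. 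The feature you emphasize -- that the Plancherel bound applies for arbitrary $\eta$ -- is precisely why $a_n$ and $c_n$ are uniformly $L^2$, but it is of no help here because the uncontrolled vector now occupies the slot that must carry the density hypothesis. This also undercuts your aside: even for $J=\bdone$ with $W(t)$ unitary, constant norm plus weak convergence does not yield strong convergence, so the upgrade is not ``free'' in Kato--Rosenblum either. Pearson's argument contains an additional idea that produces a strong Cauchy estimate directly; the paper does not reproduce it and only refers to Pearson's paper and the Reed--Simon exposition.
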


No completeness is claimed (e.g., consider $J=0$) but one can sometimes get completeness.  For example, if $\calH_1=\calH_2=\calH$ and $A, B \ge 0$ are two positive operators on $\calH$ so that $(A+1)^{-1}-(B+1)^{-1}$ is trace class, then one can pick $J=(A+1)^{-1}(B+1)^{-1}$.  $C$ is trace class, so $\Omega^\pm(A,B;J)$ exist.  Apply this to $(B+1)\varphi$ to see that $\Omega^\pm(A,B;(A+1)^{-1})$ exists.  Since $(A+1)^{-1}-(B+1)^{-1}$ is compact, the Riemann--Lebesgue lemma shows that $\Omega^\pm(A,B;(A+1)^{-1}-(B+1)^{-1})=0$.  It follows that $\Omega^\pm(A,B;(B+1)^{-1})$ exists.  Applying this to $(B+1)\varphi$, we see that $\Omega^\pm(A,B)$ exists.  By symmetry, it is complete.  We thus recover Birman's result (see below) that $(A+1)^{-1}-(B+1)^{-1}$ trace class implies that $\Omega^\pm(A,B)$ exists and is complete.  Pearson's proof is a clever variant of Kato's time--dependent proof from \cite{KatoBk}; see \cite[pp 33-38]{RS3} for details and further applications.

\begin{example} \lb{E13.5}  The fourth of Kato's applications/extensions of the trace class theory is an example in a joint paper with Kuroda \cite{KKCheat}.  They consider three Hamiltonians on $L^2(\bbR^2,d^2 x)$:
\begin{equation}\label{13.14}
  H_0=-\frac{\partial^2}{\partial x_1^2}-\frac{\partial^2}{\partial x_2^2}; \quad H_1=H_0+V(x_2);\quad H=H_1+K
\end{equation}
where $V \in L^1(\bbR)\cap L^2(\bbR)$ and $K$ is a rank 1 operator, $Ku = c\jap{\varphi,u} \varphi$ with $\varphi$ a norm 1 function in $L^2(\bbR^2)$ and $c$ is a constant.  Moreover, they pick $V$ so that $h_1 = -\frac{d^2}{dx^2}+V(x)$, as an operator on $L^2(\bbR)$, has exactly one eigenvalue in $(-\infty,0]$.

Let $h_0=-\frac{d^2}{dx^2}$.  By results of Kuroda \cite{KurodaNC}, using the trace class theory, $\Omega^\pm(h_1,h_0)$ exist and are complete.  Since $H_0, H_1$ are of the form $H_j=\bdone\otimes h_j+h_0\otimes\bdone$, one sees that $\Omega^\pm(H_1,H_0)$ exist with $\ran\,\Omega^+(H_1,H_0)=\ran\,\Omega^-(H_1,H_0)$.  But they are not complete because $\calH_{ac}(H_1)$ has vectors of the form $\psi\otimes\varphi_0$ where $\psi \in L^2(\bbR)$ and $\varphi_0$ is the bound state of $h_1$.

Since $K$ is rank 1, $\Omega^\pm(H,H_1)$ exist and so by the chain rule $\Omega^\pm(H,H_0)$ exist.  But by a calculation, $K$ links the two parts of the a.c. spectrum of $H_1$, at least for $c$ small.  Thus they claim that, for $c$ small, $\ran\,\Omega^+(H,H_0) \ne\ran\,\Omega^-(H,H_0)$ and the S--matrix is non--unitary.  Hence the title of their paper ``A Remark on the Unitarity Property of the Scattering Operator''.

However, as Kuroda \cite{KurodaCheat} subsequently noted, this analysis leaves something out.  The S--matrix is unitary if one looks at the right S--matrix!  This is a multichannel system and if one includes also the channel for $\{\psi\otimes\varphi_0\}$, the arguments do imply unitarity.  So rather than find a non--unitary S--matrix, they found the first example of a multichannel scattering system with asymptotic completeness!
\end{example}

We conclude this section with some brief remarks on developments in the trace class scattering theory subsequent to Kato's original work.  Many of the significant results are due to M. S. Birman, so much so that the theory has taken the name Kato--Birman theory.

(1) A first key issue was making the theory apply to Schr\"{o}dinger operators, $H_0=-\Delta, H=-\Delta+V$ on $L^2(\bbR^\nu)$.  The pioneer was Kato's student, Kuroda, who first proved an extension of the Kato--Rosenblum theorem.  If $V$ is $H_0$--bounded with relative bound less than 1 and $|V|^{1/2}(H_0+1)^{-1}$ is Hilbert--Schmidt, then Kuroda proved that $\Omega^\pm(H,H_0)$ exist and are complete.  He used this to prove existence and completeness if $\nu \le 3$ and $V \in L^1(\bbR^\nu)\cap L^2(\bbR^\nu)$.  In terms of V's with
\begin{equation}\label{13.15}
  |V(x)| \le C(1+|x|)^{-\alpha}
\end{equation}
this requires $\alpha>\nu$ whereas existence by Cook's method only needs $\alpha>1$, so for $\nu \ge 2$, there is a gap that we'll discuss much more in the next two sections. Kuroda also noted that if $V(\overrightarrow{x})=V(|\overrightarrow{x}|)$ is a central potential, then, for any $\nu$ one can do a partial wave expansion (see \cite[Theorem 3.5.8]{HA}) and reduce the problem to half--line problems.  Since it is known that when \eqref{13.15} holds for any $\alpha>0$, that the essential spectrum for the half--line problem is $[0,\infty)$ and the spectrum is simple, one can see that existence implies completeness without needing the trace class theory.

(2) Birman is responsible for a wide variety of extensions and applications of the trace class theory.  First, he proved with Krein \cite{BKrein} an extension to the situation where $U$ and $V$ are two unitaries for which $V-U$ is trace class.  In that case, $\textrm{s--}\lim_{n \to \pm\infty}(V^*)^nU^n P_{ac}(U)$ exists, has range $\ran\,P_{ac}(V)$ and is a unitary equivalence of the a.c. parts of $U$ and $V$.  Secondly \cite{BirmanRes1, BirmanRes2}, he proved that if $A, B$ are self--adjoint and $(A-z)^{-1}-(B-z)^{-1}$ is trace class for some $z \notin \sigma(A)\cup\sigma(B)$, then $\Omega^\pm(A,B)$ exist and are complete (deBranges \cite{deBTC} proved the same result).  Kuroda's result on $|V|^{1/2}(H_0+1)^{-1}$ Hilbert Schmidt follows from this.  Later Birman \cite{BirmanLocal} proved that if $P_I(A)(A-B)P_I(B)$ is trace class for all bounded intervals, $I$, and if a technical condition called mutual subordinancy holds, then $\Omega^\pm(A,B)$ exist and are complete.  His proof was involved but using Pearson's Theorem (Theorem \ref{T13.4}), one can easily prove this result of  Birman (see \cite[Theorem XI.10]{RS3}).  With this result, one can prove existence and completeness of $\Omega^\pm(H,H_0)$ for $H_0=-\Delta,\,H=-\Delta+V$ on $L^2(\bbR^\nu)$ if $V \in L^{\nu/2}(\bbR^\nu)\cap L^1(\bbR^\nu)$, so $\alpha>\nu$ in \eqref{13.15} leaving quite a gap from the expected $\alpha>1$ (see the next two sections).

(3) One can apply the trace class theory to changes of boundary condition.  The pioneer here is Birman \cite{BirmanBC, BirmanBCFull}; see also Deift--Simon \cite[Appendix]{DeiftSimonTC}.

(4) When $A$ and $B$ are bounded and $A-B$ is trace class. one can define an $L^1(\bbR,dx)$ function, $\xi(x)$, called the Krein spectral shift so that for $f$ a $C^2$ function of compact support, one has that $f(A)-f(B)$ is trace class and
\begin{equation}\label{13.16}
  \tr(f(A)-f(B)) = -\int f'(x)\xi(x)\,dx
\end{equation}
(see Simon \cite[Section 11.4]{SimonTI} or Yafaev \cite[Chap. 8]{Yafaev1} for more on the spectral shift function).  Birman--Krein \cite{BKrein} prove the beautiful \emph{Birman--Krein formula}:
\begin{equation}\label{13.17}
  \det(S(\lambda))=e^{-2\pi i\xi(\lambda)}
\end{equation}
when $A-B$ is trace class.  Here $S=\Omega^-(A,B)^*\Omega^+(A,B)$ is a unitary operator on $\calH_{ac}(B)$ which commutes with $B$, so according to the spectral multiplicity theory (\cite[Section 5.4]{OT}), $B$ has a direct integral decomposition $\calH_{ac}(B) = \int^\oplus_{\sigma_{ac}(B)} \calH_\lambda\,d\lambda, \, B=\int^\oplus_{\sigma_{ac}(B)}$ and $S=\int_{\sigma_{ac}(B)}^{\oplus} S(\lambda)\,d\lambda$ where $S(\lambda)$ is a unitary operator on $\calH_\lambda$.  Birman--Krein prove that $S(\lambda)-\bdone$ is a trace class operator on $\calH_\lambda$ and \eqref{13.17} holds where $\det$ is the Fredholm determinant (\cite[Section 3.10]{OT}).

\section{Scattering and Spectral Theory, II: Kato Smoothness} \lb{s14}

This is the second section on spectral and scattering theory.  We begin with a quick primer on spectral theory that will assume familiarity with the spectral theorem and spectral measures (see \cite[Sections 5.1 and 7.2]{OT}).  For a self--adjoint operator, $H$, on a (complex, separable) Hilbert space, $\calH$, the most basic questions are connected to the Lebesgue decomposition theorem (\cite[Theorem 4.7.3]{RA}) that says that any measure, $d\mu$ on $\bbR$ can be uniquely decomposed $d\mu=d\mu_{ac}+d\mu_{sc}+d\mu_{pp}$ where $d\mu_{pp}$ is pure point, $d\mu_{ac}$ is $dx$--absolutely continuous and $d\mu_{sc}$ has no pure points and is singular with respect to $dx$ (so ``singular continuous'').  There is a corresponding decomposition $\calH = \calH_{ac}(H)\oplus\calH_{sc}(H)\oplus\calH_{pp}(H)$ where $\calH_{y}$ is the set of those vectors, $\varphi$, whose $H$--spectral measure is purely of type $y$.

In simple quantum mechanical systems, $\calH_{ac}$ spectrum is often associated with scattering theory as we've seen, and $\calH_{pp}$ is associated with bound states.  As my advisor, Arthur Wightman, told me there is no reasonable interpretation for states in $\calH_{sc}$ so he called the idea that $\calH_{sc}=\{0\}$ the ``no goo hypothesis''.  A major concern of quantum theoretic spectral theorists in the period from 1960 to 1985, and, in particular, of Kato, was the proof that  $\calH_{sc}=\{0\}$ for two--( and N--)body quantum systems whose potentials obey \eqref{13.15} for $\alpha > 1$.

Ironically, after Kato became less active in NRQM, it was discovered that, in some ways, singular continuous spectrum is ubiquitous.  As I've remarked: ``I seem to have spent the first part of my career proving that singular continuous spectrum never occurs and the second proving that it always does''.  A key breakthrough was the discovery by Pearson \cite{PearsonSC} that sparse potentials with slow decay have purely s.c. spectrum.  I explored this in a series of papers \cite{SimonSC1, SimonSC2, SimonSC3, SimonSC4, SimonSC5, SimonSC6, SimonSC7} of which a typical result concerns $h$ on $\ell^2(\bbZ)$ given by $(hu)_n=u_{n+1}+u_{n-1}+b_n u_n$.  Fix $\alpha > 0$ and let $Q_\alpha$ be the Banach space of $b's$ with $\sup_n\left[(1+|n|)^\alpha|b_n|\right]\equiv\norm{b}_\alpha < \infty$ with $|n|^\alpha |b_n| \to 0$ as $|n| \to \infty$.  Then (see \cite{SimonSC1}), if $\alpha < 1/2$, for a dense $G_\delta$ in $Q_\alpha$, the associated $h$ has purely s.c. spectrum (i.e. $\calH_{sc}(h)=\calH$).

A main tool in the quest to prove that $\calH_{sc}=\{0\}$ is the fact that Stone's formula \cite[Eqn (5.7.30)]{OT}
\begin{equation}\label{14.1}
  \lim_{\epsilon \downarrow 0} \int_{a}^{b} \textrm{Im}\jap{\varphi,R(x+i\epsilon)\varphi}\,dx = \jap{\varphi,\tfrac{1}{2}\left[P_{(a,b)}(H)+P_{[a,b]}(H)\right]\varphi}
\end{equation}
(where $R(z) = (H-z)^{-1}$ for $z \in \bbC\setminus\bbR$) immediately implies that for any $p >1$, we have that
\begin{equation}\label{14.2}
  \sup_{0 < \epsilon < 1} \int_{a}^{b} |\textrm{Im}\jap{\varphi,R(x+i\epsilon)\varphi}|^p\,dx < \infty \Rightarrow P_{(a,b)}(H)\varphi \in \calH_{ac}(H)
\end{equation}
Thus, the most common way of proving that $\calH_{sing}=\{0\}$ is showing that for a dense set of $\varphi$, and enough intervals $(a,b)$, we have that
\begin{equation*}
  \sup_{\substack{\epsilon>0 \\ a<x<b}} |\jap{\varphi,R(x+i\epsilon)\varphi}| < \infty
\end{equation*}
(stronger than needed, but what one often gets).

We'll say a lot more about time--independent scattering in the next section, but we note that in some sense, the key notion of that theory is that control of $\jap{\varphi,R(x+i\epsilon)\varphi}$ as $\epsilon \downarrow 0$ also says something about long time behavior of dynamics as seen in
\begin{equation}\label{14.3}
    \int_{0}^{\infty} e^{-\epsilon t}e^{it\lambda}e^{-itH}\varphi \, dt = -iR(\lambda+i\epsilon)\varphi
\end{equation}
for any $\varphi \in \calH$ because $\int e^{-\epsilon t}e^{i(\lambda-x)t}\,dt=-i(x-\lambda-i\epsilon)^{-1}$.

We turn now to the theory of Kato smoothness which is based primarily on two papers of Kato \cite{KatoSm1, KatoSm2}.  The first is the basic one with four important results: the equivalence of many conditions giving the definition, the connection to spectral analysis, the implications for existence and completeness of wave operators and, finally, a perturbation result.  The second paper concerns the Putnam--Kato theorem on positive commutators.

To me, the 1951 self--adjointness paper is Kato's most significant work (with the adiabatic theorem paper a close second), Kato's inequality his deepest and the subject of this section his most beautiful.  One of the things that is so beautiful is that there isn't just a relation between the time--independent and  time--dependent objects -- there is an equivalence!  Here is the set of equivalent definitions:

\begin{theorem} [Kato \cite{KatoSm1}] \lb{T14.1}  Let $H$ be a self--adjoint operator and $A$ a closed operator.  The following are all equal ($R(\mu)=(H-\mu)^{-1}$):
\begin{equation}\label{14.4}
  \sup_{\substack{\norm{\varphi}=1\\ \epsilon >0}} \frac{1}{4\pi^2} \int_{-\infty}^{\infty} \left(\norm{AR(\lambda+i\epsilon)\varphi}^2+\norm{AR(\lambda-i\epsilon)\varphi}^2\right)\,d\lambda
\end{equation}
\begin{equation}\label{14.5}
  \sup_{\norm{\varphi}=1} \frac{1}{2\pi} \int_{-\infty}^{\infty} \norm{Ae^{-itH}\varphi}^2\,dt
\end{equation}
\begin{equation}\label{14.6}
  \sup_{\substack{\norm{\varphi}=1,\,\varphi \in D(A^*)\\ -\infty <a < b < \infty}} \frac{\norm{P_{(a,b)}(H)A^*\varphi}^2}{b-a}
\end{equation}
\begin{equation}\label{14.7}
  \sup_{\substack{\mu \notin \bbR,\,\varphi \in D(A^*) \\ \norm{\varphi}=1}} \frac{1}{2\pi} |\jap{A^*\varphi,[R(\mu)-R(\bar{\mu})]\varphi}|
\end{equation}
\begin{equation}\label{14.8}
  \sup_{\substack{\mu \notin \bbR,\,\varphi \in D(A^*) \\ \norm{\varphi}=1}} \frac{1}{\pi} \norm{R(\mu)A^*\varphi}^2 \, |\textrm{Im}\mu|
\end{equation}
In particular, if one is finite (resp. infinite), then all are.
\end{theorem}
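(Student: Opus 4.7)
The plan is to prove all five suprema coincide by establishing a short web of identities and inequalities relying on two engines: the Laplace representation
\begin{equation*}
R(\lambda + i\epsilon)\varphi = i\int_0^{\infty} e^{i(\lambda + i\epsilon)t}e^{-itH}\varphi\,dt \qquad (\epsilon > 0),
\end{equation*}
together with its conjugate on the lower half-plane, which converts resolvents into one-sided Fourier integrals of the unitary group; and the spectral theorem for $H$, which reinterprets operator norms via the scalar spectral measure $d\mu_{A^*\varphi}(\lambda) = d\norm{E_H(\lambda)A^*\varphi}^2$ and the complex cross-spectral measure $d\mu_{A^*\varphi,\varphi}(\lambda) = d\jap{A^*\varphi, E_H(\lambda)\varphi}$.

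First I would establish $(\ref{14.4}) = (\ref{14.5})$ by applying Plancherel in the $t$-variable to the Laplace representation: on the upper half line,
\begin{equation*}
\int_\bbR \norm{AR(\lambda + i\epsilon)\varphi}^2\,d\lambda = 2\pi\int_0^{\infty} e^{-2\epsilon t}\norm{Ae^{-itH}\varphi}^2\,dt,
\end{equation*}
and symmetrically on $(-\infty, 0]$ for $R(\lambda - i\epsilon)$. Summing, dividing by $4\pi^2$, and letting $\epsilon \downarrow 0$ by monotone convergence identifies the two suprema. One should initially restrict to $\varphi$ in a dense set on which $t \mapsto Ae^{-itH}\varphi$ is strongly measurable; closedness of $A$ then extends the identity to all of $\calH$, with either side allowed to be $+\infty$.

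For the remaining equalities the spectral theorem gives $\norm{R(\mu)A^*\varphi}^2 = \int |\lambda - \mu|^{-2}\,d\mu_{A^*\varphi}(\lambda)$, so (\ref{14.8}) is the supremum over $\mu \notin \bbR$ and unit $\varphi \in D(A^*)$ of the Poisson integral of $d\mu_{A^*\varphi}$, while (\ref{14.6}) is the corresponding symmetric-interval maximal function of the same positive measure. Standard real-analytic sandwich bounds between the Poisson kernel $P_\epsilon$ and normalized indicators $|I|^{-1}\chi_I$ yield $(\ref{14.6}) = (\ref{14.8})$. For $(\ref{14.7}) = (\ref{14.8})$, combine the first resolvent identity $R(\mu) - R(\bar\mu) = 2i(\textrm{Im}\,\mu)\,R(\mu)R(\bar\mu)$ with $R(\bar\mu) = R(\mu)^*$ to get
\begin{equation*}
\jap{A^*\varphi, [R(\mu) - R(\bar\mu)]\varphi} = 2i\,\textrm{Im}\,\mu\,\jap{R(\bar\mu)A^*\varphi, R(\bar\mu)\varphi}.
\end{equation*}
Cauchy--Schwarz for the positive operator $R(\mu)R(\bar\mu)$ combined with the spectral identification of (\ref{14.8}) controls (\ref{14.7}) from above; the converse is obtained by polarization, substituting $A^*\varphi$ into the right slot of (\ref{14.7}) on the dense subspace $D((A^*)^2)$ and closing by a limiting argument. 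The link $(\ref{14.5}) = (\ref{14.6})$ then closes the cycle, or equivalently follows from Plancherel in the spectral representation, which identifies $\frac{1}{2\pi}\int \norm{Ae^{-itH}\varphi}^2\,dt$ with a boundary integral of the density of $d\mu_{A^*\varphi}$.

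The hard part will be the comparison of (\ref{14.7}) and (\ref{14.8}). A naive Cauchy--Schwarz bound on $|\jap{R(\bar\mu)A^*\varphi, R(\bar\mu)\varphi}|$ picks up a factor of $|\textrm{Im}\,\mu|^{-1/2}$ that diverges as $\mu$ approaches $\sigma(H)$; overcoming this requires exploiting the \emph{full} supremum over $\varphi$ rather than a single vector, and carefully tracking the domain of $A^*$ since $A^*\varphi$ need not itself lie in $D(A^*)$. The interplay between $\varphi$ and $A^*\varphi$ in the two slots of the sesquilinear form in (\ref{14.7}), together with the passage from the dense core $D((A^*)^2)$ to all of $D(A^*)$, is the most delicate bookkeeping in the theorem.
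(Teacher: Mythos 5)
Your Plancherel argument for \eqref{14.4}$=$\eqref{14.5} is essentially the one in the paper, and your comparison of \eqref{14.6} and \eqref{14.8} via Poisson integrals of $d\mu_{A^*\varphi}$ is in the same spirit as the paper's comparison of \eqref{14.6} and \eqref{14.7}. But the ``hard part'' you flag is not a difficulty to be overcome by polarization and limiting arguments: it is a misprint. In \eqref{14.7} the right-hand slot of the inner product should read $A^*\varphi$, not $\varphi$; with that correction, the resolvent identity you displayed gives at once $\jap{A^*\varphi,[R(\mu)-R(\bar\mu)]A^*\varphi}=2i\,\textrm{Im}\,\mu\,\norm{R(\bar\mu)A^*\varphi}^2$, so \eqref{14.7}$=$\eqref{14.8} is a one-liner — which is exactly what Simon's Remark~4 after the theorem asserts. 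As literally printed the statement is even false: take $A=c\bdone$; the printed \eqref{14.7} scales like $c$ while \eqref{14.8} scales like $c^2$. Your polarization route cannot close the gap because the supremum in \eqref{14.7} runs over the \emph{coupled} pair $(\varphi,A^*\varphi)$ with $\norm{\varphi}=1$, and there is no way to decouple the two slots while retaining the normalization; the Cauchy--Schwarz loss of a power of $|\textrm{Im}\,\mu|$ that you observed is real, not bookkeeping. The paper's own proof that \eqref{14.4}$\le$\eqref{14.7} — introducing $K(z)=\bigl((2\pi i)^{-1}(R(z)-R(\bar z))\bigr)^{1/2}$ and using $\norm{AK(z)}^2=\sup_{\norm{\varphi}=1}\jap{A^*\varphi,K(z)^2A^*\varphi}\le\alpha$ — only makes sense with the corrected form of \eqref{14.7}.

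The second gap is your closing link \eqref{14.5}$=$\eqref{14.6}. Because $A$ is only closed (not bounded), $\norm{Ae^{-itH}\varphi}^2$ is not an integral against $d\mu_{A^*\varphi}$, so ``Plancherel in the spectral representation'' does not apply as stated; the sketch as given does not produce a proof. The paper closes the cycle by a different route: \eqref{14.6}$\le$\eqref{14.4} via Stone's formula and Cauchy--Schwarz, and \eqref{14.4}$\le$\eqref{14.7} via the operator $K(z)$ above, writing the integrand in \eqref{14.4} as $\norm{AK^2\varphi}^2\le\alpha\norm{K\varphi}^2$ and using $\int\norm{K(\lambda+i\epsilon)\varphi}^2\,d\lambda=\norm{\varphi}^2$. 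An argument at that level of care is needed to connect \eqref{14.5} to the spectral-measure quantities; without it, or without replacing it by one of the paper's links, your chain of equalities does not close.
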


\begin{remarks}  1.  In \eqref{14.4}/\eqref{14.5}, we set $\norm{A\psi}=\infty$ if $\psi \notin D(A)$, so, for example, to say that \eqref{14.5} is finite implies that for each $\varphi$, we have that $e^{-itH}\varphi \in D(A)$ for Lebesgue a.e. $t \in \bbR$.

2. If one and so all of the above quantities are finite we say that $A$ is \emph{$H$--smooth}.  The common value of these quantities is called $\norm{A}_H^2$.

3. The proof is not hard.  If the integral in \eqref{14.5} has a factor of $e^{-2\epsilon t}$ put inside it, the equality of the integrals in \eqref{14.4} and \eqref{14.5} follows from \eqref{14.3} and the Plancherel theorem.  By monotone convergence, the $\sup$ of the time integral with the $e^{-2\epsilon t}$ factor is the integral without that factor.

4.  The equivalence of \eqref{14.7} and \eqref{14.8} is just $R(\mu)-R(\bar{\mu})={(\mu-\bar{\mu})R(\mu)R(\bar{\mu})}$.

5.  If $d\nu_{A^*\varphi}$ is the $H$--spectral measure for $A^*\varphi$ (so $\int f(\lambda)d\nu_{A^*\varphi}(\lambda) = \jap{A^*\varphi,f(H)A^*\varphi}$), then the equivalence of \eqref{14.6} and \eqref{14.7} involves the relation of $\frac{\epsilon}{\pi} \int \frac{d\nu(\lambda)}{(\lambda-x)^2+\epsilon^2}$ and $\frac{\nu((a,b))}{b-a}$.  A bound like \eqref{14.6} implies a.e. in $d\lambda$ a bound on $\frac{d\nu(\lambda)}{d\lambda}$.  Since $ \frac{\epsilon}{\pi}\int \frac{d\lambda}{(\lambda-x)^2+\epsilon^2} =1$, we get \eqref{14.7}.  Conversely \eqref{14.7} implies \eqref{14.6} via Stone's formula.

6. To see that \eqref{14.6}$\le$\eqref{14.4}, it suffices by taking limits to consider the case where $a$ and $b$ are not eigenvalues of $H$.  One writes $P_{(a,b)}(H)$ by Stone's formula to see that
\begin{align*}
  |\jap{A^*\varphi,P_{(a,b)}(H)\psi}| &\le \frac{1}{2\pi}\norm{\varphi}\limsup_{\epsilon\downarrow 0} \int_{a}^{b} \norm{A[R(\lambda+i\epsilon)-R(\lambda-i\epsilon)]\psi}\,d\lambda  \\
                                      &\le \norm{\varphi}\left(\int_{a}^{b} 1\, d\lambda\right)^{1/2}\left(\frac{1}{4\pi^2}\int_{a}^{b} \textrm{Integrand in \eqref{14.4}}\,d\lambda\right)^{1/2}
\end{align*}
proving that $\norm{P_{(a,b)}(H)A^*\varphi}\le\norm{\varphi}[$RHS of \eqref{14.4}$]^{1/2}|b-a|^{1/2}$.

7.  To see that \eqref{14.4}$\le$\eqref{14.7}, thereby completing the proof of all the equivalences, let $\alpha$ be the $\sup$ in \eqref{14.7}.  For $z \in \bbC_+$, let $K(z)$ be the positive square root of $(2\pi i)^{-1}(R(z)-R(\bar{z}))$.  Then $\norm{AK(z)}^2 \le \alpha$, so
\begin{align*}
  \textrm{Quantity whose }\sup\textrm{ is taken in \eqref{14.4}} &= \int_{-\infty}^{\infty} \norm{AK(\lambda+i\epsilon)^2\varphi}^2\,d\lambda \\
                                                                 &\le \alpha\int_{-\infty}^{\infty} \norm{K(\lambda+i\epsilon)\varphi}^2 \, d\lambda \\
                                                                 &= \alpha \norm{\varphi}^2
\end{align*}

8.  By \eqref{14.3}, if $A$ is $H$--smooth, then
\begin{align*}
  \norm{AR(\lambda+i\mu)\varphi} &\le \int_{0}^{\infty} e^{-\mu t}\norm{Ae^{-itH}\varphi}\,dt \\
                                 &\le \left(\int_{0}^{\infty} e^{-2\mu t}\,dt\right)^{1/2} \left(\int_{0}^{\infty}\norm{Ae^{-itH}\varphi}^2\,dt\right)^{1/2} \\
                                 &\le (2\mu)^{-1/2}(2\pi)^{1/2} \norm{A}_H
\end{align*}
so $A$ $H$--smooth $\Rightarrow$ $A$ is $H$--bounded with relative bound zero.

9.  In \cite{KatoSm1}, Kato states this equivalence in stages since, as the title of the paper indicates, his focus is on controlling certain non--self--adjoint operators (we focus on the self--adjoint case of greatest interest in NRQM).  He first considers general $H$ with $\sigma(H) \subset \bbR$ and proves a version of Theorem \ref{T14.6} below and then (following Friedrichs \cite{FriedCont}) constructs similarity operators using a stationary replacement for wave operators.  He next adds to $H$ a condition that it generate a group $\{U(t)\}_{t \in \bbR}$ of bounded operators with $\norm{U(t)}=\textrm{O}(e^{\epsilon t})$ for all $\epsilon>0$.  Then \eqref{14.3} holds with $e^{-itH}$ replaced by $U(t)$ and Kato proves the equality of \eqref{14.4} and \eqref{14.5} in that case.  Finally, he proves the full Theorem \ref{T14.1} when $H$ is self--adjoint.
\end{remarks}

\begin{example} \lb{E14.2} Let $H = -i\frac{d}{dx}$ on $L^2(\bbR)$ and let $A$ be multiplication by $f(x)$.  Since $e^{-itH}\varphi(x) = \varphi(x-t)$, we compute
\begin{align*}
  \int_{-\infty}^{\infty} \norm{Ae^{-itH}\varphi}^2\,dt &= \int_{\bbR^2} f(x)^2\varphi(x-t)^2 \,dx\,dt \\
                                                        &= \norm{f}_2^2\norm{\varphi}_2^2
\end{align*}
so, if $f \in L^2(\bbR)$, then $A$ is $H$--smooth.
\end{example}

\begin{example}  \lb{E14.3} If $H_0$ is $-\Delta$ on $L^2(\bbR^3)$, it is known (\cite[(6.9.48)]{RA}) that $(H_0+\kappa^2)^{-1}$ with $\textrm{Re}\,\kappa>0$ has integral kernel $\frac{1}{4\pi|x-y|}e^{-\kappa|x-y|}$.  Suppose that
\begin{equation*}
  \frac{1}{4\pi}\int \frac{|V(x)|\,|V(y)}{|x-y|^2}\,d^3x\,d^3y \equiv \norm{V}_R^2 < \infty
\end{equation*}
called the \emph{Rollnik class} in Simon \cite{SimonThesis} after Rollnik \cite{Rollnik}.  Then the Hilbert--Schmidt norm $\norm{|V|^{1/2}(H_0+\kappa^2)^{-1}|V|^{1/2}}_{HS} \le \norm{V}_R$, so, by \eqref{14.7} $|V|^{1/2}$ is $H_0$--smooth with $\norm{|V|^{1/2}}_{H_0} \le \pi^{-1}\norm{V}_R^{1/2}$.  If $V \in L^{3/2}(\bbR^3)$, the HLS inequality (\cite[Thm 6.2.1]{HA}, \cite{LiebHLS, FrankLHLS}) implies that $V$ is Rollnik.
\end{example}

Smoothness has an immediate consequence for the spectral type of $H$:

\begin{theorem} [Kato \cite{KatoSm1}] \lb{T14.4} Let $H$ be a self--adjoint operator and let $A$ be $H$--smooth.  Then $\ran(A^*) \subset \calH_{ac}(H)$.  In particular, if $\ker(A)=\{0\}$, then $H$has purely a.c. spectrum.
\end{theorem}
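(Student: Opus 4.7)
The plan is to exploit the equivalence in Theorem \ref{T14.1}, specifically the characterization \eqref{14.6}, which gives the most direct control over spectral projections. Fix $\varphi \in D(A^*)$ and let $\psi = A^*\varphi$. The finiteness of \eqref{14.6} immediately yields
\begin{equation*}
\norm{P_{(a,b)}(H)\psi}^2 \le \norm{A}_H^2 \, \norm{\varphi}^2 \, (b-a)
\end{equation*}
for every bounded open interval $(a,b) \subset \bbR$. In terms of the $H$-spectral measure $d\nu_\psi$ of $\psi$, this says $\nu_\psi((a,b)) \le C\,(b-a)$ with $C = \norm{A}_H^2 \norm{\varphi}^2$.

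Next I would deduce that $\nu_\psi$ is absolutely continuous with respect to Lebesgue measure. This is a standard outer-regularity argument: given a Borel set $E \subset \bbR$ of Lebesgue measure zero and $\veps > 0$, cover $E$ by an open set $U$ with $|U| < \veps/C$. Writing $U$ as a disjoint countable union of open intervals $I_k$ and summing the bound above gives $\nu_\psi(U) \le C|U| < \veps$, hence $\nu_\psi(E) = 0$. Therefore $\psi \in \calH_{ac}(H)$. Since $\varphi \in D(A^*)$ was arbitrary, $\ran(A^*) \subset \calH_{ac}(H)$, and because $\calH_{ac}(H)$ is a closed subspace, also $\overline{\ran(A^*)} \subset \calH_{ac}(H)$.

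For the ``in particular'' clause, recall that $A$ is assumed closed (and, implicitly, densely defined, so that $A^*$ makes sense); then $A^{**} = A$ and the general relation $\overline{\ran(A^*)} = \ker(A^{**})^\perp = \ker(A)^\perp$ applies. The hypothesis $\ker(A) = \{0\}$ thus forces $\overline{\ran(A^*)} = \calH$, and combining with the inclusion above gives $\calH_{ac}(H) = \calH$, i.e.\ $H$ has purely absolutely continuous spectrum.

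I do not anticipate any serious obstacle: the heavy lifting is already packaged in the equivalences of Theorem \ref{T14.1}. The only point that requires a little care is the passage from the interval bound on $\nu_\psi$ to genuine absolute continuity, and the density argument for the second assertion, both of which are routine. If one wanted an alternative route, the time-dependent characterization \eqref{14.5} together with \eqref{14.3} gives $R(\lambda \pm i\eps)\psi$ uniformly bounded as $\eps \downarrow 0$ in an $L^2$ sense in $\lambda$, which via Stone's formula \eqref{14.1}--\eqref{14.2} also yields $\psi \in \calH_{ac}(H)$; but the interval-bound route above is the cleanest.
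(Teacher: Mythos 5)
Your argument is the same as the paper's: both invoke the characterization \eqref{14.6} to obtain the interval bound $\nu_\psi((a,b)) \le \norm{A}_H^2 \norm{\varphi}^2 (b-a)$ and then pass to absolute continuity by outer regularity, with the ``in particular'' clause following from $\overline{\ran(A^*)} = \ker(A)^\perp$. The proposal is correct (and you even have the exponent on $\norm{A}_H$ right, where the paper's \eqref{14.9} has a small typo).
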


The proof is very easy.  If $d\nu$ is the $H$--spectral measure for $A^*\varphi$, then \eqref{14.6} says that
\begin{equation}\label{14.9}
  \nu(I) \le \norm{A}_H \norm{\varphi}^2 |I|
\end{equation}
(where $|\cdot|$ is Lebesgue measure) for open intervals, $I$.  By taking unions and using outer regularity, \eqref{14.9} holds for all sets, so $\nu$ is absolutely continuous.

Smoothness also implies existence and completeness of wave operators.

\begin{theorem} [Kato \cite{KatoSm1}] \lb{T14.5} Let $H, H_0$ be two self--adjoint operators.  Let $A, B$ be closed operators so that $A$ is $H$--smooth and $B$ is $H_0$--smooth and so that
\begin{equation}\label{14.10}
  H-H_0 = A^*B
\end{equation}
in the sense that for $\psi \in D(H)$ and $\varphi \in D(H_0)$, we have that
\begin{equation}\label{14.11}
  \jap{H\psi,\varphi} - \jap{\psi,H_0\varphi} = \jap{A\psi,B\varphi}
\end{equation}
Then $\Omega^\pm(H,H_0)$ exist and are complete.
\end{theorem}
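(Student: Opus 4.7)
The plan is to mimic Cook's method (see \eqref{13.10}) but with $L^2$-in-time estimates supplied by smoothness replacing the $L^1$-integrability hypothesis, and then to obtain completeness from Kato's symmetric criterion, Theorem \ref{T13.1}.

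First I would set $U(t) = e^{itH}e^{-itH_0}$ and seek to differentiate $\langle \eta, U(t)\varphi\rangle = \langle e^{-itH}\eta, e^{-itH_0}\varphi\rangle$ in $t$ for $\eta \in D(H)$ and $\varphi \in D(H_0)$. Since $e^{-itH}$ (resp.\ $e^{-itH_0}$) leaves $D(H)$ (resp.\ $D(H_0)$) invariant, the formal product rule together with the sesquilinear-form identity \eqref{14.11} yields
\begin{equation*}
  \tfrac{d}{du}\langle \eta, U(u)\varphi\rangle = -i\bigl[\langle He^{-iuH}\eta, e^{-iuH_0}\varphi\rangle - \langle e^{-iuH}\eta, H_0 e^{-iuH_0}\varphi\rangle\bigr] = -i\langle A e^{-iuH}\eta, B e^{-iuH_0}\varphi\rangle,
\end{equation*}
the last step using \eqref{14.11} with $\psi = e^{-iuH}\eta \in D(H)$ and $\varphi$ replaced by $e^{-iuH_0}\varphi \in D(H_0)$. (Note that $H$-smoothness of $A$ forces $D(H) \subset D(A)$ by Remark 8 to Theorem \ref{T14.1}, and similarly $D(H_0) \subset D(B)$, so both sides make sense.) Integrating and applying Cauchy--Schwarz gives
\begin{equation*}
  |\langle \eta, (U(t)-U(s))\varphi\rangle| \le \Bigl(\int_s^t \|A e^{-iuH}\eta\|^2\,du\Bigr)^{1/2}\Bigl(\int_s^t \|B e^{-iuH_0}\varphi\|^2\,du\Bigr)^{1/2}.
\end{equation*}

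Next I would invoke the characterization \eqref{14.5}: the first factor is bounded by $\sqrt{2\pi}\,\|A\|_H\|\eta\|$ uniformly in $s,t$, while the second factor, being the tail of a convergent integral $\int_{-\infty}^\infty \|Be^{-iuH_0}\varphi\|^2\,du \le 2\pi\|B\|_{H_0}^2\|\varphi\|^2$, tends to $0$ as $s,t\to+\infty$ or as $s,t\to-\infty$. Taking the supremum over $\eta\in D(H)$ with $\|\eta\| \le 1$, and using that $D(H)$ is dense in $\mathcal H$, yields
\begin{equation*}
  \|(U(t)-U(s))\varphi\| \le \sqrt{2\pi}\,\|A\|_H \Bigl(\int_s^t \|Be^{-iuH_0}\varphi\|^2\,du\Bigr)^{1/2} \longrightarrow 0,
\end{equation*}
so $\{U(t)\varphi\}$ is Cauchy on the dense subspace $D(H_0)$ as $t\to\pm\infty$. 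Since $\|U(t)\| = 1$, an $\varepsilon/3$ argument extends the strong limit to all of $\mathcal H$, and composition with $P_{ac}(H_0)$ produces $\Omega^\pm(H,H_0)$.

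For completeness I would appeal to Theorem \ref{T13.1}: it suffices to show that $\Omega^\pm(H_0,H)$ also exist. But taking complex conjugates in \eqref{14.11} and swapping the roles of $\psi$ and $\varphi$ shows that the form identity $H_0 - H = B^*A$ holds in the same sense, with the roles of $(A,H)$ and $(B,H_0)$ interchanged. Since the hypotheses of the theorem are symmetric under this interchange, the argument just given yields existence of $\Omega^\pm(H_0,H)$, and Theorem \ref{T13.1} delivers completeness of $\Omega^\pm(H,H_0)$.

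The main technical obstacle will be the rigorous justification of the differentiation step: one must check that on the invariant domains $D(H)$ and $D(H_0)$ the derivative of the inner product really equals what the product rule suggests and that the form identity \eqref{14.11} applies to the time-evolved vectors. Once that bookkeeping is done, the proof is a clean marriage of the Cook-type integration of a derivative with the $L^2$-time bounds packaged by Kato in \eqref{14.5}; no spectral analysis of $H$ or $H_0$ beyond unitarity of $e^{-itH}$, $e^{-itH_0}$ is required.
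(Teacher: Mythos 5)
Your proposal is correct and follows essentially the same route as the paper: differentiate $\langle\eta, e^{itH}e^{-itH_0}\varphi\rangle$, use the form identity \eqref{14.11} to convert the result into $-i\langle Ae^{-iuH}\eta, Be^{-iuH_0}\varphi\rangle$, integrate, apply Cauchy--Schwarz, bound the first factor by $\sqrt{2\pi}\,\|A\|_H\|\eta\|$ via \eqref{14.5}, observe the second factor is a vanishing tail, and then obtain completeness from the symmetric form identity together with Theorem \ref{T13.1}. (The only nit is a sign: the complex-conjugated identity gives $H_0 - H = -B^*A$ rather than $B^*A$, but this is immaterial since $-A$ is $H$-smooth with the same norm.)
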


\begin{remarks}  1.  Since smoothness implies relative boundedness, if $\psi \in D(H)$ and $\varphi \in D(H_0)$, then the right side of \eqref{14.11} makes sense.

2. In some applications, one assumes that $H-H_0=\sum_{j=1}^{n}A_j^*B_j$ with each $A_j$ $H$--smooth and each $B_j$ is $H_0$--smooth.  The proof in remark 3 extends to this case or, alternatively, one can define smoothness for closed operators, A, from $\calH$, the space on which $H$ is defined to $\calK$, a perhaps distinct Hilbert space, and then pick $\calK=\oplus_{j=1}^n \calH, \, B = \oplus_{j=1}^n B_j,\, A=\oplus_{j=1}^n A_j$ so $A^*B = \sum_{j=1}^{n} A_j^* B_j$.

3.  The proof is again easy (indeed, one of the beauties of Kato smoothness theory is how much one gets with simple proofs).  If $\psi \in D(H)$ and $\varphi \in D(H_0)$, $W(t) = e^{+itH}e^{-itH_0}$, then for $s < t$,
\begin{align*}
  |\jap{\psi,(W(t)-W(s))\varphi}| &= \left| \int_{s}^{t} \jap{Ae^{-iuH}\psi,Be^{-iuH_0}\varphi} \, du \right| \\
                                 &\le \left( \int_{-\infty}^{\infty} \norm{Ae^{-iuH}\psi}^2\,du\right)^{1/2} \left( \int_{-s}^{t} \norm{Be^{-iuH_0}\varphi}^2\,du\right)^{1/2}  \\
                                 &\le \sqrt{2\pi} \norm{A}_H \norm{\psi} \left( \int_{-s}^{t} \norm{Be^{-iuH_0}\varphi}^2\,du\right)^{1/2}
\end{align*}
so
\begin{equation}\label{14.12}
 \norm{(W(t)-W(s))\varphi} \le \sqrt{2\pi} \norm{A}_H \left( \int_{-s}^{t} \norm{Be^{-iuH_0}\varphi}^2\,du\right)^{1/2}
\end{equation}
is Cauchy.  Therefore, $\Omega^\pm(H,H_0)$ exists.  Since $H_0-H = -B^*A$, we conclude that they are also complete by Theorem \ref{T13.1}
\end{remarks}

We say that a closed operator, $A$ is $H$--\emph{supersmooth} if and only if
\begin{equation}\label{14.13}
  \norm{A}_{H,SS}^2 \equiv \sup_{z \in \bbC\setminus\bbR} \norm{A(H-z)^{-1}A^*} < \infty
\end{equation}
The notion is in Kato \cite{KatoSm1} and the name is from Kato--Yajima \cite{KY} in 1989.  The name hasn't stuck but I like it, so I'll use it.  The fourth important result in Kato \cite{KatoSm1} is

\begin{theorem} [Kato \cite{KatoSm1}] \lb{T14.6} Let $H_0$ be a self--adjoint operator.  Let $A$ be $H_0$--supersmooth and $C$ a bounded self-adjoint operator so that
\begin{equation}\label{14.14}
  \alpha \equiv \norm{C}\norm{A}_{H_0,SS}^2 < 1
\end{equation}
Let $B=A^*CA$.  Then $B$ is relatively form bounded with relative form bound at most $\alpha$.  If $H=H_0+B$, then A is also $H$--supersmooth with
\begin{equation}\label{14.14A}
  \norm{A}_{H,SS} \le \norm{A}_{H_0,SS}(1-\alpha)^{-1/2}
\end{equation}
In particular, $\Omega^\pm(H,H_0)$ exist and are complete.
\end{theorem}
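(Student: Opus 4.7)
The plan is to establish the theorem in three stages: (i) use the supersmoothness bound to give meaning to $H = H_0 + B$ via the KLMN theorem; (ii) transfer the supersmoothness bound from $H_0$ to $H$ via a sandwiched second--resolvent identity solved by Neumann series; (iii) invoke Theorem \ref{T14.5} to obtain existence and completeness of the wave operators. Writing $M = \norm{A}_{H_0,SS}$, the hypothesis $\norm{A(H_0-z)^{-1}A^*}\le M^2$ for $z\in\bbC\setminus\bbR$ yields (by spectral--theoretic extension to the real resolvent set, or via the characterization \eqref{14.8}) that $A(|H_0|+1)^{-1/2}$ extends to a bounded operator on $\calH$ of norm at most $M$. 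Consequently $A$ has a bounded extension from $D(A)$ to $Q(H_0)$ satisfying $\norm{A\varphi}^2 \le M^2\jap{\varphi,(|H_0|+1)\varphi}$, and therefore
\begin{equation*}
  |\jap{\varphi,B\varphi}| = |\jap{A\varphi,CA\varphi}| \le \norm{C}\,\norm{A\varphi}^2 \le \alpha\,\jap{\varphi,(|H_0|+1)\varphi}
\end{equation*}
on $Q(H_0)$, showing that $B$ has relative form bound at most $\alpha < 1$ with respect to $H_0$. Theorem \ref{T10.8} then defines the self--adjoint operator $H = H_0 + B$ with $Q(H) = Q(H_0)$.

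For stage (ii), which is the heart of the argument, introduce $K_0(z) = A(H_0-z)^{-1}A^*$ and $K(z) = A(H-z)^{-1}A^*$, both bounded operators on $\calH$ for $z \in \bbC\setminus\bbR$, since $(H_0-z)^{-1}$ and $(H-z)^{-1}$ map $\calH$ into $Q(H_0) = Q(H)$ on which $A$ has been extended. The second resolvent formula, interpreted in the form sense and sandwiched between $A$ and $A^*$, gives
\begin{equation*}
  K(z) = K_0(z) - K_0(z)\,C\,K(z), \qquad \text{i.e.}\quad (\bdone + K_0(z)C)K(z) = K_0(z).
\end{equation*}
Since $\norm{K_0(z)C}\le M^2\norm{C} = \alpha < 1$ uniformly in $z$, a Neumann series shows $\bdone + K_0(z)C$ is invertible with $\norm{(\bdone + K_0(z)C)^{-1}}\le(1-\alpha)^{-1}$, yielding $\norm{K(z)}\le M^2/(1-\alpha)$ uniformly. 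This is exactly the asserted bound $\norm{A}_{H,SS}\le M(1-\alpha)^{-1/2}$.

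For stage (iii), factor the perturbation as $H - H_0 = A^*(CA)$. Supersmoothness implies smoothness with $\norm{A}_H^2 \le \norm{A}_{H,SS}^2$ (comparing \eqref{14.7} with \eqref{14.8}), so $A$ is $H$--smooth; since $C$ is bounded, $CA$ is $H_0$--smooth with $\norm{CA}_{H_0}\le\norm{C}\norm{A}_{H_0}$. Theorem \ref{T14.5} then delivers existence and completeness of $\Omega^\pm(H,H_0)$. The main technical obstacle is justifying the resolvent identity for $K(z)$ rigorously: since $B$ is only a form perturbation, $(H-z)^{-1}$ need not map into $D(A)$. This is resolved by using the KLMN construction, which guarantees $(H-z)^{-1}$ maps $\calH$ into $Q(H) = Q(H_0)$; on this space $A$ is bounded by the extension established in stage (i), so $A(H-z)^{-1}$ and $A(H-z)^{-1}A^*$ are genuine bounded operators and the resolvent identity may be verified by testing on a core for the form.
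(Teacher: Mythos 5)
Your proposal is correct and is essentially the paper's own argument: the Neumann-series inversion of $\bdone + K_0(z)C$ in your stage (ii) is precisely the geometric series \eqref{14.14B} written in closed form, both proofs pass through form boundedness first, and both finish by invoking Theorem \ref{T14.5} with the factorization $H - H_0 = A^*(CA)$. One small caveat worth flagging in stage (i): the supersmoothness bound taken at $z = i$ only yields $A(H_0^2+1)^{-1/2}$ bounded; to upgrade this to $A(|H_0|+1)^{-1/2}$ bounded (and hence a form bound relative to $|H_0|$ rather than $H_0^2$) one does need the ``extension to the real resolvent set'' step, which tacitly uses $\sigma(H_0)\neq\bbR$ (e.g.\ $H_0$ semibounded) — a hypothesis the theorem statement, and the paper's sketch, leave implicit; Kato's original treatment sidesteps this by defining $H$ directly through the resolvent series.
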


\begin{remarks}  1.  Once again, the proofs are simple.  The key is a formal geometric series:
\begin{align}
  A(H&-z)^{-1}A^* =  A(H_0-z)^{-1}A^* \nonumber \\
       &+ \sum_{j=0}^{\infty} (-1)^{j+1}  A(H_0-z)^{-1}A^* \left[C A(H_0-z)^{-1}A^*\right]^j C A(H_0-z)^{-1}A^*  \lb{14.14B}
\end{align}
One proves the form boundedness and uses that to justify a formula like \eqref{14.14B} but with an error term.  Since $\norm{C A(H_0-z)^{-1}A^* } \le \alpha$, the error goes to zero and the series converges.  The final assertion then comes from Theorem \ref{T14.5}.

2.  By the same analysis, the analog of Remark 2 after Theorem \ref{T14.5} holds.  If $H = H_0+\sum_{j=1}^{n} A_j^*B_j$ and $\gamma_{jk} = \sup_{z \in \bbC\setminus\bbR} \norm{B_j(H_0-z)^{-1}A^*_k}$ is finite and $\Gamma = \{\gamma_{jk}\}_{1 \le j,k  \le n}$ is a matrix of norm $\alpha < 1$, and if each $A_j$ and $B_j$ is supersmooth, then $\Omega^\pm(H,H_0)$ exist and are complete.

3.  We repeat that in \cite{KatoSm1}, Kato considers cases where $H_0$ and $C$ need not be self--adjoint.  He assumes that $\sigma(H_0) \subset \bbR$ and ${\norm{C}\sup_z \norm{ A(H_0-z)^{-1}A^* } < 1}$ and then defines an operator $H$ which is formally $H_0+A^*CA$ with a resolvent that obeys \eqref{14.14B}.  He then uses ideas going back to Friedrichs \cite{FriedCont} to define (in terms of resolvents, not time limits) invertible operators $W^\pm$ so that $W^\pm H_0 (W^\pm)^{-1} = H$.
\end{remarks}

That completes our discussion of \cite{KatoSm1}.  The main result of \cite{KatoSm2} is

\begin{theorem} [Putnam--Kato Theorem \cite{PutnPK, KatoSm2}] \lb{T14.7}  Let $A$ and $B$ be bounded self--adjoint operators so that $D \equiv i[A,B]$ is strictly positive in the sense that for all $\varphi \ne 0$, we have that
\begin{equation}\label{14.15}
  \jap{\varphi,D\varphi} > 0
\end{equation}
Then $A$ and $B$ have purely a.c. spectrum.
\end{theorem}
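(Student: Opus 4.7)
The plan is to exhibit an operator which is $A$-smooth (and, by symmetry, $B$-smooth) with trivial kernel, and then apply Theorem \ref{T14.4}. The natural candidate is $C \equiv D^{1/2}$, the positive square root of $D = i[A,B]$. Note $C$ is bounded and self-adjoint because $A$, $B$ are bounded, and $\ker C = \ker D = \{0\}$ because $D$ is strictly positive.

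The key computation is the Heisenberg-type identity
\begin{equation*}
  \frac{d}{dt}\bigl(e^{itA} B e^{-itA}\bigr) = e^{itA}\, i[A,B]\, e^{-itA} = e^{itA} C^2 e^{-itA}.
\end{equation*}
Integrating from $-T$ to $T$ and pairing with a unit vector $\varphi$ yields
\begin{equation*}
  \int_{-T}^{T} \bigl\| C\, e^{-itA} \varphi \bigr\|^2 \, dt \;=\; \bigl\langle \varphi, e^{iTA} B e^{-iTA} \varphi \bigr\rangle - \bigl\langle \varphi, e^{-iTA} B e^{iTA} \varphi \bigr\rangle \;\le\; 2\|B\|.
\end{equation*}
Letting $T \to \infty$ gives the uniform bound $\int_{-\infty}^{\infty} \|C e^{-itA}\varphi\|^2\, dt \le 2\|B\|$, which via the criterion \eqref{14.5} says precisely that $C$ is $A$-smooth with $\|C\|_A^2 \le \|B\|/\pi$.

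Now Theorem \ref{T14.4} gives $\ran(C^*) = \ran(C) \subset \calH_{ac}(A)$. Since $\ker C = \{0\}$ and $C$ is self-adjoint, $\ran(C)$ is dense in $\calH$; but $\calH_{ac}(A)$ is a closed subspace, so $\calH_{ac}(A) = \calH$, i.e.\ $A$ has purely absolutely continuous spectrum. Interchanging the roles of $A$ and $B$ (the sign flip $i[B,A] = -D$ is harmless, since the identity becomes $-\int_{-T}^T \|C e^{-itB}\varphi\|^2\, dt$ equals a quantity bounded by $2\|A\|$ in absolute value) shows $C$ is also $B$-smooth, and the same reasoning gives $\calH_{ac}(B) = \calH$.

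I do not expect a serious obstacle: everything is bounded, so there are no domain issues, and both the differentiation under the unitary conjugation and the square-root factorization $D = C^*C$ are elementary. The only subtle point worth highlighting is that one must use the \emph{strict} positivity of $D$ (not merely $D \ge 0$) to guarantee $\ker C = \{0\}$ and hence the density of $\ran(C)$; without this, one would only conclude that the part of $A$ on $(\ker D)^\perp$ is absolutely continuous.
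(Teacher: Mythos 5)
Your argument is correct and is essentially the paper's own proof (given in Remark 2 after Theorem \ref{T14.7}): define $C = D^{1/2}$, differentiate the Heisenberg evolution of $B$ to get $\|Ce^{-itA}\varphi\|^2$ as a total derivative, integrate and bound by $2\|B\|$, and conclude via Theorem \ref{T14.4} and density of $\ran C$. The only cosmetic difference is that you spell out the sign flip for the $B$-case explicitly, where the paper leaves that symmetric half to the reader.
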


\begin{remarks} 1. The result is due to Putnam.  Kato found the really simple proof in the next remark.

2.  The proof is easy.  For let $C$ be the square root of $i[A,B]$. Then $\frac{d}{dt}\jap{e^{-itA}\varphi,Be^{-itA}\varphi}^2 = \norm{Ce^{-itA}\varphi}^2$ so the integral of $\norm{Ce^{-itA}\varphi}^2$ from $s$ to $t$ is bounded by $2\norm{B}\norm{\varphi}$, Thus $C$ is $A$--smooth and $A$ has only a.c. spectrum on the closure of $\ran(C)$ which is all of $\calH$.
\end{remarks}

\begin{example} [Weak coupling $2$--body] \lb{E14.8} In \cite{KatoSm1}, Kato applied smoothness ideas to Schr\"{o}dinger operators.  If $\nu=3$, as we've seen in Example \ref{E14.3}, if $V \in L^{3/2}$ (indeed, if $V$ is Rollnik), then $|V|^{1/2}$ is $-\Delta$--supersmooth, so for small real $\lambda$, the wave operators, $\Omega^\pm(-\Delta+\lambda V, -\Delta)$ exist and are unitary.  On $(0,\infty)$, if $h_0=-\tfrac{d^2}{dx^2}$ with $u(0)=0$, then $(h_0-z)^{-1}$ has an integral kernel dominated by $\min(x,y)$ (see \cite[(7.9.53]{OT}) for all $z \in \bbC \setminus \bbR$, so if $\int_{0}^{\infty} x|V(x)| dx < \infty$, then $|V|^{1/2}$ is {$h_0$--supersmooth} and one knows that for $\lambda$ small, that $\Omega^\pm(h_0+\lambda V,h_0)$ exists and are unitary.

One knows that if $\nu=1$ or $2$ and $V \in C_0^\infty(\bbR^\nu); V \not\equiv 0$, then for all $\lambda \ne 0$, either $-\Delta+\lambda V$ or $-\Delta-\lambda V$ (or both) have a negative energy bound state (\cite{SimonWeakBound}) so there cannot be $-\Delta$--supersmoothness.

By interpolating between $\norm{e^{it\Delta}\varphi}_\infty \le (4\pi t)^{-\nu/2}\norm{\varphi}_1$ and $\norm{e^{it\Delta}\varphi}_2 = \norm{\varphi}_2$, Kato \cite{KatoSm1} showed that if $\nu \ge 4$ and $V \in L^{\nu/2+\epsilon} \cap L^{\nu/2-\epsilon}$, then $|V|^{1/2}$ is $-\Delta$--supersmooth and he conjectured that this held for $\epsilon=0$.  Indeed, the next theorem is true.
\end{example}

\begin{theorem} \lb{T14.9}  Let $\nu \ge 3$ and $V \in L^{\nu/2}(\bbR^\nu)$.  Then $V$ is supersmooth.  In particular, for $|\lambda|$ small and $H = -\Delta+\lambda V, H_0=-\Delta$, we have that $\Omega^\pm(H,H_0)$ exist and are unitary so that $H$ has purely a.c. spectrum.
\end{theorem}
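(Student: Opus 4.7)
The plan is to apply Theorem~\ref{T14.6} with $H_0 = -\Delta$, $A = |V|^{1/2}$ (a self--adjoint multiplication operator, so $A^* = A$), and $C = \lambda\,\mathrm{sgn}(V)$, a bounded self--adjoint multiplication operator of norm $|\lambda|$. Since $A^* C A = \lambda V$, once one establishes the supersmoothness estimate
\begin{equation*}
\|A\|_{H_0,SS}^2 = \sup_{z \in \bbC\setminus\bbR} \||V|^{1/2}(-\Delta - z)^{-1}|V|^{1/2}\|_{L^2 \to L^2} \le C_\nu \|V\|_{\nu/2},
\end{equation*}
Theorem~\ref{T14.6} yields the form sum $H = H_0 + \lambda V$ together with the existence and completeness of $\Omega^\pm(H,H_0)$ whenever $|\lambda|\,C_\nu\|V\|_{\nu/2} < 1$. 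Because $\calH_{ac}(H_0) = L^2$, each $\Omega^\pm$ is an isometry defined on all of $L^2$; completeness then forces $\calH_{ac}(H) = L^2$, so the $\Omega^\pm$ are unitary and $H$ has purely a.c.\ spectrum.

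The heart of the argument is the uniform resolvent bound displayed above. I would deduce it from the uniform Sobolev estimate of Kenig--Ruiz--Sogge: for $\nu \ge 3$ and the dual Sobolev exponents $p = 2\nu/(\nu+2)$, $p' = 2\nu/(\nu-2)$,
\begin{equation*}
\sup_{z \in \bbC\setminus\bbR} \|(-\Delta - z)^{-1}\|_{L^p \to L^{p'}} \le C_\nu.
\end{equation*}
Given this, H\"older's inequality---multiplication by $|V|^{1/2} \in L^\nu$ maps $L^2 \to L^p$ and $L^{p'} \to L^2$, each with norm $\|V\|_{\nu/2}^{1/2}$, since $\tfrac{1}{2} = \tfrac{1}{p} - \tfrac{1}{\nu} = \tfrac{1}{p'} + \tfrac{1}{\nu}$---yields the desired supersmoothness bound by composition. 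The choice of exponents is forced by scale invariance: under the dilation $U_\lambda \varphi(x) = \lambda^{\nu/2}\varphi(\lambda x)$, one has $\|V(\lambda\cdot)\|_{\nu/2} = \lambda^{-2}\|V\|_{\nu/2}$, matching the scaling of $|V|^{1/2}R(z)|V|^{1/2}$ under $z \mapsto \lambda^2 z$.

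The main obstacle is the uniform Sobolev estimate itself. When $\nu = 3$, the free resolvent kernel $(4\pi|x-y|)^{-1}e^{-\kappa|x-y|}$ (with $\mathrm{Re}\,\kappa \ge 0$) is dominated pointwise by the Riesz kernel and the estimate reduces to the Hilbert--Schmidt bound of Example~\ref{E14.3}. For $\nu \ge 4$ this pointwise domination fails: the free resolvent kernel involves $K_{(\nu-2)/2}(\kappa|x-y|)$, whose modulus is not uniformly controlled by the Riesz kernel as $z$ approaches the positive real axis, and the uniform bound then depends on genuine oscillatory cancellation. Here I would invoke the Kenig--Ruiz--Sogge estimate as a black box; equivalently, by a $TT^*$ argument, one may use the endpoint Strichartz inequality
\begin{equation*}
\|e^{it\Delta}\varphi\|_{L^2_t L^{2\nu/(\nu-2)}_x} \le C_\nu \|\varphi\|_{L^2_x}
\end{equation*}
of Keel--Tao, which ultimately rests on Stein--Tomas restriction for the sphere. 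A self--contained proof of these harmonic--analytic ingredients lies well outside the operator--theoretic framework of this section.
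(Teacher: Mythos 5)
The main line of your argument is exactly the right one and matches what the paper intends: feed $A = |V|^{1/2}$, $C = \lambda\,\mathrm{sgn}(V)$ into Theorem~\ref{T14.6}, reduce the supersmoothness estimate by H\"older's inequality (with the scale--invariant exponents $p = 2\nu/(\nu+2)$, $p' = 2\nu/(\nu-2)$) to the Kenig--Ruiz--Sogge uniform Sobolev estimate, and note that for $\nu=3$ this degenerates to the pointwise Rollnik/Hilbert--Schmidt bound of Example~\ref{E14.3} while for $\nu\ge 4$ it requires genuine oscillatory cancellation. The paper does not spell this argument out --- it simply cites Kato--Yajima and Kenig--Ruiz--Sogge --- so your reconstruction of the supersmoothness bound is precisely the missing content, and it is correct in all of its details, including the H\"older bookkeeping and the scaling check.

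The gap is in the last step. You assert that ``completeness then forces $\calH_{ac}(H) = L^2$,'' but completeness only says $\ran\Omega^\pm = \calH_{ac}(H)$; it does not preclude $\calH_{ac}(H)$ from being a proper subspace. Indeed, for a Schwartz potential with a bound state the wave operators exist and are complete while $\Omega^\pm$ are non--surjective isometries onto $L^2 \ominus \{\text{bound states}\}$. What actually forces $\calH_{ac}(H) = L^2$ is the further conclusion of Theorem~\ref{T14.6} that $A = |V|^{1/2}$ is also \emph{$H$}--supersmooth, hence $H$--smooth. By Theorem~\ref{T14.4}, $\overline{\ran(|V|^{1/2})} \subset \calH_{ac}(H)$, so $\calH_{\mathrm{sing}}(H) \subset \ker|V|^{1/2}$. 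On that kernel $V$ vanishes, so $H$ agrees with $-\Delta$ on the reducing subspace $\calH_{\mathrm{sing}}(H)$; since $-\Delta$ has purely absolutely continuous spectrum, this forces $\calH_{\mathrm{sing}}(H) = \{0\}$. (One can equivalently rule out eigenvalues directly by the Birman--Schwinger principle using the supersmoothness bound $\|\lambda\,|V|^{1/2}(H_0-z)^{-1}|V|^{1/2}\| \le \alpha < 1$ uniformly in $z$.) With this extra step, your proof is complete.
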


\begin{remarks}  1.  This result appeared in Kato--Yajima \cite{KY}.  As they added in a ``Note added in proof'', shortly before their paper, Kenig--Ruiz--Sogge \cite{KRS} proved estimates that imply Theorem \ref{T14.9}.

2.  In \cite{IorioOC}, Iorio--O'Carroll used supersmoothness to show $N$--body systems with weak coupling (and $\nu \ge 3$) have unitary wave operators (so no bound states, no non--trivial scattering channels and purely a.c. spectrum).  They required that the two body potentials lie in $L^{\nu/2+\epsilon} \cap L^{\nu/2-\epsilon}$, but given Theorem \ref{T14.9}, their method works for two body potentials in $L^{\nu/2}$.
\end{remarks}

Kato--Yajima \cite{KY} also proved that $(1+|x|^2)^{-1/2}(1-\Delta)^{1/4}$ is $-\Delta$--supersmooth (which says something about $V(x) = |x|^{-2}$ on $L^2(\bbR^\nu); \nu \ge 3$).  Further developments are due to Ben--Artzi--Klainerman \cite{BAKSmooth} and Simon \cite{SimonSmooth}.  In particular, Simon obtained optimal constants in the associated smoothness estimates; for $\nu \ge 3$
\begin{equation}\label{14.15A}
  \int_{-\infty}^{\infty} \norm{(x^2+1)^{-1/2}(-\Delta)^{1/4}e^{it\Delta}\varphi}^2\,dt \le \frac{\pi}{2} \norm{\varphi}^2
\end{equation}
\begin{equation}\label{14.15B}
  \int_{-\infty}^{\infty} \norm{|x|^{-1} e^{it\Delta}\varphi}^2\,dt \le \frac{\pi}{\nu-2} \norm{\varphi}^2
\end{equation}

Next, having completed our discussion of Kato's contributions to smoothness, we turn some applications beginning with repulsive potentials.  In this (and other) regards, it is useful to have the notion of local smoothness due to Lavine \cite{Lav3}.  Let $\Omega \subset \bbR$ be a bounded Borel set.  We say that \emph{$A$ is locally $H$--smooth on $\Omega$} if $AP_\Omega(H)$ is $H$--smooth (where $P_X(H)$ is a spectral projection for $H$ and set $X$ \cite[Section 5.1]{OT}).  It is easy to see \cite[Theorem XIII.30]{RS4} that if $A$ is an operator with $D(H) \subset D(A)$ and either $\sup_{0 < \pm\epsilon < 1; \lambda \in \Omega} \epsilon\, \norm{AR(\lambda+i\epsilon)}< \infty$ or $\sup_{0 < \epsilon < 1; \lambda \in \Omega} \norm{AR(\lambda+i\epsilon)A^*} < \infty$, then $A$ is locally $H$--smooth on $\Omega$.  It is also obvious that if $\ran(A^*)$ is dense, then, if $A$ is locally $H$--smooth, $H \restriction \ran\, P_\Omega(H)$ is purely absolutely continuous.  The following is what makes local $H$--smoothness so useful:

\begin{theorem} [Lavine \cite{Lav3}] \lb{T14.10} Let $H$ and $H_0$ be self--adjoint and $\Omega \subset \bbR$ a bounded open set.  Suppose that $H=H_0+A^*B$ where $B$ is $H_0$--bounded and locally $H_0$--smooth on $\Omega$ and $A$ is $H$--bounded and locally $H$--smooth on $\Omega$.  Then
\begin{equation}\label{14.16}
  \Omega^\pm(H,H_0;P_\Omega(H_0)) = {\textrm{s--}\lim}_{t \to \mp \infty} e^{itH}e^{-itH_0} P_\Omega(H_0)
\end{equation}
exist and have range $P_\Omega(H)$.
\end{theorem}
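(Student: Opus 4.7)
The approach is to adapt the time-dependent Cook-style argument of Theorem~\ref{T14.5} to accommodate the local (rather than global) smoothness hypotheses. Write $W(t) = e^{itH}e^{-itH_0}P_\Omega(H_0)$. By regularity of the spectral measure, vectors of the form $\varphi = P_{[a,b]}(H_0)\varphi_0$, with $[a,b]\subset\Omega$ compact and $\varphi_0\in\calH$, span a dense subspace of $\ran P_\Omega(H_0)$, so it suffices to prove that $\{W(t)\varphi\}$ is Cauchy as $t\to\pm\infty$ for such $\varphi$. The Duhamel identity gives, for all $\chi\in\calH$,
\begin{equation}
\jap{\chi,(W(t)-W(s))\varphi} = i\int_s^t\jap{A e^{-iuH}\chi,\,B e^{-iuH_0}\varphi}\,du,
\end{equation}
and since $\varphi\in\ran P_\Omega(H_0)$, local $H_0$-smoothness of $B$ combined with the equivalence \eqref{14.5} yields
\begin{equation}
\int_{-\infty}^{\infty}\norm{Be^{-iuH_0}\varphi}^{2}\,du\le 2\pi\,\norm{BP_\Omega(H_0)}_{H_0}^{2}\norm{\varphi}^{2}<\infty,
\end{equation}
so the tail of this integral vanishes as $s,t\to\pm\infty$.

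Decompose $\chi=P_\Omega(H)\chi+(1-P_\Omega(H))\chi\equiv\chi_1+\chi_2$. The $\chi_1$-contribution is easy: since $Ae^{-iuH}\chi_1=AP_\Omega(H)e^{-iuH}\chi$ and $AP_\Omega(H)$ is $H$-smooth, Cauchy--Schwarz in the Duhamel integral produces
\begin{equation}
|\jap{\chi_1,(W(t)-W(s))\varphi}|\le\sqrt{2\pi}\,\norm{AP_\Omega(H)}_H\,\norm{\chi}\left(\int_s^t\norm{Be^{-iuH_0}\varphi}^{2}\,du\right)^{1/2}\to 0
\end{equation}
uniformly in $\norm{\chi}\le 1$. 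The main obstacle is the $\chi_2$-piece, for which no smoothness of $A$ is available. Rewriting $\jap{\chi_2,(W(t)-W(s))\varphi}=\jap{\chi,(1-P_\Omega(H))(W(t)-W(s))\varphi}$ and using that $P_{[a,b]}(H_0)$ commutes with $e^{-itH_0}$, matters reduce to showing that
\begin{equation}
K_t\equiv(1-P_\Omega(H))\,P_{[a,b]}(H_0)\,e^{-itH_0}\varphi_0
\end{equation}
has a strong limit (zero on the continuous subspace) as $t\to\pm\infty$. Choosing $g\in C_c^\infty(\Omega)$ with $g\equiv 1$ on $[a,b]$, one has $g(H_0)P_{[a,b]}(H_0)=P_{[a,b]}(H_0)$ and $(1-P_\Omega(H))g(H)=0$, so
\begin{equation}
(1-P_\Omega(H))P_{[a,b]}(H_0)=(1-P_\Omega(H))\bigl[g(H_0)-g(H)\bigr]P_{[a,b]}(H_0).
\end{equation}
Expanding $g(H_0)-g(H)$ via the Helffer--Sj\"ostrand formula and the second resolvent identity produces a contour integral of $(H_0-z)^{-1}A^*B(H-z)^{-1}$; the resolvent form \eqref{14.7} of the two local-smoothness hypotheses (applied to $A$ on $H$ and $B$ on $H_0$) then controls this composition and delivers the required decay of $K_t$. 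On the pure-point part of $\ran P_\Omega(H_0)$, local smoothness in fact forces $B$ to annihilate any eigenvector of $H_0$ with eigenvalue in $\Omega$ (by the argument behind Theorem~\ref{T14.4}), so that $(H-H_0)\varphi=A^*B\varphi=0$ and the evolution is trivial; this removes any apparent obstruction from embedded point spectrum.

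Granted existence of $\Omega^{\pm}(H,H_0;P_\Omega(H_0))$, the inclusion $\ran\,\Omega^{\pm}\subset\ran P_\Omega(H)$ is immediate from the intertwining identity $f(H)\Omega^{\pm}=\Omega^{\pm}f(H_0)$ applied with $f=1-\chi_\Omega$, together with $(1-\chi_\Omega(H_0))P_\Omega(H_0)=0$. For the reverse inclusion, the hypotheses are symmetric under the interchange $(H,A)\leftrightarrow(H_0,B)$, so the entire argument above also produces $\Omega^{\pm}(H_0,H;P_\Omega(H))$; the reciprocity identity behind Theorem~\ref{T13.1} (in the form $\psi=\lim e^{itH}e^{-itH_0}\varphi\iff\varphi=\lim e^{itH_0}e^{-itH}\psi$) then shows that these wave operators are mutually inverse partial isometries between $\ran P_\Omega(H_0)$ and $\ran P_\Omega(H)$, proving surjectivity. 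The decisive step---and the real content of Lavine's theorem---is the control of the ``spectral leakage'' $K_t$, where both local-smoothness hypotheses must cooperate through resolvent estimates on the perturbation $A^*B$; the rest is routine adaptation of the smoothness machinery already developed by Kato.
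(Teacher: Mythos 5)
Your decomposition, the identity $(1-P_\Omega(H))P_{[a,b]}(H_0)=(1-P_\Omega(H))\bigl[g(H_0)-g(H)\bigr]P_{[a,b]}(H_0)$, and the reduction to showing $K_t\to 0$ all match the skeleton of the paper's proof (Remarks 2--4 after the statement), and Helffer--Sj\"ostrand is a perfectly acceptable replacement for the paper's Stone--Weierstrass bootstrap from resolvents to general $f$ with $1-f$ compactly supported. But the final sentence, ``the resolvent form \eqref{14.7} of the two local-smoothness hypotheses \ldots\ controls this composition and delivers the required decay of $K_t$,'' skips the step that actually does the work. After the resolvent identity, what you must show (for fixed $z$ off the real axis) is that $Be^{-itH_0}(H_0-z)^{-1}P_{[a,b]}(H_0)\varphi_0 \to 0$ strongly as $t\to\pm\infty$; the bounded factor $(H-z)^{-1}A^*=[A(H-\bar z)^{-1}]^*$ then carries it along, and one passes the limit through the $dm(z)$ integral by dominated convergence. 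The boundary resolvent bounds in \eqref{14.7}--\eqref{14.8} only give you that $t\mapsto Be^{-itH_0}P_\Omega(H_0)\psi$ lies in $L^2(\bbR,\calH)$ --- which does not imply pointwise decay. The missing idea is that its $t$-derivative, $-iBH_0e^{-itH_0}P_\Omega(H_0)\psi$, is \emph{also} in $L^2(\bbR,\calH)$ --- because $H_0P_\Omega(H_0)\psi$ is again a vector in $\ran P_\Omega(H_0)$ ($\Omega$ is bounded, so $H_0$ is bounded on that range), and local $H_0$-smoothness applies to it as well. Membership in $H^1(\bbR,\calH)\subset C_0(\bbR,\calH)$ is what forces $Be^{-itH_0}P_\Omega(H_0)\psi\to 0$. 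Without this Sobolev observation your argument stalls: resolvent estimates alone, however uniform in $\mathrm{Im}\,z$, do not produce time decay.

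Two smaller points. First, the ordering in your resolvent identity should be $(H-z)^{-1}A^*B(H_0-z)^{-1}$, not $(H_0-z)^{-1}A^*B(H-z)^{-1}$: you need $B$ adjacent to the $H_0$-resolvent and the $e^{-itH_0}$ dynamics, since that is where local $H_0$-smoothness is available. Second, your remark that local smoothness forces $B$ to annihilate $H_0$-eigenvectors with eigenvalue in $\Omega$ is correct and worth noting, but it does not substitute for the $H^1\subset C_0$ argument --- the latter handles the whole of $\ran P_\Omega(H_0)$ (including any singular continuous piece) in one stroke, rather than requiring a spectral-type case split.
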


\begin{remarks} 1.  For complete proofs, see \cite{Lav3} or \cite[Theorem XIII.31]{RS4}.

2. The same proof as Theorem \ref{T14.5} shows that ${\textrm{s--}\lim}_{t \to \mp \infty} P_\Omega(H) e^{itH}e^{-itH_0} P_\Omega(H_0)$ exists.

3. Since $B e^{-itH_0} P_\Omega(H_0)(H_0-z)^{-1}\varphi$ is in $L^2$ with an $L^2$ derivative, we conclude that for any $z \in \bbC\setminus\bbR$
\begin{equation*}
  {\textrm{s--}\lim}_{t \to \mp \infty} Be^{-itH_0} P_\Omega(H_0)(H_0-z)^{-1} = 0
\end{equation*}

4.  Writing $(H-z)^{-1}-(H_0-z)^{-1} = \left[A(H-\bar{z})^{-1}\right]^*B(H_0-z)^{-1}$ and using the assumed boundedness of $A(H-\bar{z})^{-1}$, we conclude by remark 3 that ${\textrm{s--}\lim}_{t \to \mp \infty} \left[(H-z)^{-1}-(H_0-z)^{-1}\right] e^{-itH_0} P_\Omega(H_0)=0$ and then by the Stone--Weierstrass gavotte \cite[Appendix to Chapter 3]{CFKS} that ${\textrm{s--}\lim}_{t \to \mp \infty} \left[f(H)-f(H_0)\right] e^{-itH_0} P_\Omega(H_0)=0$ for any continuous function, f, so that $1-f$ has compact support.  Using this, one sees if $I \subset \Omega$ is a compact set with $\dist(I,\bbR\setminus\Omega) > 0$, then $\textrm{s--}\lim_{t \to \mp \infty} P_{\bbR\setminus\Omega} e^{itH}e^{-itH_0} P_I(H_0) = 0$.  This implies that the limits in \eqref{14.16} exist and that $\ran\,\Omega^\pm(H,H_0;P_\Omega(H_0)) \subset \ran\, P_\Omega(H)$.  This plus symmetry between $H$ and $H_0$ plus the idea behind Theorem \ref{13.1} imply that $\ran\,\Omega^\pm(H,H_0;P_\Omega(H_0)) = \ran\, P_\Omega(H)$.
\end{remarks}

A potential, $V$, on $\bbR^\nu$ is called \emph{repulsive} if and only if $\mathbf{x}\cdot\mathbf{\nabla}V \le 0$ (e.g. $V(x) = (1+|x|)^{-\alpha}$, any $\alpha >0$).  If $V(x) \to 0$ at infinity, then $V(x) \ge 0$.  If ${A=\tfrac{i}{2}(\mathbf{x}\cdot\mathbf{\nabla}+\mathbf{\nabla}\cdot\mathbf{x})}$ is the generator of dilations and $V$ is repulsive, then $i[A,H_0+V]=2H_0- \mathbf{x}\cdot\mathbf{\nabla}V \ge 0$.  One cannot use the Putnam--Kato theorem since neither $A$ nor $H$ is bounded.  If you look at the above proof of the Putnam--Kato theorem, that $H$ is unbounded isn't a problem if our goal is to find a $C$ which is $H$--smooth.  But the unbounded $A$ is.  Lavine's idea was to cutoff $\mathbf{x}$ in the definition of $A$ and get an $\tilde{A}$ which is $H$--bounded and so that $i[\tilde{A},H] \ge c(1+|x|^2)^{-\beta}$ for suitable $\beta$ and as in the Putnam--Kato argument, get that ${(1+|x|^2)^{-\beta/2}(H+1)^{-1}}$ is $H$--smooth.  In this way (he used local smoothness to get wave operators), Lavine proved

\begin{theorem} [Lavine \cite{Lav1, Lav2, Lav3, Lav4}] \lb{T14.11}  Let $H$ be an $N$--body Hamiltonian with center of mass removed on $L^2(\bbR^{(N-1)\nu})$ whose two body potentials $V_{ij}$ lie in $L^p(\bbR^\nu)+L^\infty(\bbR^\nu)$ (with $p$ $\nu$--canonical) and are repulsive.  Then $H$ has purely absolutely continuous spectrum.  If moreover, for some $\beta > 5/2$, we have that $|V_{ij}(x)| \le C(1+|x|)^{-\beta}$, then $\Omega^\pm(H,H_0)$ exist and are complete.
\end{theorem}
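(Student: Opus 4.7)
My plan is to follow the strategy Lavine indicated: build a conjugate operator from the dilation generator, exploit repulsivity to get a positive commutator, then pass from that positive commutator to local $H$-smoothness of a weighted resolvent, and finally apply the local smoothness machinery of Theorem \ref{T14.10}. On the $(N-1)\nu$-dimensional space of relative coordinates $\mathbf{x}$, let $A_0 = \tfrac{i}{2}(\mathbf{x}\cdot\nabla+\nabla\cdot\mathbf{x})$ be the generator of total dilations. Under this scaling each $V_{ij}(x_i-x_j)$ transforms so that $i[A_0,V_{ij}] = -(x_i-x_j)\cdot(\nabla V_{ij})(x_i-x_j) \ge 0$ by repulsivity, while $i[A_0,H_0]=2H_0$. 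Thus formally
\begin{equation*}
  i[A_0,H] \;=\; 2H_0 - \sum_{i<j}\mathbf{x}_{ij}\cdot(\nabla V_{ij})(\mathbf{x}_{ij}) \;\ge\; 2H_0 \;\ge\; 0.
\end{equation*}

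Since $A_0$ is unbounded and not $H$-bounded, I would follow Lavine and replace it by $\tilde A = \tfrac{i}{2}(F(\mathbf{x})\cdot\nabla+\nabla\cdot F(\mathbf{x}))$ with $F(\mathbf{x})=\mathbf{x}\,g(|\mathbf{x}|^2)$, $g$ a smooth positive function equal to $1$ near the origin and decaying like $(1+|\mathbf{x}|^2)^{-\gamma/2}$ at infinity for a suitable $\gamma>0$. Then $\tilde A$ is symmetric and $H$-bounded (essentially a first-order operator with coefficients of lower order than $\nabla$), and a direct computation shows
\begin{equation*}
  i[\tilde A,H] \;=\; 2\,\nabla^{\!*} G(\mathbf{x})\nabla \;+\; W(\mathbf{x}) \;+\; R,
\end{equation*}
where $G$ is a positive matrix-valued function bounded below by $c(1+|\mathbf{x}|^2)^{-\gamma}$ on a ball-complement, $W=-\sum F\cdot\nabla V_{ij}\ge 0$ by repulsivity, and $R$ is a manifestly lower-order remainder coming from derivatives of $g$. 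Using the repulsivity inequality together with the fact that $H\ge H_0\ge 0$, the standard Putnam–Kato computation (integrating $\tfrac{d}{dt}\langle e^{-itH}\varphi,\tilde A\,e^{-itH}\varphi\rangle$) then yields that $(1+|\mathbf{x}|^2)^{-\gamma/2}(H+1)^{-1/2}$ is locally $H$-smooth on every compact subset of $(0,\infty)$. Since its range is dense, we conclude $H$ has purely absolutely continuous spectrum on $(0,\infty)$; combined with the HVZ theorem (all subcluster Hamiltonians inherit repulsivity, hence have only $[0,\infty)$ spectrum, so $\Sigma=0$), this shows $H$ is purely a.c.

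For the second statement with $|V_{ij}(\mathbf{x}_{ij})|\le C(1+|\mathbf{x}_{ij}|)^{-\beta}$, $\beta>5/2$, I would factor $V_{ij}=A_{ij}^*B_{ij}$ with $|A_{ij}|,|B_{ij}|\le C'(1+|\mathbf{x}_{ij}|)^{-\beta/2}$. Choosing $\gamma<\beta$ above, the local $H$- and $H_0$-smoothness of the weight $(1+|\mathbf{x}|^2)^{-\gamma/2}$ translates, after including the $(H+1)^{-1/2}$ and multiplying by bounded operators, into local smoothness of each $A_{ij}$ and $B_{ij}$ on every compact $\Omega\subset(0,\infty)\setminus\{\text{small neighborhood of }0\}$. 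Theorem \ref{T14.10} then gives $\Omega^{\pm}(H,H_0;P_\Omega(H_0))$ exist with range $P_\Omega(H)$. To bootstrap these partial wave operators to the full $\Omega^\pm(H,H_0)$, I would use Cook's method on the complementary energy region: the decay $\beta>5/2$ is sufficient (and indeed much more than sufficient for Cook in dimension $\nu\ge 1$) to prove existence of $\Omega^\pm(H,H_0)P_{[0,\epsilon]}(H_0)$ and of $\Omega^\pm(H,H_0)(1-P_{[0,R]}(H_0))$ as $\epsilon\downarrow 0$ and $R\uparrow\infty$, and then patch.

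The main obstacle, in my view, is the low-energy/threshold region near $0$: the commutator estimate $i[\tilde A,H]\ge 2H_0+W$ degenerates as the spectral parameter approaches $0$ (the bottom of the continuous spectrum and the unique threshold), so the Putnam–Kato bound on $\norm{(1+|\mathbf{x}|^2)^{-\gamma/2}(H+1)^{-1/2}P_\Omega(H)}_H$ blows up as $\Omega\downarrow\{0\}$. This is precisely what forces the quantitative decay assumption $\beta>5/2$ for the wave operator statement: one needs enough decay of $V_{ij}$ so that the Cook-type tail estimate near the threshold compensates for the worsening smoothness bound. Technically, one must estimate an integral like $\int_0^\infty \|(1+|\mathbf{x}|)^{-\beta/2} e^{-itH_0}\varphi\|\,dt$ on low-energy spectral subspaces, and optimizing this against the smoothness bound is where the exponent $5/2$ appears. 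Checking that the weighted resolvent estimates extend (with controlled blow-up) down to the threshold, and that the constant in the Cook estimate is uniform in the energy cutoff, is the delicate part of the argument.
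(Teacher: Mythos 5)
Your overall strategy --- conjugate operator, cutoff dilation generator, commutator positivity via repulsivity, local smoothness, Theorem~\ref{T14.10} --- is exactly the one the paper attributes to Lavine.  But there are two concrete gaps in your execution.

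First, the heart of Lavine's argument is not reached.  You write $i[\tilde A, H] = 2\nabla^* G\nabla + W + R$ and dismiss $R$ as a ``manifestly lower-order remainder.''  Lower-order does not mean sign-definite, and $R$ in general has both signs.  The whole point of Lavine's construction is to choose the cutoff $g$ and exploit repulsivity so carefully that one obtains a \emph{multiplication-operator} lower bound of the form $i[\tilde A, H]\ge c(1+|x|^2)^{-\gamma}$ despite the sign-indefinite remainder.  That is the step requiring real work, and it is the one you skip.  Without it, neither the smoothness conclusion nor the purely a.c.\ statement follows.  Moreover, because the weight $(1+|x|^2)^{-\gamma}$ is a strictly positive multiplication operator, the Putnam--Kato computation then gives \emph{global} $H$-smoothness of $(1+|x|^2)^{-\gamma/2}(H+1)^{-1}$ --- there is no degeneracy at the threshold.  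The ``threshold blow-up'' you worry about is an artifact of stopping the estimate at the intermediate form $2\nabla^*G\nabla + W$, where the second-order term indeed vanishes on slowly varying states; once one has the strictly positive multiplication-operator bound, this problem simply does not exist.

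Second, the ``patching'' step in your wave-operator argument is both unnecessary and incorrect as stated.  Applying Cook's method to the low-energy subspace $P_{[0,\epsilon]}(H_0)\varphi$ is not easy: low-energy free states spread slowly, the integrability $\int_0^\infty \norm{Ve^{-itH_0}\varphi}\,dt < \infty$ need not hold uniformly in the energy cutoff, and there is no clean way to ``patch'' two asymptotic limits defined on overlapping spectral subspaces.  The correct route is elementary: once Theorem~\ref{T14.10} gives $\Omega^\pm(H,H_0;P_{\Omega_n}(H_0))$ for an exhausting sequence of compacts $\Omega_n\subset(0,\infty)$, the existence of the full wave operator follows from $\norm{e^{itH}e^{-itH_0}}=1$, the strong convergence $P_{\Omega_n}(H_0)\to\bdone$, and a three-$\epsilon$ argument (for each $\varphi$, pick $n$ large so $\norm{(1-P_{\Omega_n})\varphi}$ is small, then take $t$ large).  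Completeness follows similarly since the ranges exhaust $P_{(0,\infty)}(H)=P_{ac}(H)=\bdone$ once you note $0$ cannot be an eigenvalue (if $H\psi=0$ then $\langle\psi,H_0\psi\rangle=0$ since $V\ge 0$, so $\psi=0$).  You should also not read too much into the exponent $5/2$: the paper's remark after the theorem states explicitly that it is an artifact of Lavine's proof (and is improved to $\beta>1$ in the spherically symmetric case), so the narrative that the threshold behavior ``forces'' $\beta>5/2$ is not supported.
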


\begin{remark}  $5/2$ is an artifact of the proof and when the $V_{ij}$ are spherically symmetric, it has been improved to $\beta > 1$ in Lavine \cite{Lav4}.
\end{remark}

Our final major topic concerns ideas of Vakulenko \cite{Vaku}; the reader should first look at the discussion around equation \eqref{12.6} for definitions of Vakulenko bounding function and Vakulenko potential.

\begin{lemma} [Vakulenko \cite{Vaku}] \lb{L14.12} Let $H$ be self--adjoint and $A$ a closed $H$--bounded operator.  Let $[a,b]$ be a bounded closed interval in $\bbR$ and $B$ a closed operator with $D(H) \subset D(B)$ so that for all $\varphi \in D(H)$ and $\lambda \in [a,b]$, we have that
\begin{equation}\label{14.17}
  \textrm{Re} \jap{(H-\lambda)\varphi,A\varphi} \ge \norm{B\varphi}^2
\end{equation}
Then $B$ is $H$--smooth on $[a,b]$.
\end{lemma}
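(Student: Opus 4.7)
The plan is to verify local $H$-smoothness of $B$ on $[a,b]$ directly by checking the spectral-projection characterization \eqref{14.6} for the closed operator $T := BP_{[a,b]}(H)$, which is well defined since $\ran P_{[a,b]}(H)\subset D(H)\subset D(B)$. Because the formal adjoint satisfies $T^*\supset P_{[a,b]}(H)B^*$, the quantity to control reduces to
\begin{equation*}
\sup_{\|\varphi\|=1,\,\varphi\in D(B^*),\,\alpha<\beta}\frac{\|P_{(\alpha,\beta)\cap[a,b]}(H)B^*\varphi\|^2}{\beta-\alpha}.
\end{equation*}
Trivially one may restrict attention to $(\alpha,\beta)\subset[a,b]$, since $P_{(\alpha,\beta)}T^*=P_{(\alpha,\beta)\cap[a,b]}B^*$ and shrinking the interval only improves the ratio.

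First I would establish the core estimate: for any $\eta\in\ran P_{(\alpha,\beta)}(H)\subset D(H)$, apply the hypothesis with $\lambda=(\alpha+\beta)/2\in[a,b]$ and use Cauchy--Schwarz to get
\begin{equation*}
\|B\eta\|^2\le \Real\jap{(H-\lambda)\eta,A\eta}\le \|(H-\lambda)\eta\|\,\|A\eta\|.
\end{equation*}
The spectral theorem bounds $\|(H-\lambda)\eta\|\le\tfrac{\beta-\alpha}{2}\|\eta\|$ and $\|H\eta\|\le M\|\eta\|$ with $M=\max(|a|,|b|)$. Since $A$ is $H$-bounded, $\|A\eta\|\le c_1\|H\eta\|+c_2\|\eta\|\le K\|\eta\|$ for a constant $K=K(a,b,c_1,c_2)$. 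Combining these yields the key bound
\begin{equation*}
\|BP_{(\alpha,\beta)}(H)\eta\|^2\le \tfrac{K}{2}(\beta-\alpha)\|\eta\|^2,\qquad (\alpha,\beta)\subset[a,b].
\end{equation*}

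Next I would use a duality step to swap $B$ for $B^*$. Put $\eta:=P_{(\alpha,\beta)}(H)B^*\varphi$; since $\eta\in\ran P_{(\alpha,\beta)}(H)\subset D(H)\subset D(B)$, the pairing
\begin{equation*}
\|\eta\|^2=\jap{P_{(\alpha,\beta)}(H)\eta,B^*\varphi}=\jap{BP_{(\alpha,\beta)}(H)\eta,\varphi}\le \|BP_{(\alpha,\beta)}(H)\eta\|\,\|\varphi\|
\end{equation*}
is legitimate. Substituting the core estimate and dividing by $\|\eta\|$ gives $\|\eta\|\le\sqrt{\tfrac{K}{2}(\beta-\alpha)}\,\|\varphi\|$, hence $\|P_{(\alpha,\beta)}(H)B^*\varphi\|^2\le \tfrac{K}{2}(\beta-\alpha)\|\varphi\|^2$, which is precisely \eqref{14.6} for $T$ with constant $K/2$. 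By Theorem \ref{T14.1}, $T=BP_{[a,b]}(H)$ is $H$-smooth, i.e.\ $B$ is locally $H$-smooth on $[a,b]$.

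The main obstacle is bookkeeping rather than analysis: one must keep track of domain inclusions so that the $H$-boundedness of $A$ can be applied to spectral vectors and so that the adjoint identity $\jap{P\eta,B^*\varphi}=\jap{BP\eta,\varphi}$ is justified (which works because $P\eta\in D(H)\subset D(B)$). A secondary point is extending the estimate from $\varphi\in D(B^*)$ to all unit vectors, which is automatic because $D(B^*)$ is dense (as $B$ is closed and densely defined in view of $D(H)\subset D(B)$) and both sides of the bound are continuous in $\varphi$.
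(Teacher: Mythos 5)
Your proof is correct, and it takes a genuinely different route from the paper's. The paper substitutes $\varphi=R(\lambda+i\epsilon)\psi$ (with $\lambda\in[a,b]$ real and $\epsilon>0$) into \eqref{14.17}, applies Cauchy--Schwarz, and bounds the two factors by the elementary operator estimates $\norm{(H-\lambda)R(\lambda+i\epsilon)}\le 1$ and $\norm{AR(\lambda+i\epsilon)}\le C\epsilon^{-1}$ (the latter from relative boundedness); the outcome $\norm{BR(\lambda+i\epsilon)\psi}^2\le C\epsilon^{-1}\norm{\psi}^2$ is precisely the resolvent criterion for local $H$-smoothness given in the discussion just before Theorem \ref{T14.10}. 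You instead substitute spectral vectors $\eta\in\ran P_{(\alpha,\beta)}(H)$ with $\lambda$ the midpoint of the interval, bound the same two factors via the functional calculus, and after a short duality step verify the spectral-measure criterion \eqref{14.6} directly. Both arguments use the hypothesis in the identical single Cauchy--Schwarz move; the paper's version is a bit leaner (no duality, no mention of $B^*$), while yours stays entirely on the real spectral axis with no $\epsilon\downarrow 0$ limits and has the pleasant byproduct that your core estimate at $(\alpha,\beta)=(a,b)$ immediately shows $T=BP_{[a,b]}(H)$ is a bounded operator (norm at most $\sqrt{K(b-a)/2}$), which neatly dissolves the domain bookkeeping about $T^*$ that you flagged as the main obstacle.
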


\begin{remarks} 1. As a preliminary, we note that since $\frac{|x-\lambda|}{|x-(\lambda+i\epsilon)|} \le 1$, we have that
\begin{equation}\label{14.18}
  \norm{(H-\lambda)R(\lambda+i\epsilon)} \le 1
\end{equation}

2. As a second preliminary, if
\begin{equation}\label{14.19}
  \norm{A\varphi} \le \alpha \norm{H\varphi} + \beta\norm{\varphi}
\end{equation}
then
\begin{align}
      \norm{AR(\lambda+i\epsilon)\psi}  &\le \alpha\norm{[(H-\lambda)+\lambda]R(\lambda+i\epsilon)\psi} + \beta\norm{R(\lambda+i\epsilon)\psi} \nonumber \\
                                        &\le (\alpha+\alpha|\lambda|\epsilon^{-1}+\beta\epsilon^{-1})\norm{\psi} \lb{14.20}
\end{align}

3.  Letting $\varphi=R(\lambda+i\epsilon)\psi$ in \eqref{14.17}, we see that
\begin{align}
  \norm{BR(\lambda+i\epsilon)\psi}^2 &\le \norm{(H-\lambda)R(\lambda+i\epsilon)}\norm{AR(\lambda+i\epsilon)}\norm{\psi}^2 \nonumber \\
                                     &\le C\epsilon^{-1} \norm{\psi}^2 \lb{14.21}
\end{align}
(by \eqref{14.18}/\eqref{14.20}) for $0 < \epsilon < 1$ and all $\lambda \in [a,b]$ where $C$ is a constant depending on $\alpha, \beta, a$ and $b$. This implies local smoothness by the discussion prior to Theorem \ref{T14.10}.

4.  Vakulenko's $A$ is close to $i$ times a cutoff dilation generator, so the left side of \eqref{14.17} is like an expectation of a commutator and thus this is a variant of a Mourre estimate but unlike the Mourre estimate, there is no (compact) error term.
\end{remarks}

In Theorem \ref{T12.2}, we stated a bound of the form \eqref{14.17} which immediately implies (given the lemma)

\begin{theorem} [Vakulenko \cite{Vaku}] Let $V(x)$ be a Vakulenko potential with \eqref{12.7} for some Vakulenko bounding function $\eta$.  Then $\sqrt{\eta}$ is $-\Delta+V$ locally smooth on $(0,\infty)$.  In particular, the spectrum of $-\Delta+V$ is purely absolutely continuous on $(0,\infty)$ and the wave operators exist and are complete.
\end{theorem}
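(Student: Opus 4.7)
The proof proposal is to combine Theorem 12.2 (Vakulenko's coercive estimate) with Lemma 14.12 and Lavine's local-smoothness machinery (Theorem 14.10). The key point is that all the heavy lifting has been done: the proof is really a short chain of implications once the ingredients are aligned.

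First I would fix a compact interval $[a,b] \subset (0,\infty)$ and apply Theorem 12.2 to produce an $H$-bounded operator $A$ for which $\mathrm{Re}\,\langle (H-\lambda)\varphi, A\varphi\rangle \ge \norm{\sqrt{\eta}\,\varphi}^2$ holds for all $\lambda \in [a,b]$ and $\varphi \in D(H)$. Lemma 14.12, applied with $B$ equal to multiplication by $\sqrt{\eta}$, then gives that $\sqrt{\eta}$ is locally $H$-smooth on $[a,b]$. Repeating this for $V \equiv 0$ (any Vakulenko bounding function trivially dominates $0$) gives that $\sqrt{\eta}$ is locally $H_0$-smooth on $[a,b]$ as well, where $H_0 = -\Delta$. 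Since $\eta$ is strictly positive everywhere, $\sqrt{\eta}$ is injective, so by the local version of Theorem 14.4 (equivalently, by the estimate $\nu(I) \le C\,|I|$ for the spectral measure of $\sqrt{\eta}\,\varphi$ derived from \eqref{14.6}), $\mathrm{ran}\,P_{[a,b]}(H) \subset \calH_{ac}(H)$. Taking a countable union of such intervals exhausting $(0,\infty)$ gives pure absolutely continuous spectrum there.

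For the wave operators, I would factor $H - H_0 = V = A^* B$ with $A = \mathrm{sgn}(V)\sqrt{|V|}$ and $B = \sqrt{|V|}$. Since $|V(x)| \le \eta(|x|)$ pointwise, the inequality $\norm{\sqrt{|V|}\,\psi}_2 \le \norm{\sqrt{\eta}\,\psi}_2$ propagates through any of the equivalent expressions in Theorem 14.1, so local $H$-smoothness of $\sqrt{\eta}$ transfers to local $H$-smoothness of $\sqrt{|V|}$ (hence of $A$), and likewise for $H_0$ (hence for $B$). Theorem 14.10 then produces the local wave operators $\Omega^\pm(H,H_0;P_{[a,b]}(H_0))$ with range $\mathrm{ran}\,P_{[a,b]}(H)$. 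Letting $[a,b] \uparrow (0,\infty)$ and using that $P_{(0,\infty)}(H_0) = P_{ac}(H_0)$ (since $\sigma(H_0) = [0,\infty)$ is purely absolutely continuous), the standard monotone convergence of spectral projections together with the already-established purity of the a.c.\ spectrum of $H$ on $(0,\infty)$ yields existence and completeness of $\Omega^\pm(H,H_0)$.

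The only subtle step is verifying that $\sqrt{\eta}$ (and $\sqrt{|V|}$) genuinely satisfy the domain hypothesis $D(H) \subset D(\sqrt{\eta})$ required by Lemma 14.12; this follows from the relative $H$-boundedness embedded in Theorem 12.2, since \eqref{14.17} forces finiteness of $\norm{B\varphi}$ for $\varphi \in D(H)$. The main conceptual obstacle would have been the hard analytic work of establishing the commutator-type bound \eqref{12.9} for Vakulenko potentials, but that is precisely what Theorem 12.2 (already at our disposal) delivers; once it is in hand, the passage to smoothness, spectral purity, and completeness of wave operators is purely formal.
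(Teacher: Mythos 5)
Your proposal follows essentially the same route as the paper's own argument, which is given in the remarks immediately following the theorem: Theorem 12.2 plus Lemma 14.12 gives local $H$- and $H_0$-smoothness of $\sqrt{\eta}$ (the latter because $V\equiv 0$ is Vakulenko with the same bounding function), strict positivity of $\eta$ gives the a.c.\ conclusion, and the pointwise domination $|V|^{1/2}\le\sqrt{\eta}$ transfers local smoothness to $|V|^{1/2}$ so that Lavine's Theorem 14.10 yields existence and completeness. Your additional observations---that the domain condition $D(H)\subset D(\sqrt{\eta})$ is forced by \eqref{14.17} together with $H$-boundedness of $A$, and the explicit factorization $V=A^*B$ with $A=\mathrm{sgn}(V)\sqrt{|V|}$---are correct and merely fill in details the paper leaves implicit.
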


\begin{remarks}  1.  Since $\eta$ is everywhere non--vanishing, $\ran\, \sqrt{\eta}$ is dense and this implies the absolute continuity on $(0,\infty)$.

2.  $\sqrt{\eta}$ is locally smooth for both $-\Delta+V$ and $-\Delta$ (since the zero potential is a Vakulenko potential with bounding function $\eta$).  Since $|V|^{1/2} \le \sqrt{\eta}$, we see that $|V|^{1/2}$ is locally smooth which implies that wave operators exist and are complete.

3. The proof of Theorem \ref{T12.2} is particularly easy when $\nu=1$.  Fix $\lambda_0 > 0$ and let
\begin{equation}\label{14.22}
  \omega(x) = \exp \left[\frac{2}{\sqrt{\lambda_0}}\int_{-\infty}^{x} \eta(y) \, dy\right]
\end{equation}
and
\begin{equation}\label{14.23}
  A = 2\omega\frac{d}{dx}
\end{equation}
Since $\eta \in L^1(\bbR)$, $\omega$ is bounded so since $\frac{d}{dx}(-\Delta+V+i)^{-1}$ is bounded, we see that $A$ is $H$--bounded.  It is easy to see (since $\eta$ and $V$ are real) that it suffices to prove \eqref{14.17} when $\varphi$ is real in which case:
\begin{equation}\label{14.24}
  \jap{(H-\lambda)\varphi,A\varphi} = \int_{-\infty}^{\infty} \left[\omega'\left[(\varphi')^2+\lambda(\varphi)^2\right]+2\omega V\varphi\varphi'\right] \, dx
\end{equation}
which we get by integration by parts in
\begin{equation*}
   2\int_{-\infty}^{\infty} (-\varphi''-\lambda\varphi)\omega\varphi'\,dx =  -\int_{-\infty}^{\infty} \omega\left[(\varphi')^2+\lambda(\varphi)^2\right]'\,dx
 \end{equation*}
Since $(|\varphi'|-\sqrt{\lambda}\varphi)^2 = (\varphi')^2+\lambda(\varphi)^2-2\sqrt{\lambda}|\varphi'||\varphi|$ we see that
\begin{equation}\label{14.25}
  \textrm{RHS of \eqref{14.24}} \ge \int_{-\infty}^{\infty} \left(\omega'-\frac{|V(x)|}{\sqrt{\lambda}} \omega\right) \left[(\varphi')^2+\lambda(\varphi)^2\right]'\,dx
\end{equation}
By construction of $\omega$, $|V| \le \eta$, $\omega \ge 1$ and $\lambda > \lambda_0$, we have that
\begin{equation}\label{14.26}
  \omega'-\frac{|V(x)|}{\sqrt{\lambda}} \omega \ge \frac{1}{\sqrt{\lambda_0}}\omega\eta \ge \frac{\eta}{\sqrt{\lambda_0}}
\end{equation}
Thus
\begin{align}
   \textrm{RHS of \eqref{14.25}} &\ge  \int_{-\infty}^{\infty} \frac{\lambda}{\sqrt{\lambda_0}} \eta(x)(\varphi)^2\,dx \nonumber \\
                                 &\ge \sqrt{\lambda_0} \norm{\sqrt{\eta}\varphi}^2
 \end{align}
which is \eqref{14.17}.  The higher dimensional case needs a carefully constructed $\omega$ but is along similar lines.

4.  Since $\eta(x) = (1+|x|)^{-\alpha}, \alpha > 1$ is a Vakulenko bounding function, we get the Corollary below.
\end{remarks}

\begin{corollary} \lb{C14.14} If
\begin{equation}\label{14.28}
  |V(x)| \le C(1+|x|)^{-\alpha}
\end{equation}
for some $\alpha > 1$, then $H=-\Delta+ V$ has purely a.c. spectrum on $(0,\infty)$ and with $H_0=-\Delta$, $\Omega^\pm(H,H_0)$ exist and are complete.
\end{corollary}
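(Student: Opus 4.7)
The plan is to reduce this to a direct application of Theorem \ref{T14.13} (Vakulenko's theorem) by verifying that the polynomially decaying bound in \eqref{14.28} puts $V$ into the Vakulenko class. Specifically, I would take as candidate Vakulenko bounding function
\begin{equation*}
  \eta(r) = C(1+r)^{-\alpha}
\end{equation*}
where $C$ and $\alpha > 1$ are the constants from the hypothesis, and check the three defining properties in \eqref{12.6}: positivity on $(0,\infty)$ is immediate; $r\eta(r) = Cr(1+r)^{-\alpha}\to 0$ as $r\downarrow 0$ since $\alpha > 0$; and
\begin{equation*}
  \int_0^\infty C(1+r)^{-\alpha}\, dr = \frac{C}{\alpha-1} < \infty
\end{equation*}
precisely because $\alpha > 1$. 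This is where the hypothesis $\alpha > 1$ gets used in an essential way --- it is exactly what is needed for $\eta$ to be integrable at infinity, and dropping to $\alpha = 1$ would destroy the Vakulenko property.

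With $\eta$ verified as a Vakulenko bounding function, the pointwise inequality $|V(x)| \le \eta(|x|)$ is just a restatement of \eqref{14.28}, so $V$ is a Vakulenko potential in the sense of the definition preceding Theorem \ref{T12.2}. I would then invoke Theorem \ref{T14.13} directly: it gives that $\sqrt{\eta}$ is locally $H$-smooth on $(0,\infty)$ with $H = -\Delta + V$, that the spectrum of $H$ on $(0,\infty)$ is purely absolutely continuous, and (via Remark 2 following Theorem \ref{T14.13}, using $|V|^{1/2} \le \sqrt{\eta}$ so that $|V|^{1/2}$ is locally smooth for both $H$ and $H_0$) that $\Omega^\pm(H,H_0)$ exist and are complete. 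There is no real obstacle here --- all the analytic work was done in establishing Theorem \ref{T12.2} and its smoothness consequence Theorem \ref{T14.13}; the corollary is purely a matter of recognizing $(1+|x|)^{-\alpha}$ as a Vakulenko bounding function when $\alpha > 1$.
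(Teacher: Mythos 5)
Your proposal is correct and is exactly the paper's argument: Remark 4 following Theorem \ref{T14.13} says precisely that $\eta(r) = (1+r)^{-\alpha}$ with $\alpha > 1$ is a Vakulenko bounding function, and the corollary then follows immediately from Theorem \ref{T14.13} together with Remark 2 (the $|V|^{1/2} \le \sqrt{\eta}$ observation). Your verification of the three defining properties of a Vakulenko bounding function, and in particular pinpointing $\alpha > 1$ as the condition guaranteeing integrability at infinity, is the whole content of the reduction.
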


Thus Vakulenko obtained a new and beautiful proof of an Agmon--Kato--Kuroda type theorem of the kind we discuss in the next section (albeit 15 years after their work).  Unlike their method, this one seems to require pointwise bounds and doesn't allow for local singularities.

Yafaev \cite{YafaevLR} has an approach to long range $2$--body scattering that exploits some ideas from the theory of smooth perturbations.

We note that the earliest proofs of $N$--body asymptotic completeness for $\textrm{0}(|x|^{-1-\epsilon})$ potentials (at least when $N \ge 4$) were by Sigal--Soffer \cite{SigSof1, SigSof2} and then by Graf \cite{Graf} and Derezi\'{n}ski \cite{Dere}.  \cite{Dere}  and \cite{SigSof2} have results on long range results.  In \cite{YafaevAC1}, Yafaev found a proof that exploits smoothness ideas (as well as some of the tools -- Mourre estimates \cite{Mourre, PSS, FHM}, Deift--Simon wave operators \cite{DeiftSimonWO}, Enss type phase space analysis \cite{Enss, Enss3}--of the earlier approaches).  Kato never considered $N$--body scattering, which is quite involved, so we refer the reader to Yafaev's original paper \cite{YafaevAC1} or lecture notes \cite{YafaevLNM} for details.

\section{Scattering and Spectral Theory, III: Kato--Kuroda Theory} \lb{s15}

This is the third section on spectral and scattering theory; it focuses on stationary, aka time--independent, methods.  As with the prior two sections, we'll include an overview portion but we want to begin by describing the problem we'll discuss and the contributions of Agmon, Kato and Kuroda.  While it is significant that local singularities can be accommodated, we'll mainly discuss the case \eqref{13.15}, i.e.
\begin{equation}\label{15.1}
  |V(x)| \le C(1+|x|)^{-\alpha}
\end{equation}
We consider $H_0=-\Delta, H=-\Delta+V(x)$ on $L^2(\bbR^\nu,d^\nu x)$.  These are the questions that will concern us:

\noindent (A) Existence and Completeness of $\Omega^\pm(H,H_0)$

\noindent (B) Absence of singular continuous spectrum

As a sidelight of the methods, one also gets continuum eigenfunction expansions of a type I will discuss below.  There is also the issue of positive eigenvalues which except for the work of Vakulenko (as discussed in Sections \ref{s12} and \ref{s14}) was studied using very different methods from those used in this section; see Section \ref{s12}.

As we explained in Section \ref{s13}, it follows from Cook's method that $\Omega^\pm(H,H_0)$ exist if $\alpha >1$ while they may not if $\alpha \le 1$.  It is known (see Section 20) that (B) can fail if $\alpha < 1$ (although this was not known in the 1970s), so in the 15 years after 1957, a lot of effort was made on studying problems (A) and (B) when $\alpha >1$.  We'll say a lot more about the detailed history later but start with the best results of Kato--Kuroda on the subject and on the optimal result.

In 1969, Kato \cite{Kato1969KK} using, in part, ideas of Kato--Kuroda (of which we'll say a lot more below) proved

\begin{theorem} [Kato \cite{Kato1969KK}]  \lb{T15.1} Let $V$ obey \eqref{15.1} and $H, H_0$ as above.  Then

(a) If $\alpha>1$, the wave operators exist and are complete.

(b) If $\alpha > 5/4$, $H$ has no singular continuous spectrum.
\end{theorem}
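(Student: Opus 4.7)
\textbf{Proof plan for Theorem \ref{T15.1}.} The plan is to apply the Kato smoothness machinery of Section \ref{s14}, in particular the local smoothness criterion of Theorem \ref{T14.10}, with a factorization of the potential. Write $V=A^*B$ where $B=|V|^{1/2}$ and $A=(\mathrm{sgn}\,V)|V|^{1/2}$, so that both $|A|$ and $|B|$ are bounded pointwise by a constant multiple of $w(x)\equiv(1+|x|)^{-\alpha/2}$. The central analytic input I would establish is a limiting absorption principle (LAP) for the free resolvent: for every compact $K\subset(0,\infty)$,
\begin{equation*}
  \sup_{\lambda\in K,\,0<\epsilon<1}\,\bigl\|\,w\,R_0(\lambda\pm i\epsilon)\,w\,\bigr\|<\infty,
\end{equation*}
together with the existence of the boundary values $wR_0(\lambda\pm i0)w$ in operator norm. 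For $\alpha>1$, so that $w$ is in a sufficiently strong weighted $L^2$ class, this follows by Fourier analysis of the explicit convolution kernel of $R_0$ (restricting to the sphere of radius $\sqrt\lambda$ in momentum space, a trace that is bounded on $L^2(w^2\,dx)$).

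Given the LAP for $H_0$, I would next derive the corresponding LAP for $H$. The second resolvent equation yields
\begin{equation*}
  B\,R(z)\,B \;=\; \bigl(\mathbf{1}+B\,R_0(z)\,A^*\bigr)^{-1}\,B\,R_0(z)\,B,
\end{equation*}
so the question reduces to the invertibility of $\mathbf{1}+BR_0(\lambda\pm i0)A^*$ on compact subsets of $(0,\infty)$. Because $BR_0(\lambda+i0)A^*$ depends continuously on $\lambda$ and is compact (it is the product of two operators whose weights decay faster than $L^2$-critical when $\alpha>1$), analytic Fredholm theory shows the inverse exists off a closed set $\mathcal{E}\subset(0,\infty)$ which is locally finite. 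By an elementary argument, points of $\mathcal{E}$ correspond to eigenvalues of $H$ in $(0,\infty)$, and these are excluded by Kato's theorem on absence of positive eigenvalues (Theorem \ref{T12.1}), which applies precisely because $\alpha>1$ forces $r\sup_{|y|>r}|V(y)|\to0$. Thus $BR(\lambda\pm i0)B$ exists uniformly on compact subsets of $(0,\infty)$, giving local $H$-smoothness of $B$ on each such subset, and similarly for $A$. Theorem \ref{T14.10} then yields existence and completeness of the local wave operators $\Omega^\pm(H,H_0;P_K(H_0))$, and letting $K\uparrow(0,\infty)$ produces (a), since $\sigma_{\mathrm{ac}}(H_0)=[0,\infty)$ and the negative spectrum of $H$ is discrete of finite multiplicity.

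For part (b), local $H$-smoothness of $B$ on compact subsets of $(0,\infty)$ together with density of $\ran(B^*)$ (which holds whenever $V$ is not a.e.\ zero on any open set, and otherwise the problem reduces by a localization) shows by Theorem \ref{T14.4} that $H$ has purely absolutely continuous spectrum on $(0,\infty)$. What must still be ruled out is singular continuous spectrum supported at $\{0\}$ (a single point is not s.c.) and at $-\infty<\lambda<0$ (where $H$ has only discrete spectrum under \eqref{15.1}, by compact perturbation arguments). The only obstruction is therefore the threshold $\lambda=0$: one needs enough decay for a H\"older estimate on $wR(\lambda+i\epsilon)w$ down to $\lambda=0$, so that the spectral measure of any $B^*\varphi$ assigns zero mass to $\{0\}$ and has no s.c. component accumulating there. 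The strengthening $\alpha>5/4$ is exactly what is needed to push the LAP through the threshold: the extra $1/4$ compensates for the loss of one derivative in the kernel of $R_0$ at $\lambda=0$ (where the oscillatory factor $e^{i\sqrt\lambda|x-y|}$ degenerates), and it is in this threshold analysis that I expect the main technical difficulty to lie.

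\textbf{Main obstacle.} The existence and Fredholm invertibility of $BR_0(\lambda\pm i0)A^*$ away from zero is essentially soft once the LAP is in hand, and is the reason $\alpha>1$ suffices for (a). The genuinely hard step is the threshold analysis at $\lambda=0$ required for (b): obtaining a quantitative modulus of continuity for $wR(\lambda+i\epsilon)w$ as $\lambda\downarrow0$, uniformly in $\epsilon$, demands a careful expansion of the low-energy free resolvent and an argument that the weights $w^2\sim(1+|x|)^{-\alpha}$ with $\alpha>5/4$ kill the slow spatial decay of the kernel of $R_0(0)$ strongly enough to prevent s.c.\ spectrum from concentrating at $0$.
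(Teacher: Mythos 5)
Your route for part (a) is essentially the one the paper sketches (Steps 1--5 plus Step 4's Lavine local-smoothness option), so that half is in good shape; the difficulty is in part (b), where the proposal has both a genuine gap and a misdiagnosis of where $\alpha>5/4$ enters. You dispose of $\calE$ by asserting, as an ``elementary argument,'' that points of $\calE$ are $L^2$ eigenvalues and hence excluded by Theorem \ref{T12.1}. That step is exactly the hard part, and it is \emph{not} elementary. At an exceptional $\lambda_0=\kappa^2\in\calE$ the homogeneous equation $\varphi=B(\lambda_0+i0)\varphi$ has a non-trivial solution a priori only in the weighted space $L^2_{-\beta}$ (or your $L^2(w^{-2}dx)$), not in $L^2$. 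To promote $\varphi$ to a genuine $L^2$ eigenfunction, one observes that $0=\operatorname{Im}\jap{V\varphi,\varphi}$ forces $\widehat{V\varphi}$ to vanish on the sphere $\{|k|=\kappa\}$, and then uses the gain of decay in applying the free resolvent to a function whose Fourier transform vanishes there (the estimate \eqref{15.17}). How much decay you recover in one such step depends on $\alpha$, and Kato's cutoff $\alpha>5/4$ is precisely what comes out of this bootstrap in his functional-analytic framework (Agmon's single step gives $3/2$, iterating gives $1$). If the step really were elementary, the theorem statement would have no reason to require anything beyond $\alpha>1$ in part (b).

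Your subsequent explanation that the $1/4$ compensates for threshold degeneracy of $R_0$ at $\lambda=0$ is both a fabrication relative to the paper's proof and internally inconsistent: you have already (correctly) noted that a single point cannot carry singular continuous spectrum, so there is no threshold obstruction of the kind you invoke; $\{0\}$ is harmless once $(0,\infty)\setminus\calE$ is shown to be purely a.c.\ and $\calE$ is reduced to eigenvalues. One smaller inaccuracy: analytic Fredholm theory applied through the boundary values $B(\lambda\pm i0)$ (which are only norm-continuous, not analytic, in $\lambda$) yields a closed exceptional set of Lebesgue measure zero, not a locally finite set as you claim; it is precisely because $\calE$ could a priori be an uncountable closed null set that one must upgrade its points to $L^2$ eigenvalues before concluding it cannot support an s.c.\ measure.
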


In 1970, Agmon \cite{AgmonAnon} announced.

\begin{theorem} [Agmon \cite{AgmonAnon, AgmonSpec}] \lb{T15.2}  Let $V$ obey \eqref{15.1} and $H$ as above.  If $\alpha > 1$, $H$ has no singular continuous spectrum.
\end{theorem}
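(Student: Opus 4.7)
The plan is to establish Agmon's limiting absorption principle (LAP) on weighted $L^2$ spaces and deduce absence of singular continuous spectrum from it. Define the weighted spaces $L^2_s = L^2(\bbR^\nu, (1+|x|^2)^s\, d^\nu x)$, with $L^2_{-s}$ its dual, and let $R_0(z) = (H_0 - z)^{-1}$, $R(z) = (H-z)^{-1}$. The goal is to show that for $s \in (1/2, \alpha/2)$ and for every $\lambda \in (0,\infty)$, the operator--valued functions $R_0(\lambda \pm i\epsilon)$ and $R(\lambda \pm i\epsilon)$, viewed as maps $L^2_s \to L^2_{-s}$, have norm limits as $\epsilon \downarrow 0$ that are continuous in $\lambda$. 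Once this is in hand, Stone's formula \eqref{14.1} shows that $\langle f, P_{(a,b)}(H) f\rangle$ is absolutely continuous in $(a,b)$ for every $f$ in the dense subspace $L^2_s$, which forces $H \restriction \sigma_c(H) \cap (0,\infty)$ to be purely absolutely continuous. Since $V(x) \to 0$, the HVZ--type argument gives $\sigma_{\mathrm{ess}}(H) = [0,\infty)$, and the negative spectrum consists of isolated eigenvalues of finite multiplicity, so absence of singular continuous spectrum on $(0,\infty)$ finishes the proof.

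First I would establish the LAP for $H_0$. Via the Fourier transform, $R_0(z)$ becomes multiplication by $(|k|^2-z)^{-1}$, and the boundary values involve evaluation on the sphere $\{|k|^2 = \lambda\}$. The Agmon trace lemma asserts that for $s > 1/2$, the restriction map $\hat f \mapsto \hat f \restriction S^{\nu-1}_{\sqrt\lambda}$ is bounded from $L^2_s$ (equivalently, $\hat f$ in the Sobolev space $H^s$) to $L^2(S^{\nu-1}_{\sqrt\lambda})$, with norm continuous in $\lambda > 0$. Combining this with a principal--value computation of the imaginary part, one obtains the continuous extension of $R_0(\cdot \pm i0)$ to $(0,\infty)$ as a $\calB(L^2_s, L^2_{-s})$--valued function.

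Next, the pointwise bound \eqref{15.1} and the choice $s < \alpha/2$ show that $V$ is bounded as a multiplication operator $L^2_{-s} \to L^2_s$, since $(1+|x|^2)^s|V(x)| \le C(1+|x|^2)^{s-\alpha/2}$ is uniformly bounded. Combining this with the LAP for $H_0$, $VR_0(\lambda \pm i0)$ is a bounded operator on $L^2_{-s}$; moreover, a Rellich--type compactness argument (using that $R_0(\lambda\pm i0)$ maps into a local Sobolev space and that $(1+|x|^2)^{-s}$ gives decay at infinity) shows that $VR_0(\lambda \pm i0)$ is \emph{compact} on $L^2_{-s}$. Fredholm theory then allows one to invert $I + VR_0(\lambda \pm i0)$ provided its kernel is trivial, yielding the identity
\begin{equation}
R(\lambda \pm i0) = (I + R_0(\lambda \pm i0) V)^{-1} R_0(\lambda \pm i0)
\end{equation}
as bounded operators $L^2_s \to L^2_{-s}$, jointly continuous in $\lambda$.

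The hard step, and the crux of the argument, is ruling out a nonzero $u \in L^2_{-s}$ with $u + VR_0(\lambda + i0)u = 0$. Setting $\varphi = R_0(\lambda + i0) u \in L^2_{-s}$, one finds that $\varphi$ is a distributional solution of $(-\Delta + V)\varphi = \lambda \varphi$. The key step is to show $\varphi$ satisfies an outgoing radiation condition that, combined with Kato's theorem on the absence of positive eigenvalues (Theorem \ref{T12.1} and the remark that $\alpha > 1$ forces $\varphi$ to vanish at infinity and then to be compactly supported), forces $\varphi \equiv 0$, whence $u = 0$. The main obstacle here is the bootstrap from $\varphi \in L^2_{-s}$ to sufficient decay (essentially $\varphi \in L^2$) to trigger Kato's theorem: the natural argument shows that the ``$+i0$'' side of the resolvent produces only outgoing radiation, so any corresponding solution that is already in a weighted $L^2$ space with weight better than $|x|^{-1/2}$ must in fact decay faster than any purely outgoing free solution, and this extra decay is enough to put $\varphi$ in $L^2$. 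Executing this radiation--condition/unique--continuation argument cleanly under only the pointwise hypothesis \eqref{15.1} with $\alpha > 1$ is the essential technical work, and everything else in the proof assembles routinely around it.
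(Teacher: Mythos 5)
Your skeleton — weighted limiting absorption for $H_0$, compactness of $R_0 V$ on the weighted space, Fredholm theory, and a bootstrap to kill the null space — matches the Agmon/Kato--Kuroda scheme outlined in Section~\ref{s15}. The one genuine structural difference is where you invoke Kato's absence-of-positive-eigenvalues theorem (Theorem~\ref{T12.1} with its Remark~4). The paper's scheme deliberately avoids that theorem: it accepts a closed exceptional set $\calE \subset (0,\infty)$ where $1-R_0(\lambda+i0)V$ might fail to be invertible, shows by the $\textrm{Im}\,\jap{V\varphi,\varphi}=0 \Rightarrow T_\kappa V\varphi=0$ observation plus Agmon's iteration (\eqref{15.15}--\eqref{15.17} and Step~8) that any $\varphi \in L^2_{-\beta}$ in the null space actually lies in $L^2$ (hence $\lambda_0 \in \calE$ is an honest eigenvalue), and then concludes $\calE$ is countable and so cannot carry singular continuous measure. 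You instead push on to show $\calE$ is \emph{empty}: once $\varphi \in L^2$ is in hand, Kato's theorem kills $\varphi$, so the kernel is trivial at every $\lambda > 0$. This is correct and available precisely because \eqref{15.1} with $\alpha > 1$ implies Kato's pointwise condition \eqref{12.3}; it is in fact the historical Povzner--Ikebe route, and it proves more than the statement at hand (it yields absence of positive eigenvalues as well as absence of s.c.\ spectrum), but it depends on an independent hard theorem that the paper's argument deliberately does not presuppose. Two small quibbles: the heart of the bootstrap — that $T_\kappa V\varphi = 0$ lets $Q_\kappa$ gain roughly $\alpha-1$ powers of weight per iteration — is left as ``the essential technical work'' without the mechanism, and in ``$u + VR_0(\lambda+i0)u = 0$ with $u \in L^2_{-s}$'' the spaces are off ($R_0$ maps $L^2_s\to L^2_{-s}$, so this $u$ should live in $L^2_s$, or the equation should be $u+R_0(\lambda+i0)Vu=0$ on $L^2_{-s}$; the two versions are dual and have equivalent null spaces, so nothing breaks).
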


While Agmon did not discuss scattering in his announcement, Lavine \cite{Lav5} noted that his estimates and Lavine's theory of local smoothness implied existence and completeness of wave operators (and later, both Agmon and H\"{o}rmander presented other approaches to get completeness).  We also note that as discussed, for example, in \cite[Section XIII.8]{RS4}, one can accommodate local singularities; in place of assuming $(1+|x|)^\alpha V(x)$ is bounded, one need only assume that it is a relatively compact perturbation of $-\Delta$.

Agmon was able to go from $5/4$ to $1$ by an astute observation (Step 8 in the scheme at the end of the chapter).  By using the same idea, Kuroda could extend that Kato--Kuroda argument up to $\alpha >1$.  Later we'll say more about work of others on these problems.

Our goal in this section is to explain the machinery behind certain proofs of Theorems \ref{T15.1} and \ref{T15.2}. We begin with some general overview of the stationary approach to scattering.  The earliest mathematical approach to stationary scattering is in Friedrichs \cite{FriedCont} but we will focus on a slightly later one of Povzner \cite{Povz1} in 1953 and Kato's student, Ikebe \cite{Ikebe1}, in 1960 that discusses eigenfunction expansion.  Their expansions are to be distinguished from what \cite{SimonSmgp} calls BGK expansions after Berezanskii, Browder, G\r{a}rding, Gel'fand and Kac (see references in \cite{SimonSmgp}).  The BGK expansion is essentially a variant of the spectral theorem when an operator $A$ on $L^2(\bbR^\nu,d^\nu x)$ has local trace class properties (i.e. $f(x)P_{[a,b]}(A)f(x)$ is trace class for $f \in C^\infty_0(\bbR^\nu)$).  This expansion is stated in terms of the spectral measures and so has no implications for the spectral properties.  The advantage of BGK expansions is that they are always applicable for Schr\"{o}dinger operators (see \cite{SimonSmgp}) while the Povzner--Ikebe expansion only works in special situations, but when it does, it provides a lot of additional information.

The IP expansion of Povzner--Ikebe involves not spectral measures but $d^\nu k$ which is why it has important spectral consequences.  The model is the Fourier transform for $H_0=-\Delta$ which in this introductory discussion we'll denote as $\hat{f}_0$ (since we'll use $\hat{f}$ for something else) defined on $\bbR^\nu$ by
\begin{equation}\label{15.2}
  \hat{f}_0(\mathbf{k}) = (2\pi)^{-\nu/2} \int \overline{\varphi_0(\mathbf{x},\mathbf{k})} f(\mathbf{x}) \, d^\nu x
\end{equation}
\begin{equation}\label{15.3}
  \varphi_0(\mathbf{x},\mathbf{k}) = e^{i\mathbf{k}\cdot\mathbf{x}}
\end{equation}
(see \cite[Section 6.5]{RA} for the meaning of \eqref{15.2} when $f$ is only in $L^2$ and not in $L^1$).  This provides an eigenfunction expansion of $H_0$ in that (except for places where we want to emphasize  the vector nature of $\mathbf{x}$ and $\mathbf{k}$, we will start using non--boldface)
\begin{equation}\label{15.4}
  f(x) = (2\pi)^{-\nu/2} \int \varphi_0(x,k) \hat{f}_0(k) \, d^\nu k
\end{equation}
\begin{equation}\label{15.5}
  \widehat{H_0f}_0(k) = |k|^2 \hat{f}_0(k)
\end{equation}
so that formally (and much more), $H_0\varphi(\cdot,k)=|k|^2\varphi(\cdot,k)$.

For suitable $V$ and $H=H_0+V$, what Povzner and Ikebe found are functions, $\varphi(\mathbf{x},\mathbf{k})$, so that if $\hat{f}$ is defined by
\begin{equation}\label{15.6}
  \hat{f}(k) = (2\pi)^{-\nu/2} \int \overline{\varphi(x,k)} f(x) \, d^\nu x
\end{equation}
and if $\{\varphi_n(x)\}_{n=1}^N$ is an orthonormal basis of $L^2$ eigenfunctions for $\calH_{pp}(H)$ with $H\varphi_n=E_n\varphi_n$, then
\begin{equation}\label{15.7}
  f(x) = \sum_{n=1}^{N} \jap{\varphi_n,f}\varphi_n(x) + (2\pi)^{-\nu/2} \int \varphi(x,k)\hat{f}(k) \, d^\nu k
\end{equation}
and
\begin{equation}\label{15.8}
  \widehat{Hf}(k) = |k|^2 \hat{f}(k)
\end{equation}
This implies that $H$ has point spectrum plus a.c. spectrum solving problem (B).

They also proved a connection to scattering
\begin{equation}\label{15.9}
  \widehat{\Omega^+f}=\hat{f}_0
\end{equation}
so that formally
\begin{equation}\label{15.9A}
  \Omega^+\varphi = \varphi_0
\end{equation}
(we'll say more about this shortly).  This implies that $\ran\,\Omega^+=\calH_{ac}(H)$ and then, since $\Omega^+\bar{f}=\overline{\Omega^-f}$ (where $\overline{\rule{0pt}{2.75mm}{}\null\quad\null}$ is complex conjugate), we have that $\ran\,\Omega^+=\calH_{ac}(H)$ solving problem (A).

In the physics literature, Gell'Mann  and Goldberger, \cite{GG} appealing to stationary phase arguments (\cite[Section 15.3]{CAB}), considered the meaning of \eqref{15.9A} and formally proved, that pointwise it held if the limit in the definition of wave operator is an abelian limit (i.e. an $e^{-\epsilon t}$ is added to the quantity in the limit and then one takes $\epsilon \downarrow 0$).  Indeed, Ikebe proved \eqref{15.9} in terms of abelian limits and then used the existence of the ordinary limit proven by other means.

Of course, one has to find suitable continuum eigenfunctions, $\varphi(\mathbf{x},\mathbf{k})$, so that \eqref{15.9} holds.  Some thought about Born's ideas suggests one wants $\varphi$ to have the asymptotics \eqref{13.1} near $\mathbf{x}=\infty$.  We'll explain that $\varphi$ obeys an integral equation called the Lippmann--Schwinger equation introduced by two physicists \cite{LippS} in 1950.  Following Lippmann--Schwinger, Povzner and Ikebe, we only consider $\nu=3$ where the integral kernel for $(H_0-k^2)^{-1}$ is especially simple.

Since formally $(H_0+V-k^2)\varphi=0$, we might expect that $\varphi$ obeys $\varphi=-(H_0-k^2)^{-1}V\varphi$.  There are two problems with this.  First, since $k^2$ is in the spectrum of $H_0$, we can't use $(H_0-k^2)^{-1}$ as a bounded operator on $L^2$.  If $\textrm{Im}(k) > 0$ (so $k^2 \notin \bbR$), then $(H_0-k^2)^{-1}$ has an integral kernel, $G_0(\mathbf{x},\mathbf{y};k^2)$, given by
\begin{equation}\label{15.10}
  G_0(x,y;k^2)=\frac{e^{ik|x-y|}}{4\pi|x-y|}
\end{equation}
This has a pointwise limit as $k^2 \to \bbR$, indeed two different limits if one takes $\epsilon\downarrow 0$ for $k^2\pm i\epsilon$.  We thus define for $k>0$
\begin{equation}\label{15.10A}
  G_0(x,y,k^2\pm i0) = \frac{e^{\pm ik|x-y|}}{4\pi|x-y|}
\end{equation}
As we'll see, to get \eqref{15.9}, we want to pick $+i0$, not $-i0$.  It is the use of plus here that led physicists to use $\Omega^+$ for the limit as $t \to -\infty$.  This gives meaning to $-(H_0-k^2)^{-1}V\varphi$.

The second problem with $\varphi=-(H_0-k^2)^{-1}V\varphi$ is that if $V$ has rapid decay (e.g. $V$ has compact support), it is not hard to see that $-(H_0-k^2)^{-1}V\varphi$ looks like the second term in \eqref{13.1}, so it is tempting to try $\varphi=e^{ik.x}-(H_0-k^2)^{-1}V\varphi$.  Notice that since $(H_0-k^2)$  has a kernel (among ``reasonable'' functions), we are allowed to add elements of the kernel when inverting; put differently $(H_0-k^2)[e^{ik.x}-(H_0-k^2)^{-1}V\varphi= -V\varphi$ and thus our formal eigenfunctions will be solutions of the \emph{Lippmann--Schwinger equation}
\begin{equation}\label{15.11}
  \varphi(\mathbf{x},\mathbf{k}) = e^{i\mathbf{k}\cdot\mathbf{x}}-\frac{1}{4\pi} \int \frac{e^{i|\mathbf{k}||\mathbf{x}-\mathbf{y}|}}{|\mathbf{x}-\mathbf{y}|}V(\mathbf{y})\varphi(\mathbf{y}) d^3y
\end{equation}

The pioneer in using the Lippmann--Schwinger equation to prove mathematical results about eigenfunction expansions was Povzner \cite{Povz1, Povz2}.  In \cite{Povz1}, published in 1953, he considered $C^\infty$ potentials, $V$, obeying \eqref{15.1} for $\nu=3, \alpha > 7/2$ and solved problem (B) affirmatively for such $\alpha$.  In 1955, in \cite{Povz2}, for $V$'s of compact support, he solved problem (A) (when $\nu=3$).  Bear in mind that the results of Cook, Hack and Kuroda on existence (via Cook's method) didn't exist when Povzner wrote \cite{Povz2}.  As we'll see, Ikebe's approach to solving problem (A) uses these a priori existence results.

In 1960, Ikebe \cite{Ikebe1} used eigenfunction expansions to solve problems (A) and (B) when $\nu=3$ and $V$ obeys \eqref{15.1} near infinity for $\alpha>2$ and moreover, $V$ is H\"{o}lder continuous away from a finite number of points where it is locally $L^2$.  Let us sketch the ideas that he used:

(i) Let $B$ be the Banach space, $C_\infty(\bbR^3)$, of bounded functions vanishing at $\infty$ with $\norm{\cdot}_\infty$.  For $\textrm{Im}(\kappa) \ge 0$, define
\begin{equation}\label{15.12}
  (T_\kappa g)(x) = -\frac{1}{4\pi} \int \frac{e^{i\kappa|x-y|}}{|x-y|} V(y) g(y)\, d^3y
\end{equation}
Then if $V$ obeys \eqref{15.1} with $\alpha>2$, $T_\kappa$ is a bounded, indeed a compact, operator of $B$ to $B$ which is analytic in $\kappa$ on $\bbC_+$ and H\"{o}lder continuous on $\overline{\bbC_+}\setminus\{0\}$.

(ii) One shows that $T_\kappa\psi = \psi$ has no non--zero solution for $\textrm{Im}(\kappa) > 0$ (since $\psi$ is then exponentially decreasing and so in $L^2$ violating self--adjointness) and then also for $\textrm{Im}(\kappa) = 0, \kappa \ne 0$ since one can use Kato's result mentioned in Remark 4 after Theorem \ref{T12.1}.  In this analysis, Ikebe shows that if $\kappa \in \bbR\setminus\{0\}$ and $\psi$ solves $T_\kappa\psi=\psi$, then $\varphi\equiv\psi \in L^2(\bbR^3)$ obeys
\begin{equation}\label{15.13}
  \int_{|k|=\kappa} |\hat{\varphi}(k)|^2 \, d\omega = 0
\end{equation}
suitably interpreted.  This result, also found by Povzner, is important as we'll see later.

(iii) By Fredholm theory, since $T_\kappa\psi=\psi$ has no solutions, $\bdone-T_\kappa$ is invertible.  One defines $\varphi(\cdot,\mathbf{k})$ to be $(\bdone-T_{|k|})^{-1}\varphi_0(\cdot,\mathbf{k})$ with $\varphi_0$ given by \eqref{15.3} ($\varphi_0 \notin B$ since it doesn't vanish at infinity but if $\eta = \varphi-\varphi_0$, then $\varphi=\varphi_0+T_{|k|}\varphi \iff \eta=T_{|k|}\varphi+T_{|k|}\eta$.  Note that $\eta$ and $T_{|k|}\varphi_0$ are in $B$).  In this way, one gets solutions of the Lippmann--Schwinger equation.

(iv) One also solves $G=G_0+T_\kappa G$ (where $G_0$ is the free Green's function \eqref{15.10A}) and uses this plus Stone's theorem to verify the expansion \eqref{15.7}

(v) By following arguments of Gell'Mann--Goldberger \cite{GG}, one proves \eqref{15.9} where $\Omega^+$ is an abelian limit.  By the results of Cook--Hack--Kuroda, this abelian limit is equal to the ordinary limit.

(vi) \eqref{15.7} solves problem (B) and \eqref{15.9} solves problem (A) as noted above.

(vii) There is a gap in \cite{Ikebe1} found and filled in Simon \cite{SimonThesis} and also filled by Ikebe \cite{Ikebe2}.

We should briefly mention two variants of Ikebe's work.  First, Thoe \cite{ThoeIkebe} extended the result to $\bbR^\nu$ for general $\nu$.  Secondly, for Rollnik potentials (any $V$ obeying \eqref{15.1} for $\alpha>2$ is in $L^{3/2}$ and so Rollnik but Rollnik allows $L^{3/2}$ local singularities), following Rollnik \cite{Rollnik} and Grossman--Wu \cite{GW}, one can rewrite the Lippmann--Schwinger equation in an equivalent form:
\begin{align}\label{15.14}
  \xi(x) &= \xi_0(x) - \frac{1}{4\pi} \int |V(x)|^{1/2}\frac{e^{ik|x-y|}}{|x-y|} V^{1/2}(y) \xi(y)  \nonumber \\
         &\equiv \xi_0(x)+(W_{|k|}\xi)(x)
\end{align}
where $V^{1/2}(y) = |V(y)|^{1/2} \textrm{sgn}(V(y))$ and $\xi(x) = |V(x)|^{1/2}\varphi(x)$.  The point is that the integral kernel in \eqref{15.14} is Hilbert--Schmidt for $\textrm{Im}(k) \ge 0$ if $V$ is Rollnik.  This was used by Simon \cite{SimonThesis} to carry over Ikebe's arguments.  One big difference is that there is no Kato argument to eliminate solutions of the homogeneous equations.  But by Fredholm theory, in any event, the set of points where $1-W_{|k|}$ is not invertible is the set of zeros of a function analytic on $\bbC_+$ and continuous on its closure, so a subset of $\bbR$ with real Lebesgue measure zero.  This allows a proof of completeness but not a solution of problem (B).  We'll say more about this issue below.  We note that this factorization idea is used in several of the approaches to the Agmon--Kato--Kuroda theory and, in particular, an option in the work of Kato and Kuroda.  We'll not discuss this further.

Subsequent to Ikebe solving problem (B) if $\alpha > 2$, the search for the general $\alpha > 1$ result was solved in stages: J\"{a}ger \cite{Jager} did it for $\alpha > 3/2$, Rejto \cite{RejtoNSC1} for $\alpha > 4/3$, Kato \cite{Kato1969KK} using Kato--Kuroda theory did $\alpha > 5/4$ as we've seen, Rejto \cite{RejtoNSC2} did $\alpha > 6/5$ and finally Agmon \cite{AgmonSpec} (and shortly afterwards, independently Saito \cite{Saito}) handled $\alpha > 1$.  As we'll explain using one simple idea from Agmon, Kuroda and Rejto could extend their methods to handle $\alpha > 1$.  Howland \cite{HowlandNSC} had earlier work on this problem and Schechter \cite{SchechterKK} used Kato--Kuroda theory to study higher order elliptic operators (as we'll see Agmon, H\"{o}rmander and Kuroda also did).

In two papers \cite{KKRocky, KKSimple}, Kato and Kuroda developed what they called an abstract theory of scattering.  As Kuroda told me ``it was too abstract to become popular'' (blaming himself for this). In recognition of the history, Reed--Simon dubbed the basic result for $\alpha > 1$ the Agmon--Kato--Kuroda Theorem but it is Agmon's approach that has stuck around.  And this is due not only to the abstraction but also to the elegance and simplicity of Agmon's approach and its flexibility.  Moreover, two early, widely--used monograph presentations (Reed--Simon \cite[Section XIII.8]{RS4} and H\"{o}rmander \cite{Horm2, Horm4}) exposed the Agmon approach.  All this said, while Agmon's technicalities are distinct from Kato--Kuroda, the underlying conceptual framework is similar.  We will describe this scheme using Agmon's approach to explicitly implement the steps.

Agmon uses the spaces $L^2_\beta(\bbR^\nu)$ defined by
\begin{equation}\label{15.14A}
  \norm{\varphi}_\beta^2 = \int (1+|x|^2)^\beta |\varphi(x)|^2 \, d^\nu x < \infty
\end{equation}
These are Hilbert spaces.  One suppresses the natural duality of Hilbert spaces and associates the dual of $L^2_\beta$ with $L^2_{-\beta}$ so that $\psi \in L^2_{-\beta}$ is associated with the linear functional $\varphi \mapsto \int \overline{\psi(x)} \varphi(x) \, d^\nu x$.  Here are the basic facts about Fourier transform on $L^2_\beta$ that we'll need.  For proofs, see \cite[Section IX.9]{RS2}; basically, one proves things for $\nu = 1$ and uses spherical coordinates for the other variables.  We return to using $f \mapsto \hat{f}$ for the Fourier transform.

(1) Let $\beta > 1/2$.  There is for each $\lambda \in (0,\infty)$, a bounded map, $T_\lambda:L^2_\beta(\bbR^\nu) \to L^2(S^{\nu-1},d\omega)$ (where $d\omega$ is unnormalized measure on the unit sphere in $\bbR^\nu$), so that if $f \in \calS(\bbR^\nu)$, then
\begin{equation}\label{15.14B}
  (T_\lambda f)(\omega) = \widehat{f}(\lambda\omega)
\end{equation}

(2) $T_\lambda$ is norm H\"{o}lder continuous in $\lambda$ of order $\beta-1/2$ if $1/2<\beta<3/2$.

(3) Fix $\beta > 1/2$.  As maps of $L^2_\beta$ to $L^2_{-\beta}$, $(-\Delta-\kappa^2)^{-1}$ defined initially for $\textrm{Im}\kappa > 0$ has a continuous extension to $\kappa \in \bbR\setminus\{0\}$.

(4) If $\varphi \in L^2_\beta,\, \beta>1/2$ and $\kappa>0$, then
\begin{equation}\label{15.15}
  \lim_{\epsilon\downarrow 0} \textrm{Im} \jap{\varphi,(-\Delta-(\kappa^2+i\epsilon))^{-1}\varphi} = \frac{\pi \kappa^{\nu-2}}{2} \norm{T_\kappa\varphi}_2^2
\end{equation}
(This is just a version of $\lim_{\epsilon\downarrow 0} \frac{1}{x-i\epsilon}=\calP\left(\frac{1}{x}\right)+i\pi \delta(x)$).

(5) Let $\beta>1/2$.  Fix $\kappa>0$ and suppose that $\varphi \in L^2_\beta$ with $T_\kappa\varphi=0$.  Define $Q_\kappa\varphi$ by
\begin{equation}\label{15.16}
  \widehat{Q_\kappa\varphi}(\mathbf{k}) = (k^2-\kappa^2)^{-1}\hat{\varphi}(\mathbf{k})
\end{equation}
Then for each $\epsilon > 0$, $Q_\kappa\varphi \in L^2_{\beta-1-\epsilon}$ and
\begin{equation}\label{15.17}
  \norm{Q_\kappa\varphi}_{\beta-1-\epsilon} \le C_{\epsilon,\kappa,\nu,\beta}\norm{\varphi}_\beta
\end{equation}
where $C$ depends continuously on its parameters in the region $\epsilon, \kappa > 0, \, \beta>1/2$.  The point here is that without $T_\kappa\varphi=0$, we can define the limit as $\epsilon\downarrow 0$ of $(k^2-\kappa^2-i\epsilon)^{-1}\hat{\varphi}(\mathbf{k})$ which for $\varphi\in L^2_\beta$ with $\beta >1/2$ lies in $L^2_{-\beta}$ but we can never get better than $L^2_{-1/2}$.  When $T_\kappa\varphi=0$, by having $\beta$ large we can get $\varphi$ into a suitable $L^2_{\gamma}$ and, in particular, into $L^2$.

We can now describe the basic strategy of solving problems (A) and (B) for any $\alpha > 1$.

\textbf{Step 1.} Find a triple of spaces $X \subset L^2(\bbR^\nu,d^\nu x) \subset X^*$ where $X$ is a dense subspace of $L^2$ and which is a Banach space in a norm, $\norm{\cdot}_X$, so that for $\psi \in X$, we have that $\norm{\psi}_2 \le \norm{\psi}_X$.  Any $\varphi \in L^2$ acts as a bounded linear functional on $X$ via $\ell_\varphi(\psi) = \jap{\bar{\varphi},\psi}$ so $L^2 \subset X^*$ which can be shown to be dense.  Note that when $\varphi \in L^2$, we have that $\norm{\varphi}_{X^*} \le \norm{\varphi}_2$.  In the Agmon approach, $X = L^2_\beta(\bbR^\nu)$ for some $\beta > 1/2$ and $X^*=L^2_{-\beta}(\bbR^\nu)$.  Let $H_0$ be a self-adjoint operator which in the Agmon setup is a constant coefficient elliptic partial differential operator although we'll mainly be interested in the case $H_0=-\Delta$. By the norm inequalities, for any $z \in \bbC\setminus [E_0,\infty)$ (where $E_0$ is the bottom of the spectrum of $H_0$), $(H_0-z)^{-1}$ is bounded from $X$ to $X^*$.  One must pick $X$ so that $(H_0-z)^{-1}$, as bounded maps from $X$ to $X^*$ has a continuous extension to $[E_0,\infty)$ with a finite set of points removed.  The extension is from above or below the real axis and the two limits need not be equal.  In our case where $E_0=0$, the finite set is only $E_0$.  In the general elliptic case, it is the set of critical points of the defining symbol.  As explained above, in the Agmon setup, where $X=L^2_\beta, \beta > 1/2$, we have the required continuity of the boundary values.  In the Kato--Kuroda theory, $X$ is an abstract space which can be chosen in various ways.

\textbf{Step 2.} Restrict acceptable potentials, $V$, to functions $V: X^* \to X$ or, more generally so that $V(H_0-E_0+1)^{-1}$ is bounded from $X$ to itself.  In fact, we require this to be a compact operator from $X$ to itself.  In the Agmon $L^2_\beta$ case, for $H_0=-\Delta$, one needs that $(1+|x|^2)^\beta V(-\Delta+1)^{-1}$ is compact as an operator on $L^2$.  In particular, if \eqref{15.1} holds, we need that $\alpha > 2\beta$, so if $\alpha > 1$ we can pick $\beta$ with $1/2 < \beta <\alpha/2$.  Thus, the results below will solve problems (A) and (B) when $\alpha > 1$.

\textbf{Step 3.} For simplicity, we henceforth suppose $E_0=0$ and that $E_0$ is the only critical point as happens for the Schr\"{o}dinger case.  Under these assumptions, $B(z) = -(H_0-\kappa^2)^{-1}V$ for $z=\kappa^2; \, \textrm{Im}\kappa \ge 0, \kappa \ne 0$ is compact operator on $X^*$, continuous in $\kappa$ and analytic for $\kappa \in \bbC_+$.  By a version of the analytic Fredholm theorem (see \cite[Theorem VI.14]{RS1}), there is a set $\calE \subset (0,\infty)$, so that $\calE$ is a closed set (i.e. its only limit points are in $\calE$ or are $0$ or $\infty$) of (real) Lebesgue measure $0$ and so that if $z \notin \calE$, then $(\bdone-B(z))^{-1}$ exists and is continuous in $z$ there.  One proves that $(H-z)^{-1} = (1-B(z))^{-1}(H_0-z)^{-1}$ originally for $\textrm{Im} z \ne 0$ and then as maps from $X$ to $X^*$ for $z \notin \calE$.

\textbf{Step 4.}  This suffices to get existence and completeness of wave operators.  Kato--Kuroda \cite{KKRocky, KKSimple} have arguments to get this.  In his original announcement, Agmon \cite{AgmonAnon} didn't mention scattering.  If one can decompose $V=A^*B$ so that $A, B: X^* \to L^2$ (perhaps after multiplication by $(H_0+1)^{-1/2}$), then one can show that $A, B$ are locally smooth for both $H$ and $H_0$ on $(0,\infty)\setminus\calE$ and so by Theorem \ref{T14.10}, one gets existence and completeness (ideas due to Lavine \cite{Lav3, Lav5}).  In later publications, Agmon and H\"{o}rmander have other ways of proving existence and completeness by exploiting a radiation condition.

\textbf{Step 5.} In general, from this, one gets purely a.c. on $(0,\infty)\setminus\calE$ so any singular spectrum on $(0,\infty)$ lies in $\calE$.

\textbf{Step 6.} Suppose we show that any $\lambda_0 \in \calE$ is an $L^2$ eigenvalue of $H$.  Then $\calE \cup \{0,\infty\}$ is a countable closed subset of $\bbR$ which cannot support a singular continuous measure.  In this way, one solves problem (B).

\textbf{Step 7.} If $\varphi \in L^2_{-\beta}$ and $B(\lambda_0+i0)\varphi=\varphi, \, \lambda_0=\kappa^2$, then
\begin{align*}
  0=\textrm{Im}\jap{V\varphi,\varphi} &= \textrm{Im}\jap{V\varphi,(H_0-\lambda_0-i0)^{-1}V\varphi}   \\
                                      &= \frac{\pi \kappa^{\nu-2}}{2} \norm{T_\kappa V\varphi}^2
\end{align*}
so $T_\kappa V\varphi=0$.  Therefore by \eqref{15.17}, $Q_\kappa V\varphi=B(\kappa)\varphi \in L^2_{\alpha-\beta-1-\epsilon}$ for all $\epsilon>0$.  For example, if $\alpha>3/2$, we can pick $\beta > 1/2$ but close to it and $\epsilon$ small so that $\alpha-\beta-1-\epsilon \ge 0$.  Thus $\varphi \in L^2$ and is an eigenfunction.  By invoking Step 6, we see that when $\alpha > 3/2$, we can solve problem (B).  The restriction $\alpha > 5/4$ in Theorem \ref{T15.1} comes from a consideration like this -- what is needed to deduce that $\varphi \in L^2$.

\textbf{Step 8.}  Agmon had the idea of iterating the argument in Step 7!  If we know that $\varphi \in L^2_\gamma$, since $T_\kappa V\varphi = 0$, we have that $\varphi \in L^2_{\alpha+\gamma-1-\epsilon}$, so if $\alpha >1$, we can increase $\gamma$ by an arbitrary amount less than $\alpha-1$.  If $\alpha - 1 > 1/2m$ starting in $L^2_{-\beta}$ with $\beta$ very close to $1/2$, we see by iterating $m$ times that $\varphi$ is an $L^2$ eigenfunction.  In this way, one solves problem (B) for all $\alpha > 1$.

\textbf{Step 9.}  Once one controls the resolvent, one can obtain eigenfunctions via the Lippmann--Schwinger equation.  Knowing that $\calE$ is countable shows the expansion only has a.c. spectrum and point spectrum.

This completes our sketch of the scheme behind the work of Kato--Kuroda and Agmon; see Reed--Simon \cite[Section XIII.8]{RS4} for more details.  After Agmon's argument appeared, various authors realized that the iteration idea in Step 8 could improve their results.  In particular, Kuroda \cite{KurodaAgm1, KurodaAgm2} was able to extend the proof of Theorem \ref{T15.1} to $\alpha > 1$.  He extended this work to fairly general elliptic operators.

The ideas in the Agmon--Kato--Kuroda work have been extended to long range potentials (where \eqref{15.1} holds for suitable $\alpha \in (0,1]$ but we also have $(1+|x|)^{-1-\alpha}$ decay of $\mathbf{\nabla}V$).  One needs to use modified wave operators following Dollard \cite{Dollard}.  There is a vast literature and we will not try to summarize it -- see the books of Derezi\'{n}ski--G\'{e}rard \cite{DerGer} and Yafaev \cite{Yafaev1, Yafaev2}.

The above approach uses the fact that for $L^2_\beta, \, \beta>1/2$, there is a map restricting $\hat{\varphi}$ to the sphere.  One proves this by essentially flattening the sphere.  If we replaced $L^2_\beta$ by $L^p$, we cannot restrict to hyperplanes but remarkably, one can sometimes restrict to curved hypersurfaces like the spheres we needed above.  The associated bounds are known as the Tomas--Stein Theorem (see \cite[Section 6.8]{HA}).  Ionescu--Schlag \cite{IonSch} have developed a theory of scattering and spectral theory under suitable $L^p$ conditions on $V$ using the Tomas--Stein bounds.

\section{Scattering and Spectral Theory, IV: Jensen--Kato Theory} \lb{s16}

This is the last section on ``scattering and spectral'' theory although it involves something closer to diffusion than scattering and the connection to spectral theory is weak.  Still, since it involves large time behavior of $e^{-itH}$, it belongs in this set of ideas.  In any event, we'll discuss a lovely paper of Jensen and Kato \cite{JenKato} involving Schr\"{o}dinger operators, $H=-\Delta+V$, on $\bbR^3$.

One issue that they discuss is the large time behavior of $e^{-itH}$ and its rate of decay.  At first sight, speaking of decay seems puzzling since for $\varphi\in L^2$, we have that $\norm{e^{-itH}\varphi}_2=\norm{\varphi}_2$ has no decay.  But consider the integral kernel when $V=0$ on $\bbR^\nu$
\begin{equation}\label{16.1}
  e^{it\Delta}(x,y) = (4\pi it)^{-\nu/2}e^{i|x-y|^2/4t}
\end{equation}
which shows that
\begin{equation}\label{16.2}
  \sup_{x,y} |e^{it\Delta}(x,y)| = (4\pi |t|)^{-\nu/2}
\end{equation}
so
\begin{equation}\label{16.3}
  \norm{e^{it\Delta}\varphi}_\infty \le (4\pi |t|)^{-\nu/2} \norm{\varphi}_1
\end{equation}
\eqref{16.3} is, in fact, equivalent to \eqref{16.2}.  Since Jensen and Kato use Hilbert space methods, instead of maps from $L^1$ to $L^\infty$, they consider maps between weighted $L^2$ spaces, specifically from $L^2_s$ to $L^2_{-s}$ where $L^2_s$ is given by \eqref{15.14A}.  For example, $\eqref{16.3}$ immediately implies that $\norm{e^{it\Delta}\varphi}_{2,-s} \le C_{\nu,s} t^{-\nu/2}\norm{\varphi}_{2,s}$ so long as $s \ge \nu/2$.

If $H_0=-\Delta$ is replaced by $H=H_0+V$, there is a new issue that arises.  If $H\varphi=E\varphi$ for $\varphi \in L^2$, then $e^{-itH}\varphi=e^{-itE}\varphi$ has no decay in any norm.  Thus one must only try to prove decay of $e^{-itH}P_c(H)$ where $P_c(H)$ (``c'' is for continuous spectrum; if there is no singular continuous spectrum, it is the same as $P_{ac}$) is the projection onto the orthogonal complement of the eigenvectors.  Jensen--Kato don't use $e^{-itH}P_c(H)$ but the equivalent
\begin{equation}\label{16.3A}
  e^{-itH} - \sum_{j=1}^{N}e^{-iE_jt}P_j
\end{equation}
where $\{E_j\}_{j=1}^N$ are the eigenvalues and $P_j$ the projections onto the associated eigenspace $\ker(H-E_j)$.

In the free case, we note that it is easy to see (\cite[Corollary to Theorem XI.14]{RS3}) that if $0 \notin \supp(\widehat{\varphi})$ for $\varphi \in \calS(\bbR^\nu)$, then $\sup_{|x|\le R} |e^{it\Delta}\varphi(x)|$ is $\textrm{O}(t^{-N})$ for all $N$.  That is the diffusive term $t^{-\nu/2}$  is connected to low energies.  A critical realization of Jensen--Kato is that large $t$ asymptotics as maps of $L^2_{s}$ to $L^2_{-s}$ is connected to the behavior of the resolvent $(H-z)^{-1}$ near $z=0$.

For a while now we return to $\nu=3$, the only case considered by Jensen--Kato.  As we'll see, $\nu=3$ is perhaps the simplest case with a rich structure.  Roughly speaking, Jensen--Kato consider $V's$ obeying
\begin{equation}\label{16.3B}
  |V(x)| \le C(1+|x|)^{-\beta}
\end{equation}
They always require $\beta > 2$ and often need $\beta > 3$ or even larger.  In fact, for some of their results, they only need $(1+|x|)^\beta V \in L^{3/2}_{unif}$, but for simplicity we'll only quote results below where the pointwise bound \eqref{16.3B} holds.  Prior to their paper, there was work of Rauch \cite{RauchJK} which motivated them.  He supposed $|V(x)| \le C_1 e^{-C_2|x|}$ and instead of $L^2$--operator norms of $(1+|x|)^{-s}e^{-itH}P_c(H)(1+|x|)^{-s}$, he considered norms $e^{-\epsilon |x|}e^{-itH}P_c(H)e^{-\epsilon |x|}$.   He found for all but a discrete set of $\xi \in \bbR$, with $H(\xi) = -\Delta+\xi V$, one has $t^{-3/2}$ decay for the relevant norms of $e^{-itH(\xi)}$ and, for a discrete set of $\xi$'s, $t^{-1/2}$ decay.  Jensen--Kato extended this result for $L^2_s$ to $L^2_{-s}$ with $s > 5/2$ and $\beta > 3$.  Several years earlier, Yafaev \cite{YafaevVirtual}, in connection with his work on the Efimov effect \cite{YafaevEfimov}, had studied low energy behavior of the resolvent (but not high energy asymptotics of the unitary group) in the case of a zero energy resonance (case (1) in the language of Jensen--Kato).

It is natural to restrict at least to $\beta > 2$ for small energy behavior.  The Birman--Schwinger kernel \cite[Section 7.9]{OT}, $|V(x)|^{1/2}V(y)^{1/2}/4\pi |x-y|$, is Hilbert--Schmidt if \eqref{16.3B} holds for $\beta>2$ and, in general may not even be a bounded operator if $\beta < 2$ (and if $\beta=2$, can be bounded but not compact).  Thus, $\beta > 2$ implies that $-\Delta+V$ has only finitely many negative eigenvalues, each of finite multiplicity.

As we've mentioned, the key input for the Jensen--Kato large time results is an analysis of the resolvent, $R(z)=(H-z)^{-1}$ for $z$ near zero.  The free resolvent $R_0(z)=(H_0-z)^{-1}$ has integral kernel
\begin{equation}\label{16.4}
  G_0(x,y;z) = \frac{e^{i\kappa|x-y|}}{4\pi|x-y|}
\end{equation}
where $\kappa$ obeys $\kappa^2=z$ with $\textrm{Im}(\kappa)>0$ for $z \in \bbC\setminus [0,\infty)$ (with obvious limits if $z$ approaches $\bbR$ from either $\bbC_+$ or $\bbC_-$).  It is only in dimension 3 (and 1) that $G_0$ is so simple; in other dimensions, it is a more complicated Bessel function.  For $z \in \bbC\setminus [0,\infty)$, one has that
\begin{equation}\label{16.5}
  R(z)=(1+R_0(z)V)^{-1}R_0(z)
\end{equation}

Following Agmon and Kuroda (see Section \ref{s15}), Jensen--Kato use the weighted Sobolev spaces, $H^{m,s}(\bbR^3)$ of those $\varphi$ which obey
\begin{equation}\label{16.6}
  \norm{\varphi}_{m,s}=\norm{(1+|x|^2)^{s/2}(1-\Delta)^{m/2}\varphi}_2 < \infty
\end{equation}
For example, we can take the completion of $\calS(\bbR^3)$ in this norm or, since $(1+|x|^2)^{s/2}(1-\Delta)^{m/2}$ is a map of tempered distribution to themselves, we can take those tempered distributions for which the quantity in the norm on the right of \eqref{16.6} is in $L^2$.

Let $K_0$ be the operator with integral kernel $(4\pi|x-y|)^{-1}$, i.e. $G_0(x,y;0)$.  Jensen--Kato prove that if $V$ obeys \eqref{16.3B} with $\beta > 2$, then $K_0V$ is a compact operator on $L^2_{-s}$ if $1/2<s<\beta-1/2$, indeed it is compact on $H^{1,s}$.  It is also true that extended from $\kappa \in \bbC_+$ to it $\kappa \in \bbC_+\cup\bbR$, $VR_0(\kappa^2)$ is H\"{o}lder continuous (and compact).   While Jensen--Kato don't prove it that way, we note that this follows from the generalized Stein--Weiss inequalities \cite[Theorem 6.2.5]{HA}.

Thus, to understand the small $z$ behavior of $R(z)$, we need to know about $(1+K_0V)^{-1}$.  By compactness, this inverse exists if and only if
\begin{equation}\label{16.7}
  (1+K_0V)\varphi=0
\end{equation}
has no non-zero solutions, $\varphi \in H^{1,-s}$.  If $\varphi$ obeys \eqref{16.7}, it is a distributional solution of $(-\Delta+V)\varphi=0$.  Let $\calM$ be the set of all solutions of \eqref{16.7} in $H^{1,-s}$; Jensen--Kato prove that it is independent of which $s$ is chosen in $(1/2,\beta-1/2)$. By compactness $\dim\calM < \infty$.  It is important to know if $\varphi \in L^2$.  \eqref{16.7} says that
\begin{equation}\label{16.8}
  \varphi(x)=-\frac{1}{4\pi}\int \frac{1}{|x-y|} V(y)\varphi(y)\,d^3y
\end{equation}
so that
\begin{equation}\label{16.9}
  \varphi(x) = -\frac{1}{4\pi|x|}\int V(y)\varphi(y) \, d^3y+ \textrm{o}\left(\frac{1}{|x|}\right)
\end{equation}
Thus, if $\int V(y)\varphi(y)\,d^3y \ne 0$, then $\varphi \notin L^2$.  One can show that if $\int V(y)\varphi(y)\,d^3y = 0$, then $\varphi \in L^2$.  Thus, in $\calM$, the set of $L^2$ solutions is either all of $\calM$ or a space of codimension $1$.  If $\calM$ has non--$L^2$--solutions, we say that there is a \emph{zero energy resonance}.  Jensen--Kato thus consider four cases:

(0) (\textbf{regular case}) $\calM=\{0\}$ so $(1+K_0V)^{-1}$ exists.  Since $K_0V$ is compact, the set of $\xi \in \bbR$ for which $\xi V$ is not regular is a discrete set.

(1) (\textbf{pure resonant case}) $\calM \ne \{0\}$ but there are no $L^2$ functions in $\calM$.  This implies that $\dim\calM = 1$.

(2) (\textbf{pure eigenvalue case}) $\calM \ne \{0\}$ and $\calM \subset L^2$.  Thus $0$ is an eigenvalue but there is no resonance.

(3) (\textbf{mixed case}) $\calM \ne \{0\}$ and $\calM$ contains both $L^2$ and non--$L^2$ functions.  Then $\dim\calM\ge 2$ and the set of $L^2$ solutions has codimension 1.

Later, we'll see that in a sense, case (1) is generic among the singular cases.  We'll see similar qualitative behavior in the three singular cases but the detailed expressions for coefficients depends on the case.

Jensen--Kato start by noting the expansion in $\kappa = \sqrt{z}$ when $V=0$.  Given \eqref{16.4}, we see that
\begin{equation}\label{16.9A}
  R_0(\kappa^2) = \sum_{j=0}^{\infty} (i\kappa)^j K_j
\end{equation}
where $K_j$ has the integral kernel
\begin{equation}\label{16.10}
  K_j(x,y) = |x-y|^{j-1}/4\pi j!
\end{equation}
Then, for $j \ge 1$, $K_j$ is bounded from $H^{-1,s}$ to $H^{1,-s}$ if and only if $s > j+1/2$.  That means if we fix $s$, we have an asymptotic series only to any order $J < s-1/2$.  Since $V$ obeying \eqref{16.3B} maps $L^2_{-s}$ to $L^2_{s}$ if and only if $s < \beta/2$, we see that for fixed $\beta$, we can only expect to get an expansion including $\kappa^j$ terms if $j <\tfrac{1}{2}(\beta-1)$.  This explains the conditions on $\beta$ in the theorems below.  Jensen--Kato prove, with explicit formulae for $B^{(0)}_j,\,j=0,1$,

\begin{theorem} [Jensen--Kato \cite{JenKato}] \lb{T16.1} Assume that $V$ is regular at $\kappa=0$, $\beta > 3$ and $s > 3/2$.  Then for explicit operators $B^{(0)}_0 \ne 0$ and $B^{(0)}_1$ from $L^2_s$ to $L^2_{-s}$ as operators between those spaces and $\textrm{Im}\kappa \ge 0$
\begin{equation}\label{16.11}
  R(\kappa^2)= B^{(0)}_0 + i\kappa B^{(0)}_1+\textrm{o}(\kappa)
\end{equation}
If $\beta > 5$ and $s > 5/2$, then $\textrm{o}(\kappa)$ can be replaced by $\textrm{O}(\kappa^2)$.
\end{theorem}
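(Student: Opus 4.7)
The plan is to start from the second resolvent identity $R(\kappa^2) = (1+R_0(\kappa^2)V)^{-1}R_0(\kappa^2)$, expand $R_0(\kappa^2)$ in powers of $i\kappa$ using \eqref{16.9A}, and invert the operator $1+R_0(\kappa^2)V$ by a Neumann series built around the (assumed invertible) leading term $1+K_0V$.

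First I would carefully record the mapping properties. From \eqref{16.10}, $K_j$ has kernel $|x-y|^{j-1}/(4\pi j!)$. A Schur/Hardy-type estimate shows $K_j:L^2_s\to L^2_{-s}$ is bounded exactly when $s>j+1/2$; for $j=1$ this is a rank-one operator $\varphi\mapsto (4\pi)^{-1}\bdone\cdot\int\varphi\,d^3y$, which requires $s>3/2$ on both sides, and for $j=2$ it requires $s>5/2$. Composing with multiplication by $V$, which maps $L^2_{-s}\to L^2_{\beta-s}$, one sees that $K_jV:L^2_{-s}\to L^2_{-s}$ is bounded when $\beta-s>s-\eta$ for small $\eta$ (and compact under the same hypothesis by the Rellich-type compactness used for $K_0V$), explaining the numerical constraints $\beta>3, s>3/2$ for the first-order expansion and $\beta>5, s>5/2$ for the second-order one. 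Thus, uniformly for $\kappa$ in a small half-disc $\{\textrm{Im}\,\kappa\ge 0, |\kappa|\le\kappa_0\}$,
\begin{equation}
R_0(\kappa^2)V = K_0V + i\kappa\,K_1V + E(\kappa),
\end{equation}
where $\|E(\kappa)\|=\oh(\kappa)$ (respectively $\Oh(\kappa^2)$ under the stronger hypotheses) as operators on $L^2_{-s}$ (and on $H^{1,-s}$).

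Next I would invoke the regularity hypothesis: by definition, $1+K_0V$ is invertible as a bounded operator on $L^2_{-s}$, with inverse $T_0\equiv(1+K_0V)^{-1}$. Write
\begin{equation}
1+R_0(\kappa^2)V=(1+K_0V)\bigl[1+T_0(i\kappa K_1V+E(\kappa))\bigr].
\end{equation}
Because $T_0$ is bounded and the bracket perturbation has operator norm $\Oh(\kappa)$, the bracket is invertible for $|\kappa|$ small and a Neumann series gives
\begin{equation}
(1+R_0(\kappa^2)V)^{-1}=T_0-i\kappa\,T_0K_1V T_0+\tilde E(\kappa),
\end{equation}
with $\|\tilde E(\kappa)\|=\oh(\kappa)$, respectively $\Oh(\kappa^2)$. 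Multiplying on the right by $R_0(\kappa^2)=K_0+i\kappa K_1+\Oh(\kappa^2)$ and collecting powers of $\kappa$ yields \eqref{16.11} with
\begin{equation}
B_0^{(0)}=T_0K_0,\qquad B_1^{(0)}=T_0K_1-T_0K_1VT_0K_0,
\end{equation}
together with an error in the prescribed class from $L^2_s$ to $L^2_{-s}$. Nonvanishing of $B_0^{(0)}$ is immediate from $K_0\ne 0$ and invertibility of $T_0$.

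The main technical obstacle is justifying that the remainder in the expansion of $R_0(\kappa^2)$ is truly $\oh(\kappa)$ (or $\Oh(\kappa^2)$) as a \emph{norm} limit between the weighted spaces, rather than merely strong convergence. For the leading continuity $R_0(\kappa^2)\to K_0$ one uses that the difference kernel $(e^{i\kappa|x-y|}-1)/(4\pi|x-y|)$ is bounded by $\min(\kappa,|x-y|^{-1})$ and applies dominated convergence in the Schur test on weighted spaces; for the $i\kappa K_1$ correction one subtracts the rank-one leading term and controls $(e^{i\kappa|x-y|}-1-i\kappa|x-y|)/(4\pi|x-y|)$ by $\kappa^2|x-y|$, which is Hilbert--Schmidt from $L^2_s$ to $L^2_{-s}$ when $s>5/2$, giving the $\Oh(\kappa^2)$ refinement. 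Everything else is routine composition of bounded operators.
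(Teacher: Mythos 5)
Your overall strategy — expand $R_0(\kappa^2)$ via \eqref{16.9A}, invert $1+R_0(\kappa^2)V$ by a Neumann series pivoted on the assumed invertibility of $1+K_0V$, and compose with $R_0(\kappa^2)$ to read off $B_0^{(0)}=T_0K_0$ and $B_1^{(0)}=T_0K_1-T_0K_1VT_0K_0$ — is exactly the route the paper indicates Jensen--Kato take, and the final formulas and the nonvanishing argument for $B_0^{(0)}$ are correct. Two points deserve tightening.

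First, the error analysis as written covers the zeroth-order continuity $R_0(\kappa^2)\to K_0$ and the $O(\kappa^2)$ case under the stronger hypothesis, but never explicitly establishes the $o(\kappa)$ bound for $R_0(\kappa^2)-K_0-i\kappa K_1$ when $3/2 < s \le 5/2$, which is precisely the case the first half of the theorem addresses. The idea you already use (dominated convergence in the Schur test) does carry over: divide the remainder by $\kappa$, note the resulting kernel is bounded uniformly in $\kappa$ by $\min\bigl(\tfrac{1}{2\pi},\tfrac{\kappa|x-y|}{8\pi}\bigr)$, and run the Schur argument as $\kappa\downarrow 0$; or, more quantitatively, interpolate $\min(a,b)\le a^{1-\theta}b^{\theta}$ with $\theta\in(0,s-3/2)$ to get $\|R_0(\kappa^2)-K_0-i\kappa K_1\|_{L^2_s\to L^2_{-s}}=O(\kappa^{1+\theta})$. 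Either way, the step should be stated rather than folded into the zeroth-order discussion.

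Second, the boundedness condition you write for $K_jV$ on $L^2_{-s}$ (``$\beta-s>s-\eta$'') is not the operative one. What is actually needed is that $V:L^2_{-s}\to L^2_{\beta-s}$ and that $K_j:L^2_{\beta-s}\to L^2_{-s}$, which for the rank-one $K_1$ and the linear-kernel $K_2$ forces both $s>j+\tfrac12$ and $\beta-s>j+\tfrac12$ — hence the role of $\beta>2j+1$ and a window $j+\tfrac12<s<\beta-j-\tfrac12$. Since the weighted spaces are nested in $s$ and the theorem's conclusion is inherited by larger $s$, this does not invalidate the result, but it is the cleaner way to see where the thresholds $\beta>3,\,s>3/2$ and $\beta>5,\,s>5/2$ come from.
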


They also prove (with explicit formula for $B^{(k)}_j$) that

\begin{theorem} [Jensen--Kato] \lb{T16.2} Assume that $V$ is not regular at $\kappa=0$, $\beta > 5$ and $s > 5/2$.  Then for explicit operators $B^{(k)}_{-2}$ and $B^{(k)}_{-1}, \, k=1,2,3$ from $L^2_s$ to $L^2_{-s}$ as operators between those spaces and $\textrm{Im}\kappa \ge 0$, one has that
\begin{equation}\label{16.12}
  R(\kappa^2)= -\kappa^{-2} B^{(k)}_{-2} - i\kappa^{-1} B^{(k)}_{-1}+\textrm{O}(1)
\end{equation}
if the singular point is of type $k$.  Moreover, if $k=1$, $B^{(1)}_{-2} = 0,\,B^{(1)}_{-1}\ne 0$ and if $k=2, 3$, then $B_{-2}^{(k)} \ne 0$.
\end{theorem}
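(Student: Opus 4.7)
The plan is to derive both the singular expansion and the non-vanishing claims from the identity $R(\kappa^2) = T(\kappa)^{-1} R_0(\kappa^2)$, where $T(\kappa) := 1 + R_0(\kappa^2) V$. Expanding $R_0(\kappa^2) = \sum_{j \ge 0}(i\kappa)^j K_j$ with $K_j$ from \eqref{16.10}, I would write
\begin{equation*}
  T(\kappa) = T_0 + i\kappa\, T_1 + (i\kappa)^2 T_2 + O(\kappa^3), \qquad T_0 = 1 + K_0 V,\ T_j = K_j V.
\end{equation*}
The hypotheses $\beta > 5$, $s > 5/2$ are designed so that $T_0, T_1, T_2$ are compact on $L^2_{-s}$ (or equivalently on $H^{1,-s}$) and the remainder is $o(\kappa^2)$ in operator norm; the $K_2 V$ term, whose kernel grows linearly in $|x-y|$, is the bottleneck that forces $\beta > 5$.

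In the singular case, $T_0$ is Fredholm of index zero with nontrivial finite-dimensional kernel $\mathcal{M}$. Let $P$ be a projection in $L^2_{-s}$ onto $\mathcal{M}$ and $Q$ a projection onto the adjoint kernel $\mathcal{M}^* := \ker(1 + V K_0)$; then $(1-Q) T_0 (1-P)$ is invertible between the complementary ranges, with inverse $S$ extended by zero. I would then apply a Grushin (Feshbach/Schur-complement) reduction: for small $\kappa$, $T(\kappa)$ is invertible on $L^2_{-s}$ if and only if a finite-rank effective operator $\mathcal{E}(\kappa) : \mathcal{M} \to \mathcal{M}^*$ is invertible, and the singular part of $T(\kappa)^{-1}$ is entirely encoded in $\mathcal{E}(\kappa)^{-1}$. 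A direct expansion of the Schur complement yields
\begin{equation*}
  \mathcal{E}(\kappa) = i\kappa\, Q T_1 P - \kappa^2\, Q\bigl[T_2 - T_1 S T_1\bigr] P + O(\kappa^3).
\end{equation*}

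The structural step is to read off the pole orders by decomposing $\mathcal{M} = \mathcal{M}_e \oplus \mathcal{M}_r$, where $\mathcal{M}_e = \mathcal{M} \cap L^2$ is the zero-energy eigenspace (forced by \eqref{16.9} to satisfy $\int V \varphi = 0$) and $\mathcal{M}_r$ is a complementary resonance subspace; decompose $\mathcal{M}^*$ analogously. Since $K_1$ has the constant kernel $1/(4\pi)$, one has $T_1 \varphi = (4\pi)^{-1} \langle \bar V, \varphi \rangle\, \mathbf{1}$, so $T_1$ annihilates $\mathcal{M}_e$ and is injective on $\mathcal{M}_r$. Thus $Q T_1 P$ lives strictly in the resonance--resonance block, while the $\mathcal{M}_e \to \mathcal{M}_e^*$ block of $\mathcal{E}(\kappa)$ first becomes non-zero at order $\kappa^2$. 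Block-inverting the resulting structure gives: in type 1 ($\mathcal{M}_e = 0$) the leading behaviour is $\kappa^{-1}$, so $B^{(1)}_{-2} = 0$ while $B^{(1)}_{-1}$ is a nonzero rank-one operator built from the resonance function; in types 2 and 3 the $\mathcal{M}_e$-block yields a genuine $\kappa^{-2}$ pole whose residue, after composition with $R_0(\kappa^2) = K_0 + O(\kappa)$, is (up to sign) the $L^2$-orthogonal eigenprojection $P_0$ onto $\mathcal{M}_e$, so $B^{(k)}_{-2} = -P_0 \neq 0$ for $k = 2, 3$.

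The main obstacle is verifying that the $\kappa^2$-coefficient on the $\mathcal{M}_e$-block, i.e.\ $Q\bigl[T_2 - T_1 S T_1\bigr] P$ restricted to $\mathcal{M}_e \to \mathcal{M}_e^*$, is actually nondegenerate, so that the putative $\kappa^{-2}$ pole is genuine in types 2 and 3. Since $T_1$ annihilates $\mathcal{M}_e$, the correction $T_1 S T_1$ drops out of this block, reducing the question to showing that $Q T_2 P$ coincides on $\mathcal{M}_e$ with a nonzero multiple of the natural nondegenerate pairing induced by the $L^2$ inner product. This rests on the identity $\varphi = -K_0 V \varphi$ for $\varphi \in \mathcal{M}_e$ combined with Green's identity applied to the kernel $|x-y|/(8\pi)$ of $K_2$, and is the most delicate computational step. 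A secondary technical burden is the uniform control of $T(\kappa)^{-1}$ as a map between $L^2_s$ and $L^2_{-s}$ off the singular set, which dictates the sharp weight thresholds $\beta > 5$, $s > 5/2$ appearing in the hypotheses.
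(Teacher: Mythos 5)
Your Grushin (Feshbach/Schur) reduction is sound, but it is not the route Jensen and Kato took in 1979: their original proof projects off $\ker(1+K_0V)$ by hand and tracks the Neumann series for $(1+R_0(\kappa^2)V)^{-1}$ case by case. The systematic reformulation you sketch is essentially the later ``unified approach'' of Jensen--Nenciu \cite{JenNen}. Both routes give the same coefficients; yours is more robust (it extends cleanly to even dimensions, where $\log\kappa$ terms appear) at the cost of the technical scaffolding — compatible projections, the pseudoinverse $S$, coherent block decompositions of $\mathcal{M}$ and $\mathcal{M}^*$ under the $L^2_s$--$L^2_{-s}$ duality — that you gloss over.

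Your acknowledged obstacle — nondegeneracy of $Q_eT_2P_e$ on the eigenvalue block $\mathcal{M}_e$ — is indeed the analytic heart, and your instinct is right: for $\varphi,\psi\in\mathcal{M}_e$ one proves $\langle V\psi, K_2 V\varphi\rangle = \langle\psi,\varphi\rangle_{L^2}$ up to a universal constant, using $\varphi = -K_0V\varphi$, $\int V\varphi = 0$, and the explicit kernel $|x-y|/(8\pi)$. Thus the $(ee)$-block pairing is a nonzero multiple of the $L^2$ inner product, giving nondegeneracy and identifying $B^{(k)}_{-2}$ with the $L^2$-orthogonal eigenprojection $P_0$ onto $\mathcal{M}_e$. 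Note that the Laurent normalization $R(\kappa^2) = -\kappa^{-2}B^{(k)}_{-2} + \cdots$ then forces $B^{(k)}_{-2} = +P_0$, not $-P_0$ as you wrote.

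There is a second gap you do not flag. To extract $B^{(k)}_{-1}$ with an $O(1)$ remainder in cases $k=2,3$, the inversion of $\mathcal{E}(\kappa)$ needs the $\kappa^3$ coefficient on the $(ee)$-block, which is $Q_eT_3P_e$ with $T_3 = K_3V$, kernel $|x-y|^2/(24\pi)$ (the $T_1$-correction terms drop out since both $T_1P_e=0$ and $Q_eT_1=0$). As a general operator, $K_3:L^2_s\to L^2_{-s}$ requires $s>7/2$, $\beta>7$, clashing with the stated $s>5/2$, $\beta>5$. The escape is that for $\varphi\in\mathcal{M}_e$ the constraint $\int V\varphi=0$ kills the $|x|^2$-term in $(T_3\varphi)(x)$, leaving only a linear-in-$x$ function, which lies in $L^2_{-s}$ precisely when $s>5/2$; the pairing with $V\psi$ for $\psi\in\mathcal{M}_e$ likewise converges for $\beta>5$. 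You must make this explicit, since without it your expansion appears to require stronger weight hypotheses than the theorem grants.
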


\begin{remarks} 1. The explicit formulae have $B^{(k)}_j$ of finite rank for $k=-2,-1$.  If $\beta$ and $s$ are large enough, there should be asymptotic series of any prescribed order and the coefficients are all finite rank \cite{Murata1}, \cite[Prop. 7.1]{JenNen}.

2.  Rauch \cite{RauchJK} says that $B^{(k)}_{-1} \ne 0$ for all $k$ but Jensen--Kato have an explicit example where $B^{(2)}_{-1} = 0$.

3.  Using ideas from Klaus--Simon \cite{KlausSiThr} (discussed further below), one can prove not only that regular $V$'s are generic but among the irregular $V$'s, type (1) is generic and among those not of type (1), type (3) is generic.  For example, one can prove that for any $\beta > 5$, if $X_\beta=\{V \,|\, \norm{V}_\beta = \sup_x |(1+|x|)^\beta |V(x)| < \infty\}$, then the regular $V$'s are a dense open set and, in the set, $\wti{X}_\beta$ of not regular $V$'s (which is closed and so a complete metric space), the set of type (1) $V$'s is a dense open set.  Klaus--Simon only discuss $V \in C^\infty_0(\bbR^3)$ but that is for simplicity and their ideas work in this broader context.  These genericity results are not true for spherically symmetric $V$'s.  In that case. the space $\calM$, if non-zero, generically has a single angular momentum, $\ell$, and always has a finite number of them.  For each $\ell$, the set of $V$'s with only that $\ell$ is a relatively open subset of the closed subset of spherically symmetric elements of $\wti{X}_\beta$, so none is generic in the singular $V$'s.  $\ell=0$ is type (1), $\ell \ne 0$ is of type (2).  Cases of more one $\ell$ are of type (3) or (1) depending only on whether one of the $\ell$ values is $0$.
\end{remarks}

Jensen--Kato also studied low energy asymptotics of the $S$--matrix, and, importantly for the study of asymptotics of $e^{-itH}$, the low energy behavior of the derivative of the spectral measure
\begin{equation}\label{6.13}
  \frac{d}{d\lambda} P_{(-\infty,\lambda)}(H) \equiv P'_H(\lambda)
\end{equation}
A little thought about Stone's formula shows that if $R(z)$ has a limit $R(\lambda+i0)$ uniformly for $\lambda \in (a,b) \subset \bbR$, then
\begin{equation}\label{16.14}
  P'_H(\lambda) = \pi^{-1} \textrm{Im}\,R(\lambda+i0)
\end{equation}
where, for an operator, $A$, one writes $\textrm{Im}\,A = (A-A^*)/2i$.

Since $z=\kappa(z)^2$ with $\textrm{Im}\kappa > 0$ has that $\kappa(\bar{z})=-\overline{\kappa(z)}$, we see that by \eqref{16.14} that if
\begin{equation}\label{16.15}
  R(\kappa^2) = \sum_{j=-2}^{J} (i\kappa)^jQ_j+ \textrm{o}(|\kappa|^J)
\end{equation}
then $Q_j^*=Q_j$ and so, with $L=\left[\frac{J-1}{2}\right]$,
\begin{equation}\label{16.16}
  P'(\lambda)=\pi^{-1}\sum_{\ell=-1}^{L} (-1)^\ell \sqrt{\lambda}^{2\ell+1}Q_{2\ell+1}+\textrm{o}(\sqrt{\lambda}^J)
\end{equation}

In particular, if \eqref{16.11} holds (with an $\textrm{O}(\kappa^2)$ term), then
\begin{equation}\label{16.17}
  P'(\lambda) = \pi^{-1} B^{(0)}_1 \lambda^{1/2} + \textrm{O}(\lambda)
\end{equation}
and if \eqref{16.12} holds, then
\begin{equation}\label{16.18}
  P'(\lambda) = \pi^{-1} B^{(k)}_{-1} \lambda^{-1/2} + \textrm{O}(1)
\end{equation}
In this way Jensen--Kato control $P'(\lambda)$ for small $\lambda$.

They also find a large $\lambda$ result.  They prove that for $k=1,2,\dots$ and $s > k+1/2, \, \beta > 2k+1$, then as maps from $L^2_s$ to $L^2_{-s}$, one has that
\begin{equation} \lb{16.18A}
  \left(\frac{d}{d\lambda}\right)^kP'(\lambda) = \textrm{O}(\lambda^{-(k+1)/2})
\end{equation}
as $\lambda \to \infty$.

With these in hand they can estimate
\begin{equation}\label{16.19}
  e^{-itH}P_c(H) = \int_{0}^{\infty} e^{-it\lambda}P'(\lambda)\,d\lambda
\end{equation}
The large $\lambda$ contribution as $t \to \infty$ can be controlled using repeated integration by parts and the decay estimates in \eqref{16.18A} on derivatives of $P'(\lambda)$.  One sees that the integral on the right side of \eqref{16.19} is dominated by the small $\lambda$ contributions.  Using the fact that the Fourier transform of $\lambda^{(j-1)/2} \chi_{(0,\infty)}(\lambda)$ is the distribution $(-it)^{-(j+1)/2}$ regularized at $t=0$, one sees that

\begin{theorem} [Jensen--Kato \cite{JenKato}] \lb{T16.3} Let $V$ obey \eqref{16.3B} with $\beta>3$, $s>5/2$.  Suppose that $V$ is regular at zero energy.  As a map from $L^2_s$ to $L^2_{-s}$, we have that as $t \to \infty$, \eqref{16.3A} is asymptotic in norm to
\begin{equation}\label{16.20}
  -(4\pi i)^{-1}B^{(0)}_1 t^{-3/2} + \textrm{o}(t^{-3/2})
\end{equation}
\end{theorem}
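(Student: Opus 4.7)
The starting point is the spectral representation \eqref{16.19}, which exhibits $e^{-itH}P_c(H)$ as an oscillatory integral $\int_0^\infty e^{-it\lambda}P'(\lambda)\,d\lambda$ of the density of the spectral measure. The plan is the standard one for such oscillatory integrals: isolate the neighborhood of the critical point $\lambda=0$ of the phase (which is also the only place where $P'(\lambda)$ fails to be smooth), treat that piece by inserting the low-energy expansion \eqref{16.17}, and show that the tail from $\lambda$ bounded away from zero contributes only an error that is negligible compared to $t^{-3/2}$. Concretely, I would fix $\lambda_0>0$ small enough that the expansion \eqref{16.17} holds uniformly on $[0,2\lambda_0]$ and choose $\chi\in C_0^\infty(\mathbb{R})$ with $\chi\equiv 1$ on $[-\lambda_0,\lambda_0]$ and $\mathrm{supp}\,\chi\subset(-2\lambda_0,2\lambda_0)$, and write $e^{-itH}P_c(H)=I_{\mathrm{low}}(t)+I_{\mathrm{high}}(t)$ with obvious definitions, regarding both as operator-valued integrals from $L^2_s$ to $L^2_{-s}$.

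For $I_{\mathrm{high}}(t)$, I would integrate by parts in $\lambda$ using $e^{-it\lambda}=(it)^{-1}(-\partial_\lambda)e^{-it\lambda}$. The boundary contributions vanish since $(1-\chi)P'$ is compactly supported away from zero at the lower end and decays at $+\infty$. The repeated integration by parts trades each power of $t^{-1}$ for a derivative of $(1-\chi(\lambda))P'(\lambda)$, whose smoothness and decay are controlled by \eqref{16.18A}: under the hypothesis $\beta>3$, $s>5/2$ one has enough derivatives and enough decay (combined with the cutoff $(1-\chi)$ which suppresses the $\lambda^{-(k+1)/2}$ behavior uniformly away from zero) so that after integrating by parts twice, the result is absolutely integrable in $\lambda$ and bounded by $C\,t^{-2}$, which is $o(t^{-3/2})$.

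For $I_{\mathrm{low}}(t)$, substitute the expansion $P'(\lambda)=\pi^{-1}B^{(0)}_1\lambda^{1/2}+R(\lambda)$ (valid on $\mathrm{supp}\,\chi$) obtained from \eqref{16.17} and the identity \eqref{16.14}. The principal term gives
\begin{equation*}
\pi^{-1}B^{(0)}_1\int_0^\infty\chi(\lambda)e^{-it\lambda}\lambda^{1/2}\,d\lambda=\pi^{-1}B^{(0)}_1\int_0^\infty e^{-it\lambda}\lambda^{1/2}\,d\lambda+E(t),
\end{equation*}
where the unrestricted integral is the classical distributional Fourier transform $\Gamma(3/2)(it)^{-3/2}$, producing the stated coefficient $-(4\pi i)^{-1}B^{(0)}_1t^{-3/2}$ after simplification, while $E(t)=-\pi^{-1}B^{(0)}_1\int_0^\infty(1-\chi(\lambda))e^{-it\lambda}\lambda^{1/2}\,d\lambda$ is handled by another integration-by-parts argument on the region $\lambda\ge\lambda_0$ to give $O(t^{-N})$ for any $N$ (the amplitude $(1-\chi)\lambda^{1/2}$ is $C^\infty$ and compactly supported below, with polynomial growth suppressed by the cutoffs we insert in each step).

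The main obstacle is thus controlling the remainder $\int_0^\infty\chi(\lambda)e^{-it\lambda}R(\lambda)\,d\lambda$, where only $R(\lambda)=o(\sqrt\lambda)$ pointwise is automatic from \eqref{16.11} under the weaker hypothesis. A pointwise $o(\sqrt\lambda)$ bound alone is not enough to beat $t^{-3/2}$; one needs the error in the resolvent expansion as an operator-valued function of $\kappa$ to be continuous in a quantitative sense. The resolution, and the step I expect to require the most care, is to combine the explicit control of $R_0(\kappa^2)$ in \eqref{16.9A} with the identity \eqref{16.5} and regularity of $(1+K_0V)^{-1}$ on $L^2_{-s}$ at $\kappa=0$ (the hypothesis that $V$ is regular) to show that $R(\lambda)$ is the restriction to $\lambda>0$ of a function in $\kappa=\sqrt\lambda$ that is H\"older continuous at $\kappa=0$ with exponent strictly bigger than $1$, uniformly in operator norm. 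Given such control, a Riemann--Lebesgue/van der Corput type argument (substituting $\kappa=\sqrt\lambda$ and splitting into the smooth part and a genuine $o(\kappa)$ error) yields $o(t^{-3/2})$. Combining the three contributions gives the stated asymptotic.
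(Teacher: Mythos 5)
Your overall strategy is the one Jensen and Kato use and that the text sketches: spectral representation $e^{-itH}P_c(H)=\int_0^\infty e^{-it\lambda}P'(\lambda)\,d\lambda$, smooth low/high-energy cutoff, integration by parts for the high-energy piece, and the Fourier transform of $\lambda^{1/2}\chi_{(0,\infty)}$ for the leading term. Two points bear flagging. First, ``H\"older continuous at $\kappa=0$ with exponent strictly bigger than $1$'' does not say what you want — a H\"older exponent $>1$ forces the function to be locally constant. What $\beta>3$ actually supplies, and what you need to beat $t^{-3/2}$, is a quantitative remainder bound $R(\kappa^2)-B_0^{(0)}-i\kappa B_1^{(0)}=O(\kappa^{1+\theta})$ in operator norm for some $\theta>0$ depending on $\beta-3$; i.e., $C^{1,\theta}$ rather than $C^{0,\alpha}$, $\alpha>1$, regularity of $\kappa\mapsto R(\kappa^2)$ at $\kappa=0$. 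With that correction the low-energy remainder argument closes exactly as you describe. Second, integrating by parts twice in $I_{\mathrm{high}}(t)$ requires $\partial_\lambda^2\bigl[(1-\chi)P'\bigr]$ to be absolutely integrable at infinity; the derivative bound \eqref{16.18A} gives $\partial_\lambda^2 P'=O(\lambda^{-3/2})$ only under $\beta>5$, which the theorem does not assume. Under $\beta>3$ only $\partial_\lambda P'=O(\lambda^{-1})$ is available, which fails to be $L^1$ at infinity, so one either has to exploit the conditional convergence of $\int e^{-it\lambda}\,\partial_\lambda\bigl[(1-\chi)P'\bigr]\,d\lambda$ more carefully, or obtain a sharper large-$\lambda$ derivative estimate whose hypotheses do not involve $\beta$. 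As written, your high-energy estimate implicitly invokes a hypothesis the theorem does not supply.
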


\begin{theorem} [Jensen--Kato \cite{JenKato}] \lb{T16.4} Let $V$ obey \eqref{16.3B} with $\beta>3$, $s>5/2$.  Suppose that $V$ has an exceptional point of type (1) at zero energy.  Then, for a suitably normalized solution $\psi \in \calM$, we have that as a map from $L^2_s$ to $L^2_{-s}$, as $t \to \infty$, \eqref{16.3A} is asymptotic in norm to
\begin{equation}\label{16.20}
  (\pi i)^{1/2} t^{-1/2} \jap{\psi,\cdot}\psi + \textrm{o}(t^{-1/2})
\end{equation}
\end{theorem}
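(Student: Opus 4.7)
The plan is to reduce the large-time asymptotics to the low-energy behavior of the spectral density, following the template of Theorem~\ref{T16.3} but using the singular expansion of Theorem~\ref{T16.2} for the type~(1) case. Starting from Stone's formula and using that the spectrum of $H$ on $(0,\infty)$ is purely absolutely continuous (Agmon--Kato--Kuroda, Theorem~\ref{T15.2}, with the type~(1) assumption ruling out a zero-energy eigenvalue), I would write
\begin{equation*}
  e^{-itH} - \sum_{j=1}^{N} e^{-iE_j t} P_j = \int_{0}^{\infty} e^{-it\lambda}P'(\lambda)\,d\lambda,
\end{equation*}
with $P'(\lambda)=\pi^{-1}\,\textrm{Im}\,R(\lambda+i0)$, viewed as a norm-continuous function on $(0,\infty)$ with values in bounded operators from $L^2_s$ to $L^2_{-s}$ (using $s>5/2$). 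Fix $\chi\in C_0^\infty(\bbR)$ with $\chi\equiv 1$ near $0$ and $\supp\chi\subset[-1,1]$, and split the integral into a low-energy piece (with $\chi(\lambda)$) and a high-energy piece (with $1-\chi(\lambda)$).

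The high-energy piece is handled by repeated integration by parts in $\lambda$: the derivative bounds \eqref{16.18A} of Jensen--Kato, together with the absence of boundary contributions at $\lambda=0$ (since $\chi\equiv 1$ there) and rapid decay at infinity, yield a contribution of size $\textrm{O}(t^{-N})$ for any $N$, which is $\textrm{o}(t^{-1/2})$. For the low-energy piece I invoke Theorem~\ref{T16.2} with $k=1$: since $B^{(1)}_{-2}=0$, we have $R(\kappa^2) = -i\kappa^{-1}B^{(1)}_{-1}+\textrm{O}(1)$ as $\kappa\downarrow 0$ along the positive real axis, and taking imaginary parts gives
\begin{equation*}
  P'(\lambda) = -\pi^{-1}\lambda^{-1/2}\,\textrm{Re}\,B^{(1)}_{-1} + \textrm{O}(1) \qquad \text{on }(0,1).
\end{equation*}
The $\textrm{O}(1)$ remainder contributes $\textrm{O}(t^{-1})$ after one integration by parts, hence $\textrm{o}(t^{-1/2})$.

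The leading contribution then reduces to the oscillatory integral
\begin{equation*}
  \int_{0}^{\infty} e^{-it\lambda}\lambda^{-1/2}\chi(\lambda)\,d\lambda = \sqrt{\pi}\,(it)^{-1/2} + \textrm{O}(t^{-3/2}),
\end{equation*}
which I would evaluate by rescaling $u=t\lambda$ and rotating the contour from $[0,\infty)$ to $\{-is:s\geq 0\}$, reducing the main term to $\Gamma(1/2)=\sqrt{\pi}$; the $1-\chi$ tail contributes $\textrm{O}(t^{-3/2})$ by integration by parts. Multiplying through, the leading-order behavior is a definite constant multiple of $t^{-1/2}B^{(1)}_{-1}$. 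It remains to identify $B^{(1)}_{-1}$ as a rank-one operator of the form $c\jap{\psi,\cdot}\psi$, where $\psi$ spans $\calM$, and to absorb $c$ into the normalization of $\psi$ so that the overall coefficient becomes $(\pi i)^{1/2}$.

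The main obstacle is precisely this last identification: expanding $(1+R_0(\kappa^2)V)^{-1}$ in a Laurent series at $\kappa=0$ and extracting the residue as a rank-one operator proportional to $\jap{\psi,\cdot}\psi$. This requires a Riesz-projection analysis on the compact operator $K_0V$ acting on the appropriate weighted Sobolev space, using $\dim\calM=1$ in the type~(1) case, together with verification that the $\kappa^{-2}$ coefficient $B^{(1)}_{-2}$ vanishes---which reflects precisely the defining feature of type~(1), namely that the generator $\psi$ fails to lie in $L^2$, so there is no honest eigenprojection onto $\ker H$ at threshold. All remaining remainder estimates are routine given the H\"older continuity of $R_0(\kappa^2)V$ near $\kappa=0$ and the bounds $K_j\colon H^{-1,s}\to H^{1,-s}$ for $s>j+1/2$ already cited.
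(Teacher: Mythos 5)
Your plan is essentially the paper's approach: split the Stone-formula integral $\int_0^\infty e^{-it\lambda}P'(\lambda)\,d\lambda$ into low- and high-energy pieces, dispatch the tail by integration by parts using \eqref{16.18A}, feed the type~(1) resolvent expansion into $P'(\lambda)=\pi^{-1}\mathrm{Im}\,R(\lambda+i0)$ to isolate the $\lambda^{-1/2}$ singularity, and evaluate the Fourier transform of $\lambda^{-1/2}\chi_{(0,\infty)}$ to produce the $t^{-1/2}$ coefficient. That is exactly what Jensen--Kato do (and what the paper sketches between \eqref{16.14} and \eqref{16.19}), and the rank-one identification of $B^{(1)}_{-1}$ that you flag as the remaining work is indeed the substantive technical step, obtained in Jensen--Kato from their explicit formula for the $\kappa^{-1}$ residue of $(1+R_0(\kappa^2)V)^{-1}R_0(\kappa^2)$; the normalization $\int V(x)\psi(x)\,d^3x=\sqrt{4\pi}$ (the Remark after the Theorem) is what fixes the constant to $(\pi i)^{1/2}$.

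Two care points. First, you invoke Theorem~\ref{T16.2}, which as stated requires $\beta>5$ and $s>5/2$, whereas Theorem~\ref{T16.4} only assumes $\beta>3$, $s>5/2$. Under $\beta>3$ one has only a truncated expansion with an $\mathrm{o}(\kappa^{-1})$ remainder rather than $\mathrm{O}(1)$, so you cannot directly claim $P'(\lambda)=-\pi^{-1}\lambda^{-1/2}B^{(1)}_{-1}+\mathrm{O}(1)$; the Jensen--Kato argument for the $\mathrm{o}(t^{-1/2})$ error uses the weaker expansion together with a more careful oscillatory-integral estimate (and derivative bounds near $0$), not a single integration by parts against a bounded remainder. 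Your "$\mathrm{O}(1)$ remainder $\Rightarrow\mathrm{O}(t^{-1})$ after one integration by parts" step would require the stronger $\beta>5$ hypotheses. Second, since $B^{(1)}_{-1}$ is already self-adjoint (the paper notes $Q_j^*=Q_j$), the $\mathrm{Re}$ you wrap around it is superfluous; with the sign conventions of \eqref{16.12} and \eqref{16.14}, your computation $\mathrm{Im}\,R(\lambda+i0)=-\lambda^{-1/2}B^{(1)}_{-1}+\dots$ is correct, and the opposite sign appearing in \eqref{16.18} is a slip in the review.
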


\begin{remark} $\psi$ is normalized by $\int V(x)\psi(x)\,d^3x=\sqrt{4\pi}$
\end{remark}

That completes our discussion of the Jensen--Kato paper.  One obvious question left open by this work is what happens when $\nu \ne 3$.  This was answered for $\nu \ge 5$ by Jensen \cite{JensenGE5} and for $\nu=4$ by Jensen \cite{Jensen4} and Murata \cite{Murata1} (who also had results for more general elliptic operators); see also Albeverio et al \cite{Alb1, Alb2} .  The case $\nu=2$ with $\int V(x) \, d^2x \ne 0$ was treated by Boll\'{e} et al \cite{Bolle2D} and the general case by Jensen--Nenciu \cite{JenNen}.  For $\nu=1$ with exponentially decaying potentials, the behavior was analyzed by Boll\'{e} et al \cite{Bolle1DA, Bolle1DB} and, in general, by Jensen--Nenciu \cite{JenNen}.  Ito--Jensen \cite{JenIt} discuss Jacobi matrices (discrete $\nu=1$).

For $\nu \ge 5$, an important observation is that there are no resonances at zero energy.  This is because functions $\varphi \in \calM$ obey
\begin{equation}\label{16.21}
  \varphi(x) = - c_\nu \int |x-y|^{-(\nu-2)}V(y)\varphi(y)\,d^\nu y
\end{equation}
and so are $\textrm{O}(|x|^{-(\nu-2)})$ at infinity and thus are in $L^2$ if $\nu \ge 5$.

There is a difference between odd $\nu$ and even $\nu$, so we begin with $\nu \ge 5$, odd.  In that case, for there to be $t^{-\nu/2}$ decay for $e^{-itH_0}$ from $L^2_s$ to $L^2_{-s}$, we need that $P_0'(\lambda) \sim \lambda^{-(\nu-2)/2}$ for small $\lambda$.  At first sight, this seems surprising since $R_0(\kappa^2)$ has $\textrm{O}(1)$ terms, so we might guess also $\textrm{O}(\kappa)$ terms.  In fact, only even powers of $\kappa$ occur until $\kappa^{\nu-2}$.  This can be seen by analyzing the integral kernel for $G_0(x,y;\kappa^2)$ which is a modified Bessel function of the second kind (see \cite[discussion following (6.9.35)]{RA}) which is how Jensen \cite{JensenGE5} does it or by looking at \eqref{15.15}.  (It is an interesting exercise to write $T_\kappa\varphi$ in terms of Taylor coefficients of  $\widehat{\varphi}$ at $k=0$ and so recover the kernels $K_j$ of \eqref{16.10} for $j$ odd.)

If $0$ is not an eigenvalue of $H$, it is easy to prove that as maps from $L^2_s$ to $L^2_{-s}$, for suitable $s$ and $\beta$, one has an asymptotic series for $R(\kappa^2)$ whose first odd term is $(i\kappa)^{\nu-2}$ and then that $e^{-itH}P_c(H)$ as a map between suitable $L^2_r$ spaces is $\textrm{O}(t^{-\nu/2})$.  If zero is an eigenvalue and $\beta$ and $s$ are large enough, one can have any of $\textrm{O}(t^{-\tfrac{1}{2}\nu + 2})$, $\textrm{O}(t^{-\tfrac{1}{2}\nu + 1})$ or $\textrm{O}(t^{-\tfrac{1}{2}\nu})$ and all three possibilities can occur.

Looking at the odd $\nu$ situation, it seems surprising that one can have $\textrm{O}(t^{-m})$ for $m \in \bbZ$ but it happens when $\nu$ is even for the free case.  In fact, if there were an asymptotic series in powers of $\kappa$, the imaginary part cannot have even powers of $\kappa$ as we've seen.  The point is that in even dimensions the Bessel functions have log terms and for $m \in \bbZ$, we have that $\textrm{Im}\left[\lambda^m\log(-\lambda+i0)\right]=\pi \lambda^m$.  Because of this all the above odd $\nu \ge 5$ results extend to even $\nu \ge 5$.

For $\nu=4$, there can be a resonance and/or bound state as when $\nu =3$ so there are three types of singular points.  In the regular case, the leading term is $\textrm{O}(t^{-2})$, but when $\beta$ and $s$ are sufficiently large, the next term is $\textrm{O}(t^{-3}\log(t))$ (unlike $\nu=3$ where the term after $\textrm{O}(t^{-3/2})$ is $\textrm{O}(t^{-5/2})$).  If there is a singular point with only bound states, the leading term is $\textrm{O}(t^{-1})$ but when there are resonances there is only a bound by $\textrm{O}(1/\log t)$.

Jensen--Nenciu \cite{JenNen} analyze $\nu =1,2$ with a new method that also works in general dimension.  These dimensions are special in that there is a zero energy resonance for $H_0=-\Delta$ -- this is especially clear in the coupling constant threshold point of view discussed soon.  For $\nu=1$, if $\int_{-\infty}^{\infty} |x|\,|V(x)|\,dx < \infty$, it is known that every non--zero solution of $-\varphi''+V\varphi=0$ is either asymptotic to $a_\pm x+\textrm{o}(x)$ as $x \to \pm \infty$ with $a_\pm \ne 0$ or is asymptotic to $b_\pm + \mbox{o}(1)$ with $b_\pm \ne 0$ (in which case we say that $a_\pm=0$).  Thus, $0$ is never an eigenvalue and is a resonance if and only if there is $\varphi$ with $a_+=a_-=0$. For suitable $s$ and $\beta$ in the right norm $e^{-itH}$ is $\textrm{O}(t^{-3/2})$ in the regular case, while in the resonance case, one can have $\textrm{O}(t^{-1/2})$ behavior.  $\nu=2$ is very involved.  The resonant subspace can be of dimension up to $3$ and the small $\kappa$ expansion is jointly in $\kappa$ and $\log(\kappa)$.

Next, we want to mention the connection between resonances and coupling constant behavior.  Simon \cite{SimonThresh} considered $A+\xi B$ for general self--adjoint operators, $A$ and $B$, where $A \ge 0$, $|B|^{1/2}(A+1)^{-1/2}$ is compact and $0 \in \sigma_{ess}(A)$ so that $N(\xi) \equiv \dim\ran\,P_{(-\infty,0)}(A+\xi B) < \infty$ for all $\xi \in (0,\Xi)$.  Then $N$ is increasing and there is a discrete set $0 \le \xi_1 \le \xi_2 \le \dots$ so that $N(\xi) \ge j \iff \xi > \xi_j$.  That is, the $\xi_j$ are \emph{coupling constant thresholds}, where, depending on whether you think of $\xi$ as increasing or decreasing, new eigenvalues are born out of $0$ or old ones are absorbed.  Simon proves that
\begin{equation*}
  \lim_{\xi \downarrow \xi_j} -\frac{E_j(\xi)}{\xi-\xi_j}
\end{equation*}
always exists and is non--zero if and only if $0$ is an eigenvalue of $A+\xi_j B$ (with a more complicated statement if $\xi_k = \xi_j$ for some $k \ne j$).

This links up to the Kato--Jensen work in that the $\xi$'s where $\calM(H_0+\xi V) \ne \{0\}$ are exactly the coupling constant thresholds.  If there are eigenvalues $E_j(\xi)$ for $\xi>\xi_j$ with $E_j(\xi_j)=0$ and $E_j(\xi) \le -c(\xi-\xi_j);\, c>0$, then $H_0+\xi_j V$ has a zero eigenvalue.  If instead $E_j = \textrm{o}(\xi-\xi_j)$, then there is a resonance.  For Schr\"{o}dinger operators, this was explored by Rauch \cite{RauchThreshold} and by Klaus--Simon \cite{KlausSiThr}.  In particular, Klaus--Simon show for sufficiently large $\beta$, $-E_j(\xi) = \textrm{O}((\xi-\xi_j)^2)$ and, in that case, if $V$ has compact support, $E_j(\xi)$ is analytic at $\xi=\xi_j$.  In the bound state case, they prove that $E_j(\xi)$ is not analytic at $\xi_j$ (as we'll discuss below, typically, $E_j$ has a non-zero imaginary part for $\xi < \xi_j$ and real).  These ideas also explain why if $\nu=1$ or $\nu=2$, $H_0$ has a resonance at zero energy since it is known (Simon \cite{SimonWeakBound}) that if $V$ obeys \eqref{16.3B} for $\nu =1,2$ and $\beta > 3$ and $\int V(x) \, d^\nu x \le 0$, then for all $\xi >0$, $H_0+\xi V$ has a bound state.

Simon \cite{SimonBrownian, SimonLargeTime} discusses large time behavior of the $L^\infty$ to $L^\infty$ norm of $e^{-tH}$ (note $-t$, not $-it$) when there is and when there is not a zero energy resonance.

If there is a zero energy eigenvalue at a threshold $\xi_j$, then it turns into a negative eigenvalue for $\xi > \xi_j$.  If $\xi < \xi_j$, on the basis of the discussion in Section \ref{s4}, one expects that this half--embedded eigenvalue turns into a resonance (in the sense discussed in that Section, not the notion earlier in this section).  It's imaginary part is not $\textrm{O}((\xi-\xi_j)^{2})$ as it is in the normal Fermi golden rule situation discussed in Section \ref{s4}; rather, as shown in Jensen--Nenciu \cite{JNFermi}, one typically has that it is $\textrm{O}(|\xi-\xi_j|^{3/2})$.  For related results, see Dinu--Jensen--Nenciu \cite{DJN1, DJN2}.

Jensen--Kato discussed dispersive decay in terms of $L^2_s$ spaces but there has been considerable interest in $L^p$ estimates, where for $-\Delta+V$ on $L^2(\bbR^\nu)$, one hopes, based on the case $V=0$, that for $1 \le p \le 2$
\begin{equation}\label{16.22}
  \norm{e^{-itH}P_c(H)\varphi}_{L^{p'}(\bbR^\nu)} \le C |t|^{-\nu\left(\frac{1}{p}-\frac{1}{2}\right)} \norm{\varphi}_{L^p(\bbR^\nu)}
\end{equation}
where $p'=p/p-1$ is the dual index to $p$.  $L^p$ norms are translation invariant making \eqref{16.22} much more suitable for use in the theory of non--linear evolution equations so there is a large literature on such estimates.

The first estimates of the type \eqref{16.22} were found by Schonbek \cite{Schonb} in 1979 who considered $\nu=3,p=1$ and $V$ small.  The first general result for $\nu \ge 3$ and $V$ so that $H$ has neither an eigenvalue nor resonance at zero energy were in a classic paper of Journ\'{e}, Sogge and Soffer \cite{JSS} (see also Schonbek--Zhou \cite{SZ}).

An interesting approach to \eqref{16.22} is due to Yajima \cite{YajDis1, YajDis2, YajDis3, YajDis4} who asked about when the wave operators are bounded from $L^p$ to $L^p$.  You might think that this has nothing to do with \eqref{16.22} but since $(\Omega^\pm)^*$ are then bounded from $L^{p'}$ to $L^{p'}$ and $e^{-itH}P_{ac}(H) = (\Omega^\pm)^*e^{-itH_0}(\Omega^\pm)$, $L^p$ estimates on $\Omega^\pm$ and \eqref{16.22} for $H_0$ imply it for $H$.

There is a considerable literature on $L^p$ dispersive estimates when $0$ is an eigenvalue of resonance.  We refer the reader to Yajima \cite{YajDis5} which includes many references.

Finally, we note that Fournais--Skibsted \cite{FournSkib} and Skibsted--Wang \cite{SkibWang} have results on low energy behavior of the resolvent of $-\Delta+V$ when asymptotically $V(x) \sim -c|x|^{-\beta}$ with $c > 0$ and $\beta\le 2$.  Both discuss low energy resolvent behavior and \cite{FournSkib} also discussed long time asymptotics of $e^{-itH}$.

\section{The Adiabatic Theorem} \lb{s17}

In 1950, Kato published a paper in a physics journal (denoted as based on a presentation in 1948) on the quantum adiabatic theorem.  It is his only paper on the subject but has strongly impacted virtually all the huge literature on the subject and related subjects ever since (there are more Google Scholar citations of this paper than of \cite{KatoHisThm}).  We will begin by describing his theorem and its proof which introduced what he called \emph{adiabatic dynamics} and I'll call the \emph{Kato dynamics}.  We'll see that the Kato dynamics defines a notion of parallel transport on the natural vector bundle over the manifold of all $k$--dimensional subspaces of a Hilbert space, $\calH$, and so a connection.  This connection is called the Berry connection and its holonomy is the Berry phase (when $k=1$).  All this Berry stuff was certainly not even hinted at in Kato's work but it is implicit in the framework.  Then I'll say something about the history before Kato and finally a few brief words about some of the other later developments.

To start, we need a basic result about linear ODEs on Banach spaces:

\begin{proposition} \lb{P17.1}  Let $X$ be a Banach space and $\{\calM_t\}_{0 \le t \le T}$ a family of norm continuous (in $t$) linear maps on X.

(a) For each $x_0 \in X$, there is a function $t \mapsto x(t;x_0);\, 0 \le t \le T$ which is $C^1$ in $t$ which is the unique solution of
\begin{equation}\label{17.1}
  \frac{d}{dt}x(t) = \calM_t(x(t)); \qquad x(0) = x_0
\end{equation}
Moreover, for each $t$, the map $W(t):x_0 \mapsto x(t;x_0)$ is a bounded linear map on $X$ and $t \mapsto W(t)$ is $C^1$ and is the unique solution of \eqref{17.1} when the map $\calM$ acts on the bounded operators on $\calL(X)$ by left operator multiplication by $\calM_t$ with initial condition that $W(0) = \bdone$.

(b) Let $\calH$ be a (separable, complex) Hilbert space and take either $X=\calH$ or $X=\calL(\calH)$ and suppose that
\begin{equation}\label{17.2}
  \calM_t(x) = iA(t)x
\end{equation}
where $A(t$) is a norm continuous map to the bounded self--adjoint operators on $\calH$.  Then there is a $C^1$ family of unitary maps, $U(t)$, with $U(0)=\bdone$ so the solution of \eqref{17.1} is
\begin{equation}\label{17.3}
  t \mapsto U(t)x_0
\end{equation}
\end{proposition}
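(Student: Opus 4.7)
For part (a), the plan is to use the standard Picard iteration adapted to the Banach space setting. Let $M = \sup_{0 \le t \le T} \|\calM_t\|_{\calL(X)}$, which is finite by norm continuity on the compact interval $[0,T]$. Recast \eqref{17.1} as the integral equation $x(t) = x_0 + \int_0^t \calM_s(x(s))\,ds$, where the integrand is norm continuous in $s$, so the Bochner/Riemann integral in $X$ is well defined. Set $x^{(0)}(t) \equiv x_0$ and $x^{(n+1)}(t) = x_0 + \int_0^t \calM_s(x^{(n)}(s))\,ds$. An easy induction gives
\begin{equation*}
\|x^{(n+1)}(t) - x^{(n)}(t)\| \le \frac{(Mt)^{n+1}}{(n+1)!}\,\|x_0\|,
\end{equation*}
so the series $\sum_n (x^{(n+1)} - x^{(n)})$ converges absolutely and uniformly in $C([0,T];X)$ to a continuous limit $x(\cdot;x_0)$ satisfying the integral equation; differentiating under the integral (legal by norm continuity of $s \mapsto \calM_s(x(s))$) yields the ODE. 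Uniqueness follows from a Gronwall argument: if $y$ also solves it, then $\|x(t)-y(t)\| \le M\int_0^t \|x(s)-y(s)\|\,ds$ forces $x = y$. Linearity of $x_0 \mapsto x(t;x_0)$ is immediate from uniqueness and the linearity of \eqref{17.1}, and the bound $\|W(t)\| \le e^{MT}$ is visible directly from the Picard estimate. Running exactly the same iteration scheme in the Banach space $\calL(X)$ with $\calM_t$ acting by left multiplication produces an operator-norm $C^1$ family $\widetilde W(t)$ solving the operator ODE with $\widetilde W(0)= \bdone$; applying this to $x_0$ and invoking uniqueness in $X$ identifies $\widetilde W(t)x_0 = x(t;x_0)$, so $W(t) = \widetilde W(t)$.

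For part (b), once (a) is in hand, take $\calM_t(x) = iA(t)x$, noting $\|iA(t)\| = \|A(t)\|$ is norm continuous, so the hypotheses apply. Call the resulting operator solution $U(t)$. The key computation is to differentiate $U(t)^*U(t)$ and $U(t)U(t)^*$. Because $U$ is $C^1$ in operator norm and Hilbert space adjunction is an isometric involution on $\calL(\calH)$, $t \mapsto U(t)^*$ is also $C^1$ in norm with derivative $(U'(t))^* = (iA(t)U(t))^* = -iU(t)^*A(t)$, using $A(t) = A(t)^*$. Then the Leibniz rule gives
\begin{equation*}
\frac{d}{dt}\bigl(U(t)^*U(t)\bigr) = -iU(t)^*A(t)U(t) + iU(t)^*A(t)U(t) = 0,
\end{equation*}
so $U(t)^*U(t) \equiv U(0)^*U(0) = \bdone$. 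For the other product, the same Leibniz calculation yields $\frac{d}{dt}(U(t)U(t)^*) = i[A(t), U(t)U(t)^*]$, a linear ODE in $\calL(\calH)$ of the type covered by (a); since $V(t)\equiv \bdone$ obviously solves it with $V(0) = \bdone$, the uniqueness clause of (a) gives $U(t)U(t)^* = \bdone$. Combined, $U(t)$ is unitary for every $t$, and the $C^1$ property and initial condition come from (a). The formula \eqref{17.3} is then just the definition of $W(t)$ applied to $x_0$, in either the Hilbert space case or the $\calL(\calH)$ case.

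The technical heart of the argument is really the construction in (a); once the Picard machinery is set up in the operator-norm topology, part (b) is a one-line adjoint computation. The only mild obstacle I anticipate is the bookkeeping needed to run Picard simultaneously in $X$ and in $\calL(X)$ and then match the two solutions by uniqueness, so that $W(t)$ inherits operator-norm $C^1$ regularity rather than merely strong continuity — this is essential in (b) for justifying $\frac{d}{dt}U(t)^* = (U'(t))^*$ as an identity in $\calL(\calH)$.
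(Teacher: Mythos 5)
Your proof is correct and takes essentially the same approach as the paper: recast \eqref{17.1} as the integral equation, run Picard iteration with the $M^n T^n/n!$ estimate (the paper offers this or the contraction mapping theorem as interchangeable options), obtain the operator-valued solution $W(t)$ in $\calL(X)$ by the same scheme, and for (b) differentiate $U^*U$ and $UU^*$ exactly as in the paper's remark 4, concluding unitarity from the first identity plus ODE uniqueness applied to the second.
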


\begin{remarks} 1.  In \eqref{17.2}, $A(t)x$ is either interpreted as applying $A(t)$ to a vector $x \in \calH$ or as left multiplication if $x \in \calL(\calH)$.

2.  The $U(t)$ in \eqref{17.3} depend only on $\{A(s)\}_{0 \le s \le T}$ (indeed only on $s \le t$) and not on $x_0$.

3.  The proof is elementary. For (a), one shows that the differential equation with initial condition \eqref{17.1} is equivalent to the integral equation
\begin{equation}\label{17.4}
  x(t) = x_0 + \int_{0}^{t} \calM_s(x(s)) \,ds
\end{equation}
on $C([0,T];X)$, the $X$--valued norm continuous functions on $[0,T]$.  One then either uses a contraction mapping theorem (if necessary shrinking $T$ to get a contraction and piecing together unique solutions on several intervals) or else one iterates the integral equation proving an estimate that the $n$th new term in the iteration is bounded by $T^n \left[\sup_{0 \le t \le T} \norm{M_t}\right]^n/n!$ to prove that the iteration converges to a convergent sum.

4.  For (b), one sees that if $U(t)$ solves the equation on $\calL(\calH)$ for $x_0 = \bdone$, then $U(t)x_0$ solves the equation in general. Moreover, by a simple calculation
\begin{equation}\label{17.5}
  \frac{d}{dt} U^*(t)U(t) = 0; \qquad \frac{d}{dt} U(t)U^*(t) = i [A, U(t)U^*(t)]
\end{equation}
The first equation and $U(0)=\bdone$ implies immediately that $U^*(t)U(t) = \bdone$.  The second equation with initial condition $U(0)U^*(0)=\bdone$ is clearly solved by $U(t)U^*(t)=\bdone$ so by uniqueness of solutions we see that $U(t)U^*(t)=\bdone$.  Thus $U(t)$ is unitary.
\end{remarks}

The adiabatic theorem considers a family of time dependent Hamiltonians, $H(s),\, 0 \le s \le 1$ and imagines changing them slowly, i.e. looking at $H(s/T),\, 0 \le s \le T$ for $T$ very large.  Thus, we look for $\tilde{U}_T(s)$ solving
\begin{equation}\label{17.6}
  \frac{d}{ds} \tilde{U}_T(s) = -iH(s/T)\tilde{U}_T(s),\,\, 0 \le s \le T; \qquad \tilde{U}_T(0) = \bdone
\end{equation}
Letting $U_T(s) = \tilde{U}_T(sT), \, 0 \le s \le 1$, we see that $U_T(s),\,0\le s \le 1$ solves
\begin{equation}\label{17.7}
  \frac{d}{ds} U_T(s) = -iT H(s) U_T(s),\,\, 0 \le s \le 1; \qquad U_T(0) = \bdone
\end{equation}
Here is Kato's adiabatic theorem

\begin{theorem} [Kato \cite{KatoAdi}] \lb{T17.2} Let $H(s)$ be a $C^2$ family of bounded self--adjoint operators on a (complex, separable) Hilbert space, $\calH$.  Suppose there is a $C^2$ function, $\lambda(s)$, so that for all $s$, $\lambda(s)$ is an isolated point in the spectrum of $H(s)$ and so that
\begin{equation}\label{17.8}
  \alpha \equiv \inf_{0 \le s \le 1} \dist(\lambda(s), \sigma(H(s))\setminus\{\lambda(s)\}) > 0
\end{equation}
Let $P(s)$ be the projection onto the eigenspace for $\lambda(s)$ as an eigenvalue of $H(s)$.  Then
\begin{equation}\label{17.9}
  \lim_{T \to \infty} (1-P(s))U_T(s)P(0) = 0
\end{equation}
uniformly in $s$ in $[0,1]$.
\end{theorem}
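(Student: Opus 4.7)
The plan is to follow Kato's original strategy: introduce an auxiliary \emph{adiabatic dynamics} $V_T(s)$ that exactly intertwines the spectral projections, $V_T(s)P(0) = P(s)V_T(s)$, and then show that $V_T(s)$ differs from the true propagator $U_T(s)$ by $O(T^{-1})$ in operator norm. The conclusion \eqref{17.9} then follows immediately, since
\begin{equation*}
  (I - P(s)) U_T(s) P(0) = (I - P(s)) V_T(s) P(0) + O(T^{-1}) = 0 + O(T^{-1}),
\end{equation*}
the first term on the right vanishing by the intertwining.

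First I would establish regularity of $P(s)$. Because $\lambda(s)$ is isolated in $\sigma(H(s))$ with uniform gap $\alpha > 0$ and $H, \lambda \in C^2$, a contour $\Gamma(s)$ of radius $\alpha/2$ centered at $\lambda(s)$ encloses only that eigenvalue, so the Riesz formula
\begin{equation*}
  P(s) = \frac{1}{2\pi i} \oint_{\Gamma(s)} (z - H(s))^{-1}\, dz
\end{equation*}
exhibits $P$ as a $C^2$ family of bounded projections. Differentiating $P^2 = P$ and multiplying on the left by $P$ yields the key algebraic identity $P P' P = 0$; thus $K(s) := [P'(s), P(s)]$ is bounded, anti-self-adjoint, and $C^1$, and is purely off-diagonal with respect to the splitting $P \oplus (I - P)$.

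Next I would let $V_T(s)$ solve $V_T'(s) = \bigl(-iT H(s) + K(s)\bigr) V_T(s)$ with $V_T(0) = I$; this is unitary by Proposition \ref{P17.1}. The intertwining $V_T(s) P(0) V_T(s)^{-1} = P(s)$ is then verified by noting that both $R(s) := V_T(s) P(0) V_T(s)^{-1}$ and $P(s)$ satisfy the linear ODE $R'(s) = [-iT H(s) + K(s), R(s)]$ with the same value at $s=0$: for $P$ this uses $[H, P] = 0$ (spectral theorem) and the direct computation $[K, P] = [[P', P], P] = P'$, which follows from $P'P + PP' = P'$ together with $P P' P = 0$; uniqueness in Proposition \ref{P17.1} closes the argument.

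The principal obstacle is showing that $\Omega_T(s) := V_T(s)^* U_T(s)$ satisfies $\Omega_T(s) = I + O(T^{-1})$ uniformly in $s \in [0,1]$. A direct computation gives $\Omega_T'(s) = -V_T(s)^* K(s) U_T(s)$, and the required gain in $T$ will come from rewriting $K(s)$ as a commutator with $H(s)$: since $K$ is off-diagonal and $H(s) - \lambda(s)$ is boundedly invertible on $\ran(I - P(s))$ with inverse of norm at most $\alpha^{-1}$, the reduced resolvent produces a bounded, $C^1$ family $X(s)$ solving $[H(s), X(s)] = i K(s)$, with $\|X(s)\|$ and $\|X'(s)\|$ bounded uniformly in $s$ in terms of $\alpha$ and derivatives of $P$. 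One then verifies the identity
\begin{equation*}
  \frac{d}{d\sigma}\bigl(V_T(\sigma)^* X(\sigma) U_T(\sigma)\bigr) = -T\, V_T^* K U_T + V_T^*\bigl(X' - K X\bigr) U_T,
\end{equation*}
so integrating from $0$ to $s$ converts $\int_0^s V_T^* K U_T\, d\sigma$ into $T^{-1}$ times boundary terms plus $T^{-1}$ times an integral of a bounded integrand. This yields $\|\Omega_T(s) - I\| \le C/T$ uniformly in $s$, which combined with the intertwining property delivers the quantitative form of \eqref{17.9}.
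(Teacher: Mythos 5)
Your proof is correct, and the central engine is the same as Kato's: an auxiliary transport that exactly intertwines $P(0)$ with $P(s)$, followed by a gap-based integration by parts to show the true propagator differs by $O(1/T)$.  There is one genuine organizational difference worth noting.  The paper first normalizes $\lambda(s) \equiv 0$ and then works with the $T$--\emph{independent} Kato transport $W(s)$ solving $W' = [P',P]W$; the $O(T)$ term in $\tfrac{d}{ds}\bigl(U_T^*WP(0)\bigr)$ is then killed by $H(s)P(s)=0$, and the reduced resolvent $S(s) = (1-P(s))H(s)^{-1}$ appears explicitly when integrating by parts.  You instead build the $T$--\emph{dependent} adiabatic propagator $V_T$ with generator $-iTH+[P',P]$, so the $O(T)$ pieces cancel automatically when you form $\Omega_T=V_T^*U_T$ and you never need the $\lambda$--shift; the reduced resolvent is then packaged into an abstract $X(s)$ solving $[H,X]=iK$.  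Both versions exploit the same facts ($PP'P=0$, $[K,P]=P'$, the uniform gap $\alpha$ making the off-diagonal commutator equation boundedly solvable), and the resulting $O(1/T)$ bound is identical; your $V_T$ formulation is the one that is now standard in the adiabatic-theorem literature (e.g.\ Avron--Seiler--Yaffe) and is marginally cleaner in that it dispenses with the reduction to $\lambda\equiv 0$, at the cost of making the auxiliary evolution depend on $T$.
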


\begin{remarks}  1. Thus if $\varphi_0 \in \ran\, P(0)$, this says that when $T$ is large, $U_T(s)\varphi_0$ is close to lying in $\ran\, P(s)$. That is as $T \to \infty$, the solution gets very close to the ``curve'' $\{\ran\, P(s)\}_{0 \le s \le 1}$.

2.  If there is an eigenvalue of constant multiplicity near $\lambda(0)$ for $s$ small, it follows from \eqref{2.1} that $P(s)$ and $\lambda(s)$ are $C^2$.

3.  It is easy to see that $\dim \ran\, P(s)$ is constant.  It can even be infinite dimensional.

4.  This result is even interesting if $\dim \ran\, P(s)$ is $1$ and/or $\dim \calH < \infty$.

5.  Kato made no explicit assumptions on regularity in $s$ saying ``Our proof given below is rather formal and not faultless from the mathematical point of view. Of course it is possible to retain mathematical rigour by detailed argument based on clearly defined assumptions, but it would take us too far into unnecessary complication and obscure the essentials of the problem.''  It is hard to imagine the Kato of 1960 using such language! In any event, the proof requires that $P(s)$ be $C^2$.

6.  We'll discuss history more later but Kato notes that his work has two advantages over the earlier work of Born--Fock \cite{BF}: (1) They assume complete sets of eigenvectors and do not allow continuous spectrum.  (2) They assume that $\lambda(s)$ is simple, i.e. $\dim \ran\, P(s) = 1$ while Kato can handle degenerate eigenvalues.

7.  As we'll see, the size estimate for \eqref{17.9} is $\textrm{O}(1/T)$.
\end{remarks}

Kato's wonderful realization is that there is an explicit dynamics, $W(s)$ for which \eqref{17.9} is exact, i.e.
\begin{equation}\label{17.10}
 (1-P(s))W(s)P(0) = 0
\end{equation}
He not only constructs it but proves the theorem by showing that (this formula only holds in case $\lambda(s) \equiv 0$))
\begin{equation}\label{17.11}
  \lim_{T \to \infty} [U_T(s)-W(s)]P(0) = 0
\end{equation}
The $W(s)$ that Kato constructs, he called the {\emph{adiabatic dynamics}.  It is sometimes called Kato's adiabatic dynamics.  We call it the \emph{Kato dynamics}.  Here is the basic result:

\begin{theorem} [Kato dynamics \cite{KatoAdi}] \lb{17.3} Let $W(s)$ solve
\begin{equation}\label{17.12}
  \frac{d}{ds}W(s) = iA(s)W(s), 0 \le s \le 1; \qquad W(0)=\bdone
\end{equation}
\begin{equation}\label{17.13}
  iA(s) \equiv [P'(s),P(s)]
\end{equation}
Then $W(s)$ is unitary and obeys
\begin{equation}\label{17.14}
  W(s)P(0)W(s)^{-1} = P(s)
\end{equation}
\end{theorem}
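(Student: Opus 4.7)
The plan is to verify unitarity of $W(s)$ via Proposition \ref{P17.1}(b) and then prove the intertwining relation \eqref{17.14} by showing that $W(s)^*P(s)W(s)$ is constant in $s$ and equals $P(0)$ at $s=0$. For unitarity, I first note that $A(s)$ is self-adjoint: since $P(s)^* = P(s)$ and hence $P'(s)^* = P'(s)$, one has $[P'(s),P(s)]^* = [P(s),P'(s)] = -[P'(s),P(s)]$, so $iA(s) = [P'(s),P(s)]$ is skew-adjoint and $A(s)$ is self-adjoint. Because $P$ is assumed $C^2$, the map $s \mapsto A(s)$ is norm continuous, so Proposition \ref{P17.1}(b) applies and produces a $C^1$ family of unitaries $W(s)$ with $W(0) = \bdone$.

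Next I would compute $\tfrac{d}{ds}[W(s)^* P(s) W(s)]$. Using $W' = iAW$ and $(W^*)' = -iW^*A$ (valid since $A^* = A$), one gets
\begin{equation*}
  \frac{d}{ds}\bigl[W^* P W\bigr] = W^*\bigl[P' - i[A,P]\bigr]W = W^*\bigl[P' + [iA,P]\bigr]W \cdot (-1)^{?},
\end{equation*}
so the question reduces to the operator identity $P'(s) = i[A(s),P(s)]$, i.e.\ $P'(s) = [\,[P'(s),P(s)],\,P(s)\,]$. Expanding the double commutator and using $P^2 = P$ gives
\begin{equation*}
  [[P',P],P] = P'P + PP' - 2PP'P.
\end{equation*}
So I need $P'P + PP' - 2PP'P = P'$, equivalently, once one knows $P'P + PP' = P'$ from differentiating $P^2 = P$, the remaining requirement is exactly $PP'P = 0$.

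The key algebraic step, and what I would view as the heart of the argument, is therefore the identity $PP'P = 0$. This is obtained by differentiating $P^2 = P$ to get $P'P + PP' = P'$ and then multiplying on the left by $P$: one finds $PP'P + P^2 P' = PP'$, and since $P^2 = P$ this collapses to $PP'P = 0$. With this identity in hand, $W^*PW$ has vanishing derivative, hence $W(s)^*P(s)W(s) = W(0)^*P(0)W(0) = P(0)$, which, combined with $W(s)^{-1} = W(s)^*$, gives \eqref{17.14}.

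The main obstacle is not analytic but conceptual: recognizing that the correct generator to choose is $iA = [P',P]$, and then seeing that this choice makes the derivative of $W^*PW$ vanish precisely because of the projection identity $PP'P = 0$. Once one writes these two facts down, the rest is routine differentiation. Everything else, including the continuity and regularity required to invoke Proposition \ref{P17.1}, follows from the assumed $C^2$ regularity of $P(s)$ (which Kato in turn derives from the $C^2$ regularity of $H(s)$ and the spectral gap \eqref{17.8} via the contour-integral formula \eqref{2.1}).
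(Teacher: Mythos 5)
Your proof is correct and follows essentially the same route as the paper: both use Proposition \ref{P17.1} for unitarity, derive the key identity $P(s)P'(s)P(s)=0$ from differentiating $P^2=P$, establish $P'=i[A,P]$, and then show the intertwined projection is constant. The only difference is bookkeeping: you differentiate $W^*PW$ directly, while Kato factors it as $(W^{-1}P)(PW)$ and differentiates each factor separately; these are the same computation.
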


\begin{proof}  That $W(s)$ is unitary follows from Proposition \ref{P17.1}.  Note that since $P(s)^2=P(s)$ we have that
\begin{equation}\label{17.15}
  P'(s) = P'(s)P(s)+P(s)P'(s) \Rightarrow P(s)P'(s)P(s) = 0
\end{equation}
since the first equation and $P^2=P$ imply that $PP'P=2PP'P$.  Expanding the commutator defining $A(s)$ and using $PP'P=0$ yields
\begin{equation}\label{17.16}
  iP(s)A(s) = -P(s)P'(s)
\end{equation}
\begin{equation}\label{17.17}
  iA(s)P(s) = P'(s)P(s)
\end{equation}
so by the first equation in \eqref{17.15}, we have that
\begin{equation}\label{17.18}
  P'(s) = i[A(s),P(s)]
\end{equation}

By \eqref{17.12}
\begin{align}
  (P(s)W(s))' &=  (P'(s)+iP(s)A(s))W(s) \lb{17.19} \\
              &=  iA(s)P(s)W(s) \lb{17.20}
\end{align}
by \eqref{17.18}.  Taking adjoints,
\begin{equation}\label{17.21}
  (W(s)^{-1}P(s))' = -iW(s)^{-1}P(s)A(s)
\end{equation}

Since $W(s)^{-1}P(s)W(s) = (W(s)^{-1}P(s))(P(s)W(s))$, we see that
\begin{align}
  (W(s)^{-1}P(s)W(s))' &= iW(s)^{-1}P(s)A(s)P(s)W(s) \nonumber \\
                       &\null\qquad - iW(s)^{-1}P(s)A(s)P(s)W(s) = 0 \lb{17.22}
\end{align}
At $s=0$, this is $P(0)$ so
\begin{equation}\label{17.23}
  W(s)^{-1}P(s)W(s) = P(0)
\end{equation}
which is equivalent to \eqref{17.14}.
\end{proof}

\begin{proof}  [Proof of Theorem \ref{T17.2}] By replacing $H(s)$ by $H(s) - \lambda(s)\bdone$, we can suppose that $\lambda(s) \equiv 0$ (doing this changes some formulae, particularly the critical \eqref{17.25} -- we'll address this after the proof).  We will prove that
\begin{equation}\label{17.24}
  \norm{U_T(s)^*W(s)P(0)-P(0)} = \textrm{O}(1/T)
\end{equation}
Since $U_T$ is unitary, this implies that
\begin{equation}\label{17.25}
  \norm{W(s)P(0)-U_T(s)P(0)} = \textrm{O}(1/T)
\end{equation}
Since $(1-P(s))W(s)P(0) = (1-P(s))P(s)W(s) = 0$, this implies \eqref{17.9} with an explicit $\textrm{O}(1/T)$ error estimate.

Thus we define
\begin{equation}\label{17.26}
  G(s)=U_T^*(s)W(s)P(0)
\end{equation}
and compute
\begin{equation}\label{17.27}
  G'(s) = (U_T^*(s))'W(s)P(0) + U_T^*(s)W'(s)P(0)
\end{equation}
Applying $\null^*$ to \eqref{17.7} implies that
\begin{equation}\label{17.28}
  (U_T^*(s))' = iTU_T^*(s)H(s)
\end{equation}
so, using \eqref{17.14}, the first term in \eqref{17.27} is
\begin{equation}\label{17.29}
  iTU_T^*(s)H(s)W(s)P(0) = iTU_T^*(s)H(s)P(s)W(s) = 0
\end{equation}
since $\lambda(s) \equiv 0 \Rightarrow H(s)P(s) =0$.  This is useful because it says that a potential $\textrm{O}(T)$ term is zero!

Next note that since $PP'P=0$ we have that $PAP=0$ and thus
\begin{align}
  P(s)W'(s)P(0) &= iP(s)A(s)W(s)P(0) \nonumber \\
                &= iP(s)A(s)P(s)W(s) \nonumber \\
                &= 0 \lb{17.30}
\end{align}

If now $S(s)$ is the reduced resolvent of $H(s)$ (see \eqref{2.4D}) ${S(s) \equiv(1-P(s))H(s)^{-1}}$, then on account of \eqref{17.30}, we have that
\begin{equation}\label{17.39}
  W'(s)P(0) = (1-P(s))W'(s)P(0)=H(s)S(s)W'(s)P(0)
\end{equation}
so, by \eqref{17.21}
\begin{align}
  G'(s) &= U_T^*(s)H(s)S(s)W'(s)P(0) \lb{17.40} \\
        &= (iT)^{-1}[U_T^*(s)]'S(s)W'(s)P(0)
\end{align}
by \eqref{17.28}.  Thus
\begin{equation}\label{17.41A}
  G(s)-P(0) = (iT)^{-1} \int_{0}^{s}[U_T^*(w)]'S(w)W'(w)P(0) \,dw
\end{equation}

As we've seen $U_T'$ is $\textrm{O}(T)$ but we can integrate by parts.  Since $U_T(w)$ has norm one and $S(w)$ and $W'(s)$ are bounded, the boundary terms in the integration by parts are $\textrm{O}(1/T)$.  Since we assumed that $P(s)$ is $C^2$, one has that $S'(s)$ and $W''(s)$ are bounded so the integrand after integration by parts is bounded and we have proven that ${\norm{G(s)-P(0)} = \textrm{O}(1/T)}$, i.e. \eqref{17.24} holds.
\end{proof}

This completes our discussion of what was in this influential paper of Kato.  Kato left at least two important items ``on the table''.  One is the possibility of better estimates than $\textrm{O}(1/T)$.  We discuss this further below.

The other item concerns the fact that \eqref{17.25} says a lot more than \eqref{17.9}.  \eqref{17.9} says that as $T \to \infty$, $U_T(s)$ maps $\ran\, P(0)$ to $\ran\, P(s)$.  \eqref{17.25} actually tells you what the precise limiting map is!  One should note that if $\lambda(s)$ is not identically zero, the proper form of \eqref{17.25} is
\begin{equation}\label{17.42}
  \norm{U_T(s)P(0)-e^{-iT\int_{0}^{s} \lambda(s)\,ds} W(s)P(0)} = \textrm{O}(1/T)
\end{equation}

One fancy pants way of describing this is as follows.  Fix $k \ge 1$ in $\bbZ$.  Let $\calM$ be the manifold of all $k$--dimensional subspaces of some Hilbert space, $\calH$.  We want $\dim(\calH) \ge k$, but it could be finite.  Or $\calM$ might be a smooth submanifold of the set of all such subspaces.  For each $\omega \in \calM$, we have the projection $P(\omega)$.  There is a natural vector bundle of $k$--dimensional spaces over $\calM$, namely, we associate to $\omega \in \calM$, the space $\ran\, P(\omega)$.  If $k=1$, we get a complex line bundle.

The Kato dynamics, $W(s)$, tells you how to ``parallel transport'' a vector $v \in \ran\, P(\gamma(0))$ along a curve $\gamma(s);\,0 \le s \le 1$ in $\calM$.  In the language of differential geometry, it defines a connection and such a connection has a holonomy and a curvature.  In less fancy terms, consider the case $k=1$.  Suppose $\gamma$ is a closed curve.  Then $W(1)$ is a unitary map of $\ran\, P(0)$ to itself, so multiplication by $e^{i\Gamma_B(\gamma)}$.  Returning to $U_T$, it says that the phase change over a closed curve isn't what one might naively expect, namely $\exp(-i\int_{0}^{T} \lambda(s/T)\,ds) = \exp(-iT\int_{0}^{1} \lambda(s)\,ds)$.  There is an additional term, $\exp(i\Gamma_B)$.  This is the \emph{Berry phase} discovered by Berry \cite{Berry} in 1983 (it was discovered in 1956 by Pancharatnam \cite{Pan} but then forgotten).  Simon \cite{SimonBerry} realized that this was just the holonomy of a natural bundle connection and that, moreover, this bundle and connection is precisely the one whose Chern integers are the TKN$^2$ integers of Thouless et al \cite{TKNN} (as discussed by Avron et al \cite{ASSTKNN}).  Thouless got a recent physics Nobel prize in part for the discovery of the TKN$^2$ integers.  The holonomy, i.e. Berry's phase, is an integral of the Kato connection $[P,dP]$.  As usual, this line integral over a closed curve is the integral of its differential $[dP,dP]$ over a bounding surface.  This quantity is the curvature of the bundle and has come to be called the \emph{Berry curvature} (even though Berry did not use the differential geometric language).  Naively $[dP,dP]$ would seem to be zero but it is shorthand for the two--form
\begin{equation}\label{17.43}
  \sum_{i \ne j}\left[\frac{\partial P}{\partial s_i},\frac{\partial P}{\partial s_j}\right] ds_i\wedge ds_j
\end{equation}
This formula of Avron et al \cite{ASSTKNN} for the Berry curvature is a direct descendant of formulae in Kato's paper, although, of course, he did not consider the questions that lead to Berry's phase.

Now, a short excursion into the history of adiabatic theorems.  ``Adiabatic'' first entered into physics as a term in thermodynamics meaning a process with no heat exchange.  In 1916, Ehrenfest \cite{Ehren} discussed the ``adiabatic principle'' in classical mechanics.  The basic example is the realization (earlier than Ehrenfest) that while the energy of a harmonic oscillator is not conserved under time dependent change of the underlying parameters, the action (energy divided by frequency) is fixed in the limit that the parameters are slowly changed (the reader should figure out what Kato's adiabatic theorem says about a harmonic oscillator with slowly varying frequency). See Henrard \cite{Henrard} for discussion of applications of the classical adiabatic invariant.  Interestingly enough, many adiabatic processes in the thermodynamic sense are quite rapid, so the Ehrenfest use has, at best, a very weak connection to the initial meaning of the term!

Ehrenfest used these ideas by asserting that in old quantum theory, the natural quantum numbers were precisely these adiabatic invariants.  Once new quantum mechanics was discovered, Born and Fock \cite{BF} in 1928 discussed what they called the quantum adiabatic theorem, essentially Theorem \ref{T17.2} for simple eigenvalues with a complete set of (normalizable) eigenfunctions.  It was 20 years before Kato found his wonderful extension (and then more than 30 years before Berry made the next breakthrough).

Next, we turn to error estimates.  The error on the right side of \eqref{17.41A} is a sum of two terms after an integration by parts: the boundary term and an integral.  For the integral, one can reuse \eqref{17.29} as we did to get \eqref{17.41A} and see that the integral is $\textrm{O}(1/T^2)$.  The boundary term is $\textrm{O}(1/T)$ but the coefficients will vanish if $P(s)-P(0)$ and $P(t)-P(s)$ vanish sufficiently fast as $s \downarrow 0$ and $s \uparrow t$.  The natural setup is to take $s \in (-\infty,\infty)$ rather than $[0,1]$ and to require that $H(\pm \infty) = \lim_{s \to \pm\infty} H(s)$ exist with approach $\textrm{O}(1/|s|^k)$ for all $k$.  If one does this, one gets an adiabatic theorem with $\textrm{O}(1/T^k)$ errors for all $k$.  Under suitable analyticity conditions on $H(s)$, one can even prove exponential approach, see \cite{NenciuAdi} for an early paper on this subject and \cite{EH, JS, JS2, JP, NenciuAdi2} for additional discussion.  In particular, Joye--Pfister \cite{JP} uses arguments very close to Kato's.

The occurrence of the reduced resolvent, $S$, in Kato's approach suggests that an eigenvalue gap is an important ingredient.  Nevertheless, there are results on adiabatic theorems without gaps, see Avron--Howland--Simon \cite{AHowS} and Hagedorn \cite{HagAdi} for some special situations and Avron--Elgart \cite{AE} for a very general result.  Teufel \cite{Teu} has an alternate proof for this Avron--Elgart result and he has a book \cite{TeuBk} on the subject.  Avron et al \cite{AFGG} and Joye \cite{Joye} have Banach space versions.

For other approaches to adiabatic evolution, see Jansen--Ruskai--Seiler \cite{JRS} and Hastings--Wen \cite{HW}.  For some applications, see Avron--Seiler--Yaffe \cite{ASY}, Klein--Seiler \cite{KSeiler} and Bachmann--de Roeck--Fraas \cite{BdRF}.

\section{Kato's Ultimate Trotter Product Formula} \lb{s18}

We begin this section by describing what is called the Lie product formula.  Let $A, B$ be two finite matrices over $\bbC^n$.  Fix $T > 0$ and for $0 \le s \le T$, define
\begin{equation}\label{18.1}
  g(s) = e^{s(A+B)}-e^{sA}e^{sB}
\end{equation}
Then $g(0)=g'(0)=0$ so, by Taylor's theorem with remainder
\begin{equation}\label{18.2}
   \norm{g(s)} \le Cs^2; \qquad 0 \le s \le T
\end{equation}

Writing
\begin{align*}
  e^{s(A+B)}-\left[e^{sA/n}e^{sB/n}\right]^n  &=  [e^{s(A+B)/n}]^n-\left[e^{sA/n}e^{sB/n}\right]^n \\
                                              &=  \sum_{j=1}^{n} [e^{s(A+B)/n}]^{j-1} g(\tfrac{s}{n}) \left[e^{sA/n}e^{sB/n}\right]^{n-j}
\end{align*}
has norm bounded by $n \exp(s(\norm{A}+\norm{B})) \norm{g(\tfrac{s}{n})} \to 0$ by \eqref{18.2}  Thus, for finite matrices, we have that
\begin{equation}\label{18.3}
  e^{s(A+B)}  = \lim_{n \to \infty} \left[e^{sA/n}e^{sB/n}\right]^n
\end{equation}

This is called the Lie product formula.  Although it seems he never wrote it down explicitly, Lie did consider differential equation results on groups close to \eqref{18.3}.  In 1959, Trotter \cite{TrotterProd} proved a version of the Lie product formula for certain semigroups on Banach spaces:

\begin{theorem} [Trotter Product Formula] \lb{T18.1} Let $X$ be a Banach space and $S(t) = e^{-tA}, \quad t >0$ and $T(t) = e^{-tB}, \quad t>0$ two strongly continuous semigroups on $X$ that obey
\begin{equation}\label{18.4}
  \textrm{s}-\lim_{t \downarrow 0} S(t) = \textrm{s}-\lim_{t \downarrow 0} T(t) = \bdone; \qquad \norm{S(t)}+\norm{T(t)} \le Ce^{Dt}
\end{equation}
Suppose that the operator closure of A+B on $D(A) \cap D(B)$ generates a strongly continuous semigroup, $W(t) \textrm{\textit{``=''}} e^{-t(A+B)}$ obeying \eqref{18.4},  Then
\begin{equation}\label{18.5}
  \textrm{s}-\lim_{n \to \infty} \left[S(\tfrac{t}{n})T(\tfrac{t}{n})\right]^n = W(t)
\end{equation}
\end{theorem}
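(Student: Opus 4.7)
The plan is to combine a telescoping identity with the operator-theoretic analogue of Chernoff's approximation trick, together with the resolvent form of the Trotter--Kato theorem already referenced in the remark following Theorem~\ref{T3.7}. By a standard rescaling and renorming (replacing $S(t)$ by $e^{-\omega t}S(t)$, etc., and passing to an equivalent norm in which the rescaled semigroups are all contractions), I would first reduce to the case $\|S(t)\|,\|T(t)\|,\|W(t)\|\le 1$. This rescaling commutes with all the relevant limits, so it does not affect \eqref{18.5}.

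Set $F_s=S(s)T(s)$, which is strongly continuous with $F_0=\bdone$ and $\|F_s\|\le 1$. For $\varphi\in D(A)\cap D(B)$, the product rule and the definition of the generators give
\begin{equation*}
\lim_{s\downarrow 0}\frac{F_s-\bdone}{s}\varphi \;=\; -A\varphi-B\varphi,
\end{equation*}
so the discrete-time generator $G_n\equiv n(\bdone-F_{t/n})$, which is a bounded operator, converges strongly on the core $D(A)\cap D(B)$ to $t(A+B)$. Since by hypothesis the closure of $A+B$ on this core generates $W(t)$, the strong resolvent form of the Trotter--Kato theorem (applied to the contraction semigroups $u\mapsto e^{-uG_n}$) yields $e^{-G_n}\varphi\to W(t)\varphi$ strongly for every $\varphi\in X$.

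It remains to compare $F_{t/n}^{\,n}$ to $e^{-G_n}=\exp\!\bigl(n(F_{t/n}-\bdone)\bigr)$. This is where the combinatorial heart of the argument sits: for any contraction $F$ and any $\psi\in X$,
\begin{equation*}
\bigl\|F^{n}\psi - e^{n(F-\bdone)}\psi\bigr\|
\;\le\; \sqrt{n}\,\bigl\|(F-\bdone)\psi\bigr\|.
\end{equation*}
This is proved by expanding $e^{n(F-\bdone)}=e^{-n}\sum_{k\ge 0}(n^{k}/k!)F^{k}$, writing $F^{k}-F^{n}$ as a telescoping sum of at most $|k-n|$ terms each of norm $\le \|(F-\bdone)\psi\|$, and then invoking Cauchy--Schwarz against the Poisson weights, using $\sum_{k}(n^k/k!)e^{-n}|k-n|\le\sqrt{n}$ from the standard variance computation. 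Applying this with $F=F_{t/n}$ and $\psi=\varphi\in D(A)\cap D(B)$ gives
\begin{equation*}
\bigl\|F_{t/n}^{\,n}\varphi - e^{-G_n}\varphi\bigr\|\le \sqrt{n}\,\bigl\|(F_{t/n}-\bdone)\varphi\bigr\|=O(n^{-1/2})\to 0.
\end{equation*}
Combining with the previous paragraph gives $F_{t/n}^{\,n}\varphi\to W(t)\varphi$ on $D(A)\cap D(B)$, and the uniform bound $\|F_{t/n}^{\,n}\|\le 1$ extends the conclusion by density to all $\varphi\in X$. Uniformity in $t$ on compact intervals follows because every step above is uniform in $t$ over compacts.

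The main obstacle I anticipate is the Chernoff estimate in the third paragraph: it is the one ingredient that requires genuinely new work, since the telescoping identity and the Trotter--Kato step are essentially bookkeeping. Getting the sharp $\sqrt{n}$ factor (rather than the naive $n$ that one gets from $\|F^k-F^n\|\le |k-n|\|(F-\bdone)\psi\|$ summed without Cauchy--Schwarz) is what makes the whole argument run, since the pointwise input $\|(F_{t/n}-\bdone)\varphi\|=O(1/n)$ only buys $o(1)$ through this second-moment gain. The renorming reduction in the first paragraph is also a delicate point if one insists on the general bound $Ce^{Dt}$ with $C>1$, but is standard.
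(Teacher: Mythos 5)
The paper does not actually give its own proof of this theorem; after the finite--matrix Lie product argument it simply cites Trotter for the Banach--space case and sketches Nelson's $o(s)$ telescoping argument for the self--adjoint Hilbert--space case. Your route --- Chernoff's $\sqrt{n}$ lemma plus the Trotter--Kato approximation theorem applied to the bounded generators $G_n = n(\bdone - F_{t/n})$ --- is a well--known alternative (it is exactly the method the paper invokes for Theorem~\ref{T18.2} via Chernoff), and all the structural steps are in order: the limit $\lim_{s\downarrow 0}s^{-1}(F_s-\bdone)\varphi = -(A+B)\varphi$ on $D(A)\cap D(B)$, the observation that $e^{-sG_n}$ is a contraction semigroup when $F_{t/n}$ is a contraction, the passage from core convergence $G_n\varphi\to t(A+B)\varphi$ to strong resolvent convergence, the Chernoff estimate, and the density extension are all correct. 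Compared with Nelson's telescoping, your route trades a Banach--Steinhaus/uniform--boundedness argument for Chernoff's combinatorial lemma and a black--box appeal to Trotter--Kato.

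There is, however, a genuine gap in the first paragraph. You cannot in general pass to an equivalent norm in which \emph{both} rescaled semigroups (and hence $F_s$) are contractions. The standard renorming $|||x||| = \sup_{t\ge 0} e^{-Dt}\|S(t)x\|$ makes $S$ quasi--contractive but gives no control on $\|T(s)\|$ in the new norm; $A$ and $B$ are unrelated, so their semigroups admit no simultaneous renorming. As a result, when $C>1$ in \eqref{18.4} the quantities $\|F_{t/n}^k\|$ appearing in the Poisson expansion $e^{n(F_{t/n}-\bdone)} = e^{-n}\sum_k (n^k/k!)F_{t/n}^k$ can grow like $C^{2k}$, and the telescoping bound $\|(F^k - F^n)\psi\|\le |k-n|\,\|(F-\bdone)\psi\|$ that drives Chernoff's lemma fails. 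Your argument is complete if you assume quasi--contractivity ($C=1$), since then the exponential rescaling already produces contractions; otherwise one needs an explicit stability hypothesis $\sup_n\|(S(t/n)T(t/n))^n\|<\infty$, which is precisely the extra assumption in Trotter's original theorem (and in Chernoff's product theorem). So the fix is not that the renorming is ``standard'' --- it is that either the hypothesis should be tightened to $C=1$, or the Trotter stability condition must be imposed. This issue traces back to the statement of Theorem~\ref{T18.1} itself, which is a bit loose here, but your proof as written does not close it.
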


\begin{remarks} 1. If $S(t)$ is a semigroup obeying \eqref{18.4}, then one defines
\begin{equation*}
  D(A) = \{\varphi \,|\, \lim_{t \downarrow 0} \left(\frac{\bdone-S(t)}{t}\right)\varphi \textrm{ exists}\}
\end{equation*}
and sets $A\varphi$ to be the limit.  One then writes $S(t) = e^{-tA}$.

2.  If $X$ is a Hilbert space, $S(t)$ is self--adjoint and a contraction, then $S(t) = e^{-tA}$ for a positive (possibly unbounded) self--adjoint operator, $A$.  This sets up a $1-1$ correspondence between such semigroups and positive self--adjoint operators.

3.  It is a famous theorem of Stone \cite[Section 7.3]{OT} that when $X$ is a Hilbert space, then $S(t)$ is unitary for all $t$ and strongly continuous at $0$ (with $S(0)=\bdone$) if and only if $S(t) = e^{-itA}$ for a self--adjoint operator $A$.
\end{remarks}

For a very simple proof when $X$ is a Hilbert space, $A$ and $B$ are self--adjoint and $A+B$ is self--adjoint (rather than only esa) on $D(A) \cap D(B)$, see \cite[Theorem VIII.30]{RS1}.  The proof is due to Nelson \cite{NelsonFK} and looks like the finite matrix proof plus one use of the uniform boundedness principle.

The limitation that $A+B$ have a closure that is a semigroup generator is quite strong.  For example, there are cases where $D(A) \cap D(B) = \{0\}$ but formally $A+B$ makes sense.  Remarkably, Kato proved a result that, at least for self--adjoint contraction semigroups, always holds.  Let $A$ and $B$ be self--adjoint operators and $q_A$, $q_B$ their closed quadratic forms as discussed in Example \ref{E10.3}.  Their form sum $q_C=q_A+q_B$ is always a closed form but $V_{q_C}$ may not be dense.  We'll write $C=A \dot{+} B$.  We need to define $e^{-tC}$ for $C$'s which are associated to closed quadratic forms where $V_q$ might not be dense.  We follow the philosophy discussed in Section 10 in the discussion of monotone convergence.  If $q$ is a closed quadratic form and $C$ is the self--adjoint operator on $\overline{V_q}$ with $V_q=D(C^{1/2})$ and $q(\varphi) = \jap{C^{1/2}\varphi,C^{1/2}\varphi}$ for $\varphi \in V_q$, then we define $e^{-t\dot{C}}$ to be the operator
\begin{equation}\label{18.6}
  e^{-t\dot{C}} = e^{-tC}P
\end{equation}
where $P$ is the orthogonal projection onto $\overline{V_q}$.  Here is Kato's result

\begin{theorem} [Kato's Ultimate Trotter Product Formula \cite{KatoTP2}] \lb{T18.2} Let $q_1, q_2$ be two closed quadratic forms on a Hilbert space, $\calH$, with associated semigroups $e^{-t\dot{A}},\,e^{-t\dot{B}}$.  Let $e^{-t\dot{C}}$ be the semigroup associated to the closed form sum $q_1+q_2$.  Then
\begin{equation}\label{18.7}
  \textrm{s}-\lim_{n \to \infty}\left[e^{-t\dot{A}/n}e^{-t\dot{B}/n}\right]^n = e^{-t\dot{C}}
\end{equation}
\end{theorem}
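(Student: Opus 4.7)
My plan is to reduce the general form-sum version to the bounded case through monotone approximation, then close the argument with a form-theoretic Chernoff product formula. After a harmless shift I may assume $q_1, q_2 \geq 0$, so $A, B \geq 0$ are positive self-adjoint (possibly with non-dense form domains) and every semigroup in sight is a contraction. Write $F(t) = e^{-tA}e^{-tB}$ and $U_n(t) = F(t/n)^n$; the target is $U_n(t) \to e^{-t\dot{C}}$ strongly, where $C$ is the form sum $A \dot{+} B$ realized on $\overline{V_{q_1+q_2}}$.

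First I would introduce the Yosida regularizations $A_\lambda = \lambda A(A+\lambda)^{-1}$ and $B_\lambda = \lambda B(B+\lambda)^{-1}$, which are bounded positive self-adjoint operators. A direct spectral-theorem computation using $\lambda s/(\lambda+s) \uparrow s$ gives $q_{A_\lambda} \uparrow q_1$ and $q_{B_\lambda} \uparrow q_2$ pointwise on all of $\calH$, with the value $+\infty$ allowed on vectors outside the form domain. Summing, $q_{A_\lambda} + q_{B_\lambda} \uparrow q_1 + q_2$; by Theorem \ref{T10.11} the associated semigroups then satisfy $e^{-t(A_\lambda+B_\lambda)} \to e^{-t\dot{C}}$ strongly for every $t \geq 0$. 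For any fixed $\lambda$, the classical Lie product formula \eqref{18.3}, an elementary Taylor-expansion argument valid for bounded operators, yields $[e^{-tA_\lambda/n}e^{-tB_\lambda/n}]^n \to e^{-t(A_\lambda+B_\lambda)}$ in operator norm as $n \to \infty$.

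The main obstacle is interchanging the $n \to \infty$ and $\lambda \to \infty$ limits: the naive telescoping bound $\|X^n - Y^n\| \leq n\|X-Y\|$ for contractions leaves a prefactor $n$ that is not controllable by the one-step discrepancy $\|e^{-tA/n}e^{-tB/n} - e^{-tA_\lambda/n}e^{-tB_\lambda/n}\| = \textrm{O}(1/n)$. To get around this, I would prove a form-theoretic Chernoff product formula: if $\{K(t)\}_{t > 0}$ is a family of positive self-adjoint contractions with $K(t) \to \bdone$ strongly as $t \downarrow 0$ and if the quadratic forms $\varphi \mapsto t^{-1}\langle \varphi,(\bdone - K(t))\varphi\rangle$ converge as $t \downarrow 0$, in the sense of Theorem \ref{T10.11}, to a closed positive form $q$ with associated operator $L$, then $K(t/n)^n \to e^{-t\dot{L}}$ strongly. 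Applied to the symmetrized kernel $K(t) = e^{-tA/2}e^{-tB}e^{-tA/2}$, the required input is the pointwise identity
\begin{equation*}
\lim_{t \downarrow 0} \frac{1}{t}\langle \varphi,(\bdone - K(t))\varphi\rangle = q_1(\varphi) + q_2(\varphi),
\end{equation*}
with $+\infty$ allowed, which is a direct spectral computation from $(1-e^{-ts})/t \uparrow s$ together with dominated/monotone bookkeeping on the cross term. Finally, one passes from the symmetrized product back to $F(t/n)^n$ using $\|F(t) - K(t)\| = \textrm{O}(t^2)$ on form-domain vectors and the fact that the two $n$-fold products differ only by moving a single boundary $e^{-tA/(2n)}$ factor; on vectors $\varphi \in \overline{V_{q_1+q_2}}^{\perp}$, a separate argument modeled on von Neumann's alternating-projection theorem (adapted to the form metric) shows $U_n(t)\varphi \to 0$, matching $e^{-t\dot{C}}\varphi = 0$ there and completing the identification on all of $\calH$.
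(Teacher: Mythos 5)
Your overall architecture --- applying a Chernoff-type product formula to the symmetrized kernel $K(t)=e^{-tA/2}e^{-tB}e^{-tA/2}$ and then passing to the asymmetric product --- matches the approach of Kato's original paper \cite{KatoTP2} and of Reed--Simon, which the present paper cites as its proof; you also correctly recognize that the Yosida-regularization opening cannot close because of the uncontrollable $n$-prefactor. But there is a genuine gap in the verification of the Chernoff hypothesis. You claim the forms $\varphi\mapsto t^{-1}\langle\varphi,(1-K(t))\varphi\rangle$ converge to $q_1+q_2$ ``in the sense of Theorem \ref{T10.11},'' i.e., monotonically, citing $(1-e^{-ts})/t\uparrow s$ together with ``dominated/monotone bookkeeping on the cross term.'' Decompose
\begin{equation*}
t^{-1}\langle\varphi,(1-K(t))\varphi\rangle = t^{-1}\langle\varphi,(1-e^{-tA})\varphi\rangle + t^{-1}\langle e^{-tA/2}\varphi,(1-e^{-tB})e^{-tA/2}\varphi\rangle.
\end{equation*}
The first term is indeed increasing as $t\downarrow 0$, but in the second the vector $e^{-tA/2}\varphi$ itself moves with $t$, so the spectral integrand is not $(1-e^{-ts})/t$ against a fixed measure, and no monotonicity is available. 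Theorem \ref{T10.11} therefore does not apply, and your Chernoff hypothesis is left unverified.

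What Kato's proof actually does is replace monotonicity by a one-sided Fatou bound, $\liminf_{t\downarrow 0}t^{-1}\langle\varphi,(1-K(t))\varphi\rangle\geq (q_1+q_2)(\varphi)$ for every $\varphi$ (with $+\infty$ allowed), together with the exact limit for $\varphi\in V_{q_1}\cap V_{q_2}$, followed by a weak-compactness argument that upgrades these pointwise facts to the strong resolvent convergence $(1+t^{-1}(1-K(t)))^{-1}\to (1+\dot{C})^{-1}$ that the Chernoff lemma really requires. That step is the heart of the proof and is missing from your sketch. Two smaller points: a correctly stated Chernoff lemma already delivers the degenerate semigroup $e^{-t\dot{C}}$, which vanishes on $\overline{V_{q_1+q_2}}^{\perp}$, so the ``separate argument modeled on von Neumann's alternating projections'' is not a distinct step; and the transfer from $K(t/n)^n$ to $F(t/n)^n$ is a bit more than moving a boundary factor --- weak convergence comes easily from $e^{-tA/(2n)}\to\bdone$ strongly, but upgrading it to strong convergence takes a short additional norm argument.
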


\begin{remarks} 1.  The proof is somewhat technical; we refer the reader to the original paper \cite{KatoTP2} or to Reed--Simon \cite[Theorem S.21]{RS1}.  The proof relies on a general result of Chernoff \cite{ChernoffTP1} (see also \cite[Theorem S.19]{RS1}).

2.  Earlier results on Trotter product formula for form sums include Chernoff \cite{ChernoffTP1, ChernoffTP2, ChernoffTP3}, Faris \cite{FarisTP} and Kato himself \cite{KatoTP1}.

3.  It would be nice to have some kind of result for $e^{-it\dot{C}}$ but it is unlikely there is one when the approximation is applied to a vector not in $\overline{V_q}$.  That said, \eqref{18.7} holds for all $t \in \bbC$ with $|\arg(t)| < \pi/2$ and, as explained by Kato in a Note to his paper \cite{KatoTP2}, by an argument that he got from me, one can extend the result from positive  self--adjoint $A, B$, to generators of holomorphic contraction semigroups.

4.  It could be argued with some justice that this paper doesn't so much belong in Kato's work on NRQM but to his work on linear semigroups.  But, as found by Nelson \cite{NelsonFK} (see also Simon \cite{SimonFI}) the Trotter product formula is central to the proof of the Feynman--Kac formula and also to interpreting Feynman integrals for $e^{-itH}$.  Moreover, we saw its appearance in Section 9 -- see Theorem \ref{T9.3}.

5.  Kato--Masuda \cite{KM} found an extension to nonlinear semigroups.  Their paper also has a new result in the linear case, namely instead of $A\dot{+}B$, one can consider $k$ positive, self--adjoint operators, $A_1,\dots,A_k$ and their form sum $A_1\dot{+}\dots\dot{+}A_k$.

\end{remarks}

\begin{example} \lb{E18.3}  Let $P$ and $Q$ be two orthogonal projections on a Hilbert space.  Define
\begin{equation}\label{18.8}
  q_1(\varphi) = \left\{
                   \begin{array}{ll}
                     0, & \hbox{ if } \varphi \in \ran\, P \\
                     \infty, & \hbox{ if } \varphi \notin \ran\, P
                   \end{array}
                 \right.
\end{equation}
and similarly for $q_2$ and $Q$.  Then $e^{-t\dot{A}}=P,\,e^{-t\dot{B}}=Q$ for all $t$.  It is easy to see that the form sum $q_1+q_2$ has the same structure as \eqref{18.8} but with $\ran\, P$ replaced by $\ran\, P\cap\ran Q$.  If $R$ is the projection onto this intersection, then Kato's result says that
\begin{equation}\label{18.9}
  \textrm{s}-\lim_{n \to \infty} (PQ)^n = R
\end{equation}
It is interesting that this geometrically well known fact is a special case of Kato's result \eqref{18.7}.
\end{example}

\section{Regularity of Eigenfunctions and the Kato Cusp Condition} \lb{s19}

If one wants to understand the wider impact of Kato's work, a good place to get insight is to look at citations at Google Scholar (\href{https://scholar.google.co.il/scholar?hl=en&q=tosio+kato}{https://scholar.google.co.il/scholar?hl=en\&q=tosio+kato}).  Of course, the publication with the most references by far is Kato's book \cite{KatoBk} with over 20,000 citations.  In second place (with over 1700 citations) is the 1957 paper \cite{KatoEig} discussed in this section.  This may be surprising to some, but it reflects its importance to quantum chemists and atomic physicists.

In this paper, Kato begins by saying that he regards this paper as a continuation of \cite{KatoHisThm}.   In that earlier paper, he stated  ``If $V$ is the Coulomb potential as in the case of real atoms, it follows that the eigenfunctions satisfy the wave equation everywhere except at singular points of the potential (they are even analytic since the Coulomb potential is an analytic function). Regarding their behavior at these singular points, we can derive no conclusion from the above theorem. A detailed study shows, however, that they are bounded even at such points''.  He is interested in the properties of $L^2$--eigenfunctions and what he calls generalized eigenfunctions or \emph{wave packets} by which he means $\psi \in \calH$ with $\psi \in \ran E_\Omega(H)$ where $H$ is a quantum Hamiltonian, $\Omega = [a,b]$, a bounded interval, and $E_\Omega(H)$ is a spectral projection \cite[Section 5.1]{OT}.  In fact, we'll soon see that $\psi \in \ran(e^{-sH})$ for some $s>0$ suffices for some of the results that Kato proved.  Kato focused on local regularity of $\psi$ with some global estimates (like on $\norm{\boldsymbol{\nabla}\psi}_\infty$).  In particular, he delivered on the boundedness result he claimed in 1951.

There is a huge literature on other aspects of eigenfunctions which we'll not discuss except for a few words now.  First, there is the issue of exponential decay which we mentioned briefly at the end of Section \ref{s12}; below all we'll discuss, in the context of proving pointwise bounds, is how to go from $L^2$ exponential decay to pointwise exponential decay.  Secondly, there is literature on the structure of nodes (i.e. the zero set); see, for example, Zelditch \cite{ZelNodes}.  Finally there are the issues of continuum eigenfunction expansions and the related theorem that $\sigma(H)$ is the closure of the set of $E$ for which $H\psi=E\psi$ has a polynomially bounded solution; see \cite[Corollary C.5.5]{SimonSmgp}.

Kato considers two classes of Hamiltonians.  The first, which we'll call general $H$, acts on $L^2(\bbR^{\nu N})$ with $\boldsymbol{x} = (x_1,\dots,x_N);\, x_j \in \bbR^\nu$ (Kato only considers the case $\nu=3$, but we'll discuss the more general case below).  $H$ then has the form
\begin{equation}\label{19.1}
  H=-\Delta+\sum_{j=1}^{N} V_j(x_j)+\sum_{1 \le j < k \le N} V_{jk}(x_j-x_k)
\end{equation}
with each $V_j,V_{jk} \in L^p(\bbR^\nu)+L^\infty(\bbR^\nu)$, where $p$ is $\nu$--canonical (see just prior to Theorem \ref{T7.9}) so that H is esa--$\nu$ (see Section \ref{s7}).  $-\Delta$ assumes equal masses of the light particle and an infinite mass heavy particle but one easily accommodates general masses using the formalism in Section \ref{s11}.

Kato also considered what we will call \emph{atomic Hamiltonians}
\begin{equation}\label{19.2}
  H = -\Delta-\sum_{j=1}^{N}\frac{Z}{|x_j|} + \sum_{1 \le j < k \le N} \frac{1}{|x_j-x_k|}
\end{equation}
on $L^2(\bbR^{3N})$.  Kato allows Hughes--Eckart terms, allows $Z$ to be $j$ dependent and allows $\tfrac{z_{jk}}{|x_j-x_k|}$ rather than $\tfrac{1}{|x_j-x_k|}$.  All these are easy to accommodate as is the molecular case where $\tfrac{Z}{|x_j|}$ is replaced by
\begin{equation}\label{19.3}
  \sum_{\ell=1}^{L} \frac{Z_\ell}{|x_j-R_\ell|}
\end{equation}
Most of the time, for simplicity of exposition, we'll discuss the atomic case.

In the atomic case, we'll be especially interested in the set of singularities where some $|x_j|$ or $|x_j-x_k|$ vanish, i.e.
\begin{equation}\label{19.3A}
  \Sigma = \{\boldsymbol{x}=(x_1,\dots,x_N) \,|\, \prod_{j=1}^{N} |x_j| \,\prod_{1 \le j < k \le N} |x_j-x_k| =0\}
\end{equation}

In \cite{KatoEig}, Kato proved three main theorems.  For the first two, we need a definition.  Let $0 < \alpha \le 1$ and $j =0$ or $1$.  Then
\begin{equation*}
  C^{j,\alpha} =\{\psi \,| \, \psi \textrm{ is } C^j \textrm{ and obeys \eqref{19.4}} \}
\end{equation*}
\begin{equation}\label{19.4}
  \exists C\, \forall_{x,y \, |\, |x-y| \le 1} \, |D^{(j)}\psi(x) - D^{(j)}\psi(y)| \le C|x-y|^\alpha
\end{equation}
($\alpha=1$ is called Lipschitz; otherwise, we are saying the derivative is H\"{o}lder continuous).  If the constant $C$ in \eqref{19.4} is allowed to depend on a compact $K$ requiring $x,y\in K \subset \bbR^\nu$, we say that $\psi \in C^{j,\alpha}_{loc}$.

\begin{theorem} [(Kato\cite{KatoEig})] \lb{T19.1} Let $\nu=3$ and let $V_j, V_{jk} \in L^\sigma(\bbR^3)+L^\infty(\bbR^3)$ for some $\sigma \ge 2$. Let $\psi$ be an eigenfunction or wave packet.  Then:

(a) For all $\alpha$ with $\alpha \le 1$ and $\alpha < 2-\tfrac{3}{\sigma}$, we have that
\begin{equation}\label{19.5}
  \psi \in C^{0,\alpha}
\end{equation}

(b) If $\sigma > 3$, we have that for all $\alpha < 1-\tfrac{3}{\sigma}$ that
\begin{equation}\label{19.6}
  \psi \in C^{1,\alpha}
\end{equation}
\end{theorem}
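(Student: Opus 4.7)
My plan is to extract regularity from the integral representation furnished by the resolvent of $-\Delta + 1$ on $\bbR^{3N}$. In the eigenfunction case, $H\psi = E\psi$ is equivalent to $(-\Delta + 1)\psi = (E + 1 - V)\psi$, so $\psi = G * f$ with $f = (E + 1 - V)\psi$ and $G$ the Yukawa kernel on $\bbR^{3N}$. For wave packets $\psi = E_{[a,b]}(H)\varphi$, the functional calculus yields $H^k\psi \in L^2$ for every $k$, and the same manipulation gives $(-\Delta + 1)\psi = (1 - V)\psi + H\psi$ with the last term in $L^2$; the two cases are handled uniformly. The delicate point is that a naive application of Sobolev embedding on $\bbR^{3N}$ combined with Calder\'{o}n--Zygmund-type elliptic regularity would produce a H\"{o}lder exponent of order $2 - 3N/\sigma$, which is far worse than the claimed $2 - 3/\sigma$. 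The theorem's sharpness reflects that each $V_j$ or $V_{jk}$ depends on only three of the $3N$ coordinates, and my plan is therefore to work in three-dimensional slices: for fixed values of all coordinates but $x_j$, the inverse of the partial operator $-\Delta_j + 1$ acts by convolution with the three-dimensional Yukawa kernel $G_3(z) = e^{-|z|}/(4\pi|z|)$, and the sharp exponents come from the mapping properties of $G_3$.

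First I would carry out a bootstrap from $\psi \in L^2$ to $\psi \in L^\infty_{\text{loc}}$. Written in the $j$-th slice, $\psi(\,\cdot\,, \widehat{x_j}, \,\cdot\,)$ satisfies a three-dimensional inhomogeneous equation with coefficient $V_j + \sum_k V_{jk}(\,\cdot\, - x_k) \in L^\sigma(\bbR^3) + L^\infty(\bbR^3)$ and a right-hand side built from $\psi$ and from derivatives of $\psi$ in the remaining variables. Since $G_3 \in L^p(\bbR^3)$ for every $p < 3$, a finite sequence of applications of Young's inequality in the $x_j$-variable, iterated over $j$ and using Fubini together with $\psi \in L^2(\bbR^{3N})$ and $H^k\psi \in L^2(\bbR^{3N})$ to control the spectator integrals, pushes $\psi$ into $L^\infty$ in finitely many steps.

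With $\psi \in L^\infty$ in hand, $f = (E + 1 - V)\psi$ lies in $L^\sigma_{\text{loc}}(\bbR^3)$ in each three-variable slice, uniformly in the remaining coordinates. Part (a) then follows from a kernel-difference estimate of the form
\begin{equation*}
|G_3(u) - G_3(v)| \le C_\alpha\, |u - v|^\alpha \bigl(|u|^{-(1+\alpha)} + |v|^{-(1+\alpha)}\bigr), \qquad 0 < \alpha \le 1,
\end{equation*}
which, paired with $f$ via H\"{o}lder's inequality, yields an integrable bound on small balls precisely when $\alpha < 2 - 3/\sigma$; taking the supremum over the spectator variables promotes this to a uniform H\"{o}lder estimate on $\bbR^{3N}$. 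For part (b), I would differentiate the slice representation once, replacing $G_3$ by $\nabla G_3$, whose $|z|^{-2}$ singularity places it in $L^{\sigma'}_{\text{loc}}(\bbR^3)$ exactly when $\sigma > 3$; the analogous kernel-difference estimate applied to $\nabla G_3$ then yields $\nabla\psi \in C^{0,\alpha}$ for any $\alpha < 1 - 3/\sigma$.

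The main obstacle is the bootstrap step: the right-hand side of the three-dimensional slice equation contains the tangential Laplacians $\Delta_k\psi$ for $k \ne j$, which are controlled only in $L^2$ a priori. Closing the iteration will require a coupled argument that improves regularity in each variable in parallel, using the global eigenfunction or wave-packet property (in the form $H^k\psi \in L^2$ for all $k$) to control the spectator-variable Laplacians at each step, and exploiting the exponential decay of $G_3$ to keep the constants uniform as one varies the frozen coordinates.
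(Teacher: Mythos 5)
You correctly identify the structural heart of the matter: each $V_j$ and $V_{jk}$ depends on only three of the $3N$ coordinates, and a naive Sobolev/Calder\'{o}n--Zygmund argument on $\bbR^{3N}$ would yield only the useless exponent $2-3N/\sigma$. However, the way you propose to exploit that structure---freezing all coordinates but $x_j$ and viewing $\psi$ as a solution of a three--dimensional equation in the $j$th slice---does not close. The slice equation $(-\Delta_{x_j}+1)\psi = (E+1-V)\psi + \Delta_{\widehat{x_j}}\psi$ carries the tangential Laplacian $\Delta_{\widehat{x_j}}\psi$ as an inhomogeneity, and that term is exactly as singular as $\Delta_{x_j}\psi$ itself: a priori you only control it in $L^2$, and bounds of the form $H^k\psi \in L^2$ buy Sobolev regularity on $\bbR^{3N}$, which reintroduces the $3N$--dependence you are trying to avoid. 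You flag this in your last paragraph and promise a ``coupled argument that improves regularity in each variable in parallel,'' but no mechanism is given; I do not see how the iteration closes, and this is not a technicality---it is the missing idea.

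The way the three--variable structure is actually exploited (both in Kato's original proof and in the paper's route through Theorem \ref{T19.6}) is \emph{not} to freeze the spectator variables but to integrate them out inside the full $3N$--dimensional representation $\psi = G_{3N} \ast f$. For the piece of $f$ involving $V_j(x_j)$, one uses Fubini: the marginal kernel $\int_{\bbR^{3(N-1)}} G_{3N}(z)\,dz_{\widehat{j}}$ behaves like $|z_j|^{-1}$ (the three--dimensional Green function singularity), because integrating the $|z|^{-(3N-2)}$ singularity over $3(N-1)$ spectator directions lowers the order by exactly $3(N-1)$. The tangential Laplacian never appears. The paper formalizes this as the statement $v \in K_3^{(\alpha)} \Rightarrow v(x_j) \in K_{3N}^{(\alpha)}$ (Remark 4 before Theorem \ref{T19.6}), combines it with $L^\sigma(\bbR^3)+L^\infty(\bbR^3) \subset K_3^{(\alpha)}$ iff $\alpha < 2-3/\sigma$ (Remark 3), and then invokes Theorem \ref{T19.6} whose proof is by functional integration. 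Your kernel--difference estimate and the bootstrap to $L^\infty$ both become tractable once you replace ``freeze and slice'' by ``convolve and integrate out,'' but as written the bootstrap does not go through.
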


The Coulomb case allows any $\sigma$ with $\sigma < 3$ but not $\sigma=3$ so it is borderline for $\psi$ being Lipschitz.  Nevertheless, Kato proved that

\begin{theorem} [(Kato \cite{KatoEig})] \lb{T19.2} Let $\nu=3$ and let $H$ be an atomic Hamiltonian.  Let $\psi$ be an eigenfunction or wave packet.  Then $\psi \in C^{0,1}$ (i.e. is Lipschitz).  Indeed $\psi$ is $C^1$ on $\bbR^{3n}\setminus\Sigma$ with $\boldsymbol{\nabla}\psi \in L^\infty$.
\end{theorem}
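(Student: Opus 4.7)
The plan is to absorb the Coulomb singularities by an explicit multiplicative ansatz $\psi=e^{-\eta}\phi$, reducing the problem to an elliptic equation for $\phi$ with globally bounded coefficients. The key identity is that on $\bbR^3$, $\Delta|y|=2/|y|$ holds both pointwise and distributionally (since $\nabla|y|=\hat{y}\in L^\infty$), so with
\[
\eta(\boldsymbol{x})\;=\;\sum_{j=1}^N \tfrac{Z}{2}\,|x_j|\;-\;\sum_{1\le j<k\le N}\tfrac{1}{4}\,|x_j-x_k|,
\]
the full Laplacian on $\bbR^{3N}$ gives $\Delta\eta=\sum_j Z/|x_j|-\sum_{j<k}1/|x_j-x_k|=-V$. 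Setting $F=e^{-\eta}$ and using $\nabla F=-F\nabla\eta$, $\Delta F=F(|\nabla\eta|^2-\Delta\eta)$, a product--rule computation reduces $(-\Delta+V)\psi=E\psi$ to
\[
-\Delta\phi\;+\;2\,(\nabla\eta)\cdot\nabla\phi\;=\;\bigl(E+|\nabla\eta|^2\bigr)\phi.
\]
Every component of $\nabla\eta$ is a finite sum of unit vectors of the form $\hat{x}_j$ or $\widehat{x_j-x_k}$ with constant coefficients, so $\nabla\eta\in L^\infty(\bbR^{3N})$; consequently the drift $2\nabla\eta$ and the zero--order coefficient $E+|\nabla\eta|^2$ are both in $L^\infty(\bbR^{3N})$.

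Next I would bootstrap the regularity of $\phi$. Theorem 19.1(a), applied with $\sigma$ any value strictly less than $3$ (the Coulomb potential is in $L^\sigma_{loc}$ for all such $\sigma$), gives $\psi\in C^{0,\alpha}_{loc}$ for every $\alpha<1$; since $F$ is continuous and strictly positive, so is $\phi=F^{-1}\psi$, and $\phi$ inherits the same local H\"older regularity. Iterating the constant--coefficient $L^p$--theory for $-\Delta$ --- the same Calder\'on--Zygmund machinery underlying Theorem 19.1(b) --- applied to $-\Delta\phi=-2(\nabla\eta)\cdot\nabla\phi+(E+|\nabla\eta|^2)\phi$ with bounded right--hand side places $\phi\in W^{2,p}_{loc}$ for every $p<\infty$, and Morrey's embedding then gives $\phi\in C^{1,\alpha}_{loc}$ for every $\alpha<1$. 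Off $\Sigma$, $\eta$ is real--analytic, so $F$ is smooth there and $\psi=F\phi$ is $C^1$ (indeed $C^\infty$, since $V$ is then smooth). On all of $\bbR^{3N}$, the product rule
\[
\nabla\psi\;=\;F\,\nabla\phi\;-\;F\phi\,\nabla\eta
\]
exhibits $\nabla\psi$ as a locally bounded function --- $F$, $\phi$, $\nabla\phi$ are locally bounded and $\nabla\eta$ is globally bounded --- which yields the Lipschitz conclusion.

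The principal technical obstacle is that Theorem 19.1(b) is stated for pure Schr\"odinger equations $(-\Delta+V)\phi=E\phi$, whereas the equation for $\phi$ carries a bounded first--order term. One either redoes the proof of Theorem 19.1(b) noting that all its ingredients (the mapping properties of $(-\Delta+1)^{-1}$ between weighted $L^p$ spaces, and the boundedness of the perturbation relative to $-\Delta$) are insensitive to the addition of a bounded drift, or performs a second gauge change $\phi=e^{-\tau}\tilde\phi$ designed to cancel $2\nabla\eta$ against $2\nabla\tau$ and reduce to the pure potential case. A secondary point is that $F$ is neither bounded above nor bounded below as $|\boldsymbol{x}|\to\infty$, so the passage from local Lipschitz bounds to the claimed global $\nabla\psi\in L^\infty$ must use decay/regularity of $\psi$ itself rather than the ansatz: for $L^2$ eigenfunctions below the essential spectrum one invokes exponential decay (the Agmon/Froese--Herbst bounds of Section \ref{s12}), while for a wave packet $\psi\in\ran\,E_\Omega(H)\subset D(H^k)$ for all $k$, giving enough $L^2$--regularity to convert local $C^{1,\alpha}$ estimates into uniform ones.
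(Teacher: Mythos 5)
Your argument is sound, but it takes a genuinely different route from Kato's own. Kato proved Theorem~\ref{T19.2} by rewriting $(-\Delta+V)\psi=E\psi$ as an integral equation and estimating the kernel directly; you instead perform a Jastrow--type gauge transform $\psi=e^{F_2}\phi$ with $F_2=-\eta$ and invoke local elliptic $L^p$ regularity. What you have rediscovered is precisely the method of Hoffmann--Ostenhof et al.\ \cite{HOHOOSWF} and Fournais et al.\ \cite{FHOHOOSCusp}, discussed in this paper around Theorem~\ref{19.8}: the identity $\Delta F_2 = V$ (your $\Delta\eta=-V$) converts the singular zero--order coefficient into a bounded first--order drift $2\nabla\eta$, and elliptic bootstrapping then gives $\phi\in C^{1,\alpha}_{loc}$ for all $\alpha<1$. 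One bonus of your route is that it yields the cusp structure (Theorems~\ref{T19.4} and~\ref{T19.7}) essentially as a byproduct, since $\nabla\psi=e^{-\eta}(\nabla\phi-\phi\nabla\eta)$ displays the discontinuous directional part explicitly; Kato's integral--equation approach has to extract the cusp coefficient by a separate calculation.

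The one genuinely soft spot, which you flag yourself, is the passage from local $C^{1,\alpha}$ bounds for $\phi$ to the global estimate $\nabla\psi\in L^\infty(\bbR^{3N})$. Since $\eta$ is unbounded (in both signs) at infinity, the factor $e^{-\eta}$ does not transfer local bounds on $\phi$ and $\nabla\phi$ into global bounds on $\nabla\psi$; your appeal to exponential decay for discrete eigenfunctions is fine, but for an arbitrary wave packet $\psi\in\ran E_\Omega(H)$ the argument ``$\psi\in D(H^k)$ for all $k$ gives enough $L^2$--regularity'' needs more care, because Sobolev embedding in $\bbR^{3N}$ with $N$ large requires controlling many derivatives and the higher powers of $V$ in $H^k$ reintroduce Coulomb singularities. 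The cleanest way to close this is to observe that the local interior Schauder--type estimate for $\phi$ (valid on unit balls, with constants depending only on the uniformly bounded coefficient $\nabla\eta$, which is globally bounded) already gives $\sup_{B_1(x_0)}|\nabla\phi|\le C\sup_{B_2(x_0)}|\phi|$ with $C$ independent of $x_0$, and then undo the gauge to get $\sup_{B_1(x_0)}|\nabla\psi|\le C'\sup_{B_2(x_0)}|\psi|$ with $C'$ also uniform because $\nabla\eta\in L^\infty$ and the ratio $e^{-\eta(x)}/e^{-\eta(y)}$ is uniformly bounded for $|x-y|\le 2$. Combined with $\|\psi\|_\infty<\infty$, which you already have from Theorem~\ref{T19.1}(a), this yields the global Lipschitz bound; it is essentially the same uniform--over--translates trick that makes Kato's integral--kernel estimate work.
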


\begin{remarks} 1.  It is easy to see by the fact that $\Sigma$ is closed of measure zero, that the $C^1$ result with bounded derivative implies the $C^{0,1}$ result.

2.  As Kato remarks, in the atomic case, there were no previous positive results on regularity of eigenfunctions if $N \ge 2$ although it was known that certain series expansions did not work.

3.  Since the potentials are real analytic on $\bbR^{3N}\setminus\Sigma$, it is known by elliptic regularity \cite{Pic, FriedAnal, GT}  that genuine eigenfunctions are real analytic on $\bbR^{3N}\setminus\Sigma$.  So the point of the theorem is control on $\Sigma$ and the uniformity of the bounds.
\end{remarks}

Kato's third result concerns the exact behavior at the two particle coincidences.  To understand why he states the theorem as he does, consider Hydrogen--like Hamiltonians where the eigenfunctions are exactly known.

\begin{example} \lb{E19.3} Let $h = -\Delta-\tfrac{2}{|x|}$ on $L^2(\bbR^3)$.  It is known \cite{Griff} that the unnormalized ground (1s) state is given by
\begin{equation}\label{19.7}
  \varphi_0(\boldsymbol{r}) = e^{-r}; \qquad r = |\boldsymbol{r}|
\end{equation}
obeying $h\varphi_0 = -\varphi_0$.  Notice that $\varphi_0$ is not $C^1$ at $\boldsymbol{r}=0$ but has a cusp there, i.e.
\begin{equation}\label{19.8}
  \boldsymbol{\nabla}\varphi_0(\boldsymbol{r}) = -\frac{\boldsymbol{r}}{r} e^{-r}
\end{equation}
so that the limit of the derivative ar $\boldsymbol{r}=0$ is directionally dependent.

The 2p state (with $m=0$) is given by
\begin{equation}\label{19.9}
  \varphi_1(\boldsymbol{r}) = z e^{-r/2}; \qquad \boldsymbol{r} = (x,y,z) \in \bbR^3
\end{equation}
obeying $h\varphi_1 = -\tfrac{1}{4} \varphi_1$.  Thus
\begin{equation}\label{19.10}
  \boldsymbol{\nabla}\varphi_1(\boldsymbol{r}) = -\frac{1}{2} z \frac{\boldsymbol{r}}{r} e^{-r/2} + (0,0,1) e^{-r/2}
\end{equation}
This derivative is continuous at $\boldsymbol{r}=0$ and non--zero at $\boldsymbol{r}=0$.  Kato had the realization that by taking a spherical average of $\psi$, one captures (at least in the one electron case) exactly the s states which have cusps.  That explains why he took the average in the next Theorem.
\end{example}

\begin{theorem} [Kato Cusp Condition \cite{KatoEig}] \lb{T19.4} Let $H$ be an atomic Hamiltonian and let $\psi$ be an $L^2$ eigenfunction for $H$.  Let $\boldsymbol{x}=(x_1,\dots,x_N)$.  Define on $(0,\infty) \times \bbR^{3(N-1)}$
\begin{equation}\label{19.11}
  \wti{\psi}(r,x_2,\dots,x_N) = \frac{1}{4\pi} \int_{S^2} \psi(r\omega,x_2,\dots,x_N) \, d\omega
\end{equation}
where $d\omega$ is the surface measure on the two dimensional sphere, so $\wti{\psi}$ is a spherical average.  Then except for $(x_2,\dots,x_N)$ in a set of lower dimension (i.e. less than $3N-3$), one has that
\begin{equation}\label{19.12}
  \left.\frac{\partial\wti{\psi}}{\partial r}\right|_{r=0} = -\frac{Z}{2} \psi(0,x_2,\dots,x_N)
\end{equation}
\end{theorem}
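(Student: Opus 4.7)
The strategy is to fix $(x_2,\dots,x_N)$ outside a set of lower dimension, rewrite $H\psi = E\psi$ on that slice as a Schr\"odinger equation in $x_1\in\bbR^3$ whose only singularity near $x_1=0$ is the nuclear Coulomb term $-Z/|x_1|$, and then extract \eqref{19.12} by spherical averaging in $x_1$ and a one-dimensional ODE integration in $r=|x_1|$.

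Define the exceptional set
\begin{equation*}
\Sigma' \;=\; \{(x_2,\dots,x_N)\in\bbR^{3(N-1)} : x_j=0 \text{ for some } j\ge 2,\ \text{or}\ x_j=x_k \text{ for some } 2\le j<k\le N\}.
\end{equation*}
This is a finite union of codimension-$3$ algebraic subvarieties in $\bbR^{3(N-1)}$, hence of dimension $\le 3N-6<3N-3$, which justifies the ``lower dimension'' clause. Fix $(x_2,\dots,x_N)\notin\Sigma'$ and put $\delta=\tfrac{1}{2}\min_{j\ge 2}|x_j|>0$, so that on the slice $\Sigma$ meets $\{|x_1|<\delta\}$ only at $x_1=0$. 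By Theorem~\ref{T19.2}, $\psi$ is Lipschitz on $\bbR^{3N}$ with $\boldsymbol{\nabla}\psi\in L^\infty$, and by elliptic regularity for the real analytic potential on $\bbR^{3N}\setminus\Sigma$, $\psi$ is real analytic there. Consequently $H\psi=E\psi$ holds on the slice as the pointwise identity
\begin{equation*}
-\Delta_{x_1}\psi \;=\; \frac{Z}{|x_1|}\,\psi \;+\; G(x_1), \qquad 0<|x_1|<\delta,
\end{equation*}
with
\begin{equation*}
G \;=\; E\psi + \sum_{j\ge 2}\Delta_{x_j}\psi + \sum_{j\ge 2}\tfrac{Z}{|x_j|}\psi - \sum_{k\ge 2}\tfrac{\psi}{|x_1-x_k|} - \sum_{2\le j<k}\tfrac{\psi}{|x_j-x_k|}.
\end{equation*}
Every term of $G$ except $\sum_{j\ge 2}\Delta_{x_j}\psi$ is manifestly bounded in $x_1$ on $\{|x_1|\le\delta/2\}$.

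Take the spherical mean $\tilde\psi(r)=(4\pi)^{-1}\int_{S^2}\psi(r\omega,x_2,\dots,x_N)\,d\omega$ and similarly $\tilde G(r)$. Using $\Delta_{x_1}=r^{-2}\partial_r(r^2\partial_r)+r^{-2}\Delta_{S^2}$ and $\int_{S^2}\Delta_{S^2}\psi\,d\omega=0$ (integration by parts on a closed manifold), averaging gives
\begin{equation*}
-\frac{1}{r^2}\frac{d}{dr}\bigl(r^2\,\tilde\psi'(r)\bigr) \;=\; \frac{Z}{r}\,\tilde\psi(r)\;+\;\tilde G(r), \qquad 0<r<\delta/2.
\end{equation*}
Multiplying by $r^2$ and integrating from $0$ to $r$, with the boundary term $\lim_{s\downarrow 0} s^2\tilde\psi'(s)=0$ because $\tilde\psi'(s)=(4\pi)^{-1}\int_{S^2}\omega\cdot(\boldsymbol{\nabla}_{x_1}\psi)(s\omega,\dots)\,d\omega$ is bounded by Theorem~\ref{T19.2}, yields
\begin{equation*}
-r^2\,\tilde\psi'(r) \;=\; \int_0^r \bigl[Zs\,\tilde\psi(s) + s^2\,\tilde G(s)\bigr]\,ds \;=\; \tfrac{Zr^2}{2}\,\tilde\psi(0) \;+\; O(r^3),
\end{equation*}
using the Lipschitz bound $\tilde\psi(s)=\tilde\psi(0)+O(s)$ and boundedness of $\tilde G$ near $0$. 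Dividing by $r^2$ and letting $r\downarrow 0$ gives $\tilde\psi'(0^+)=-\tfrac{Z}{2}\tilde\psi(0)$, and continuity of $\psi$ identifies $\tilde\psi(0)$ with $\psi(0,x_2,\dots,x_N)$, which is \eqref{19.12}.

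The main obstacle is the boundedness of $\tilde G(r)$ as $r\downarrow 0$, equivalently that the spherical mean in $x_1$ of $\sum_{j\ge 2}\Delta_{x_j}\psi$ stays bounded as $x_1\to 0$ on the slice. Off $\Sigma$ these terms are real analytic, but a priori they could blow up as $x_1\to 0$. The cleanest fix is to replay the step from the pointwise identity onwards in weak form: test $H\psi=E\psi$ against cutoff functions of the form $\zeta(|x_1|)\chi(x_2,\dots,x_N)$ with $\zeta$ radial and supported in $\{|x_1|<\delta\}$ and $\chi$ supported away from $\Sigma'$, integrate the second $x_j$-derivatives by parts onto $\chi$, and deduce the averaged ODE above for almost every $(x_2,\dots,x_N)\notin\Sigma'$. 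The global $L^\infty$ bound on $\boldsymbol{\nabla}\psi$ coming from Theorem~\ref{T19.2}, rather than mere local $C^1$-regularity off $\Sigma$, is what both makes the $r^2\tilde\psi'(r)$ boundary term vanish at $r=0$ and rules out a spurious $\log r$ competing with the linear cusp, thereby pinning down the coefficient $-Z/2$.
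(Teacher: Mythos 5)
Your slicing-plus-radial-ODE strategy is genuinely different from the route the paper attributes to Kato (rewriting the eigenvalue equation as an integral equation and analyzing it), and the skeleton is right: off your codimension-three set $\Sigma'$ the only singularity on the slice near $x_1=0$ is $-Z/|x_1|$, spherical averaging kills the angular Laplacian, and integrating $\frac{d}{ds}\bigl(s^2\wti{\psi}'(s)\bigr)$ from $0$ to $r$ with the vanishing boundary term (which the gradient bound from Theorem \ref{T19.2} correctly supplies) reduces everything to the behavior of $\int_0^r\bigl[Zs\,\wti{\psi}(s)+s^2\,\wti{G}(s)\bigr]\,ds$.

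But the step you yourself flag is a genuine gap, and the fix you propose does not close it. To conclude $\wti{\psi}'(0^+)=-\tfrac{Z}{2}\wti{\psi}(0)$ you need $\int_0^r s^2\,\wti{G}(s)\,ds=\textrm{o}(r^2)$, i.e. $\wti{G}(s)=\textrm{o}(1/s)$; if the spherical average of $\sum_{j\ge2}\Delta_{x_j}\psi$ behaved like $c/|x_1|$ as $x_1\to0$, it would shift the cusp coefficient by $-c/2$ and the asserted constant would be wrong. Theorem \ref{T19.2} controls only first derivatives, and the point $(0,x_2,\dots,x_N)$ lies on $\Sigma$, so nothing you have quoted controls second derivatives there; real analyticity off $\Sigma$ says nothing about uniformity as $x_1\to 0$. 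Your weak-form repair moves $\Delta_{x_j}$, $j\ge2$, onto the test factor $\chi$, but what that yields is a distributional identity in $(x_2,\dots,x_N)$ still containing $\Delta_{(x_2,\dots,x_N)}\int\psi\,\zeta\,d^3x_1$; to return to a statement on a fixed slice you must show that this distribution is, near the slice, a bounded function — which is precisely the transverse regularity you are missing, not a consequence of the $L^\infty$ gradient bound. What actually closes the gap is either the decomposition $\psi=\varphi_1+|x_1|\varphi_2$ of Theorem \ref{T19.7} (which trivializes the estimate but is a much later result) or Kato's own device of an integral-equation representation whose kernel exhibits the $\textrm{O}(|x_1|)$ correction explicitly. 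A small side point: the competing homogeneous radial solution in three dimensions is $r^{-1}$, not $\log r$; the gradient bound rules out the former, and no logarithm enters.
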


\begin{remarks} 1.  \eqref{19.12} is the celebrated \emph{Kato cusp condition}.

2.  There is a similar result at $x_j-x_k=0$; $-\tfrac{Z}{2}$ is replaced by $+\tfrac{1}{2}$.

3.  In \eqref{19.12}, the left side means to compute the derivative for $r > 0$ (using that $\psi$ is $C^1$ there $\Rightarrow \wti{\psi}$ is $C^1$) and then take $r \downarrow 0$.  \eqref{19.12} says that $\wti{\psi}(r,x_2,\dots,x_N) = -\tfrac{Z}{2}r\psi(0,x_2,\dots,x_N)+\textrm{o}(r)$ so that, if $\psi(0,x_2,\dots,x_N) \ne 0$, $\wti{\psi}$ has a cusp as it does for Hydrogen.

4.  Most modern variational calculations for atoms and molecules use basis elements that have the cusp condition, so this theorem is very influential.
\end{remarks}

Kato's proofs depend on rewriting the time--independent Schr\"{o}dinger equation as an integral equation and analyzing that equation.  This completes what we want to say about Kato's paper itself.  We turn to later work, first concerning general Hamiltonians and Theorem \ref{T19.1}.  The most powerful results use path integral methods (pioneered by Herbst--Sloan \cite{HerbS}, Carmona \cite{Carm} and Aizenman--Simon \cite{AizSimon}; two comprehensive references are \cite{SimonFI, SimonSmgp}) and are expressed in terms of a class of spaces $K_\nu^{(\alpha)}; \nu = 1,2,\dots;\, \alpha \in [0,2)$ defined by (we suppose $\nu \ge 2$ and when $\alpha = 0$ that $\nu \ge 3$; we refer the reader to \cite{SimonSmgp} for the other cases):

\begin{definition} $K_\nu^{(\alpha)}$ is defined by

(a) for $\alpha \in (0,1)\cup (1,2)$ and $\nu \ge 2$ as those $V$ with
\begin{equation}\label{19.13}
  \sup_{x} \int_{|x-y| \le 1} |x-y|^{-(\nu-2+\alpha)} |V(y)| \, dy < \infty
\end{equation}

(b) if $\alpha = 0$ or $\alpha =1$ and $\nu \ge 3$ by
\begin{equation}\label{19.14}
  \lim_{r \downarrow 0} \sup_{x} \int_{|x-y| \le r} |x-y|^{-(\nu-2+\alpha)} |V(y)| \, dy  = 0
\end{equation}
\end{definition}

\begin{remarks}  1.  If $\alpha = 0$, $K_\nu^{(0)} = K_\nu$ as defined in \eqref{9.33}.

2.  If $\alpha_1 > \alpha$, then $K_\nu^{(\alpha_1)} \subset   K_\nu^{(\alpha)}$

3.  If $p > \nu/(2-\alpha)$, then $L^p_{unif} \subset K^{(\alpha)}_\nu$ by H\"{o}lder's inequality.  In particular $v \in L^\sigma(\bbR^3)+L^\infty(\bbR^3) \Rightarrow v \in K_3^{(\alpha)}$ so long as $\alpha < 2-3/\sigma$.

4.  As with $K_\nu$, $v(x) \in K_\nu^{(\alpha)}$ for $x \in \bbR^\nu$ implies that $V(x,y) \equiv v(x),\, x\in \bbR^\nu, y\in\bbR^{\mu-\nu} \Rightarrow V \in K_\mu^{(\alpha)}$.  Thus in the context of Theorem \ref{T19.1} $V_j(x_j)$ and $V_{jk}(x_j-x_k)$ on $\bbR^{3N}$ will lie in $K_{3N}^{(\alpha)}$ if the $V$'s, $\alpha$ and $\sigma$ are as in Remark 2.  This means that Theorem \ref{T19.1} follows from Theorem \ref{T19.6} below.

5.  As with $K_\nu$, these spaces are special cases of a class of spaces of Schechter \cite{Schechter}.  In this context, they were introduced by Simon \cite{SimonSmgp}.

6. $K_{\nu,loc}^{(\alpha)}$ is those $V$ whose restriction to each ball in $\bbR^\nu$ lies in $K_\nu^{(\alpha)}$.
\end{remarks}

One of Kato's realizations is that eigenfunctions are bounded and continuous.  In this regard, the following is useful.

\begin{theorem} [Subsolution estimate] \lb{T19.5} Let $V$ be a function on $\bbR^\nu$ with $V \in K_\nu$.  Let $\psi \in L^2_{loc}$ solve $(-\Delta+V)\psi = 0$ in distributional sense.  Then $\psi$ is a continuous function and for any $r > 0$, there is $C$ depending only on the $K_\nu$--norm of $V_- \equiv \max(V(x),0)$ (and, in particular, not on $\psi$) so that
\begin{equation}\label{19.15}
  |\psi(x)| \le C \int_{|x-y| \le r} |\psi(y)| \, dy
\end{equation}
\end{theorem}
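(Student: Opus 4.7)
The strategy is to combine Kato's distributional inequality of Theorem \ref{T9.4} with a Feynman--Kac mean value estimate for subsolutions. First I would use Kato's inequality: since $V \in K_\nu \subset L^1_{loc}$ and $\psi \in L^2_{loc}$, the product $V\psi$ lies in $L^1_{loc}$, so $\Delta\psi = V\psi$ holds distributionally with right-hand side in $L^1_{loc}$. Theorem \ref{T9.4} then yields
\begin{equation*}
\Delta|\psi| \ge \Real(\mathrm{sgn}(\psi)\Delta\psi) = V|\psi|,
\end{equation*}
so that $u:=|\psi|\ge 0$ is a distributional subsolution of $(-\Delta + V)u = 0$ in every bounded open subset of $\bbR^\nu$.

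The core step is a probabilistic mean value inequality: if $u \ge 0$ is a distributional subsolution of $(-\Delta + V)u = 0$ in $B_{2r}(x)$, then
\begin{equation*}
u(x) \le E_{x}\!\left[\exp\!\left(-\int_0^{\tau_r} V(b(s))\,ds\right) u(b(\tau_r))\right],
\end{equation*}
where $b(\cdot)$ is Brownian motion starting at $x$ and $\tau_r$ is its exit time from $B_r(x)$. For smooth $u$ and bounded $V$ this is a standard It\^o/Dynkin computation; for distributional $u$ and $V\in K_\nu$, I would mollify, check that the mollification of a subsolution is a subsolution of a perturbed equation, and pass to the limit, leaning on uniform control of the Feynman--Kac kernel. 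That uniform control comes from Khas'minskii's lemma, which for any $W \in K_\nu$ yields $\sup_{y} E_y[\exp(\int_0^{\tau_r} |W|(b(s))\,ds)] < \infty$, with a bound depending only on $\|W\|_{K_\nu}$ and $r$. Applying this with $W = V_-$ and using $e^{-\int_0^{\tau_r} V} \le e^{\int_0^{\tau_r} V_-}$ together with Cauchy--Schwarz bounds the right-hand side by $C(\|V_-\|_{K_\nu},r) \bigl(E_x[u(b(\tau_r))^2]\bigr)^{1/2}$, an $L^2$ boundary estimate on $u$.

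To turn this into the volume estimate \eqref{19.15}, I would average over radii $\rho \in (r/2, r)$, using that the exit distribution of Brownian motion from a ball has a density that is bounded against surface measure on each sphere; combining this with one further iteration of the mean value bound (to downgrade $L^2$ on a larger sphere to $L^1$ on a ball) produces $u(x) \le C \int_{|x-y|\le r} u(y)\,dy$ with $C$ depending only on $r$ and $\|V_-\|_{K_\nu}$. For the continuity claim: once \eqref{19.15} is established we have $\psi \in L^\infty_{loc}(\bbR^\nu)$, whence $V\psi$ is locally a $K_\nu$ function. Writing $\psi$ locally as the sum of a harmonic function and the convolution of $V\psi$ with the Newton kernel of $-\Delta$, the defining property of $K_\nu$ (see \eqref{9.33}) ensures that this convolution is continuous, and hence so is $\psi$.

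I expect the main obstacle to be the rigorous justification of the Feynman--Kac mean value inequality for merely distributional subsolutions when $V$ is only in $K_\nu$, since the standard It\^o calculus requires smoothness of both $u$ and $V$. This has to be handled by a careful mollification argument in which the Khas'minskii bound provides the uniform estimates needed to pass to the limit, and in which one verifies that the relevant inequalities are preserved under convolution with smooth positive kernels.
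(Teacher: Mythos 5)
The paper does not actually prove Theorem \ref{T19.5}; it merely cites two routes --- Agmon's analytic proof and the Aizenman--Simon path integral proof --- and notes in Remark 3 that Kato's inequality supplies the reduction from solutions to nonnegative subsolutions. Your proposal develops exactly the Aizenman--Simon route: Kato's inequality to pass to $u=|\psi|$, the Feynman--Kac mean value bound for subsolutions, Khas'minskii's lemma to control the exponential functional with a constant depending only on the Kato norm of the negative part, Cauchy--Schwarz plus radial averaging and iteration to reach the $L^1$ bound, and the Newton-potential argument for continuity. The structure and the identification of the delicate point (justifying the Feynman--Kac inequality for merely distributional subsolutions with $V\in K_\nu$) are both right. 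One remark: the theorem statement as printed defines $V_-\equiv\max(V(x),0)$, but this contradicts \eqref{9.1} where $V_-(x)=\max(-V(x),0)$; the latter is what your argument (correctly) uses, since only the negative part of $V$ can degrade the Khas'minskii constant. The ``one further iteration'' step from $L^2$ to $L^1$ is stated a bit loosely --- in detail one runs a short absorption iteration over a nested family of balls --- but the idea is standard and sound.
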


\begin{remarks} 1.  Such estimates go back to Stampacchia \cite{Stamp} and Trudinger \cite{Trud} who had stronger hypotheses on $V$.  For $V \in K_\nu$, Agmon \cite[Chapter 5]{AgmonBk} has an analytic proof and Aizenman--Simon \cite{AizSimon} a path integral proof; see also \cite{SimonSmgp}.

2.  It is enough to have $V_- \in K_\nu$ and $V_+ \equiv V+V_- \in K_{\nu, loc}$.

3.  The name comes from the fact that it is a result proven for positive functions, $u$ with $(-\Delta+V)u \le 0$ (so subsolutions rather than solutions as in subharmonic rather than harmonic).  Kato's inequality shows that if $(-\Delta+V)\psi = 0$, then $u = |\psi|$ is a subsolution.  In this form, the inequality is intimately connected to Harnack's inequality \cite{AizSimon, SimonSmgp}.
\end{remarks}

Subsolution estimates are important because they say that $\psi \in L^2 \Rightarrow \psi \in L^\infty$ (with, in fact, the function going pointwise to zero at $\infty$) and so they give the bounded continuous part of Kato's Theorem \ref{T19.1} (for eigenfunctions; for wave packets, see below).  They also show that $e^{ar}\psi \in L^2 \Rightarrow e^{ar}\psi \in L^\infty$ and so the $L^2$ exponential decay estimates discussed in Theorem \ref{T12.7} imply pointwise exponential decay.

The following has Theorem \ref{T19.1} as a special case:

\begin{theorem} \lb{T19.6}  Let $0 < \alpha < 2$.  Let $V_-\in K_{\nu}^{(\alpha)}, \,V_+\in K_{\nu, loc}^{(\alpha)}$.  Let $f \in L^2(\bbR^\nu)$.  Then, for each $t > 0$, $e^{-tH}f$ lies in

(a) $C^{0,\alpha}$ if $\alpha\in (0,1)$

(b) Is $C^1$ and in $C^{0,1}$ if $\alpha=1$

(c) $C^{1,\alpha-1}$ if $\alpha \in (1,2)$

\noindent and the norms only depend on $t$, the $L^2$--norm of $f$ and the $K_\nu$ norm of $V_-$.
\end{theorem}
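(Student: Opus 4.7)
\textbf{Proof proposal for Theorem \ref{T19.6}.}

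The plan is to combine the subsolution estimate (Theorem \ref{T19.5}) with a Duhamel expansion, and then to reduce the regularity statements to pointwise H\"{o}lder bounds on a convolution against the Green's kernel. First I would use the semigroup property to write $e^{-tH}f = e^{-(t/2)H}g$ with $g = e^{-(t/2)H}f$, noting that $V_-\in K_\nu$ (since $K_\nu^{(\alpha)}\subset K_\nu$ for $\alpha>0$, and $K_\nu^{(0)}$ agrees with $K_\nu$) so Theorem \ref{T19.5} (and the standard semigroup $L^2\!\to\! L^\infty$ bound for Schr\"{o}dinger operators with $K_\nu$ potentials) gives $\norm{g}_\infty \le C(t,\norm{V_-}_{K_\nu})\norm{f}_2$. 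Hence the problem reduces to showing that for bounded initial data $g$, the function $u(x) := (e^{-(t/2)H}g)(x)$ has the claimed local regularity.

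Next I would apply the Duhamel identity $u = e^{-(t/2)H_0}g - \int_0^{t/2} e^{-(t/2-s)H_0}\bigl(V\,e^{-sH}g\bigr)\,ds$, where $H_0=-\Delta$. The first term is a convolution of $g\in L^\infty$ against a Gaussian, hence $C^\infty$ with all derivatives bounded by constants depending only on $t$ and $\norm{g}_\infty$. The second term is the delicate one. Setting $\varphi_s := V\,e^{-sH}g$ (which is locally in $L^1$ with $|\varphi_s|\le |V|\cdot\norm{g}_\infty\cdot C(s)$ by the subsolution bound applied to $e^{-sH}g$), I would interchange the $s$- and $z$-integrations to rewrite the Duhamel term as a convolution $(K_t \ast \varphi)(x)$ with the kernel
\begin{equation*}
 K_t(x) \;=\; \int_0^{t/2} p_\sigma(x)\,d\sigma,
\end{equation*}
where $p_\sigma$ is the heat kernel. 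The kernel $K_t$ is, up to a smooth bounded remainder, comparable to the Green's function: $|K_t(x)|\lesssim |x|^{-(\nu-2)}$ (with logarithmic modifications when $\nu=2$ and trivial when $\nu=1$), $|\nabla K_t(x)|\lesssim |x|^{-(\nu-1)}$, and for $|x-y|\le |x|/2$ one has standard H\"{o}lder difference bounds $|K_t(x)-K_t(y)| \lesssim |x-y|^{\alpha}\,|x|^{-(\nu-2+\alpha)}$ when $\alpha\in(0,1]$, and the analogous bound for $\nabla K_t$ when $\alpha\in(1,2)$.

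With these ingredients in hand, the three cases follow a uniform pattern. For $\alpha\in(0,1)$, splitting the convolution into the region $|x-z|\le 2|x-y|$ (treated by separate bounds on $K_t(x-z)$ and $K_t(y-z)$) and $|x-z|>2|x-y|$ (treated via the H\"{o}lder difference estimate on $K_t$) gives $|u(x)-u(y)|\lesssim |x-y|^\alpha$ times a constant proportional to $\sup_x\int_{|x-z|\le 1}|x-z|^{-(\nu-2+\alpha)}|V(z)|\,dz$, which is exactly the $K_\nu^{(\alpha)}$ seminorm (the tail $|x-z|>1$ is handled by the decay of $K_t$ coming from the cutoff at $t/2$). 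For $\alpha\in(1,2)$, one first justifies differentiation under the integral -- the pointwise bound $|\nabla K_t(x)|\lesssim |x|^{-(\nu-1)}$ integrated against $|V|$ is finite by $V\in K_\nu^{(\alpha)}\subset K_\nu^{(1)}$ -- and then repeats the same dyadic splitting argument with $K_t$ replaced by $\nabla K_t$ to obtain $\alpha-1$-H\"{o}lder continuity of $\nabla u$. For $\alpha=1$, the bound $|K_t(x)-K_t(y)|\lesssim |x-y|\,|x|^{-(\nu-1)}$ together with $V_-\in K_\nu^{(1)}$ gives Lipschitz continuity; actual differentiability is then obtained by a separate dominated-convergence argument using that $V\in K_{\nu}^{(1)}$ makes $\nabla K_t \ast (V\,e^{-sH}g)$ an absolutely convergent integral. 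The main technical obstacle throughout is the dyadic/near-diagonal splitting in the Duhamel convolution, together with the endpoint case $\alpha=1$, where one must upgrade a Lipschitz bound on $u$ to the actual existence of a continuous gradient -- this is where the definition of $K_\nu^{(1)}$ with the $\lim_{r\downarrow 0}$ condition (rather than merely $\sup_x$) is essential.
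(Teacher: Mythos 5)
The paper offers no proof of Theorem \ref{T19.6} in the text; it cites Simon \cite[Theorem B.3.5]{SimonSmgp}, whose argument runs through the Feynman--Kac formula. Your Duhamel/PDE route is therefore a genuinely different approach. It is a reasonable plan when $V$ lies globally in $K_\nu^{(\alpha)}$, but as written it does not cope with the hypotheses of the theorem, and the place where it breaks is precisely the place where functional integration does something the Duhamel expansion does not.

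The difficulty is the positive part $V_+$, which is assumed only to lie in $K_{\nu,{\rm loc}}^{(\alpha)}$ and can grow arbitrarily fast at infinity (e.g.\ $V_+(z)\sim e^{|z|^3}$ is allowed). Your key step is to dominate $|\varphi_s(z)|=|V(z)|\,|(e^{-sH}g)(z)|$ by $C(s)\,\norm{g}_\infty\,|V(z)|$ and then estimate the Duhamel term by a convolution of $|V|=V_++V_-$ against the kernel $K_t$ (or its gradient). But $\int K_t(x-z)\,V_+(z)\,dz$ can be $+\infty$ for such $V_+$: the Gaussian decay of $K_t$ is no match for a super-Gaussian $V_+$. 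The domination $|(e^{-sH}g)(z)|\le C(s)\norm{g}_\infty$, while correct and with a constant depending only on $V_-$, is not enough --- one would need to show that $e^{-sH}g$ actually decays wherever $V_+$ is large, fast enough to tame $V_+$, and that is a genuinely nontrivial fact. The Feynman--Kac argument handles this automatically: in $\mathbb{E}_x\bigl[e^{-\int_0^t V_+(b(s))\,ds}\,e^{\int_0^t V_-(b(s))\,ds}\,g(b(t))\bigr]$ the $V_+$ factor is $\le 1$ pointwise on paths, so it can only shrink the integrand, and all the regularity estimates are driven by $V_-$ alone. That is why the theorem can assert norms depending only on $t$, $\norm{f}_2$ and the $K_\nu$ norm of $V_-$, with $V_+$ entirely passive; your Duhamel argument, if pushed through with a cutoff $V_+^{(n)}=\min(V_+,n)$ and a monotone-convergence limit, would produce H\"{o}lder constants that blow up with $n$.

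A smaller point worth flagging: the step where you ``interchange the $s$- and $z$-integrations to rewrite the Duhamel term as $(K_t*\varphi)(x)$'' is not literally correct, since $\varphi_s$ depends on $s$ and cannot be pulled outside the $s$-integral. The repairable version --- and the one you implicitly use in the estimates that follow --- is to pass to differences and absolute values: bound $|u(x)-u(y)|\le \int_0^{t/2}\!\int |p_{t/2-s}(x-z)-p_{t/2-s}(y-z)|\,|\varphi_s(z)|\,dz\,ds$, insert the uniform-in-$s$ pointwise bound on $|\varphi_s|$, and \emph{then} apply Tonelli to obtain the $K_t$-difference kernel acting on $|V|$. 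Stated that way the structure is sound, but this reformulation makes the $V_+$ problem above all the more visible, since the integrand you are left with really is $|V(z)|$, not something that sees the decay of $e^{-sH}g$. In short: the scaffolding is fine for $V\in K_\nu^{(\alpha)}$, but to prove the stated theorem you either need the Feynman--Kac argument (as the paper cites) or an additional ingredient controlling $V_+\cdot e^{-sH}g$ that your outline does not supply.
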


\begin{remarks} 1.  The proof using functional integration can be found in Simon \cite[Theorem B.3.5]{SimonSmgp}.

2.  For eigenfunctions, there are subsolution type estimates for the constants in H\"{o}lder estimates; see \cite[Theorem C.2.5]{SimonSmgp}.

3.  To control $\boldsymbol{\nabla}\psi$, one needs $\alpha=1$.  The Coulomb potentials in atomic and molecular Hamiltonians are in $K_{3N}^{(\alpha)}$ for $\alpha \in [0,1)$ but not for $\alpha=1$.  Nevertheless, Hoffmann--Ostenhof et al. \cite{HOHOOSWF} have proven for such potentials and $L^2$ eigenfunctions, one has that
\begin{equation}\label{19.16A}
  \sup_{|y-x| \le R} |\boldsymbol{\nabla}\psi(y)| \le C \sup_{|y-x| \le 2R} |\psi(y)|
\end{equation}
for any $x$, where $C$ is a universal constant depending only on $R$ and $H$. This includes and improves Kato's theorem \ref{T19.2}; one improvement is that exponential decay of $\psi$ implies exponential decay of its first derivatives.
\end{remarks}

There has been considerable literature dealing with the questions discussed in Kato's Theorems \ref{T19.2} and \ref{T19.4}; a substantial fraction of this literature is by Maria and Thomas Hoffmann--Ostenhof and their collaborators.  We want to discuss some of the highlights.

The first result sheds additional light on the behavior near pair singularities.  We define
\begin{equation}\label{19.16}
  \Sigma_j = \{x \,|\, |x_j|=0\}; \qquad \Sigma_{jk} = \{x\,|\, |x_j-x_k|=0\},\, j< k
\end{equation}
so $\Sigma=\left(\bigcup_{j=1}^N \Sigma_j\right)\cup\left(\bigcup_{j < k} \Sigma_{jk}\right)$.

\begin{theorem} [Fournais et al \cite{FHOHOOSAnal}] \lb{T19.7} Let $x^{(0)} \in \Sigma_1, \, x^{(0)} \notin \left(\bigcup_{j=2}^N \Sigma_j\right)\cup\left(\bigcup_{j < k} \Sigma_{jk}\right)$.  Let $\psi$ be an $L^2$ eigenfunction of $H$.  Then there are two functions, $\varphi_1$ and $\varphi_2$, defined and analytic in a neighborhood, $Q$, of $x^{(0)} \in \bbR^{3N}$, so that in $Q$
\begin{equation}\label{19.17}
  \psi(x) = \varphi_1(x)+ |x_1| \varphi_2(x)
\end{equation}
\end{theorem}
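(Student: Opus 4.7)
The plan is to work locally near $x^{(0)}$, reduce to a model in which the only non-analytic feature of $H$ is the single Coulomb singularity $-Z/|x_1|$, and then disentangle the cusp by a combination of an explicit exponential substitution and a spherical harmonic expansion in the $x_1$ variable. First I would shrink $Q$ so that on $Q$ the Hamiltonian reads $H=-\Delta+W(x)-Z/|x_1|$ with $W$ real analytic (all other Coulomb terms are analytic on $Q$ by the hypothesis on $x^{(0)}$). By the elliptic-regularity remark following Theorem~\ref{T19.2}, $\psi$ is already real analytic on $Q\setminus\Sigma_1$, so the entire problem is to understand $\psi$ across the hypersurface $\{x_1=0\}$.

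The second step is to remove the Coulomb singularity of the potential by the substitution $\psi=e^{-Z|x_1|/2}\chi$. A direct computation (using $\Delta|x_1|=2/|x_1|$ and $|\nabla|x_1||=1$ in three dimensions) shows
\begin{equation*}
-\Delta\chi+Z\,\frac{x_1}{|x_1|}\cdot\nabla\chi+\bigl(W-\tfrac{Z^2}{4}-E\bigr)\chi=0\quad\text{on }Q.
\end{equation*}
The singular potential has been traded for a first-order term with merely bounded (but discontinuous at $x_1=0$) coefficient. Since $e^{-Z|x_1|/2}=A(|x_1|^2)+|x_1|B(|x_1|^2)$ with $A,B$ entire (the even/odd decomposition of $e^{-Zr/2}$ in $r$), any representation $\chi=\tilde\varphi_1+|x_1|\tilde\varphi_2$ with analytic $\tilde\varphi_i$ would, upon multiplication and absorbtion of analytic $|x_1|^2$ factors, yield the desired $\psi=\varphi_1+|x_1|\varphi_2$.

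To produce such a representation for $\chi$, I would expand in spherical harmonics centered on $x_1=0$,
\begin{equation*}
\chi(r\omega,x')=\sum_{\ell,m}f_{\ell,m}(r,x')\,Y_{\ell,m}(\omega),\qquad r=|x_1|,\ x'=(x_2,\dots,x_N).
\end{equation*}
Each $f_{\ell,m}$ satisfies a radial second-order ODE in $r$ with coefficients analytic in $(r,x')$ (the first-order term $Z\,\omega\cdot\nabla_{x_1}$ is handled by projecting onto $Y_{\ell,m}$, which leaves a scalar radial operator) and with a regular singular point at $r=0$ whose Frobenius indices are $\ell$ and $-(\ell+1)$. The Lipschitz regularity of $\psi$ provided by Theorem~\ref{T19.2} rules out the singular branch, so $f_{\ell,m}(r,x')=r^\ell g_{\ell,m}(r,x')$ with $g_{\ell,m}$ analytic in $(r,x')$ jointly by the analytic theory of regular singular ODEs with analytic parameters. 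Splitting $g_{\ell,m}(r,x')=G^{e}_{\ell,m}(r^2,x')+r\,G^{o}_{\ell,m}(r^2,x')$ into even and odd parts in $r$, and observing that $r^\ell Y_{\ell,m}(\omega)$ is a harmonic polynomial in the Cartesian $x_1$, the series formally regroups as $\chi=\tilde\varphi_1+|x_1|\tilde\varphi_2$, where
\begin{equation*}
\tilde\varphi_1=\sum_{\ell,m}P_{\ell,m}(x_1)\,G^{e}_{\ell,m}(|x_1|^2,x'),\qquad \tilde\varphi_2=\sum_{\ell,m}P_{\ell,m}(x_1)\,G^{o}_{\ell,m}(|x_1|^2,x').
\end{equation*}

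The main obstacle is proving that these two series converge to genuine real-analytic functions near $x^{(0)}$ jointly in all variables. For each fixed $(\ell,m)$ joint analyticity of $G^{e,o}_{\ell,m}$ is standard, but one needs quantitative control uniformly in $\ell$. I would attack this by combining two ingredients: (i) the already-known real analyticity of $\chi$ on $Q\setminus\{x_1=0\}$, which by Cauchy estimates on a thin annulus $\{\epsilon<|x_1|<2\epsilon\}$ yields geometric-in-$\ell$ decay of the spherical-harmonic coefficients with controlled radial Taylor norms, and (ii) an analytic elliptic estimate of Morrey--Nirenberg type applied to the equation for $\chi$ in the half-space obtained by fixing a hemisphere of $\omega$. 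The non-smoothness of $\omega\cdot\nabla$ at $x_1=0$ is what prevents direct application of any off-the-shelf analytic hypoellipticity theorem; the even/odd splitting (equivalently, the Kato cusp condition \eqref{19.12}, which expresses exactly the compatibility between $\tilde\varphi_1$ and $\tilde\varphi_2$ at $r=0$) is what makes this splitting consistent and should allow a bootstrap closing the estimate. This uniform convergence step is where I expect all the real work to lie; everything preceding it is essentially bookkeeping once the exponential gauge and the spherical decomposition are in place.
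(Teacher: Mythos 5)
The paper does not present a proof of this statement; it simply cites Fournais et al.\ \cite{FHOHOOSAnal} (and, for motivation, Hill \cite{Hill}), so I can only assess your sketch on its own terms.

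Your opening two steps are sound. The gauge transformation $\psi=e^{-Z|x_1|/2}\chi$ is exactly the right move: your computation is correct, the $-Z/|x_1|$ singularity cancels, and what remains is the first-order term $Z\,\omega\cdot\nabla_{x_1}\chi$ with a bounded but discontinuous coefficient. The even/odd bookkeeping that transfers a decomposition $\chi=\tilde\varphi_1+|x_1|\tilde\varphi_2$ back to $\psi=\varphi_1+|x_1|\varphi_2$ via $e^{-Zr/2}=A(r^2)+rB(r^2)$ is also fine, since $\varphi_1=A\tilde\varphi_1+|x_1|^2B\tilde\varphi_2$ and $\varphi_2=B\tilde\varphi_1+A\tilde\varphi_2$ inherit analyticity.

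The first genuine gap is your claim that projecting onto $Y_{\ell,m}$ yields \emph{decoupled} radial ODEs for the $f_{\ell,m}$. The first-order term $Z\omega\cdot\nabla_{x_1}=Z\partial_r$ is indeed scalar and radial, but the analytic potential $V=W-Z^2/4-E$ depends on the full vector $x_1=r\omega$, not merely on $r$: in an $N$-body problem, $W$ contains the electron--electron contributions $1/|x_1-x_k|$, which are not rotationally invariant in $x_1$. Expanding $V$ in spherical harmonics and multiplying by $\chi$ produces Clebsch--Gordan couplings between different $(\ell,m)$, so you are faced with an infinite coupled system of Fuchsian ODEs, not a decoupled family. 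Your Frobenius indices $\ell$ and $-(\ell+1)$ and the use of Lipschitz regularity to discard the singular branch remain plausible tools, but they must be deployed for the coupled system, and the recursion that determines the higher radial Taylor coefficients (and establishes analyticity in $r$) must control the cross-mode terms order by order. This is not bookkeeping; it is a substantive step your sketch omits.

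The second gap you identify yourself and I agree it is where the real work lies: uniform-in-$\ell$ convergence of $\tilde\varphi_1=\sum P_{\ell,m}G^e_{\ell,m}$ and $\tilde\varphi_2=\sum P_{\ell,m}G^o_{\ell,m}$ to functions jointly analytic in $(x_1,x')$. Your ingredient (i), geometric decay of the spherical-harmonic coefficients from Cauchy estimates on an annulus $\epsilon<r<2\epsilon$, is sound but controls the annulus, not the limit $r\downarrow 0$; your ingredient (ii) founders on precisely the discontinuity of $\omega\cdot\nabla$ at the origin that you flag. Without a quantitative scheme that closes the bootstrap --- e.g.\ an iteration in a Banach space of functions holomorphic on a shrinking polydisc, which is the spirit of Hill's treatment of the one-electron case that the paper points to --- the proposal stops short of the theorem at exactly the point where the theorem's content resides.
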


\begin{remarks}  1.  Near $x^{(0)}$,
\begin{equation*}
\psi(x) = \varphi_1(x^{(0)})+|x_1| \varphi_2(x^{(0)}) + \boldsymbol{\nabla}\varphi_1(x^{(0)})\cdot (x-x^{(0)}) + \textrm{O}((x-x^{(0)})^2)
\end{equation*}
clearly showing the cusp.

2.  Similar results hold for each $\Sigma_j$ and each $\Sigma_{jk}$.

3.  For a proof, see \cite{FHOHOOSAnal}.  They were motivated by earlier work of Hill \cite{Hill}.

4.  This shows a cusp, but supplements rather than proves the Kato cusp equality \eqref{19.12}.  Indeed, that equality implies that $\varphi_2(x^{(0)}) = -\tfrac{Z}{2}\varphi_1(x^{(0)})$.
\end{remarks}

The cusp condition only holds at simple singular points where only a single pair among $\{0,x_1,\dots,x_N\}$ coincides (in the atomic case).  In 1954, Fock \cite{Fock} (the same Fock of Born--Fock 26 years earlier and the Hartree--Fock approximation 24 years earlier and of Fock space 22 years earlier!) gave arguments that there are $\jap{x_j,x_k} \log(|x_j|^2+|x_k|^2)$ terms at points where both $|x_j|$ and $|x_k|$ go to zero.  These are called \emph{Fock terms}.

The following includes and improves the Kato cusp condition, Theorem \ref{T19.4},

\begin{theorem} [Fournais et al \cite{FHOHOOSCusp}] On $\bbR^{3N}$, let
\begin{align}
  F_2(x) &= - \frac{Z}{2}\sum_{j=1}^{N}|x_j|+\frac{1}{4}\sum_{1\le j<k\le N}|x_j-x_k| \lb{19.18} \\
  F_3(x) &= \frac{2-\pi}{12\pi} \sum_{1\le j<k\le N} \jap{x_j,x_k} \log(|x_j|^2+|x_k|^2) \lb{19.19}
\end{align}
For any $\varphi$, write
\begin{equation}\label{19.20}
  \psi = e^{F_2+F_3}\varphi
\end{equation}
Then, if $\psi$ solves $H\psi=E\psi$ on a bounded set, $\Omega$, we have that
\begin{equation}\label{19.21}
  \varphi \in C^{1,1}
\end{equation}
\end{theorem}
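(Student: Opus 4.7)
The plan is to conjugate the eigenvalue equation by $e^{F_2+F_3}$ and show that $\varphi$ satisfies an elliptic equation whose coefficients are regular enough that standard Schauder-type elliptic regularity yields $\varphi \in C^{1,1}_{\mathrm{loc}}$. Writing $F = F_2+F_3$ and $\psi = e^F\varphi$, a direct computation gives
\begin{equation*}
e^{-F}(-\Delta)e^F\varphi = -\Delta\varphi - 2\nabla F\cdot\nabla\varphi + \bigl(-\Delta F - |\nabla F|^2\bigr)\varphi,
\end{equation*}
so the equation $H\psi = E\psi$ becomes
\begin{equation*}
-\Delta\varphi - 2\nabla F\cdot\nabla\varphi + \bigl(V - \Delta F - |\nabla F|^2 - E\bigr)\varphi = 0.
\end{equation*}
The strategy is to choose $F$ so that the singular part of $V$ is cancelled by $\Delta F$, and so that the residual non-smooth part of $|\nabla F|^2$ is cancelled by further terms involving $F_3$.

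The first key computation is that $F_2$ is a ``local fundamental solution'' for the Coulomb potential: using $\Delta|x| = 2/|x|$ on $\mathbb{R}^3$ (in distributional sense away from $x=0$) and the analogous identity for $|x_j-x_k|$ viewed on $\mathbb{R}^{3N}$, one checks directly that $\Delta F_2 = V$. Hence $V-\Delta F_2 \equiv 0$ and the Coulomb singularities disappear from the equation for $\varphi$. The remaining irregular term is $|\nabla F_2|^2$, which, although bounded, contains directionally discontinuous contributions at two-pair coincidence points (e.g.\ cross-terms of the form $\langle x_j,x_k\rangle/(|x_j|\,|x_k|)$ when both $|x_j|$ and $|x_k|$ tend to zero). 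These are precisely the ``Fock terms'' which obstruct $C^{1,\alpha}$-regularity. The constant $(2-\pi)/(12\pi)$ in $F_3$ is chosen so that $-\Delta F_3 - 2\nabla F_2\cdot\nabla F_3$ is exactly the piece of $|\nabla F_2|^2$ that fails to be $C^{0,1}$ at two-pair coincidences. Once this cancellation is verified, the equation for $\varphi$ reads $-\Delta\varphi + b\cdot\nabla\varphi + c\,\varphi = 0$ with $b := -2\nabla F \in C^{0,\alpha}_{\mathrm{loc}}$ for every $\alpha<1$ and $c \in L^\infty_{\mathrm{loc}}$. Since $\varphi = e^{-F}\psi$ is bounded on compact sets (using Theorem~\ref{T19.2} and the boundedness of $F$ on compact sets), standard elliptic regularity (Calder\'on--Zygmund together with Morrey embedding, or Schauder if one pushes $b$ into $C^{0,\alpha}$) gives $\varphi \in W^{2,p}_{\mathrm{loc}}$ for all $p<\infty$ and hence $\varphi \in C^{1,1}_{\mathrm{loc}}$.

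The main obstacle, and the technical heart of the argument, is the explicit verification that the Fock term $F_3$ performs the claimed cancellation at every two-pair coincidence stratum (pairs of the form $\{0,x_j\}$ and $\{0,x_k\}$ meeting, or $\{x_i,x_j\}$ and $\{x_i,x_k\}$ meeting). One must expand $|\nabla F_2|^2$ near each such stratum, identify the homogeneous-degree-zero non-Lipschitz piece (which, after a rotation, is essentially a function of $(\hat x_j,\hat x_k)$ involving the angle between the two coincident vectors), compute the analogous contribution from $2\nabla F_2\cdot\nabla F_3 + \Delta F_3$ using $\Delta(\langle x_j,x_k\rangle\log(|x_j|^2+|x_k|^2))$, and check that the two contributions cancel with the explicit constant $(2-\pi)/(12\pi)$. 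One must also verify that higher-order coincidences (three or more pairs meeting simultaneously) contribute only lower-order singularities already controlled by boundedness of $\Delta F_3$ and $|\nabla F_3|^2$. Once this bookkeeping is done, the elliptic regularity step is routine and the theorem follows.
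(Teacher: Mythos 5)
Your overall plan — conjugate the eigenvalue equation by $e^{F_2+F_3}$, use $\Delta F_2 = V$ to eliminate the Coulomb singularities, let $F_3$ take care of the Fock terms, and finish with elliptic regularity — is indeed the structure the paper points to (the remark after the theorem explicitly highlights $\Delta F_2=V$ and says the proof ``depends on looking at the PDE that $\varphi$ obeys and standard elliptic estimates''). However, there are two concrete errors at the end of your argument.

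First, you assert that $b:=-2\nabla F\in C^{0,\alpha}_{\mathrm{loc}}$ for every $\alpha<1$. That is false. The vector field $\nabla F_2$ contains terms $x_j/|x_j|$ and $(x_j-x_k)/|x_j-x_k|$, each bounded but \emph{discontinuous} across the corresponding stratum of $\Sigma$. A bounded function with a jump has no positive H\"older exponent, so $b$ is merely in $L^\infty_{\mathrm{loc}}$, not $C^{0,\alpha}_{\mathrm{loc}}$ for any $\alpha>0$.

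Second, and more fundamentally, the step ``$\varphi\in W^{2,p}_{\mathrm{loc}}$ for all $p<\infty$, hence $\varphi\in C^{1,1}_{\mathrm{loc}}$'' does not hold. Morrey embedding gives $W^{2,p}\hookrightarrow C^{1,1-3N/p}$ for $p>3N$, and as $p\to\infty$ the H\"older exponent tends to but never reaches $1$. Thus $W^{2,p}$ for all finite $p$ yields only $\varphi\in\bigcap_{\alpha<1}C^{1,\alpha}_{\mathrm{loc}}$, which is strictly weaker than $C^{1,1}_{\mathrm{loc}}=W^{2,\infty}_{\mathrm{loc}}$. This is precisely the gap that the earlier result (with $F_2$ alone, giving $C^{1,\alpha}$ for all $\alpha<1$) already reaches; the entire point of the theorem is that adding $F_3$ buys the endpoint $\alpha=1$, and that cannot come from Calder\'on--Zygmund plus a trivial passage to the limit in $p$. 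The Fournais et al.\ argument must exploit the specific structure of the coefficients after adding $F_3$ (roughly, that the non-smooth part of the zeroth-order coefficient becomes Dini-type rather than merely bounded near the coincidence strata), and some version of $C^{1,1}$ Schauder-at-the-endpoint or a direct potential-theoretic estimate is needed. Your sketch should not invoke $L^p$ theory at all for the final conclusion; instead you need to show the coefficients are Dini (or satisfy a comparable modulus-of-continuity hypothesis that guarantees $W^{2,\infty}$), and that verification is not routine.
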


\begin{remarks} 1.  Writing $\psi$ in the form $e^F\varphi$ is often called a \emph{Jastrow trial function} after Jastrow \cite{Jastrow} who had the idea of modifying Slater determinants, $\varphi$ by multiplying by $e^F$ with $F$ a simple rational function of the $|x_j|$ and $|x_j-x_k|$.

2.  The weaker result where $F_3$ isn't included and $\varphi \in C^{1,\alpha}$ was proven earlier by Hoffmann-Ostenhof et al. \cite{HOHOOSWF}.  The above theorem is from Fournais et al \cite{FHOHOOSCusp} where the reader can find a proof that depends on looking at the PDE that $\varphi$ obeys and standard elliptic estimates.  All depend on noting that
\begin{equation}\label{19.22}
  \Delta F_2 = V
\end{equation}

3.  The reader may be puzzled by $-\tfrac{Z}{2}$ but $\tfrac{1}{4}$ rather than $\tfrac{1}{2}$ (since the effective $Z$ for the $jk$ pair is $+1$).  But $\boldsymbol{\nabla}F_2$ has only one $\boldsymbol{\nabla}_j$ acting non--trivially on $|x_j|$ but both $\boldsymbol{\nabla}_j$ and $\boldsymbol{\nabla}_k$ act non-trivially on $|x_j-x_k|$ turning the $\tfrac{1}{4}$ into a $\tfrac{1}{2}$ which is also why we get \eqref{19.22}.

4.  \cite{HOHOOSWF} noted that their result implies that
\begin{equation}\label{19.23}
  \boldsymbol{\nabla}\psi - \psi\boldsymbol{\nabla}F_2 \in C^{1,\alpha}, \, \alpha \in (0,1)
\end{equation}
while \cite{FHOHOOSCusp} note that their results imply that
\begin{equation}\label{19.24}
  \boldsymbol{\nabla}\psi - \psi\boldsymbol{\nabla}(F_2+F_3) \in C^{1,1}
\end{equation}
The first is a strong form of the Kato cusp condition (which follows from the continuity of $\boldsymbol{\nabla}\psi - \psi\boldsymbol{\nabla}F_2$) and unlike Kato, they prove results at multiple coincidences.  The second result implies that second derivatives of $\psi$ are bounded at simple coincidences and have a logarithmic blow up at points where $|x_j|$ and $|x_k|$ go to zero.

5.  The obvious extensions hold for molecular Hamiltonians.

6.  A interesting alternate approach to understanding the Kato cusp conditions in terms of singularities at corners is found in Ammann--Carvalho--Nistor \cite{ACN}.
\end{remarks}

That completes what we want to say about regularity of eigenfunctions; we end this section with a few remarks on the closely related subject of regularity of the one electron density defined by
\begin{equation}\label{19.25}
  \rho_\psi(x) = N \int |\psi(x,\x_2,\dots,x_N)|^2 \, d^3x_2\dots d^3x_n
\end{equation}
(this is the formula if $\psi$ is symmetric or antisymmetric; otherwise the ``$N$'' in front is replaced by summing against putting $x$ in each of the $N$ slots).  It measures the electron density.

\begin{theorem} [Fournais et al \cite{FHOHOOSAnalDensity}] \lb{T19.9} For any atomic or molecular eigenfunction, the density, $\rho_\psi$ is real analytic away from the nuclei ($x=0$ in the atomic case and $x=R_j,\, j=1,\dots,K$ in the molecular case).
\end{theorem}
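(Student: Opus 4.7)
My approach is to pass through the one-body reduced density matrix
\begin{equation*}
\gamma_\psi(x, x') = N \int_{\bbR^{3(N-1)}} \psi(x, x_2, \dots, x_N) \, \overline{\psi(x', x_2, \dots, x_N)} \, d^3 x_2 \cdots d^3 x_N,
\end{equation*}
which satisfies $\rho_\psi(x) = \gamma_\psi(x, x)$, and to prove that $\gamma_\psi$ is jointly real analytic on $\Omega \times \Omega$, where $\Omega = \bbR^3 \setminus \{R_1, \dots, R_K\}$ is the exterior of the nuclei. Real analyticity of $\rho_\psi$ on $\Omega$ will then follow by restriction to the analytic diagonal. Joint analyticity at $(x_0, x_0') \in \Omega \times \Omega$ is equivalent to exhibiting a holomorphic extension of $\gamma_\psi$ to a small complex polydisk around $(x_0, x_0')$ in $\bbC^3 \times \bbC^3$.

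Since the Coulomb interaction $V$ is real analytic on $\bbR^{3N} \setminus \Sigma$ and $(H-E)\psi = 0$ there, classical Morrey--Nirenberg elliptic regularity furnishes a holomorphic extension $\wti\psi$ of $\psi$ to an open neighborhood $\calU \subset \bbC^{3N}$ of $\bbR^{3N} \setminus \Sigma$. For $(z, z')$ in a small complex polydisk around $(x_0, x_0')$, I would define the candidate extension by
\begin{equation*}
\gamma_\psi(z, z') = N \int_{\Gamma_{z, z'}} \wti\psi(z, \vec\zeta) \, \overline{\wti\psi(\bar{z}', \overline{\vec\zeta})} \, d\vec\zeta,
\end{equation*}
where $\Gamma_{z, z'}$ is a $3(N-1)$-real-dimensional cycle in $\bbC^{3(N-1)}$ obtained by deforming $\bbR^{3(N-1)}$ to avoid the complex singular variety $\Sigma_{\bbC}(z, z')$ formed by the affine hyperplanes $\{\zeta_j = R_\ell\}$, $\{\zeta_j = \zeta_k\}$, $\{\zeta_j = z\}$ and $\{\zeta_j = \bar{z}'\}$. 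Independence from the particular choice of $\Gamma_{z, z'}$ and joint holomorphy in $(z, z')$ then follow from Stokes' theorem applied to small cycle variations. Absolute convergence at spatial infinity is ensured by $L^2$ exponential decay of $\psi$ (Froese--Herbst-type $N$-body bounds; see Section \ref{s12}) combined with the subsolution estimate of Theorem \ref{T19.5}, which upgrades $L^2$ decay to pointwise decay of $\psi$ and, through a Cauchy representation, to pointwise decay of $\wti\psi$ inside a narrow complex tube around $\bbR^{3(N-1)}$.

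The technical heart of the argument, which I expect to be the main obstacle, is the construction of $\Gamma_{z, z'}$ together with the verification that the integrand is locally integrable on it and depends holomorphically on the external parameters. At a simple coincidence, the structure theorem of Fournais et al.\ (Theorem \ref{T19.7}) writes $\psi$ locally as $\varphi_1 + |y|\varphi_2$ with $\varphi_1, \varphi_2$ real analytic; extending $|y|$ as $\sqrt{y \cdot y}$ to $\bbC^3$ produces a branch cut along the complex cone $\{y \cdot y = 0\}$, and $\Gamma_{z, z'}$ must remain on one determinate sheet. At higher-codimension strata where several coincidences meet, the Fock-type logarithmic terms of the Jastrow factorization (Theorem \ref{T19.8}) generate further branch structure. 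I would handle all of this via a partition of unity subordinate to a stratification of $\Sigma_{\bbC}(z, z')$, combined with local imaginary shifts $\zeta_j \mapsto \zeta_j + i\eta_j$ in which the $\eta_j$ are chosen so that every relevant separation $\zeta_j - \zeta_k$, $\zeta_j - R_\ell$, $\zeta_j - z$, $\zeta_j - \bar{z}'$ acquires an imaginary part of controlled sign, thereby freezing each branch. Each local contribution then reduces to an integral of the schematic form $\int \sqrt{y \cdot y}\, f(y)\, dy$ or $\int \log(y \cdot y)\, f(y)\, dy$ with $f$ holomorphic in the external parameters; the holomorphic dependence of each such integral is classical, and summing via the partition of unity yields the desired holomorphic extension of $\gamma_\psi$.
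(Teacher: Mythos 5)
Your reduction to a holomorphically extended one--body density matrix and your treatment of decay at infinity are reasonable in outline, but the construction at the heart of the argument --- the deformed cycle $\Gamma_{z,z'}$ --- does not exist, and this is a topological fact rather than a removable technicality. The complexification of a factor $|\zeta_j-z|$ is $\sqrt{(\zeta_j-z)\cdot(\zeta_j-z)}$, branched along the complex quadric cone $Q_z=\{\zeta_j:(\zeta_j-z)\cdot(\zeta_j-z)=0\}$, which has real codimension $2$ in $\bbC^3$, not the codimension needed to slide a $3$--real--dimensional cycle past it. Concretely, the constant imaginary shift you propose does not avoid $Q_x$ for real $x=z=z'$: with $\zeta_j=a+i\eta$ one has $(\zeta_j-x)\cdot(\zeta_j-x)=|a-x|^2-|\eta|^2+2i(a-x)\cdot\eta$, which vanishes on the entire circle $\{a:|a-x|=|\eta|,\ (a-x)\perp\eta\}$ (and the conjugate factor coming from $\overline{\wti\psi(\bar z',\overline{\vec\zeta}\,)}$ vanishes on the same circle). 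Worse, no deformation works: $\{g=1\}$ for $g(\zeta)=\zeta\cdot\zeta$ is diffeomorphic to $TS^2$ and the real sphere $\{|a-x|=\epsilon\}$ sits in $\{g=\epsilon^2\}$ as the zero section, so it represents a nontrivial class in $H_2(\bbC^3\setminus Q_x)$ and therefore bounds no $3$--chain in the complement of the cone. Hence $\bbR^3$ admits no deformation, rel.\ the exterior of a small ball about $\zeta_j=x$, that avoids the branch locus; every admissible cycle crosses it, Stokes' theorem is unavailable, and a ``sign of the imaginary part'' cannot select a sheet of $\sqrt{\ \cdot\ }$ as it does in one variable. A secondary gap: Theorem \ref{T19.7} gives the $\varphi_1+|y|\varphi_2$ structure only at simple two--particle coincidences, and the analytic structure of $\psi$ at higher strata is not a finite combination of analytic functions with $\sqrt{y\cdot y}$ and $\log(y\cdot y)$ factors (the full structure theorem \cite{FHOHOOSAnal} came after the density result and is more involved), so the final reduction you describe is not available either.

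The failure of contour deformation actually points to the correct mechanism: the reason $x\mapsto\int|\zeta-x|f(\zeta)\,d^3\zeta$ is analytic for analytic $f$ is elliptic, not complex--analytic ($\Delta_x^2|y|=-8\pi\delta$ in $\bbR^3$), and the proof in \cite{FHOHOOSAnalDensity} is entirely real--variable. One fixes $x^0\notin\{R_\ell\}$, regularizes via the Jastrow factorization $\psi=e^{F}\varphi$, decomposes the $(x_2,\dots,x_N)$ integration by a partition of unity according to which electrons lie near $x$, changes variables $x_j=x+y_j$ in the nearby cluster so that the electron--electron singularities $|x-x_j|=|y_j|$ become independent of the base point, and then proves $\|\partial_x^\beta\rho\|_{L^\infty}\le C^{|\beta|+1}\beta!$ by iterated elliptic a priori estimates for the transformed equation, with exponential decay controlling the region at infinity. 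If you wish to keep a complex--variable flavor, the viable route is analytic dependence on parameters in this elliptic scheme, not deformation of the integration cycle.
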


This was proven in \cite{FHOHOOSAnalDensity}.  Earlier the same authors had proven that $\rho_\psi$ is $C^\infty$ \cite{FHOHOOSSmoothDensity}.  Jecko \cite{Jecko} has an alternate proof of Theorem \ref{T19.9}.
\section{Two Conjectures} \lb{s20}

I thought it would be appropriate to end this paper with two open questions in the areas that interested Kato.  One dates from 1971 when Kato was still active and the other from 2000, the year after he died.

\begin{conjecture} [J\"{o}rgens' Conjecture] \lb{C20.1}  Let $\Omega \subset \bbR^\nu$ be open.  Let $V \in L^2_{loc}(\Omega)$ so that $-\Delta+V$ is bounded from below and esa on $C_0^\infty(\Omega)$.  Suppose that $V_1 \ge V$ is also in $L^2_{loc}(\Omega)$.  Then $-\Delta+V_1$ is also esa on $C_0^\infty(\Omega)$.
\end{conjecture}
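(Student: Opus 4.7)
\emph{Proof plan.} The strategy is to adapt Kato's proof of Theorem \ref{T9.1} by replacing $-\Delta$ in that argument with the closure $A := \overline{(-\Delta+V)\restriction C_0^\infty(\Omega)}$, whose self-adjointness is handed to us by hypothesis. After a harmless additive shift of $V$ one may assume $A\ge 1$, which leaves $W:=V_1-V\ge 0$ unchanged. Let $T$ be the closure of $(-\Delta+V_1)\restriction C_0^\infty(\Omega)$; then $T$ is symmetric with $T\ge A\ge 1$, so by the standard criterion for bounded below symmetric operators it suffices to show that any $u\in L^2(\Omega)$ satisfying $T^*u=-u$ vanishes.

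Such a $u$ satisfies $\Delta u=(V_1+1)u$ as a distribution on $\Omega$; since $V_1,u\in L^2_{loc}(\Omega)$ the right-hand side lies in $L^1_{loc}$, so Kato's inequality (Theorem \ref{T9.4}) gives
\begin{equation*}
  \Delta|u|\ \ge\ \Real\bigl[\overline{\mathrm{sgn}(u)}\,\Delta u\bigr]\ =\ (V_1+1)|u|
\end{equation*}
distributionally on $\Omega$. Rewriting as $(-\Delta+V+1)|u|\le-W|u|\le 0$ and pairing against any $\varphi\in C_0^\infty(\Omega)$ with $\varphi\ge 0$ yields the weak inequality
\begin{equation*}
  \langle (A+1)\varphi,\,|u|\rangle\ \le\ 0\quad\text{for every }\varphi\in C_0^\infty(\Omega),\ \varphi\ge 0.
\end{equation*}

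The remaining step is to promote this to $|u|\equiv 0$. Formally $|u|=(A+1)^{-1}(A+1)|u|\le 0$, so the obstruction is to interpret $(A+1)|u|$ as an $L^2$ object. The cleanest route, modelled on the equivalence in Theorem \ref{T9.5}, is to establish that $e^{-tA}$ is positivity preserving; then the resolvent $(A+1)^{-1}$ is positivity preserving as well, so for every non-negative $f\in L^2(\Omega)$ the element $\psi:=(A+1)^{-1}f$ is non-negative and lies in $D(A)$. Since $C_0^\infty(\Omega)$ is an operator core for $A$ (by the esa hypothesis), a standard mollify-and-cut-off argument against a non-negative approximate identity then produces $\varphi_n\in C_0^\infty(\Omega)$ with $\varphi_n\ge 0$ and $(A+1)\varphi_n\to f$ in $L^2(\Omega)$. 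Testing the displayed inequality against $\varphi_n$ and passing to the limit gives $\langle f,|u|\rangle\le 0$ for every $f\ge 0$ in $L^2(\Omega)$, forcing $|u|=0$.

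The whole argument therefore collapses to the single claim that $e^{-tA}$ is positivity preserving, and this is where I expect the main difficulty to lie. When $V$ admits a splitting $V=V^+-V^-$ with $V^+\in L^1_{loc}(\Omega)$, $V^+\ge 0$, and $V^-$ in a Kato-type class relative to $-\Delta$, one can invoke Kato's ultimate Trotter product formula (Theorem \ref{T18.2}), writing $A$ as the form sum of $-\Delta_D$ and $V$ and approximating $e^{-tA}$ by products $[e^{(t/n)\Delta_D}e^{-(t/n)V}]^n$ whose factors all preserve positivity. But under the bare hypotheses $V\in L^2_{loc}$ and $-\Delta+V$ esa and bounded below --- which permit arbitrarily wild negative singularities so long as they are compensated by the kinetic term --- no such splitting is available and no general positivity argument is known; this is exactly the situation dual to Theorem \ref{T9.1}, where the positivity of $e^{t\Delta}$ came for free. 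I would therefore concentrate the attack either on an abstract Beurling--Deny type criterion tailored to operator cores rather than form cores, or, should that fail, on constructing an exotic $V\in L^2_{loc}(\Omega)$ whose esa extension generates a non-positivity-preserving semigroup, which would instead disprove the conjecture.
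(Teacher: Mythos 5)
You should first be aware that the statement you have attacked is labelled a \emph{Conjecture} in the paper, and it is genuinely open: there is no proof in the paper to compare yours against. The surrounding discussion records that J\"{o}rgens posed it in 1971, that Cycon \cite{CyconJC} proved it only under an additional technical hypothesis, and that the repeated failures to prove it have led some to suspect it is false. So your closing paragraph, which concedes that the argument ``collapses to'' an unproved claim and even contemplates a counterexample, is the honest and correct assessment; what precedes it is a proof sketch of the known, standard line of attack (transplant the proof of Theorem \ref{T9.1} with $-\Delta$ replaced by $A=\overline{(-\Delta+V)\restriction C_0^\infty(\Omega)}$), not a proof.

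Within that sketch, the reduction to $T^*u=-u\Rightarrow u=0$ and the application of Kato's inequality (Theorem \ref{T9.4}) to get $\langle(A+1)\varphi,|u|\rangle\le 0$ for $0\le\varphi\in C_0^\infty(\Omega)$ are sound. But there are \emph{two} distinct gaps after that, and you only flag one. The one you flag is positivity preservation of $e^{-tA}$ (equivalently, via Theorem \ref{T9.5}, a Beurling--Deny property for the form of $A$); this is indeed not free once $V$ has an uncontrolled negative part. The one you do not flag is the step you call ``a standard mollify-and-cut-off argument'' producing $\varphi_n\in C_0^\infty(\Omega)$, $\varphi_n\ge 0$, with $(A+1)\varphi_n\to f$ in $L^2$. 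This is not standard and is arguably the true crux: mollifying $\psi=(A+1)^{-1}f\ge 0$ does produce non-negative smooth functions, but $A$ does not commute with mollification (the $V\psi$ term is exactly the problem), so showing that the mollified-and-cut-off functions converge in \emph{graph norm} is of the same order of difficulty as the essential self-adjointness question itself. Knowing that $C_0^\infty(\Omega)$ is a core for $A$ gives you \emph{some} approximating sequence in graph norm, with no control on its sign; what you need is that the non-negative elements of $D(A)$ lie in the graph-norm closure of the non-negative cone of $C_0^\infty(\Omega)$, a ``positive core'' statement that is not implied by the core property. In Kato's own proof of Theorem \ref{T9.1} this issue is bypassed entirely by convolving $|u|$ with a non-negative approximate identity and using that $(-\Delta+1)^{-1}$ has a positive explicit kernel commuting with translations --- structure that is destroyed the moment $-\Delta$ is replaced by $A$ or $\bbR^\nu$ by a general open $\Omega$. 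Any serious attempt on Conjecture \ref{C20.1} along these lines must confront both obstructions separately.
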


This result would be interesting even for $\Omega=\bbR^\nu$, where, of course when $V \equiv 0$, this is just the famous result of Kato in Section \ref{s9}. The case where $\Omega = \bbR^\nu; \, \nu \ge 5$ and $V(x) = -\nu(\nu-4)|x|^{-2}$ (results of Kalf--Walter and Simon) is mentioned in Section \ref{s9}.

In the early 1970s, there were a set of almost annual meetings at Oberwolfach on spectral and scattering theory and frequent PDE meetings.  They were quite important.  For example, Agmon announced his result Theorem \ref{T15.2} in 1970 \cite{AgmonAnon} but only published the full paper \cite{AgmonSpec} in 1975.  In between, the standard source for his work were personal notes some people took of a series of lectures that he gave at one of these Oberwolfach meetings.  At the 1971 PDE conference, Konrad J\"{o}rgens (1926--1974), who died tragically of a brain tumor less than three years later, made the above conjecture during the discussions but never published it.  He made the conjecture during the discussion of the Kalf--Walter \cite{KalfWalter1} and clearly had in mind the case where $\Omega$ is $\bbR^\nu$ with a finite set of points removed. Note that Simon's and Kato's papers discussed in the historical part of Section \ref{s9} were both preprinted in early 1972 before this conjecture, so the conjecture was made for a local Stummel space rather than $L^2_{loc}$ but eventually, it was updated to $L^2_{loc}$.

In one dimension, this is related to a result of Kurss \cite{Kurss} who proved the result for continuous $V$ although his argument doesn't need continuity (essentially it follows from a simple comparison argument for positive solutions and limit point--limit circle methods).  In 1966, in \cite{StHa}, Stetkaer--Hansen extended Theorem \ref{T8.6} to the case where V is locally Stummel.  Since, if $\nu \le 3$, $L^2_{loc}$ is the same as locally Stummel, this implies J\"{o}rgen's conjecture for $\Omega = \emptyset$ and these $\nu$ (indeed without the need of a comparison potential!)  H. Kalf has informed me that neither he nor J\"{o}rgens knew of these papers when the conjecture was made.

Many people, especially in the various German groups studying Schr\"{o}dinger operators worked hard on this problem.  In 1980, Cycon \cite{CyconJC} proved a result when there was an additional technical condition on $-\Delta+V$.  He remarked that given the failure to find a proof, some researchers began to suspect that it might be false.

\begin{conjecture} [Simon's Conjecture] \lb{C20.2}  Let $V$ be a measurable function on $\bbR^\nu,\,\nu \ge 2$ obeying
\begin{equation}\label{20.1}
  \int |x|^{-\nu+1} |V(x)|^2\, d^\nu x < \infty
\end{equation}
Then $-\Delta+V$ has a.c. spectrum of infinite multiplicity on $[0,\infty)$.
\end{conjecture}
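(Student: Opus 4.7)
My plan is to attack this via partial wave decomposition, exploiting the fact that the weight $|x|^{-\nu+1}$ in \eqref{20.1} is exactly what turns the $L^2$ condition on $V$ into an $L^2(dr)$ condition after the radial-angular factorization \eqref{7.23}--\eqref{7.25}. Writing $\psi(r\omega)=r^{-(\nu-1)/2}\sum_{\ell,m}f_{\ell m}(r)Y_{\ell m}(\omega)$ and expanding $V(r\omega)=\sum_{L,M}v_{LM}(r)Y_{LM}(\omega)$, the matrix elements $V_{\ell m,\ell' m'}(r)=\int V(r\omega)\overline{Y_{\ell m}(\omega)}Y_{\ell' m'}(\omega)\,d\omega$ satisfy, by Parseval on the sphere and in $r$,
\begin{equation*}
\sum_{\ell,m,\ell',m'}\int_0^\infty |V_{\ell m,\ell' m'}(r)|^2\,dr \;=\;\int |x|^{-\nu+1}|V(x)|^2\,d^\nu x\;<\;\infty.
\end{equation*}
Thus $H$ is unitarily equivalent to a matrix Schr\"odinger operator on $L^2((0,\infty);\ell^2)$ whose free part has entries $-\tfrac{d^2}{dr^2}+q_\ell(r)$ with the centrifugal term $q_\ell$ from \eqref{7.25}, and whose perturbation is an infinite Hermitian matrix of $L^2(dr)$ functions.

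The next step is to appeal to the matrix Deift--Killip/Killip--Simon sum rule machinery (the continuous analog of the Killip--Simon theorem for Jacobi matrices, developed in finite matrix dimension by Kupin, Denisov, and Damanik--Killip--Simon). For each finite truncation $H_N$ obtained by projecting onto angular momenta $\ell\le N$, this yields a sum rule of the form
\begin{equation*}
\int_0^\infty F(k)\log \det \Bigl(\tfrac{1}{2\pi i}S_N(k^2)\Bigr)\,dk + (\text{bound state terms})\;\le\;C\,\|V_N\|_{L^2(dr;\,\mathrm{Mat}_{d_N})}^2,
\end{equation*}
with a strictly positive weight $F$ on $(0,\infty)$. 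Positivity of the integrand's real part then forces the a.c. spectrum of $H_N$ to fill $[0,\infty)$ with multiplicity $d_N=\dim(\bigoplus_{\ell\le N}\calH_\ell)$. Because $d_N\to\infty$, if one can pass this conclusion to $H$ itself one obtains a.c. spectrum of infinite multiplicity.

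The decisive step, and the main obstacle, is that passage to the limit $N\to\infty$. Strong resolvent convergence of $H_N$ to $H$ only gives lower semicontinuity of the singular spectrum, not preservation of a.c. spectrum under the limit; simple examples (rank--one perturbations, Section \ref{s12} Wigner--von Neumann style constructions) show a.c. spectrum can be destroyed in such limits. What is needed is a uniform sum rule: a lower bound on the density of the a.c. part of $H_N$, \emph{uniform in $N$}, controlled only by the $L^2$ norm of the full coupling matrix. Concretely, one would like a bound on $-\int_a^b \log\bigl[\det(P_N\,\operatorname{Im}(H_N-\lambda-i0)^{-1}P_N)\bigr]\,d\lambda$ by $C_{a,b}\bigl(1+\|V\|_2^2\bigr)$ that survives as $N\to\infty$; combined with Fatou and the Birman--Krein theory from Section \ref{s13}, this would yield a non-trivial a.c. component in every $[a,b]\subset(0,\infty)$ on every partial-wave subspace, hence infinite multiplicity. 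Proving such a uniform, off-diagonal-insensitive sum rule in infinite matrix dimension (without, say, trace class hypotheses that \eqref{20.1} does not supply) appears to require genuinely new ideas and is where I expect the argument to stand or fall.

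As a backup route, I would run a parallel attack based on Kato smoothness (Section \ref{s14}): seek a factorization $V=A^*B$ with $A,B$ \emph{locally} $H_0$-smooth on each compact $K\Subset(0,\infty)$, using the free resolvent bound $\||x|^{-(\nu-1)/2}(H_0-\lambda-i0)^{-1}|x|^{-(\nu-1)/2}\|$ (which in odd dimension is controlled by Kato--Yajima type estimates) together with the weight in \eqref{20.1}. If one could then bootstrap local $H_0$-smoothness to local $H$-smoothness under the mere $L^2$--with--weight hypothesis, Theorem \ref{T14.10} would deliver existence and completeness of $\Omega^\pm(H,H_0)$ on $P_{(0,\infty)}(H_0)$, giving a.c. spectrum of infinite multiplicity directly; the obstacle there is that the Vakulenko/Lavine-type commutator estimates of Section \ref{s14} all require pointwise decay of $V$, not mere $L^2$ weighted integrability, so a substantial new averaging or oscillation argument would be needed to exploit cancellation in $V$.
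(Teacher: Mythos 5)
This is an \emph{open conjecture}: the paper presents it as a conjecture precisely because no proof is known, and it remarks explicitly that ``several people have worked quite hard on this conjecture without success.'' So no proof proposal can be ``the same as the paper's'' --- the paper has none --- and your text, which you candidly describe as a plan of attack whose ``decisive step'' you cannot supply, is correctly identified as incomplete. You have been honest about the main gap (passage from finite partial-wave truncations $H_N$ to $H$, for which strong resolvent convergence preserves nothing about a.c.\ spectrum), and that is indeed where any such argument stalls.

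However, there is an additional, earlier error worth flagging. Your Parseval identity
\begin{equation*}
\sum_{\ell,m,\ell',m'}\int_0^\infty |V_{\ell m,\ell' m'}(r)|^2\,dr \;=\;\int |x|^{-\nu+1}|V(x)|^2\,d^\nu x
\end{equation*}
is false. The left side is the Hilbert--Schmidt norm squared (integrated in $r$) of the multiplication operator by $V(r\,\cdot)$ acting on $L^2(S^{\nu-1})$, and multiplication by a nonzero function on a space with nonatomic measure is never Hilbert--Schmidt --- that sum diverges for a.e.\ $r$ with $V(r\,\cdot)\not\equiv 0$. What Parseval on the sphere actually gives is $\sum_{L,M}|v_{LM}(r)|^2 = \int_{S^{\nu-1}}|V(r\omega)|^2\,d\omega$, where the $v_{LM}$ are the \emph{Fourier coefficients} of $V$, not the matrix elements $V_{\ell m,\ell'm'}$; the two are related only through Gaunt coefficients, and the off-diagonal coupling matrix is never $L^2(dr;\mathrm{HS})$. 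This undercuts the premise that \eqref{20.1} places the problem in the matrix Deift--Killip/Killip--Simon framework; the hypothesis gives no Hilbert--Schmidt (let alone trace-class) control of the inter-angular-momentum coupling, and the sum-rule machinery you invoke requires exactly such control even for \emph{finite} matrix dimension. Thus the difficulty is not only the $N\to\infty$ limit: the premise that each truncation $H_N$ falls under the existing matrix sum-rule theory under hypothesis \eqref{20.1} is already unjustified. Your backup Kato-smoothness route is a reasonable thing to try but, as you say yourself, the known smoothness/commutator techniques require pointwise decay, not weighted $L^2$, so that too remains a gap rather than a proof.
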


This was made by Simon \cite{Si21Cent}.  While not explicit, there is a presumption that $-\Delta+V$ is esa-$\nu$.  If $V$ obeys \eqref{16.3B}, one needs $\beta > 1/2$.  It would be interesting to prove the conjecture for all $V$'s obeying \eqref{16.3B} for any fixed $\beta \in (1,\tfrac{1}{2})$.

Here is some background on the conjecture.  Kato--Kuroda--Agmon and others studied $V$'s obeying \eqref{16.3B} for any $\beta>1$ and found (much more than) $\sigma_{ac}(-\Delta+V)=[0,\infty)$. As noted in Section \ref{s14}, when $\nu=1$, if is known that for any $\beta < 1/2$, there are $V$'s with no a.c. spectrum; in fact, in a sense, this is generic.  In the mid 1990s, I realized that determining what happened when $1 > \beta > 1/2$ was a natural problem and alerted my graduate student advisees to this fact.  Kiselev \cite{Kis1} proved that when $\nu=1$ and $\beta > 3/4$, one could prove that $\sigma_{ac}(-\Delta+V)=[0,\infty)$.  (It was eventually realized that this regime differed from $\beta > 1$ in that one could also have singular continuous spectrum  mixed in).  This was then pushed, again when $\nu=1$ to $\beta > 1/2$ by Christ--Kiselev \cite{CK} and Remling \cite{Rem}.  It seemed natural that the precise borderline was $V \in L^2$ and in 1999, Deift and Killip (then my PhD. student) \cite{DeiftKillip} proved

\begin{theorem} [Deift--Killip \cite{DeiftKillip}] \lb{T20.3}  Let $V \in L^2(\bbR,dx)$.  Then $H = -\frac{d^2}{dx^2}+V$ on $L^2$ has a.c. spectrum $[0,\infty)$ with multiplicity 2.
\end{theorem}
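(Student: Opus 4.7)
The plan is to deduce Theorem~T20.3 from a sum rule (trace identity of KdV type) combined with a lower semicontinuity argument. First I would reduce the whole-line problem to two half-line problems. Let $H_+$ (resp.\ $H_-$) denote $-\frac{d^2}{dx^2}+V$ on $L^2(0,\infty)$ (resp.\ $L^2(-\infty,0)$) with Dirichlet boundary condition at $0$. The resolvent difference between $H$ and $H_+\oplus H_-$ is rank one (arising from changing the boundary condition at $x=0$), so by the Kato--Rosenblum theorem (Theorem~T13.3), $H$ and $H_+\oplus H_-$ have unitarily equivalent absolutely continuous parts. It therefore suffices to show $\sigma_{\text{ac}}(H_\pm) = [0,\infty)$ with simple spectrum; the claimed multiplicity $2$ then follows from the direct sum.

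Next I would approximate $V$ on $[0,\infty)$ by the cut-offs $V_n = V\,\chi_{[0,n]}$, which are compactly supported with $\|V_n\|_2 \le \|V\|_2$ and $V_n \to V$ in $L^2$. Let $H_{+,n}$ be the corresponding half-line operator and $m_n(z)$ its Weyl--Titchmarsh $m$-function. The heart of the argument is a one-sided trace identity of the schematic form
\begin{equation*}
\frac{1}{\pi} \int_0^\infty \sqrt{E}\, \log\!\left( \frac{\operatorname{Im} m_n(E+i0)}{\tfrac{1}{2}\sqrt{E}} \right) dE \;\ge\; -C_1 \int_0^n V^2\, dx \;-\; C_2 \sum_{j} |E_j^{(n)}|^{3/2},
\end{equation*}
where $E_j^{(n)} < 0$ are the bound states of $H_{+,n}$. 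This is essentially a second KdV conservation law; the bound-state sum is in turn controlled by $\|V_n\|_2^2$ via a half-line Lieb--Thirring inequality, so the right side is bounded below \emph{uniformly in $n$} by $-C\|V\|_2^2$.

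Finally I would pass to the limit. Since $V_n \to V$ in $L^2$, the $H_{+,n}$ converge to $H_+$ in strong resolvent sense, the $m$-functions converge pointwise on $\bbC_+$, and the spectral measures $d\rho_n$ of $H_{+,n}$ converge vaguely to that of $H_+$. The functional
\begin{equation*}
d\rho \; \mapsto \; -\int_0^\infty \sqrt{E}\, \log\!\left(\frac{d\rho_{\text{ac}}}{dE}\right) dE
\end{equation*}
is an entropy-type quantity that is lower semicontinuous under vague convergence (by concavity of $\log$ and Jensen's inequality, in the spirit of the Killip--Simon analysis of Jacobi matrices). Hence the uniform lower bound from the previous step survives the limit, forcing $d\rho_{+,\text{ac}}/dE > 0$ for Lebesgue-a.e.\ $E \in (0,\infty)$. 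The essential support of the a.c.\ spectral measure of $H_+$ is therefore all of $[0,\infty)$, so $\sigma_{\text{ac}}(H_+) = [0,\infty)$. The same argument applies to $H_-$, and combining yields the theorem.

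The main obstacle is the second step: establishing the one-sided sum rule in a form that (a) has the correct sign to serve as a lower bound, (b) uniformly controls the bound-state contribution by $\|V\|_2^2$, and (c) has both sides lower semicontinuous under the approximation $V_n \to V$. Classical KdV trace identities are equalities derived under strong smoothness and decay of $V$; turning them into a robust inequality that requires only $V \in L^2$ and that survives the passage through weak spectral limits is the genuinely delicate point, and is the principal contribution of Deift and Killip.
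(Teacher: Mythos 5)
The paper itself does not prove Theorem~\ref{T20.3}; it merely states the Deift--Killip result and points the reader to Killip--Simon~\cite{KS1,KS2} for an accessible route. So the comparison is between your blind proposal and the Deift--Killip argument as it exists in the literature.

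Your overall strategy is sound and captures the core of the known proofs, but it follows a recognizably different path from the original. Deift--Killip work directly on the \emph{whole line}: they invoke the Zakharov--Faddeev trace formula for the scattering transmission coefficient $T(k)$,
\begin{equation*}
\frac{4}{\pi}\int_0^\infty k^2\,\log|T(k)|^{-2}\,dk \;+\; \frac{16}{3}\sum_j \kappa_j^3 \;=\; \int_{\bbR} V(x)^2\,dx,
\end{equation*}
note that the entropy integrand on the left is \emph{pointwise non-negative} (since $|T(k)|\le 1$ always), apply a Lieb--Thirring bound to the $\kappa_j^3$ sum, and conclude finiteness of $\int k^2\log|T|^{-2}\,dk$ by a semicontinuity argument passing from cut-off potentials to $V\in L^2$. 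What you propose instead is the half-line $m$-function version, reducing the whole-line problem via Dirichlet decoupling (your rank-one resolvent observation and appeal to Theorem~\ref{T13.1}/\ref{T13.3} is correct and gives multiplicity $2$ cleanly) and working with a Weyl $m$-function sum rule. This is closer in spirit to the Killip--Simon~\cite{KS2} treatment than to Deift--Killip's original. Both versions rest on the same pillars: the $P_2$ (second KdV) conservation law, a Lieb--Thirring control of the bound-state sum, and lower semicontinuity of the entropy functional.

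One caveat you should be aware of, since you flag the sum rule as the delicate step but formulate it in the half-line picture: the half-line entropy integrand $\sqrt{E}\,\log\bigl(\operatorname{Im}m(E+i0)/\tfrac12\sqrt{E}\bigr)$ is \emph{not} sign-definite (the argument of the $\log$ can be either $>1$ or $<1$), unlike the whole-line $\log|T(k)|^{-2}$ which is always $\ge 0$. This is precisely why Killip--Simon need their ``step-by-step'' sum rule and a more careful decomposition of the entropy into pieces of fixed sign before semicontinuity can be applied; a naive cut-off-and-pass-to-the-limit argument does not go through as directly as in the whole-line transmission-coefficient picture, where the good sign makes the lower semicontinuity argument essentially immediate. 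Your closing paragraph correctly identifies the sum rule as the heart of the matter; the specific technical obstacle in your half-line version is this sign issue, which is a genuine extra difficulty relative to the route Deift and Killip actually took.
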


If $V(\mathbf{x})=V(|x|)$ is spherically symmetric on $\bbR^\nu$, then \eqref{20.1}$\Rightarrow \int_{0}^{\infty} |V(r)|^2\,dr < \infty$, so the Deift--Killip result implies and is essentially equivalent to Conjecture \ref{C20.2} for spherically symmetric $V$.

Several people have worked quite hard on this conjecture without success (although sometimes they found weaker results that they published).  The reader trying to understand the Deift--Killip result should also consult Killip--Simon \cite{KS1, KS2}.


\appendix
\renewcommand{\thesection}{A}
\section{Kato's Proof of His $x^{-1}$ inequality}

Kato \cite[Remark V.5.12]{KatoBk} states, without a full proof, that for each $\varphi\in C_0^\infty(\bbR^3)$, one has that
\begin{equation}\label{A.1}
  \int |x|^{-1} |\varphi(x)|^2 \, d^3x \le \frac{\pi}{2} \int |k| |\hatt{\varphi}(k)|^2 \, d^3k
\end{equation}
He does say that this is equivalent to bounding the integral operator \eqref{A.5} below, but he doesn't explain how to go further.  When Hubert Kalf visited Kato in Berkeley in 1975, he asked Kato for the proof.  Hubert shared what Kato showed him and gave me permission to include it here. Recall that, as we explained after \eqref{10.21}, this is a special case of a result of Herbst getting the optimal constant for all these scale invariant inequalities.

\begin{lemma} \lb{LA.1} Let $C$ be an integral operator on $L^2(X,d\mu)$ with integral kernel
\begin{equation}\label{A.2}
  C(x,y) = A(x,y)B(x,y)
\end{equation}
Suppose that
\begin{equation}\label{A.3}
  \sup_y \int |A(x,y)|^2 d\mu(x) = M_1^2; \qquad \sup_x \int |B(x,y)|^2 d\mu(y) = M_2^2
\end{equation}
Then
\begin{equation}\label{A.4}
  \norm{C} \le M_1 M_2
\end{equation}
\end{lemma}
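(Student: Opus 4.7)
My plan is to prove the lemma via a direct Cauchy--Schwarz argument applied pointwise in $x$, with the factorization $C(x,y)=A(x,y)B(x,y)$ chosen so that one factor gets paired with itself (and integrated against the hypothesis $M_j^2$) while the other factor gets paired with the input function $f$. This is the standard Schur test trick, and the two hypotheses in \eqref{A.3} are precisely what is needed to make the two halves of the argument fit together.

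Concretely, for $f \in L^2(X,d\mu)$, I would first estimate $(Cf)(x)$ for fixed $x$ by
\begin{equation*}
  |(Cf)(x)| \le \int |A(x,y)|\,|B(x,y)|\,|f(y)|\,d\mu(y),
\end{equation*}
and then apply the Cauchy--Schwarz inequality in the $y$ variable, pairing $B(x,y)$ with itself and $|A(x,y)|\,|f(y)|$ together:
\begin{equation*}
  |(Cf)(x)|^2 \le \left(\int |B(x,y)|^2\,d\mu(y)\right)\left(\int |A(x,y)|^2\,|f(y)|^2\,d\mu(y)\right) \le M_2^2 \int |A(x,y)|^2\,|f(y)|^2\,d\mu(y),
\end{equation*}
where the second inequality uses the $x$-uniform bound in \eqref{A.3}.

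Next I would integrate over $x$, invoke Fubini (all integrands are nonnegative, so this is justified without further hypotheses), and apply the other bound in \eqref{A.3}:
\begin{equation*}
  \int |(Cf)(x)|^2\,d\mu(x) \le M_2^2 \int |f(y)|^2 \left(\int |A(x,y)|^2\,d\mu(x)\right)\,d\mu(y) \le M_1^2 M_2^2 \norm{f}_2^2.
\end{equation*}
This shows that $Cf \in L^2$ with $\norm{Cf}_2 \le M_1 M_2 \norm{f}_2$ for all $f \in L^2$, which is \eqref{A.4}.

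There is really no obstacle in this argument, the whole thing being a short manipulation; the only thing one must be careful about is choosing the Cauchy--Schwarz pairing so that the two resulting single-variable integrals match the hypothesis \eqref{A.3} (pairing $A$ with $A$ instead would leave one with the wrong variable integrated). The interest of the lemma in Kato's context is of course not the proof but the application: to deduce \eqref{A.1} one writes the operator $|p|^{-1/2}|x|^{-1}|p|^{-1/2}$ in Fourier space as an integral operator with kernel proportional to $(|k|\,|k'|)^{-1/2}|k-k'|^{-2}$ and then factors it as $A(k,k')B(k,k')$ with, e.g., $A(k,k') = |k|^{-1/2}|k-k'|^{-1}(|k'|/|k|)^{1/4}$ and $B(k,k') = |k'|^{-1/2}|k-k'|^{-1}(|k|/|k'|)^{1/4}$ (or a similar scale-balanced split), so that the two integrals in \eqref{A.3} become scale-invariant and can be evaluated explicitly by a change of variable, yielding the sharp constant $\pi/2$.
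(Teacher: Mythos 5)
Your proof is correct and is essentially the same Schur-test argument as the paper's: both rest on Cauchy--Schwarz with the factorization $C = AB$ chosen so that the squared pieces match the two hypotheses in \eqref{A.3}. The only cosmetic difference is that you estimate $\norm{Cf}_2$ directly (pointwise Cauchy--Schwarz in $y$, then Tonelli over $x$), whereas Kato estimates the bilinear form $\jap{\varphi,C\psi}$ with a single Cauchy--Schwarz on the product measure $d\mu(x)\,d\mu(y)$; the two packagings are interchangeable.
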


\begin{proof}  Let $\varphi,\psi \in L^2(M,d\mu)$.  Then
\begin{align*}
   |\jap{&\varphi,C\psi}| = \left|\int A(x,y) B(x,y) \overline{\varphi(x)} \psi(y) \, d\mu(x)\,d\mu(y)\right| \\
             & \le \left(\int |A(x,y)|^2|\psi(y)|^2 \, d\mu(x)\,d\mu(y)\right)^{1/2} \left(\int |B(x,y)|^2|\varphi(y)|^2 \, d\mu(x)\,d\mu(y)\right)^{1/2}
\end{align*}
by the Schwarz inequality.  By \eqref{A.3}, the first integral (integrating first over $x$) is bounded by $M_1^2\norm{\psi}^2$, so $|\jap{\varphi,C\psi|}| \le M_1 M_2 \norm{\varphi}\norm{\psi}$.
\end{proof}

Next, we include the part that was in Kato's book.  \eqref{A.1} is equivalent to $\norm{|p|^{-1/2}|x|^{-1}|p|^{-1/2}} \le \pi/2$ where $|p|$ is the operator $\hatt{|p|\varphi}(p) = |p| \hat{\varphi}(p)$.  In ``$p$--space'', $x^{-1}$ is convolution with the function $(2\pi)^{-3/2} \hatt{x^{-1}}$ (see \cite[(6.2.45)]{RA}) and, by \cite[Theorem 6.8.1]{RA}, $\hatt{x^{-1}}(k) = \sqrt{\tfrac{2}{\pi}} |k|^{-2}$.  Thus \eqref{A.1} is equivalent to a bound on an integral operator
\begin{equation}\label{A.5}
  \norm{C}\le \frac{\pi}{2}; \qquad C(k,p) = \frac{1}{2\pi^2}\frac{1}{k^{1/2}}\frac{1}{|\boldsymbol{k}-\boldsymbol{p}|^2}\frac{1}{p^{1/2}}
\end{equation}

We can write $2\pi^2C$ in the form of \eqref{A.2} where
\begin{equation}\label{A.6}
  A(k,p) = \frac{p^{1/2}}{k |\boldsymbol{k}-\boldsymbol{p}|}; \qquad B(k,p) = \frac{k^{1/2}}{p |\boldsymbol{k}-\boldsymbol{p}|}
\end{equation}
This factorization isn't what one might guess but has the naive expectation multiplied/divided by $(k/p)^{1/2}$ (without doing this the integral in \eqref{A.7} would diverge).  R. L. Frank pointed out that a similar idea was used in Hardy--Littlewood--Polya \cite[Section 9.3]{HLP} for not unrelated (but one dimensional) integral operators.  It might have motivated Kato.  By the Lemma, we need to compute
\begin{align}
  \int \frac{p}{k^2 |\boldsymbol{k}-\boldsymbol{p}|^2}\,d^3k &= \int_{0}^{\infty} 2\pi p \left[\int_{-1}^{1} \frac{d\omega}{k^2+p^2-2kp\omega}\right]\,dk \lb{A.7} \\
       &= 2\pi \int_{0}^{\infty} \frac{1}{k} \log\left[\frac{k+p}{|k-p|}\right]\,dk \lb{A.8}
 \end{align}
where \eqref{A.7} comes from shifting to polar coordinates with $d^3k=(k^2dk)\,d\varphi \, d(\cos\theta)$ and $\omega=\cos\theta$.  The inner integral gives $\frac{1}{2pk} \log\frac{(k+p)^2}{(k-p)^2}$ yielding \eqref{A.8}.

Using
\begin{equation}\label{A.9}
  \int_{0}^{a}\frac{1}{x}\log\left(\frac{a+x}{a-x}\right)\,dx = \int_{a}^{\infty}\frac{1}{x}\log\left(\frac{x+a}{x-a}\right)\,dx = \frac{\pi^2}{4}
\end{equation}
(we defer this calculation) and Lemma \ref{LA.1}, we see that
\begin{equation}\label{A.10}
  \norm{C} \le \frac{1}{2\pi^2}(2\pi)2 \frac{\pi^2}{4} = \frac{\pi}{2}
\end{equation}
proving \eqref{A.1}.

By scaling and changing $x$ to $1/x$, one sees that the integrals in \eqref{A.9} are equal and $a$ independent, so we can take $a=1$ in the first integral.  Using first $u=\frac{1+x}{1-x}$ and then $y = \log u$, one sees that
\begin{align}
  \int_{0}^{1}\frac{1}{x}\log\left(\frac{1+x}{1-x}\right)\,dx  &= 2\int_{1}^{\infty} \frac{1}{u^2-1} \log u \, du\lb{A.11} \\
                             &= 2\sum_{n=1}^{\infty} \int_{1}^{\infty} u^{-2n} \log u \, du \nonumber \\
                             &= 2\sum_{n=1}^{\infty} \int_{0}^{\infty} y e^{-(2n-1)y}\,dy \nonumber \\
                             &= 2\sum_{n=1}^{\infty} \frac{1}{(2n-1)^2} = 2\left[\frac{\pi^2}{8}\right] \lb{A.12}
\end{align}
since if $Q=\sum_{n=1}^{\infty} 1/n^2$, then $Q=\frac{1}{4} Q+ \sum_{n=1}^{\infty} (2n-1)^{-2}$ so that
\begin{equation*}
  Q=\frac{\pi^2}{6} \Rightarrow \sum_{n=1}^{\infty} \frac{1}{(2n-1)^2} = \frac{3Q}{4} = \frac{\pi^2}{8}
\end{equation*}
by the Euler sum.  Alternatively (and this is a remark I got from Martin Klaus) one can evaluate the right side, $I$, of \eqref{A.11} by the method of contour integrals. First, using that the integral is unchanged by $u \to u^{-1}$ coordinate changes, note that $I$ is ${\int_{0}^{\infty} (u^2-1)^{-1} \log u \, du}$.  This integral is unchanged under rotating the contour by $\SI{90}{\degree}$.  Since, for $y>0$ we have that $\log(iy)=\log(y)+i\pi/2$, and, under $u=iy$, we have that $du/(u^2-1) = -i\,dy/(1+y^2)$, we see that
\begin{equation*}
  I = -i\int_{0}^{\infty} \frac{\log(y)}{y^2+1}dy + \frac{\pi}{2}\int_{0}^{\infty}\frac{dy}{y^2+1} = \frac{\pi}{2}\frac{\pi}{2}=\frac{\pi^2}{4}
\end{equation*}
The first integral is $0$ by $y \mapsto y^{-1}$.

It might be surprising that \eqref{A.1} has equality in the norm since the proof has some inequalities.  But the integrals in \eqref{A.3} are independent of the variable one is taking a $\sup$ over so the only inequality is the Schwarz inequality.  It is believable that one can come close to saturating that.

Having completed our exposition of Kato's proof, we note that one standard proof (see, e.g. \cite[pg. 169]{RS2}) of the classical Hardy's inequality in $\bbR^3$ uses
\begin{equation}\label{A.13}
  \norm{\nabla\varphi}_2^2 - \tfrac{1}{4}\norm{r^{-1}\varphi}_2^2 = \norm{r^{-1/2} \nabla(r^{1/2}\varphi)}_2^2
\end{equation}
Frank, Lieb and Seiringer \cite{FLSHardy} have found an analogous formula that proves \eqref{A.1} (they also do this for other fractional powers)
\begin{align}
 \left\langle \psi, \left( \sqrt{-\Delta}\right.\right.& - \left.\left.\frac{2}{\pi |x|} \right)\psi \right\rangle \nonumber \\
    &= \frac{1}{2\pi^2} \iint_{\bbR^3\times\bbR^3} \frac{\left| |x| \psi(x) - |y| \psi(y)\right|^2}{|x-y|^4}\,\frac{dx}{|x|}\,\frac{dy}{|y|} \lb{A.14}
\end{align}
This proves strict positivity for any function $\psi$ and, by taking $\psi(x)$ to be $|x|^{-1}$ cutoff near the origin and near infinity, one sees that the constant in \eqref{A.1} is optimal.


\end{document}